\theoremstyle{plain}
\newtheorem{proposition}{Proposition}
\newtheorem{lemma}{Lemma}
\newtheorem{theorem}{Theorem}
\newtheorem{corollary}{Corollary}
\newtheorem{remark}{Remark}
\newtheorem{gauge}{Gauge Choice}
\def\bme{{\bm e}}
\def\bmg{{\bm g}}
\def\bml{{\bm l}}
\def\bmn{{\bm n}}
\def\bmm{{\bm m}}
\def\bmvarphi{{\bm \varphi}}
\def\bmGamma{{\bm \Gamma}}
\def\bmDelta{{\bm \Delta}}
\def\bmpartial{{\bm \partial}}
\def\nablasl{/\kern-0.58em\nabla}
\def\Deltasl{/\kern-0.7em\Delta}
\def\Dsl{/\kern-0.7em D}
\def\TiPsi{{\tilde \Psi}}
\def\Titau{{\tilde \tau}}
\def\Tipi{{\tilde \pi}}
\def\Timu{{\tilde \mu}}
\def\Tivarphi{{\tilde \varphi}}
\def\ulchi{{\underline{\chi}}}
\def\ulomega{{\underline{\omega}}}
\font\teniOne=ecti1000
\font\seveniOne=ecti0900
\font\fiveiOne=ecti0500
\def\definetOnesymbol#1#2{%
  \mathchardef#1=\numexpr\tOne*256+#2\relax
}
\definetOnesymbol{\meth}{"F0}
\definetOnesymbol{\mthorn}{"FE}
\begin{document}

\title{\textbf{Trapped surface formation for the Einstein-Scalar system}}

\author[,1]{Peng
  Zhao \footnote{E-mail address:{\tt p.zhao@bnu.edu.cn}}}
\author[,2]{David Hilditch \footnote{E-mail address:{\tt
      david.hilditch@tecnico.ulisboa.pt}}}
\author[,3]{Juan A. Valiente
    Kroon \footnote{E-mail address:{\tt
        j.a.valiente-kroon@qmul.ac.uk}}}
\affil[1]{College of Education for the Future, Beijing Normal University
 at Zhuhai, No.18, Jinfeng Road, Tangjiawan, Zhuhai City, Guangdong
  Province, 519087, P.R.China.}        
\affil[2]{CENTRA, Departamento de F\'isica, Instituto Superior
  T\'ecnico – IST, Universidade de Lisboa – UL, Avenida Rovisco Pais
  1, 1049 Lisboa, Portugal.}
\affil[3]{School of Mathematical Sciences, Queen Mary, University of
    London, Mile End Road, London E1 4NS, United Kingdom.}

\maketitle
  
\begin{abstract}
  We consider the formation of trapped surfaces in the evolution of
  the Einstein-scalar field system without symmetries. To this end, we
  follow An's strategy to analyse the formation of trapped surfaces in
  vacuum and for the Einstein-Maxwell system. Accordingly, we set up a
  characteristic initial value problem (CIVP) for the Einstein-Scalar
  system with initial data given on two intersecting null
  hypersurfaces such that on the incoming slince the data is
  Minkowskian whereas on the outgoing side no symmetries are
  assumed. We obtain a scale-critical semi-global existence result by
  assigning a signature for decay rates to both the geometric
  quantities and the scalar field. The analysis makes use of a gauge
  due to J. Stewart and an adjustment of the Geroch-Held-Penrose (GHP)
  formalism, the T-weight formalism, which renders the connection
  between the Newman-Penrose (NP) quantities and the PDE analysis
  systematic and transparent.
\end{abstract}

\tableofcontents

\section{Introduction}

In this paper we study the Einstein-scalar system
\begin{subequations}
\begin{align}
&R_{ab}-\frac{1}{2}Rg_{ab}=T_{ab}, \label{EFEMasslessscalar0}\\
&\nabla^a\nabla_a\varphi=0, \label{EOMMasslessscalar0}
\end{align}
\end{subequations}
where~$R_{ab}$ and~$R$ denote, respectively, the Ricci tensor and
scalar of the Levi-Civita covariant derivative of a Lorentzian
metric~$g_{ab}$, and
\begin{align*}
  T_{ab}=\nabla_a\varphi\nabla_b\varphi
  -\frac{1}{2}g_{ab}\nabla_c\varphi\nabla^c\varphi
\end{align*}
is the energy-momentum tensor associated to
equation~\eqref{EOMMasslessscalar0}. It then follows from
equation~\eqref{EFEMasslessscalar0} that
\begin{align}
\label{EFEMasslessscalar1}
R_{ab}=\nabla_a\varphi\nabla_b\varphi.
\end{align}

Black holes are one of the most fascinating predictions of general
relativity, and evidence for their existence is overwhelmingly
supported by many recent observations (see for
instance~\cite{Abb16,EHT2019}). Theoretically, a black hole is defined
as a region in spacetime which has no causal relation to future null
infinity. However, this global definition is not ideal for the study
of the dynamical formation of black holes. An alternative approach is
to track the appearance of \emph{trapped surfaces} ---see
e.g.~\cite{Wal84,Chr20}. Trapped surfaces are closed, spacelike
topological 2-spheres such that both the outgoing and ingoing light
rays emanating from it converge.  This local perspective was first
introduced by R. Penrose in his famous \emph{incompleteness
  theorem}~\cite{Penrose1965}. This theorem shows that the existence
of trapped surfaces in spacetime leads to the existence of incomplete
future causal geodesics.  Hence, the study of trapped surfaces is
central to the understanding of the formation of black holes.

A drawback of Penrose's incompleteness theorem is that it does not
provide an explanation of how trapped surfaces can form. In
particular, it is clear that the formation of trapped surfaces
requires appropriate initial conditions. Following the work of
Christodoulou and Klainerman~\cite{ChrKla93} on the
non-linear stability of the Minkowski spacetime, initial conditions
close (in a suitable sense) to initial conditions for the Minkowski
spacetime develop into a geodesically complete spacetime with a global
structure similar to that of the Minkowski spacetime. This suggests
that to form trapped surfaces one needs data which is \emph{in some
  sense} large. In view of Birkhoff's theorem, this data needs to be
non-spherical in order to allow for the formation of trapped surfaces
in vacuum.

The question of the dynamical formation of trapped surfaces in vacuum
was first solved by Christodoulou in~\cite{Chr00}. In doing so, he
considered initial data having a special hierarchical structure called
the \emph{short pulse ansatz}. In this setting, various quantities
have different decay behaviours parameterised by~$\delta$, and these
properties are maintained under long time nonlinear evolution.
Subsequently, Klainerman and Rodnianski~\cite{KlaRod09} simplified
Christodoulou's main theorem by introducing an index~$s_1$ called
\emph{signature} to track the~$\delta$-weight of the various geometry
quantities. In parallel to the latter, they defined scale invariant
norms for which most of the non-linear terms are
small. In~\cite{An2012} An has extended Klainerman and Rodnianski's
work so that initial conditions are prescribed on past null infinity
thus recovering Christodoulou's original results. As part of his
analysis An introduced a new signature~$s_2$ corresponding to the
powers of a retarded time~$u$ which describes the decay rates near
past null infinity.

Building on the work described in the previous paragraph,
in~\cite{AnLuk2017}, An and Luk proved the first scale-critical result
from data which can be regarded as \emph{mild} in comparison to that
used in the original work by Christodoulou and relaxed the lower bound
required to form a trapped surface.  Based on the idea of
scale-critical arguments, in~\cite{An2022} An provided a simple proof
of the formation of trapped surfaces in vacuum which sharpens the
previous results both of Christodoulou and An-Luk.  This argument
connects Christodoulou’s short-pulse method and
Klainerman-Rodnianski’s signature counting argument to the peeling
properties of the gravitational field at past null infinity.

The results on gravitational collapse in the presence of matter
predate those in the vacuum with the first result being the analysis
by Christodoulou of the spherically symmetric self-gravitating
massless scalar field in~\cite{Chr91}.  Christodoulou goes to show
that trapped surfaces form if the ratio of the mass contained in an
annular region to the largest radius is large in some specific sense.
Li and Liu~\cite{LiLiu2017} have revisited Christodoulou's model and
obtained an almost scale-critical trapped surface formation criterion
without assuming symmetry on the initial outgoing lightcone but with
spherical singular initial data on the incoming lightcone.

The case of the Einstein-Maxwell field system, has been analysed by Yu
in~\cite{PYu2011}. This work also makes use of the short pulse Ansatz
to study the formation of trapped surfaces. In particular, it is shown
that the formation of a trapped surface can be purely due to the
concentration of the Maxwell field on the initial lightcone where
there is no incoming gravitational energy. Building on these ideas, An
and Athanasiou~\cite{An202209} have obtained an extended
scale-critical trapped surface formation criterion from past null
infinity. This work generalises the An's work on vacuum case
in~\cite{An2022} by employing additional elliptic estimates and
geometric renormalisations.

Motivated by~\cite{An2022} and~\cite{An202209}, the goal of the
present article is to obtain a scale-critical trapped surface
formation criterion for the Einstein-scalar system from near past null
infinity. To this end we study the behaviour of both the scalar field
and geometric quantities and develop an understanding of the mechanism
of collapse.

\subsection{New insights}

The analysis in the present article make use of the following new
insights and ideas:

\smallskip
\noindent
\textbf{~(1)} We systematically study the asymptotic behaviour of
scalar field by applying An's strategy ---i.e. by assigning
signatures~$s_2$ to the geometric quantities, the scalar
field~$\varphi$ as well as the auxiliary
field~$\varphi_a\equiv\frac{1}{3}\nabla_a\varphi$. (The
factor~$\frac{1}{3}$ is added for convenience when working in the NP
formalism). The signature works well in keeping consistency with the
equations of the scalar field. For example,
for~$\varphi_0\equiv l^a\varphi_a$, $\varphi_1\equiv m^a\varphi_a$
and~$\varphi_2\equiv n^a\varphi_a$, their signature is the same as
that of the Faraday tensor components~$\phi_0\equiv F_{ab}l^am^b$,
$\phi_1\equiv \frac{1}{2}F_{ab}(l^an^b+\bar{m}^am^b)$ and
$\phi_2\equiv F_{ab}\bar{m}^an^b$ respectively. The structure of the
equations for~$\varphi_0$, $\varphi_1$ and~$\varphi_2$ is also similar
to that of the Maxwell equations, see~\cite{An202209}, so that energy
estimates can be constructed along the same lines.

\smallskip
\noindent
\textbf{~(2)} In addition to the renormalised components of the Weyl
tensor
\begin{align*}
  \tilde\Psi_1\equiv\Psi_1-3\varphi_{0}\varphi_{1}, \quad
  \tilde\Psi_3\equiv\Psi_3-3\varphi_{2}\bar\varphi_{1}, 
\end{align*}
our analysis requires yet another renormalised Weyl component, namely
\begin{align*}
\tilde\Psi_2\equiv\Psi_2-\varphi_{0}\varphi_{2}+\varphi_{1}\bar\varphi_{1}, 
\end{align*}
to close the energy estimates for Weyl curvature.  This term does not
arise in the analysis of the Einstein-Maxwell system as in that case
the energy-momentum tensor is tracefree.

\smallskip
\noindent
\textbf{~(3)} Due to the intrinsic relations between~$\varphi$ and
$\varphi_a$, the asymptotic behaviour of~$\varphi_2$ does not satisfy
the same peeling property as the Maxwell component $\phi_2$, and this
difference is inherited also by the associated stress-energy tensor.
Moreover, as the first order equations for~$\varphi_0$, $\varphi_1$
and~$\varphi_2$ arise from the use of commutation relations, one faces
obstructions in the analysis of~$(\varphi_1,\varphi_2)$
and~$(\TiPsi_3,\Psi_4)$. This difficulty can be overcome by
introducing a new variable~$\Tivarphi_2$
\begin{align*}
\Tivarphi_2\equiv\meth'\varphi_2+\mu\bar\varphi_1.
\end{align*}
In the above expression~$\meth'$ denotes an improved \emph{T-weight
  operator} arising from the NP
operator~$\bar\delta=\bar{m}^a\nabla_a$ on the tangent bundle of a
2-sphere, see~\textbf{(4)} below.

\smallskip
\noindent
\textbf{~(4)} We make use of a new NP formalism, which we dub the
\emph{T-weight formalism}. This is an adjusted version of the GHP
formalism which takes into account the requirements of PDE analysis
and the use of double null foliations. In addition to the usual
spin-weighted quantities $\kappa$, $\nu$, $\tau$, $\pi$, $\rho$,
$\sigma$, $\mu$ and~$\lambda$, we introduce another three
spin-weighted quantities $\vartheta\equiv\beta+\bar\alpha$,
$\ulomega\equiv\gamma+\bar\gamma$
and~$\omega\equiv\epsilon+\bar\epsilon$ in the analysis. These
quantities are related to definite tensors on a topological
2-sphere~$\mathcal{S}\approx\mathbb{S}^2$ and are characterised by one
integer ---their \emph{T-weight}, within the present formalism. We
also redefine the covariant operators $\mthorn$, $\mthorn'$, $\meth$
and~$\meth'$ by appropriately incorporating the
combinations~$\epsilon-\bar\epsilon$, $\gamma-\bar\gamma$ and
$\beta-\bar\alpha$, into their definitions. These operators have a
definite correspondence with the NP directional derivative
operators~$l^a\nabla_a$, $n^a\nabla_a$, $m^a\nabla_a$ and
$\bar{m}^a\nabla_a$ and preserve the T-weight property.  This new
formalism is a bridge connecting the derivatives of NP quantities and
the covariant derivative of tensors.  Another advantage of this
formalism is that the divergence, curl and the Stokes' theorem used in
the discussion of elliptic estimates and the Hodge system have clear,
concise, expressions.

\subsection{Overview and main results}

We will study the formation of trapped surfaces by means of double
null foliation. In the following~$\mathcal{N}_u$ and~$\mathcal{N}'_v$
will denote, respectively, outgoing and ingoing null hypersurfaces
with~$u$ and~$v$ optical functions satisfying the eikonal equations
\begin{align*}
g^{\mu\nu}\partial_{\mu}u\partial_{\nu}u=0 \quad{and} \quad 
g^{\mu\nu}\partial_{\mu}v\partial_{\nu}v=0.
\end{align*}
For given values of~$u$ and~$v$ the
intersection~$\mathcal{S}_{u,v}\equiv \mathcal{N}_u \cap
\mathcal{N}'_v$ has the topology of~$\mathbb{S}^2$.

Our analysis will focus on the region of spacetime given by
\begin{align*}
\mathbb{D}=\{(u,v)|u_{\infty}\leq u\leq -a/4, \ \ 0\leq v\leq 1\}.
\end{align*}
where~$a$ is a suitable (large) positive number. The initial outgoing
null hypersurface
is~$\mathcal{N}_{\star}\equiv\mathcal{N}_{u=u_{\infty}}$ and the
initial ingoing null hypersurface
is~$\mathcal{N}'_{\star}\equiv\mathcal{N}_{v=0}$.  On~$\mathbb{D}$ one
can choose appropriate coordinates and specify an adapted NP
frame~$\{\bml,\bmn,\bm{m},\bar{\bm{m}}\}$. The expansions of outgoing
and ingoing null hypersurfaces, respectively, can be expressed in
terms of the NP spin connection coefficients as
\begin{align*}
\theta_{\bml}&\equiv\sigma^{ab}\nabla_a\l_b=-\rho-\bar\rho=-2\rho, \\
\theta_{\bmn}&\equiv\sigma^{ab}\nabla_an_b=\mu+\bar\mu=2\mu,
\end{align*}
where~$\sigma_{ab}$ is the induced metric on~$\mathcal{S}_{u,v}$.  The
norm of the shears of the outgoing and ingoing null hypersurface are
given by
\begin{align*}
\hat{\chi}_{\bml}&\equiv-\sigma-\bar\sigma=-2|\sigma| \\
\hat{\underline{\chi}}_{\bmn}&\equiv\lambda+\bar\lambda=2|\lambda|.
\end{align*}
The four coefficients~$\rho$, $\mu$, $\sigma$ and~$\lambda$ are the
components of null second fundamental forms
\begin{align*}
\hat{\chi}_{ab}\equiv-\Pi_{\mathcal{S}}(\nabla_al_b), \quad
\hat{\underline{\chi}}_{ab}\equiv\Pi_{\mathcal{S}}(\nabla_an_b).
\end{align*}
where~$\Pi_{\mathcal{S}}$ is projection on~$\mathcal{S}$. We establish
the following existence theorem:

\begin{theorem}
[\textbf{\em Existence result}]
Given a positive number~$\mathcal{I}$, there exists a sufficiently
large~$a_0=a_0(\mathcal{I})$ such that for~$a\geq a_0\geq0$ and
initial data such that
\begin{align*}
\mathcal{I}_0\equiv\sum_{j=0}^1\sum_{i=0}^{15}\frac{1}{a^{\frac{1}{2}}}
||\mthorn^j(|u_{\infty}|\mathcal{D})^i(\sigma,\varphi_0)||_{L^{2}(\mathcal{S}_{u_{\infty},v})}\leq\mathcal{I}
\end{align*}
along the outgoing initial null hypersurface~$u=u_{\infty}$, and
Minkowskian initial data along ingoing initial null
hypersurface~$v=0$, then the Einstein-scalar field system admits a
unique solution in
\begin{align*}
\mathbb{D}=\{(u,v)|u_{\infty}\leq u\leq -a/4, \ \ 0\leq v\leq 1\}.
\end{align*}
\end{theorem}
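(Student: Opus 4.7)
The plan is to establish the theorem via a scale-critical bootstrap argument in the spirit of An's program, combined with the T-weight formalism set up earlier in the paper. First I would fix a collection of scale-invariant norms on~$\mathbb{D}$: for every null connection coefficient, every (renormalised) Weyl component, and every scalar-field variable~$\varphi$, $\varphi_0$, $\varphi_1$, $\varphi_2$, I would include the factor of~$|u|^{s_2}$ prescribed by its signature, so that the initial smallness quantified by~$\mathcal{I}$ appears at every level of differentiation as a uniform bound of size~$\lesssim a^{1/2}\mathcal{I}$. The bootstrap hypothesis is then that all of these norms, up to~$15$ angular plus one transversal derivative (matching the regularity of the initial data), are controlled on~$\mathbb{D}$ by some constant~$C(\mathcal{I})$ which I would subsequently improve.

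Next I would propagate bounds for the null connection coefficients along their NP transport equations, using the~$\mthorn$ and~$\mthorn'$ operators provided by the T-weight formalism. Since the data on~$\mathcal{N}'_{\star}$ is Minkowskian, the ingoing-direction transport yields the required estimates after integrating over the short interval~$v\in[0,1]$, while the outgoing-direction transport uses the~$|u|$-decay encoded in the signature to keep the growth in~$u$ controlled. In parallel I would derive~$L^2$ energy estimates for~$\varphi_0$, $\varphi_1$, $\varphi_2$, treating them as a Hodge-type system analogous to the Maxwell field; the T-weight formulation makes the divergence identities, Stokes' theorem, and the ensuing integration by parts essentially immediate, by direct transcription of the corresponding Maxwell argument of~\cite{An202209}.

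The heart of the proof is the coupled energy estimate for the renormalised Weyl scalars~$\TiPsi_1$, $\TiPsi_2$, $\TiPsi_3$ together with~$\Psi_0$ and~$\Psi_4$. The renormalisations are designed precisely so that the problematic quadratic terms~$\nabla\varphi\cdot\nabla\varphi$ arising from~$R_{ab}=\nabla_a\varphi\nabla_b\varphi$ cancel on the right-hand side of the Bianchi identities, leaving a system whose structure is close to the Einstein--Maxwell one. The new~$\TiPsi_2$ is forced by the trace of the scalar stress-energy, which is the main algebraic novelty compared with the Einstein--Maxwell analysis. For the outermost pairing~$(\TiPsi_3,\Psi_4)$, rather than coupling it against~$(\varphi_1,\varphi_2)$ directly I would couple it against~$(\varphi_1,\Tivarphi_2)$, exploiting that~$\Tivarphi_2\equiv\meth'\varphi_2+\mu\bar\varphi_1$ recovers the~$|u|$-peeling that~$\varphi_2$ alone does not enjoy; this removes the obstruction highlighted in insight~\textbf{(3)} of the introduction.

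Once the connection, scalar and curvature estimates are in hand, I would commute up to top order with~$\mthorn$, $\meth$, $\meth'$ via the commutator identities supplied by the T-weight formalism, and feed the resulting fluxes on~$\mathcal{N}_u$ and~$\mathcal{N}'_v$ back into the bootstrap. Choosing~$a_0=a_0(\mathcal{I})$ large enough, the prefactor~$a^{-1/2}$ in~$\mathcal{I}_0$ together with the short~$v$-interval absorb all borderline cubic and higher terms, yielding a strict improvement of the bootstrap bounds; a standard continuity argument then propagates the solution throughout~$\mathbb{D}$. The hardest step, I expect, is ensuring that the subtle algebraic cancellations defining~$\TiPsi_2$ and~$\Tivarphi_2$ survive commutation with~$\mthorn$, $\meth$, $\meth'$, and that every commutator error term can still be tagged with the signature its position in the hierarchy requires; the book-keeping of signatures across the coupled scalar--Weyl system, with the non-Maxwellian decay of~$\varphi_2$ constantly threatening to destroy the peeling structure, is where the principal technical difficulty will reside.
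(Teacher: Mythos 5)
Your outline follows the paper's strategy closely: the signature-weighted scale-invariant norms, the bootstrap with transport estimates for the connection, the Hodge-type energy estimates for the pairs of Weyl/scalar components, the renormalisations $\TiPsi_1,\TiPsi_2,\TiPsi_3$, the auxiliary variable $\Tivarphi_2=\meth'\varphi_2+\mu\bar\varphi_1$ used to pair with $(\TiPsi_3,\Psi_4)$, and the closing continuity argument are all exactly the paper's ingredients. There is, however, one genuine gap: you propose to obtain the top-order connection coefficients purely by commuting the NP transport equations to top order and "feeding the fluxes back in". This fails by derivative loss. The transport equations for $\tau$, $\pi$ and $\ulomega$ contain Weyl curvature components as sources, so the eleventh angular derivative of the connection would require the eleventh derivative of $\TiPsi_1,\TiPsi_2,\TiPsi_3$ --- one more than the energy estimates supply, since the curvature fluxes are only controlled up to ten derivatives. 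The paper devotes all of Section~6 to circumventing this: it introduces the combinations $\Titau\equiv\meth'\tau-\TiPsi_2$, $\Tipi\equiv\meth\pi+\TiPsi_2$ and $\tilde\ulomega$, whose transport equations (after substituting the Bianchi identities) contain no derivatives of curvature, estimates these to tenth order, and then recovers the eleventh derivative of $\tau$, $\pi$, $\ulomega$ from their divergences via elliptic (Hodge/Codazzi) estimates on $\mathcal{S}_{u,v}$, which in turn require a preliminary bound on nine derivatives of the Gaussian curvature. Likewise $\mathcal{D}^{11}\sigma$ and $\mathcal{D}^{11}\lambda$ are obtained from $\mathcal{D}^{11}\rho$ and $\mathcal{D}^{11}\mu$ through the Codazzi constraints rather than by direct transport. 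Without this elliptic step the top-order commuted estimates, and hence the energy estimates for the scalar field that use them, cannot close.

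A smaller bookkeeping point: the bootstrap is not run at the full regularity of the data (fifteen angular derivatives plus one $\mthorn$). The paper's bootstrap norms stop at ten derivatives of the connection and scalar field on spheres, eleven on the null cones, and nine or ten of the curvature; the extra derivatives in $\mathcal{I}_0$ are consumed only in constructing the full initial data set on $\mathcal{N}_\star$ from the reduced data $(\sigma,\varphi_0)$ and in initialising the top-order fluxes.
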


\begin{remark} {\em In the above expression~$\mathcal{D}$ denotes the
    covariant derivative of the metric~$\sigma_{ab}$ on the
    sphere~$\mathcal{S}_{u,v}$ acting on the NP quantities.}
\end{remark}

 Based on the existence result, we have the following trapped
surface formation theorem:
\begin{theorem}
[\textbf{\em Trapped surface formation}]
Given~$\mathcal{I}$, there exists a sufficiently large
$a=a(\mathcal{I})$ such that for~$a\geq a_0\geq0$ an initial data set
with
\begin{align*}
\mathcal{I}_0\equiv\sum_{j=0}^1\sum_{i=0}^{15}\frac{1}{a^{\frac{1}{2}}}
  ||\mthorn^j(|u_{\infty}|\mathcal{D})^i(\sigma,\varphi_0)||_{L^{2}(\mathcal{S}_{u_{\infty},v})}
  \leq\mathcal{I}
\end{align*}
along the outgoing initial null hypersurface~$u=u_{\infty}$ and
Minkowskian initial data along the ingoing initial null hypersurface
$v=0$ such that
\begin{align*}
  \int_0^1\left(
  |u_{\infty}|^2(\sigma\bar\sigma+3\varphi_0^2)|\right)(u_{\infty},v') \mathrm{d}v'\geq a
\end{align*}
holds uniformly for any point on the initial outgoing null
hypersurface $u=u_{\infty}$, it follows that~$\mathcal{S}_{-a/4,1}$ is
a trapped surface.
\end{theorem}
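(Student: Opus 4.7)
The plan is to verify that $\theta_{\bml}=-2\rho<0$ and $\theta_{\bmn}=2\mu<0$ at the sphere $\mathcal{S}_{-a/4,1}$. I would combine the scale-critical a priori estimates supplied by the Existence Theorem with a Raychaudhuri-type integration along the outgoing generator of the cone $u=-a/4$ from $v=0$ up to $v=1$, and separately transport the initial focusing density from the cone $u=u_{\infty}$ along the ingoing direction $\bmn$.

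The condition $\mu<0$ is the easy half: on the Minkowskian ingoing cone $v=0$ one has $\mu(-a/4,0)=-4/a$, and the NP equation for $\mthorn\mu$ along $\bml$ has a right-hand side which, under the a priori bounds of the Existence Theorem, integrates to at most $o(1/a)$ across $[0,1]$. Hence $\mu(-a/4,1)\leq -4/a+o(1/a)<0$ for $a$ sufficiently large.

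For $\rho>0$ the central identity is the Einstein-scalar NP Raychaudhuri equation, which in the paper's conventions reads
\begin{align*}
\mthorn\rho=\rho^{2}+\sigma\bar\sigma+3\varphi_{0}^{2}+\mathcal{L},
\end{align*}
with $\mathcal{L}$ collecting strictly subcritical corrections built from the connection and the renormalised Weyl/scalar-field quantities (this is exactly why the hypothesis is phrased in terms of $\sigma\bar\sigma+3\varphi_{0}^{2}$). Integrating in $v$ at $u=-a/4$ with the Minkowskian seed $\rho(-a/4,0)=-4/a$ gives
\begin{align*}
\rho(-a/4,1)\geq -\frac{4}{a}+\int_{0}^{1}\!(\sigma\bar\sigma+3\varphi_{0}^{2})(-a/4,v')\,\mathrm{d}v'+O(a^{-2}).
\end{align*}
To exploit the hypothesis I would transport the focusing density from $u=u_{\infty}$ to $u=-a/4$ via the $\bmn$-propagation equations $\mthorn'\sigma=\ldots$ and $\mthorn'\varphi_{0}=\ldots$. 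By design of the signature/T-weight assignment, the composite $|u|^{2}(\sigma\bar\sigma+3\varphi_{0}^{2})$ has vanishing $s_{2}$-signature, so at leading order its $v$-integral is conserved along $\bmn$; the nonlinear and curvature corrections, controlled by the Existence Theorem, amount to $o(a)$. Hence
\begin{align*}
\int_{0}^{1}\!(\sigma\bar\sigma+3\varphi_{0}^{2})(-a/4,v')\,\mathrm{d}v'\geq \frac{16}{a^{2}}\bigl(a-o(a)\bigr)=\frac{16}{a}\bigl(1-o(1)\bigr),
\end{align*}
and substituting produces $\rho(-a/4,1)\geq 12/a-o(1/a)>0$ for $a$ large enough.

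The main technical obstacle is precisely this $\bmn$-propagation: showing that the total change in $\int_{0}^{1}|u|^{2}(\sigma\bar\sigma+3\varphi_{0}^{2})(u,v')\,\mathrm{d}v'$ between $u=u_{\infty}$ and $u=-a/4$ is $o(a)$ rather than $O(a)$. This requires tracking the cross-couplings between $\sigma$, $\varphi_{0}$, the renormalised Weyl components $\TiPsi_{i}$, and the connection coefficients $(\rho,\mu,\vartheta,\lambda,\ldots)$, and verifying that every error term falls strictly below the critical scale. The T-weight formalism and the signature bookkeeping highlighted in points \textbf{(1)}--\textbf{(4)} are exactly what makes this tractable.
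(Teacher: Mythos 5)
Your proposal follows essentially the same route as the paper: it establishes $\mu<0$ from the a priori bound on $\tilde\mu$, propagates the focusing density by showing that $|u|^2(\sigma\bar\sigma+3\varphi_0^2)$ (the paper works with $F\equiv|u|^2Q^{-1}(\sigma\bar\sigma+3\varphi_0^2)$ and bounds $|\partial_uF|\lesssim a^{3/2}/|u|^2$, giving a total change $o(a)$) is conserved to leading order along $\bmn$, and then integrates $\mthorn\rho=\rho^2+\sigma\bar\sigma+3\varphi_0^2$ along $u=-a/4$ to get $\rho>0$. The argument is correct and matches the paper's proof in all essentials.
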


\begin{figure}[t]
\centering
\includegraphics[width=0.8
\textwidth]{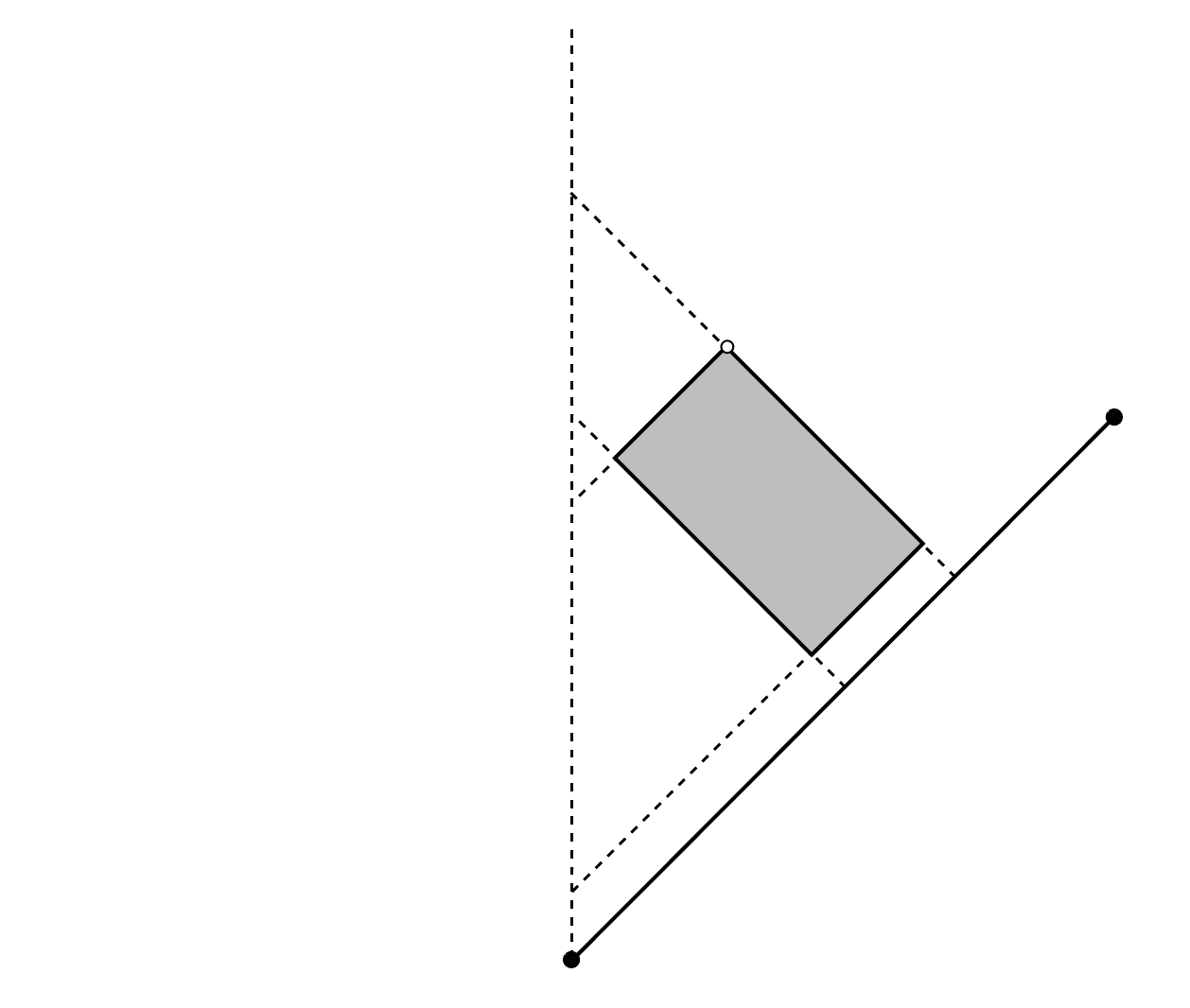}
\put(-100,70){$\mathscr{I}^-$}
\put(-170,2){$i^-$}
\put(-12,160){$i^0$}
\put(-125,135){$\mathbb{D}$}
\put(-155,110){$v=0$}
\put(-85,145){$v=1$}
\put(-90,110){$u=u_\infty$}
\put(-182,175){$u=a/4$}
\put(-138,195){$\mathcal{S}_{a/4,1}$}
\caption{Existence domain~$\mathbb{D}$ for the main theorem. The setup
consists of a characteristic initial value problem where Minkoswkian
initial data is prescribed on the incoming null hypersurface~$v=0$ and
suitable data on the outgoing hypersurface~$u=u_\infty$ lying in the
asymptotic region. By choosing~$a$ appropriately one can ensure that
the 2-surface~$\mathcal{S}_{a/4,1}$ (upper vertex of the existence
diamond) is a trapped surface.}
\label{Fig:ExistenceDomain}
\end{figure}

\subsection{A heuristic argument}

In this section we assume the results of the preceding section and
provide a heuristic argument of the dynamic formation of a trapped
surface.

A trapped surface is a closed 2-sphere~$\mathcal{S}_{u,v}$ such that
both the outgoing expansion~$\theta_{\bml}$ and the ingoing expansion
$\theta_{\bmn}$ are negative on the whole of~$\mathcal{S}_{u,v}$. On
the initial ingoing null hypersurface~$\mathcal{N}'_{\star}$ we
prescribe Minkowskian data and the expansions can be computed to be
given by
\begin{align*}
\theta_{\bml}=\frac{2}{|u|}, \qquad
\theta_{\bmn}=-\frac{2}{|u|}.
\end{align*}
Now, it is possible to show that the ingoing expansion
on~$\mathcal{N}_{u_{\infty}}$ is given by
$-2/|u_{\infty}|+l.o.t$. Here and in the following~$l.o.t$ means lower
order terms compared with the other terms in the equation. Applying
the ingoing structure equation
\begin{align*}
\mthorn'\mu=-\mu^2+l.o.t.,
\end{align*}
where~$\mthorn'$ is the ingoing null derivative along~$\bmn$ and one
can show that~$\theta_{\bmn}=2\mu$ is a decreasing function and, thus,
it becomes even more negative.

For the outgoing expansion we first make use of the ingoing equations
\begin{align*}
\mthorn'\sigma&=-\mu\sigma+l.o.t , \\
\mthorn'\varphi_{0}&=-\mu\varphi_{0}+l.o.t,
\end{align*}
to show that 
\begin{align}
\label{HeuArgu1}
  |u|^2(\sigma\bar\sigma+3\varphi_0^2)|_{u,v}
  =|u_{\infty}|^2(\sigma\bar\sigma+3\varphi_0^2)|_{u=u_{\infty},v}+l.o.t.
\end{align}
Then we apply the transport equation
\begin{align*}
\mthorn\rho&=\rho^2+\sigma\bar\sigma+3\varphi_{0}^2
\end{align*}
to show that 
\begin{align*}
  \rho(-\frac{a}{4},1,x^2,x^3)&=\rho(-\frac{a}{4},0,x^2,x^3)
  +\int_0^1\frac{\partial\rho}{\partial v} \mathrm{d}v 
  =-\frac{4}{a}+\int_0^1\mthorn\rho \mathrm{d}v' \\
  &\geq -\frac{4}{a}+\int_0^1(\sigma\bar\sigma+3\varphi_{0}^2)
        (-\frac{a}{4},v',x^2,x^3) \mathrm{d}v'  \\
&\geq-\frac{4}{a}+\frac{8}{a}=\frac{4}{a}>0,
\end{align*}
where in the last step we have made use of~\eqref{HeuArgu1} and the
condition on the initial data on~$\mathcal{N}_{u_{\infty}}$
\begin{align*}
  \int_0^1\left( |u_{\infty}|^2(\sigma\bar\sigma+3\varphi_0^2)|\right)
  (u_{\infty},v') \mathrm{d}v'\geq a.
\end{align*}
Hence~$\theta_{\bml}=-2\rho<0$ on~$\mathcal{S}_{-a/4,1}$ which
mean~$\mathcal{S}_{-a/4,1}$ is a trapped surface.  More details can be
found in Section~\ref{TrappedSurface}.

\subsection{Outline of the article}

In Section~\ref{GeoSettingCIVP} we give a discussion of the geometric
setting being considered and of the characteristic initial value
problem for the Einstein-scalar field system.
Section~\ref{T-weightFormalism} provides a full introduction to the
T-weight formalism. In particular, we discuss the relation between the
T-weight formalism and the PDE analysis and give the structure
equations, Bianchi identities and the scalar field equations in this
formalism. In Section~\ref{Preliminary} we define the scale invariant
norms used in the analysis and discuss the bootstrap assumptions. We
also give basic estimates for the components of the metric and a
Gr\"onwall type estimate for solutions to null transport
equations. Section~\ref{L2estimate} discusses the $L^2(\mathcal{S})$
estimate for non-top-order connection coefficients, the auxiliary
field~$\varphi_a$ and the Weyl curvature.  In
Section~\ref{EllipticEstimate} we provide elliptic estimates of the
top-order connection coefficients. Section~\ref{EnergyEstimate}
discusses energy estimates of~$\varphi_a$ and the Weyl curvature.
Section~\ref{TrappedSurface} provides the proof of the formation of
trapped surfaces.

\subsection{Acknowledgements}

The Authors are grateful to Xinliang An for helpful discussions. Many
of our derivations were performed in xAct~\cite{xAct} for
Mathematica. This work was partially supported by FCT (Portugal) grant
UIDB/00099/2020. PZ was supported by the start-up fund of Beijing
Normal University at Zhuhai.

\section{Geometric setting and formulation of the characteristic
  initial value problem}
\label{GeoSettingCIVP}

In this section we provides details of the construction of Stewart's
gauge and discuss the basic local existence result for the
characteristic initial value problem underlining our analysis. More
discussions can be found in~\cite{HilValZha19} and~\cite{HilValZha20}

\subsection{Coordinates and Stewart's gauge}
\label{Subsection:StewartGauge}

The basic geometric setting of this paper consists of a 4-dimensional
manifold~$(\mathcal{M},\bmg)$ with boundary and an edge.  The boundary
consists of two null hypersurfaces:~$\mathcal{N}_{\star}$, the
outgoing null hypersurface; and~$\mathcal{N}_{\star}^{\prime}$, the
incoming null hypersurface.  We denote the non-empty intersection of
the initial hypersurfaces by
$\mathcal{S}_{\star}\equiv\mathcal{N}_{\star}\cap\mathcal{N}_{\star}^{\prime}$.
Given a neighbourhood~$\mathcal{U}$ of~$\mathcal{S}_{\star}$, the
future of~$\mathcal{S}_{\star}$ can be foliated by two families of
null hypersurfaces:~$\mathcal{N}_{u}$ (the outgoing null
hypersurfaces) and~$\mathcal{N}_{v}'$ (the ingoing null
hypersurfaces). The functions~$u$ and~$v$ satisfy the eikonal equation
and~$\mathcal{N}_{u_{\infty}}=\mathcal{N}_{\star}$,
$\mathcal{N}_{0}'=\mathcal{N}_{\star}'$.  Given suitable data on
$(\mathcal{N}_{\star}\cup\mathcal{N}_{\star}')\cap\mathcal{U}$, one
can make statement of existence and uniqueness of solutions to the
Einstein-Scalar system on some open set
 \begin{align*}
\mathcal{V}\subset\{p\in\mathcal{U}| u(p)>u_{\infty}, v(p)>0 \}.
 \end{align*}
The relevant area for our analysis is given by the conditions
 \begin{align*}
u_{\infty}\leq u\leq -a/4, \ \ 0\leq v\leq 1,
\end{align*}
where~$a$ is a large positive number.

To complete the coordinate system, consider arbitrary coordinate
charts~$(\mathcal{U}^{\star}_{\alpha},x_{\alpha}^{\mathcal{A}})$
on~$\mathcal{S}_{\star}$, with the index~$^{\mathcal{A}}$ taking the
values~$2,\; 3$ and where~$\mathcal{U}^{\star}_{\alpha}$ denote a
finite cover of~$\mathcal{S}_{\star}$. For convenience we can make use
of the stereographic coordinates given by the diffeomorphism of
$\mathcal{S}_\star$ with~$\mathbb{S}^2$. In this case the chart label
$\alpha$ takes the values~$1,\;2$ corresponding, respectively, to the
North and South polar charts.  Now, define
$(\mathcal{U}^{\star}_{\alpha})_{u,0}$ be the coordinate patch
generated from~$\mathcal{U}^{\star}_{\alpha}$ along the generators of
$\mathcal{N}_{\star}'$, and~$(\mathcal{U}_{\alpha})_{u,v}$ be the
coordinate patch generated from~$(\mathcal{U}^{\star}_{\alpha})_{u,0}$
along along the generators of each hypersurface~$\mathcal{N}_{u}$. For
brevity we write~$\mathcal{U}_{\alpha}$ to denote
$(\mathcal{U}_{\alpha})_{u,v}$. Coordinates~$x_{\alpha}^{\mathcal{A}}$
on~$\mathcal{U}^{\star}_{\alpha}$ are then first propagated into
$\mathcal{N}_{\star}'$ by requiring them to be constant along the
generators of~$\mathcal{N}_{\star}'$. Once coordinates have been
defined on~$\mathcal{N}_{\star}'$, one can propagate them into
$\mathcal{V}$, a suitable neighbourhood of
$\mathcal{N}_\star\cap\mathcal{N}_\star'$, by requiring them to be
constant along the generators of each~$\mathcal{N}_{u}$. In this
manner one obtains a coordinate system
$(x^{\mu})=(v,\ u,\ x^{\mathcal{A}})$ in
$D_{\mathcal{U}_{\alpha}}\equiv\cup_{u,v}(\mathcal{U}_{\alpha})_{u,v}$.

The above coordinate construction leads in a natural way to a
Newman-Penrose (NP) null tetrad~$\{\bml,\bmn,\bm{m},\bar{\bm{m}}\}$
with the vectors~$\bml$ and~$\bmn$ tangent to the generators of the
null hypersurfaces~$\mathcal{N}_{u}$ and~$\mathcal{N}_{v}'$
respectively, and~$m^a$ lying in the tangent space to the
intersection~$\mathcal{S}_{u,v}$. The auxiliary field
$\varphi_a\equiv\frac{1}{3}\nabla_a\varphi$ can be expanded in terms
of the NP frame and we write
\begin{align*}
\varphi_0&\equiv l^a\varphi_a=\frac{1}{3} l^a\nabla_a\varphi=\frac{1}{3}D\varphi, \\
\varphi_2&\equiv n^a\varphi_a=\frac{1}{3}n^a\nabla_a\varphi=\frac{1}{3}\Delta\varphi, \\
\varphi_1&\equiv m^a\varphi_a=\frac{1}{3}m^a\nabla_a\varphi=\frac{1}{3}\delta\varphi.
\end{align*}

Following the same discussion of~\cite{HilValZha19} we make the
following:

\begin{gauge}[\textbf{\em Stewart's choice of the components of the
      frame}]\label{Assumption:Stewarts_Frame}
  {\em On~$\mathcal{V}$ we consider a NP frame of the form
\begin{align}
  \bml=Q\bmpartial_v,\qquad
  \bmn=\bmpartial_u+C^{\mathcal{A}}\bmpartial_{\mathcal{A}}, \qquad
  \bmm=P^{\mathcal{A}} \bmpartial_{\mathcal{A}}, \label{framem}
\end{align}}
where~$C^{\mathcal{A}}=0$ on~$\mathcal{N}_{\star}'$, $\bmm$ and
$\bar\bmm$ span the tangent space
of~$\mathcal{S}_{u,v}$. On~$\mathcal{N}_{\star}$ one has that
$\bml=Q\bmpartial_v$. As the coordinates~$(x^{\mathcal{A}})$ are
constant along the generators of~$\mathcal{N}_{\star}$ and
$\mathcal{N}_{\star}^{\prime}$, it follows that on
$\mathcal{N}_{\star}$ the coefficient~$Q$ is only a function of
$v$. Thus, without loss of generality one can parameterise~$u$ so as
to set~$Q=1$ on~$\mathcal{N}_{\star}$.
\end{gauge}

A direct application of the NP commutators to the
coordinates~$(v,u, x^2, x^3)$ leads to the following:

\begin{lemma}[\textbf{\em conditions on the connection coefficients}]
\label{Lemma1}
The NP frame of Gauge Choice~\ref{Assumption:Stewarts_Frame} can
be chosen such that
\begin{subequations}
\begin{align}
  \kappa&=\nu= \epsilon=0, \label{spinconnection1}\\
  \rho&=\bar{\rho},\ \ \mu=\bar{\mu}, \label{spinconnection2}\\
  \tau&=\bar\alpha+\beta, \label{spinconnection3}
\end{align}
\end{subequations}
on~$\mathcal{V}$ and, furthermore, with
\begin{align*}
\gamma-\bar{\gamma}=0\ \ \ on\ \ \
\mathcal{V}\cap\mathcal{N}_{\star}'. 
\end{align*}
\end{lemma}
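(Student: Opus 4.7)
The plan is to read each condition off from the requirement that the NP commutators, when applied to the coordinate functions $(u,v,x^{\mathcal A})$, produce consistent identities. Stewart's gauge already fixes a large portion of the action of the frame on the coordinates: $Du=0$, $Dv=Q$, $Dx^{\mathcal A}=0$, $\Delta u=1$, $\Delta v=0$, $\Delta x^{\mathcal A}=C^{\mathcal A}$, $\delta u=\delta v=0$, $\delta x^{\mathcal A}=P^{\mathcal A}$. Substituting these values into the standard NP commutators $[\Delta,D]$, $[\delta,D]$, $[\delta,\Delta]$ and $[\delta,\bar\delta]$ acting on $u$ and $v$ collapses each commutator identity to a single algebraic relation amongst the spin coefficients, and it is precisely these relations that furnish equations~(\ref{spinconnection1})--(\ref{spinconnection3}).

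Concretely, I would first apply $[\delta,D]u=0$: with $Du=\delta u=0$ and $\Delta u=1$ only the $\kappa\Delta u$ contribution survives, giving $\kappa=0$; the symmetric computation $[\bar\delta,\Delta]v=0$ yields $\nu=0$. Next $[\Delta,D]u=0$, once $\kappa=\nu=0$ has been used, reduces to $(\epsilon+\bar\epsilon)\Delta u=0$, hence $\epsilon+\bar\epsilon=0$, which expresses the fact that $\bml=Q\bmpartial_v$ is an affinely parametrised geodesic. The twist identities follow from $[\delta,\bar\delta]u=0$ and $[\delta,\bar\delta]v=0$: the first isolates $(\bar\rho-\rho)\Delta u$, giving $\rho=\bar\rho$, the second isolates $(\bar\mu-\mu)Dv$, giving $\mu=\bar\mu$. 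Finally $[\delta,\Delta]u=0$, after the previous simplifications have removed the other terms, reduces to $(\tau-\bar\alpha-\beta)\Delta u=0$, so $\tau=\bar\alpha+\beta$.

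The remaining conditions $\epsilon-\bar\epsilon=0$ on $\mathcal V$ and $\gamma-\bar\gamma=0$ on $\mathcal V\cap\mathcal N_{\star}'$ are obtained by exhausting the residual frame freedom $\bmm\mapsto e^{i\theta}\bmm$. A direct computation shows that under such a rotation $\epsilon-\bar\epsilon$ shifts by $iD\theta$ and $\gamma-\bar\gamma$ by $i\Delta\theta$, while $\kappa$, $\nu$, $\rho$, $\mu$, $\tau$ and $\epsilon+\bar\epsilon$ are unaffected. I would first fix $\bmm$ arbitrarily at $\mathcal S_{\star}$, then solve the ODE $\Delta\theta=-i(\gamma-\bar\gamma)$ along the generators of $\mathcal N_{\star}'$ (an equation in $u$) to enforce $\gamma=\bar\gamma$ there, and finally solve $D\theta=-i(\epsilon-\bar\epsilon)$ along the generators of each $\mathcal N_u$ (an equation in $v$ starting from the already-adjusted data on $\mathcal N_{\star}'$) to enforce $\epsilon-\bar\epsilon=0$ throughout $\mathcal V$. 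Combined with $\epsilon+\bar\epsilon=0$, this yields $\epsilon=0$.

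The main obstacle is the bookkeeping required to check that the three-stage rotation is internally consistent: one must verify that the $v$-directed rotation used to kill $\epsilon-\bar\epsilon$ in the bulk does not spoil the $u$-directed adjustment already made on $\mathcal N_{\star}'$, and that the remaining, unused NP commutators (for instance $[\delta,D]v$ or $[\delta,\Delta]x^{\mathcal A}$) impose no further independent conditions but merely determine transport equations for $Q$, $C^{\mathcal A}$ and $P^{\mathcal A}$. Once this compatibility is checked the entire list (\ref{spinconnection1})--(\ref{spinconnection3}) follows without additional assumptions.
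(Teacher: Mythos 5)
Your proposal is correct and is essentially the paper's own argument: the paper states that the Lemma follows from "a direct application of the NP commutators to the coordinates $(v,u,x^{\mathcal{A}})$" (with the detailed computation relegated to~\cite{HilValZha19,HilValZha20}), and your reading-off of $\kappa=\nu=0$, $\epsilon+\bar\epsilon=0$, $\rho=\bar\rho$, $\mu=\bar\mu$, $\tau=\bar\alpha+\beta$ from the four commutators applied to $u$ and $v$, followed by exhausting the residual rotation $\bmm\mapsto e^{i\theta}\bmm$ via ODEs along $\bmn$ on $\mathcal{N}_\star'$ and then along $\bml$ into $\mathcal{V}$, is exactly that construction. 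The only cosmetic quibbles are the sign convention in $D\theta=\pm i(\epsilon-\bar\epsilon)$ (immaterial, since $\epsilon-\bar\epsilon$ is purely imaginary and the ODE is well-posed either way) and that $\kappa$, $\nu$, $\tau$ are not literally invariant under the rotation but acquire phases that preserve the stated conditions.
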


\begin{remark}
  {\em Additional commutator relations can be used to obtain equations
    for frame coefficient~$Q$, $P^{\mathcal{A}}$
    and~$C^{\mathcal{A}}$. We have
\begin{subequations}
\begin{align}
DC^{\mathcal{A}}&=(\bar{\tau}+\pi)P^{\mathcal{A}}+(\tau+\bar{\pi})\bar{P}^{\mathcal{A}}, \label{framecoefficient1} \\
DP^{\mathcal{A}}&=\bar\rho P^{\mathcal{A}}+\sigma\bar{P}^{\mathcal{A}}, \label{framecoefficient2} \\
\Delta P^{\mathcal{A}}-\delta C^{\mathcal{A}}&=-(\mu-\gamma+\bar{\gamma})P^{\mathcal{A}}-\bar\lambda\bar{P}^{\mathcal{A}}, \label{framecoefficient3} \\
\Delta Q&=(\gamma+\bar{\gamma})Q,  \label{framecoefficient4}\\
\bar{\delta}P^{\mathcal{A}}-\delta\bar{P}^{\mathcal{A}}&=(\alpha-\bar{\beta})P^{\mathcal{A}}-(\bar{\alpha}-\beta)\bar{P}^{\mathcal{A}}, \label{framecoefficient5} \\
\delta Q&=(\tau-\bar{\pi})Q. \label{framecoefficient6}
\end{align}
\end{subequations}}
\end{remark}

\begin{figure}[t]
\centering
\includegraphics[width=0.8\textwidth]{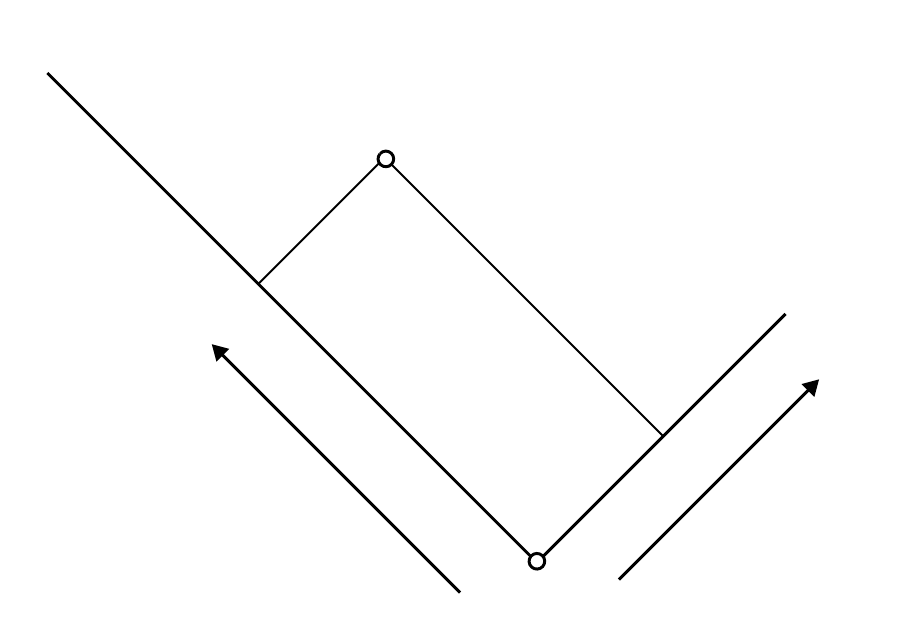}
\put(-40,70){$l^a$, $v$}
\put(-270,70){$n^a$, $u$}
\put(-40,110){$\mathcal{N}_\star$}
\put(-250,110){$\mathcal{N}'_\star$}
\put(-150,15){$\mathcal{S}_{u_\star,v_\star}$}
\put(-190,170){$\mathcal{S}_{u,v}$}
\put(-180,100){${D}_{u,v}$}
\put(-238,148){$\mathcal{N}_u$}
\put(-138,120){$\mathcal{N}'_v$}
\caption{Setup for Stewart's gauge. The construction makes use of a
  double null foliation of the future domain of dependence of the
  initial hypersurface~$\mathcal{N}_\star\cup\mathcal{N}_\star'$. The
  coordinates and NP null tetrad are adapted to this geometric
  setting. See the main text for the definitions of
  the various regions and objects.}
\label{Fig:ExistenceDomain}
\end{figure}

\subsubsection{The renormalised Weyl tensor and Bianchi identities}

Once the NP frame has been specified one can obtain the structure
equations, the Bianchi Identities and the matter equations associated
to the Einstein-scalar field system.  For the scalar field we
introduce an auxiliary field, $\varphi_a$ corresponding to its
derivative. Then the scalar field equation can be recast as a
symmetric hyperbolic system for~$\varphi$ and~$\varphi_a$
---see~\eqref{EOMMasslessScalarNP1}-\eqref{EOMMasslessScalarNP5} in
Appendix~\ref{Appendix:MasslessScalarEquations}. In fact, as it will
be seen in the main part of the article, these equations can be
formulated as a Hodge system (as defined for instance
in~\cite{ChrKla93}).

Recall that from the Einstein field equations one has that the Ricci
curvature equals~$\nabla_a\varphi\nabla_b\varphi$. To expand the
Bianchi Identity with respect to NP frame, we need derivatives of
curvature. It can the be seen that the equations for the Weyl
curvature components~$\Psi_1$, $\Psi_2$ and~$\Psi_3$ have the form
\begin{align*}
(D,\Delta)\bm\Psi-\delta\bm\Psi+\bm\varphi(D,\Delta)\bm\varphi-\bm\varphi\delta\bm\varphi=...
\end{align*}
where~$\bm\varphi$ represents the auxiliary fields~$\varphi_0$,
$\varphi_1$ and~$\varphi_2$. Equations of this form cannot be used to
formulate a Hodge system from which energy estimates can be
extracted. For this, one can only have one null derivative in the
equation. This problem can be solved by introducing the
\emph{renormalised quantities}
\begin{subequations}
\begin{align}
\tilde\Psi_1&\equiv\Psi_1-\Phi_{01}=\Psi_1-3\varphi_0\varphi_1,  \label{NewPsi1}\\
\tilde\Psi_2&\equiv\Psi_2+2\Lambda=\Psi_2-\varphi_0\varphi_2+\varphi_1\bar\varphi_1,  \label{NewPsi2} \\
\tilde\Psi_3&\equiv\Psi_3-\Phi_{21}=\Psi_3-3\varphi_2\bar\varphi_1. \label{NewPsi3}
\end{align}
\end{subequations}
With these three renormalised Weyl curvature components, the terms
$D\bm\varphi$ and~$\Delta\varphi$ in the Bianchi identities are
absorbed in~$D\TiPsi$, $\Delta\TiPsi$ and~$\delta\TiPsi$. Hence, it is
possible to formulate a Hodge system for the renormalised components
of the Weyl curvature. The resulting equations are
\eqref{BianchiMasslessScalarNP1}-\eqref{BianchiMasslessScalarNP8} in
Appendix~\ref{Appendix:MasslessScalarEquations}.

It is worth comparing the above construction with that for the
Einstein-Maxwell system, where the renormalisation
excludes~$\TiPsi_2$. The reason for this difference can be traced back
to the fact that for the Einstein-Maxwell system one
has~$\Lambda=R/24=0$. More details can be found in~\cite{An2022}.

\subsection{The formulation of the characteristic initial value
  problem for the Einstein-scalar system}

In this section we discuss the general aspects of the characteristic
initial value problem (CIVP) for the Einstein-scalar (ES) system with
data on the null hypersurfaces~$\mathcal{N}_{\star}$
and~$\mathcal{N}_{\star}'$. The existence of a hierarchical structure
in the \emph{reduced ES system} leads to a \emph{reduced initial data
  set} from which the full initial data on
$\mathcal{N}_{\star}\cup\mathcal{N}_{\star}'$ for the ES system can be
computed. In a situation where the initial data on
$\mathcal{N}_{\star}'$ is Minkowskian, due to the underlying
symmetries, there is more freedom on the choice of the reduced initial
data.

\begin{lemma}[\textbf{\em freely specifiable data for the CIVP for the
    ES system}]
\label{Lemma2}
Assume that the gauge conditions given by
Lemma~\ref{Assumption:Stewarts_Frame} are satisfied in a
neighbourhood~$\mathcal{V}$ of~$\mathcal{S}_{\star}$. Initial data for
the ES system on~$\mathcal{N}_{\star}\cup\mathcal{N}'_{\star}$ can be
computed from the reduced data set~$\mathbf{r}_\star$ consisting of:
\begin{align*}
  &\Psi_0, \ \ \varphi_0 \quad \mbox{on}\quad
 \mathcal{N}_{\star},\nonumber\\
  &\Psi_4, \ \ \varphi_2, \ \ \gamma+\bar\gamma \quad \mbox{on}\quad \mathcal{N}'_{\star}, \nonumber\\
  &\rho,\ \ \sigma,\ \ \mu,\ \ \lambda, \ \ \tau, \ \ \varphi_1, \ \
  P^{\mathcal{A}}\quad \mbox{on}\quad \mathcal{S}_{\star}. \nonumber
\end{align*}
\end{lemma}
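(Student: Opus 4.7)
The argument is a Rendall-type reconstruction adapted to the Einstein-scalar system and Stewart's gauge, following the template of the vacuum analogue treated in~\cite{HilValZha19,HilValZha20}. One exploits a hierarchical structure of the NP structure equations, the Bianchi identities and the scalar field equations so that the full initial data on $\mathcal{N}_\star\cup\mathcal{N}'_\star$ can be reconstructed from $\mathbf{r}_\star$ by solving, in a prescribed order, a sequence of linear transport ODEs along the generators of the two null hypersurfaces, sourced by the freely prescribed data and by previously computed quantities.

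\textbf{Step 1: the corner.} First I would complete the data at $\mathcal{S}_\star$. The connection coefficients $\alpha$ and $\beta$ are recovered algebraically from the gauge relation $\tau=\bar\alpha+\beta$ of Lemma~\ref{Lemma1}, combined with the frame commutator~\eqref{framecoefficient5}, which fixes the skew combination $\alpha-\bar\beta$ in terms of angular derivatives of $P^{\mathcal{A}}$. The remaining spin coefficient $\pi$ and the renormalised Weyl components $\TiPsi_1,\TiPsi_2,\TiPsi_3$ at $\mathcal{S}_\star$ are obtained from those NP and Bianchi equations whose content at $\mathcal{S}_\star$ reduces to purely angular operators, e.g.\ the Codazzi-type identity relating $\bar\delta\rho-\delta\sigma$ to $\TiPsi_1$, using the renormalisations~\eqref{NewPsi1}--\eqref{NewPsi3} to absorb the $\varphi_a\varphi_b$ source terms.

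\textbf{Step 2: propagation.} Next I would propagate into $\mathcal{N}_\star$. Using $Q=1$, so $D=\partial_v$, and the prescribed data $\Psi_0,\varphi_0$, one integrates the $D$-transport equations along the generators in a hierarchical order. Schematically: $\sigma$ is determined from $D\sigma$ sourced by $\Psi_0$; $\rho$ from $D\rho$ sourced by $\sigma\bar\sigma+3\varphi_0^2$; $\varphi_1$ from its $D$-equation; then the remaining spin coefficients and $\TiPsi_1,\TiPsi_2,\TiPsi_3,\Psi_4,\varphi_2,\mu,\lambda,\gamma$, interleaved with the frame equations~\eqref{framecoefficient1}, \eqref{framecoefficient2} and~\eqref{framecoefficient6} for $C^{\mathcal{A}}$ and $P^{\mathcal{A}}$. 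Each equation in the hierarchy is a linear ODE whose source depends only on previously computed quantities, so standard integration from the corner values of Step~1 yields a unique solution. An entirely analogous hierarchy applies on $\mathcal{N}'_\star$, with $\Delta$-transport equations in place of $D$-equations, with $\Psi_4,\varphi_2$ and $\gamma$ (obtained from $\gamma+\bar\gamma$ together with the condition $\gamma=\bar\gamma$ of Lemma~\ref{Lemma1} on $\mathcal{V}\cap\mathcal{N}'_\star$) as the source data, and with $Q$ determined by~\eqref{framecoefficient4}, together with~\eqref{framecoefficient3} and~\eqref{framecoefficient5} for the angular frame coefficients.

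\textbf{Expected obstacle.} The main structural point to check is that the proposed ordering genuinely closes once the renormalisations~\eqref{NewPsi1}--\eqref{NewPsi3} and the auxiliary field $\varphi_a$ are introduced. Concretely, one must verify that the Bianchi identities for $\TiPsi_1,\TiPsi_2,\TiPsi_3$ contain only a single null derivative pointing in the direction appropriate to the position of the equation in the hierarchy, as already highlighted after~\eqref{NewPsi3}. The novelty compared with the Einstein-Maxwell case is the presence of the additional renormalised component $\TiPsi_2$; checking that its rung is still solvable reduces to the observation that $\varphi_1$, which sources the $\bm\varphi\delta\bm\varphi$-type terms on the right-hand side, has already been determined at an earlier step. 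Once this structural check is complete, standard linear ODE theory produces the full initial data on $\mathcal{N}_\star\cup\mathcal{N}'_\star$ uniquely from $\mathbf{r}_\star$.
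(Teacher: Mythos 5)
Your proposal follows essentially the same route as the paper's proof: algebraic recovery of $\alpha$, $\beta$, $\pi$ and the renormalised Weyl components at the corner from the gauge relations and the Codazzi-type equations, followed by a hierarchical integration of transport equations along the generators of $\mathcal{N}_\star$ and $\mathcal{N}'_\star$, with the frame equations slotted in so that $\delta$, $\bar\delta$ are available when needed. The only imprecision is the repeated claim that each rung is a \emph{linear} ODE: the $(\rho,\sigma)$ and $(\mu,\lambda)$ rungs are coupled Riccati-type systems (e.g.\ $\mthorn\rho=\rho^2+\sigma\bar\sigma+3\varphi_0^2$) which must be solved simultaneously and only admit local solutions, which is precisely why the lemma is stated in a neighbourhood $\mathcal{V}$ of $\mathcal{S}_\star$.
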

\begin{proof}
  
\smallskip
\noindent
\textbf{Data on~$\mathcal{S}_\star$.} From the frame coefficients
$P^{\mathcal{A}}$ one can compute the intrinsic 2-metric on
$\mathcal{S}_\star$ and the associated NP operators~$\delta$,
$\bar\delta$.  Then from~\eqref{framecoefficient6} one has
$\pi=\tau=\bar\alpha+\beta$. From the 2-metric one can compute
$\beta-\bar\alpha$ and hence we can obtain~$\alpha$
and~$\beta$. From~\eqref{MasslessScalarstructureeq14},
\eqref{MasslessScalarstructureeq17}
and~\eqref{MasslessScalarstructureeq10} we can compute the values
of~$\Psi_1$, $\Psi_2$ and~$\Psi_3$.

\smallskip
\noindent
\textbf{Data on~$\mathcal{N}_\star$.} From~$Q=1$ we have
$D=\partial_v$ and~$\tau=\bar\pi$. From the coefficients~$\Psi_0$ and
$\varphi_0$ we can solve~\eqref{MasslessScalarstructureeq6} and
\eqref{MasslessScalarstructureeq13} together and obtain the value of
$\rho$ and~$\sigma$. Then one can solve the equation for
$P^{\mathcal{A}}$~\eqref{framecoefficient2} and obtain the operators
$\delta$ and~$\bar\delta$ along~$\mathcal{N}_\star$. Combining
\eqref{BianchiMasslessScalarNP2}, \eqref{EOMMasslessScalarNP3} and
\eqref{MasslessScalarstructureeq2} we can compute the values of
$\TiPsi_1$, $\varphi_1$ and~$\tau$. Next, we can solve
\eqref{framecoefficient1} and obtain~$C^{\mathcal{A}}$. From
\eqref{MasslessScalarstructureeq5} and
\eqref{MasslessScalarstructureeq12} we can obtain the value of~$\beta$
and~$\alpha$. Solving~\eqref{MasslessScalarstructureeq1},
\eqref{MasslessScalarstructureeq8},
\eqref{MasslessScalarstructureeq10} and
\eqref{MasslessScalarstructureeq16} together one can obtain the values
of~$\gamma$, $\mu$, $\TiPsi_2$ and~$\lambda$. With the results above
in hand, from~\eqref{BianchiMasslessScalarNP6},
\eqref{BianchiMasslessScalarNP8} and~\eqref{EOMMasslessScalarNP4} we
can obtain the coefficients~$\TiPsi_3$, $\TiPsi_4$ and~$\varphi_2$
on~$\mathcal{N}_\star$.

\smallskip
\noindent
\textbf{Data on~$\mathcal{N}'_\star$.} The condition
$C^{\mathcal{A}}=0$ leads to~$\Delta=\partial_u$. With the value
of~$\gamma+\bar\gamma$ on~$\mathcal{N}'_\star$ we can solve
\eqref{framecoefficient4} and obtain~$Q$. Solving
\eqref{MasslessScalarstructureeq7} and
\eqref{MasslessScalarstructureeq15} together we have~$\mu$
and~$\lambda$. Next, solving~\eqref{framecoefficient3} and we
have~$P^{\mathcal{A}}$ which gives us the operators~$\delta$
and~$\bar\delta$
along~$\mathcal{N}_\star'$. Combining~\eqref{MasslessScalarstructureeq4},
\eqref{MasslessScalarstructureeq11}, \eqref{EOMMasslessScalarNP2},
\eqref{BianchiMasslessScalarNP7}, \eqref{MasslessScalarstructureeq3}
and~\eqref{framecoefficient5} and solving as a system we can
determine~$\beta$, $\alpha$, $\varphi_1$, $\TiPsi_3$, $\pi$
and~$\tau$. Solving~\eqref{MasslessScalarstructureeq9},
\eqref{MasslessScalarstructureeq18} and
\eqref{MasslessScalarstructureeq10} gives the values of~$\sigma$,
$\rho$ and~$\TiPsi_2$. With the results above we can obtain
$\varphi_0$ from~\eqref{EOMMasslessScalarNP1}, obtain~$\TiPsi_1$
from~\eqref{BianchiMasslessScalarNP3} and obtain~$\Psi_0$
from~\eqref{BianchiMasslessScalarNP1}.
\end{proof}

In the present analysis we are interested in a situation in which the
data for the ingoing lightcone is Minkowskian. In this case we have
another choice for the reduced initial data. Namely, one has:

\begin{lemma}[\textbf{\em freely specifiable data of the
    CIVP for ES system on~$\mathcal{N}_{\star}$}]
\label{Lemma3}
 Assume the gauge choice~\ref{Assumption:Stewarts_Frame} and the gauge
conditions implied by Lemma~\ref{Lemma1} are satisfied in a
neighbourhood~$\mathcal{V}$ of~$\mathcal{S}_{\star}$.  In addition
assume that the data on~$\mathcal{N}_{\star}^{\prime}$ is Minkowskian.
Then the freely specifiable initial data for the ES system on
$\mathcal{N}_{\star}$ is given~$\sigma$ and~$\varphi_{0}$.
\end{lemma}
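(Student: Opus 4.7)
The strategy is to show that, once the incoming data is Minkowskian, the reduced data of Lemma~\ref{Lemma2} on $\mathcal{S}_\star$ and on $\mathcal{N}_\star'$ is fixed, and the reduced data $(\Psi_0,\varphi_0)$ on $\mathcal{N}_\star$ can be recovered from $(\sigma,\varphi_0)$ alone via a chain of transport equations along $\bml$.

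First I would verify that Minkowskian data on $\mathcal{N}_\star'$ forces the free quantities there in Lemma~\ref{Lemma2} to prescribed values. In Minkowski space, with Stewart's gauge adapted to outgoing light cones originating from a timelike geodesic, one has vanishing Weyl curvature and scalar field, so in particular $\Psi_4=0$ and $\varphi_2=0$ on $\mathcal{N}_\star'$. The spin coefficient $\gamma+\bar\gamma$ is also determined: since $Q=1$ on $\mathcal{S}_\star$ and the parameterisation of $u$ is tied to the Minkowskian retarded time, \eqref{framecoefficient4} forces $\gamma+\bar\gamma$ to take the unique value compatible with the standard round-sphere foliation. Similarly, on $\mathcal{S}_\star$, the frame coefficient $P^{\mathcal{A}}$ is the standard stereographic representation of a round 2-sphere of radius $|u_\infty|$, and the remaining sphere data reduces to $\rho=-1/|u_\infty|$, $\mu=1/|u_\infty|$, $\sigma=\lambda=0$, $\tau=0$, $\varphi_1=0$. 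Thus all of the reduced data of Lemma~\ref{Lemma2} on $\mathcal{S}_\star$ and on $\mathcal{N}_\star'$ is determined by the Minkowskian condition alone.

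It remains to recover $\Psi_0$ on $\mathcal{N}_\star$ from $\sigma$ and $\varphi_0$. Along $\mathcal{N}_\star$ we have $D=\partial_v$ and the structure equations involve the pair
\begin{align*}
D\rho &= \rho^2 + \sigma\bar\sigma + 3\varphi_0^2,\\
D\sigma &= 2\rho\sigma + \Psi_0,
\end{align*}
namely \eqref{MasslessScalarstructureeq6} and \eqref{MasslessScalarstructureeq13} under the gauge conditions $\kappa=\epsilon=0$, $\rho=\bar\rho$ of Lemma~\ref{Lemma1}. Given $\sigma$ and $\varphi_0$ on $\mathcal{N}_\star$, the first is a Riccati-type ODE for $\rho$ along each generator, with initial value $\rho(\mathcal{S}_\star)=-1/|u_\infty|$ prescribed by the Minkowskian data. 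Once $\rho$ is known, $\Psi_0$ is read off algebraically from the second equation as $\Psi_0=D\sigma-2\rho\sigma$.

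Having produced $(\Psi_0,\varphi_0)$ on $\mathcal{N}_\star$, together with the already fixed data on $\mathcal{S}_\star$ and $\mathcal{N}_\star'$, I can invoke Lemma~\ref{Lemma2} directly to obtain the complete initial data set on $\mathcal{N}_\star\cup\mathcal{N}_\star'$ for the ES system. The only nontrivial point is the well-posedness of the Riccati ODE for $\rho$; since $\sigma$ and $\varphi_0$ are smooth with the prescribed initial value $\rho(u_\infty)<0$, solutions exist on $\mathcal{N}_\star$ on the relevant $v$-interval, so this step presents no real obstacle. The whole argument is therefore essentially a bookkeeping of how the incoming Minkowskian data trivialises the sphere data and reduces the outgoing free data from the pair $(\Psi_0,\varphi_0)$ to $(\sigma,\varphi_0)$.
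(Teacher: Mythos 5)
Your proposal is correct, and its core computation is the same as the paper's: under the gauge of Lemma~\ref{Lemma1} the two outgoing structure equations reduce to $D\rho=\rho^2+\sigma\bar\sigma+3\varphi_0^2$ and $D\sigma=2\rho\sigma+\Psi_0$, so $\rho$ is obtained by integrating the Riccati equation from its Minkowskian value on $\mathcal{S}_\star$ and $\Psi_0$ is then read off algebraically. Where you diverge is in what happens afterwards: you stop once $(\Psi_0,\varphi_0)$ is produced and hand the rest over to Lemma~\ref{Lemma2}, whereas the paper re-runs the entire hierarchical construction on $\mathcal{N}_\star$ explicitly, computing in order $P^{\mathcal{A}}$, $\varphi_1$, $(\tau,\alpha,\beta,\TiPsi_1)$, $\TiPsi_2$, $(\mu,\lambda)$, $\gamma$, $\varphi_2$, $\TiPsi_3$ and finally $\Psi_4$. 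Your shortcut is logically valid (and arguably cleaner), but note two things it gives up. First, it is mildly redundant: the $\mathcal{N}_\star$ branch of Lemma~\ref{Lemma2} solves \eqref{MasslessScalarstructureeq6} and \eqref{MasslessScalarstructureeq13} \emph{for} $(\rho,\sigma)$ given $\Psi_0$, i.e.\ it inverts the very step you just performed. Second, and more importantly, the explicit construction is what supports the remark following the lemma about regularity — that one needs at least four angular derivatives and one outgoing derivative of $\sigma$, and two angular derivatives of $\varphi_0$; routing the argument through Lemma~\ref{Lemma2} obscures this derivative count. A small factual slip: on $\mathcal{S}_\star$ the Minkowskian value is $\mu=1/u=-1/|u_\infty|<0$ (consistent with $\Timu\equiv\mu+1/|u|$ being small in the paper), not $+1/|u_\infty|$; this does not affect the structure of your argument, since all that matters is that the Minkowskian assumption determines the sphere and incoming data uniquely.
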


\begin{proof}
  On~$\mathcal{N}_{\star}$ we have~$Q=1$ and this leads to
  $\tau=\bar\pi$ and~$D=\partial_v$.  With the given data~$\sigma$ and
  $\varphi_0$, the value of~$\rho$ can be obtained using the structure
  equation~\eqref{MasslessScalarstructureeq13}.  Then, the value of
  $\Psi_0$ can be directly computed by
  \eqref{MasslessScalarstructureeq6} ---this requires~$D\sigma$.  In
  addition, with the value of~$\rho$ and~$\sigma$, we can obtain
  $P^{\mathcal{A}}$ via~\eqref{framecoefficient2} and, hence, the
  operators~$\delta$ and~$\bar\delta$.  From
  \eqref{EOMMasslessScalarNP3} we can obtain the value of~$\varphi_1$
  ---$\delta\varphi_0$ is needed for
  this. Combining~\eqref{MasslessScalarstructureeq2},
  \eqref{MasslessScalarstructureeq5},
  \eqref{MasslessScalarstructureeq12}
  and~\eqref{BianchiMasslessScalarNP2} and solving together we can
  obtain the values of~$\tau$, $\alpha$, $\beta$ and~$\TiPsi_1$ ---the
  value of~$\delta\Psi_0$ is needed for
  this. From~\eqref{MasslessScalarstructureeq10}, with the value
  of~$\delta\alpha$ and~$\bar\delta\beta$ we can compute~$\TiPsi_2$.
  Solving~\eqref{MasslessScalarstructureeq8}
  and~\eqref{MasslessScalarstructureeq16} together we have~$\mu$
  and~$\lambda$ ---$\delta\pi$ is needed for this. In addition, we can
  obtain~$\gamma$ form~\eqref{MasslessScalarstructureeq1}. Equation
  \eqref{EOMMasslessScalarNP4} leads to~$\varphi_2$
  where~$\bar\delta\varphi_1$ is
  needed. From~\eqref{MasslessScalarstructureeq14} we can
  obtain~$\TiPsi_3$ after $\delta\lambda$ and~$\bar\delta\mu$ have
  been obtained.  With the information above at hand~$\Psi_4$ can be
  obtained via~\eqref{BianchiMasslessScalarNP8}
  where~$\bar\delta\TiPsi_3$ is needed to do this.
\end{proof}

\begin{remark} {\em From the analysis in the previous proof it follows
    that the regularity requirements on~$\sigma$ is are of least four
    angular derivatives and one extra outgoing derivative. We need to
    know at least two angular derivatives of~$\varphi_0$.}
\end{remark}

\begin{remark} {\em The coefficient~$\sigma$ corresponds to the shear
    of the outgoing lightcone. The coefficient~$\varphi_0$ describes
    the rate of change of the scalar field along the
    hypersurface~$\mathcal{N}_{\star}$. This interpretation allows us
    to understand the role of scalar field in the process of formation
    of trapped surfaces. In this work the formation of trapped
    surfaces follows from the interaction of gravity and matter.
    When~$\varphi_0$ is zero, the pure gravity case, our result
    collapses to that of~\cite{An2022}.}
\end{remark}

To apply the standard theory of the CIVP we extract a symmetric
hyperbolic system (SHS) from the ES system. Following the strategy
in~\cite{Hue95}, in order to avoid the derivatives of~$\varphi_a$ in
the Bianchi Identity, to close the equations we introduce the
trace-free symmetric tensor~$\hat{\varphi}_{ab}$ and the
scalar~$\hat{\varphi}$
\begin{align*}
  \hat{\varphi}_{ab}\equiv\nabla_a\varphi_b-\frac{1}{4}\hat{\varphi}g_{ab},
  \qquad
\hat{\varphi}\equiv\nabla_c\varphi^c.
\end{align*}
It follows then that thee ES system is equivalent to the first order
system
\begin{align}
\label{firstorderzeroquantities}
\left\{
\begin{aligned}
&[\bme_{a},\bme_{b}]-(\Gamma_{a\phantom{c}b}^{\phantom{a}c}
-\Gamma_{b\phantom{c}a}^{\phantom{b}c})\bme_{c}=0, \\
&R_{abc}^{\phantom{abc}d}-\rho_{abc}^{\phantom{abc}d}=0,\\
&\nabla_{[a}R_{bc]de}=0,\\
&\nabla_a\varphi-3\varphi_a=0, \\
&\nabla_a\varphi_b-\hat{\varphi}_{ab}-\frac{1}{4}\hat{\varphi}g_{ab}=0, \\
&\hat{\varphi}=0, \\
&\nabla_a\hat{\varphi}=0, \\
&\nabla_{[a}\hat{\varphi}_{b]c}-\frac{1}{2}R_{abc}^{\phantom{abc}d}\varphi_d=0, \\
&R_{ab}-6\varphi_a\varphi_b=0,
\end{aligned}
\right.
\end{align}
where~$\rho_{abc}^{\phantom{abc}d}$ is the \emph{algebraic curvature}
as defined by irreducible decomposition of the Riemann curvature in
terms of the Weyl and Ricci tensors. We denote the above first order
system as~$\mathscr{N}=0$.

We now apply the same 
strategy as in~\cite{HilValZha19} and~\cite{HilValZha20} and define 
\begin{align*}
\bm{e}^t&\equiv\{C^{\mathcal{A}}, P^{\mathcal{A}}, Q\}, \\
\bm{\Gamma}^t&\equiv\{\gamma,\pi,\beta,\mu,\alpha,\lambda,\tau,\sigma,\rho\}, \\
\bm{\Psi}^t&\equiv\{\Psi_0,\Psi_1,\Psi_2,\Psi_3,\Psi_4\}, \\
\bm{\varphi}^t&\equiv\{\varphi_0,\varphi_1,\varphi_2\}, \\
\hat{\bm{\varphi}}^t&\equiv\{\hat{\varphi}_{00},\hat{\varphi}_{01},\hat{\varphi}_{02},\hat{\varphi}_{11},\hat{\varphi}_{12},\hat{\varphi}_{22}\},
\end{align*}
where~$\hat{\varphi}_{00}\equiv\hat{\varphi}_{ab}l^al^b$,
$\hat{\varphi}_{01}\equiv\hat{\varphi}_{ab}l^am^b$,
$\hat{\varphi}_{02}\equiv\hat{\varphi}_{ab}m^am^b$,
$\hat{\varphi}_{11}\equiv\hat{\varphi}_{ab}l^an^b=\hat{\varphi}_{ab}m^a\bar{m}^b$,
$\hat{\varphi}_{12}\equiv\hat{\varphi}_{ab}n^am^b$,
$\hat{\varphi}_{22}\equiv\hat{\varphi}_{ab}n^an^b$. The naming here is
given to coincide with the components in spinorial language. By
selecting particular NP equations
for~$\bm{u}^t=\{\varphi,\bm{e}^t,\bm{\Gamma}^t,\bm{\Psi}^t,\bm{\varphi}^t,\hat{\bm{\varphi}}^t\}$
out of the first order system~$\mathscr{N}=0$, one can extract the SHS
\begin{align}
\label{ESSHS}
\bm{\mathcal{D}}\bm{u}=\bm{B}(\bm{e},\bm{\Gamma},\bm{\varphi},\hat{\bm{\varphi}},\bm{\Psi}), 
\end{align}
where 
\begin{align*}
\bm{\mathcal{D}}\bm{u}=\bm{A}^{\mu}\partial_{\mu}\bm{u},
\end{align*}
and~$\bm{A}^{\mu}$ is Hermitian and~$\bm{A}^{\mu}(l_{\mu}+n_{\mu})$ is positive. 

To complete the analysis one needs to show that a solution to the SHS
\eqref{ESSHS} implies a solution to~$\mathscr{N}=0$ ---that is, one
needs to prove the \emph{propagation of the constraints}. The idea
behind this argument is that one can first define the constraints in
$\mathscr{N}=0$ as zero-quantities and then show that the latter
satisfies an homogeneous SHS under the assumption that~\eqref{ESSHS}
holds. Although straightforward the explicit steps are technical and
lengthy, and so are not given explicitly. A similar example can be
found in~\cite{Fri81b}.

\section{The T-weight formalism}
\label{T-weightFormalism}

In this section we introduce the \emph{T-weight formalism} ---an
extension of the GHP formalism used in our analysis.

\subsection{From the NP to the T-weight formalism}

As a part of the PDE analysis in the present article we routinely need
to integrate the norm of NP quantities globally over a topological
2-sphere~$\mathcal{S}\approx \mathbb{S}^2$. However, strictly
speaking, NP coefficients are only defined on single coordinate
patches. Frames on the overlap region between coordinate patches are
related by rotation~$m^a\mapsto e^{i\theta}m^a$. For a quantity which
transforms like~$f\mapsto e^{is\theta}f$, namely with a definite
spin-weight, the norm~$|f|^2=f\bar{f}$ can be well defined all over
the 2-spheres. Therefore, the integral of~$|f|^2$ over $\mathcal{S}$
makes sense. At the same time, such quantities are connected directly
to tensors over~$\mathcal{S}$ which depend only on the choice
of~$\bm{l}$ and~$\bm{n}$ and do not need a particular choice
of~$\bm{m}$ and~$\bar{\bm{m}}$. Hence, for fixed~$\bm{l}$
and~$\bm{n}$, one can define tensors on a topological
2-sphere~$\mathcal{S}$ associated to these quantities via the
relations
\begin{subequations}
\begin{align}
&\sigma_{ab}\equiv\sigma\bar{m}_a\bar{m_b}, \quad
\lambda_{ab}\equiv\lambda m_am_b, \quad
\tau_a\equiv\tau\bar{m}_a,\quad
\pi_a\equiv\pi m_a, \label{T-weightTensor1}\\
&\kappa_a\equiv\kappa\bar{m}_a, \quad
\nu_a\equiv\nu m_a, \quad
\vartheta_a\equiv\vartheta\bar{m}_a, \label{T-weightTensor2}\\
&(\Psi_0)_{ab}\equiv\Psi_0\bar{m}_a\bar{m_b},\quad
(\Psi_1)_a\equiv\Psi_1\bar{m}_a,\quad
(\Psi_3)_a\equiv\Psi_3m_a,\quad
(\Psi_4)_{ab}\equiv\Psi_4m_am_b, \label{T-weightTensor3}\\
&(\Phi_{01})_a\equiv\Phi_{01}\bar{m}_a,\quad
(\Phi_{02})_{ab}\equiv\Phi_{02}\bar{m}_a\bar{m}_b, \quad
(\Phi_{12})_a\equiv\Phi_{12}\bar{m}_a, \label{T-weightTensor4} \\
&(\Phi_{10})_a\equiv\Phi_{10}m_a,\quad
(\Phi_{20})_{ab}\equiv\Phi_{20}m_am_b, \quad
(\Phi_{21})_a\equiv\Phi_{21}m_a. \label{T-weightTensor5} 
\end{align}
\end{subequations}
where~$\vartheta\equiv\beta+\bar\alpha$. In addition, we consider the
following scalars with~$0$ spin-weight:
\begin{align}
&\ulomega\equiv\gamma+\bar\gamma=l^an^b\nabla_bn_a, \quad
\omega\equiv\epsilon+\bar\epsilon=-n^al^b\nabla_bl_a, \quad
\mu, \quad \rho, \nonumber\\
&\Psi_2,\quad \Phi_{00},\quad\Phi_{11}, \quad \Phi_{22}, \quad
\Lambda=R/24.  \label{T-weightTensor6} 
\end{align}

It is readily verified that the above tensors and scalars are
invariant under rotations~$m^a\mapsto e^{i\theta}m^a$. Therefore, the
tensors and scalars above are defined globally on~$\mathcal{S}$. It
also follows from the above definitions that the norms of the tensors
coincide with the norm of the associated NP quantities.

Taking into account the transformation laws for the above objects, we
define, for fixed~$\bml$ and~$\bmn$, a new weight
(\emph{T-weight})~$s(f)$ to the associated NP scalars:

\begin{table}[h!]
\large
\begin{center}
\caption{T-weight of NP quantities}
\label{QuantityT-weight}
\begin{tabular}{|c|c|}
\hline
$s=-2$ & $\sigma$, $\Psi_0$, $\Phi_{02}$ \\
\hline
$s=-1$&$\tau$, $\kappa$, $\vartheta$, $\Psi_1$, $\Phi_{01}$, $\Phi_{12}$\\
\hline
$s=0$&$\rho$, $\mu$, $\ulomega$, $\omega$, $\Psi_2$, $\Phi_{00}$, $\Phi_{11}$, $\Phi_{22}$, $\Lambda$\\
\hline
$s=1$&$\pi$, $\nu$, $\Psi_3$, $\Phi_{10}$, $\Phi_{21}$\\
\hline
$s=2$&$\lambda$, $\Psi_4$, $\Phi_{20}$\\
\hline
\end{tabular}
\end{center}
\end{table}

Under a rotation~$m_a\mapsto e^{i\theta}m_a$ 
a quantity~$f$ with T-weight~$s$ transforms as
 \begin{align*}
f\mapsto e^{-is\theta}f.
\end{align*}
Hence, for any T-weighted quantities~$f$ and~$g$, we have
 \begin{align*}
s(\bar{f})=-s(f), \qquad
s(fg)=s(f)+s(g).
\end{align*}
 
The coordinates~$\overline{x}=(v,u,x^{\mathcal{A}})$ are defined as
having~$0$ T-weight.  For any quantity~$f$ in
Table~\ref{QuantityT-weight} with T-weight~$s(f)$, denote~$T(f)$ be
its corresponding tensor
in~\eqref{T-weightTensor1}-\eqref{T-weightTensor5} and
\eqref{T-weightTensor6}. Up to this point the T-weight of an objects
coincides with its standard spin-weight up to a conventional minus
sign. \emph{The main differences arise in the following discussions
  about operators.}

\begin{remark}
  {\em Combining the fields
    in~\eqref{T-weightTensor1}-\eqref{T-weightTensor5},
    and~\eqref{T-weightTensor6} and their complex conjugates one can
    construct the following real tensors: 
\begin{subequations}
\begin{align}
&\hat{\chi}_{ab}\equiv\sigma_{ab}+\bar\sigma_{ab}=-\Pi_{\mathcal{S}}(\nabla_{\{b}l_{a\}}), \quad
\underline{\hat{\chi}}_{ab}\equiv\lambda_{ab}+\bar\lambda_{ab}=\Pi_{\mathcal{S}}(\nabla_{\{b}n_{a\}}), \\
&\eta_a\equiv\frac{1}{2}(\tau_a+\bar\tau_a)=-\frac{1}{2}\Pi_{\mathcal{S}}(n^b\nabla_bl_a),\quad
\underline{\eta}_a\equiv\frac{1}{2}(\pi_a+\bar\pi_a)=\frac{1}{2}\Pi_{\mathcal{S}}(l^b\nabla_bn_a),\\
&\zeta_a\equiv\frac{1}{2}\left(\beta+\bar\alpha)\bar{m}_a +(\bar\beta+\alpha)m_a\right)=\frac{1}{2}\Pi_{\mathcal{S}}(l^b\nabla_an_b), \\
&\varkappa_a\equiv\kappa_a+\bar\kappa_a=-\Pi_{\mathcal{S}}(l^b\nabla_bl_a), \quad
\upsilon_a\equiv\nu_a+\bar\nu_a=\Pi_{\mathcal{S}}(n^b\nabla_bn_a),
\end{align}
\end{subequations}
and
\begin{subequations}
\begin{align}
&\alpha_{ab}\equiv(\Psi_0)_{ab}+(\bar\Psi_0)_{ab}=\Pi_{\mathcal{S}}(C_{lalb}),\quad
\beta_a\equiv\frac{1}{2}[(\Psi_1)_a+(\bar\Psi_1)_a]=\frac{1}{2}\Pi_{\mathcal{S}}(C_{lnla}),\\
&\underline{\beta}_a\equiv\frac{1}{2}[(\Psi_3)_a+(\bar\Psi_3)_a]=\frac{1}{2}\Pi_{\mathcal{S}}(C_{nlna}), \quad
\underline{\alpha}_{ab}\equiv(\Psi_4)_{ab}+(\bar\Psi_4)_{ab}=\Pi_{\mathcal{S}}(C_{nanb}).
\end{align}
\end{subequations}
In the above expressions~$\Pi_{\mathcal{S}}$ denotes the projection on
to~$\mathcal{S}_{u,v}$ and indices~${\{ab \}}$ denote the operation of
taking the traceless part. The notation on the left-hand-side of these
expressions is consistent with that of Christodoulou in~\cite{Chr00}.}
\end{remark}

\subsection{T-weight Operators and Norms}

Next, we introduce differential operators which preserve the T-weight
as well as reflect the relation with tensor and covariant
derivative. A direct computation shows that the standard
operators~$D$,$\Delta$, $\delta$ and~$\bar\delta$ do not interact
simply with the T-weight.

The differential operators of the GHP formalism preserve the
spin-weight property but cannot be applied to quantities with
non-vanishing boost weight. To overcome this we overwrite the usual
GHP notation, and define four new differential operators~$\meth$,
$\meth'$, $\mthorn$ and~$\mthorn'$
\begin{subequations}
\begin{align}
\meth f&\equiv\delta f+s(\beta-\bar\alpha)f, \label{DefT-weighteth}\\
\meth' f&\equiv\bar\delta f-s(\bar\beta-\alpha)f, \label{DefT-weighteth'}\\
\mthorn f&\equiv Df+s(\epsilon-\bar\epsilon)f, \label{DefT-weightthorn}\\
\mthorn' f&\equiv\Delta f+s(\gamma-\bar\gamma)f,\label{DefT-weightthorn'}
\end{align}
\end{subequations}
acting on any quantity~$f$ with defined T-weight~$s$.  One can check
that under a rotation~$m^a\mapsto e^{i\theta}m^a$, these operators
transform as
\begin{align*}
&\meth f\mapsto e^{-i(s-1)\theta}\meth f, \qquad
\meth' f\mapsto e^{-i(s+1)\theta}\meth' f, \\
&\mthorn f\mapsto e^{-is\theta}\mthorn f, \qquad
\mthorn' f\mapsto e^{-is\theta}\mthorn' f.
\end{align*}
Thus we set
\begin{align}
&s(\meth f)=s(f)-1, \quad
s(\meth' f)=s(f)+1, \quad
s(\mthorn f)=s(f)=s(\mthorn'f).
\end{align}
Moreover, one can check that these four operators follow the Leibnitz
rule.  Making use of~$s(\bar{f})=-s(f)$, one has
\begin{align}
\meth'\bar{f}&=\overline{\meth f}, \quad
\meth\bar{f}=\overline{\meth'f}, \\
\overline{\mthorn f}&=\mthorn\bar{f}, \quad
\overline{\mthorn' f}=\mthorn'\bar{f}.
\end{align}

In our analysis we will also need to consider NP coefficients which do
not transform homogeneously under a rotation~$f\mapsto
e^{is\theta}f$. These include the combinations
\begin{align*}
  \beta-\bar\alpha,\qquad \epsilon-\bar\epsilon, \qquad \gamma-\bar\gamma.
\end{align*} 
These quantities are related to the covariant derivative of the frame
on the 2-sphere along~$\bml$ and~$\bmn$. This connection can be seen
by introducing two new operators~$\mathop{D}\limits_{\gets}$
and~$\mathop{\Delta}\limits_{\gets}$ which are the projections
to~$\mathcal{S}$ of the covariant derivatives~$l^a\nabla_a$
and~$n^a\nabla_a$. Denoting by~$\nablasl_a$ the induced connection,
i.e. the covariant derivative on~$\mathcal{S}$ we have that
\begin{align*}
\nablasl_am_b&=(\beta-\bar\alpha)\bar{m}_am_b-(\bar\beta-\alpha)m_am_b, \\
\mathop{D}\limits_{\gets}m_a&=(\epsilon-\bar\epsilon)m_a, \quad
\mathop{\Delta}\limits_{\gets}m_a=(\gamma-\bar\gamma)m_a.
\end{align*}

Now, given a T-weighted quantity~$f$, denote its associated tensor
(cf. equations~\eqref{T-weightTensor1}-\eqref{T-weightTensor5})
by~$T(f)$. It follows from the preceding discussion that~$\mthorn f$
is the single non-vanishing component
of~$\mathop{D}\limits_{\gets} T(f)$; $\mthorn' f$ is the single
non-vanishing component of~$\mathop{\Delta}\limits_{\gets} T(f)$;
$\meth^kf$, $\meth^{k-1}\meth' f$, ..., $\meth'^kf$ are the components
of~$\nablasl_{a_1}...\nablasl_{a_k}T(f)$ and so on. Taking the NP
scalar~$\lambda$ as an example one readily verifies
that~$s(\lambda)=2$. Moreover, we have that
\begin{align*}
\nablasl_{a_1}\lambda_{ab}&=[\bar\delta\lambda-2(\bar\beta-\alpha)\lambda]m_{a_1}m_am_b+
[\delta\lambda+2(\beta-\bar\alpha)\lambda]\bar{m}_{a_1}m_am_b \\
&=(\meth'\lambda)m_{a_1}m_am_b+(\meth\lambda)\bar{m}_{a_1}m_am_b, \\
\mathop{D}\limits_{\gets}\lambda_{ab}&=[D\lambda+2(\epsilon-\bar\epsilon)\lambda]m_am_b
=(\mthorn\lambda)m_am_b, \\
\mathop{\Delta}\limits_{\gets}\lambda_{ab}&=[\Delta\lambda+2(\gamma-\bar\gamma)\lambda]m_am_b
=(\mthorn'\lambda)m_am_b.
\end{align*}
One can also verify the transformation rules
\begin{align*}
\delta\lambda+2(\beta-\bar\alpha)\lambda&\mapsto e^{-i\theta}[\delta\lambda+2(\beta-\bar\alpha)\lambda], \\
\bar\delta\lambda-2(\bar\beta-\alpha)\lambda&\mapsto e^{-3i\theta}[\bar\delta\lambda-2(\bar\beta-\alpha)\lambda], \\
D\lambda+2(\epsilon-\bar\epsilon)\lambda&\mapsto e^{-2i\theta}[D\lambda+2(\epsilon-\bar\epsilon)\lambda], \\
\Delta\lambda+2(\gamma-\bar\gamma)\lambda&\mapsto e^{-2i\theta}[\Delta\lambda+2(\gamma-\bar\gamma)\lambda].
\end{align*}
One also has that
\begin{align*}
|\nablasl_{a_1}\lambda_{ab}|^2=|\meth'\lambda|^2+|\meth\lambda|^2.
\end{align*}
Similarly, one can compute~$\nablasl_{a_2}\nablasl_{a_1}\lambda_{ab}$
and has
\begin{align*}
|\nablasl_{a_2}\nablasl_{a_1}\lambda_{ab}|^2=
|\meth'^2\lambda|^2+|\meth\meth'\lambda|^2+|\meth'\meth\lambda|^2+|\meth^2\lambda|^2.
\end{align*}
Hence, the norm of~$\nablasl^kT(f)$ can be computed in terms of the
norm of all its components~$...\meth...\meth'...f$, i.e. we have
\begin{align*}
  |\nablasl^kT(f)|^2
  =\sum_{\alpha}|\mathcal{D}^{k_i}f|^2,
\end{align*}
where~$\mathcal{D}^{k_i}f$ is a string of order~$k$ of the
operators~$\meth$ and~$\meth'$, and the sum over~$\alpha$ denotes all
such strings. The set of all possible strings~$\mathcal{D}^{k_i}$ is
contains all the possible combinations of~$\meth$ and~$\meth'$ of
order~$k$. As~$\meth$ and~$\meth'$ do not, in general, commute, the
order of~$\meth$ and~$\meth'$ in the string matters. For example
when~$k=3$ we have the combinations
\begin{align*}
\meth^3f, \quad \meth'\meth^2f, \quad \meth\meth'\meth f, \quad \meth'^2\meth f, \quad
\meth'\meth\meth'f, \quad \meth^2\meth'f, \quad \meth\meth'^2f,\quad \meth'^3 f.
\end{align*}
To obtain all the fourth order combinations one just needs to
apply~$\meth$ and~$\meth'$ to each of the terms above. We see that, in
general, there are~$2^k$ terms in the collection of strings
in~$\mathcal{D}^{k_i}f$.
Proceeding in this
way leads to the norm on $\mathcal{S}$
\begin{align*}
  ||\nablasl^kT(f)||^2_{L^2(\mathcal{S})}
  =\sum_{\alpha}\int_{\mathcal{S}}|\mathcal{D}^{k_i}f|^2.
\end{align*}

Now we can introduce the norms of T-weighted quantities ---namely, we
define
\begin{align}
\label{T-weightL2Norm}
||\mathcal{D}^kf||^2_{L^2(\mathcal{S})}\equiv\int_{\mathcal{S}}|\mathcal{D}^{k}f|^2,
\end{align}
where
\begin{align}
\label{T-weightL2Norm01}
|\mathcal{D}^{k}f|^2\equiv \sum_{\alpha}|\mathcal{D}^{k_i}f|^2.
\end{align}
We use the notation~$\mthorn\mathcal{D}^{k_i}f$
and~$\mthorn'\mathcal{D}^{k_i}f$ to denote the~$\mthorn$
and~$\mthorn'$-derivatives of~$\mathcal{D}^{k_i}f$
respectively. Making use of the commutator
relations~\eqref{T-weightCommutatorAlt2} and
\eqref{T-weightCommutatorAlt4}, one readily finds that the structure
of the equations for the various
quantities~$\mthorn\mathcal{D}^{k_i}f$ is similar. The only
differences arise in the terms involving~$\meth$, $\meth'$ and the
conjugates of T-weighted quantities.  These differences disappear once
the norm and the sum in~\eqref{T-weightL2Norm} are evaluated ---this
is an important observation in the analysis of the transport equations
and the construction of energy estimates. A similar discussion applies
to~$\mthorn'\mathcal{D}^{k_i}f$. The differences between~$\meth$
and~$\meth'$ will be addressed in the construction of elliptic
estimates.

Similarly to the above we define norms
\begin{align}
\label{T-weightLpNorm}
||\mathcal{D}^kf||^p_{L^p(\mathcal{S})}\equiv\int_{\mathcal{S}}|\mathcal{D}^kf|^p,
\end{align}
and 
\begin{align}
\label{T-weightInftyNorm}
||\mathcal{D}^kf||_{L^{\infty}(\mathcal{S})}\equiv\sup_{\mathcal{S}}|\mathcal{D}^kf|.
\end{align}

\begin{remark}
{\em In the previous discussion, the notation~$T(f)$ has been used to
denote the tensor associated to~$f$ according to Table
\ref{QuantityT-weight}. When we apply the derivatives~$\meth$ or~$\meth'$
to~$f$, its associated tensor is defined by
\begin{align*}
...\meth...\meth'...f \mapsto (...\meth...\meth'...f)...\bar{m}_a...m_b...
\end{align*}
In the following discussions, we will only consider quantities~$f$ and
associated tensors~$T(f)$ appearing in Table~\ref{QuantityT-weight}
and in equations~\eqref{T-weightTensor1}-\eqref{T-weightTensor5}
and~\eqref{T-weightTensor6}. From the definition
in~\eqref{T-weightL2Norm01}, for any two quantities~$f$ and~$g$,
the norm of the tensor product $\nablasl^kT(f)\nablasl^{k'}T(g)$ 
can be naturally defined as follows:
\begin{align*}
  |\nablasl^kT(f)\nablasl^{k'}T(g)|^2
  =(\sum_{\alpha}|\mathcal{D}^{k_i}f|^2)(\sum_{\beta}|\mathcal{D}^{k'_i}g|^2)
  =|\mathcal{D}^{k}f|^2|\mathcal{D}^{k'}g|^2.
\end{align*}
Integral on $\mathcal{S}$ we have 
\begin{align*}
||\mathcal{D}^kf\mathcal{D}^{k'}g||^2_{L^2(\mathcal{S})}&\equiv
  ||\nablasl^kT(f)\nablasl^{k'}T(g)||^2_{L^2(\mathcal{S})} \\
  &=\int_{\mathcal{S}}(\sum_{\alpha}|\mathcal{D}^{k_i}f|^2)(\sum_{\beta}|\mathcal{D}^{k'_i}g|^2) \\
  &=\int_{\mathcal{S}}|\mathcal{D}^{k}f|^2|\mathcal{D}^{k'}g|^2.
\end{align*}
Hence for any product $\mathcal{D}^{k_i}f\mathcal{D}^{k'_j}g$ we have
\begin{align*}
|\mathcal{D}^{k_i}f\mathcal{D}^{k'_j}g|\leq|\mathcal{D}^{k}f||\mathcal{D}^{k'}g|,
\end{align*}
which leads to
\begin{align*}
  ||\mathcal{D}^{k_i}f\mathcal{D}^{k'_j}g||_{L^2(\mathcal{S})}
  \leq||\mathcal{D}^{k}f\mathcal{D}^{k'}g||_{L^2(\mathcal{S})}.
\end{align*}
This follows from the definition given in~\eqref{T-weightLpNorm}. For
expressions involving the product of more than two terms we have
similar results. These observations supplement the previous
discussions on the relation between the norms for terms
involving~$\mathcal{D}^{k_i}$ and~$\mathcal{D}^{k}$.}
\end{remark}

\begin{remark}[Relation to the standard GHP formalism]
  {\em The connection coefficients, $\kappa$, $\rho$, $\sigma$,
    $\tau$, $\nu$, $\mu$, $\lambda$ and~$\pi$ have GHP
    weights~$\{3,1\}$, $\{1,1\}$, $\{3,-1\}$, $\{1,-1\}$, $\{-3,-1\}$,
    $\{-1,-1\}$, $\{-3,1\}$ and~$\{-1,1\}$ respectively. In addition
    to these coefficients which also are considered in the GHP
    formalism, we consider further three coefficients with definite
    T-weight ---namely, $\ulomega=\gamma+\bar\gamma$,
    $\omega=\epsilon+\bar\epsilon$
    and~$\vartheta=\beta+\bar\alpha$. Now, the Weyl curvature
    coefficients~$\Psi_0$, $\Psi_1$, $\Psi_2$, $\Psi_3$ and~$\Psi_4$
    have weight $\{4,0\}$, $\{2,0\}$, $\{0,0\}$, $\{-2,0\}$
    and~$\{-4,0\}$ respectively.  The Ricci curvature
    coefficients~$\Phi_{00}$, $\Phi_{10}$, $\Phi_{01}$, $\Phi_{11}$,
    $\Phi_{20}$, $\Phi_{21}$, $\Phi_{02}$, $\Phi_{12}$ and~$\Phi_{22}$
    have weight~$\{2,2\}$, $\{0,2\}$, $\{2,0\}$, $\{0,0\}$,
    $\{-2,2\}$, $\{-2,0\}$, $\{2,-2\}$, $\{0,-2\}$ and~$\{-2,-2\}$
    respectively. For any quantity~$f$ with GHP weight~$\{w,z\}$ we
    have that
\begin{align*}
\mthorn_Gf&=Df+(-w\epsilon-z\bar\epsilon)f, \qquad
\mthorn_G'f=\Delta f+(-w\gamma-z\bar\gamma)f, \\
\meth_G f&=\delta f+(-w\beta-z\bar\alpha)f, \qquad
\meth_G'f=\bar\delta f+(-w\alpha-z\bar\beta)f,
\end{align*}
where we have used the subscript~$G$ to denote the GHP
operators. Similarly, we use the subscript~$T$ to denote the T-weight
operators. Then, for quantity~$f$ we have that
\begin{align*}
\meth_Tf-\meth_Gf&=[(w+s)\beta+(z-s)\bar\alpha]f, \qquad
\meth'_Tf-\meth'_Gf=[(z-s)\bar\beta+(w+s)\alpha]f, \\
\mthorn_Tf-\mthorn_Gf&=[(w+s)\epsilon+(z-s)\bar\epsilon]f \qquad
\mthorn'_Tf-\mthorn'_Gf=[(w+s)\gamma+(z-s)\bar\gamma]f.
\end{align*}
One can check that for quantities with zero boost-weight, the GHP
operators coincide with the T-weight operators. The difference comes
arises for non-zero boost-weighted quantities. To translate between
the formalism we can use the maps~$w\mapsto s=\frac{w-z}{2}$
and~$z\mapsto-s=-\frac{w-z}{2}$.}
\end{remark}

The NP equations in the T-weight formalism are given in
Appendix~\ref{NPT-weightEq}.

We conclude this section by noting the following lemma:

\begin{lemma}
  Suppose~$f_1$ and~$f_2$ are two real scalar fields with zero
  T-weight. Then
\begin{align*}
  ||\mathcal{D}^k(f_1+if_2)||^2_{L^2(\mathcal{S})}
  =||\mathcal{D}^kf_1||^2_{L^2(\mathcal{S})}
  +||\mathcal{D}^kf_2||^2_{L^2(\mathcal{S})}.
\end{align*}
\end{lemma}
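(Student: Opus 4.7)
The plan is to reduce the identity, which is phrased in terms of strings of $\meth$ and $\meth'$, to an elementary statement about the real induced covariant derivative $\nablasl$ on $\mathcal{S}$. The key observation is that $f_{1}$ and $f_{2}$ carry T-weight zero, so under the correspondence $T(\cdot)$ of the T-weight formalism they are simply associated to themselves as scalar functions on $\mathcal{S}$, with no $m_{a}$ or $\bar{m}_{a}$ factors. In particular, $T(f_{1}+if_{2})=f_{1}+if_{2}$.

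First I would invoke the identity, already established earlier in this section, that for any quantity $f$ of definite T-weight,
\begin{align*}
|\nablasl^{k}T(f)|^{2}=\sum_{\alpha}|\mathcal{D}^{k_{i}}f|^{2}=|\mathcal{D}^{k}f|^{2}.
\end{align*}
Applied to the three scalar quantities $f_{1}$, $f_{2}$, and $f_{1}+if_{2}$, this reduces the problem to showing the pointwise identity
\begin{align*}
|\nablasl^{k}(f_{1}+if_{2})|^{2}=|\nablasl^{k}f_{1}|^{2}+|\nablasl^{k}f_{2}|^{2}.
\end{align*}

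Second, I would note that $\nablasl$ is a real derivation on tensor fields over $\mathcal{S}$, and therefore $\nablasl^{k}f_{1}$ and $\nablasl^{k}f_{2}$ are honest real tensor fields whenever $f_{1}$ and $f_{2}$ are real scalars. By $\mathbb{C}$-linearity, $\nablasl^{k}(f_{1}+if_{2})=\nablasl^{k}f_{1}+i\,\nablasl^{k}f_{2}$, and taking the modulus squared in the sense of the (Hermitian extension of the) induced metric on $\mathcal{S}$ gives
\begin{align*}
|\nablasl^{k}(f_{1}+if_{2})|^{2}=|\nablasl^{k}f_{1}|^{2}+|\nablasl^{k}f_{2}|^{2}+2\,\mathrm{Im}\bigl[\nablasl^{k}f_{1}\cdot\nablasl^{k}f_{2}\bigr],
\end{align*}
where the dot denotes full index contraction. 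Because both factors in the bracket are real tensors, the contraction is real and the imaginary part vanishes.

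Third, I would integrate the resulting pointwise identity over $\mathcal{S}$ and use the definition \eqref{T-weightL2Norm} of the $L^{2}(\mathcal{S})$ norm together with the equivalence $|\mathcal{D}^{k}f|^{2}=|\nablasl^{k}T(f)|^{2}$ on each of $f_{1}$, $f_{2}$, $f_{1}+if_{2}$ to obtain the stated equality.

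The main obstacle, such as it is, lies not in the cancellation of cross terms (which is immediate at the tensor level) but in the faithful use of the already-established correspondence between strings of $\meth,\meth'$ acting on a T-weighted scalar and iterated covariant derivatives of its associated tensor. Going directly with the string-sum definition is also possible but requires an involution argument pairing each string $\mathcal{D}^{k_{i}}$ with its swap $\mathcal{D}'^{k_{i}}$ (obtained by exchanging $\meth\leftrightarrow\meth'$) together with the conjugation rules $\overline{\meth f}=\meth'\bar{f}$, $\overline{\meth'f}=\meth\bar{f}$ to show that $\sum_{\alpha}\overline{\mathcal{D}^{k_{i}}f_{1}}\,\mathcal{D}^{k_{i}}f_{2}$ is real pointwise; the tensorial route above just packages this algebra into one line.
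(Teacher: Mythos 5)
Your proposal is correct and follows essentially the same route as the paper: pass from the string-of-$\meth,\meth'$ definition of the norm to the real induced covariant derivative $\nablasl$ on $\mathcal{S}$, expand the Hermitian modulus of $\nablasl^k f_1+i\nablasl^k f_2$, and observe that the cross term vanishes because both factors are real tensors. The only cosmetic difference is that you carry out the cancellation pointwise before integrating, whereas the paper does it directly under the integral sign.
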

\begin{proof}
  The proof follows from a direct computation. Namely, one has that
\begin{align*}
||\mathcal{D}^k(f_1+if_2)||^2_{L^2(\mathcal{S})}&=||\nablasl^k(f_1+if_2)||^2_{L^2(\mathcal{S})} \\
&=\int_{\mathcal{S}}[\nablasl^{a_1,...,a_k}(f_1+if_2)][\nablasl_{a_1,...,a_k}(f_1-if_2)] \\
&=\int_{\mathcal{S}}\nablasl^{a_1,...,a_k}f_1\nablasl_{a_1,...,a_k}f_1
+\int_{\mathcal{S}}\nablasl^{a_1,...,a_k}f_2\nablasl_{a_1,...,a_k}f_2 \\
&=||\mathcal{D}^kf_1||^2_{L^2(\mathcal{S})}
+||\mathcal{D}^kf_2||^2_{L^2(\mathcal{S})}.
\end{align*}
\end{proof}

\begin{remark} {\em The above discussion shows that the
    terms~$\beta-\bar\alpha$, $\epsilon-\bar\epsilon$
    and~$\gamma-\bar\gamma$ allow us to connect the derivative of NP
    quantities~$f$ with T-weight and the derivative of
    tensors~$T(f)$. In what follows we will not distinguish~$f$
    and~$T(f)$ where there is no source for confusion.}
\end{remark}

\subsection{The T-weight formalism and PDE analysis}
\label{T-weightPDEanalysis}

In this section we discuss applications of the T-weight formalism to
the construction of estimates to solutions to differential equations
in a number of scenarios.

\subsubsection{Transport estimates}

Given a quantity~$f$ such that~$s(f)=s$ we have that
\begin{align*}
  \mthorn(f\bar{f})=D(f\bar{f})=f\mthorn\bar{f}+\bar{f}\mthorn f, \qquad
  \mthorn'(f\bar{f})=\Delta(f\bar{f})=f\mthorn'\bar{f}+\bar{f}\mthorn' f.
\end{align*}
It follows that in order to obtain~$L^2$-estimates of~$\nablasl_kT(f)$
one can apply the transport identities
\begin{align}
&\frac{\mathrm{d}}{\mathrm{d}v}\int_{\mathcal{S}_{u,v}}
\phi=\int_{\mathcal{S}_{u,v}}Q^{-1}\left(D\phi-(\rho+\bar\rho)\phi\right),\label{DerivativeTwoSphere1}\\
&\frac{\mathrm{d}}{\mathrm{d}u}\int_{\mathcal{S}_{u,v}}
\phi=\int_{\mathcal{S}_{u,v}}\left(\Delta\phi+(\mu+\bar\mu)\phi\right) \label{DerivativeTwoSphere2}
\end{align}
to~$\phi\equiv f\bar{f}$, where we have made use of the expressions
for~$\bml$ and~$\bmn$ in Stewart's gauge ---namely
\begin{align*}
\bml=Q\bmpartial_v, \qquad  \bmn=\bmpartial_u+C^{\mathcal{A}}\bmpartial_{\mathcal{A}}.
\end{align*}
As an example of the above discussion consider~$\phi=|\meth^kf|^2$. In
this case the derivatives~$\mthorn\meth^kf$ and~$\mthorn'\meth^kf$
encode the typical structural property
of~$\mathop{D}\limits_{\gets}\nablasl^kT(f)$ and
$\mathop{\Delta}\limits_{\gets}\nablasl^kT(f)$. It follows then that
\begin{align*}
\frac{\mathrm{d}}{\mathrm{d}v}\int_{\mathcal{S}_{u,v}}
|\meth^kf|^2=\int_{\mathcal{S}_{u,v}}Q^{-1}\left(D|\meth^kf|^2-(\rho+\bar\rho)|\meth^kf|^2\right).
\end{align*}
The left-hand side can be expanded as
\begin{align*}
\frac{\mathrm{d}}{\mathrm{d}v}||\meth^kf||^2_{L^2(\mathcal{S}_{u,v})}
=2||\meth^kf||_{L^2(\mathcal{S}_{u,v})}\frac{\mathrm{d}}{\mathrm{d}v}||\meth^kf||_{L^2(\mathcal{S}_{u,v})},
\end{align*}
while the right hand side can be estimated as
\begin{align*}
&\int_{\mathcal{S}_{u,v}}Q^{-1}\left(\mthorn|\meth^kf|^2-(\rho+\bar\rho)|\meth^kf|^2\right) \\
&=\int_{\mathcal{S}_{u,v}}Q^{-1}\left(\overline{\meth^k f}\mthorn\meth^kf+\meth^kf\mthorn\overline{\meth^kf}-(\rho+\bar\rho)|\meth^kf|^2 \right) \\
&\leq\int_{\mathcal{S}_{u,v}}2Q^{-1}\left(|\meth^kf|\dot|\mthorn\meth^kf|+|\rho||\meth^kf|^2 \right) \\
&\leq2||\meth^kf||_{L^2(\mathcal{S}_{u,v})}\left(||Q^{-1}\mthorn\meth^kf||_{L^2(\mathcal{S}_{u,v})}
+||\rho||_{L^{\infty}(\mathcal{S}_{u,v})}||Q^{-1}\meth^kf||_{L^2(\mathcal{S}_{u,v})} \right).
\end{align*}
Combining the above expressions one concludes that
\begin{align*}
\frac{\mathrm{d}}{\mathrm{d}v}||\meth^kf||_{L^2(\mathcal{S}_{u,v})}\leq
\left(||Q^{-1}\mthorn\meth^kf||_{L^2(\mathcal{S}_{u,v})}
+||\rho||_{L^{\infty}(\mathcal{S}_{u,v})}||Q^{-1}\meth^kf||_{L^2(\mathcal{S}_{u,v})} \right).
\end{align*}
Similarly, one can show the inequality
\begin{align*}
\frac{\mathrm{d}}{\mathrm{d}u}||\meth^kf||_{L^2(\mathcal{S}_{u,v})}\leq
\left(||\mthorn'\meth^kf||_{L^2(\mathcal{S}_{u,v})}
+||\mu||_{L^{\infty}(\mathcal{S}_{u,v})}||\meth^kf||_{L^2(\mathcal{S}_{u,v})} \right).
\end{align*}

\subsubsection{Elliptic estimates}
\label{T-weightElliptic}

We are particularly interested in constructing estimates for the
divergence and curl operators on the 2-spheres~$\mathcal{S}_{u,v}$.
We note that the volume element~$\varepsilon_{ab}$ can be written
as~$\varepsilon_{ab}=i(\bar{m}_am_b-m_a\bar{m}_b)$. For any NP
quantity $f$ related tensor~$T(f)$ in Table~\ref{QuantityT-weight}, we
define
\begin{align*}
(\mathrm{div}T(f))_{a_1...a_r}\equiv\nablasl^aT(f)_{aa_1...a_r}, \qquad
(\mathrm{curl}T(f))_{a_1...a_r}\equiv\varepsilon^{ab}\nablasl_aT(f)_{ba_1...a_r},
\end{align*}
where~$r=0,1$. One can readily verify that
\begin{align*}
\mathrm{curl}T(f)&=\mathrm{i}(\mathrm{div}T(f))=\mathrm{i}T(\meth f), \quad &\mbox{for}\quad s(f)>0, \\
\mathrm{curl}T(f)&=-\mathrm{i}(\mathrm{div}T(f))=-\mathrm{i}T(\meth'f),
                    \quad &\mbox{for} \quad s(f)<0.
\end{align*}
It follows that the curl of the tensor~$T(f)$ is essentially encoded
in its divergence. For quantities in Table~\ref{QuantityT-weight} with
nonzero T-weight, we will use the T-weighted scalar~$\mathscr{D}_f$ to
denote its divergence:
\begin{align}
\mathscr{D}_f=\left\{
\begin{aligned}
\label{T-weightDiv}
\meth f &\quad s(f)>0, \\
\meth' f &\quad s(f)<0.
\end{aligned}
\right.
\end{align}

When~$f$ has~$s=0$, we have that 
\begin{align*}
\bm{\Delta}f=(\meth\meth'+\meth'\meth)f,
\end{align*} 
where~$\bm{\Delta}$ is the Laplacian~$\nablasl^a\nablasl_a$.  Now, for
a double null foliation we have~$\mu=\bar\mu$
and~$\rho=\bar\rho$. This leads to
\begin{align*}
\bm{\Delta}f=2\meth'\meth f=2\meth\meth' f.
\end{align*} 

\subsubsection{Stokes theorem}
\label{T-weightStokes}

As a topological 2-sphere~$\mathcal{S}$ has no boundary, it follows
then from Stokes theorem that for any 1-form~$h_a$ on~$\mathcal{S}$,
the integral of a divergence vanishes ---that is
\begin{align*}
\int_{\mathcal{S}}\nablasl^ah_a=0.
\end{align*}

Now, suppose we have two T-weighted quantities~$f$ and~$g$ satisfying
$s(f)+s(g)-1=0$ and define~$h\equiv fg$. One can verify that
$s(h)=1$. It follows then that the associated tensor is of the form
\begin{align*}
h_a=hm_a
\end{align*} 
which is globally defined on~$\mathcal{S}$. One can check that
\begin{align*}
\mathrm{div}\bm{h}=\meth h=\meth(fg)=f\meth g+g\meth f, \qquad
s(\mathrm{div}h)=0,
\end{align*} 
from which it follows that~$f\meth g+g\meth f$ is a globally defined
scalar on~$\mathcal{S}$. Thus, applying Stokes Theorem one has
\begin{align*}
0=\int_{\mathcal{S}}\mathrm{div}h
=\int_{\mathcal{S}}f\meth g+g\meth f \Rightarrow 
\int_{\mathcal{S}}f\meth g=-\int_{\mathcal{S}}g\meth f.
\end{align*}  
Similarly, when~$f$ and~$g$ satisfy 
\begin{align*}
s(f)+s(g)+1=0,
\end{align*} 
we have that
\begin{align*}
\int_{\mathcal{S}}f\meth' g=-\int_{\mathcal{S}}g\meth' f.
\end{align*} 

We summarise the above discussion in the following:

\begin{lemma}
  \label{IntegralbyPartT-weight}
$\phantom{X}$  
\begin{itemize}
\item[(i)] For a T-weighted quantity~$h$ with~$s(h)=1$ we have that
\begin{align*}
\int_{\mathcal{S}}\meth h=0.
\end{align*} 
If, on the other hand~$s(h)=-1$, we have that
\begin{align*}
\int_{\mathcal{S}}\meth'h=0.
\end{align*} 

\item[(ii)] For T-weighted quantities~$f$ and~$g$ on~$\mathcal{S}$ satisfying
\begin{align*}
s(f)+s(g)-1=0,
\end{align*}
we have that
\begin{align*}
\int_{\mathcal{S}}f\meth g=-\int_{\mathcal{S}}g\meth f.
\end{align*} 
 If, on the other hand, we have that 
\begin{align*}
s(f)+s(g)+1=0,
\end{align*}
then we conclude that 
\begin{align*}
\int_{\mathcal{S}}f\meth' g=-\int_{\mathcal{S}}g\meth' f.
\end{align*} 
\end{itemize}
\end{lemma}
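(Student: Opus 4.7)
The plan is to derive both statements directly from Stokes' theorem on the closed 2-sphere $\mathcal{S} \approx \mathbb{S}^2$, using the correspondence between T-weighted operators and divergences of associated tensors developed in Section~\ref{T-weightElliptic}.

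For part (i) the argument proceeds as follows. When $s(h) = 1$, the tensor $T(h) = h\, m_a$ is a globally defined 1-form on $\mathcal{S}$, and by~\eqref{T-weightDiv} its divergence is exactly $\meth h$. Since $\mathcal{S}$ has no boundary, Stokes' theorem immediately yields
\begin{align*}
\int_{\mathcal{S}} \meth h \;=\; \int_{\mathcal{S}} \nablasl^a(h m_a) \;=\; 0.
\end{align*}
The case $s(h) = -1$ is handled identically using $T(h) = h\, \bar m_a$, for which $\mathrm{div}\,T(h) = \meth' h$ by the second case of~\eqref{T-weightDiv}.

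For part (ii) I would set $h \equiv fg$. Under the hypothesis $s(f) + s(g) - 1 = 0$ one has $s(h) = 1$, so part~(i) applies and gives $\int_{\mathcal{S}} \meth(fg) = 0$. The Leibnitz rule for $\meth$, which follows at once from the definition~\eqref{DefT-weighteth} together with the Leibnitz property of the NP operator $\delta$, implies $\meth(fg) = f\meth g + g\meth f$, and the desired identity follows by rearrangement. The $\meth'$ statement is obtained by the same procedure, now applying part (i) with $s(h) = s(fg) = -1$. I do not anticipate any genuine obstacle, since both claims are immediate consequences of the formalism assembled earlier; the only subtlety worth checking is that the $m_a$ or $\bar m_a$ factor appearing in $T(fg)$ is the same factor used to define $T(f)$ and $T(g)$ individually under the convention in~\eqref{T-weightTensor1}--\eqref{T-weightTensor5}, which guarantees the globally defined status of the 1-form to which Stokes' theorem is applied.
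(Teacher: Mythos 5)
Your proposal is correct and follows essentially the same route as the paper: the paper's own argument (given in the discussion immediately preceding the lemma) likewise applies Stokes' theorem to the globally defined 1-form associated with $h\equiv fg$, checks that $s(h)=\pm1$ so that $\mathrm{div}\,T(h)=\meth h$ (resp.\ $\meth' h$) is a well-defined scalar on $\mathcal{S}$, and uses the Leibnitz rule to split $\meth(fg)$. No gaps.
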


The above lemma can be used to deal with a system of the form
\begin{align*}
\left\{\begin{aligned}
\mthorn'\Psi_{I}-\meth\Psi_{II}&=P, \\
\mthorn\Psi_{II}-\meth'\Psi_{I}&=Q,
\end{aligned} 
\right.
\end{align*}
where the T-weights of~$\Psi_{I}$ and~$\Psi_{II}$ satisfy
\begin{align*}
s(\Psi_{I})+s(\bar\Psi_{II})+1=0, \qquad
s(\Psi_{II})+s(\bar\Psi_{I})-1=0.
\end{align*} 
From Lemma~\ref{IntegralbyPartT-weight} it follows then that
\begin{align*}
\int_{\mathcal{S}}\Psi_{I}\meth'\bar\Psi_{II}&=-\int_{\mathcal{S}}\bar\Psi_{II}\meth'\Psi_{I} ,\\
\int_{\mathcal{S}}\Psi_{II}\meth\bar\Psi_{I}&=-\int_{\mathcal{S}}\bar\Psi_{I}\meth\Psi_{II}.
\end{align*} 
Therefore, the above system leads to
\begin{align*}
\int_{\mathcal{S}}\Delta(|\Psi_{I}|^2)+\int_{\mathcal{S}}D(|\Psi_{II}|^2)=
\int_{\mathcal{S}}\left(\Psi_{I}\bar{P}+\bar\Psi_{I}P+\Psi_{II}\bar{Q}+\bar\Psi_{II}Q \right).
\end{align*} 

Recall now that on a double null foliation one has that
$\bar\rho=\rho$ and~$\bar\mu=\mu$. Hence, for a T-weight quantity~$f$
with T-weight~$s$ the commutation relation~\eqref{T-weightCommutator6}
simplifies to
\begin{align*}
(\meth\meth'-\meth'\meth)f=sKf,
\end{align*}
with~$K$ the Gaussian curvature. It follows then that 
\begin{align*}
\int_{\mathcal{S}}|\meth f|^2&=\int_{\mathcal{S}}\overline{\meth f}\meth f
=\int_{\mathcal{S}}\meth'\bar{f}\meth f=-\int_{\mathcal{S}}\bar{f}\meth'\meth f
=-\int_{\mathcal{S}}\bar{f}\meth\meth'f+\int_{\mathcal{S}}sK|f|^2 \\
&=\int_{\mathcal{S}}\meth\bar{f}\meth'f+\int_{\mathcal{S}}sK|f|^2
=\int_{\mathcal{S}}|\meth'f|^2+\int_{\mathcal{S}}sK|f|^2.
\end{align*}
In the particular case of a scalar~$f$ zero T-weight, we have that 
\begin{align*}
\int_{\mathcal{S}}|\meth f|^2
=\int_{\mathcal{S}}|\meth'f|^2.
\end{align*} 

\subsubsection{Sobolev embeddings} 

In this subsection we briefly recall results from~\cite{Chr00},
Chapter 5.2, but adapted here to the T-weight formalism. Given any
T-weighted quantity~$f$, we have that
\begin{align}
  \label{SobolevLp1}
\left(\mbox{Area}(\mathcal{S})\right)^{-1/p}||f||_{L^p(\mathcal{S})}\leq
 C_p\sqrt{I'(\mathcal{S})}\left(\left(\mbox{Area}(\mathcal{S})\right)^{-1/2}||f||_{L^2(\mathcal{S})}
 +||\mathcal{D}f||_{L^2(\mathcal{S})}\right),
\end{align}
for~$2<p<\infty$, and 
\begin{align}
  \label{SobolevLinfty1}
||f||_{L^{\infty}(\mathcal{S})}\leq
 C_p\sqrt{I'(\mathcal{S})}\left(\mbox{Area}(\mathcal{S})\right)^{1/2-1/p}\left(||f||_{L^p(\mathcal{S})}
 +\left(\mbox{Area}(\mathcal{S})\right)^{-1/2}||\mathcal{D}f||_{L^2(\mathcal{S})}\right),
\end{align}
for~$p>2$, where~$C_p$ is a numerical constant depending only on~$p$
and
\begin{align*}
I'(S)\equiv \max\{I(\mathcal{S}),1\},
\end{align*}
where~$I(\mathcal{S})$ is the isoperimetric constant of~$\mathcal{S}$.
 
\subsection{Local existence results in terms of the T-weight
  formalism}

In this section we make use of the T-weight formalism to reformulate
the local existence results for the characteristic initial value
problem given in~\cite{HilValZha19}. This reformulation can be
directly obtained by a replacement of the spin connection coefficients
and differential operators by their T-weighted
counterpart. Accordingly, we only provide the main statements of the
theorems.

The article~\cite{HilValZha19} is concerned with the local existence
of CIVP for the vacuum Einstein field equations (EFE). The analysis is
done in a coordinate system and gauge choice which are slightly
different from those considered in the previous section. In this case
the statement for the freely specifiable data reads as follows:

\begin{lemma}[\textbf{\em Freely specifiable data for the CIVP}]
  \label{Lemma:FreeDataCIVP} Working in the gauge discussed in
  Section~2 in~\cite{HilValZha19}, initial data for the vacuum
  Einstein field equations
  on~$\mathcal{N}_{\star}\cup\mathcal{N}_{\star}^{\prime}$ can be
  computed (near~$\mathcal{S}_{\star}$) from a reduced data
  set~$\mathbf{r}_\star$ consisting of:
\begin{align*}
  &\Psi_0, \; \omega \quad \mbox{\textrm{on}}
  \quad \mathcal{N}_{\star},\nonumber\\
  &\Psi_4\ \ \mbox{\textrm{on}}\ \ \mathcal{N}_{\star}^{\prime}, \nonumber\\
  &\lambda,\ \ \sigma,\ \ \mu,\ \ \rho, \ \ \pi, \ \
  P^{\mathcal{A}}\ \ \mbox{\textrm{on}}\ \ \mathcal{S}_{\star}. \nonumber
\end{align*}
\end{lemma}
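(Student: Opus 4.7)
The plan is to mirror the constructive argument used in the proof of Lemma~\ref{Lemma2}, systematically determining every remaining NP and frame coefficient on $\mathcal{S}_\star$, $\mathcal{N}_\star$ and $\mathcal{N}_\star'$ from the reduced data $\mathbf{r}_\star$. The only structural differences from Lemma~\ref{Lemma2} are that the vacuum setting removes the matter fields $\varphi$, $\varphi_a$ and $\hat\varphi_{ab}$ from the equations, and that the gauge of \cite{HilValZha19} prescribes $\omega=\epsilon+\bar\epsilon$ on $\mathcal{N}_\star$ (rather than $\gamma+\bar\gamma$ on $\mathcal{N}_\star'$) and $\pi$ on $\mathcal{S}_\star$ (rather than $\tau$). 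Accordingly the order in which certain variables are reconstructed is slightly reshuffled but the underlying logic is identical.

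First, on $\mathcal{S}_\star$, the coefficient $P^{\mathcal{A}}$ determines the 2-metric and hence the operators $\meth$, $\meth'$ and the combination $\beta-\bar\alpha$ coming from the sphere's Christoffel symbols; combined with the gauge relation tying $\pi$ to $\bar\alpha+\beta$ this yields $\alpha$ and $\beta$ individually. The remaining ingredients $\lambda$, $\sigma$, $\mu$, $\rho$ then feed into the vacuum structure equations from Appendix~\ref{NPT-weightEq} to give $\Psi_1$, $\Psi_2$, $\Psi_3$ algebraically on $\mathcal{S}_\star$.

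Next, on $\mathcal{N}_\star$, the gauge normalisation $Q=1$ gives $\mthorn=\partial_v$. The pair of transport equations for $\mthorn\rho$ and $\mthorn\sigma$ driven by $\Psi_0$ are integrated with initial data on $\mathcal{S}_\star$ to give $\rho$ and $\sigma$ along the outgoing cone, and the frame equation \eqref{framecoefficient2} then integrates to give $P^{\mathcal{A}}$, and hence the angular operators, along $\mathcal{N}_\star$. From this point onwards the remaining coefficients $\pi$, $\tau$, $\alpha$, $\beta$, $\mu$, $\lambda$, $\gamma$, $\Psi_1$, $\Psi_2$, $\Psi_3$, $\Psi_4$ are produced by a hierarchy of transport equations along $\bml$, each involving only already-computed quantities together with the freely prescribed $\Psi_0$ and $\omega$. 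A parallel scheme on $\mathcal{N}_\star'$, using $C^{\mathcal{A}}=0$ so that $\mthorn'=\partial_u$, uses $\Psi_4$ as the driver for the $\mu$, $\lambda$ system, then recovers $Q$ and $P^{\mathcal{A}}$ from the corresponding frame equations, and finally completes $\pi$, $\tau$, $\alpha$, $\beta$, $\sigma$, $\rho$ and $\Psi_0$, $\Psi_1$, $\Psi_2$, $\Psi_3$.

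The main obstacle is verifying that at every stage the hierarchy is genuinely triangular, i.e.\ that each transport or algebraic equation used depends only on quantities previously determined in the chain. This is exactly the content of the analogous verification performed in \cite{HilValZha19}, and the passage to the T-weight formalism is harmless because $\mthorn$, $\mthorn'$, $\meth$, $\meth'$ differ from $D$, $\Delta$, $\delta$, $\bar\delta$ only by multiplicative terms built from already-determined connection coefficients (cf.~\eqref{DefT-weighteth}--\eqref{DefT-weightthorn'}); no new transport direction is introduced and no new constraint is generated. Consequently the proof reduces to the line-by-line translation promised by the authors, and the statement follows directly from the vacuum CIVP analysis of \cite{HilValZha19}.
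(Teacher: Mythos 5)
The paper provides no proof of this lemma --- it is presented as a direct T-weight restatement of the corresponding result in \cite{HilValZha19} --- and your hierarchical reconstruction is precisely the intended argument, matching the proofs of Lemmas~\ref{Lemma2} and~\ref{Lemma3} step for step, so the proposal is correct and takes essentially the same approach. One bookkeeping detail: since $\omega=\epsilon+\bar\epsilon$ is \emph{free} data on $\mathcal{N}_{\star}$ in the gauge of \cite{HilValZha19}, the normalisation $Q=1$ (equivalently $\mthorn=\partial_v$) cannot also be imposed there as you assume; the lapse-like frame coefficient along $\mathcal{N}_{\star}$ must instead be recovered by integrating the ODE sourced by $\omega$ (the mirror image of $\Delta Q=\ulomega\, Q$ used on $\mathcal{N}'_{\star}$ in Stewart's gauge), with the unit normalisation sitting on the other initial hypersurface.
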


The existence result can be reformulated as:

\begin{theorem}[\textbf{\em Improved local existence for the CIVP for
    the EFE}]
\label{MainTheoremEFE}
Given regular initial data for the vacuum Einstein field equations as
constructed in Lemma~\ref{Lemma:FreeDataCIVP} on the null
hypersurfaces~$\mathcal{N}_\star\cup\mathcal{N}'_\star$
for~$I\equiv\{0\leq v\leq v_\bullet\}$, there exists~$\varepsilon>0$
such that a unique smooth solution to the vacuum Einstein field
equations exists in the region where~$v\in I$
and~$0\leq u\leq \varepsilon$ defined by the null
coordinates~$(u,v)$. The number~$\varepsilon$ can be chosen to depend
only on~$I$, and the initial data
 \begin{align*}
\Delta_{e_\star}&\equiv\sup_{\mathcal{N}_\star,\mathcal{N}'_\star}\sup_{U_{\alpha}}
\left(|Q|,|Q^{-1}|,|C^{\mathcal{A}}|,|P^{\mathcal{A}}|\right), \\
  \Delta_{\Gamma_\star} &\equiv\sup_{\mathcal{S}_{u,v}
    \subset\mathcal{N}_\star,\mathcal{N}'_\star}
  \sup_{\Gamma\in\{\mu,\lambda,\rho,\sigma,\pi,\tau,\omega,\chi\}}
  \max\{1,||\Gamma||_{L^{\infty}(\mathcal{S}_{u,v})},
  \sum_{i=0}^1||\mathcal{D}^i\Gamma||_{L^4(\mathcal{S}_{u,v})},
  \sum_{i=0}^2||\mathcal{D}^i\Gamma||_{L^2(\mathcal{S}_{u,v})}\},\\
  \Delta_{\Psi_\star}&\equiv\sup_{\mathcal{S}_{u,v}\subset\mathcal{N}_\star,\mathcal{N}'_\star}
  \sup_{\Psi\in\{\Psi_0,\Psi_1,\Psi_2,\Psi_3,\Psi_4\}}
  \max\{1,\sum_{i=0}^1||\mathcal{D}^i\Psi||_{L^4(\mathcal{S}_{u,v})},
  \sum_{i=0}^2||\mathcal{D}^i\Psi||_{L^2(\mathcal{S}_{u,v})}\} \\
  &+\sum_{i=0}^3\sup_{\Psi\in\{\Psi_0,\Psi_1,\Psi_2,\Psi_3\}}
  ||\mathcal{D}^i\Psi||_{L^2(\mathcal{N}_\star)}
  +\sup_{\Psi\in\{\Psi_1,\Psi_2,\Psi_3,\Psi_4\}}||\mathcal{D}^i\Psi||_{L^2(\mathcal{N}'_\star)}.
\end{align*}
Furthermore, in this existence domain one has that
  \begin{align*}
  &\sup_{u,v}\sup_{\Gamma\in\{\mu,\lambda,\rho,\sigma,\omega,\chi,\tau,\pi\}}
    \max\bigg\{\sum_{i=0}^1
    ||\mathcal{D}^i\Gamma||_{L^{\infty}(\mathcal{S}_{u,v})},\sum_{i=0}^2
  ||\mathcal{D}^i\Gamma||_{L^4(\mathcal{S}_{u,v})},
    \sum_{i=0}^3||\mathcal{D}^i\Gamma||_{L^2(\mathcal{S}_{u,v})}\bigg\}\\
  &+\sum_{i=0}^3\sup_{\Psi\in\{\Psi_0,\Psi_1,\Psi_2,\Psi_3\}}\sup_u
  ||\mathcal{D}^i\Psi||_{L^2(\mathcal{N}_u)}+\sup_{\Psi\in\{\Psi_1,\Psi_2,\Psi_3,\Psi_4\}}
  \sup_v||\mathcal{D}^i\Psi||_{L^2(\mathcal{N}'_v)}\leq C(I,\Delta_{e_\star},
  \Delta_{\Gamma_\star},\Delta_{\Psi_\star}).
\end{align*}
\end{theorem}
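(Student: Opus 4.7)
The plan is to transfer the existence proof from~\cite{HilValZha19} into the T-weight formalism without re-doing the analytic heart of the argument. The key observation is that the norms defined in~\eqref{T-weightL2Norm}--\eqref{T-weightInftyNorm} exactly match the tensor norms $\|\nablasl^k T(f)\|_{L^p(\mathcal{S})}$ whose estimates are the central object of the original proof. Since $|\mathcal{D}^k f|^2=|\nablasl^k T(f)|^2$ by construction, any bootstrap bound established for tensors in~\cite{HilValZha19} automatically yields the corresponding T-weight estimate. Hence the proof reduces to a careful translation, of the evolution/constraint equations and of the differential operators, into the T-weight language, rather than a genuinely new PDE argument.

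I would proceed in the following order. First, rewrite the vacuum structure equations, the Bianchi identities, and the frame equations~\eqref{framecoefficient1}--\eqref{framecoefficient6} in the T-weight formalism, using the operators $\mthorn,\mthorn',\meth,\meth'$ from~\eqref{DefT-weightthorn}--\eqref{DefT-weighteth'}; the combinations $\epsilon-\bar\epsilon$, $\gamma-\bar\gamma$, $\beta-\bar\alpha$ which obstructed working directly with tensors get absorbed into the new operators, so that $\mthorn f,\mthorn' f$ represent $\mathop{D}\limits_{\gets}T(f),\mathop{\Delta}\limits_{\gets}T(f)$ and mixed strings of $\meth,\meth'$ give the components of $\nablasl^k T(f)$. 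Second, verify that the reduced initial data in Lemma~\ref{Lemma:FreeDataCIVP} indeed determines all the T-weighted quantities on $\mathcal{N}_\star\cup\mathcal{N}_\star'$ through a hierarchy of transport equations, exactly as in the hierarchical construction in Lemma~\ref{Lemma2}. Third, reproduce the bootstrap argument of~\cite{HilValZha19}: transport estimates along $\bml$ and $\bmn$ for connection coefficients via the $\mthorn,\mthorn'$-equations (as in the model calculation in Section~\ref{T-weightPDEanalysis}), angular control via the elliptic/Sobolev inputs~\eqref{SobolevLp1}--\eqref{SobolevLinfty1}, and $L^2$ energy estimates for $\Psi_0,\Psi_1,\Psi_2,\Psi_3,\Psi_4$ obtained from Hodge-type pairings of the Bianchi system, using Lemma~\ref{IntegralbyPartT-weight} to integrate by parts. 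Closing the bootstrap yields the smallness of $\varepsilon$ in terms of the data quantities $\Delta_{e_\star},\Delta_{\Gamma_\star},\Delta_{\Psi_\star}$ and the length of the interval~$I$.

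Fourth, the uniqueness and smoothness statements are inherited: uniqueness follows by applying the same argument to the difference of two solutions and exploiting the homogeneous version of the symmetric hyperbolic system for the zero-quantities $\mathscr{N}=0$, while higher regularity follows by commuting further $\meth,\meth',\mthorn,\mthorn'$-derivatives through the system using the T-weight commutator relations (which produce only lower-order terms already controlled by the bootstrap). The final a~priori bound stated in the theorem then results from reading off the bootstrap norms at the end of the argument.

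The main obstacle I anticipate is bookkeeping rather than conceptual: one must check that every term arising when the NP operators $D,\Delta,\delta,\bar\delta$ are replaced by $\mthorn,\mthorn',\meth,\meth'$ is either harmless (it is the same operator acting on a zero-T-weight object) or is exactly compensated by the shift $(\beta-\bar\alpha),(\epsilon-\bar\epsilon),(\gamma-\bar\gamma)$ hidden in the new operators, so that no new curvature-strength terms appear in the equations. A second delicate point is verifying that integration by parts used in the energy estimates for the Bianchi pairs respects the T-weight constraints $s(f)+s(g)\pm 1=0$ required in Lemma~\ref{IntegralbyPartT-weight}; this should be automatic from the structure of the Bianchi system, but it needs to be checked equation by equation before the estimates of~\cite{HilValZha19} can be invoked verbatim.
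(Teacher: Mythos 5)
Your proposal follows essentially the same route as the paper: the authors do not give a new proof but state that the theorem is obtained by directly translating the local existence result of~\cite{HilValZha19} into the T-weight formalism, replacing the NP connection coefficients and operators by their T-weighted counterparts, exactly as you outline (including the caveat about the non-T-weighted combinations $\beta-\bar\alpha$, $\epsilon-\bar\epsilon$, $\gamma-\bar\gamma$, which the paper handles by a choice of coordinates and auxiliary ODEs). Your sketch is a more detailed elaboration of the same translation argument, and is consistent with the paper's treatment.
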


\begin{remark}
  These three quantities~$ \beta-\bar\alpha$, $\epsilon-\bar\epsilon$,
  $\gamma-\bar\gamma$ depend on the choice of frame on~$\mathcal{S}$
  and thus are not T-weight quantities. For these three we need to
  choose an appropriate coordinate system to make sure that their
  initial values do not blow up.  Then we can solve the ODE
  systems~(\eqref{MasslessScalarstructureeq1},
  \eqref{MasslessScalarstructureeq4}
  and~\eqref{MasslessScalarstructureeq11}) for these three quantities
  in the target area.
\end{remark}

\subsection{The Einstein-massless scalar field system in the T-weight
  formalism}
\label{ESstGauge}

The use of Stewart's gauge as discussed in
Section~\ref{Subsection:StewartGauge} together with the T-weight
formalism gives rise to a simpler version of the Einstein-scalar field
equations.

As already discussed, in Stewart's gauge one has that
\begin{align*}
  &\kappa=0,\quad \nu=0,\quad \epsilon=0, \\
  &\bar\rho=\rho,\quad \bar\mu=\mu,\quad \vartheta=\tau.
\end{align*}

It follows then that the equations for the scalar field take the form
\begin{subequations}
\begin{align}
\mthorn'\varphi_{0}-\meth\bar\varphi_{1}&=(\ulomega-\mu)\varphi_{0}
+\rho\varphi_{2}-\bar\tau\varphi_{1}-\tau\bar\varphi_{1}, \label{EOMMasslessScalarStT-weight1} \\
\mthorn\bar\varphi_{1}-\meth'\varphi_{0}&=(\pi-\bar\tau)\varphi_{0}
+\bar\sigma\varphi_{1}+\rho\bar\varphi_{1}, \label{EOMMasslessScalarStT-weight2} \\
\mthorn'\varphi_{1}-\meth\varphi_{2}&=
-\mu\varphi_{1}-\bar\lambda\bar\varphi_{1}, \label{EOMMasslessScalarStT-weight3} \\
\mthorn\varphi_{2}-\meth'\varphi_{1}&=-\mu\varphi_{0}
+\rho\varphi_{2}+\pi\varphi_{1}+\bar\pi\bar\varphi_{1}, \label{EOMMasslessScalarStT-weight4} \\
\meth\bar\varphi_{1}-\meth'\varphi_{1}&=0. \label{EOMMasslessScalarStT-weight5} 
\end{align}
\end{subequations}
Taking into account the definition of renormalised components of the
Weyl curvature as given by
\begin{align*}
\tilde\Psi_1\equiv\Psi_1-3\varphi_{0}\varphi_{1}, \quad
\tilde\Psi_2\equiv\Psi_2-\varphi_{0}\varphi_{2}+\varphi_{1}\bar\varphi_{1},  \quad
\tilde\Psi_3\equiv\Psi_3-3\varphi_{2}\bar\varphi_{1},
\end{align*}
it follows that the structure equations are
\begin{subequations}
\begin{align}
\mthorn\rho&=\rho^2+\sigma\bar\sigma+3\varphi_{0}^2, \label{T-weightMasslessStructureStEq1}\\
\mthorn\sigma&=2\rho\sigma+\Psi_0, \label{T-weightMasslessStructureStEq2}\\
\mthorn\mu-\meth\pi&=\rho\mu+\sigma\lambda+\pi\bar\pi+\TiPsi_2, \label{T-weightMasslessStructureStEq3}\\
\mthorn\lambda-\meth'\pi&=\rho\lambda+\bar\sigma\mu+\pi^2+3\bar\varphi_1^2, \label{T-weightMasslessStructureStEq4}\\
\mthorn'\rho-\meth'\tau&=-\rho\mu-\sigma\lambda-\bar\tau\tau+\ulomega\rho
-\TiPsi_2, \label{T-weightMasslessStructureStEq5}\\
\mthorn'\sigma-\meth\tau&=-\mu\sigma-\bar\lambda\rho-\tau^2+\ulomega\sigma
-3\varphi_{1}^2, \label{T-weightMasslessStructureStEq6}\\
\mthorn'\mu&=-\mu^2-\lambda\bar\lambda-\ulomega\mu-3\varphi_{2}^2, \label{T-weightMasslessStructureStEq7}\\
\mthorn'\lambda&=-2\mu\lambda-\ulomega\lambda-\Psi_4, \label{T-weightMasslessStructureStEq8}\\
\mthorn\tau&=(\tau+\bar\pi)\rho+(\bar\tau+\pi)\sigma
+\TiPsi_1+6\varphi_0\varphi_1, \label{T-weightMasslessStructureStEq9}\\
\mthorn'\pi&=-(\pi+\bar\tau)\mu-(\bar\pi+\tau)\lambda
-\TiPsi_3-6\varphi_{2}\bar\varphi_{1},  \label{T-weightMasslessStructureStEq10}\\
\mthorn\ulomega&=2\tau\bar\tau+2\tau\pi+2\bar\tau\bar\pi
+\TiPsi_2+\bar\TiPsi_2+6\varphi_{0}\varphi_{2}, \label{T-weightMasslessStructureStEq11}\\
\mthorn'\tau-\meth\ulomega&=-2\bar\tau\bar\lambda-2\tau\mu
-\bar\TiPsi_3-6\varphi_{2}\varphi_{1}, \label{T-weightMasslessStructureStEq13}\\
\meth\rho-\meth'\sigma&=\tau\rho-\bar\tau\sigma-\TiPsi_1,  \label{T-weightMasslessStructureStEq14}\\
\meth\lambda-\meth'\mu&=\bar\tau\mu-\tau\lambda-\TiPsi_3. \label{T-weightMasslessStructureStEq15}
\end{align}
\end{subequations}
The Bianchi Identities take the form
\begin{subequations}
\begin{align}
\mthorn\TiPsi_1-\meth'\Psi_0&=-6\varphi_0\meth\varphi_0+(\pi-2\bar\tau)\Psi_0
+4\rho\TiPsi_1
 \nonumber\\
&+3\varphi_0^2(2\tau-\bar\pi)+6\varphi_0\varphi_1\rho-6\varphi_0\bar\varphi_1\sigma, \label{T-weightMasslessStBianchi1}\\
\mthorn'\Psi_0-\meth\TiPsi_1&=6\varphi_0\meth\varphi_1+(2\ulomega-\mu)\Psi_0
-5\tau\TiPsi_1+3\sigma\TiPsi_2
\nonumber \\
&+6\varphi_0\varphi_2\sigma-3\varphi_0^2\bar\lambda-3\varphi_1^2\rho
-6\varphi_1\bar\varphi_1\sigma-12\varphi_0\varphi_1\tau, \label{T-weightMasslessStBianchi2}\\
\mthorn\TiPsi_2-\meth'\TiPsi_1&=6\varphi_1\meth’\varphi_0-\lambda\Psi_0
+(2\pi-\bar\tau)\TiPsi_1+3\rho\TiPsi_2
 \nonumber\\
&+3\varphi_0^2\mu+6\varphi_0\varphi_1\pi+3\varphi_1^2\bar\sigma
-6\varphi_0\varphi_1\bar\tau, \label{T-weightMasslessStBianchi3}\\
\mthorn'\TiPsi_1-\meth\TiPsi_2&=-6\varphi_1\meth’\varphi_1
+(\ulomega-2\mu)\TiPsi_1-3\tau\TiPsi_2+2\sigma\TiPsi_3
\nonumber \\
&+6\varphi_2\bar\varphi_1\sigma+6\varphi_1\bar\varphi_1\tau
+3\varphi_1^2\bar\tau-6\varphi_2\varphi_1\rho, \label{T-weightMasslessStBianchi4}\\
\mthorn\TiPsi_3-\meth'\TiPsi_2&=-6\bar\varphi_1\meth\bar\varphi_1-2\lambda\TiPsi_1
+3\pi\TiPsi_2+2\rho\TiPsi_3 \nonumber \\
&-3\bar\varphi_1^2\bar\pi+6\varphi_0\bar\varphi_1\mu
-6\varphi_0\varphi_1\lambda-6\varphi_1\bar\varphi_1\pi, \label{T-weightMasslessStBianchi5} \\
\mthorn'\TiPsi_2-\meth\TiPsi_3&=6\bar\varphi_1\meth\varphi_2-3\mu\TiPsi_2
-\tau\TiPsi_3+\sigma\Psi_4-3\bar\varphi_1^2\bar\lambda-3\varphi_2^2\bar\rho,  \label{T-weightMasslessStBianchi6}\\
\mthorn\Psi_4-\meth'\TiPsi_3&=6\varphi_2\meth’\bar\varphi_1-3\lambda\TiPsi_2
+(4\pi+\bar\tau)\TiPsi_3+\rho\Psi_4
 \nonumber \\
&+12\varphi_2\bar\varphi_1\pi+6\varphi_1\bar\varphi_1\lambda+3\bar\varphi_1^2\mu
+3\varphi_2^2\bar\sigma-6\varphi_0\varphi_2\lambda, \label{T-weightMasslessStBianchi7}\\
\mthorn'\TiPsi_3-\meth\Psi_4&=-6\varphi_2\meth’\varphi_2
-(4\mu+\ulomega)\TiPsi_3+\tau\Psi_4
 \nonumber \\
&+6\varphi_2\varphi_1\lambda-6\varphi_2\bar\varphi_1\mu
-3\varphi_2^2\bar\tau. \label{T-weightMasslessStBianchi8}
\end{align}
\end{subequations}
Finally, we observe that the commutators are given by
\begin{subequations}
\begin{align}
(\mthorn\mthorn'-\mthorn'\mthorn)f&=s(\TiPsi_2-\bar\TiPsi_2+\pi\tau-\bar\pi\bar\tau)f 
-\ulomega\mthorn f+(\pi+\bar\tau)\meth f+(\bar\pi+\tau)\meth'f, \label{T-weightMasslessCommutatorSt1}\\
(\mthorn\meth-\meth\mthorn)f&=s(-\meth\rho+\meth'\sigma+\tau\rho-\bar\tau\sigma-\bar\pi\rho+\pi\sigma)f 
+(\bar\pi-\tau)\mthorn f+\rho\meth f+\sigma\meth'f \label{T-weightMasslessCommutatorSt2}\\
(\mthorn'\meth-\meth\mthorn')f&=s(-\meth'\bar\lambda+\meth\mu)f 
-\mu\meth f-\bar\lambda\meth'f, \label{T-weightMasslessCommutatorSt3}\\
(\mthorn\meth'-\meth'\mthorn)f&=s(\meth'\rho-\meth\bar\sigma-\bar\tau\rho+\tau\bar\sigma+\pi\rho-\bar\pi\bar\sigma)f 
+(\bar\tau-\pi)\mthorn f+\bar\sigma\meth f+\rho\meth'f, \label{T-weightMasslessCommutatorSt4}\\
(\mthorn'\meth'-\meth'\mthorn')f&=s(\meth\lambda-\meth'\mu)f 
-\lambda\meth f-\bar\mu\meth'f, \label{T-weightMasslessCommutatorSt5} \\
(\meth\meth'-\meth'\meth)f&=sKf, \label{T-weightMasslessCommutatorSt6}
\end{align}
\end{subequations}
where~$K$ is the \emph{Gaussian curvature} of the 2-surfaces
$\mathcal{S}$ ---namely,
\begin{align*}
K\equiv 2\mu\rho-\lambda\sigma-\bar\lambda\bar\sigma-\TiPsi_2-\bar\TiPsi_2+6\varphi_1\bar\varphi_1.
\end{align*}

\section{Preliminaries}
\label{Preliminary}

In this section we provide a first discussion of the general structure
and strategy for the analysis of the formation of trapped surfaces.

\subsection{Further ancillary fields}

We begin by observing that for Minkowski data
on~$\mathcal{N}'_{0}\equiv\mathcal{N}'_{\star}$ the two expansions are
given by
  \begin{align*}
&\theta_{\bml}(u,v=0)=-\rho-\bar\rho=-2\rho=-2\cdot\frac{1}{u-v}=-\frac{2}{u}; \\
&\theta_{\bmn}(u,v=0)=\mu+\bar\mu=2\mu=2\cdot\frac{1}{u-v}=\frac{2}{u},
  \end{align*}
  where we have made use of the null coordinates~$(u,v)$. For
  convenience, we define a new quantity 
 \begin{align*}
\tilde\mu\equiv\mu-\frac{1}{u},
\end{align*}
to analyse the behaviour of~$\mu$. This renormalized expansion
satisfies transport equations given by
\begin{subequations}
\begin{align}
D\tilde\mu-\rho\tilde\mu&=-\rho\tilde\mu+\mu\rho+\TiPsi_2+\lambda\sigma+\delta\pi+\pi\bar\pi-\bar\alpha\pi+\beta\pi \label{Dtildemu},\\
\Delta\tilde\mu+2\mu\tilde\mu&=\tilde\mu^2-(\gamma+\bar\gamma)\mu-\lambda\bar\lambda-3\varphi_{2}^2 \label{Deltatildemu}.
\end{align}
\end{subequations}
In the T-weight formalism these equations can be rewritten as
\begin{subequations}
\begin{align}
\mthorn\tilde\mu-\meth\pi-\rho\tilde\mu&=-\rho\tilde\mu+\mu\rho+\TiPsi_2+\lambda\sigma+\pi\bar\pi \label{Ttildemu},\\
\mthorn'\tilde\mu+2\mu\tilde\mu&=\tilde\mu^2-\ulomega\mu-\lambda\bar\lambda-3\varphi_{2}^2 \label{Deltatildemu}.
\end{align}
\end{subequations}

In view of the subsequent analysis it convenient to define the scalar
 \begin{align*}
\ulchi\equiv D\log Q,
\end{align*}
which, as it is the derivative of a component of frame, is at the same
level of connection coefficients. This scalar is the direct analogue
of~$\chi$ as defined in~\cite{HilValZha20} with a derivative in the
incoming direction. It provides a component of the connection which
does not arise in the original NP formalism, but is needed to obtain a
complete set of outgoing equations for the frame. From the definition
of~$\ulchi$ and the NP Ricci identities we readily obtain
\begin{align}
  \mthorn'\ulchi=2\tau\pi+2\bar\tau\bar\pi+2\pi\bar\pi+\TiPsi_2
  +\bar\TiPsi_2+6\varphi_0\varphi_2+\ulomega\ulchi.
\label{Tpulchi}
\end{align}
Now, as a consequence of the gauge choice~$Q=1$
on~$\mathcal{N}_{\star}$, the initial data for~$\ulchi$ is 0
on~$\mathcal{N}_{\star}$.

\begin{remark}
  {\em From the definitions we have that
\begin{align*}
  s(\Timu)=s(\ulchi)=0.
\end{align*}}
\end{remark}

\subsection{Signatures and scale-invariant norms}

Following the assumptions in~\cite{An2022} and~\cite{An202209}, we
assume throughout that on the initial outgoing null
slice~$\mathcal{N}_{\star}$, the connection coefficients satisfy
asymptotic behaviour
\begin{align*}
&|\rho|\lesssim\frac{1}{|u_{\infty}|}, \quad
|\sigma|\lesssim\frac{a^{\frac{1}{2}}}{|u_{\infty}|}, \quad
|\mu|\lesssim\frac{1}{|u_{\infty}|}, \quad
|\Timu|=|\mu+\frac{1}{|u_{\infty}|}|\lesssim\frac{1}{|u_{\infty}|^2}, \quad
|\lambda|\lesssim\frac{a^{\frac{1}{2}}}{|u_{\infty}|^2}, \\
&|\ulomega|\lesssim\frac{a}{|u_{\infty}|^3}, \quad
|\tau|\lesssim\frac{a^{\frac{1}{2}}}{|u_{\infty}|^2},\quad
|\pi|\lesssim\frac{a^{\frac{1}{2}}}{|u_{\infty}|^2},
\end{align*}
and that for the components of the Weyl tensor one has
\begin{align*}
|\Psi_0|\lesssim\frac{a^{\frac{1}{2}}}{|u_{\infty}|}, \quad
|\Psi_1|\lesssim\frac{a^{\frac{1}{2}}}{|u_{\infty}|^2}, \quad
|\Psi_2|\lesssim\frac{a}{|u_{\infty}|^3}, \quad
|\Psi_3|\lesssim\frac{a^{\frac{3}{2}}}{|u_{\infty}|^4}, \quad
|\Psi_4|\lesssim\frac{a^2}{|u_{\infty}|^5}.
\end{align*}

For the scalar field~$\varphi$ and the auxiliary field associated to
its derivative~$\varphi_a$, we assume that on~$\mathcal{N}_{\star}$
one has that
\begin{align*}
|\varphi|\lesssim\frac{a^{\frac{1}{2}}}{|u_{\infty}|}, \quad 
|\varphi_0|\lesssim\frac{a^{\frac{1}{2}}}{|u_{\infty}|}, \quad
|\varphi_2|\lesssim\frac{a^{\frac{1}{2}}}{|u_{\infty}|^2}, \quad
|\varphi_1|\lesssim\frac{a^{\frac{1}{2}}}{|u_{\infty}|^2}.
\end{align*}
From the above assumption on~$\varphi_a$ it follows that 
\begin{align*}
|\tilde\Psi_1|&\equiv|\Psi_1-3\varphi_{0}\varphi_{1}|\lesssim\frac{a^{\frac{1}{2}}}{|u_{\infty}|^2}, \\
|\tilde\Psi_2|&\equiv|\Psi_2-\varphi_{0}\varphi_{2}+\varphi_{1}\bar\varphi_{1}|\lesssim\frac{a}{|u_{\infty}|^3},  \\
|\tilde\Psi_3|&\equiv|\Psi_3-3\varphi_{2}\bar\varphi_{1}|\lesssim\frac{a^{\frac{3}{2}}}{|u_{\infty}|^4},
\end{align*}
which show that the renormalised Weyl components can be assumed to
keep the same behaviour as that of~$\Psi_1$, $\Psi_2$ and~$\Psi_3$,
respectively. The equivalence between~$\Psi$ and~$\TiPsi$ is given in
the main proof.

For future use is convenient to introduce another auxiliary field
---namely,
\begin{align*}
\Tivarphi_2\equiv\meth'\varphi_2+\mu\bar\varphi_1, 
\end{align*}
and assume on~$\mathcal{N}_{\star}$ that
\begin{align*}
  |\Tivarphi_2|\lesssim\frac{a^{\frac{3}{2}}}{|u_{\infty}|^4}.
\end{align*}

\begin{remark}
  \label{remark:etheth'decay}
  {\em Note that applying~$\meth$ or~$\meth'$ to our variables results
    in additional decay of~$1/|u|$ to the asymptotic decay so
    that~$\meth'\varphi_2$ decays
    like~$a^{\frac{1}{2}}/|u_{\infty}|^3$. The term~$\mu\bar\varphi_1$
    shares the same decay. The leading (negative) piece of~$\mu$
    eliminates the leading term of these two expressions. A more
    detailed discussions on the behaviour of the
    quantity~$\Tivarphi_2$ is given in Section~\ref{EnergyEstimate} on
    the energy estimates.}
\end{remark}

In our analysis we follow the basic design in An's work~\cite{An2022}
for the definition of scale
invariant~$L^{\infty}_{sc}(\mathcal{S})$-norms with~$|u|$ and~$a$
weights such that
\begin{align*}
||\phi||_{L^{\infty}_{sc}(\mathcal{S})}\lesssim1.
\end{align*}
To this end associate a number (\emph{signature})~$s_2(f)$ to
characterise the decay rate of the various geometric quantities: \\

\noindent
\begin{tabular}{|c|c|c|c|c|c|c|c|c|c|c|c|c|c|c|c|c|}
\hline
~& $\Psi_0$ & $\Psi_1$ & $\Psi_2$ & $\Psi_3$ & $\Psi_4$ 
& $\rho,\sigma$ & $\omega$ & $\vartheta$ &$\tau$ &$\pi$ 
&$ \mu,\lambda, \tilde\mu$ & $\ulomega$ & $\ulchi$ 
&$\varphi,\varphi_{0}$ &$\varphi_{1}$&$\varphi_{2}$\\
\hline
$s_2$ &0&$1/2$ &1&$3/2$&2&0&0&$1/2$&$1/2$&$1/2$&1&1&0&0&$1/2$&1\\
\hline
\end{tabular}\\

The signature is required to satisfy
\begin{align*}
&s_2(\mthorn^i\mthorn'^j\{\meth,\meth'\}^k\phi)=s_2(\phi)+0\times i+1\times j+\tfrac{1}{2}\times k , \\
&s_2(\phi_1\cdot\phi_2)=s_2(\phi_1)+s_2(\phi_2).
\end{align*}
We make use of the signature to define the following
\emph{scale-invariant norms}:
 \begin{align*}
&||\phi||_{L^{\infty}_{sc}(\mathcal{S}_{u,v})}\equiv a^{-s_2(\phi)}|u|^{2s_2(\phi)+1}||\phi||_{L^{\infty}(\mathcal{S}_{u,v})}, \\
&||\phi||_{L^2_{sc}(\mathcal{S}_{u,v})}\equiv a^{-s_2(\phi)}|u|^{2s_2(\phi)}||\phi||_{L^2(\mathcal{S}_{u,v})},\\
&||\phi||_{L^1_{sc}(\mathcal{S}_{u,v})}\equiv a^{-s_2(\phi)}|u|^{2s_2(\phi)-1}||\phi||_{L^1(\mathcal{S}_{u,v})}.
 \end{align*}
 The above norms satisfy the H\"older inequalities
 \begin{align*}
||\phi_1\cdot\phi_2||_{L^2_{sc}(\mathcal{S}_{u,v})}&\leq\frac{1}{|u|}||\phi_1||_{L^{\infty}_{sc}(\mathcal{S}_{u,v})}||\phi_2||_{L^2_{sc}(\mathcal{S}_{u,v})},\\
||\phi_1\cdot\phi_2||_{L^1_{sc}(\mathcal{S}_{u,v})}&\leq\frac{1}{|u|}||\phi_1||_{L^{\infty}_{sc}(\mathcal{S}_{u,v})}||\phi_2||_{L^1_{sc}(\mathcal{S}_{u,v})},\\
||\phi_1\cdot\phi_2||_{L^1_{sc}(\mathcal{S}_{u,v})}&\leq\frac{1}{|u|}||\phi_1||_{L^2_{sc}(\mathcal{S}_{u,v})}||\phi_2||_{L^2_{sc}(\mathcal{S}_{u,v})}.
\end{align*}
Thus it follows that, making use of these norms, that if the norms
themselves are order~$1$, then the nonlinear terms can be treated as
lower order terms when~$|u|\gg1$. Finally, we define the following
scale-invariant norms along the lightcones:
\begin{align*}
||\phi||^2_{L^2_{sc}(\mathcal{N}_u(0,v))}&
\equiv\int_0^v||\phi||^2_{L^2_{sc}(\mathcal{S}_{u,v'})}\mathrm{d}v',\\
||\phi||^2_{L^2_{sc}(\mathcal{N}'_v(u_{\infty},u))}&
\equiv\int_{u_{\infty}}^u\frac{a}{|u'|^2}||\phi||^2_{L^2_{sc}(\mathcal{S}_{u',v})}\mathrm{d}u'.
\end{align*}

\begin{remark}
  In the analysis of main proof, from Section~\ref{L2estimate} to
  Section~\ref{EnergyEstimate}, we frequently use the H\"older
  inequalities to divide the $L^2_{sc}(\mathcal{S}_{u,v})$ of multiple
  terms to one $L^2_{sc}(\mathcal{S}_{u,v})$ norm and rest
  $L^{\infty}_{sc}(\mathcal{S}_{u,v})$ norms.  Take multiple terms
  $A\cdot B\cdot C$ as an example:
\begin{align*}
||A\cdot B\cdot C||_{L^2_{sc}(\mathcal{S}_{u,v})}\leq\frac{1}{|u|^2}
||A||_{L^2_{sc}(\mathcal{S}_{u,v})}||B||_{L^{\infty}_{sc}(\mathcal{S}_{u,v})}||C||_{L^{\infty}_{sc}(\mathcal{S}_{u,v})}.
\end{align*}
The total number of spherical derivatives in each such term is less or
equal to 11. The reason $A$ takes the $L^2$ norm is that $A$ is the
term containing the highest spherical derivative among $A$, $B$ and
$C$. According to the bootstrap assumptions and the field equations
themselves, no terms which are quadratic (or higher order) in the top
(11th) derivative occur. Therefore, for convenience we use
\begin{align*}
||A\cdot B\cdot C||_{L^2_{sc}(\mathcal{S}_{u,v})}\leq\frac{1}{|u|^2}
||A||\times||B||\times||C||
\end{align*}
in the discussion when there is no ambiguity.
\end{remark}

\subsection{Bootstrap norms}

In this section we introduce the norms that will be used to set up our
main bootstrap argument. In the following,
let~$\hat\Gamma\equiv \{\rho,\tau,\pi,\ulomega,\ulchi\}$,
$\hat\Psi=\{\TiPsi_1,\TiPsi_2,\TiPsi_3,\Psi_4\}$. For~$0\leq i\leq6$,
we define
 \begin{align*}
\Gamma_{i,\infty}(u,v)&\equiv\frac{1}{a^{\frac{1}{2}}}||(a^{\frac{1}{2}}\mathcal{D})^i\sigma||_{L^{\infty}_{sc}(\mathcal{S}_{u,v})}
+||(a^{\frac{1}{2}}\mathcal{D})^i\hat\Gamma||_{L^{\infty}_{sc}(\mathcal{S}_{u,v})}
+\frac{a^{\frac{1}{2}}}{|u|}||(a^{\frac{1}{2}}\mathcal{D})^i\lambda||_{L^{\infty}_{sc}(\mathcal{S}_{u,v})} \\
&+\frac{a}{|u|^2}||(a^{\frac{1}{2}}\mathcal{D})^i\mu||_{L^{\infty}_{sc}(\mathcal{S}_{u,v})}
+\frac{a}{|u|}||(a^{\frac{1}{2}}\mathcal{D})^i\tilde\mu||_{L^{\infty}_{sc}(\mathcal{S}_{u,v})},
\end{align*}
 \begin{align*}
\Psi_{i,\infty}(u,v)\equiv\frac{1}{a^{\frac{1}{2}}}||(a^{\frac{1}{2}}\mathcal{D})^i\Psi_0||_{L^{\infty}_{sc}(\mathcal{S}_{u,v})}
+||(a^{\frac{1}{2}}\mathcal{D})^i\hat\Psi||_{L^{\infty}_{sc}(\mathcal{S}_{u,v})},
\end{align*}
 \begin{align*}
\bmvarphi_{i,\infty}(u,v)\equiv\frac{1}{a^{\frac{1}{2}}}||(a^{\frac{1}{2}}\mathcal{D})^i\varphi_{0}||_{L^{\infty}_{sc}(\mathcal{S}_{u,v})}
+||(a^{\frac{1}{2}}\mathcal{D})^i\varphi_{1},\Tivarphi_2||_{L^{\infty}_{sc}(\mathcal{S}_{u,v})}
+\frac{a^{\frac{1}{2}}}{|u|}||(a^{\frac{1}{2}}\mathcal{D})^i\varphi_{2}||_{L^{\infty}_{sc}(\mathcal{S}_{u,v})}.
 \end{align*}
For~$0\leq i\leq10$, we define
 \begin{align*}
\Gamma_{i,2}(u,v)&\equiv
\frac{1}{a^{\frac{1}{2}}}||(a^{\frac{1}{2}}\mathcal{D})^i\sigma||_{L^2_{sc}(\mathcal{S}_{u,v})}
+||(a^{\frac{1}{2}}\mathcal{D})^i\hat\Gamma||_{L^2_{sc}(\mathcal{S}_{u,v})}
+\frac{a^{\frac{1}{2}}}{|u|}||(a^{\frac{1}{2}}\mathcal{D})^i\lambda||_{L^2_{sc}(\mathcal{S}_{u,v})} \\
&+\frac{a}{|u|^2}||(a^{\frac{1}{2}}\mathcal{D})^i\mu||_{L^2_{sc}(\mathcal{S}_{u,v})}
+\frac{a}{|u|}||(a^{\frac{1}{2}}\mathcal{D})^i\tilde\mu||_{L^2_{sc}(\mathcal{S}_{u,v})},
\end{align*}
\begin{align*}
\bmvarphi_{i,2}(u,v)\equiv\frac{1}{a^{\frac{1}{2}}}||(a^{\frac{1}{2}}\mathcal{D})^i\varphi_{0}||_{L^2_{sc}(\mathcal{S}_{u,v})}
+||(a^{\frac{1}{2}}\mathcal{D})^i\varphi_{1}||_{L^2_{sc}(\mathcal{S}_{u,v})}
+\frac{a^{\frac{1}{2}}}{|u|}||(a^{\frac{1}{2}}\mathcal{D})^i\varphi_{2}||_{L^{2}_{sc}(\mathcal{S}_{u,v})}.
\end{align*}
For~$0\leq i\leq9$, we define
\begin{align*}
\tilde{\bmvarphi}_{i,2}(u,v)&\equiv||(a^{\frac{1}{2}}\mathcal{D})^i\Tivarphi_2||_{L^2_{sc}(\mathcal{S}_{u,v})}, \\
\Psi_{i,2}(u,v)&\equiv\frac{1}{a^{\frac{1}{2}}}||(a^{\frac{1}{2}}\mathcal{D})^i\Psi_0||_{L^2_{sc}(\mathcal{S}_{u,v})}
+||(a^{\frac{1}{2}}\mathcal{D})^i\hat\Psi||_{L^2_{sc}(\mathcal{S}_{u,v})}.
\end{align*}
For~$0\leq i\leq10$, we define
\begin{align*}
\Psi_{i}(u,v)\equiv\frac{1}{a^{\frac{1}{2}}}||(a^{\frac{1}{2}}\mathcal{D})^i\Psi_0||_{L^2_{sc}(\mathcal{N}_u(0,v))}
+||(a^{\frac{1}{2}}\mathcal{D})^i\{\TiPsi_1,\TiPsi_2,\TiPsi_3\}||_{L^2_{sc}(\mathcal{N}_u(0,v))},
\end{align*}
\begin{align*}
\underline\Psi_{i}(u,v)\equiv\frac{1}{a^{\frac{1}{2}}}||(a^{\frac{1}{2}}\mathcal{D})^i\TiPsi_1||_{L^2_{sc}(\mathcal{N}'_v(u_{\infty},u))}
+||(a^{\frac{1}{2}}\mathcal{D})^i\{\TiPsi_2,\TiPsi_3,\Psi_4\}||_{L^2_{sc}(\mathcal{N}'_v(u_{\infty},u))}.
\end{align*}
For~$0\leq i\leq11$, we define
\begin{align*}
\bmvarphi_{i}(u,v)\equiv\frac{1}{a^{\frac{1}{2}}}||(a^{\frac{1}{2}})^{i-1}\mathcal{D}^i\varphi_{0}||_{L^2_{sc}(\mathcal{N}_u(0,v))}
+||(a^{\frac{1}{2}})^{i-1}\mathcal{D}^i\varphi_{1}||_{L^2_{sc}(\mathcal{N}_u(0,v))},
\end{align*}
\begin{align*}
\underline\bmvarphi_{i}(u,v)\equiv\frac{1}{a^{\frac{1}{2}}}||(a^{\frac{1}{2}})^{i-1}\mathcal{D}^i\varphi_{1}||_{L^2_{sc}(\mathcal{N}'_v(u_{\infty},u))}
+||\frac{a^{\frac{1}{2}}}{|u'|}(a^{\frac{1}{2}})^{i-1}\mathcal{D}^i\varphi_{2}||_{L^2_{sc}(\mathcal{N}'_v(u_{\infty},u))},
\end{align*}
where 
\begin{align*}
||\frac{a^{\frac{1}{2}}}{|u'|}(a^{\frac{1}{2}})^{i-1}\mathcal{D}^i\varphi_{2}||^2_{L^2_{sc}(\mathcal{N}'_v(u_{\infty},u))}
\equiv\int_{u_{\infty}}^u\frac{a^2}{|u'|^4}
||(a^{\frac{1}{2}})^{i-1}\mathcal{D}^i\varphi_{2}||^2_{L^2_{sc}(\mathcal{S}_{u',v})}.
\end{align*}
Finally, for~$0\leq i\leq10$, we define
\begin{align*}
\underline{\tilde{\bmvarphi}}_{i}(u,v)\equiv
||(a^{\frac{1}{2}}\mathcal{D})^i\Tivarphi_{2}||_{L^2_{sc}(\mathcal{N}'_v(u_{\infty},u))}.
\end{align*}

\begin{remark}
  {\em From the choice of signature for~$\TiPsi$ and~$\Psi$, it
    follows that the norm of~$\TiPsi$ is equivalent to that of~$\Psi$.}
  \end{remark}

  In the following we denote by~$\Gamma_{i,\infty}$, $\Gamma_{i,2}$,
  $\Psi_{i,\infty}$, $\Psi_{i,2}$, $\Psi_{i}$, $\underline\Psi_{i}$,
  $\bmvarphi_{i,\infty}$, $\bmvarphi_{i,2}$, $\bmvarphi_{i}$ and
  $\underline\bmvarphi_{i}$ the supremum over~$u$ and~$v$ of the
  aforementioned bootstrap norms. We define, moreover
\begin{align*}
  &\bmGamma\equiv\sum_{i\leq6}(\Gamma_{i,\infty}+\Psi_{i,\infty}+\bmvarphi_{i,\infty})
    +\sum_{i\leq10}(\Gamma_{i,2}+\bmvarphi_{i,2})+\sum_{i\leq9}(\Psi_{i,2}+\tilde{\bmvarphi}_{i,2}),\\
&\bm{\Psi}\equiv\sum_{i\leq10}(\Psi_{i}+\underline\Psi_{i}), \quad
\bmvarphi\equiv\sum_{i\leq11}(\bmvarphi_{i}+\underline\bmvarphi_{i})
+\sum_{i\leq10}\underline{\tilde{\bmvarphi}}_{i}.
\end{align*}
and write~$\bmGamma_0$, $\bm\Psi_0$ and~$\bmvarphi_0$ for the norms on
the initial hypersurfaces.
 
We make use of $\Gamma(\Gamma,\varphi_i,\Psi)_k$ to 
denote the k-th derivative of $L^2$ norm of $\Gamma$, $\varphi_i$ and $\Psi$ 
which shows up on the right hand side in previous definition. 
Take $\sigma$ as an example:
\begin{align*}
\Gamma(\sigma)_k\equiv\frac{1}{a^{\frac{1}{2}}}
||(a^{\frac{1}{2}}\mathcal{D})^i\sigma||_{L^2_{sc}(\mathcal{S}_{u,v})}
\end{align*}
Follow the same spirit we make use of $\bm{\Gamma}[\Gamma]$ to denote
all the norms related to $\Gamma$.  For $\sigma$ we define
\begin{align*}
\bm{\Gamma}(\sigma)\equiv\sum_{i\leq10}\Gamma(\sigma)_i
\end{align*}

Moreover, for~$\varphi_i$, we use~$\bm{\varphi}[\varphi_i]_k$
and~$\underline{\bm{\varphi}}[\varphi_i]_k$ to denote the norm of
the~$k$-th derivative of~$\varphi_i$ along null
hypersurfaces~$\mathcal{N}_{u}$ and~$\mathcal{N}'_v$ respectively.
For example:
\begin{align*}
\bm{\varphi}[\varphi_0]_k&\equiv
\frac{1}{a^{\frac{1}{2}}}||(a^{\frac{1}{2}})^{i-1}\mathcal{D}^i\varphi_{0}||_{L^2_{sc}(\mathcal{N}_u(0,v))}, \\
\underline{\bm{\varphi}}[\varphi_2]_k&\equiv
||\frac{a^{\frac{1}{2}}}{|u'|}(a^{\frac{1}{2}})^{i-1}\mathcal{D}^i\varphi_{2}||_{L^2_{sc}(\mathcal{N}'_v(u_{\infty},u))}.
\end{align*}
We then write~$\bm{\varphi}[\varphi_i]$ to denote the sum
of~$\bm{\varphi}[\varphi_i]_k$ up to order
11. By~$\underline{\bm{\varphi}}[\varphi_i]$ we mean the sum
of~$\underline{\bm{\varphi}}[\varphi_i]_k$ up to order 11.

Similarly, we use~$\bm{\Psi}[\Psi_i]_k$
and~$\underline{\bm{\Psi}}[\Psi_i]_k$ to denote the norm of the~$k$-th
derivative of~$\Psi_i$ along null hypersurfaces~$\mathcal{N}_{u}$
and~$\mathcal{N}'_v$ respectively, and likewise
use~$\bm{\Psi}[\Psi_i]$ to denote the sum of~$\bm{\Psi}[\Psi_i]_k$ up
to 10th order. By~$\underline{\bm{\Psi}}[\Psi_i]$ we mean the sum
of~$\underline{\bm{\Psi}}[\Psi_i]_k$ up to 10th order.

We also make use of the \emph{initial data
  quantity}
\begin{align*}
\mathcal{I}_0\equiv\sup_{0\leq v\leq1}\mathcal{I}_0(v), 
\end{align*}
where
\begin{align*}
\mathcal{I}_0(v)\equiv\sum_{j=0}^1\sum_{i=0}^{15}\frac{1}{a^{\frac{1}{2}}}
||\mthorn^j(|u_{\infty}|\mathcal{D})^i(\sigma,\varphi_0)||_{L^{2}(\mathcal{S}_{u_{\infty},v})}.
\end{align*}

\smallskip
\noindent
\textbf{Notation.} In the rest of the article, for simplicity of the
presentation, we use~$A\lesssim B$ to denote that there exist a
constant~$C>0$ which is independent of~$a$, such that~$A\leq CB$. We
do not distinguish between~$\lesssim$ and~$\leq$ when there is no
source of ambiguity. In addition, we write~$\meth^{i_1}\Gamma^{i_2}$
to denote~$\meth^{j_1}\Gamma\meth^{j_2}\Gamma...\meth^{j_{i_2}}\Gamma$
where~$i_1\geq0$, $i_2\geq1$, $j_1$, $j_2$, ...,
$j_{i_2}\in\mathbb{N}$ and~$j_1+j_2+...+j_{i_2}=i_1$.

\subsection{Strategy of the bootstrap argument}
\label{BootstrapAssumption}

The main theorem in this article relies on an \emph{a priori} estimate
for the nonlinear Einstein-scalar system. In order to derive uniform
upper bounds on~$\bmGamma$, $\bm\Psi$ and~$\bmvarphi$, we make use of
a bootstrap argument. From the analysis of the initial data for the
CIVP, $\bmGamma_0$, $\bm\Psi_0$ and~$\bmvarphi_0$ we readily have the
bound
\begin{align}
\bmGamma_0+\bm\Psi_0+\bmvarphi_0\lesssim\mathcal{I}_0. \label{InitialData}
\end{align}
The proof of our main theorem will show that in the area 
\begin{align*}
\mathbb{D}=\big\{(u,v)|u_{\infty}\leq u\leq -a/4, \ \ 0\leq v\leq 1\big\}
\end{align*}
the inequality
\begin{align}
  \bmGamma(u,v)+\bm\Psi(u,v)+\bmvarphi(u,v)\lesssim(\mathcal{I}_0)^2
  +\mathcal{I}_0+1, \label{Conclusion}
\end{align}
holds. In turn, the above uniform bound leads to a last slice argument
from which it follows that a solution to Einstein-scalar system exists
within the area~$\mathbb{D}$.

In order to establish~\eqref{Conclusion}, we make the bootstrap
assumptions
\begin{align}
\bmGamma+\bm\Psi+\bmvarphi\leq\mathcal{O}. \label{Hypothesis}
\end{align}
for large~$\mathcal{O}$ such that
\begin{align*}
\mathcal{I}_{0}\ll\mathcal{O}
\end{align*}
but also
\begin{align*}
\mathcal{O}^{20}\leq a^{\frac{1}{16}}.
\end{align*}
Letting
\begin{align*}
  \bm{I}=\{u\,|\,u_{\infty}\leq u\leq -a/4,\ \eqref{Hypothesis} \
  \textrm{holds for every} \ 0\leq v\leq 1\},\nonumber
\end{align*}
our goal becomes to show that, in fact,
$\bm{I}=[u_{\infty}\leq u\leq -a/4]$. We will follow the
\emph{standard bootstrap principle} (or \emph{continuity method}) to
derive this conclusion ---i.e, showing that~$\bm{I}$ is both open and
closed.
 
Our argument is naturally divided into the following steps:

\smallskip
\noindent
\textbf{Step 1}: From the assumptions on the initial data, Condition
\eqref{InitialData}, and the local existence for the Einstein-scalar
system, the existence area can always be slightly extended
in~$u$. Then there exists a small~$\varepsilon$ such that for
$u_{\infty}\leq u\leq u_{\infty}+\varepsilon$ we have
\begin{align*}
\bmGamma+\bm\Psi+\bmvarphi\lesssim2\mathcal{I}_0\ll\mathcal{O}.
\end{align*}
This implies
that~$[u_{\infty}\leq u\leq u_{\infty}+\varepsilon]\subseteq\bm{I}$,
and it follows that~$\bm{I}$ is a closed set.

\smallskip
\noindent
\textbf{Step 2}: In Sections 5, 6 and 7 we show the estimates
 \begin{align*}
\bmGamma(u,v)&\lesssim\bm\Psi(u,v)^2+\bm\Psi(u,v)+\bm\varphi(u,v)^2+\bm\varphi(u,v)+\mathcal{I}_0+1, \\
\bm\varphi(u,v)&\lesssim\mathcal{I}_0+1, \quad
\bm\Psi(u,v)\lesssim\mathcal{I}_0+1.
\end{align*}
The above estimates improve the upper bounds in the bootstrap
assumption~\eqref{Hypothesis}. Accordingly, $\bm{I}$ can be extended a
bit forward along~$u$. This implies that~$\bm{I}$ is open.

\smallskip
\noindent
\textbf{Step 3}: Using local existence and basic topology, it follows
from Steps 1 and 2 that the only possibility for the set~$\bm{I}$ to
be both open and closed in~$[u_{\infty}\leq u\leq -a/4]$ is that it is
the whole set. We conclude
that~$\bm{I}\equiv[u_{\infty}\leq u\leq -a/4]$ and, moreover, that the
uniform bound in~\eqref{Conclusion} holds.

\subsection{Estimates for the components of the frame}

In order to kick-start the construction of our main estimates we need
first some basic estimates for the components of the frame in terms of
the basic bootstrap assumptions~\eqref{Hypothesis}.

\smallskip
\noindent
\textbf{Step 1.}  Integrating the definition of~$\ulchi$ in
the~$v$-direction and making use of the bootstrap
assumption~\eqref{Hypothesis} for~$\underline\chi$, one has that
\begin{align*}
   |Q-1|=\left|\int_0^1\ulchi\mathrm{d}v\right|\leq\int_0^1
   ||\ulchi||_{L^{\infty}(\mathcal{S}_{u,v'})}\mathrm{d}v'
   =\int_0^1\frac{1}{|u|}||\underline\chi||_{L^{\infty}_{sc}(\mathcal{S}_{u,v'})}\mathrm{d}v'
   \leq\frac{\mathcal{O}}{|u|}.
\end{align*}
In the above estimate it has been used that~$Q=1$
on~$\mathcal{N}'_{0}\equiv\mathcal{N}'_{\star}$ where Minkowski data
is given. From the assumption that~$a$ is suitably larger
than~$\mathcal{O}$, one has
that~$||Q,Q^{-1}||_{L^{\infty}(\mathcal{S}_{u,v'})}$ are close to~$1$.

\smallskip
\noindent
\textbf{Step 2.} Making use of the estimate
\eqref{DerivativeTwoSphere1}
\begin{align*}
\frac{\mathrm{d}}{\mathrm{d}v}\int_{\mathcal{S}_{u,v}}
\phi=\int_{\mathcal{S}_{u,v}}Q^{-1}\left(D\phi-(\rho+\bar\rho)\phi\right)
\end{align*}
and letting~$\phi=1$ one concludes that
\begin{align*}
\mbox{Area}(S_{u,v})-\mbox{Area}(S_{u,0})\lesssim
\int_0^v\int_{\mathcal{S}_{u,v}}2Q^{-1}||\rho||_{L^{\infty}(\mathcal{S}_{u,v'})}.
\end{align*}
Now, making use of the bootstrap assumption
of~$||\rho||_{L^{\infty}(\mathcal{S}_{u,v})}\leq\mathcal{O}/|u|$ and
that~$\mbox{Area}(S_{u,0})=2\pi u^2$ we obtain
\begin{align*}
\mbox{Area}(S_{u,v})-2\pi u^2\lesssim
\int_0^v\frac{1}{|u|^2}\mbox{Area}(S_{u,v'}).
\end{align*}
Finally, making use of the Gr\"onwall inequality we find that
\begin{align*}
|\mbox{Area}(S_{u,v})-2\pi u^2|\lesssim\frac{\mathcal{O}}{|u|}.
\end{align*}

\subsection{Gr\"onwall-type estimates}

One of the basic tools in our analysis will be a Gr\"onwall-type
estimate on transport equations. We begin by noticing the following:

\begin{proposition}
Under the bootstrap assumption~\eqref{Hypothesis}, we have that  
\begin{align*}
  &||\phi||_{L^2(\mathcal{S}_{u,v})}\lesssim 
  ||\phi||_{L^2(\mathcal{S}_{u,0})}+\int_0^v
  ||\mthorn\phi||_{L^2(\mathcal{S}_{u,v'})}\mathrm{d}v' , \\
  &||\phi||_{L^2(\mathcal{S}_{u,v})}\lesssim
  ||\phi||_{L^2(\mathcal{S}_{u_0,v})}+
  \int_{u_0}^u||\mthorn'\phi||_{L^2(\mathcal{S}_{u',v})}\mathrm{d}u'.
\end{align*}
\end{proposition}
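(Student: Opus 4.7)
The proof is a direct consequence of the two transport identities \eqref{DerivativeTwoSphere1}--\eqref{DerivativeTwoSphere2} derived for the Stewart gauge in Section~\ref{T-weightPDEanalysis} combined with a classical Grönwall argument using the bootstrap bounds on $\rho$, $\mu$, together with the frame estimates $\|Q\|_{L^\infty}, \|Q^{-1}\|_{L^\infty} \lesssim 1$ established in the previous subsection. I will carry out the $v$-estimate first and then indicate how the $u$-estimate is an almost verbatim repetition.

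For the first inequality, apply the identity \eqref{DerivativeTwoSphere1} to $\phi\bar\phi$. Since $\phi\bar\phi$ is a real scalar of zero T-weight, $D(\phi\bar\phi)=\phi\,\mthorn\bar\phi+\bar\phi\,\mthorn\phi$ (the Stewart gauge condition $\epsilon=0$ from Lemma~\ref{Lemma1} ensures that $\mthorn$ and $D$ act identically here), so
\begin{align*}
\frac{\mathrm{d}}{\mathrm{d}v}\int_{\mathcal{S}_{u,v}}|\phi|^2
=\int_{\mathcal{S}_{u,v}}Q^{-1}\bigl(\phi\,\mthorn\bar\phi+\bar\phi\,\mthorn\phi-2\rho|\phi|^2\bigr).
\end{align*}
An application of Cauchy--Schwarz on the sphere followed by division by $2\|\phi\|_{L^2(\mathcal{S}_{u,v})}$ (away from the zero set, with the standard approximation argument at zeros) yields the pointwise ODI
\begin{align*}
\frac{\mathrm{d}}{\mathrm{d}v}\|\phi\|_{L^2(\mathcal{S}_{u,v})}
\leq \|Q^{-1}\|_{L^{\infty}(\mathcal{S}_{u,v})}\bigl(\|\mthorn\phi\|_{L^2(\mathcal{S}_{u,v})}
+\|\rho\|_{L^{\infty}(\mathcal{S}_{u,v})}\|\phi\|_{L^2(\mathcal{S}_{u,v})}\bigr),
\end{align*}
which is exactly the estimate already noted at the end of Section~\ref{T-weightPDEanalysis} (for general T-weighted quantities, not just $\meth^k f$). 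Integrating in $v$ from $0$ to $v$ and applying Grönwall's inequality gives
\begin{align*}
\|\phi\|_{L^2(\mathcal{S}_{u,v})}
\leq e^{\int_0^v \|Q^{-1}\rho\|_{L^\infty}\,\mathrm{d}v'}
\Bigl(\|\phi\|_{L^2(\mathcal{S}_{u,0})}+\int_0^v \|Q^{-1}\mthorn\phi\|_{L^2(\mathcal{S}_{u,v'})}\,\mathrm{d}v'\Bigr).
\end{align*}
The bootstrap assumption~\eqref{Hypothesis} gives $\|\rho\|_{L^\infty(\mathcal{S}_{u,v})}\lesssim \mathcal{O}/|u|$, while the frame estimates of the previous subsection give $\|Q^{-1}\|_{L^\infty}\lesssim 1$; since $v\in[0,1]$, the exponential prefactor is $\exp(C\mathcal{O}/|u|)\lesssim 1$ in the existence region $|u|\geq a/4 \gg \mathcal{O}$. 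This yields the first claimed inequality.

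The $u$-estimate follows by the same reasoning using \eqref{DerivativeTwoSphere2}, with two cosmetic differences: the operator on the right is $\Delta$ rather than $D$, but since $\phi\bar\phi$ has zero T-weight one has $\Delta(\phi\bar\phi)=\mthorn'(\phi\bar\phi)=\phi\,\mthorn'\bar\phi+\bar\phi\,\mthorn'\phi$; and the potential is now governed by $\|\mu\|_{L^\infty}\lesssim \mathcal{O}/|u|$ from \eqref{Hypothesis}. The main obstacle (and the only nontrivial point of the argument) is checking that the resulting Grönwall factor $\exp\bigl(C\int_{u_0}^u \|\mu\|_{L^\infty}\,\mathrm{d}u'\bigr)\lesssim \bigl(|u_0|/|u|\bigr)^{C\mathcal{O}}$ does not spoil the estimate. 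In the existence domain $\mathbb{D}$ the ratio $|u_0|/|u|$ is bounded by a constant depending only on the setup of the problem (since the relevant choices of $u_0$ in later applications are always taken either equal to $u_\infty$ and compared against $u\geq -a/4$, or much closer), so this factor is absorbed into the implicit constant in $\lesssim$. This completes the proof sketch.
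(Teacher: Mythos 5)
Your treatment of the $v$-direction is correct and coincides with the paper's: the transport identity \eqref{DerivativeTwoSphere1} applied to $|\phi|^2$, Cauchy--Schwarz, and Gr\"onwall, with the exponential factor controlled because $\int_0^1\|\rho\|_{L^\infty}\,\mathrm{d}v'\lesssim\mathcal{O}/|u|\ll1$ on the unit $v$-interval.

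The $u$-direction, however, contains a genuine gap, and it sits exactly at the point you flag as "the only nontrivial point of the argument." Your Gr\"onwall factor is $\exp\bigl(C\int_{u_\infty}^{u}\|\mu\|_{L^{\infty}}\,\mathrm{d}u'\bigr)$, and since $\|\mu\|_{L^\infty}\sim 1/|u'|$ (the bootstrap in fact gives $\mu=-1/|u'|+O(\mathcal{O}/|u'|^2)$, so $|\mu|$ is genuinely of size $1/|u'|$, not merely bounded by $\mathcal{O}/|u'|$), this factor is of order $(|u_\infty|/|u|)^{C}$. Your claim that this ratio "is bounded by a constant depending only on the setup" is false: in this problem $u_\infty$ lies in the asymptotic region near past null infinity and is \emph{not} controlled by $a$, while $|u|$ can be as small as $a/4$; the implicit constants in $\lesssim$ must be uniform in both $a$ and $u_\infty$, so $(|u_\infty|/|u|)^{C}$ cannot be absorbed. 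The paper's proof avoids Gr\"onwall in the $u$-direction entirely by exploiting the \emph{sign} of $\mu$: the bootstrap assumption $\|\Timu\|_{L^\infty}\leq\mathcal{O}/|u|^2$ with $\Timu=\mu+1/|u|$ gives
\begin{align*}
\mu\leq\frac{\mathcal{O}}{|u|^2}-\frac{1}{|u|}\leq0 ,
\end{align*}
so in the identity \eqref{DerivativeTwoSphere2} applied to $|\phi|^2$ the zeroth-order term $2\mu|\phi|^2$ is nonpositive and can simply be discarded, yielding directly $\frac{\mathrm{d}}{\mathrm{d}u}\|\phi\|_{L^2(\mathcal{S}_{u,v})}\leq\|\mthorn'\phi\|_{L^2(\mathcal{S}_{u,v})}$ and hence the second inequality with no exponential factor. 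You should replace your Gr\"onwall step in the $u$-direction by this sign argument.
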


\begin{proof}
  Let~$f\equiv |\phi|^2$. Making use of the discussion in
  Section~\ref{T-weightPDEanalysis} and the bootstrap assumption
  on~$\rho$, one has that
  \begin{align*}
    \frac{\mathrm{d}}{\mathrm{d}v}||\phi||_{L^2(\mathcal{S}_{u,v})}\leq ||\mthorn\phi||_{L^2(\mathcal{S}_{u,v})}
    +\frac{\mathcal{O}}{|u|}||\phi||_{L^2(\mathcal{S}_{u,v})}.
  \end{align*}
  Then, the Gr\"onwall inequality will give us that
  \begin{align*}
    ||\phi||_{L^2(\mathcal{S}_{u,v})}
    &
    \leq \exp\left(\int_0^1\frac{\mathcal{O}}{|u|}\mathrm{d}v'\right)
    \left(||\phi||_{L^2(\mathcal{S}_{u,0})}
    +\int_0^1||\mthorn\phi||_{L^2(\mathcal{S}_{u,v'})}\mathrm{d}v'\right) \\
    &\lesssim||\phi||_{L^2(\mathcal{S}_{u,0})}
    +\int_0^1||\mthorn\phi||_{L^2(\mathcal{S}_{u,v'})}\mathrm{d}v'.
\end{align*}

Along the~$u$ direction, making use of the bootstrap assumption
\begin{align*}
||\tilde\mu||_{L^{\infty}}\leq\frac{\mathcal{O}}{|u|^2}, 
\end{align*}
where~$\tilde\mu\equiv \mu+\displaystyle\frac{1}{|u|}$, we observe
that
\begin{align*}
-\frac{\mathcal{O}}{|u|^2}-\frac{1}{|u|}\leq\mu\leq\frac{\mathcal{O}}{|u|^2}-\frac{1}{|u|}\leq0.
\end{align*}
Hence, when~$f\equiv |\phi|^2$, using that~$Q\sim1$ we have
\begin{align*}
\frac{\mathrm{d}}{\mathrm{d}u}||\phi||_{L^2(\mathcal{S}_{u,v})}\leq ||\mthorn'\phi||_{L^2(\mathcal{S}_{u,v})},
\end{align*}
from which the desired result follows.
\end{proof}

We also have the following scale-invariant norm version of the last result: 
\begin{proposition}
  \begin{subequations}
\begin{align}
  &||\phi||_{L_{sc}^2(\mathcal{S}_{u,v})}\lesssim 
  ||\phi||_{L_{sc}^2(\mathcal{S}_{u,0})}+\int_0^v
  ||\mthorn\phi||_{L_{sc}^2(\mathcal{S}_{u,v'})}\mathrm{d}v' , \label{SCldirectiongronwall}\\
  &||\phi||_{L_{sc}^2(\mathcal{S}_{u,v})}\lesssim
  ||\phi||_{L_{sc}^2(\mathcal{S}_{0,v})}+
  \int_0^u\frac{a}{|u'|^2}||\mthorn'\phi||_{L_{sc}^2(\mathcal{S}_{u',v})}\mathrm{d}u'.
\end{align}
\end{subequations}
\end{proposition}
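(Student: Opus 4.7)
The two scale-invariant estimates follow from the non-scale-invariant Gr\"onwall-type estimates of the previous proposition by inserting the weights in the definition
\[
\|\phi\|_{L^2_{sc}(\mathcal{S}_{u,v})}\equiv a^{-s_2(\phi)}|u|^{2s_2(\phi)}\|\phi\|_{L^2(\mathcal{S}_{u,v})}
\]
and keeping track of the signature under $\mthorn$ and $\mthorn'$. The key book-keeping facts are $s_2(\mthorn\phi)=s_2(\phi)$ and $s_2(\mthorn'\phi)=s_2(\phi)+1$, together with the fact that on the domain $\mathbb{D}$ the coordinate $u$ is strictly negative and $|u|$ is smooth.

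\textbf{First inequality ($v$-direction).} Here $u$ is held fixed, so the weight $a^{-s_2(\phi)}|u|^{2s_2(\phi)}$ is a constant that I multiply through the first non-scale-invariant Gr\"onwall estimate. Since $\mthorn$ preserves signature, the right-hand side term $\|\mthorn\phi\|_{L^2(\mathcal{S}_{u,v'})}$ converts with exactly the same weight, and the inequality becomes the desired~\eqref{SCldirectiongronwall}. No further work is needed.

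\textbf{Second inequality ($u$-direction).} This is the step that requires actual computation. Starting from the bound $\tfrac{\mathrm{d}}{\mathrm{d}u}\|\phi\|_{L^2(\mathcal{S}_{u,v})}\leq \|\mthorn'\phi\|_{L^2(\mathcal{S}_{u,v})}$ established in the preceding proposition, I differentiate the scale-invariant norm by the product rule, using $\tfrac{\mathrm{d}|u|}{\mathrm{d}u}=-1$ (since $u<0$), to obtain
\[
\frac{\mathrm{d}}{\mathrm{d}u}\|\phi\|_{L^2_{sc}(\mathcal{S}_{u,v})}\leq -\frac{2s_2(\phi)}{|u|}\|\phi\|_{L^2_{sc}(\mathcal{S}_{u,v})}+a^{-s_2(\phi)}|u|^{2s_2(\phi)}\|\mthorn'\phi\|_{L^2(\mathcal{S}_{u,v})}.
\]
Since all signatures appearing in the Table are $\geq 0$, the first term on the right is non-positive and can be discarded. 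For the second term I reinsert the scale-invariant weight for $\mthorn'\phi$: from $s_2(\mthorn'\phi)=s_2(\phi)+1$ I get
\[
a^{-s_2(\phi)}|u|^{2s_2(\phi)}\|\mthorn'\phi\|_{L^2(\mathcal{S}_{u,v})}=\frac{a}{|u|^2}\|\mthorn'\phi\|_{L^2_{sc}(\mathcal{S}_{u,v})},
\]
which is precisely the weight appearing in the integral kernel. Integrating in $u$ from the left endpoint (where the boundary term $\|\phi\|_{L^2_{sc}(\mathcal{S}_{u_0,v})}$ is produced) up to $u$ gives the claimed estimate.

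\textbf{Main obstacle.} The only subtlety is the bookkeeping: one has to be careful about the sign of $\mathrm{d}|u|/\mathrm{d}u$ and to verify that the factor $a/|u|^2$ emerges from the signature shift $s_2(\mthorn'\phi)-s_2(\phi)=1$; once this is noticed, the result is immediate and no Gr\"onwall iteration is strictly required because the weight-derivative term has the favourable sign under our signature assignments. In the hypothetical case of a quantity with negative signature one would instead keep the $-2s_2/|u|$ term and apply a Gr\"onwall step, picking up an $O(1)$ constant from $\exp(\int_{u_0}^u 2|s_2|/|u'|\,\mathrm{d}u')$ that is harmless on $\mathbb{D}$.
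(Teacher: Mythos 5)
Your proof is correct and is essentially the argument the paper leaves implicit (the proposition is stated as the ``scale-invariant version'' of the preceding Gr\"onwall estimate with no proof given): the $v$-direction case is immediate because the weight is constant in $v$ and $s_2(\mthorn\phi)=s_2(\phi)$, while the $u$-direction case follows from the signature shift $s_2(\mthorn'\phi)=s_2(\phi)+1$ producing the kernel $a/|u'|^2$, with the weight-derivative term having a favourable sign since all signatures in the table are non-negative. Your reading of the lower limit as $u_\infty$ (rather than the paper's apparent typo ``$0$'') is also the correct one.
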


In the following we focus attention on the transport equations for the
components of the connection in the incoming direction. Schematically,
these can be written as
\begin{align*}
\mthorn'\Gamma= \lambda_0\mu\Gamma+F.
\end{align*}
In view of the important role to be played by the bootstrap assumption
on~$\mu$, we will require more precise estimates. In particular, we
have the following result:

\begin{proposition}
  For transport equations of the form
\begin{align*}
\mthorn'\Gamma=\lambda_0\mu\Gamma+F,
\end{align*}
one has the estimate
\begin{align*}
|u|^{\lambda_1}||\Gamma||_{L^2(\mathcal{S}_{u,v})}\lesssim 
|u_{\infty}|^{\lambda_1}||\Gamma||_{L^2(\mathcal{S}_{u_{\infty},v})}+
\int_{u_{\infty}}^u|u'|^{\lambda_1}||F||_{L^2(\mathcal{S}_{u',v})}\mathrm{d}u',
\end{align*}
where
\begin{align*}
\lambda_1\equiv -(\lambda_0+1).
\end{align*}
\end{proposition}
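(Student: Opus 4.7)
The plan is to adapt the $L^2$ energy identity used in the previous proposition, but now carrying a weight $|u|^{\lambda_1}$ chosen precisely to absorb the leading $-1/|u|$ piece of $\mu$. First I would apply the transport formula \eqref{DerivativeTwoSphere2} to the zero-T-weight scalar $\phi = |\Gamma|^2$, use the Leibnitz rule for $\mthorn'$ and the fact that $\mu = \bar\mu$ (Lemma~\ref{Lemma1}) to obtain
\begin{align*}
\frac{d}{du}\|\Gamma\|^2_{L^2(\mathcal{S}_{u,v})}
= \int_{\mathcal{S}_{u,v}} \bigl[\,2(\lambda_0+1)\mu\,|\Gamma|^2 + 2\,\mathrm{Re}(\bar\Gamma F)\,\bigr].
\end{align*}
The coefficient $\lambda_0+1$ (rather than the bare $\lambda_0$) comes from the area element contribution $(\mu+\bar\mu)\phi$ in \eqref{DerivativeTwoSphere2}, which is exactly why $\lambda_1$ is defined with the shift $+1$.

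Next I would split $\mu = -1/|u| + \Timu$ and compute $\tfrac{d}{du}|u|^{2\lambda_1} = -2\lambda_1 |u|^{2\lambda_1-1}$ (since $d|u|/du = -1$ in the region $u<0$). With $\lambda_1 = -(\lambda_0+1)$, the product-rule term from the weight cancels exactly the leading-order contribution of $\mu$, leaving
\begin{align*}
\frac{d}{du}\bigl[|u|^{2\lambda_1}\|\Gamma\|^2_{L^2(\mathcal{S}_{u,v})}\bigr]
= |u|^{2\lambda_1} \int_{\mathcal{S}_{u,v}}\bigl[-2\lambda_1\,\Timu\,|\Gamma|^2 + 2\,\mathrm{Re}(\bar\Gamma F)\bigr].
\end{align*}
I would then invoke the bootstrap bound $\|\Timu\|_{L^\infty(\mathcal{S}_{u,v})} \lesssim \mathcal{O}/|u|^2$ and apply Cauchy-Schwarz on the source term to conclude
\begin{align*}
\left|\frac{d}{du}\bigl[|u|^{2\lambda_1}\|\Gamma\|^2\bigr]\right| \lesssim \frac{\mathcal{O}}{|u|^2}\,|u|^{2\lambda_1}\|\Gamma\|^2 + 2\,|u|^{\lambda_1}\|\Gamma\|_{L^2}\cdot |u|^{\lambda_1}\|F\|_{L^2}.
\end{align*}

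Dividing through by $|u|^{\lambda_1}\|\Gamma\|_{L^2}$ (treating the zero-set via a standard regularisation/Duhamel argument if necessary) and applying the scalar Gr\"onwall inequality gives
\begin{align*}
|u|^{\lambda_1}\|\Gamma\|_{L^2(\mathcal{S}_{u,v})} \leq e^{\,C\int_{u_\infty}^u \mathcal{O}/|u'|^2\, du'}\!\left[\,|u_\infty|^{\lambda_1}\|\Gamma\|_{L^2(\mathcal{S}_{u_\infty,v})} + \int_{u_\infty}^u |u'|^{\lambda_1}\|F\|_{L^2(\mathcal{S}_{u',v})}\,du'\,\right].
\end{align*}
Since $\int_{u_\infty}^u |u'|^{-2}du' = 1/|u| - 1/|u_\infty| \leq 4/a$ and the bootstrap hypothesis ensures $\mathcal{O} \ll a$, the exponential factor is bounded by a universal constant and the claim follows.

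The main obstacle is the sharp cancellation at leading order: any error of size $C/|u|$ left uncancelled in the coefficient of $\|\Gamma\|^2$ would produce a Gr\"onwall factor of $|u/u_\infty|^C$, which in general cannot be absorbed. It is precisely the choice $\lambda_1 = -(\lambda_0+1)$ — with the $+1$ accounting for the $(\mu+\bar\mu)$ area term — that eliminates this leading contribution, leaving only the $\Timu$ perturbation of size $\mathcal{O}/|u|^2$, which is integrable uniformly on $\mathbb{D}$.
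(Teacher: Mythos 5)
Your proposal is correct and follows essentially the same route as the paper: weight $|\Gamma|^2$ by $|u|^{2\lambda_1}$, apply the transport identity \eqref{DerivativeTwoSphere2} so that the $(\mu+\bar\mu)$ area term shifts $\lambda_0$ to $\lambda_0+1$, observe that the choice $\lambda_1=-(\lambda_0+1)$ cancels the leading $-1/|u|$ part of $\mu$ leaving only $\Timu$, and close with Cauchy--Schwarz and Gr\"onwall using $\|\Timu\|_{L^\infty}\lesssim\mathcal{O}/|u|^2$. The only cosmetic difference is that the paper folds the weight into $f=|u|^{2\lambda_1}|\Gamma|^2$ from the outset rather than multiplying afterwards.
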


\begin{proof}
Let~$f\equiv |u|^{2\lambda_1}|\Gamma|^2$ and apply equation
\ref{DerivativeTwoSphere2}. The left-hand side of this equation is then given by 
\begin{align*}
\frac{\mathrm{d}}{\mathrm{d}u}\int_{\mathcal{S}_{u,v}}
f=\frac{\mathrm{d}}{\mathrm{d}u}|||u|^{\lambda_1}|\Gamma|||^2_{L^2(\mathcal{S}_{u,v})}=
2|||u|^{\lambda_1}|\Gamma|||_{L^2(\mathcal{S}_{u,v})}
\frac{\mathrm{d}}{\mathrm{d}u}|||u|^{\lambda_1}|\Gamma|||_{L^2(\mathcal{S}_{u,v})}.
\end{align*}
The integrands on the right-hand side satisfy
\begin{align*}
&\mthorn'(|u|^{2\lambda_1}|\Gamma|^2)
+2\mu|u|^{2\lambda_1}|\Gamma|^2 \\
&=2\lambda_1|u|^{2\lambda_1-1}\mthorn'(|u|)|\Gamma|^2
+|u|^{2\lambda_1}(\bar\Gamma\mthorn'\Gamma+\Gamma\mthorn'\bar\Gamma)+2\mu|u|^{2\lambda_1}|\Gamma|^2 \\
&\leq-2\lambda_1|u|^{2\lambda_1-1}|\Gamma|^2+
2\lambda_0\mu|u|^{2\lambda_1}|\Gamma|^2+2|u|^{2\lambda_1}|\Gamma||F|
+2\mu|u|^{2\lambda_1}|\Gamma|^2 \\
&=-2\lambda_1|u|^{2\lambda_1}|\Gamma|^2(\mu-\frac{1}{u})
+2|u|^{2\lambda_1}\Gamma F.
\end{align*}
Now, making use of the bootstrap assumption
\begin{align*}
  ||\mu-\frac{1}{u}||_{L^{\infty}(\mathcal{S}_{u,v})}\leq\frac{\mathcal{O}}{|u|^2},
\end{align*}
one can show that the right-hand side is less than
\begin{align*}
2\frac{\mathcal{O}}{|u|^2}|||u|^{\lambda_1}|\Gamma|||^2_{L^2(\mathcal{S}_{u,v})}+2|||u|^{\lambda_1}|\Gamma|||_{L^2(\mathcal{S}_{u,v})}|||u|^{\lambda_1}|F|||_{L^2(\mathcal{S}_{u,v})}.
\end{align*}
Putting everything together one gets
\begin{align*}
  \frac{\mathrm{d}}{\mathrm{d}u}|||u|^{\lambda_1}|\Gamma|||_{L^2(\mathcal{S}_{u,v})}
  \leq\frac{\mathcal{O}}{|u|^2}|||u|^{\lambda_1}|\Gamma|||_{L^2(\mathcal{S}_{u,v})}
  +|||u|^{\lambda_1}|F|||_{L^2(\mathcal{S}_{u,v})}.
\end{align*}
Integrating with respect to~$u$ and applying the Gr\"onwall
inequality, one concludes that
\begin{align*}
  |||u|^{\lambda_1}|\Gamma|||_{L^2(\mathcal{S}_{u,v})}
  &\leq\left(\exp{\int_{u_{\infty}}^u\frac{\mathcal{O}}{|u'|^2}}\right)
    \left(|||u|^{\lambda_1}|\Gamma|||_{L^2(\mathcal{S}_{u_{\infty},v})}
    +\int_{u_{\infty}}^u|||u'|^{\lambda_1}|F|||_{L^2(\mathcal{S}_{u',v})}\right)\\
  &\lesssim |||u|^{\lambda_1}|\Gamma|||_{L^2(\mathcal{S}_{u_{\infty},v})}
    +\int_{u_{\infty}}^u|||u'|^{\lambda_1}|F|||_{L^2(\mathcal{S}_{u',v})}.
\end{align*}

\end{proof}

In terms of scale-invariant norms the above result can be reformulated
as
\begin{proposition}
\label{SCweightedgronwall}
For  transport equations of the form 
\begin{align*}
\mthorn'\Gamma=\lambda_0\mu\Gamma+F,
\end{align*}
one has that 
\begin{align*}
  a^{s_2(\Gamma)}|u|^{\lambda_1-2s_2(\Gamma)}||\Gamma||_{L_{sc}^2(\mathcal{S}_{u,v})}
  &\lesssim 
    a^{s_2(\Gamma)}|u_{\infty}|^{\lambda_1-2s_2(\Gamma)}||\Gamma||_{L_{sc}^2(\mathcal{S}_{u_{\infty},v})}\\
  &+\int_{u_{\infty}}^ua^{s_2(F)}|u'|^{\lambda_1-2s_2(F)}||F||_{L_{sc}^2(\mathcal{S}_{u',v})}\mathrm{d}u',
\end{align*}
where~$s_2(F)=s_2(\mthorn'\Gamma)=s_2(\Gamma)+1$.
\end{proposition}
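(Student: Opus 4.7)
The plan is to show that the scale-invariant estimate is nothing more than the previous proposition rewritten in terms of scale-invariant norms, together with a bookkeeping check of signatures.

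First, I would recall the defining relation of the scale-invariant $L^2$ norm,
\[
||\phi||_{L^2(\mathcal{S}_{u,v})}=a^{s_2(\phi)}|u|^{-2s_2(\phi)}||\phi||_{L^2_{sc}(\mathcal{S}_{u,v})},
\]
so that multiplying by $|u|^{\lambda_1}$ gives
\[
|u|^{\lambda_1}||\phi||_{L^2(\mathcal{S}_{u,v})}=a^{s_2(\phi)}|u|^{\lambda_1-2s_2(\phi)}||\phi||_{L^2_{sc}(\mathcal{S}_{u,v})}.
\]
Applying this identity to the left-hand side of the previous proposition (with $\phi=\Gamma$ evaluated at $u$ and at $u_\infty$) converts both $\Gamma$-terms into the desired scale-invariant form. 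The same identity applied pointwise in $u'$ to the integrand (with $\phi=F$) converts the source term, producing the factor $a^{s_2(F)}|u'|^{\lambda_1-2s_2(F)}||F||_{L^2_{sc}(\mathcal{S}_{u',v})}$ inside the integral.

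The only content that needs checking is the signature relation $s_2(F)=s_2(\Gamma)+1$. This follows from the signature rules stated in the preceding subsection: $\mthorn'$ increases $s_2$ by $1$, and from the table we have $s_2(\mu)=1$, so
\[
s_2(\lambda_0\mu\Gamma)=s_2(\mu)+s_2(\Gamma)=s_2(\Gamma)+1=s_2(\mthorn'\Gamma),
\]
which forces $s_2(F)=s_2(\Gamma)+1$ for the transport equation to be signature-consistent. With this in hand, the substitution is immediate and no further analytic work is required.

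There is essentially no main obstacle here; the content of the estimate is contained in the earlier weighted Gr\"onwall proposition, which provided the crucial interaction between the $\mu$-term and the $|u|^{\lambda_1}$ weight via the bootstrap bound $||\mu-1/u||_{L^\infty}\leq\mathcal{O}/|u|^2$. The present statement is a clean repackaging, and the only care required is to verify that signatures are tracked correctly and that the factors of $a$ and $|u|$ match on both sides after the conversion.
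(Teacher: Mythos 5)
Your proposal is correct and is essentially the paper's own (implicit) argument: the paper states Proposition~\ref{SCweightedgronwall} as a direct reformulation of the preceding weighted Gr\"onwall estimate, obtained exactly by substituting $||\phi||_{L^2(\mathcal{S}_{u,v})}=a^{s_2(\phi)}|u|^{-2s_2(\phi)}||\phi||_{L^2_{sc}(\mathcal{S}_{u,v})}$ into both sides. Your signature check $s_2(F)=s_2(\mthorn'\Gamma)=s_2(\Gamma)+1$ is the right consistency condition, and all the analytic content indeed already resides in the previous proposition.
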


\subsection{Sobolev embedding}

Exploiting the fact that~$\mbox{Area}(\mathcal{S}_{u,v})\sim|u|^2$ and
working under the bootstrap assumptions, the Sobolev embedding
inequalities~\eqref{SobolevLp1} and~\eqref{SobolevLinfty1} imply the
following:

\begin{proposition}
  Under the bootstrap assumptions~\eqref{Hypothesis} one has
\begin{align}
||\phi||_{L^{\infty}(\mathcal{S}_{u,v})}\lesssim
\sum_{i\leq2}|||u|^{i-1}\mathcal{D}^i\phi||_{L^2(\mathcal{S}_{u,v})}.
\end{align}
In terms of scale-invariant norms one has that
\begin{align}
||\phi||_{L_{sc}^{\infty}(\mathcal{S}_{u,v})}\lesssim
  \sum_{i\leq2}||\left(a^{1/2}\mathcal{D}\right)^i
  \phi||_{L_{sc}^2(\mathcal{S}_{u,v})}.
\end{align}
\end{proposition}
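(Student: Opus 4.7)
The strategy is to chain the two Sobolev-type inequalities \eqref{SobolevLp1} and \eqref{SobolevLinfty1} recalled just above the statement, then translate the resulting $L^\infty$-bound into scale-invariant form via the signature weights.

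First I would apply \eqref{SobolevLinfty1} to $\phi$ with a convenient exponent, say $p=4$, and then apply \eqref{SobolevLp1} with $p=4$ to bound both $\phi$ and $\mathcal{D}\phi$ in $L^4$ by their $L^2$-norms and the $L^2$-norms of their first angular derivatives. Chaining these gives, schematically,
\begin{align*}
\|\phi\|_{L^\infty(\mathcal{S}_{u,v})}
\lesssim \sqrt{I'(\mathcal{S}_{u,v})}\,\mathrm{Area}(\mathcal{S}_{u,v})^{-1/2}\sum_{i\leq 2}\mathrm{Area}(\mathcal{S}_{u,v})^{i/2}\|\mathcal{D}^i\phi\|_{L^2(\mathcal{S}_{u,v})}.
\end{align*}
The estimate $\mathrm{Area}(\mathcal{S}_{u,v})\sim |u|^2$ established in the previous subsection from the bootstrap assumption on $\rho$ then converts all area factors into powers of $|u|$, producing exactly the weights $|u|^{i-1}$ on the right-hand side of the first claimed inequality.

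The only nontrivial analytic input, and the step where I expect the main difficulty to lie, is a uniform bound on the isoperimetric constant, $I'(\mathcal{S}_{u,v})\lesssim 1$ on $\mathbb{D}$. Under the bootstrap hypothesis the frame components $P^{\mathcal{A}}$ together with the shears $\sigma$ and $\lambda$ remain small deformations of their Minkowskian reference values, so the induced metric on $\mathcal{S}_{u,v}$ should be a controlled perturbation of the round metric of radius $|u|$. A standard argument propagating \eqref{framecoefficient2} and \eqref{framecoefficient3} along the two null directions, combined with the $L^\infty$ bootstrap bounds for $\rho$, $\sigma$, $\mu$ and $\lambda$, yields the required uniform bound. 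Implicitly this relies on $\mathcal{S}_\star$ being diffeomorphic to $\mathbb{S}^2$ with the two coordinate patches introduced in Section \ref{Subsection:StewartGauge} providing a bounded isoperimetric constant at the initial slice.

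Finally, for the scale-invariant version it suffices to rewrite both sides using the definitions of the scale-invariant norms together with the additivity rule $s_2(\mathcal{D}^i\phi) = s_2(\phi) + i/2$. Multiplying the first inequality through by $a^{-s_2(\phi)}|u|^{2s_2(\phi)+1}$, the area powers $|u|^{i-1}$ combine with the signature factors $a^{i/2}|u|^{-i}$ to reconstitute precisely the weighted operator $(a^{1/2}\mathcal{D})^i$ on the right-hand side. This last step is pure bookkeeping in the scale-invariant formalism and does not add any new analytic content.
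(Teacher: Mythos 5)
Your proposal is correct and follows essentially the same route the paper takes: the paper offers no written proof, simply asserting that the estimate follows by chaining \eqref{SobolevLp1} and \eqref{SobolevLinfty1} with $\mathrm{Area}(\mathcal{S}_{u,v})\sim|u|^2$, and converting to scale-invariant norms via $s_2(\mathcal{D}^i\phi)=s_2(\phi)+i/2$, which is exactly your argument. Your explicit attention to the uniform bound on the isoperimetric constant is a point the paper leaves implicit in ``working under the bootstrap assumptions,'' but it is the standard input from \cite{Chr00} and your sketch of it is sound.
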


\subsection{Commutators}

Suppose that the T-weighted quantity~$f$ satisfies the transport
equation~$\mthorn f=H_0$.  Then, using Stewart's gauge and the
commutator~\eqref{T-weightMasslessCommutator2} we have that
\begin{align*}
H_k=\sum_{i_1+i_2+i_3=k}\meth^{i_1}\Gamma(\pi,\tau)^{i_2}\meth^{i_3}H_0+
\sum_{i_1+i_2+i_3+i_4=k}\meth^{i_1}\Gamma(\tau,\pi)^{i_2}\meth^{i_3}\Gamma(\tau,\pi,\rho,\sigma)\meth^{i_4}f,
\end{align*}
where~$H_k\equiv\mthorn\meth^kf$. Similarly, suppose~$f$
satisfies~$\mthorn'f=G_0$, we have that
\begin{align*}
G_k=-k\mu\meth^kf+\meth^{k}G_0+
\sum_{i=1}^k\meth^i\mu\meth^{k-i}f+
\sum_{i=0}^k\meth^i\lambda\meth^{k-i}f,
\end{align*}
where~$G_k\equiv\mthorn'\meth^kf$. In the above expressions for~$H_k$
and~$G_k$ we need give neither the exact constants in each sum nor the
T-weight of~$f$. Furthermore, in the expressions for~$H_k$ and~$G_k$
we need not distinguish the~$\meth$ and~$\meth'$ operators or their
complex conjugates. The justification for this can be traced back to
the definition of the norms.

\section{$L^2(\mathcal{S})$ estimates}
\label{L2estimate}

In this section, under the bootstrap assumption
\begin{align*}
\bmGamma,\bm\Psi,\bm\varphi\leq\mathcal{O},
\end{align*}
we show~$L^2(\mathcal{S})$ estimates of the connection coefficients,
the auxiliary field~$\varphi_a$ and the Weyl curvature.  In particular
we demonstrate the inequality
 \begin{align*}
\bmGamma\lesssim\bm\Psi^2+\bm\Psi+\bm\varphi^2+\bm\varphi+\mathcal{I}_0+1.
\end{align*}
Note that~$\bmGamma$ contains up to 10 derivatives of the connection
coefficients, 10 derivatives of~$\varphi_{0,1,2}$ and 9 derivatives of
the Weyl scalars~$\Psi_{0,...,4}$. We use the 10th derivative of the
Weyl terms with the 11th derivative of~$\varphi_{0,1,2}$ on the
lightcone to control~$\bmGamma$.

The general strategy of the proof is to make use of the transport
equations satisfied by the various quantities involved in the field
equations. Then one applies the commutator relations into either
equation~\eqref{SCldirectiongronwall} or the inequalities in
Proposition~\ref{SCweightedgronwall}. Making use of scale-invariant
norms, we show in detail that the main contributions in the transport
equations come from the linear terms, and a handful of nonlinear terms
involving the scalar field. The T-weight of the various fields allows
us to obtain estimates for~$\mathcal{D}^kf$ once we have an estimate
on one of its strings of derivatives. For concreteness, in our
calculations we always choose $\meth^kf$ as an example. In the course
of calculation, we make frequent use of the bootstrap
assumption~\eqref{Hypothesis} and the fact
that~$\mathcal{O}\ll a\leq|u|$. In the case of the connection
coefficients we can always find structure equations which do not
involve angular derivatives. In contrast, for~$\varphi_{0,1,2}$ and
the components of the Weyl tensor we instead transform the estimates
of angular derivatives from~$\mathcal{S}$ to lightcone estimates.

\subsection{$L^2(\mathcal{S})$ estimates of connection coefficients}

We start by looking at the estimates for the connection
coefficients. Our first result is the following:

\begin{proposition}
\label{L2lambda}
For~$0\leq k\leq10$, one has
\begin{align*}
  \frac{a^{\frac{1}{2}}}{|u|}||
  (a^{\frac{1}{2}}\mathcal{D})^k\lambda||_{L^2_{sc}(\mathcal{S}_{u,v})}\lesssim1.
\end{align*}
\end{proposition}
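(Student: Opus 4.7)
The plan is to treat~\eqref{T-weightMasslessStructureStEq8},
\begin{align*}
\mthorn'\lambda=-2\mu\lambda-\ulomega\lambda-\Psi_4,
\end{align*}
as an ingoing transport equation for~$\lambda$, commute a generic string~$\mathcal{D}^k$ of the angular operators $\meth$, $\meth'$ through it, and then invoke the weighted Gr\"onwall inequality of Proposition~\ref{SCweightedgronwall}. Writing $\meth^k$ as a representative of any such string (this is legitimate by the structure of the definition of the norm $\mathcal{D}^k$), the commutator formula from the Commutators subsection contributes an extra~$-k\mu\meth^k\lambda$ to the top-order part, while Leibniz expansion of $\meth^k(-2\mu\lambda)$ adds a further~$-2\mu\meth^k\lambda$. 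The equation thus takes the canonical form
\begin{align*}
\mthorn'\meth^k\lambda+(k+2)\mu\,\meth^k\lambda=-\meth^k\Psi_4+F_k,
\end{align*}
with~$F_k$ collecting all commutator errors and all Leibniz terms that distribute derivatives across~$\mu$, $\lambda$, $\ulomega$.

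Proposition~\ref{SCweightedgronwall} applied with $\lambda_0=-(k+2)$, $\lambda_1=k+1$, and $s_2(\meth^k\lambda)=1+k/2$ then yields
\begin{align*}
\frac{a}{|u|}||(a^{1/2}\mathcal{D})^k\lambda||_{L^2_{sc}(\mathcal{S}_{u,v})}
\lesssim \frac{a}{|u_\infty|}||(a^{1/2}\mathcal{D})^k\lambda||_{L^2_{sc}(\mathcal{S}_{u_\infty,v})}
+\int_{u_\infty}^u a^{s_2(F_k)}|u'|^{(k+1)-2s_2(F_k)}||F_k||_{L^2_{sc}(\mathcal{S}_{u',v})}\mathrm{d}u'.
\end{align*}
On~$\mathcal{N}_\star$ the coefficient~$\lambda$ is reconstructed from~$\sigma$ and~$\varphi_0$ by Lemma~\ref{Lemma3}, and the assumed asymptotics yield an initial-data contribution bounded by~$a^{1/2}\mathcal{I}_0$; dividing through by~$a^{1/2}$ at the end turns this into a term~$\lesssim\mathcal{I}_0\lesssim 1$.

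The dominant term in the integrand is $-\meth^k\Psi_4$, with $s_2=2+k/2$, contributing
\begin{align*}
\int_{u_\infty}^u \frac{a^{2+k/2}}{|u'|^3}||\meth^k\Psi_4||_{L^2_{sc}(\mathcal{S}_{u',v})}\mathrm{d}u'.
\end{align*}
The subtlety, and what I expect to be the main obstacle, is that at top order~$k=10$ the spherical bootstrap~$\Psi_{i,2}$ is only available for~$i\leq 9$, so there is no direct~$L^2_{sc}(\mathcal{S})$ control of $\meth^{10}\Psi_4$. The remedy is a Cauchy--Schwarz split tuned to the measure~$a|u'|^{-2}\mathrm{d}u'$ that defines the hypersurface norm~$||\cdot||_{L^2_{sc}(\mathcal{N}'_v)}$:
\begin{align*}
\int_{u_\infty}^u \frac{a^{2+k/2}}{|u'|^3}||\meth^k\Psi_4||_{L^2_{sc}}\mathrm{d}u'
\leq \Big(\int_{u_\infty}^u \frac{a^{3+k}}{|u'|^4}\mathrm{d}u'\Big)^{1/2}
\Big(\int_{u_\infty}^u \frac{a}{|u'|^2}||\meth^k\Psi_4||^2_{L^2_{sc}}\mathrm{d}u'\Big)^{1/2}.
\end{align*}
Using $|u'|\geq a/4$ the first factor is $\lesssim a^{k/2}$, while the second equals $||\meth^k\Psi_4||_{L^2_{sc}(\mathcal{N}'_v)}=a^{-k/2}||(a^{1/2}\mathcal{D})^k\Psi_4||_{L^2_{sc}(\mathcal{N}'_v)}\leq a^{-k/2}\underline\Psi_k\leq a^{-k/2}\bm\Psi$; their product is $\lesssim\bm\Psi$, which after division by~$a^{1/2}$ gives~$\bm\Psi/a^{1/2}\lesssim 1$ under the bootstrap $\mathcal{O}^{20}\leq a^{1/16}$.

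The remaining entries of~$F_k$ are bilinear, with each factor carrying fewer than~$k$ derivatives, so the scale-invariant H\"older inequalities of Section~\ref{Preliminary} together with the bootstrap~\eqref{Hypothesis} reduce them to integrands of the shape~$\lesssim\mathcal{O}^2\,a^{s_2(F_k)}|u'|^{(k+1)-2s_2(F_k)-2}$; integration in~$u'$ produces an extra~$|u|^{-1}$ that renders the overall contribution~$\lesssim\mathcal{O}^2/|u|\lesssim 1$. Collecting the initial-data, Weyl, and bilinear contributions and dividing by~$a^{1/2}$ yields the claim. The hard part is really the top-order step~$k=10$, where the exact cancellation of powers of~$a$ in the Cauchy--Schwarz split above is what enables the use of~$\underline\Psi_{10}$ in place of the missing~$\Psi_{10,2}$; once this is verified, the lower-order cases follow by the same argument with additional slack.
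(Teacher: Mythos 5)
Your proposal is correct and follows essentially the same route as the paper's proof: the same ingoing transport equation \eqref{T-weightMasslessStructureStEq8}, the same commutation producing the $(k+2)\mu\meth^k\lambda$ term, the same application of Proposition~\ref{SCweightedgronwall} with $\lambda_0=-(k+2)$, and the same Cauchy--Schwarz split (adapted to the measure $a|u'|^{-2}\mathrm{d}u'$) to convert the uncontrolled top-order sphere norm of $\meth^k\Psi_4$ into the lightcone norm $\underline{\bm\Psi}[\Psi_4]$, which is exactly how the paper handles the $k=10$ case. The treatment of the initial data and the bilinear error terms via the bootstrap also matches the paper's argument up to harmless bookkeeping of powers of $a$ and $|u|$.
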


\begin{proof}
  We illustrate the method of proof by looking in detail at the case
  of the spin connection coefficient~$\lambda$. In this case the
  relevant structure equation
  is~\eqref{T-weightMasslessStructureStEq8} ---namely
\begin{align*}
\mthorn'\lambda=-2\mu\lambda-\ulomega\lambda-\Psi_4.
\end{align*}
In order to estimate~$||(a^{\frac{1}{2}}\mathcal{D})^k\lambda||$ one
needs to commute each~$\mathcal{D}^{k_i}$ with~$\mthorn'$. We
take~$\meth^k\lambda$ as an example. One has that
\begin{align*}
\mthorn'\meth^k\lambda+(k+2)\mu\meth^k\lambda=-\meth^k\Psi_4
+\sum_{i=1}^k\meth^i\tilde\mu\meth^{k-i}\lambda 
+\sum_{i=0}^k\meth^i\Gamma(\ulomega,\lambda)\meth^{k-i}\lambda.
\end{align*}
Denote the right-hand side of the last equation by~$F$. One has that
\begin{align*}
&s_2(\lambda)=1, \ \ s_2(\meth^k\lambda)=\frac{k+2}{2}, \ \ s_2(F)=\frac{k+4}{2}, \\
&\lambda_0=-(k+2), \ \ \lambda_1=-\lambda_0-1=k+1, \\
& \lambda_1-2s_2(F)=-3, \ \ \lambda_1-2s_2(\meth^k\lambda)=-1.
\end{align*}

Now, making use of Proposition~\ref{SCweightedgronwall}, one has
\begin{align*}
\frac{a}{|u|}||(a^{\frac{1}{2}}\meth)^k\lambda||_{L^2_{sc}(\mathcal{S}_{u,v})}\lesssim
\frac{a}{|u_{\infty}|}||(a^{\frac{1}{2}}\meth)^k\lambda||_{L^2_{sc}(\mathcal{S}_{u_{\infty},v})}
+\int_{u_{\infty}}^u\frac{a^2}{|u'|^3}||a^{\frac{k}{2}}F||_{L^2_{sc}(\mathcal{S}_{u',v})}.
\end{align*}
Multiplying by~$a^{-\frac{1}{2}}$ one has 
\begin{align*}
\frac{a^{\frac{1}{2}}}{|u|}||(a^{\frac{1}{2}}\meth)^k\lambda||_{L^2_{sc}(\mathcal{S}_{u,v})}\lesssim
\frac{a^{\frac{1}{2}}}{|u_{\infty}|}||(a^{\frac{1}{2}}\mathcal{D})^k\lambda||_{L^2_{sc}(\mathcal{S}_{u_{\infty},v})}
+\int_{u_{\infty}}^u\frac{a^\frac{3}{2}}{|u'|^3}||a^{\frac{k}{2}}F||_{L^2_{sc}(\mathcal{S}_{u',v})}.
\end{align*}
Now, for convenience let 
\begin{align*}
H\equiv\int_{u_{\infty}}^u\frac{a^\frac{3}{2}}{|u|^3}||a^{\frac{k}{2}}F||_{L^2_{sc}(\mathcal{S}_{u,v})}.
\end{align*}
Substituting the expression of~$F$ one finds that 
\begin{align*}
  H&\leq
  \int_{u_{\infty}}^u\frac{a^\frac{3}{2}}{|u|^3}
    ||(a^{\frac{1}{2}}\mathcal{D})^k\Psi_4||_{L^2_{sc}(\mathcal{S}_{u',v})}
    +\sum_{i=1}^k\int_{u_{\infty}}^u\frac{a^\frac{3}{2}}{|u|^3}
    ||(a^{\frac{1}{2}}\mathcal{D})^i\tilde\mu(a^{\frac{1}{2}}
    \mathcal{D})^{k-1}\lambda||_{L^2_{sc}(\mathcal{S}_{u',v})} \\
  &\quad+\sum_{i=0}^k\int_{u_{\infty}}^u\frac{a^\frac{3}{2}}{|u|^3}||(a^{\frac{1}{2}}\mathcal{D})^i
    \Gamma(\ulomega,\lambda)(a^{\frac{1}{2}}\mathcal{D})^{k-1}
    \lambda||_{L^2_{sc}(\mathcal{S}_{u',v})} 
    (a^{\frac{1}{2}}\mathcal{D})^{i_3}\lambda||_{L^2_{sc}(\mathcal{S}_{u',v})}\\
&=I_1+I_2+I_3.
\end{align*}

For~$I_1$ it follows that
\begin{align*}
\int_{u_{\infty}}^u\frac{a^\frac{3}{2}}{|u|^3}
||(a^{\frac{1}{2}}\mathcal{D})^k\Psi_4||_{L^2_{sc}(\mathcal{S}_{u',v})}&\leq
\left(\int_{u_{\infty}}^u\frac{a}{|u|^2}
||(a^{\frac{1}{2}}\mathcal{D})^k\Psi_4||^2_{L^2_{sc}(\mathcal{S}_{u',v})}\right)^{\frac{1}{2}}
\left(\int_{u_{\infty}}^u\frac{a^2}{|u|^4}\right)^{\frac{1}{2}} \\
&\leq ||(a^{\frac{1}{2}}\mathcal{D})^i\Psi_4||_{L^2_{sc}(\mathcal{N}'_v(u_{\infty},u))}\frac{a}{|u|^{\frac{3}{2}}}
\leq\frac{\mathcal{O}}{a^{\frac{1}{2}}}\leq1.
\end{align*}

Now, making use of the bootstrap assumptions
\begin{align*}
||\tilde\mu||\leq\frac{|u|}{a}\mathcal{O}, \quad
||\lambda||\leq\frac{|u|}{a^{\frac{1}{2}}}\mathcal{O}, \quad
||\ulomega,\tau||\leq\mathcal{O}, \quad
||\mu||\leq\frac{|u|^2}{a}\mathcal{O},
\end{align*}
for~$I_2$ and~$I_3$, the above assumptions give
\begin{align*}
\int_{u_{\infty}}^u\frac{a^\frac{3}{2}}{|u'|^3}\frac{1}{|u'|}
\frac{|u'|}{a^{\frac{1}{2}}}\mathcal{O}
\frac{|u'|}{a^{\frac{1}{2}}}\mathcal{O}
=\int_{u_{\infty}}^u\frac{a^{\frac{1}{2}}\mathcal{O}^2}{|u'|^2}
\leq\frac{a^{\frac{1}{2}}\mathcal{O}^2}{|u|}
\leq\frac{\mathcal{O}^2}{a^{\frac{1}{2}}}
\leq1.
\end{align*}

Combining the results above with those on the initial data, one
concludes
\begin{align*}
  \frac{a^{\frac{1}{2}}}{|u|}||(a^{\frac{1}{2}}\meth)^k
  \lambda||_{L^2_{sc}(\mathcal{S}_{u,v})}\lesssim1.
\end{align*}
One can estimate the rest of the strings in~$\mathcal{D}^{k_i}\lambda$
in analogous way. Hence, from Definition~\eqref{T-weightL2Norm}, we
have
\begin{align*}
  \frac{a^{\frac{1}{2}}}{|u|}||(a^{\frac{1}{2}}\mathcal{D})^k
  \lambda||_{L^2_{sc}(\mathcal{S}_{u,v})}\lesssim1.
\end{align*}
\end{proof}

\begin{proposition}
\label{L2sigma}
For~$0\leq k\leq10$, one has that
\begin{align*}
  \frac{1}{a^{\frac{1}{2}}}
  ||(a^{\frac{1}{2}}\mathcal{D})^{k}\sigma||_{L^2_{sc}(\mathcal{S}_{u,v})}\lesssim
  \bm\Psi[\Psi_0]+1.
\end{align*}
\end{proposition}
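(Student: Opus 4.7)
The plan is to run an outgoing transport estimate for $\sigma$, using the structure equation \eqref{T-weightMasslessStructureStEq2}, namely $\mthorn\sigma=2\rho\sigma+\Psi_{0}$. First I would commute $\mathcal{D}^{k}$ through $\mthorn$ via the schematic commutator formula recorded just before Section~\ref{L2estimate} (taking $\meth^{k}\sigma$ as the representative string, as in the proof of Proposition~\ref{L2lambda}). This yields
\begin{align*}
\mthorn\meth^{k}\sigma
  =\meth^{k}\Psi_{0}
  +\sum_{i=0}^{k}\meth^{i}\Gamma(\rho,\sigma)\meth^{k-i}\sigma
  +\sum_{\substack{i_{1}+i_{2}+i_{3}=k\\ i_{2}\ge 1}}\meth^{i_{1}}\Gamma(\tau,\pi)^{i_{2}}\meth^{i_{3}}(2\rho\sigma+\Psi_{0})
  +\mathrm{(comm.)},
\end{align*}
where (comm.) collects products of at most one factor of $\meth^{\le k}\sigma$ with coefficients built from $\Gamma(\tau,\pi,\rho,\sigma)$, all of total angular order $k$. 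The signature count is $s_{2}(\sigma)=0$, so $s_{2}(\meth^{k}\sigma)=k/2$, and this matches the signature of $\meth^{k}\Psi_{0}$, which is what allows the $\Psi_{0}$ term to enter on the right-hand side of the final estimate cleanly.

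Next I would apply the scale-invariant transport inequality \eqref{SCldirectiongronwall} in the $v$-direction. The crucial point is that on $\mathcal{N}_{\star}'=\{v=0\}$ the data are Minkowskian, so $\sigma|_{v=0}=0$ and therefore $\|\meth^{k}\sigma\|_{L^{2}_{sc}(\mathcal{S}_{u,0})}=0$. After multiplying through by $a^{k/2}\cdot a^{-1/2}$ to build the bootstrap norm combination, this gives
\begin{align*}
\tfrac{1}{a^{1/2}}\|(a^{1/2}\mathcal{D})^{k}\sigma\|_{L^{2}_{sc}(\mathcal{S}_{u,v})}
  \lesssim\int_{0}^{v}\tfrac{1}{a^{1/2}}\|(a^{1/2}\mathcal{D})^{k}\Psi_{0}\|_{L^{2}_{sc}(\mathcal{S}_{u,v'})}\mathrm{d}v'
  +\int_{0}^{v}\mathrm{(nonlinear)}\,\mathrm{d}v'.
\end{align*}
For the linear/Weyl term I would apply Cauchy--Schwarz in $v'$ with $v\le 1$ to obtain $\tfrac{1}{a^{1/2}}\|(a^{1/2}\mathcal{D})^{k}\Psi_{0}\|_{L^{2}_{sc}(\mathcal{N}_{u}(0,v))}$, which by definition is majorised by $\bm{\Psi}[\Psi_{0}]$. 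This is the origin of the leading term on the right-hand side of the stated inequality.

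For the nonlinear contributions I would proceed as in the proof of Proposition~\ref{L2lambda}: place the top-order factor in $L^{2}_{sc}(\mathcal{S})$ and the remaining ones in $L^{\infty}_{sc}(\mathcal{S})$ via the H\"older inequalities recorded earlier, using the Sobolev embedding to drop the $L^{\infty}_{sc}$-derivatives from $\mathcal{D}^{\le 10}$ to $\mathcal{D}^{\le 6}$. The bootstrap assumption \eqref{Hypothesis} then bounds each such product by a universal power of $\mathcal{O}$ times an extra factor of $1/|u|$ from the H\"older inequality. Combined with the factor $\mathrm{d}v'$ (harmless, since $v\le 1$) and $|u|\ge a/4\ge a^{15/16}\mathcal{O}^{20}/4$, every such nonlinear contribution is bounded by $\mathcal{O}^{N}/a^{1/2}\ll 1$ and therefore absorbs into the ``$+1$''. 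The main obstacle is bookkeeping: one has to check that every commutator term has at most one factor of $\meth^{\ge 6}\sigma$ (so the $L^{2}$/$L^{\infty}$ split never fails) and that the powers of $a$ introduced by the scale-invariant weights line up correctly with the signature of each factor; this is standard but tedious and is where the identity $s_{2}(\mthorn\sigma)=s_{2}(\sigma)$ (together with the fact that $\rho$ has signature $0$) is used repeatedly. Finally, one repeats the argument for each string $\mathcal{D}^{k_{i}}\sigma$ and sums via definition \eqref{T-weightL2Norm01}, yielding the claimed bound.
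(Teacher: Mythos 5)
Your proposal follows the paper's proof essentially verbatim: the same transport equation $\mthorn\sigma=2\rho\sigma+\Psi_0$, the same commutation of $\meth^k$ with $\mthorn$, the vanishing Minkowskian data at $v=0$, the scale-invariant transport inequality \eqref{SCldirectiongronwall} followed by Cauchy--Schwarz in $v'$ to produce $\bm\Psi[\Psi_0]$, and the H\"older/bootstrap absorption of the nonlinear terms into the ``$+1$''. The argument is correct and no different in substance from the one in the paper.
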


\begin{proof}
  In this case we make use of the transport
  equation~\eqref{T-weightMasslessStructureStEq2}
\begin{align*}
\mthorn\sigma&=2\rho\sigma+\Psi_0
\end{align*}
to estimate~$\sigma$. For~$k\leq10$, commuting~$\meth^{k}$
with~$\mthorn$, one has that
\begin{align*}
\mthorn\meth^{k}\sigma&=\meth^k\Psi_0
+\sum_{i_1+i_2+i_3=k,i_3<k}\meth^{i_1}\Gamma(\tau,\pi)^{i_2}
\meth^{i_3}\Psi_0 
+\sum_{i_1+i_2+i_3+i_4=k}\meth^{i_1}\Gamma(\tau,\pi)^{i_2}
\meth^{i_3}\Gamma(\tau,\pi,\rho,\sigma)
\meth^{i_4}\sigma.
\end{align*}
Then, making use of the transport estimate in scale-invariant norm,
equation~\eqref{SCldirectiongronwall}, one has that
\begin{align*}
||(a^{\frac{1}{2}}\meth)^{k}\sigma||_{L^2_{sc}(\mathcal{S}_{u,v})}&\lesssim
||(a^{\frac{1}{2}}\meth)^{k}\sigma||_{L^2_{sc}(\mathcal{S}_{u,0})}+
\int_0^v||a^{\frac{k}{2}}\mthorn\meth^{k}\sigma||_{L^2_{sc}(\mathcal{S}_{u,v'})}.
\end{align*}
Note that the initial data for~$\sigma$ on the initial ingoing light
cone is Minkowskian ---i.e, it is zero. Now, multiplying
by~$a^{-\frac{1}{2}}$ both sides of the inequality we have 
\begin{align*}
\frac{1}{a^{\frac{1}{2}}}||(a^{\frac{1}{2}}\meth)^{k}\sigma||_{L^2_{sc}(\mathcal{S}_{u,v})}&\lesssim
\int_0^v\frac{1}{a^{\frac{1}{2}}}||a^{\frac{k}{2}}\mthorn\meth^{k}\sigma||_{L^2_{sc}(\mathcal{S}_{u,v'})} 
\leq\frac{1}{a^{\frac{1}{2}}}\int_0^v||a^{\frac{k}{2}}\mathcal{D}^{k}\Psi_0||_{L^2_{sc}(\mathcal{S}_{u,v'})}\\
&+\sum_{i_1+i_2+i_3=k,i_3<k}\int_0^v\frac{1}{a^{\frac{1}{2}}}
||a^{\frac{k}{2}}\mathcal{D}^{i_1}\Gamma(\tau,\pi)^{i_2}\mathcal{D}^{i_3}\Psi_0||_{L^2_{sc}(\mathcal{S}_{u,v'})} \\
&+\sum_{i_1+i_2+i_3+i_4=k}\int_0^v\frac{1}{a^{\frac{1}{2}}}
||a^{\frac{k}{2}}\mathcal{D}^{i_1}\Gamma(\tau,\pi)^{i_2}
\mathcal{D}^{i_3}\Gamma(\sigma,...)\mathcal{D}^{i_4}\sigma||_{L^2_{sc}(\mathcal{S}_{u,v'})} \\
&\leq\bm\Psi[\Psi_0]+\frac{a^{\frac{i_2}{2}}}{|u|^{i_2}}\mathcal{O}^{i_2+1}+
\frac{a^{\frac{i_2+1}{2}}}{|u|^{i_2+1}}\mathcal{O}^{i_2+2} \\
&\leq\bm\Psi[\Psi_0]+\frac{a^{\frac{1}{2}}}{|u|}\mathcal{O}
\leq\bm\Psi[\Psi_0]+1.
\end{align*}
The rest of the strings in~$\mathcal{D}^{k_i}\sigma$ can be estimated
in a similar way. Hence, from Definition~\eqref{T-weightL2Norm}, we
conclude that
\begin{align*}
  \frac{1}{a^{\frac{1}{2}}}||(a^{\frac{1}{2}}
  \mathcal{D})^{k}\sigma||_{L^2_{sc}(\mathcal{S}_{u,v})}\lesssim
\bm\Psi[\Psi_0]+1.
\end{align*}

\end{proof}

\begin{proposition}
\label{L2ulomega}
For~$0\leq k\leq10$, one has that 
\begin{align*}
||(a^{\frac{1}{2}}\mathcal{D})^{k}\ulomega||_{L^2_{sc}(\mathcal{S}_{u,v})}\lesssim
\bm\Psi[\TiPsi_2]+\underline{\bm\varphi}[\varphi_1]+1.
\end{align*}
\end{proposition}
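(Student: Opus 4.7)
The proof template is exactly that of Propositions \ref{L2lambda} and \ref{L2sigma}. The starting point is the outgoing structure equation \eqref{T-weightMasslessStructureStEq11},
\begin{align*}
\mthorn\ulomega = 2\tau\bar\tau + 2\tau\pi + 2\bar\tau\bar\pi + \TiPsi_2 + \bar\TiPsi_2 + 6\varphi_{0}\varphi_{2},
\end{align*}
combined with the fact that Minkowskian data on $\mathcal{N}'_{\star}$ forces $\ulomega|_{v=0}=0$. Applying $(a^{1/2}\mathcal{D})^k$ (illustrated, as in the previous propositions, by the string $\meth^k$), commuting it through $\mthorn$ by means of \eqref{T-weightMasslessCommutatorSt2}, and invoking the $\bml$-direction transport inequality \eqref{SCldirectiongronwall} reduces matters to bounding
\begin{align*}
\int_0^v \big\| a^{k/2}\mthorn\meth^{k}\ulomega\big\|_{L^2_{sc}(\mathcal{S}_{u,v'})}\,\mathrm{d}v'.
\end{align*}

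The integrand splits into three classes. First, the linear curvature contribution $(a^{1/2}\mathcal{D})^k\TiPsi_2$ is controlled by Cauchy--Schwarz in $v$ and yields precisely $\bm\Psi[\TiPsi_2]$, since the $L^2_{sc}(\mathcal{N}_u)$ norm of $(a^{1/2}\mathcal{D})^k\TiPsi_2$ is exactly the $k$-th component of that quantity. Second, the quadratic connection terms such as $(a^{1/2}\mathcal{D})^k(\tau\pi)$ and the $\meth^i\Gamma\cdot\meth^{k-i}\ulomega$ commutator remainders are handled exactly as in Proposition \ref{L2sigma}: one factor is put in $L^\infty_{sc}$ and the other in $L^2_{sc}$ via the H\"older inequalities, and the bootstrap $\bmGamma\leq\mathcal{O}$ together with the extra $a^{1/2}/|u|$ coming from the $v$-integration produces a contribution bounded by $\mathcal{O}^{p}a^{1/2}/|u| \lesssim 1$ in the regime $\mathcal{O}^{20}\leq a^{1/16}$ and $a\leq|u|$.

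Third, the nonlinear matter term $(a^{1/2}\mathcal{D})^k(\varphi_{0}\varphi_{2})$ is the delicate one. By Leibniz, the sum splits into pieces $(a^{1/2}\mathcal{D})^{i_1}\varphi_{0}\cdot(a^{1/2}\mathcal{D})^{i_2}\varphi_{2}$ with $i_1+i_2\leq k$. All pieces with $i_2<k$ are absorbed into the $\lesssim 1$ remainder using the $L^\infty_{sc}$-bootstrap on $\varphi_2$ and the outgoing bootstrap on $\varphi_0$. For the top-order piece $\varphi_0\cdot(a^{1/2}\mathcal{D})^k\varphi_{2}$, I would use the Hodge-type constraint \eqref{EOMMasslessScalarStT-weight5}, $\meth\bar\varphi_1=\meth'\varphi_1$, together with the ingoing evolution equation \eqref{EOMMasslessScalarStT-weight4} for $\mthorn'\varphi_1-\meth\varphi_2$, to trade the topmost angular derivative of $\varphi_2$ for an angular derivative of $\varphi_1$ modulo $\mthorn'$-integrable errors. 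After passing to the $\mathcal{N}'_v$-norm, these remaining terms are precisely what the quantity $\underline{\bm\varphi}[\varphi_1]$ measures, and the weight $a^{1/2}/|u|$ carried by $\varphi_2$ in the ingoing norm is exactly what makes the bookkeeping close.

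The main obstacle is therefore the top-order matter term: one must carefully convert the outgoing integral $\int_0^v \|\meth^k(\varphi_0\varphi_2)\|$ into a bound involving an \emph{ingoing} lightcone norm of $\varphi_1$, which forces the use of both the scalar-field constraint and the $\mthorn$--$\mthorn'$ commutator. All other terms follow the now-standard pattern established in the previous two propositions, and the constant $+1$ on the right-hand side absorbs both the Minkowskian data (trivially) and the small $\mathcal{O}^{p}a^{1/2-q}/|u|^{1-r}$ leftovers from the connection and lower-order matter products.
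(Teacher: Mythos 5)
Your scaffolding is the paper's: start from \eqref{T-weightMasslessStructureStEq11}, use that Minkowskian data forces $\ulomega|_{v=0}=0$, commute $\meth^k$ through $\mthorn$, apply the outgoing transport inequality \eqref{SCldirectiongronwall}, and estimate the resulting $v$-integral term by term. Your treatment of the linear $\TiPsi_2$ contribution and of the quadratic connection/commutator remainders matches the paper and is fine.

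The handling of the matter term, however, goes off track. There is no genuinely "top-order" difficulty here: the Leibniz expansion of $(a^{1/2}\mathcal{D})^k(\varphi_0\varphi_2)$ only ever produces $\mathcal{D}^{i}\varphi_0$ and $\mathcal{D}^{j}\varphi_2$ with $i,j\leq k\leq 10$, and the sphere estimates of Propositions~\ref{L2varphi0} and~\ref{L2varphi12} cover exactly this range, giving $\frac{1}{a^{1/2}}\|(a^{1/2}\mathcal{D})^{i}\varphi_0\|_{L^2_{sc}}\lesssim\underline{\bm\varphi}[\varphi_1]+1$ and $\frac{a^{1/2}}{|u|}\|(a^{1/2}\mathcal{D})^{j}\varphi_2\|_{L^2_{sc}}\lesssim 1$. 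One then splits the product by H\"older (one factor in $L^\infty_{sc}$, the higher-derivative factor in $L^2_{sc}$) and obtains $\bmGamma(\varphi_0)\bmGamma(\varphi_2)\lesssim\underline{\bm\varphi}[\varphi_1]+1$ directly; the $\underline{\bm\varphi}[\varphi_1]$ in the conclusion enters through the $\varphi_0$ estimate, not through any conversion of $\mathcal{D}^k\varphi_2$ into an ingoing norm of $\varphi_1$. Your proposed alternative — trading the top angular derivative of $\varphi_2$ for one of $\varphi_1$ via the constraint \eqref{EOMMasslessScalarStT-weight5} and "the ingoing evolution equation for $\mthorn'\varphi_1-\meth\varphi_2$" (which is \eqref{EOMMasslessScalarStT-weight3}, not \eqref{EOMMasslessScalarStT-weight4}) — is not carried out and would not close as stated: the outer integral in this proposition runs along the outgoing direction, so "passing to the $\mathcal{N}'_v$-norm" of $\varphi_1$ would require an integration by parts in $u$ inside a $v$-integral, i.e.\ an energy-type double-integral argument over $D_{u,v}$ rather than a transport estimate. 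That mechanism (via $\Tivarphi_2=\meth'\varphi_2+\mu\bar\varphi_1$) does appear in the paper, but inside the proof of Proposition~\ref{L2varphi12} and in the energy estimates of Section~\ref{EnergyEstimate}, not here. Replace your third paragraph with a direct appeal to Propositions~\ref{L2varphi0} and~\ref{L2varphi12} and the argument is complete.
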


\begin{proof}
  Here we make use of the structure
  equation~\eqref{T-weightMasslessStructureStEq11} ---namely,
\begin{align*}
\mthorn\ulomega&=2\tau\bar\tau+2\tau\pi+2\bar\tau\bar\pi
+\TiPsi_2+\bar\TiPsi_2+6\varphi_{0}\varphi_{2}.
\end{align*}
For~$k\leq10$, we commute~$\meth^{k}$ with~$\mthorn$, to obtain
\begin{align*}
\mthorn\meth^{k}\ulomega&=\meth^k\TiPsi_2
+\sum_{i_1+i_2+i_3=k,i_3<k}\meth^{i_1}\Gamma(\tau,\pi)^{i_2}
\meth^{i_3}\TiPsi_2
+\sum_{i_1+i_2+i_3+i_4=k}\meth^{i_1}\Gamma(\tau,\pi)^{i_2}
\meth^{i_3}\Gamma(\tau,\pi,\rho,\sigma)
\meth^{i_4}\Gamma(\tau,\pi,\ulomega) \\
&+\sum_{i_1+i_2+i_3+i_4=k}\meth^{i_1}\Gamma(\tau,\pi)^{i_2}
\meth^{i_3}\varphi_0
\meth^{i_4}\varphi_2.
\end{align*}
Then, the transport estimate in the scale-invariant norm,
inequality~\eqref{SCldirectiongronwall}, yields
\begin{align*}
||(a^{\frac{1}{2}}\meth)^{k}\ulomega||_{L^2_{sc}(\mathcal{S}_{u,v})}&\lesssim
||(a^{\frac{1}{2}}\meth)^{k}\ulomega||_{L^2_{sc}(\mathcal{S}_{u,0})}+
\int_0^v||a^{\frac{k}{2}}\mthorn\meth^{k}\ulomega||_{L^2_{sc}(\mathcal{S}_{u,v'})}. 
\end{align*}

As the initial data for~$\ulomega$ on the initial ingoing light cone
is Minkowskian (and thus vanishes), we have
\begin{align*}
||(a^{\frac{1}{2}}\meth)^{k}\ulomega||_{L^2_{sc}(\mathcal{S}_{u,v})}&\lesssim
\int_0^v||a^{\frac{k}{2}}\mthorn\meth^{k}\ulomega||_{L^2_{sc}(\mathcal{S}_{u,v'})} 
\leq\int_0^v||a^{\frac{k}{2}}\mathcal{D}^{k}\TiPsi_2||_{L^2_{sc}(\mathcal{S}_{u,v'})} \\
&+\sum_{i_1+i_2+i_3=k,i_3<k}\int_0^v
||a^{\frac{k}{2}}\mathcal{D}^{i_1}\Gamma(\tau,\pi)^{i_2}\mathcal{D}^{i_3}\TiPsi_2||_{L^2_{sc}(\mathcal{S}_{u,v'})} \\
&+\sum_{i_1+i_2+i_3+i_4=k}\int_0^v
||a^{\frac{k}{2}}\mathcal{D}^{i_1}\Gamma(\tau,\pi)^{i_2}
\mathcal{D}^{i_3}\Gamma(\sigma,...)\mathcal{D}^{i_4}\Gamma||_{L^2_{sc}(\mathcal{S}_{u,v'})} \\
&+\sum_{i_1+i_2+i_3+i_4=k}\int_0^v
||a^{\frac{k}{2}}\mathcal{D}^{i_1}\Gamma(\tau,\pi)^{i_2}
\mathcal{D}^{i_3}\varphi_0\mathcal{D}^{i_4}\varphi_2||_{L^2_{sc}(\mathcal{S}_{u,v'})} \\
&\leq\bm\Psi[\TiPsi_2]+\frac{a^{\frac{i_2}{2}}}{|u|^{i_2}}\mathcal{O}^{i_2+1}
+\frac{a^{\frac{i_2}{2}}}{|u|^{i_2+1}}\mathcal{O}^{i_2+1}a^{\frac{1}{2}}\mathcal{O}
+\frac{a^{\frac{i_2}{2}}}{|u|^{i_2+1}}\mathcal{O}^{i_2}a^{\frac{1}{2}}\mathcal{O}\frac{|u|}{a^{\frac{1}{2}}}\mathcal{O} \\
&\leq\bm\Psi[\TiPsi_2]+\frac{a^{\frac{1}{2}}}{|u|}\mathcal{O}^2+\bmGamma(\varphi_0)\bmGamma(\varphi_2)
\leq\bm\Psi[\TiPsi_2]+\bmGamma(\varphi_0)\bmGamma(\varphi_2)+1 \\
&\leq\bm\Psi[\TiPsi_2]+\underline{\bm\varphi}[\varphi_1]+1,
\end{align*}
where in the last step we have made use of the~$L^2(\mathcal{S})$
estimates for~$\varphi_0$ and~$\varphi_2$ which will be shown in
Proposition~\ref{L2varphi0} and Proposition~\ref{L2varphi12}.

The remaining strings of derivatives in~$\mathcal{D}^{k_i}\ulomega$
can be estimated in a similar way. Hence, from
Definition~\ref{T-weightL2Norm}, we conclude that
\begin{align*}
||(a^{\frac{1}{2}}\mathcal{D})^{k}\ulomega||_{L^2_{sc}(\mathcal{S}_{u,v})}\lesssim
\bm\Psi[\TiPsi_2]+\underline{\bm\varphi}[\varphi_1]+1.
\end{align*}
\end{proof}

\begin{proposition}
\label{L2ulchi}
For~$0\leq k\leq10$, one has that
\begin{align*}
||(a^{\frac{1}{2}}\mathcal{D})^k\ulchi||_{L^2_{sc}(\mathcal{S}_{u,v})}\lesssim
\underline{\bm\Psi}[\TiPsi_2]+\underline{\bm\varphi}[\varphi_1]+1.
\end{align*}
\end{proposition}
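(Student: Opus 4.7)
The plan is to integrate the transport equation~\eqref{Tpulchi} in the incoming direction from $u=u_\infty$ to the current value of $u$, exploiting the fact that Stewart's gauge choice $Q=1$ on $\mathcal{N}_\star$ makes $\ulchi$ vanish on the initial outgoing hypersurface. Since $s(\ulchi)=0$, this proof is the $u$-direction analogue of the estimate for $\ulomega$ carried out in Proposition~\ref{L2ulomega}, with the outgoing cone norm $\bm\Psi[\TiPsi_2]$ replaced by its incoming counterpart $\underline{\bm\Psi}[\TiPsi_2]$ and the lightcone integration over $\mathcal{N}'_v(u_\infty,u)$ in place of $\mathcal{N}_u(0,v)$.

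First I would apply $\mathcal{D}^k$ to~\eqref{Tpulchi} and commute it past $\mthorn'$ using~\eqref{T-weightMasslessCommutatorSt3}. Taking $\meth^k\ulchi$ as the representative string, this gives schematically
\[
\mthorn'\meth^k\ulchi+k\mu\,\meth^k\ulchi=F,
\]
where $F$ collects $\meth^k(\TiPsi_2+\bar{\TiPsi}_2)$, $\meth^k(\varphi_0\varphi_2)$, the quadratic connection terms $\meth^k(\tau\pi,\pi\bar\pi,\ulomega\ulchi)$, and mixed lower-order commutator terms of the form $\meth^i\Gamma(\tilde\mu,\lambda,\ulomega)\meth^{k-i}\ulchi$. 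With $\lambda_0=-k$, Proposition~\ref{SCweightedgronwall} then yields an inequality of the form $a^{k/2}|u|^{-1}\|\meth^k\ulchi\|_{L^2_{sc}(\mathcal{S}_{u,v})}\lesssim\int_{u_\infty}^{u} a^{(k+2)/2}|u'|^{-3}\|F\|_{L^2_{sc}(\mathcal{S}_{u',v})}\,\mathrm{d}u'$, since the boundary contribution at $u_\infty$ vanishes.

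The curvature contribution is then handled by writing $a|u'|^{-3}=(a^{1/2}|u'|^{-2})\cdot(a^{1/2}|u'|^{-1})$ and applying Cauchy--Schwarz against the weight $a|u'|^{-2}$ that defines $\|\cdot\|_{L^2_{sc}(\mathcal{N}'_v)}$; this extracts $\underline{\bm\Psi}[\TiPsi_2]$ together with an auxiliary factor bounded by $a^{1/2}|u|^{-1/2}\lesssim 1$ inside $\mathbb{D}$. For $\varphi_0\varphi_2$ I would place the top-order derivative on $\varphi_2$, whose $L^2(\mathcal{S})$-estimate will be controlled by $\underline{\bm\varphi}[\varphi_1]$ through the forthcoming Propositions~\ref{L2varphi0} and~\ref{L2varphi12}, and estimate the remaining $\varphi_0$ factor by the bootstrap $L^\infty_{sc}$-norm; this is precisely the mechanism used in the closing step of Proposition~\ref{L2ulomega}. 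All remaining source and commutator terms are products of bootstrap $L^\infty_{sc}$-norms and contribute $\lesssim a^{1/2}|u|^{-1}\mathcal{O}^c\lesssim 1$, using $\mathcal{O}^{20}\leq a^{1/16}$ and $a\leq|u|$.

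The one new feature relative to the $\ulomega$ argument is the self-coupling $\ulomega\ulchi$ on the right-hand side of~\eqref{Tpulchi}: the bootstrap bound $\|\ulomega\|_{L^\infty_{sc}}\leq\mathcal{O}$ turns its contribution into a term linear in $\|\meth^k\ulchi\|_{L^2_{sc}(\mathcal{S}_{u',v})}$ with a small coefficient of order $a^{1/2}|u'|^{-2}\mathcal{O}$, which I would absorb by a Gr\"onwall argument of exactly the type already embedded in the proof of Proposition~\ref{SCweightedgronwall}. The main obstacle is therefore expected to be purely bookkeeping: tracking the $a$- and $|u|$-weights across all strings $\mathcal{D}^{k_i}\ulchi$ so that, upon summing via~\eqref{T-weightL2Norm}, the left-hand side reassembles into $\|(a^{1/2}\mathcal{D})^k\ulchi\|_{L^2_{sc}(\mathcal{S}_{u,v})}$ and the right-hand side lands precisely on $\underline{\bm\Psi}[\TiPsi_2]+\underline{\bm\varphi}[\varphi_1]+1$.
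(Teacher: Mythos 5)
Your proposal follows essentially the same route as the paper's proof: commuting $\meth^k$ through the transport equation~\eqref{Tpulchi}, applying Proposition~\ref{SCweightedgronwall} with $\lambda_0=-k$ and vanishing data on $\mathcal{N}_\star$, extracting $\underline{\bm\Psi}[\TiPsi_2]$ from the curvature term by Cauchy--Schwarz against the $a|u'|^{-2}$ weight, and controlling $\varphi_0\varphi_2$ via the forthcoming estimates yielding $\underline{\bm\varphi}[\varphi_1]$. The only cosmetic difference is that the paper absorbs the $\ulomega\ulchi$ term directly into the bootstrap-bounded quadratic sum $I_3$ rather than invoking a separate Gr\"onwall absorption, but both are valid.
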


\begin{proof}
  We make use of the transport equation~\eqref{Tpulchi} ---namely
\begin{align*}
\mthorn'\ulchi=2\tau\pi+2\bar\tau\bar\pi+2\pi\bar\pi+\TiPsi_2+\bar\TiPsi_2+6\varphi_0\varphi_2+\ulomega\ulchi.
\end{align*}
Commuting~$\mthorn'$ with~$\meth^k$ we have that
\begin{align*}
\mthorn'\meth^k\ulchi+k\mu\meth^k\ulchi=\meth^k\TiPsi_2
+\sum_{i=0}^k\meth^i\Gamma(\tau,\pi)\meth^{k-i}\pi
+\sum_{i=0}^k\meth^i\ulomega\meth^{k-i}\ulchi
+\sum_{i=0}^k\meth^i\varphi_0\meth^{k-i}\varphi_2.
\end{align*}
Denote the right-hand side of the above equation by~$F$. One has that 
\begin{align*}
&s_2(\ulchi)=0, \ \ s_2(\meth^k\ulchi)=\frac{k}{2}, \ \ s_2(F)=\frac{k+2}{2}, \\
&\lambda_0=-k, \ \ \lambda_1=-\lambda_0-1=k-1, \\
& \lambda_1-2s_2(F)=-3, \ \ \lambda_1-2s_2(\meth^k\ulchi)=-1.
\end{align*}
Now, making use of Proposition~\ref{SCweightedgronwall}, one obtains
\begin{align*}
\frac{1}{|u|}||(a^{\frac{1}{2}}\meth)^k\ulchi||_{L^2_{sc}(\mathcal{S}_{u,v})}&\lesssim
\int_{u_{\infty}}^u\frac{a}{|u'|^3}||a^{\frac{k}{2}}F||_{L^2_{sc}(\mathcal{S}_{u',v})} \\
&\lesssim\int_{u_{\infty}}^u\frac{a}{|u'|^3}||a^{\frac{k}{2}}\meth^k\TiPsi_2||_{L^2_{sc}(\mathcal{S}_{u',v})} 
+\sum_{i=0}^k\int_{u_{\infty}}^u\frac{a}{|u'|^3}||a^{\frac{k}{2}}\meth^i\varphi_0
\meth^{k-i}\varphi_2||_{L^2_{sc}(\mathcal{S}_{u',v})} \\
&+\sum_{i=0}^k\int_{u_{\infty}}^u\frac{a}{|u'|^3}||a^{\frac{k}{2}}\meth^i\Gamma(\tau,\pi,\ulomega)
\meth^{k-i}\Gamma(\pi,\ulchi)||_{L^2_{sc}(\mathcal{S}_{u',v})} \\
&=I_1+I_2+I_3.
\end{align*}

The term~$I_1$ can be estimated as follows:
\begin{align*}
I_1&\leq\left(\int_{u_{\infty}}^u\frac{a}{|u'|^2}||a^{\frac{k}{2}}\meth^k\TiPsi_2||^2_{L^2_{sc}(\mathcal{S}_{u',v})}\right)^{\frac{1}{2}}
\left(\int_{u_{\infty}}^u\frac{a}{|u'|^4} \right)^{\frac{1}{2}} \\
&\leq\underline{\Psi}[\TiPsi_2]\frac{a^{\frac{1}{2}}}{|u|^{\frac{3}{2}}}
\leq\frac{\underline{\bm\Psi}[\TiPsi_2]}{|u|}.
\end{align*}

For~$I_2$ we have that
\begin{align*}
I_2&\leq\sum_{i=0}^k\int_{u_{\infty}}^u\frac{a}{|u'|^4}||(a^{\frac{1}{2}}\meth)^i\varphi_0||\times
||(a^{\frac{1}{2}}\meth)^{k-i}\varphi_2|| \\
&\leq\int_{u_{\infty}}^u\frac{a}{|u'|^4}a^{\frac{1}{2}}\Gamma(\varphi_0)\frac{|u|}{a^{\frac{1}{2}}}\Gamma(\varphi_2)
\leq\frac{\Gamma(\varphi_0)\Gamma(\varphi_2)}{|u|} \\
&\lesssim\frac{\underline{\bm\varphi}[\varphi_1]+1}{|u|}.
\end{align*}
The last step in the above chain of inequalities will be shown in the
Propositions~\ref{L2varphi0} and~\ref{L2varphi12}.

Finally, for~$I_3$ we have that 
\begin{align*}
I_3&\leq\sum_{i=0}^k\int_{u_{\infty}}^u\frac{a}{|u'|^4}||(a^{\frac{1}{2}}\meth)^i\Gamma||\times
||(a^{\frac{1}{2}}\meth)^{k-i}\Gamma|| 
\leq\int_{u_{\infty}}^u\frac{a}{|u'|^4}\mathcal{O}^2
\lesssim\frac{1}{|u|}.
\end{align*}

Collecting the previous estimates, we obtain that 
\begin{align*}
\frac{1}{|u|}||(a^{\frac{1}{2}}\meth)^k\ulchi||_{L^2_{sc}(\mathcal{S}_{u,v})}\lesssim
\frac{\underline{\bm\Psi}[\TiPsi_2]}{|u|}+\frac{\underline{\bm\varphi}[\varphi_1]+1}{|u|},
\end{align*}
from which it follows that 
\begin{align*}
||(a^{\frac{1}{2}}\mathcal{D})^k\ulchi||_{L^2_{sc}(\mathcal{S}_{u,v})}\lesssim
\underline{\Psi}[\bm\TiPsi_2]+\underline{\bm\varphi}[\varphi_1]+1.
\end{align*}
\end{proof}

\begin{proposition}
\label{L2tau}
For~$0\leq k\leq10$, one has that
\begin{align*}
||(a^{\frac{1}{2}}\mathcal{D})^{k}\tau||_{L^2_{sc}(\mathcal{S}_{u,v})}\lesssim
\bm\Psi[\TiPsi_1]+\underline{\bm\varphi}[\Tivarphi_2]\underline{\bm\varphi}[\varphi_1]
+\underline{\bm\varphi}[\Tivarphi_2]+\underline{\bm\varphi}[\varphi_1]+1.
\end{align*}
\end{proposition}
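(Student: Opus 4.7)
The plan is to integrate the outgoing structure equation \eqref{T-weightMasslessStructureStEq9},
\begin{align*}
\mthorn\tau = (\tau+\bar\pi)\rho + (\bar\tau+\pi)\sigma + \TiPsi_1 + 6\varphi_0\varphi_1,
\end{align*}
from $v=0$, where the Minkowskian data on $\mathcal{N}'_\star$ give $\tau|_{v=0}=0$. This follows exactly the template of Propositions \ref{L2sigma} and \ref{L2ulomega}, in which a Weyl source plus lower-order products drive the evolution of a quantity vanishing on the initial ingoing slice; only the nonlinear matter term $6\varphi_0\varphi_1$ is new relative to Proposition \ref{L2sigma}.

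First I commute each string $\mathcal{D}^{k_i}$ through $\mthorn$ using \eqref{T-weightMasslessCommutatorSt2}. Taking $\meth^k$ as the representative, this yields, schematically,
\begin{align*}
\mthorn\meth^k\tau &= \meth^k\TiPsi_1 + 6\,\meth^k(\varphi_0\varphi_1)
+ \sum_{i_1+i_2+i_3=k,\,i_3<k}\meth^{i_1}\Gamma(\tau,\pi)^{i_2}\meth^{i_3}\TiPsi_1 \\
&\quad + \sum_{i_1+i_2+i_3+i_4=k}\meth^{i_1}\Gamma(\tau,\pi)^{i_2}\meth^{i_3}\Gamma(\tau,\pi,\rho,\sigma)\meth^{i_4}\tau.
\end{align*}
I then apply the scale-invariant transport inequality \eqref{SCldirectiongronwall}. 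The curvature term integrates exactly as in Proposition \ref{L2sigma}, producing the contribution $\bm\Psi[\TiPsi_1]$. The connection-coefficient products are controlled by the bootstrap assumption $\bmGamma\leq\mathcal{O}$ combined with $\mathcal{O}^{20}\leq a^{1/16}$; each carries a factor of order $a^{1/2}/|u|\cdot\mathcal{O}^{n}\lesssim 1$ after integration in $v'$.

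The decisive term is $\varphi_0\varphi_1$. Using the scale-invariant H\"older inequality I split the top derivative into an $L^2_{sc}$ factor against an $L^\infty_{sc}$ factor, and invoke the forthcoming $L^2(\mathcal{S})$ estimates for $\varphi_0$ and $\varphi_1$ (Propositions \ref{L2varphi0} and \ref{L2varphi12}), whose right-hand sides are controlled respectively by $\underline{\bm\varphi}[\varphi_1]+1$ and by $\underline{\bm\varphi}[\Tivarphi_2]+1$. The product then expands as
\begin{align*}
\bigl(\underline{\bm\varphi}[\varphi_1]+1\bigr)\bigl(\underline{\bm\varphi}[\Tivarphi_2]+1\bigr)
\lesssim \underline{\bm\varphi}[\Tivarphi_2]\,\underline{\bm\varphi}[\varphi_1] + \underline{\bm\varphi}[\Tivarphi_2] + \underline{\bm\varphi}[\varphi_1] + 1,
\end{align*}
which matches the nonlinear structure asserted in the statement. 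The remaining strings in $\mathcal{D}^{k_i}\tau$ and the summation in Definition \eqref{T-weightL2Norm} are handled identically.

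The main obstacle is the bookkeeping in the Leibniz expansion of $\meth^k(\varphi_0\varphi_1)$: in each summand one must correctly identify which factor carries the top-order derivative and pair it with the lower-derivative factor in $L^\infty_{sc}$, so that the $L^2$ estimates for $\varphi_0$ and $\varphi_1$ being invoked are at the appropriate derivative level (the norms $\bmvarphi_{i,2}$ and $\bmvarphi_{i,\infty}$ are available up to $i\leq 10$ and $i\leq 6$ respectively, which is ample for $k\leq 10$). Because the $\TiPsi_1$ term appears linearly and the nonlinearity $\varphi_0\varphi_1$ does not contain a top-order derivative of $\tau$, there is no top-order feedback into $\tau$, so the estimate closes without loss of derivatives.
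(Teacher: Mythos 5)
Your proposal is correct and follows essentially the same route as the paper: integrate \eqref{T-weightMasslessStructureStEq9} from the vanishing Minkowskian data at $v=0$ via \eqref{SCldirectiongronwall}, extract $\bm\Psi[\TiPsi_1]$ from the linear curvature term, and bound the $\varphi_0\varphi_1$ contribution by $\tfrac{a}{|u|}\bmGamma(\varphi_0)\bmGamma(\varphi_1)$, which Propositions~\ref{L2varphi0} and~\ref{L2varphi12} convert into the product $(\underline{\bm\varphi}[\varphi_1]+1)(\underline{\bm\varphi}[\Tivarphi_2]+1)$ appearing in the statement. No gaps.
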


\begin{proof}

We make use of the structure equation~\eqref{T-weightMasslessStructureStEq9}
\begin{align*}
\mthorn\tau&=(\tau+\bar\pi)\rho+(\bar\tau+\pi)\sigma
+\TiPsi_1+6\varphi_0\varphi_1.
\end{align*}
For~$k\leq10$, commuting~$\meth^{k}$ with~$\mthorn$, we have that 
\begin{align*}
  \mthorn\meth^{k}\tau
  &=\meth^k\TiPsi_1
+\sum_{i_1+i_2+i_3=k,i_3<k}\meth^{i_1}\Gamma(\tau,\pi)^{i_2}
\meth^{i_3}\TiPsi_1
  \\
  &+\sum_{i_1+i_2+i_3+i_4=k}\meth^{i_1}\Gamma(\tau,\pi)^{i_2}
\meth^{i_3}\Gamma(\tau,\pi,\rho,\sigma)
\meth^{i_4}\Gamma(\tau,\pi)+\sum_{i=0}^k\meth^i\varphi_0\meth^{k-i}\varphi_1\\
  &+\sum_{i_1+i_2+i_3+i_4=k,i_3+i_4<k}\meth^{i_1}\Gamma(\tau,\pi)^{i_2}
    \meth^{i_3}\varphi_0\meth^{i_4}\varphi_1.
\end{align*}
Next, we make use of the transport estimate in scale-invariant norm
\eqref{SCldirectiongronwall}. One concludes that 
\begin{align*}
||(a^{\frac{1}{2}}\meth)^{k}\tau||_{L^2_{sc}(\mathcal{S}_{u,v})}&\lesssim
||(a^{\frac{1}{2}}\meth)^{k}\tau||_{L^2_{sc}(\mathcal{S}_{u,0})}+
\int_0^v||a^{\frac{k}{2}}\mthorn\meth^{k}\tau||_{L^2_{sc}(\mathcal{S}_{u,v'})}. 
\end{align*}
As the initial data of~$\tau$ on the initial ingoing light cone is
Minkowskian, we have that~$\tau=0$. Accordingly, we find that 
\begin{align*}
||(a^{\frac{1}{2}}\meth)^{k}\tau||_{L^2_{sc}(\mathcal{S}_{u,v})}&\lesssim
\int_0^v||a^{\frac{k}{2}}\mthorn\meth^{k}\tau||_{L^2_{sc}(\mathcal{S}_{u,v'})} 
\leq\int_0^v||a^{\frac{k}{2}}\mathcal{D}^{k}\TiPsi_1||_{L^2_{sc}(\mathcal{S}_{u,v'})} \\
&+\sum_{i_1+i_2+i_3=k,i_3<k}\int_0^v
||a^{\frac{k}{2}}\mathcal{D}^{i_1}\Gamma(\tau,\pi)^{i_2}\mathcal{D}^{i_3}\TiPsi_1||_{L^2_{sc}(\mathcal{S}_{u,v'})} \\
&+\sum_{i_1+i_2+i_3+i_4=k}\int_0^v
||a^{\frac{k}{2}}\mathcal{D}^{i_1}\Gamma(\tau,\pi)^{i_2}
\mathcal{D}^{i_3}\Gamma(\sigma,...)\mathcal{D}^{i_4}\Gamma||_{L^2_{sc}(\mathcal{S}_{u,v'})} \\
&+\sum_{i=0}^k\int_0^v||a^{\frac{k}{2}}\mathcal{D}^{i}\varphi_0
\mathcal{D}^{k-i}\varphi_1||_{L^2_{sc}(\mathcal{S}_{u,v'})} \\
&+\sum_{i_1+i_2+i_3+i_4=k,i_3+i_4<k}\int_0^v
||a^{\frac{k}{2}}\mathcal{D}^{i_1}\Gamma(\tau,\pi)^{i_2}
\mathcal{D}^{i_3}\varphi_0\mathcal{D}^{i_4}\varphi_1||_{L^2_{sc}(\mathcal{S}_{u,v'})} \\
&\leq\bm\Psi[\TiPsi_1]+\frac{a^{\frac{i_2}{2}}}{|u|^{i_2}}\mathcal{O}^{i_2+1}
+\frac{a^{\frac{i_2}{2}}}{|u|^{i_2+1}}\mathcal{O}^{i_2+1}a^{\frac{1}{2}}\mathcal{O}
+\frac{a}{|u|}\bmGamma(\varphi_0)\bmGamma(\varphi_1)
+\frac{a^{\frac{i_2}{2}}}{|u|^{i_2+1}}\mathcal{O}^{i_2}a^{\frac{1}{2}}\mathcal{O}\mathcal{O} \\
&\leq\bm\Psi[\TiPsi_1]+\frac{a}{|u|}\bmGamma(\varphi_0)\bmGamma(\varphi_1)+\frac{a^{\frac{1}{2}}}{|u|}\mathcal{O}
\leq\bm\Psi[\TiPsi_1]+\frac{a}{|u|}\bmGamma(\varphi_0)\bmGamma(\varphi_1)+1.
\end{align*}
The above leads to
\begin{align*}
||(a^{\frac{1}{2}}\mathcal{D})^{k}\tau||_{L^2_{sc}(\mathcal{S}_{u,v})}&\lesssim
\bm\Psi[\TiPsi_1]+\frac{a}{|u|}\bmGamma(\varphi_0)\bmGamma(\varphi_1)+1 \\
&\lesssim\bm\Psi[\TiPsi_1]+\underline{\bm\varphi}[\Tivarphi_2]\underline{\bm\varphi}[\varphi_1]
+\underline{\bm\varphi}[\Tivarphi_2]+\underline{\bm\varphi}[\varphi_1]+1.
\end{align*}

\end{proof}

\begin{proposition}
\label{L2rho}
For~$0\leq k\leq10$, one has that 
\begin{align*}
||(a^{\frac{1}{2}}\mathcal{D})^{k}\rho||_{L^2_{sc}(\mathcal{S}_{u,v})}\lesssim
(\bm\Psi[\Psi_0]+\underline{\bm\varphi}[\varphi_1]+1)^2.
\end{align*}
\end{proposition}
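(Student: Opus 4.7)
The plan is to apply the structure equation \eqref{T-weightMasslessStructureStEq1}, namely $\mthorn\rho = \rho^2 + \sigma\bar\sigma + 3\varphi_{0}^2$, and to follow the same commutator-plus-Gr\"onwall recipe used in Propositions \ref{L2sigma}--\ref{L2tau}. First I commute $\meth^{k}$ past $\mthorn$ using the formula at the end of Section~\ref{Preliminary} to obtain, schematically,
\begin{align*}
\mthorn\meth^{k}\rho
&= \sum_{k_1+k_2=k}\meth^{k_1}\rho\,\meth^{k_2}\rho
+ \sum_{k_1+k_2=k}\meth^{k_1}\sigma\,\meth^{k_2}\bar\sigma
+ 3\sum_{k_1+k_2=k}\meth^{k_1}\varphi_{0}\,\meth^{k_2}\varphi_{0}\\
&\quad+ \sum\meth^{i_1}\Gamma(\tau,\pi)^{i_2}\meth^{i_3}(\rho^2+\sigma\bar\sigma+3\varphi_{0}^2)
+ \sum\meth^{i_1}\Gamma(\tau,\pi)^{i_2}\meth^{i_3}\Gamma(\tau,\pi,\rho,\sigma)\meth^{i_4}\rho,
\end{align*}
and then apply the scale-invariant transport inequality \eqref{SCldirectiongronwall} to reduce the estimate to a boundary term at $v=0$ plus the $v$-integral of the $L^2_{sc}$-norm of the right-hand side above.

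For the boundary contribution at $v=0$, Minkowskian data on $\mathcal{N}'_{\star}$ force $\rho=-1/|u|$, which is constant along each 2-sphere, so $\meth^{k}\rho\big|_{v=0}=0$ for $k\geq 1$, and for $k=0$ one has $||\rho||_{L^2_{sc}(\mathcal{S}_{u,0})}=||\rho||_{L^2(\mathcal{S}_{u,0})}\sim |u|^{-1}\sqrt{\mbox{Area}(\mathcal{S}_{u,0})}\sim 1$. Thus the initial-data term is always $\lesssim 1$.

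The two matter-type source terms $\sigma\bar\sigma$ and $3\varphi_{0}^{2}$ furnish the quadratic piece of the target bound. Expanding each by Leibniz, applying the scale-invariant $L^{2}_{sc}$--$L^{\infty}_{sc}$ H\"older inequalities in every Leibniz pair, and using that the $v$-range $[0,1]$ is bounded, one obtains
\begin{align*}
\int_0^v||a^{k/2}\meth^{k}(\sigma\bar\sigma)||_{L^2_{sc}(\mathcal{S}_{u,v'})}\,\mathrm{d}v'
\lesssim \bmGamma(\sigma)^2,
\qquad
\int_0^v||a^{k/2}\meth^{k}(\varphi_{0}^{2})||_{L^2_{sc}(\mathcal{S}_{u,v'})}\,\mathrm{d}v'
\lesssim \bmGamma(\varphi_{0})^2.
\end{align*}
Invoking Proposition~\ref{L2sigma} gives $\bmGamma(\sigma)\lesssim \bm\Psi[\Psi_0]+1$, and (anticipating) Proposition~\ref{L2varphi0} gives $\bmGamma(\varphi_{0})\lesssim \underline\bmvarphi[\varphi_{1}]+1$; combining these produces precisely the $(\bm\Psi[\Psi_0]+\underline\bmvarphi[\varphi_{1}]+1)^{2}$ factor in the bound.

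The main obstacle is the self-coupling piece $\meth^{k}(\rho^{2})$. The dangerous top-order term is $2\rho\cdot\meth^{k}\rho$: by H\"older and the bootstrap bound $||\rho||_{L^{\infty}_{sc}}\leq\mathcal{O}$ one has $||\rho\,\meth^{k}\rho||_{L^2_{sc}}\leq \mathcal{O}|u|^{-1}||\meth^{k}\rho||_{L^2_{sc}}$, so its $v$-integral contributes a Gr\"onwall-type weight of size $\mathcal{O}/|u|\ll 1$ and can be absorbed on the left by the usual Gr\"onwall step. The intermediate Leibniz pieces of $\meth^{k}(\rho^{2})$ with $1\leq k_1,k_2\leq k-1$, together with the commutator contributions $\meth^{i_1}\Gamma^{i_2}\meth^{i_3}\Gamma\meth^{i_4}\rho$, carry connection factors bounded in $L^{\infty}_{sc}$ by $\mathcal{O}$, so under the smallness hypothesis $\mathcal{O}^{20}\leq a^{1/16}$ each such contribution is at most of order $a^{1/2}/|u|\ll 1$. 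Finally, summing over all strings $\mathcal{D}^{k_{i}}$ and invoking definition~\eqref{T-weightL2Norm} for $||(a^{1/2}\mathcal{D})^{k}\rho||_{L^2_{sc}(\mathcal{S}_{u,v})}$ yields the claim.
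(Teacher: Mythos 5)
Your proposal is correct and follows essentially the same route as the paper: the structure equation $\mthorn\rho=\rho^2+\sigma\bar\sigma+3\varphi_0^2$, commutation of $\meth^k$ with $\mthorn$, the scale-invariant transport inequality \eqref{SCldirectiongronwall}, the Minkowskian data at $v=0$, and the identification of the $\sigma\bar\sigma$ and $\varphi_0^2$ sources (controlled via Propositions~\ref{L2sigma} and~\ref{L2varphi0}) as the origin of the quadratic bound. The only cosmetic difference is that the paper bounds the $\rho\,\meth^k\rho$ term directly by the bootstrap assumption ($\mathcal{O}^2/|u|\lesssim 1$) rather than absorbing it by Gr\"onwall, which is immaterial.
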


\begin{proof}
  We make use of the structure equation for~$\rho$ ---namely,
  \eqref{T-weightMasslessStructureStEq1}
\begin{align*}
\mthorn\rho&=\rho^2+\sigma\bar\sigma+3\varphi_{0}^2.
\end{align*}
For~$k\leq10$, commuting~$\meth^{k}$ with~$\mthorn$, we have that 
\begin{align*}
\mthorn\meth^{k}\rho&=\sum_{i=0}^k\meth^i\sigma\meth^{k-i}\bar\sigma+
\sum_{i=0}^k\meth^i\varphi_0\meth^{k-i}\varphi_0
+\sum_{i_1+i_2+i_3+i_4=k}\meth^{i_1}\Gamma(\tau,\pi)^{i_2}
\meth^{i_3}\rho\meth^{i_4}\rho \\
&+\sum_{i_1+i_2+i_3+i_4=k,i_3+i_4<k}\meth^{i_1}\Gamma(\tau,\pi)^{i_2}
\meth^{i_3}\sigma\meth^{i_4}\sigma
+\sum_{i_1+i_2+i_3+i_4=k,i_3+i_4<k}\meth^{i_1}\Gamma(\tau,\pi)^{i_2}
\meth^{i_3}\varphi_0\meth^{i_4}\varphi_0  \\
&+\sum_{i_1+i_2+i_3+i_4=k}\meth^{i_1}\Gamma(\tau,\pi)^{i_2}
\meth^{i_3}\Gamma(\tau,\pi,\rho,\sigma)
\meth^{i_4}\rho.
\end{align*}
Next, we make use of the transport estimate in terms of
scale-invariant norms, inequality~\eqref{SCldirectiongronwall}. One
finds that
\begin{align*}
&||(a^{\frac{1}{2}}\meth)^{k}\rho||_{L^2_{sc}(\mathcal{S}_{u,v})}\lesssim
||(a^{\frac{1}{2}}\meth)^{k}\rho||_{L^2_{sc}(\mathcal{S}_{u,0})}+
\int_0^v||a^{\frac{k}{2}}\mthorn\meth^{k}\rho||_{L^2_{sc}(\mathcal{S}_{u,v'})} \\
&\quad\quad\leq||(a^{\frac{1}{2}}\mathcal{D})^{k}\rho||_{L^2_{sc}(\mathcal{S}_{u,0})}
+\sum_{i=0}^k\int_0^v||a^{\frac{k}{2}}\mathcal{D}^{i}\sigma\mathcal{D}^{k-i}\sigma||_{L^2_{sc}(\mathcal{S}_{u,v'})} 
+\sum_{i=0}^k\int_0^v||a^{\frac{k}{2}}\mathcal{D}^{i}\varphi_0\mathcal{D}^{k-i}\varphi_0||_{L^2_{sc}(\mathcal{S}_{u,v'})} \\
&\quad\quad\quad+\sum_{i_1+i_2+i_3+i_4=k}\int_0^v
||a^{\frac{k}{2}}\mathcal{D}^{i_1}\Gamma(\tau,\pi)^{i_2}\mathcal{D}^{i_3}(\Gamma(\tau,\pi,\rho,\sigma),\varphi_0)
\mathcal{D}^{i_4}(\Gamma(\rho,\sigma),\varphi_0)||_{L^2_{sc}(\mathcal{S}_{u,v'})} \\
&\quad\quad\lesssim1+\frac{a}{|u|}\bmGamma(\sigma)_{\infty}\bmGamma(\sigma)_{2}
+\frac{a}{|u|}\bmGamma(\varphi_0)_{\infty}\bmGamma(\varphi_0)_{2}
+\frac{a^{\frac{i_2}{2}}}{|u|^{i_2+1}}\mathcal{O}^{i_2}a^{\frac{1}{2}}\mathcal{O}a^{\frac{1}{2}}\mathcal{O}
+\frac{a^{\frac{i_2}{2}}}{|u|^{i_2+1}}\mathcal{O}^{i_2+2} \\
&\quad\quad\leq1+\frac{a}{|u|}\left(\bmGamma(\sigma)_{\infty}\bmGamma(\sigma)_{2}
+\bmGamma(\varphi_0)_{\infty}\bmGamma(\varphi_0)_{2}\right)+\frac{\mathcal{O}}{|u|}+\frac{a\mathcal{O}}{|u|^2} \\
&\quad\quad\leq(\bm\Psi[\Psi_0]+\underline{\bm\varphi}[\varphi_1]+1)^2.
\end{align*}
Observe that in the third inequality one has
that~$i_2\neq0$. Similarly, one can estimate the rest of the strings
in~$\mathcal{D}^{k_i}\rho$ and obtain the same result. We conclude
then that
\begin{align*}
||(a^{\frac{1}{2}}\mathcal{D})^{k}\rho||_{L^2_{sc}(\mathcal{S}_{u,v})}\lesssim
(\bm\Psi[\Psi_0]+\underline{\bm\varphi}[\varphi_1]+1)^2.
\end{align*}
\end{proof}

\begin{remark}
\label{L2rhoAlt}
{\em The initial data of~$\rho$ on the ingoing light cone is
  Minkowskian ---that is, one has~$\rho=-\frac{1}{|u|}$. In terms of scale-invariant 
norms we have
\begin{align*}
||\rho||_{L^2_{sc}(\mathcal{S}_{u,0})}=1; \quad 
\sum_{1\leq k\leq10}||(a^{\frac{1}{2}}\mathcal{D})^{k}\rho||_{L^2_{sc}(\mathcal{S}_{u,0})}=0.
\end{align*}
Hence, when~$1\leq k\leq9$, we have that 
\begin{align*}
||(a^{\frac{1}{2}}\mathcal{D})^{k}\rho||_{L^2_{sc}(\mathcal{S}_{u,v})}\lesssim
\frac{a}{|u|}\left(\bmGamma(\sigma)_{\infty}\bmGamma(\sigma)_{2}
+\bmGamma(\varphi_0)_{\infty}\bmGamma(\varphi_0)_{2}\right)+
\frac{\mathcal{O}}{|u|}+\frac{a\mathcal{O}}{|u|^2}.
\end{align*}
}
\end{remark}

\begin{proposition}
\label{L2mu}
For~$0\leq k\leq10$, one has that 
\begin{align*}
\frac{a}{|u|}||(a^{\frac{1}{2}}\mathcal{D})^k\Timu||_{L^2_{sc}(\mathcal{S}_{u,v})}&\lesssim
\bm\Psi[\TiPsi_2]+\underline{\bm\varphi}[\varphi_1]+1,
\end{align*}
\begin{align*}
\frac{a}{|u|^2}||(a^{\frac{1}{2}}\mathcal{D})^k\mu||_{L^2_{sc}(\mathcal{S}_{u,v})}&\lesssim1.
\end{align*}
\end{proposition}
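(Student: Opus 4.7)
The plan is to follow the template of Proposition~\ref{L2lambda} using the $\mthorn'$-direction structure equation~\eqref{Deltatildemu},
\begin{equation*}
\mthorn'\tilde\mu+2\mu\tilde\mu=\tilde\mu^{2}-\ulomega\mu-\lambda\bar\lambda-3\varphi_{2}^{2}.
\end{equation*}
I deliberately avoid the $\mthorn$-equation~\eqref{Ttildemu} because its right-hand side contains $\meth\pi$, and commuting with $\meth^{10}$ would demand an eleventh-order angular derivative of $\pi$, beyond the bootstrap regularity. Commuting $\meth^{k}$ (as representative of a string $\mathcal{D}^{k_{i}}$) through $\mthorn'$ via the $G_{k}$-type identity of Section~\ref{Preliminary} yields
\begin{equation*}
\mthorn'\meth^{k}\tilde\mu+(k+2)\mu\,\meth^{k}\tilde\mu=F,
\end{equation*}
with $F$ collecting $\meth^{k}(\tilde\mu^{2}-\ulomega\mu-\lambda\bar\lambda-3\varphi_{2}^{2})$ together with commutator corrections of type $\meth^{i}\mu\,\meth^{k-i}\tilde\mu$ and $\meth^{i}\lambda\,\meth^{k-i}\tilde\mu$.

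Reading off $s_{2}(\tilde\mu)=1$, $\lambda_{0}=-(k+2)$, $\lambda_{1}=k+1$ and $s_{2}(F)=(k+4)/2$, Proposition~\ref{SCweightedgronwall} produces directly
\begin{align*}
\frac{a}{|u|}||(a^{1/2}\meth)^{k}\tilde\mu||_{L^{2}_{sc}(\mathcal{S}_{u,v})}
&\lesssim \frac{a}{|u_{\infty}|}||(a^{1/2}\meth)^{k}\tilde\mu||_{L^{2}_{sc}(\mathcal{S}_{u_{\infty},v})}\\
&\quad+\int_{u_{\infty}}^{u}\frac{a^{2}}{|u'|^{3}}||a^{k/2}F||_{L^{2}_{sc}(\mathcal{S}_{u',v})}\mathrm{d}u'.
\end{align*}
The initial-data term at $u=u_{\infty}$ is controlled by $\mathcal{I}_{0}+1$, by reconstructing $\tilde\mu$ on $\mathcal{N}_{\star}$ from the reduced data $(\sigma,\varphi_{0})$ through the CIVP hierarchy of Lemma~\ref{Lemma3}, together with the asymptotic assumption $|\tilde\mu|(u_{\infty},\cdot)\lesssim|u_{\infty}|^{-2}$ and its angular derivatives.

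For the bulk integrand, the dominant contribution comes from $(a^{1/2}\meth)^{k}(\ulomega\mu)$. Using Leibniz together with $\meth(1/u)=0$, the $i=k$ term reads $\mu\,(a^{1/2}\meth)^{k}\ulomega$. Applying the scale-invariant H\"older inequality with the sharp direct bound $||\mu||_{L^{\infty}_{sc}}\lesssim|u|^{2}/a$ (obtained by splitting $\mu=-1/|u|+\tilde\mu$, sharper than the bootstrap by a factor $\mathcal{O}$) and Proposition~\ref{L2ulomega} yields an integrand $\lesssim a|u'|^{-2}(\bm\Psi[\TiPsi_{2}]+\underline{\bm\varphi}[\varphi_{1}]+1)$, which integrates to at most $(a/|u|)(\bm\Psi[\TiPsi_{2}]+\underline{\bm\varphi}[\varphi_{1}]+1)\lesssim\bm\Psi[\TiPsi_{2}]+\underline{\bm\varphi}[\varphi_{1}]+1$ thanks to $|u|\geq a/4$. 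The remaining Leibniz terms with $i<k$ (derivatives that land on $\mu=\tilde\mu$, smaller by $\mathcal{O}/|u|$), the terms $\meth^{k}(\lambda\bar\lambda)$, $\meth^{k}(\varphi_{2}^{2})$ and the commutator corrections are quadratic in bootstrap-controlled quantities and contribute $\mathcal{O}^{2}/a^{1/2}$-type pieces absorbed into the $+1$. The self-interaction $\meth^{k}(\tilde\mu^{2})$ produces an integrand $\lesssim(\mathcal{O}/|u'|^{2})\cdot(a/|u'|)||(a^{1/2}\meth)^{k}\tilde\mu||_{L^{2}_{sc}}$ and is absorbed by a secondary Gr\"onwall step, which closes because $\int_{u_{\infty}}^{u}\mathcal{O}|u'|^{-2}\mathrm{d}u'\lesssim\mathcal{O}/|u|\ll 1$.

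For the second estimate, decompose $\mu=\tilde\mu+1/u$. At $k=0$, $||1/u||_{L^{2}_{sc}(\mathcal{S}_{u,v})}\lesssim|u|^{2}/a$, so $(a/|u|^{2})||\mu||_{L^{2}_{sc}}\leq(1/|u|)(a/|u|)||\tilde\mu||_{L^{2}_{sc}}+C\lesssim\mathcal{O}/|u|+1\lesssim 1$, using $|u|\geq a/4$ and $\mathcal{O}\ll a$. For $k\geq 1$ the sphere operators annihilate $1/u$, so $\meth^{k}\mu=\meth^{k}\tilde\mu$, and the extra factor $|u|^{-1}$ in $a/|u|^{2}$ relative to $a/|u|$ converts the first estimate into the required $\lesssim 1$. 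The principal obstacle is two-fold: ensuring that the initial datum for $\tilde\mu$ on $\mathcal{N}_{\star}$ is traced back compatibly to $\mathcal{I}_{0}$ despite $\tilde\mu$ not appearing in the reduced data; and isolating the $\ulomega\mu$ source using the sharp bound $||\mu||_{L^{\infty}_{sc}}\lesssim|u|^{2}/a$ rather than the strictly larger bootstrap bound, so that Proposition~\ref{L2ulomega} plugs in without spawning an extra factor of $\mathcal{O}$ that would prevent the estimate from closing at the bootstrap level.
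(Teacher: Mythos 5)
Your proposal is correct and follows essentially the same route as the paper: the same $\mthorn'$-equation for $\Timu$, the same commutation and application of Proposition~\ref{SCweightedgronwall} with $\lambda_0=-(k+2)$, the same isolation of the dominant $\ulomega\mu$ source closed via Proposition~\ref{L2ulomega}, and the same split $\mu=\Timu+1/u$ for the second estimate. Your explicit use of the sharp bound $||\mu||_{L^{\infty}_{sc}}\lesssim|u|^2/a$ (without the extra $\mathcal{O}$) is a small but welcome clarification of a step the paper leaves implicit; otherwise the arguments coincide.
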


\begin{proof}
In this case we make use of the ingoing equation of~$\Timu$ ---namely,
\begin{align*}
\mthorn'\Timu+2\mu\Timu=\Timu^2-\ulomega\mu-\lambda\bar\lambda-3\varphi_2^2 .
\end{align*}
Commuting~$\meth^k$ with~$\mthorn'$, we find that 
\begin{align*}
\mthorn'\meth^k\Timu+(k+2)\mu\meth^k\Timu&=
\sum_{i=0}^k\meth^i\Gamma(\Timu,\lambda)\meth^{k-i}\Gamma(\Timu,\lambda)
+\sum_{i=0}^k\meth^i\ulomega\meth^{k-i}\mu
+\sum_{i=0}^k\meth^i\varphi_2\meth^{k-i}\varphi_2.
\end{align*}
Denoting the right-hand side of the above equation by~$F$ it follows
then that 
\begin{align*}
&s_2(\Timu)=1, \ \ s_2(\meth^k\Timu)=\frac{k+2}{2}, \ \ s_2(F)=\frac{k+4}{2}, \\
&\lambda_0=-(k+2), \ \ \lambda_1=-\lambda_0-1=k+1, \\
& \lambda_1-2s_2(F)=-3, \ \ \lambda_1-2s_2(\meth^k\lambda)=-1.
\end{align*}
Now, making use of Proposition~\ref{SCweightedgronwall}, one has that
\begin{align*}
\frac{a}{|u|}||(a^{\frac{1}{2}}\meth)^k\Timu||_{L^2_{sc}(\mathcal{S}_{u,v})}&\lesssim
\frac{a}{|u_{\infty}|}||(a^{\frac{1}{2}}\meth)^k\Timu||_{L^2_{sc}(\mathcal{S}_{u_{\infty},v})}
+\int_{u_{\infty}}^u\frac{a^2}{|u'|^3}||a^{\frac{k}{2}}F||_{L^2_{sc}(\mathcal{S}_{u',v})} \\
&\leq\frac{a}{|u_{\infty}|}||(a^{\frac{1}{2}}\mathcal{D})^k\Timu||_{L^2_{sc}(\mathcal{S}_{u_{\infty},v})}
+\int_{u_{\infty}}^u\frac{a^2}{|u'|^3}||a^{\frac{k}{2}}\mu
\mathcal{D}^{k}\ulomega||_{L^2_{sc}(\mathcal{S}_{u',v})} \\
&+\sum_{i=1}^k\int_{u_{\infty}}^u\frac{a^2}{|u'|^3}||a^{\frac{k}{2}}\mathcal{D}^i\Timu
\mathcal{D}^{k-i}\ulomega||_{L^2_{sc}(\mathcal{S}_{u',v})} \\
&+\sum_{i=0}^k\int_{u_{\infty}}^u\frac{a^2}{|u'|^3}||a^{\frac{k}{2}}\mathcal{D}^i\varphi_2
\mathcal{D}^{k-i}\varphi_2||_{L^2_{sc}(\mathcal{S}_{u',v})}  \\
&+\sum_{i=0}^k\int_{u_{\infty}}^u\frac{a^2}{|u'|^3}||a^{\frac{k}{2}}\mathcal{D}^i(\Timu,\lambda)
\mathcal{D}^{k-i}(\Timu,\lambda)||_{L^2_{sc}(\mathcal{S}_{u',v})}  \\
&\lesssim1+\int_{u_{\infty}}^u\frac{a^2}{|u'|^3}\frac{1}{|u'|}\frac{|u'|^2}{a}\mathcal{O}\bmGamma(\ulomega)
+\int_{u_{\infty}}^u\frac{a^2}{|u'|^3}\frac{1}{|u'|}\frac{|u'|}{a^{\frac{1}{2}}}\frac{|u'|}{a^{\frac{1}{2}}}\bmGamma(\varphi_2)^2 \\
&+\int_{u_{\infty}}^u\frac{a^2}{|u'|^3}\frac{1}{|u'|}
\left(\frac{|u'|}{a}\mathcal{O}\mathcal{O}
+\frac{|u'|}{a}\mathcal{O}\frac{|u'|}{a}\mathcal{O}
+\frac{|u'|}{a^{\frac{1}{2}}}\frac{|u'|}{a^{\frac{1}{2}}}
+\frac{|u'|}{a^{\frac{1}{2}}}\frac{|u'|}{a}\mathcal{O}\right) \\
&\leq1+\frac{a}{|u|}(\bmGamma(\ulomega)+\bmGamma(\varphi_2)^2)\\
&\leq1+\bm\Psi[\TiPsi_2]+\underline{\bm\varphi}[\varphi_1],
\end{align*}
so that, in the end, one has that 
\begin{align*}
\frac{a}{|u|}||(a^{\frac{1}{2}}\mathcal{D})^k\Timu||_{L^2_{sc}(\mathcal{S}_{u,v})}&\lesssim
\bm\Psi[\TiPsi_2]+\underline{\bm\varphi}[\varphi_1]+1,
\end{align*}
so that, consequently,
\begin{align*}
\frac{a}{|u|^2}||(a^{\frac{1}{2}}\mathcal{D})^k\mu||_{L^2_{sc}(\mathcal{S}_{u,v})}&\lesssim 1.
\end{align*}

\end{proof}

\begin{proposition}
\label{L2pi}
For~$0\leq k\leq10$, one has that 
\begin{align*}
||(a^{\frac{1}{2}}\mathcal{D})^k\pi||_{L^2_{sc}(\mathcal{S}_{u,v})}\lesssim
\underline{\bm\Psi}[\TiPsi_3]+\bm\Psi[\TiPsi_1]+\underline{\bm\varphi}[\Tivarphi_2]+1.
\end{align*}
\end{proposition}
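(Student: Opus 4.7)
\textbf{Plan of proof for Proposition \ref{L2pi}.} My approach is to mimic the structure of Proposition \ref{L2mu} (the estimate for $\tilde\mu$), since both $\pi$ and $\tilde\mu$ satisfy ingoing transport equations whose linear $\mu$-coefficient needs to be handled via the weighted Gr\"onwall inequality of Proposition \ref{SCweightedgronwall}. I would start from the structure equation \eqref{T-weightMasslessStructureStEq10}, rewritten as
\begin{align*}
\mthorn'\pi+\mu\pi=-\bar\tau\mu-(\tau+\bar\pi)\lambda-\TiPsi_3-6\varphi_{2}\bar\varphi_{1},
\end{align*}
so that the linear absorbing coefficient is $\lambda_0=-1$. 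Commuting $\meth^k$ with $\mthorn'$ by means of \eqref{T-weightMasslessCommutatorSt5} gives
\begin{align*}
\mthorn'\meth^k\pi+(k+1)\mu\meth^k\pi=F_k,
\end{align*}
where $F_k$ collects $\meth^k\TiPsi_3$, $\meth^k(\bar\tau\mu)$, $\meth^k(\tau\lambda,\bar\pi\lambda)$, $\meth^k(\varphi_2\bar\varphi_1)$, together with the usual commutator strings $\sum_i\meth^i\mu\meth^{k-i}\pi$ and $\sum_i\meth^i\lambda\meth^{k-i}\pi$.

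Recording signatures $s_2(\pi)=\tfrac12$, $s_2(\meth^k\pi)=\tfrac{k+1}{2}$, $s_2(F_k)=\tfrac{k+3}{2}$ and $\lambda_1=-\lambda_0-1=k$ (after commutation), Proposition \ref{SCweightedgronwall} yields
\begin{align*}
\frac{1}{|u|}\|(a^{1/2}\meth)^k\pi\|_{L^2_{sc}(\mathcal{S}_{u,v})}
\lesssim \frac{1}{|u_{\infty}|}\|(a^{1/2}\meth)^k\pi\|_{L^2_{sc}(\mathcal{S}_{u_{\infty},v})}
+\int_{u_{\infty}}^u\frac{a}{|u'|^3}\|a^{k/2}F_k\|_{L^2_{sc}(\mathcal{S}_{u',v})}\,\mathrm{d}u'.
\end{align*}
I would multiply through by $|u|$ (using $|u|/|u_{\infty}|\leq 1$ to control the initial contribution by $\mathcal{I}_0$) and then estimate each ingredient of $F_k$ term by term.

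The source $\meth^k\TiPsi_3$ yields, via Cauchy--Schwarz along the $u'$-integration,
\begin{align*}
\int_{u_{\infty}}^u\frac{a}{|u'|^3}\|(a^{1/2}\meth)^k\TiPsi_3\|_{L^2_{sc}}\,\mathrm{d}u'
\lesssim \frac{a^{1/2}}{|u|^{3/2}}\,\underline{\bm\Psi}[\TiPsi_3],
\end{align*}
which after multiplication by $|u|$ contributes $\underline{\bm\Psi}[\TiPsi_3]$. The term $\bar\tau\mu$, upon splitting $\mu=-1/|u|+\Timu$, produces a source essentially proportional to $\tau/|u|$; plugging in the $L^2$ bound from Proposition \ref{L2tau} brings in the pieces $\bm\Psi[\TiPsi_1]$, $\underline{\bm\varphi}[\Tivarphi_2]$, $\underline{\bm\varphi}[\varphi_1]$, $1$, and the $\Timu$ piece is lower order via Proposition \ref{L2mu}. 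The terms $\tau\lambda$ and $\bar\pi\lambda$ are quadratic in connection coefficients with decaying factor coming from $\lambda$; under the bootstrap \eqref{Hypothesis} these contribute at worst $\mathcal{O}^2/|u|$, absorbed into $1$. Finally, the matter term $\meth^k(\varphi_2\bar\varphi_1)$ is handled via scale-invariant H\"older: one derivative stays at $L^2_{sc}$ while the other sits in $L^\infty_{sc}$, and the resulting time integral is controlled using $\underline{\bm\varphi}[\varphi_1]$ plus the bootstrap bound on $\bmvarphi_{i,\infty}[\varphi_2]$, yielding at most $\underline{\bm\varphi}[\Tivarphi_2]+\underline{\bm\varphi}[\varphi_1]+1$. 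Summing all strings $\mathcal{D}^{k_i}\pi$ by the convention of \eqref{T-weightL2Norm01} gives the stated bound.

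The main obstacle I expect is bookkeeping the quadratic cross-terms: the $\tau$ bound of Proposition \ref{L2tau} contains the genuinely quadratic piece $\underline{\bm\varphi}[\Tivarphi_2]\underline{\bm\varphi}[\varphi_1]$, whereas the statement of Proposition \ref{L2pi} is purely linear in the curvature/matter norms. One has to verify that this cross-term, when multiplied by the $a/|u'|^3$ weight and integrated, is in fact small under the smallness condition $\mathcal{O}^{20}\le a^{1/16}$ of the bootstrap, so it can be absorbed into the $+1$. A parallel subtlety arises in $\meth^k(\varphi_2\bar\varphi_1)$: because $\varphi_2$ by itself has the extra $a^{1/2}/|u|$ decay (rather than peeling like $\phi_2$ of Maxwell), one must use the combination $\Tivarphi_2=\meth'\varphi_2+\mu\bar\varphi_1$ to avoid a logarithmic loss when the angular derivative lands on $\varphi_2$; this is exactly the reason the renormalised variable $\Tivarphi_2$ was introduced in the first place (cf.\ Remark \ref{remark:etheth'decay}), and it is what forces $\underline{\bm\varphi}[\Tivarphi_2]$, rather than a higher-derivative norm of $\varphi_2$, to appear on the right-hand side.
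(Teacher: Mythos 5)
Your proposal follows essentially the same route as the paper: the same transport equation \eqref{T-weightMasslessStructureStEq10} rewritten with the absorbing $(k+1)\mu$ term, the same signature bookkeeping feeding into Proposition \ref{SCweightedgronwall}, and the same term-by-term treatment of $\meth^k\TiPsi_3$ (Cauchy--Schwarz giving $\underline{\bm\Psi}[\TiPsi_3]$), the $\mu\,\meth^k\bar\tau$ term (reduced to the $\tau$ estimate of Proposition \ref{L2tau}), the quadratic connection terms, and $\meth^k(\varphi_2\bar\varphi_1)$ via the $L^2(\mathcal{S})$ bounds on $\varphi_1,\varphi_2$. Your closing remark about the quadratic cross-term $\underline{\bm\varphi}[\Tivarphi_2]\underline{\bm\varphi}[\varphi_1]$ inherited from the $\tau$ bound is apt — the paper's own final display in fact retains that cross-term before it is absorbed — so the plan is sound.
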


\begin{proof}

  We make use of the structure
  equation~\eqref{T-weightMasslessStructureStEq10}, that is
\begin{align*}
\mthorn'\pi=-(\pi+\bar\tau)\mu-(\bar\pi+\tau)\lambda
-\TiPsi_3-6\varphi_{2}\bar\varphi_{1}.
\end{align*}
Commuting$\meth^k$ with~$\mthorn'$, we find that 
\begin{align*}
\mthorn'\meth^k\pi+(k+1)\mu\meth^k\pi&=-\meth^k\TiPsi_3-\mu\meth^k\bar\tau
+\sum_{i=0}^k\meth^i\Gamma(\tau,\pi)\meth^{k-i}\Gamma(\Timu,\lambda) 
+\sum_{i=0}^k\meth^i\bar\varphi_1\meth^{k-i}\varphi_2. 
\end{align*}
Denoting the right-hand side of the above expression by~$F$, it follows
that 
\begin{align*}
&s_2(\pi)=\frac{1}{2}, \ \ s_2(\meth^k\pi)=\frac{k+1}{2}, \ \ s_2(F)=\frac{k+3}{2}, \\
&\lambda_0=-(k+1), \ \ \lambda_1=-\lambda_0-1=k, \\
& \lambda_1-2s_2(F)=-3, \ \ \lambda_1-2s_2(\meth^k\pi)=-1.
\end{align*}
Now, making use of Proposition~\ref{SCweightedgronwall} one has
\begin{align*}
\frac{a^{\frac{1}{2}}}{|u|}||(a^{\frac{1}{2}}\meth)^k\pi||_{L^2_{sc}(\mathcal{S}_{u,v})}\lesssim
\frac{a^{\frac{1}{2}}}{|u_{\infty}|}||(a^{\frac{1}{2}}\meth)^k\pi||_{L^2_{sc}(\mathcal{S}_{u_{\infty},v})}
+\int_{u_{\infty}}^u\frac{a^\frac{3}{2}}{|u'|^3}||a^{\frac{k}{2}}F||_{L^2_{sc}(\mathcal{S}_{u',v})}.
\end{align*}
Multiplying by~$a^{-\frac{1}{2}}$ one has that
\begin{align*}
\frac{1}{|u|}||(a^{\frac{1}{2}}\meth)^k\pi||_{L^2_{sc}(\mathcal{S}_{u,v})}&\lesssim
\frac{1}{|u_{\infty}|}||(a^{\frac{1}{2}}\meth)^k\pi||_{L^2_{sc}(\mathcal{S}_{u_{\infty},v})}
+\int_{u_{\infty}}^u\frac{a}{|u'|^3}||a^{\frac{k}{2}}F||_{L^2_{sc}(\mathcal{S}_{u',v})} \\
&\lesssim\frac{1}{|u_{\infty}|}
+\int_{u_{\infty}}^u\frac{a}{|u'|^3}||a^{\frac{k}{2}}\mathcal{D}^k\TiPsi_3||_{L^2_{sc}(\mathcal{S}_{u',v})} 
+\int_{u_{\infty}}^u\frac{a}{|u'|^3}||a^{\frac{k}{2}}\mu\mathcal{D}^k\bar\tau||_{L^2_{sc}(\mathcal{S}_{u',v})} \\
&+\sum_{i=0}^k\int_{u_{\infty}}^u\frac{a}{|u'|^3}||a^{\frac{k}{2}}\mathcal{D}^i(\tau,\pi)
\mathcal{D}^{k-i}(\Timu,\lambda)||_{L^2_{sc}(\mathcal{S}_{u',v})} \\
&+\sum_{i=0}^k\int_{u_{\infty}}^u\frac{a}{|u'|^3}||a^{\frac{k}{2}}\mathcal{D}^i\varphi_1
\mathcal{D}^{k-i}\varphi_2||_{L^2_{sc}(\mathcal{S}_{u',v})} \\
&=\frac{1}{|u_{\infty}|}+I_1+...+I_4.
\end{align*}

Now, for the term~$I_1$ we have that
\begin{align*}
I_1&\leq\left(\int_{u_{\infty}}^u\frac{a}{|u'|^2}||a^{\frac{k}{2}}\mathcal{D}^k
\TiPsi_3||^2_{L^2_{sc}(\mathcal{S}_{u',v})} \right)^{\frac{1}{2}}
\left(\int_{u_{\infty}}^u\frac{a}{|u'|^4}\right)^{\frac{1}{2}} 
\leq\frac{1}{|u|}\underline{\bm\Psi}[\TiPsi_3].
\end{align*}
For~$I_2$ we have that
\begin{align*}
I_2\leq\int_{u_{\infty}}^u\frac{a}{|u'|^3}\frac{1}{|u'|}||\mu||_{L^{\infty}_{sc}(\mathcal{S}_{u',v})} 
||a^{\frac{k}{2}}\mathcal{D}^k\bar\tau||_{L^2_{sc}(\mathcal{S}_{u',v})} 
\leq\int_{u_{\infty}}^u\frac{a}{|u'|^4}\frac{|u'|^2}{a}\Gamma(\tau) 
\lesssim\frac{\bm{\Gamma}(\tau)}{|u|}.
\end{align*}
For~$I_3$ we have that
\begin{align*}
I_3\leq\int_{u_{\infty}}^u\frac{a}{|u'|^3}\frac{1}{|u'|}\mathcal{O}\frac{|u'|}{a^{\frac{1}{2}}}
\leq\frac{1}{|u|}.
\end{align*}
Finally, for~$I_4$ we have that
\begin{align*}
I_4\leq\int_{u_{\infty}}^u\frac{a}{|u'|^3}\frac{1}{|u'|}a^{\frac{1}{2}}\bmGamma(\varphi_1)\frac{|u'|}{a^{\frac{1}{2}}}\bmGamma(\varphi_2)
\leq\frac{\bmGamma(\varphi_1)\bmGamma(\varphi_2)}{|u|}.
\end{align*}

Now, combining~$I_1$ and~$I_4$ we see that 
\begin{align*}
\frac{1}{|u|}||(a^{\frac{1}{2}}\meth)^k\pi||_{L^2_{sc}(\mathcal{S}_{u,v})}\lesssim
\frac{\underline{\bm\Psi}[\TiPsi_3]+\bm\Gamma(\tau)+\bmGamma(\varphi_1)\bmGamma(\varphi_2)+1}{|u|} ,
\end{align*}
which leads to
\begin{align*}
||(a^{\frac{1}{2}}\mathcal{D})^k\pi||_{L^2_{sc}(\mathcal{S}_{u,v})}&\lesssim
\underline{\bm\Psi}[\TiPsi_3]+\bm\Gamma(\tau)+\bmGamma(\varphi_1)\bmGamma(\varphi_2)+1 \\
&\lesssim\underline{\bm\Psi}[\TiPsi_3]+\bm\Psi[\TiPsi_1]
+\underline{\bm\varphi}[\Tivarphi_2]\underline{\bm\varphi}[\varphi_1]
+\underline{\bm\varphi}[\Tivarphi_2]+\underline{\bm\varphi}[\varphi_1]+1.
\end{align*}
\end{proof}

\subsection{$L^2(\mathcal{S})$ estimates of the auxiliary field~$\varphi_a$}

Next, we proceed to discuss the construction of estimates for the
components of the auxiliary field.

\begin{proposition}
\label{L2varphi0}
For~$0\leq k\leq10$, one has that 
\begin{align*}
\frac{1}{a^{\frac{1}{2}}}||(a^{\frac{1}{2}}\mathcal{D})^k\varphi_0||_{L^2_{sc}(\mathcal{S}_{u,v})}\lesssim
\underline{\bm{\varphi}}[\varphi_1]+1.
\end{align*}
\end{proposition}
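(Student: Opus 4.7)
The plan is to exploit the ingoing transport equation~\eqref{EOMMasslessScalarStT-weight1}, namely
\[
\mthorn'\varphi_0 + \mu\varphi_0 \;=\; \meth\bar\varphi_1 + \ulomega\varphi_0 + \rho\varphi_2 - \bar\tau\varphi_1 - \tau\bar\varphi_1,
\]
and feed it into the weighted Gr\"onwall bound of Proposition~\ref{SCweightedgronwall}. Commuting with $\meth^k$ (using the commutator formulas recorded at the end of Section~\ref{Preliminary}) produces, schematically,
\[
\mthorn'\meth^k\varphi_0 + (k+1)\mu\,\meth^k\varphi_0 \;=\; \meth^{k+1}\bar\varphi_1 + F_k,
\]
where $F_k$ is a sum of products of the form $\meth^{i_1}\Gamma\,\meth^{i_2}\varphi_j$, $\meth^{i_1}\mu\,\meth^{i_2}\varphi_0$, and $\meth^{i_1}\lambda\,\meth^{i_2}\varphi_0$ with $i_1+i_2\leq k$. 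Since $s_2(\varphi_0)=0$, one has $\lambda_0=-(k+1)$, $\lambda_1=k$, $\lambda_1-2s_2(\meth^k\varphi_0)=0$ and $\lambda_1-2s_2(F_k)=-2$, so Proposition~\ref{SCweightedgronwall} gives
\[
\tfrac{1}{a^{1/2}}\|(a^{1/2}\meth)^k\varphi_0\|_{L^2_{sc}(\mathcal{S}_{u,v})} \lesssim \tfrac{1}{a^{1/2}}\|(a^{1/2}\mathcal{D})^k\varphi_0\|_{L^2_{sc}(\mathcal{S}_{u_\infty,v})} + \int_{u_\infty}^u \tfrac{1}{|u'|^2}\|(a^{1/2}\meth)^{k+1}\varphi_1\|_{L^2_{sc}(\mathcal{S}_{u',v})}\,du' + (\text{contribution of } F_k).
\]
The initial term is bounded by $\mathcal{I}_0$ via the hypothesis on $\sigma,\varphi_0$ at $u=u_\infty$.

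The contribution of $F_k$ is routine and parallels the analysis carried out for $\sigma$, $\ulomega$, $\tau$ in Propositions~\ref{L2sigma}--\ref{L2pi}. One splits each product via the scale-invariant H\"older inequality, puts the factor carrying the highest number of derivatives in $L^2_{sc}$ and the others in $L^\infty_{sc}$, then invokes the bootstrap bounds $\bmGamma,\bmvarphi\leq\mathcal{O}$. Every such integrand decays at least like $a/|u'|^3$ or $a^{1/2}/|u'|^2$, and after integration against $du'$ one obtains terms of size $\mathcal{O}^2 a^{1/2}/|u|\lesssim 1$ since $\mathcal{O}^{20}\leq a^{1/16}$ and $|u|\geq a/4$. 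This produces the $+1$ on the right-hand side.

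The genuine difficulty, and the only non-trivial step, is the top-order term $\meth^{k+1}\bar\varphi_1$: it carries \emph{one more} spherical derivative than the quantity we are estimating, so it cannot be absorbed via the $L^\infty_{sc}$ bootstrap. The trick is to trade the pointwise-in-$u$ norm for the ingoing flux norm on $\mathcal{N}'_v$, applying Cauchy--Schwarz in $u$:
\[
\int_{u_\infty}^u \tfrac{1}{|u'|^2}\|(a^{1/2}\meth)^{k+1}\varphi_1\|_{L^2_{sc}(\mathcal{S}_{u',v})}\,du' \leq \Bigl(\int_{u_\infty}^u \tfrac{1}{a|u'|^2}\,du'\Bigr)^{1/2}\Bigl(\int_{u_\infty}^u \tfrac{a}{|u'|^2}\|(a^{1/2}\meth)^{k+1}\varphi_1\|^2_{L^2_{sc}(\mathcal{S}_{u',v})}\,du'\Bigr)^{1/2}.
\]
Because $|u|\geq a/4$, the first factor is bounded by $C/a$. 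The second factor equals, up to the weight conversion $(a^{1/2})^{k+1}\mapsto (a^{1/2})^k$, precisely $a^{1/2}$ times the $\varphi_1$ piece of $\underline{\bmvarphi}_{k+1}$, so that the product collapses to $\lesssim \underline{\bmvarphi}[\varphi_1]$.

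Finally, the argument was illustrated on the string $\meth^k\varphi_0$, but the commutator structure, the signatures, and the definition~\eqref{T-weightL2Norm01} of $|\mathcal{D}^kf|^2$ as a sum over strings mean that every other string of $\meth,\meth'$ of length $k$ is governed by the same transport equation up to renaming and complex conjugation. Summing over all $2^k$ strings and using the norm equivalence in~\eqref{T-weightL2Norm} yields the stated bound for $(a^{1/2}\mathcal{D})^k\varphi_0$. The main obstacle is, as indicated, the top-order term on the right-hand side of the commuted transport equation; once this is rerouted into the ingoing flux norm via the Cauchy--Schwarz step above, the proof reduces to the same bookkeeping used repeatedly in Propositions~\ref{L2lambda}--\ref{L2pi}.
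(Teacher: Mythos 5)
Your proposal is correct and follows essentially the same route as the paper: commute $\meth^k$ through the transport equation~\eqref{EOMMasslessScalarStT-weight1}, apply the weighted Gr\"onwall estimate of Proposition~\ref{SCweightedgronwall} with $\lambda_0=-(k+1)$, handle the lower-order products by the bootstrap assumptions, and convert the genuinely top-order term $\meth^{k+1}\bar\varphi_1$ into the ingoing flux norm $\underline{\bmvarphi}[\varphi_1]$ via Cauchy--Schwarz in $u$ using $|u|\geq a/4$. The only discrepancies are cosmetic (a sign on $\meth^{k+1}\bar\varphi_1$ and a slightly different distribution of $a$-powers inside the Cauchy--Schwarz factors), neither of which affects the argument.
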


\begin{proof}
We make use of equation~\eqref{EOMMasslessScalarStT-weight1},
\begin{align*}
\mthorn'\varphi_{0}-\meth\bar\varphi_{1}&=(\ulomega-\mu)\varphi_{0}
+\rho\varphi_{2}-\bar\tau\varphi_{1}-\tau\bar\varphi_{1}.
\end{align*}
Commuting~$\meth^k$ with~$\mthorn'$, we find that 
\begin{align*}
\mthorn'\meth^k\varphi_0+(k+1)\mu\meth^k\varphi_0&=-\meth^{k+1}\bar\varphi_1
+\sum_{i=0}^k\meth^i\Gamma(\ulomega,\Timu,\lambda)\meth^{k-i}\varphi_0 
+\sum_{i=0}^k\meth^i\Gamma(\rho,\tau)\meth^{k-i}(\varphi_1,\varphi_2 ). 
\end{align*}
Denoting the right-hand side of the previous equation by~$F$, one has that
\begin{align*}
&s_2(\varphi_0)=0, \ \ s_2(\meth^k\pi)=\frac{k}{2}, \ \ s_2(F)=\frac{k+2}{2}, \\
&\lambda_0=-(k+1), \ \ \lambda_1=-\lambda_0-1=k, \\
& \lambda_1-2s_2(F)=-2, \ \ \lambda_1-2s_2(\meth^k\varphi_0)=0.
\end{align*}
Now, making use of Proposition~\ref{SCweightedgronwall}, one has
\begin{align*}
||(a^{\frac{1}{2}}\meth)^k\varphi_0||_{L^2_{sc}(\mathcal{S}_{u,v})}\lesssim
||(a^{\frac{1}{2}}\meth)^k\varphi_0||_{L^2_{sc}(\mathcal{S}_{u_{\infty},v})}
+\int_{u_{\infty}}^u\frac{a}{|u'|^2}||a^{\frac{k}{2}}F||_{L^2_{sc}(\mathcal{S}_{u',v})},
\end{align*}
so that multiplying by~$a^{-\frac{1}{2}}$ one sees
\begin{align*}
\frac{1}{a^{\frac{1}{2}}}||(a^{\frac{1}{2}}\meth)^k\varphi_0||_{L^2_{sc}(\mathcal{S}_{u,v})}&\lesssim
\frac{1}{a^{\frac{1}{2}}}||(a^{\frac{1}{2}}\meth)^k\varphi_0||_{L^2_{sc}(\mathcal{S}_{u_{\infty},v})}
+\frac{1}{a^{\frac{1}{2}}}\int_{u_{\infty}}^u\frac{a}{|u'|^2}||a^{\frac{k}{2}}F||_{L^2_{sc}(\mathcal{S}_{u',v})} \\
&\lesssim1+\frac{1}{a^{\frac{1}{2}}}\int_{u_{\infty}}^u\frac{a}{|u'|^2}
||a^{\frac{k}{2}}\mathcal{D}^{k+1}\varphi_1||_{L^2_{sc}(\mathcal{S}_{u',v})} \\
&+\sum_{i=0}^{k}\frac{1}{a^{\frac{1}{2}}}\int_{u_{\infty}}^u\frac{a}{|u'|^2}
||a^{\frac{k}{2}}\mathcal{D}^{i}\Gamma(\ulomega,\Timu,\lambda)\mathcal{D}^{k-i}\varphi_0||_{L^2_{sc}(\mathcal{S}_{u',v})} \\
&+\sum_{i=0}^{k}\frac{1}{a^{\frac{1}{2}}}\int_{u_{\infty}}^u\frac{a}{|u'|^2}
||a^{\frac{k}{2}}\mathcal{D}^{i}\Gamma(\rho,\tau)\mathcal{D}^{k-i}(\varphi_1,\varphi_2)||_{L^2_{sc}(\mathcal{S}_{u',v})} \\
&=1+I_1+I_2+I_3.
\end{align*}

For the term~$I_1$ we have 
\begin{align*}
I_1\leq\frac{1}{a^{\frac{1}{2}}}\left(\int_{u_{\infty}}^u\frac{a}{|u'|^2}
||a^{\frac{k}{2}}\mathcal{D}^{k+1}\varphi_1||^2_{L^2_{sc}(\mathcal{S}_{u',v})} \right)^{\frac{1}{2}}
\left( \int_{u_{\infty}}^u\frac{a}{|u'|^2}\right)^{\frac{1}{2}} 
\leq\underline{\bm{\varphi}}[\varphi_1]\frac{a^{\frac{1}{2}}}{|u|^{\frac{1}{2}}}.
\end{align*}
For the term~$I_2$ we have
\begin{align*}
I_2\leq\frac{1}{a^{\frac{1}{2}}}\int_{u_{\infty}}^u\frac{a}{|u'|^2}\frac{1}{|u'|}
(\frac{|u'|}{a}\mathcal{O}+\frac{|u'|}{a^{\frac{1}{2}}})a^{\frac{1}{2}}\mathcal{O} 
=\int_{u_{\infty}}^u(\frac{\mathcal{O}^2}{|u'|^2}+\frac{a^{\frac{1}{2}}\mathcal{O}}{|u'|^2})\leq1.
\end{align*}
And, finally, for the term~$I_3$ it can be verified that 
\begin{align*}
I_3\leq\frac{1}{a^{\frac{1}{2}}}\int_{u_{\infty}}^u\frac{a}{|u'|^2}\frac{1}{|u'|}
(\frac{|u'|}{a^{\frac{1}{2}}}+1)\mathcal{O}^2
=\frac{1}{a^{\frac{1}{2}}}\int_{u_{\infty}}^u
(\frac{a^{\frac{1}{2}}\mathcal{O}^2}{|u'|^2}+\frac{a\mathcal{O}^2}{|u'|^3})\leq1.
\end{align*}

Combining the above estimates we obtain
\begin{align*}
\frac{1}{a^{\frac{1}{2}}}||(a^{\frac{1}{2}}\meth)^k\varphi_0||_{L^2_{sc}(\mathcal{S}_{u,v})}\lesssim
\underline{\bm{\varphi}}[\varphi_1]\frac{a^{\frac{1}{2}}}{|u|^{\frac{1}{2}}}+1 
\leq\bm{\varphi}[\varphi_1]+1.
\end{align*}
A similar analysis of all the other strings of derivatives
in~$\mathcal{D}^{k_i}\varphi_0$ leads to the same result, so that we
can conclude that
\begin{align*}
  \frac{1}{a^{\frac{1}{2}}}||(a^{\frac{1}{2}}\mathcal{D})^k\varphi_0
  ||_{L^2_{sc}(\mathcal{S}_{u,v})}
  \lesssim\underline{\bm{\varphi}}[\varphi_1]+1.
\end{align*}

\end{proof}

The components~$\varphi_1$ and~$\varphi_2$ can be analysed
together. We have the following:

\begin{proposition}
\label{L2varphi12}
For~$0\leq k\leq10$, one has that
\begin{align*}
||(a^{\frac{1}{2}}\mathcal{D})^k\varphi_1||_{L^2_{sc}(\mathcal{S}_{u,v})}&\lesssim\underline{\bm\varphi}[\Tivarphi_2]+1, \\
\frac{a^{\frac{1}{2}}}{|u|}||(a^{\frac{1}{2}}\mathcal{D})^{k}\varphi_2||_{L^2_{sc}(\mathcal{S}_{u,v})}&\lesssim1.
\end{align*}
\end{proposition}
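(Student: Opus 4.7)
The plan is to treat $\varphi_1$ and $\varphi_2$ separately, using opposite null directions for the transport. For $\varphi_2$ I would use the outgoing equation~\eqref{EOMMasslessScalarStT-weight4}, integrated from the Minkowskian data $\varphi_2|_{v=0}=0$ on $\mathcal{N}'_\star$ via~\eqref{SCldirectiongronwall}. For $\varphi_1$ I would use the ingoing equation~\eqref{EOMMasslessScalarStT-weight3}, $\mthorn'\varphi_1=\meth\varphi_2-\mu\varphi_1-\bar\lambda\bar\varphi_1$, integrated along $u$ via Proposition~\ref{SCweightedgronwall} from the prescribed data on $\mathcal{N}_\star$. In each case the strategy mirrors Proposition~\ref{L2varphi0}: commute $\meth^k$ with the relevant null derivative, apply the appropriate transport inequality, and bound each piece of the source using~\eqref{Hypothesis} together with Cauchy--Schwarz and H\"older in scale-invariant norms.

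The hard part will be the source of the $\mthorn'\varphi_1$ equation, which contains $\meth\varphi_2$: after commuting with $\meth^k$ one would face $\meth^{k+1}\varphi_2$, and a direct Cauchy--Schwarz against the $\mathcal{N}'_v$-energy of $\varphi_2$ produces a factor of $|u|^{1/2}$ that destroys the estimate. This is precisely the derivative-loss obstruction announced in point~\textbf{(3)} of the introduction, and its resolution is the whole purpose of the auxiliary field $\Tivarphi_2\equiv\meth'\varphi_2+\mu\bar\varphi_1$. Combining the integrability constraint~\eqref{EOMMasslessScalarStT-weight5}, $\meth\bar\varphi_1=\meth'\varphi_1$, with the reality of $\varphi_2$ gives $\meth\varphi_2=\overline{\meth'\varphi_2}=\bar\Tivarphi_2-\mu\varphi_1$, so~\eqref{EOMMasslessScalarStT-weight3} can be rewritten equivalently as
\begin{align*}
\mthorn'\varphi_1=\bar\Tivarphi_2-2\mu\varphi_1-\bar\lambda\bar\varphi_1,
\end{align*}
in which no derivative of $\varphi_2$ appears. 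Commuting with $\meth^k$ using~\eqref{T-weightMasslessCommutatorSt3} then produces a $+(k+2)\mu\meth^k\varphi_1$ term on the left-hand side and leaves $\meth^k\bar\Tivarphi_2$ as the leading source, now at the same differential level as $\meth^k\varphi_1$.

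From here Proposition~\ref{SCweightedgronwall} applies with $\lambda_0=-(k+2)$, and a Cauchy--Schwarz against $\int_{u_\infty}^u\frac{a}{|u'|^2}\mathrm{d}u'\lesssim a/|u|$ converts the $\meth^k\bar\Tivarphi_2$ contribution into $\frac{a^{1/2}}{|u|^{1/2}}\underline{\tilde{\bmvarphi}}_k[\Tivarphi_2]$, which after dividing the target inequality by $a^{1/2}$ is absorbed into $\underline{\bmvarphi}[\Tivarphi_2]$; the remaining pieces $\meth^i\mu\cdot\meth^{k-i}\varphi_1$, $\meth^i\bar\lambda\cdot\meth^{k-i}\bar\varphi_1$ and the commutator terms are subcritical in $\varphi_1$ and close using the bootstrap together with the smallness $a/|u|\lesssim 1$ and $\mathcal{O}^{20}\leq a^{1/16}$, exactly as in Proposition~\ref{L2varphi0}. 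For $\varphi_2$ I would commute $\meth^k$ with $\mthorn$ in~\eqref{EOMMasslessScalarStT-weight4} and integrate from $v=0$ via~\eqref{SCldirectiongronwall}; the top-order piece reduces to $\int_0^v a^{-1/2}\|(a^{1/2}\meth)^{k+1}\varphi_1\|_{L^2_{sc}}\mathrm{d}v'$, and a Cauchy--Schwarz in $v$ using $v\leq 1$ turns this into $\bmvarphi_{k+1}[\varphi_1]$. Multiplication by the weight $a^{1/2}/|u|$, combined with $|u|\geq a/4$ and the hierarchy $\mathcal{O}\ll a^{1/2}$ implied by~\eqref{Hypothesis}, then forces the stated constant bound $\lesssim 1$, with the lower-order source terms $\mu\varphi_0$, $\rho\varphi_2$, $\pi\varphi_1$, $\bar\pi\bar\varphi_1$ closed in the same manner as the analogous terms in Propositions~\ref{L2sigma} and~\ref{L2varphi0}.
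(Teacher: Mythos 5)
Your treatment of $\varphi_1$ coincides with the paper's: the same substitution $\meth\varphi_2=\bar\Tivarphi_2-\mu\varphi_1$ (which, incidentally, needs only the reality of $\varphi_2$ and $\mu$, not the constraint~\eqref{EOMMasslessScalarStT-weight5}) yields $\mthorn'\varphi_1+2\mu\varphi_1=\bar\Tivarphi_2-\bar\lambda\bar\varphi_1$, and the rest is the identical application of Proposition~\ref{SCweightedgronwall} with $\lambda_0=-(k+2)$ and Cauchy--Schwarz against $\underline{\tilde{\bmvarphi}}_k$.

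For $\varphi_2$ you diverge from the paper. The paper applies $\meth'$ to~\eqref{EOMMasslessScalarStT-weight4}, derives an outgoing transport equation for $\Tivarphi_2$ itself whose leading source is $\meth'^{k+2}\varphi_1$, establishes $||(a^{1/2}\mathcal{D})^k\Tivarphi_2||_{L^2_{sc}(\mathcal{S})}\lesssim1$ for $k\leq9$, and only then recovers $\mathcal{D}^{k+1}\varphi_2$ algebraically from $\meth'\varphi_2=\Tivarphi_2-\mu\bar\varphi_1$, reserving the direct transport of~\eqref{EOMMasslessScalarStT-weight4} for $k=0$. You instead run the direct transport for all $k$, paying $\meth^{k+1}\varphi_1$ against the outgoing energy $\bm\varphi[\varphi_1]_{k+1}$ (with $k+1\leq11$, within the budget) and letting the weight $a^{1/2}/|u|\leq4a^{-1/2}$ absorb the resulting $\mathcal{O}$. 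This is a legitimate and arguably leaner route to the two stated inequalities; what it does not deliver is the sphere estimate for $\Tivarphi_2$ itself, which the paper needs independently (it is part of the bootstrap norm $\tilde{\bmvarphi}_{i,2}$ and is quoted in the summary of Section~\ref{L2estimate} and in the elliptic and energy estimates), so that estimate would still have to be produced elsewhere.

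One point in your $\varphi_2$ argument deserves correction: $\mu\varphi_0$ is not a lower-order term that closes ``in the same manner'' as the sources in Propositions~\ref{L2sigma} and~\ref{L2varphi0}. It is exactly critical: $||a^{k/2}\mu\meth^k\varphi_0||_{L^2_{sc}}\lesssim\frac{1}{|u|}||\mu||_{L^{\infty}_{sc}}\,||(a^{1/2}\mathcal{D})^k\varphi_0||_{L^2_{sc}}\sim\frac{|u|}{a^{1/2}}$, and after multiplying by $a^{1/2}/|u|$ there is no surviving smallness factor. If you estimate $\mu$ and $\varphi_0$ by the raw bootstrap you pick up $\mathcal{O}^2$ and the estimate fails; you must use the improved bounds $||\mu||_{L^{\infty}_{sc}}\lesssim|u|^2/a$ (from $\Timu$) and $||(a^{1/2}\mathcal{D})^k\varphi_0||\lesssim a^{1/2}(\underline{\bm\varphi}[\varphi_1]+1)$ from Propositions~\ref{L2mu} and~\ref{L2varphi0}, exactly as the paper does in its $k=0$ computation. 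With that replacement your argument closes.
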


\begin{proof}
  In this proof we make use of the definition
\begin{align*}
\tilde\varphi_2\equiv\meth'\varphi_2+\mu\bar\varphi_1, 
\end{align*}
and the associated bootstrap assumption
\begin{align*}
||(a^{\frac{1}{2}}\mathcal{D})^{k}\Tivarphi_2||_{L^2_{sc}(\mathcal{S}_{u,v})}\leq\mathcal{O},
\end{align*}
for~$0\leq k\leq9$.

Starting from equation~\eqref{EOMMasslessScalarStT-weight3} 
\begin{align*}
\mthorn'\varphi_1+2\mu\varphi_1=\bar\Tivarphi_2-\bar\lambda\bar\varphi_1,
\end{align*}
we commute~$\meth^k$ with~$\mthorn'$ to obtain 
\begin{align*}
\mthorn'\meth^k\varphi_1+(k+2)\mu\meth^k\varphi_1=
\meth^k\bar\Tivarphi_2+\sum_{i=0}^k\meth^{i}\Gamma(\Timu,\lambda)\meth^{k-i}\varphi_1,
\end{align*}
where~$0\leq k\leq10$. Denote by~$F$ the right-hand side of the
previous equation. One has that  
\begin{align*}
&s_2(\varphi_1)=\frac{1}{2}, \ \ s_2(\meth^k\varphi_1)=\frac{k+1}{2}, \ \ s_2(F)=\frac{k+3}{2}, \\
&\lambda_0=-(k+2), \ \ \lambda_1=-\lambda_0-1=k+1, \\
& \lambda_1-2s_2(F)=-2, \ \ \lambda_1-2s_2(\meth^k\varphi_1)=0.
\end{align*}
Now, making use of Proposition~\ref{SCweightedgronwall}, one has that 
\begin{align*}
a^{\frac{1}{2}}||(a^{\frac{1}{2}}\meth)^k\varphi_1||_{L^2_{sc}(\mathcal{S}_{u,v})}\lesssim
a^{\frac{1}{2}}||(a^{\frac{1}{2}}\meth)^k\varphi_1||_{L^2_{sc}(\mathcal{S}_{u_{\infty},v})}
+\int_{u_{\infty}}^u\frac{a^{\frac{3}{2}}}{|u'|^2}||a^{\frac{k}{2}}F||_{L^2_{sc}(\mathcal{S}_{u',v})},
\end{align*}
so that~$a^{-\frac{1}{2}}$ to both hands side and substituting the
expression for~$F$ leads to 
\begin{align*}
||(a^{\frac{1}{2}}\meth)^k\varphi_1||_{L^2_{sc}(\mathcal{S}_{u,v})}&\lesssim
||(a^{\frac{1}{2}}\meth)^k\varphi_1||_{L^2_{sc}(\mathcal{S}_{u_{\infty},v})}
+\int_{u_{\infty}}^u\frac{a}{|u'|^2}||a^{\frac{k}{2}}F||_{L^2_{sc}(\mathcal{S}_{u',v})} \\
&\leq||(a^{\frac{1}{2}}\meth)^k\varphi_1||_{L^2_{sc}(\mathcal{S}_{u_{\infty},v})}
+\int_{u_{\infty}}^u\frac{a}{|u'|^2}||a^{\frac{k}{2}}\mathcal{D}^k\Tivarphi_2||_{L^2_{sc}(\mathcal{S}_{u',v})} \\
&+\sum_{i=0}^k\int_{u_{\infty}}^u\frac{a}{|u'|^2}||a^{\frac{k}{2}}\mathcal{D}^i\Gamma(\Timu,\lambda)
\mathcal{D}^{k-i}\varphi_1||_{L^2_{sc}(\mathcal{S}_{u',v})} \\
&\leq||(a^{\frac{1}{2}}\meth)^k\varphi_1||_{L^2_{sc}(\mathcal{S}_{u_{\infty},v})}
+\int_{u_{\infty}}^u\frac{a}{|u'|^2}||a^{\frac{k}{2}}\mathcal{D}^k\Tivarphi_2||_{L^2_{sc}(\mathcal{S}_{u',v})} \\
&+\int_{u_{\infty}}^u\frac{a}{|u'|^2}\frac{1}{|u'|}\frac{|u'|}{a^{\frac{1}{2}}}||(a^{\frac{1}{2}}\mathcal{D})^i\varphi_1|| \\
&\leq1+\underline{\bm\varphi}[\Tivarphi_2].
\end{align*}
Finally, observing that the rest of the strings
in~$\mathcal{D}^{k_i}\varphi_1$ give rise to the same estimate, one
can conclude that
\begin{align*}
||(a^{\frac{1}{2}}\mathcal{D})^k\varphi_1||_{L^2_{sc}(\mathcal{S}_{u,v})}\lesssim\underline{\bm\varphi}[\Tivarphi_2]+1.
\end{align*}

For the analysis of~$\Tivarphi_2$, we apply~$\meth'$
to~\eqref{EOMMasslessScalarStT-weight4} and make use of the commutator
relations to yield
\begin{align*}
\mthorn\tilde\varphi_2&=\meth'\meth'\varphi_1+\bar\varphi_1\meth\pi+\bar\pi\meth'\bar\varphi_1
+2\pi\meth'\varphi_1-\bar\tau\meth'\varphi_1
-\varphi_0\meth'\mu+\varphi_1\meth'\pi+\bar\varphi_1\meth'\bar\pi+\varphi_2\meth'\rho \\
&+\bar\varphi_1\tilde\varphi_2+2\rho\tilde\varphi_2+\bar\sigma\bar{\tilde{\varphi}}_2
+\pi^2\varphi_1+2\pi\bar\pi\bar\varphi_1+\pi\rho\varphi_2+\lambda\sigma\bar\varphi_1 
-\pi\bar\tau\varphi_1-\bar\pi\bar\tau\bar\varphi_1-\rho\bar\tau\varphi_2.
\end{align*}
Commuting~$\meth'^k$ with~$\mthorn$ we find that 
\begin{align*}
\mthorn\meth'^k\Tivarphi_2&=\meth'^{k+2}\varphi_1
+\sum_{i_1+i_2+i_3+i_4=k}\meth'^{i_1}\Gamma(\tau,\pi)^{i_2}\meth'^{i_3}\Gamma(\tau,\pi)\meth'^{i_4+1}\varphi_1 \\
&+\sum_{i_1+i_2+i_3+i_4=k}\meth'^{i_1}\Gamma(\tau,\pi)^{i_2}\meth'^{i_3}
\bm\varphi\meth'^{i_4+1}\Gamma(\pi,\mu,\rho) \\
&+\sum_{i_1+i_2+i_3+i_4=k}\meth'^{i_1}\Gamma(\tau,\pi)^{i_2}
\meth'^{i_3}(\tau,\pi,\rho,\sigma,\varphi_1)\meth'^{i_4}\Tivarphi_2 \\
&+\sum_{i_1+i_2+i_3+i_4+i_5=k}\meth'^{i_1}\Gamma(\tau,\pi)^{i_2}\meth'^{i_3}\Gamma(\lambda,\pi)
\meth'^{i_4}\Gamma(\sigma,\tau,\pi,\rho)\meth'^{i_5}\bm\varphi,
\end{align*}
where~$0\leq k\leq9$. Making use of the transport estimate in
scale-invariant norm~\eqref{SCldirectiongronwall} it follows that
\begin{align*}
||(a^{\frac{1}{2}}\meth')^{k}\Tivarphi_2||_{L^2_{sc}(\mathcal{S}_{u,v})}&\lesssim
||(a^{\frac{1}{2}}\meth')^{k}\Tivarphi_2||_{L^2_{sc}(\mathcal{S}_{u,0})}+
\int_0^v||a^{\frac{k}{2}}\mthorn\meth'^{k}\Tivarphi_2||_{L^2_{sc}(\mathcal{S}_{u,v'})} \\
&\leq||(a^{\frac{1}{2}}\mathcal{D})^{k}\Tivarphi_2||_{L^2_{sc}(\mathcal{S}_{u,0})}+
\int_0^v||a^{\frac{k}{2}}\mathcal{D}^{k+2}\varphi_1||_{L^2_{sc}(\mathcal{S}_{u,v'})} \\
&+\sum_{i_1+i_2+i_3+i_4=k}\int_0^v||a^{\frac{k}{2}}\mathcal{D}^{i_1}\Gamma(\tau,\pi)^{i_2}
\mathcal{D}^{i_3}\Gamma(\tau,\pi)\mathcal{D}^{i_4+1}\varphi_1||_{L^2_{sc}(\mathcal{S}_{u,v'})} \\
&+\sum_{i_1+i_2+i_3+i_4=k}\int_0^v||a^{\frac{k}{2}}\mathcal{D}^{i_1}\Gamma(\tau,\pi)^{i_2}\mathcal{D}^{i_3}
\bm\varphi\mathcal{D}^{i_4+1}\Gamma(\pi,\mu,\rho)||_{L^2_{sc}(\mathcal{S}_{u,v'})} \\
&+\sum_{i_1+i_2+i_3+i_4=k}\int_0^v||a^{\frac{k}{2}}\mathcal{D}^{i_1}\Gamma(\tau,\pi)^{i_2}\mathcal{D}^{i_3}
(\tau,\pi,\rho,\sigma,\varphi_1)\mathcal{D}^{i_4}\Tivarphi_2||_{L^2_{sc}(\mathcal{S}_{u,v'})} \\
&+\sum_{i_1+i_2+i_3+i_4+i_5=k}\int_0^v||a^{\frac{k}{2}}\mathcal{D}^{i_1}\Gamma(\tau,\pi)^{i_2}
\mathcal{D}^{i_3}\Gamma(\lambda,\pi)\mathcal{D}^{i_4}\Gamma(\sigma,\tau,\pi,\rho)
\mathcal{D}^{i_5}\bm\varphi||_{L^2_{sc}(\mathcal{S}_{u,v'})} \\
&=1+I_1+...+I_5.
\end{align*}

Now, for~$I_1$ we have that
\begin{align*}
I_1\leq||a^{\frac{k}{2}}\mathcal{D}^{k+2}\varphi_1||_{L^2_{sc}(\mathcal{N}_{u}(0,v))}
=\frac{1}{a^{\frac{1}{2}}}||a^{\frac{k+1}{2}}\mathcal{D}^{k+2}\varphi_1||_{L^2_{sc}(\mathcal{N}_{u}(0,v))}
\leq\frac{1}{a^{\frac{1}{2}}}\bm\varphi[\varphi_1]\leq1.
\end{align*}
For~$I_2$ we have that 
\begin{align*}
I_2\leq\frac{1}{a^{\frac{1}{2}}}\frac{a^{\frac{i_2}{2}}}{|u|^{i_2+1}}\mathcal{O}^{i_2}\mathcal{O}^2\leq1.
\end{align*}
For~$I_3$ we have that 
\begin{align*}
I_3\leq\frac{1}{a^{\frac{1}{2}}}\frac{a^{\frac{i_2}{2}}}{|u|^{i_2+1}}\mathcal{O}^{i_2}
\big(a^{\frac{1}{2}}\mathcal{O}\frac{|u|}{a}\mathcal{O}+\mathcal{O}^2+\frac{|u|}{a^{\frac{1}{2}}}\mathcal{O}^2\big)
\leq1.
\end{align*}
For~$I_4$ we have that
\begin{align*}
I_4\leq\frac{a^{\frac{i_2}{2}}}{|u|^{i_2+1}}\mathcal{O}^{i_2}a^{\frac{1}{2}}\mathcal{O}\mathcal{O}\leq1.
\end{align*}
Finally, for~$I_5$ we have that
\begin{align*}
I_5\leq\frac{a^{\frac{i_2}{2}}}{|u|^{i_2+2}}\mathcal{O}^{i_2}
\big(\frac{|u|}{a^{\frac{1}{2}}}a^{\frac{1}{2}}\mathcal{O}\mathcal{O}
+\mathcal{O}^2\frac{|u|}{a^{\frac{1}{2}}}\mathcal{O}\big)\leq1.
\end{align*}

Combining the above estimates for~$I_1,\ldots I_5$ we see that
\begin{align*}
||(a^{\frac{1}{2}}\meth')^{k}\Tivarphi_2||_{L^2_{sc}(\mathcal{S}_{u,v})}\lesssim1.
\end{align*}
Following the same strategy one can estimate the rest of the strings
in~$\mathcal{D}^{k_i}\Tivarphi_2$ to conclude that 
\begin{align*}
||(a^{\frac{1}{2}}\mathcal{D})^{k}\Tivarphi_2||_{L^2_{sc}(\mathcal{S}_{u,v})}\lesssim1.
\end{align*}

From the definition of~$\Tivarphi_2$ we have~$\meth'\varphi_2=\tilde\varphi_2-\mu\bar\varphi_1$, then
\begin{align*}
&\frac{a^{\frac{1}{2}}}{|u|}||(a^{\frac{1}{2}}\mathcal{D})^{k+1}\varphi_2||_{L^2_{sc}(\mathcal{S}_{u,v})}
\leq\frac{a}{|u|}||(a^{\frac{1}{2}}\mathcal{D})^{k}\Tivarphi_2||_{L^2_{sc}(\mathcal{S}_{u,v})}
+\frac{a}{|u|}||(a^{\frac{1}{2}}\mathcal{D})^{k}(\mu\varphi_1)||_{L^2_{sc}(\mathcal{S}_{u,v})} \\
&\quad\quad\leq1+\frac{a}{|u|}\frac{1}{|u|}||\mu||_{L^{\infty}_{sc}(\mathcal{S}_{u,v})}
||(a^{\frac{1}{2}}\mathcal{D})^{k}\varphi_1||_{L^2_{sc}(\mathcal{S}_{u,v})}+
\sum_{i=1}^k\frac{a}{|u|}\frac{1}{|u|}||(a^{\frac{1}{2}}\mathcal{D})^{i}\mu||\times
||(a^{\frac{1}{2}}\mathcal{D})^{k-i}\varphi_1|| \\
&\quad\quad\leq1+\frac{a}{|u|^2}\frac{|u|^2}{a}+\frac{a}{|u|^2}\frac{|u|}{a}\mathcal{O}
\leq1.
\end{align*}
when~$0\leq k\leq9$.

In the case of~$k=0$ we fall back into
equation~\eqref{EOMMasslessScalarStT-weight4} and make use of the
transport estimate in scale-invariant norm
\begin{align*}
||\varphi_2||_{L^2_{sc}(\mathcal{S}_{u,v})}&\lesssim
||\varphi_2||_{L^2_{sc}(\mathcal{S}_{u,0})}+
\int_0^v||\mthorn\varphi_2||_{L^2_{sc}(\mathcal{S}_{u,v'})},
\end{align*}
so that multiplying by$a^{\frac{1}{2}}/|u|$ we can conclude that
\begin{align*}
\frac{a^{\frac{1}{2}}}{|u|}||\varphi_2||_{L^2_{sc}(\mathcal{S}_{u,v})}&\lesssim
\frac{a^{\frac{1}{2}}}{|u|}||\varphi_2||_{L^2_{sc}(\mathcal{S}_{u,0})}+
\frac{a^{\frac{1}{2}}}{|u|}\int_0^v||\mthorn\varphi_2||_{L^2_{sc}(\mathcal{S}_{u,v'})}  \\
&\leq\frac{a^{\frac{1}{2}}}{|u|}||\varphi_2||_{L^2_{sc}(\mathcal{S}_{u,0})}+
\frac{a^{\frac{1}{2}}}{|u|}(||\mathcal{D}\varphi_1||_{L^2_{sc}(\mathcal{S}_{u,v'})}
+||\mu\varphi_0+\rho\varphi_2+\pi\varphi_1||_{L^2_{sc}(\mathcal{S}_{u,v'})}) \\
&\lesssim1+\frac{a^{\frac{1}{2}}}{|u|}(\frac{1}{a^{\frac{1}{2}}}\mathcal{O}+
\frac{1}{|u|}\frac{|u|^2}{a}a^{\frac{1}{2}}\underline{\bm\varphi}[\varphi_1]+
\frac{1}{|u|}\mathcal{O}^2) \\
&\leq1+\underline{\bm\varphi}[\varphi_1]
\leq1.
\end{align*}
In the last step of the previous chain of inequalities we have made
use of
\begin{align*}
\int_{u_{\infty}}^u\frac{a}{|u'|^2}
||\mathcal{D}\varphi_1||_{L^2_{sc}(\mathcal{S}_{u',v})}\leq
\frac{1}{a^{\frac{1}{2}}}\int_{u_{\infty}}^u\frac{a}{|u'|^2}
||a^{\frac{1}{2}}\mathcal{D}\varphi_1||_{L^2_{sc}(\mathcal{S}_{u',v})}
\leq\frac{1}{a^{\frac{1}{2}}}\int_{u_{\infty}}^u\frac{a}{|u'|^2}\mathcal{O}
\leq\frac{1}{a^{\frac{1}{2}}}\frac{a}{|u|}\mathcal{O}\leq1.
\end{align*}

Collecting the above estimates for~$\varphi_2$ we finally conclude
that
\begin{align*}
\frac{a^{\frac{1}{2}}}{|u|}||(a^{\frac{1}{2}}\mathcal{D})^{k}\varphi_2||_{L^2_{sc}(\mathcal{S}_{u,v})}\lesssim1
\end{align*}
for~$0\leq k\leq10$.
\end{proof}

\subsection{$L^2(\mathcal{S})$ estimates of Weyl curvature}

Next in our hierarchy, we construct sphere estimates for the
components of the Weyl tensor.

\begin{proposition}
\label{L2Psi0}
For~$0\leq k\leq9$, one has that 
\begin{align*}
\frac{1}{a^{\frac{1}{2}}}||(a^{\frac{1}{2}}\mathcal{D})^k\Psi_0||_{L^2_{sc}(\mathcal{S}_{u,v})}\lesssim1.
\end{align*}
\end{proposition}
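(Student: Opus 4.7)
The plan is to follow the same pattern as the preceding transport-based $L^2(\mathcal{S})$ estimates, applied now to the Bianchi identity~\eqref{T-weightMasslessStBianchi2} for $\Psi_0$. Writing this equation schematically as
\begin{align*}
\mthorn'\Psi_0+\mu\Psi_0=\meth\TiPsi_1+2\ulomega\Psi_0-5\tau\TiPsi_1+3\sigma\TiPsi_2+6\varphi_0\meth\varphi_1+[\text{quadratic in }\Gamma,\bmvarphi],
\end{align*}
the principal term is an incoming transport of $\Psi_0$ with source given by a single angular derivative of $\TiPsi_1$. First I would commute $\meth^k$ with $\mthorn'$ using the commutator~\eqref{T-weightMasslessCommutatorSt3}; the leading contribution is $-k\mu\meth^k\Psi_0$, which combines with the intrinsic $-\mu\Psi_0$ term to yield $-(k+1)\mu\meth^k\Psi_0$ on the left, and on the right one obtains $\meth^{k+1}\TiPsi_1$ plus schematic lower-order products entirely analogous to those in the preceding propositions.

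Next, I would apply the weighted Gr\"onwall-type Proposition~\ref{SCweightedgronwall} with $\lambda_0=-(k+1)$, so $\lambda_1=k$; since $s_2(\Psi_0)=0$ and $s_2(\meth^k\Psi_0)=k/2$, one has $\lambda_1-2s_2(\meth^k\Psi_0)=0$ and $\lambda_1-2s_2(F)=-2$, which produces precisely the weight $a/|u'|^2$ in the $u'$-integral matching the definition of the incoming flux norm $\|\cdot\|_{L^2_{sc}(\mathcal{N}'_v)}$. The initial data contribution on $\mathcal{N}_\star$ is controlled at the level of $\mathcal{I}_0\lesssim 1$ through the data reconstruction in Lemma~\ref{Lemma3} together with the definition of $\mathcal{I}_0$.

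The core step is to absorb the top-order source $\meth^{k+1}\TiPsi_1$: after multiplying by $a^{-1/2}$ and applying Cauchy--Schwarz to the $u'$-integral one obtains a bound of the form
\begin{align*}
\frac{1}{a^{1/2}}\int_{u_\infty}^u\frac{a}{|u'|^2}\,\|a^{k/2}\meth^{k+1}\TiPsi_1\|_{L^2_{sc}(\mathcal{S}_{u',v})}\,\mathrm{d}u'
\lesssim\frac{\underline{\bm\Psi}[\TiPsi_1]}{|u|^{1/2}}\lesssim 1,
\end{align*}
since $k+1\leq 10$ falls within the range of derivatives controlled by $\underline{\bm\Psi}[\TiPsi_1]$ and $|u|\geq a/4\gg 1$. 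The remaining linear and quadratic terms on the right-hand side are bounded using the bootstrap assumption~\eqref{Hypothesis} together with the $L^2(\mathcal{S})$ bounds already proved in Propositions~\ref{L2lambda}--\ref{L2varphi12}, which contribute small factors such as $a/|u|^2\cdot\mathcal{O}^n$ or $\mathcal{O}^n/a^{1/2}$, all $\lesssim 1$. Finally, the same calculation is repeated for each of the strings of $\meth,\meth'$-derivatives constituting $\mathcal{D}^{k_i}$, and one concludes via the definition~\eqref{T-weightL2Norm}.

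The main technical obstacle is the presence of one more angular derivative on $\TiPsi_1$ than we are trying to bound on $\Psi_0$. The key point is that, for $k\leq 9$, this $(k+1)$-th derivative is still within the range controlled by the flux norm $\underline{\bm\Psi}[\TiPsi_1]$ (defined up to order $10$), and it enters our estimate only under the incoming integration measure, so that Cauchy--Schwarz converts the $u'$-integral exactly into the appropriate flux norm along $\mathcal{N}'_v$. This is precisely why the proposition is stated for $k\leq 9$ rather than $k\leq 10$.
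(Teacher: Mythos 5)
Your proposal follows essentially the same route as the paper: the same Bianchi identity~\eqref{T-weightMasslessStBianchi2}, the same commutation with $\meth^k$ producing the $(k+1)\mu\meth^k\Psi_0$ term, the same application of Proposition~\ref{SCweightedgronwall} with $\lambda_0=-(k+1)$, and the same Cauchy--Schwarz step converting the $\meth^{k+1}\TiPsi_1$ source into the flux norm $\underline{\bm\Psi}[\TiPsi_1]/|u|^{1/2}$. Your closing remark correctly identifies why the range is $k\leq 9$, and the treatment of the remaining lower-order terms via the bootstrap matches the paper's $I_2$--$I_5$ estimates.
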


\begin{proof}

We make use of the Bianchi identity~\eqref{T-weightMasslessStBianchi2}
\begin{align*}
\mthorn'\Psi_0+\mu\Psi_0-\meth\TiPsi_1=6\varphi_0\meth\varphi_1+2\ulomega\Psi_0
-5\tau\TiPsi_1+3\sigma\TiPsi_2 
+6\varphi_0\varphi_2\sigma-3\varphi_0^2\bar\lambda-3\varphi_1^2\rho
-6\varphi_1\bar\varphi_1\sigma-12\varphi_0\varphi_1\tau.
\end{align*}
Commuting~$\meth^k$ with~$\mthorn'$, we find that 
\begin{align*}
\mthorn'\meth^k\Psi_0+(k+1)\mu\meth^k\Psi_0&=\meth^{k+1}\TiPsi_1
+\sum_{i=0}^k\meth^i\varphi_0\meth^{k+1-i}\varphi_1
+\sum_{i=0}^k\meth^i\Gamma(\Timu,\lambda,\ulomega)\meth^{k-i}\Psi_0 \\
&+\sum_{i=0}^k\meth^i\Gamma(\tau,\sigma)\meth^{k-i}\TiPsi_{1,2}
+\sum_{i_1+i_2+i_3=k}\meth^{i_1}\Gamma\meth^{i_2}\bm\varphi\meth^{i_3}\bm\varphi .
\end{align*}
Denoting the right-hand side of the previous equality by~$F$, one has
\begin{align*}
&s_2(\Psi_0)=0, \ \ s_2(\meth^k\Psi_0)=\frac{k}{2}, \ \ s_2(F)=\frac{k+2}{2}, \\
&\lambda_0=-(k+1), \ \ \lambda_1=-\lambda_0-1=k, \\
& \lambda_1-2s_2(F)=-2, \ \ \lambda_1-2s_2(\meth^k\Psi_0)=0.
\end{align*}
Now, making use of Proposition~\ref{SCweightedgronwall}, one finds
\begin{align*}
||(a^{\frac{1}{2}}\meth)^k\Psi_0||_{L^2_{sc}(\mathcal{S}_{u,v})}\lesssim
||(a^{\frac{1}{2}}\meth)^k\Psi_0||_{L^2_{sc}(\mathcal{S}_{u_{\infty},v})}
+\int_{u_{\infty}}^u\frac{a}{|u'|^2}||a^{\frac{k}{2}}F||_{L^2_{sc}(\mathcal{S}_{u',v})},
\end{align*}
so that multiplying by~$a^{-\frac{1}{2}}$ one concludes that 
\begin{align*}
\frac{1}{a^{\frac{1}{2}}}||(a^{\frac{1}{2}}\meth)^k\Psi_0||_{L^2_{sc}(\mathcal{S}_{u,v})}&\lesssim
\frac{1}{a^{\frac{1}{2}}}||(a^{\frac{1}{2}}\meth)^k\Psi_0||_{L^2_{sc}(\mathcal{S}_{u_{\infty},v})}
+\int_{u_{\infty}}^u\frac{a^{\frac{1}{2}}}{|u'|^2}||a^{\frac{k}{2}}F||_{L^2_{sc}(\mathcal{S}_{u',v})} \\
&\leq\frac{1}{a^{\frac{1}{2}}}||(a^{\frac{1}{2}}\meth)^k\Psi_0||_{L^2_{sc}(\mathcal{S}_{u_{\infty},v})}
+\int_{u_{\infty}}^u\frac{a^{\frac{1}{2}}}{|u'|^2}||a^{\frac{k}{2}}\mathcal{D}^{k+1}\TiPsi_1||_{L^2_{sc}(\mathcal{S}_{u',v})} \\
&+\sum_{i=0}^k\int_{u_{\infty}}^u\frac{a^{\frac{1}{2}}}{|u'|^2}||a^{\frac{k}{2}}\mathcal{D}^{i}\varphi_0
\mathcal{D}^{k+1-i}\varphi_1||_{L^2_{sc}(\mathcal{S}_{u',v})} \\
&+\sum_{i=0}^k\int_{u_{\infty}}^u\frac{a^{\frac{1}{2}}}{|u'|^2}||a^{\frac{k}{2}}\mathcal{D}^{i}\Gamma(\Timu,\lambda,\ulomega)
\mathcal{D}^{k-i}\Psi_0||_{L^2_{sc}(\mathcal{S}_{u',v})} \\
&+\sum_{i=0}^k\int_{u_{\infty}}^u\frac{a^{\frac{1}{2}}}{|u'|^2}||a^{\frac{k}{2}}\mathcal{D}^{i}\Gamma(\tau,\sigma)
\mathcal{D}^{k-i}\TiPsi_{1,2}||_{L^2_{sc}(\mathcal{S}_{u',v})} \\
&+\sum_{i_1+i_2+i_3=k}\int_{u_{\infty}}^u\frac{a^{\frac{1}{2}}}{|u'|^2}||a^{\frac{k}{2}}\mathcal{D}^{i_1}\Gamma
\mathcal{D}^{i_2}\bm\varphi\mathcal{D}^{i_3}\bm\varphi||_{L^2_{sc}(\mathcal{S}_{u',v})} \\
&\lesssim 1+I_1+...+I_5.
\end{align*}

For~$I_1$ we have that
\begin{align*}
I_1\leq\frac{1}{a^{\frac{1}{2}}}\left(\int_{u_{\infty}}^u\frac{a}{|u'|^2}
||(a^{\frac{1}{2}}\mathcal{D})^{k+1}\TiPsi_1||^2_{L^2_{sc}(\mathcal{S}_{u',v})} \right)^{\frac{1}{2}}
\left(\int_{u_{\infty}}^u\frac{1}{|u'|^2}\right)^{\frac{1}{2}}
\leq\frac{1}{|u|^{\frac{1}{2}}}\underline{\bm\Psi}[\TiPsi_1].
\end{align*}
For~$I_2$ we have that
\begin{align*}
I_2\leq
\int_{u_{\infty}}^u\frac{1}{|u'|^2}\frac{1}{|u'|}a^{\frac{1}{2}}\mathcal{O}\mathcal{O} 
\leq
\int_{u_{\infty}}^u\frac{a^{\frac{1}{2}}}{|u'|^3}\mathcal{O}^2 
\leq
\frac{a^{\frac{1}{2}}}{|u'|^2}\mathcal{O}^2\leq1.
\end{align*}
For~$I_3$ we have that
\begin{align*}
I_3\leq\int_{u_{\infty}}^u\frac{a^{\frac{1}{2}}}{|u'|^2}\frac{1}{|u'|}
(\frac{|u'|}{a}\mathcal{O}+\frac{|u|}{a^{\frac{1}{2}}})a^{\frac{1}{2}}\mathcal{O}
\leq\int_{u_{\infty}}^u\frac{a^{\frac{1}{2}}}{|u'|^2}\mathcal{O}
\leq\frac{a^{\frac{1}{2}}}{|u|}\mathcal{O}\leq1.
\end{align*}
For~$I_4$ we have that
\begin{align*}
I_4\leq\int_{u_{\infty}}^u\frac{a^{\frac{1}{2}}}{|u'|^2}\frac{1}{|u'|}\mathcal{O}^2
=\int_{u_{\infty}}^u\frac{a^{\frac{1}{2}}}{|u'|^3}\mathcal{O}^2
\leq\frac{a^{\frac{1}{2}}}{|u|^2}\mathcal{O}^2\leq1,
\end{align*}
and, finally, for~$I_5$ we have that
\begin{align*}
I_5\leq\int_{u_{\infty}}^u\frac{a^{\frac{1}{2}}}{|u'|^2}\frac{1}{|u'|^2}
a^{\frac{1}{2}}\mathcal{O}a^{\frac{1}{2}}\mathcal{O}\frac{|u'|}{a^{\frac{1}{2}}}
=\int_{u_{\infty}}^u\frac{a}{|u'|^3}\mathcal{O}^2
\leq\frac{a}{|u|^2}\mathcal{O}^2\leq1.
\end{align*}

Combining the above estimates for~$I_1,\dots,I_5$ we conclude that 
\begin{align*}
\frac{1}{a^{\frac{1}{2}}}||(a^{\frac{1}{2}}\meth)^k\Psi_0||_{L^2_{sc}(\mathcal{S}_{u,v})}\lesssim
\frac{1}{|u|^{\frac{1}{2}}}\underline{\bm\Psi}[\TiPsi_1]+1\leq1.
\end{align*}
Observing that for the remaining strings
in~$\mathcal{D}^{k_i}\varphi_2$ one obtains the same result, we can
conclude that
\begin{align*}
\frac{1}{a^{\frac{1}{2}}}||(a^{\frac{1}{2}}\mathcal{D})^k\Psi_0||_{L^2_{sc}(\mathcal{S}_{u,v})}\lesssim1.
\end{align*}

\end{proof}

\begin{proposition}
\label{L2PsiII}
For~$0\leq k\leq9$, one has that 
\begin{align*}
||(a^{\frac{1}{2}}\mathcal{D})^{k}\{\TiPsi_1,\TiPsi_2,\TiPsi_3,\Psi_4\}||_{L^2_{sc}(\mathcal{S}_{u,v})}\lesssim
\bm\Psi[\Psi_0]+\underline{\bm\varphi}[\varphi_1]^2+\underline{\bm\varphi}[\varphi_1]+1.
\end{align*}
\end{proposition}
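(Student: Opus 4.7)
The plan is to imitate the argument for Proposition~\ref{L2Psi0} four times, once for each of $\TiPsi_1,\TiPsi_2,\TiPsi_3,\Psi_4$, by choosing for each Weyl component a Bianchi identity whose transport direction matches the available initial data. Since Minkowski data on the incoming cone gives $\TiPsi_2(u,0)=\TiPsi_3(u,0)=\Psi_4(u,0)=0$, I would handle these three quantities via the outgoing Bianchi identities \eqref{T-weightMasslessStBianchi3}, \eqref{T-weightMasslessStBianchi5} and \eqref{T-weightMasslessStBianchi7}, combined with the $v$-direction Gr\"onwall inequality \eqref{SCldirectiongronwall} and a trivial initial contribution. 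For $\TiPsi_1$ the outgoing equation \eqref{T-weightMasslessStBianchi1} contains $\meth'\Psi_0$, which is awkward, so I would instead use the incoming identity \eqref{T-weightMasslessStBianchi4} and integrate from $u=u_\infty$ via Proposition~\ref{SCweightedgronwall}, bounding the initial datum by $\mathcal{I}_0$.

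In each of the four cases I would commute $\meth^k$ with the null derivative using the $H_k$/$G_k$ formulas from Section~\ref{Preliminary}. This generates three kinds of source terms: (i) a single adjacent higher-derivative Weyl term $\meth^{k+1}\TiPsi_j$ (or $\meth^{k+1}\Psi_0$ in the $\TiPsi_2$ case via the $-\lambda\Psi_0$ coupling), (ii) products of lower-order connection coefficients with Weyl scalars, and (iii) scalar-field bilinears such as $\varphi_0\meth\varphi_1$, $\varphi_1\meth'\varphi_1$ and $\bar\varphi_1\meth\varphi_2$ together with the purely matter terms $\varphi_0^2\mu$, $\varphi_1^2\bar\sigma$, etc. Group (i) is handled by Cauchy--Schwarz exactly as in the proof of Proposition~\ref{L2Psi0}, trading the sphere bound for a lightcone integral and producing a factor $\bm\Psi[\Psi_0]$ in the $\TiPsi_2$ case and factors in $\bm\Psi,\underline{\bm\Psi}$ of the corresponding adjacent component in the other cases. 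Group (ii) will be bounded via the scale-invariant H\"older inequality together with Propositions~\ref{L2lambda}--\ref{L2pi} and the bootstrap $\bm\Gamma,\bm\Psi,\bm\varphi\le\mathcal{O}$; signature accounting ensures each such contribution integrates to $\lesssim 1$ once $|u|\gg\mathcal{O}$.

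The delicate contributions come from group (iii). A highest-order scalar-field source such as $\meth^{k+1}\varphi_1$ arises after commuting one $\meth$ past the $\meth'\varphi_1$ on the right-hand side of \eqref{T-weightMasslessStBianchi4}; this is of order $k+1\le 10$, beyond the reach of Proposition~\ref{L2varphi12}. The trick is to keep the coefficient $6\varphi_1$ and estimate the integral along $\mathcal{N}'_v$ by $\|\varphi_1\|_{L^\infty_{sc}}$ times a Cauchy--Schwarz in $v$, which contributes the linear factor $\underline{\bm\varphi}[\varphi_1]$. The genuinely quadratic piece $\underline{\bm\varphi}[\varphi_1]^2$ enters through commutator-generated products $\meth^i\varphi_1\,\meth^{k-i+1}\bar\varphi_1$ where each factor carries roughly half of the derivatives; applying Cauchy--Schwarz to the lightcone integral of such a bilinear forces two copies of the $\underline{\bm\varphi}[\varphi_1]$ norm. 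All remaining mixed products involving $\varphi_0$ or $\varphi_2$ are of lower order and are absorbed into the additive $+1$ via Propositions~\ref{L2varphi0} and~\ref{L2varphi12}.

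The main obstacle I anticipate is the careful book-keeping needed to see that no third- or higher-power $\underline{\bm\varphi}[\varphi_1]^k$ term appears, and that no $\bm\varphi[\varphi_0]$ factor leaks in: one has to exploit the fact that at most one factor in any commutator-generated product can sit at top order, and that every $\meth\varphi_0$-type source either admits an $L^\infty_{sc}$ bound on $\varphi_0$ or is balanced by enough inverse powers of $|u|$ after signature counting to fall into $+1$. Once these structural observations are in place, the argument reduces to the same signature-counting exercise as in Propositions~\ref{L2lambda}--\ref{L2Psi0}, summed at the end over all strings $\mathcal{D}^{k_i}$ using definition~\eqref{T-weightL2Norm} to yield the stated bound.
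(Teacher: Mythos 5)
Your plan for $\TiPsi_2,\TiPsi_3,\Psi_4$ coincides with the paper's, which in fact treats all four components uniformly through the outgoing identities \eqref{T-weightMasslessStBianchi1}, \eqref{T-weightMasslessStBianchi3}, \eqref{T-weightMasslessStBianchi5}, \eqref{T-weightMasslessStBianchi7} and the $v$-direction transport estimate \eqref{SCldirectiongronwall}. Your detour for $\TiPsi_1$, however, rests on a misreading: the term $\meth'\Psi_0$ in \eqref{T-weightMasslessStBianchi1} is not awkward. After commuting it becomes $\meth^{k+1}\Psi_0$ with $k+1\le 10$, which is precisely what the lightcone norm $\bm\Psi[\Psi_0]$ (defined up to $i\le 10$) controls, and it is the \emph{sole} source of the factor $\bm\Psi[\Psi_0]$ in the stated bound. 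Your proposed substitute, the $-\lambda\Psi_0$ coupling in the $\TiPsi_2$ equation, is algebraic and only ever produces at most $k\le 9$ angular derivatives of $\Psi_0$; it is bounded by $1$ and cannot generate $\bm\Psi[\Psi_0]$. Conversely, integrating \eqref{T-weightMasslessStBianchi4} in $u$ trades the principal term for $\meth^{k+1}\TiPsi_2$ on $\mathcal{N}'_v$ and hence yields $\underline{\bm\Psi}[\TiPsi_2]$ in place of $\bm\Psi[\Psi_0]$ --- a serviceable estimate for the bootstrap, but not the inequality you were asked to prove.

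The more serious gap is your account of where $\underline{\bm\varphi}[\varphi_1]^2+\underline{\bm\varphi}[\varphi_1]$ comes from. A bilinear source $\meth^{i}\bm\varphi\,\meth^{k+1-i}\bm\varphi$ cannot yield two copies of a lightcone norm by Cauchy--Schwarz: inside the $\mathrm{d}v'$ integral one factor must be estimated pointwise in $(u',v')$ (in $L^{\infty}_{sc}$ or by its sphere norm), so a single Cauchy--Schwarz produces at most one lightcone norm; these terms are in fact bounded by $1$ using the bootstrap. The true source of the quadratic term is the \emph{underived} matter contribution $\mu\varphi_{0,1}\varphi_{0,1}$ in $\Gamma\bm\varphi\bm\varphi$: since $\mu\sim 1/|u|$ is large, the scale-invariant weights leave no spare power of $a$ or $|u|$, and the bootstrap alone gives $\mathcal{O}^2$, which does not improve \eqref{Hypothesis}. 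One must insert the sphere estimate of Proposition~\ref{L2varphi0}, $\bmGamma(\varphi_0)\lesssim\underline{\bm\varphi}[\varphi_1]+1$, and squaring it is exactly what produces $\underline{\bm\varphi}[\varphi_1]^2$. Your bookkeeping paragraph explicitly excludes any such $\varphi_0$-composition, so as written your argument does not close on the $\mu\varphi_0^2$ contribution.
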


\begin{proof}
  For the~$L^2$ estimates of~$\TiPsi_1$, $\TiPsi_2$, $\TiPsi_3$
  and~$\Psi_4$, we make use of the Bianchi
  identities~\eqref{T-weightMasslessStBianchi1},
  \eqref{T-weightMasslessStBianchi3},
  \eqref{T-weightMasslessStBianchi5} and
  \eqref{T-weightMasslessStBianchi7}. These equations share the same
  structure ---namely, one has that
\begin{align*}
\mthorn\Psi_{II}=\meth(\Psi_0,\TiPsi)+\bm\varphi\meth\bm\varphi+\Gamma\Psi+\Gamma\bm\varphi\bm\varphi,
\end{align*}
where we have used~$\TiPsi$ denote~$\{\TiPsi_1,\TiPsi_2,\TiPsi_3\}$
and~$\Psi_{II}$ to denote~$\{\TiPsi_1,\TiPsi_2,\TiPsi_3,\Psi_4\}$. We
commute~$\meth^k$ with~$\mthorn'$ to obtain
\begin{align*}
\mthorn\meth^k\Psi_{II}&=\meth^{k+1}(\Psi_0,\TiPsi)
+\sum_{i_1+i_2+i_3=k,i_3<k}\meth^{i_1}\Gamma(\tau,\pi)^{i_2}\meth^{i_3+1}(\Psi_0,\TiPsi)
+\sum_{i_1+i_2+i_3+i_4=k}\meth^{i_1}\Gamma(\tau,\pi)^{i_2}\meth^{i_3}\bm\varphi\meth^{i_4+1}\bm\varphi \\
&+\sum_{i_1+i_2+i_3+i_4=k}\meth^{i_1}\Gamma(\tau,\pi)^{i_2}\meth^{i_3}\Gamma\meth^{i_4}\Psi
+\sum_{i_1+i_2+i_3+i_4=k}\meth^{i_1}\Gamma(\tau,\pi)^{i_2}\meth^{i_3}\Gamma\meth^{i_4}\bm\varphi\meth^{i_5}\bm\varphi .
\end{align*}
We then make use of the scale-invariant norm transport
estimate~\eqref{SCldirectiongronwall} so as to obtain
\begin{align*}
||(a^{\frac{1}{2}}\meth)^{k}\Psi_{II}||_{L^2_{sc}(\mathcal{S}_{u,v})}&\lesssim
||(a^{\frac{1}{2}}\meth)^{k}\Psi_{II}||_{L^2_{sc}(\mathcal{S}_{u,0})}+
\int_0^v||a^{\frac{k}{2}}\mthorn\meth^{k}\Psi_{II}||_{L^2_{sc}(\mathcal{S}_{u,v'})} \\
&\leq||(a^{\frac{1}{2}}\meth)^{k}\Psi_{II}||_{L^2_{sc}(\mathcal{S}_{u,0})}+
\int_0^v||a^{\frac{k}{2}}\mathcal{D}^{k+1}(\Psi_0,\TiPsi)||_{L^2_{sc}(\mathcal{S}_{u,v'})} \\
&+\sum_{i_1+i_2+i_3=k,i_3<k}\int_0^v
||a^{\frac{k}{2}}\mathcal{D}^{i_1}\Gamma^{i_2}\mathcal{D}^{i_3+1}(\Psi_0,\TiPsi)||_{L^2_{sc}(\mathcal{S}_{u,v'})} \\
&+\sum_{i_1+i_2+i_3+i_4=k}\int_0^v
||a^{\frac{k}{2}}\mathcal{D}^{i_1}\Gamma^{i_2}\mathcal{D}^{i_3}\bm\varphi
\mathcal{D}^{i_4+1}\bm\varphi||_{L^2_{sc}(\mathcal{S}_{u,v'})} \\
&+\sum_{i_1+i_2+i_3+i_4=k}\int_0^v
||a^{\frac{k}{2}}\mathcal{D}^{i_1}\Gamma^{i_2}\mathcal{D}^{i_3}\Gamma
\mathcal{D}^{i_4}\Psi||_{L^2_{sc}(\mathcal{S}_{u,v'})} \\
&+\sum_{i_1+i_2+i_3+i_4+i_5=k}\int_0^v
||a^{\frac{k}{2}}\mathcal{D}^{i_1}\Gamma^{i_2}\mathcal{D}^{i_3}\Gamma\mathcal{D}^{i_4}\bm\varphi
\mathcal{D}^{i_5}\bm\varphi||_{L^2_{sc}(\mathcal{S}_{u,v'})} \\
&\lesssim1+I_1+...+I_5.
\end{align*}

For the term~$I_1$ we have that 
\begin{align*}
I_1&\leq\int_0^v||a^{\frac{k}{2}}\mathcal{D}^{k+1}(\Psi_0,\TiPsi)||_{L^2_{sc}(\mathcal{S}_{u,v'})} 
\leq\frac{1}{a^{\frac{1}{2}}}||(a^{\frac{1}{2}}\mathcal{D})^{k+1}(\Psi_0,\TiPsi)||_{L^2_{sc}(\mathcal{N}_{u}(0,v))} \\
&\leq\bm\Psi[\Psi_0]+\frac{1}{a^{\frac{1}{2}}}\bm\Psi[\TiPsi]
\leq\bm\Psi[\Psi_0]+1.
\end{align*}
For~$I_2$ we find that
\begin{align*}
I_2\leq\frac{1}{a^{\frac{1}{2}}}\frac{a^{\frac{i_2}{2}}}{|u|^{i_2}}\mathcal{O}^{i_2}a^{\frac{1}{2}}\mathcal{O}
\leq\frac{a^{\frac{1}{2}}}{|u|}\mathcal{O}^2\leq1.
\end{align*}
For~$I_3$ we have that
\begin{align*}
I_3&\leq\sum_{i_1+i_2+i_3+i_4=k}\int_0^v(
||a^{\frac{k}{2}}\mathcal{D}^{i_1}\Gamma^{i_2}\mathcal{D}^{i_3}\varphi_{0,1}
\mathcal{D}^{i_4+1}\varphi_{0,1}||_{L^2_{sc}(\mathcal{S}_{u,v'})} \\
&+||a^{\frac{k}{2}}\mathcal{D}^{i_1}\Gamma^{i_2}\mathcal{D}^{i_3}\varphi_{2}
\mathcal{D}^{i_4+1}\varphi_{1}||_{L^2_{sc}(\mathcal{S}_{u,v'})} ) \\
&\leq\frac{1}{a^{\frac{1}{2}}}\frac{a^{\frac{i_2}{2}}}{|u|^{i_2+1}}\mathcal{O}^{i_2}
(a^{\frac{1}{2}}\mathcal{O}\mathcal{O}+\mathcal{O}\frac{|u|}{a^{\frac{1}{2}}}\mathcal{O}) \\
&\leq\frac{1}{|u|}\mathcal{O}^2+\frac{1}{a}\mathcal{O}^2
\leq1.
\end{align*}
For the term~$I_4$, the worst term is~$\lambda\Psi_0$. In this case we
have
\begin{align*}
I_4&\leq\sum_{i=0}^k\int_0^v
||a^{\frac{k}{2}}\mathcal{D}^{i}\lambda\mathcal{D}^{k-i}\Psi_0||_{L^2_{sc}(\mathcal{S}_{u,v'})}
+\sum_{i_1+i_2+i_3+i_4=k,i_3+i_4<k}\int_0^v
||a^{\frac{k}{2}}\mathcal{D}^{i_1}\Gamma^{i_2}\mathcal{D}^{i_3}\lambda
\mathcal{D}^{i_4}\Psi_0||_{L^2_{sc}(\mathcal{S}_{u,v'})} \\
&+\sum_{i_1+i_2+i_3+i_4=k}\int_0^v
||a^{\frac{k}{2}}\mathcal{D}^{i_1}\Gamma^{i_2}\mathcal{D}^{i_3}\Gamma(\lambda,...)
\mathcal{D}^{i_4}\Psi_{II}||_{L^2_{sc}(\mathcal{S}_{u,v'})} \\
&\leq\frac{1}{|u|}\frac{|u|}{a^{\frac{1}{2}}}a^{\frac{1}{2}}
+\frac{a^{\frac{i_2}{2}}}{|u|^{i_2+1}}\mathcal{O}^{i_2}\frac{|u|}{a^{\frac{1}{2}}}a^{\frac{1}{2}}
+\frac{a^{\frac{i_2}{2}}}{|u|^{i_2+1}}\mathcal{O}^{i_2}\frac{|u|}{a^{\frac{1}{2}}}\mathcal{O}
\leq1+\frac{a^{\frac{1}{2}}}{|u|}\mathcal{O}+\frac{\mathcal{O}}{a^{\frac{1}{2}}}
\leq1.
\end{align*}
Finally, for~$I_5$, we denote integral contain
terms~$\mu\varphi_{0,1}\varphi_{0,1}$ by~$I_{51}$. One has that
\begin{align*}
I_{51}&\leq\sum_{i=0}^k\int_0^v
||a^{\frac{k}{2}}\mu\mathcal{D}^{i}\varphi_{0,1}\mathcal{D}^{k-i}\varphi_{0,1}||_{L^2_{sc}(\mathcal{S}_{u,v'})} \\
&+\sum_{i_1+i_2+i_3+i_4+i_5=k, i_3+i_4+i_5<k}\int_0^v
||a^{\frac{k}{2}}\mathcal{D}^{i_1}\Gamma^{i_2}\mathcal{D}^{i_3}\mu
\mathcal{D}^{i_4}\varphi_{0,1}\mathcal{D}^{i_5}\varphi_{0,1}||_{L^2_{sc}(\mathcal{S}_{u,v'})} \\
&\leq\frac{1}{|u|^2}\frac{|u|^2}{a}a^{\frac{1}{2}}\Gamma(\varphi_{0,1})a^{\frac{1}{2}}\Gamma(\varphi_{0,1})
+\frac{a^{\frac{i_2}{2}}}{|u|^{i_2+2}}\mathcal{O}^{i_2}\frac{|u|^2}{a}a^{\frac{1}{2}}\mathcal{O}a^{\frac{1}{2}}\mathcal{O}\\
&\leq\bmGamma(\varphi_{0})^2+\frac{1}{a^{\frac{1}{2}}}\bmGamma(\varphi_{0})\bmGamma(\varphi_{1})
+\frac{1}{a}\bmGamma(\varphi_{1})^2+\frac{a^{\frac{1}{2}}}{|u|}\mathcal{O}^3 \\
&\leq\underline{\bm\varphi}[\varphi_1]^2
+\underline{\bm\varphi}[\varphi_1]+1.
\end{align*}
Similarly, denote the integral contain terms with~$\varphi_2$,
namely~$\sigma\varphi_2^2$, $\lambda\varphi_0\varphi_2$ and
$\pi\varphi_1\varphi_2$ by~$I_{52}$. It follows then that
\begin{align*}
I_{52}&\leq\frac{a^{\frac{i_2}{2}}}{|u|^{i_2+2}}\mathcal{O}^{i_2}\left(
a^{\frac{1}{2}}\mathcal{O}\frac{|u|}{a^{\frac{1}{2}}}\mathcal{O}\frac{|u|}{a^{\frac{1}{2}}}\mathcal{O}
+\frac{|u|}{a^{\frac{1}{2}}}a^{\frac{1}{2}}\mathcal{O}\frac{|u|}{a^{\frac{1}{2}}}\mathcal{O}
+\mathcal{O}\mathcal{O}\frac{|u|}{a^{\frac{1}{2}}}\mathcal{O}
\right) \\
&=\frac{1}{a^{\frac{1}{2}}}\frac{a^{\frac{i_2}{2}}}{|u|^{i_2}}\mathcal{O}^{i_2+3}
+\frac{1}{a^{\frac{1}{2}}}\frac{a^{\frac{i_2}{2}}}{|u|^{i_2}}\mathcal{O}^{i_2+2}
+\frac{1}{a^{\frac{1}{2}}}\frac{a^{\frac{i_2}{2}}}{|u|^{i_2+1}}\mathcal{O}^{i_2+3}
\leq\frac{\mathcal{O}^3}{a^{\frac{1}{2}}}
\leq1.
\end{align*}
Denoting the remaining terms by~$I_{53}$, one has
\begin{align*}
I_{53}&\leq\sum_{i_1+i_2+i_3+i_4+i_5=k}\int_0^v
||a^{\frac{k}{2}}\mathcal{D}^{i_1}\Gamma^{i_2}\mathcal{D}^{i_3}\Gamma(\lambda,\sigma,\tau,\pi,\rho)
\mathcal{D}^{i_4}\varphi_{0,1}\mathcal{D}^{i_5}\varphi_{0,1}||_{L^2_{sc}(\mathcal{S}_{u,v'})} \\
&\leq\frac{a^{\frac{i_2}{2}}}{|u|^{i_2+2}}\mathcal{O}^{i_2}\left(
\frac{|u|}{a^{\frac{1}{2}}}+a^{\frac{1}{2}}\mathcal{O}+\mathcal{O}
\right)a^{\frac{1}{2}}\mathcal{O}a^{\frac{1}{2}}\mathcal{O}
\leq\frac{1}{a^{\frac{1}{2}}}\frac{a^{\frac{i_2}{2}}}{|u|^{i_2+1}}\mathcal{O}^{i_2+2}
+\frac{a^{\frac{i_2+3}{2}}}{|u|^{i_2+2}}\mathcal{O}^{i_2+3} \\
&\leq\frac{\mathcal{O}^2}{a^{\frac{1}{2}}|u|}
+\frac{a^{\frac{3}{2}}}{|u|^2}\mathcal{O}^3
\leq1.
\end{align*}

To conclude, combining the above estimates for~$I_1,\ldots,I_5$ one
has that
\begin{align*}
||(a^{\frac{1}{2}}\meth)^{k}\Psi_{II}||_{L^2_{sc}(\mathcal{S}_{u,v})}\lesssim
\bm\Psi[\Psi_0]+\underline{\bm\varphi}[\varphi_1]^2+\underline{\bm\varphi}[\varphi_1]+1.
\end{align*}
An inspection then shows that the other strings of derivatives
in~$\mathcal{D}^{k_i}\Psi_{II}$ give rise to the same
estimate. Accordingly, we find that
\begin{align*}
||(a^{\frac{1}{2}}\mathcal{D})^{k}\Psi_{II}||_{L^2_{sc}(\mathcal{S}_{u,v})}\lesssim
\bm\Psi[\Psi_0]+\underline{\bm\varphi}[\varphi_1]^2+\underline{\bm\varphi}[\varphi_1]+1.
\end{align*}

\end{proof}

\subsection{Summary and discussion of~$L^2(\mathcal{S})$ estimate}

In this subsection we summarise and put together
the~$L^2(\mathcal{S})$ estimates we have obtained so far.

We have showed that under the bootstrap assumption
\begin{align*}
\bmGamma,\bm\varphi,\bm\Psi\leq\mathcal{O},
\end{align*}
we obtain the estimates
\begin{itemize}
\item[(i)] for~$0\leq k\leq10$, 
\begin{align*}
\frac{a^{\frac{1}{2}}}{|u|}||(a^{\frac{1}{2}}\mathcal{D})^k\lambda||_{L^2_{sc}(\mathcal{S}_{u,v})}&\lesssim1,\\
\frac{1}{a^{\frac{1}{2}}}||(a^{\frac{1}{2}}\mathcal{D})^{k}\sigma||_{L^2_{sc}(\mathcal{S}_{u,v})}&\lesssim\bm\Psi[\Psi_0]+1, \\
||(a^{\frac{1}{2}}\mathcal{D})^{k}\ulomega||_{L^2_{sc}(\mathcal{S}_{u,v})}&\lesssim
\bm\Psi[\TiPsi_2]+\underline{\bm\varphi}[\varphi_1]+1, \\
||(a^{\frac{1}{2}}\mathcal{D})^k\ulchi||_{L^2_{sc}(\mathcal{S}_{u,v})}&\lesssim
\underline{\bm\Psi}[\TiPsi_2]+\underline{\bm\varphi}[\varphi_1]+1,\\
||(a^{\frac{1}{2}}\mathcal{D})^{k}\rho||_{L^2_{sc}(\mathcal{S}_{u,v})}&\lesssim
(\bm\Psi[\Psi_0]+\underline{\bm\varphi}[\varphi_1]+1)^2, \\
||(a^{\frac{1}{2}}\mathcal{D})^{k}\tau||_{L^2_{sc}(\mathcal{S}_{u,v})}&\lesssim
\bm\Psi[\TiPsi_1]+\underline{\bm\varphi}[\Tivarphi_2]\underline{\bm\varphi}[\varphi_1]
+\underline{\bm\varphi}[\Tivarphi_2]+\underline{\bm\varphi}[\varphi_1]+1, \\
\frac{a}{|u|}||(a^{\frac{1}{2}}\mathcal{D})^k\Timu||_{L^2_{sc}(\mathcal{S}_{u,v})}&\lesssim
\bm\Psi[\TiPsi_2]+\underline{\bm\varphi}[\varphi_1]+1, \\
\frac{a}{|u|^2}||(a^{\frac{1}{2}}\mathcal{D})^k\mu||_{L^2_{sc}(\mathcal{S}_{u,v})}&\lesssim1, \\
||(a^{\frac{1}{2}}\mathcal{D})^k\pi||_{L^2_{sc}(\mathcal{S}_{u,v})}&\lesssim
\underline{\bm\Psi}[\TiPsi_3]+\bm\Psi[\TiPsi_1]
+\underline{\bm\varphi}[\Tivarphi_2]\underline{\bm\varphi}[\varphi_1]
+\underline{\bm\varphi}[\Tivarphi_2]+\underline{\bm\varphi}[\varphi_1]+1.
\end{align*}

\item[(ii)] for~$0\leq k\leq10$, 
\begin{align*}
\frac{1}{a^{\frac{1}{2}}}||(a^{\frac{1}{2}}\mathcal{D})^k\varphi_0||_{L^2_{sc}(\mathcal{S}_{u,v})}&\lesssim
\underline{\bm{\varphi}}[\varphi_1]+1, \quad
||(a^{\frac{1}{2}}\mathcal{D})^k\varphi_1||_{L^2_{sc}(\mathcal{S}_{u,v})}
\lesssim\underline{\bm\varphi}[\Tivarphi_2]+1, \\
\frac{a^{\frac{1}{2}}}{|u|}||(a^{\frac{1}{2}}\mathcal{D})^{k}\varphi_2||_{L^2_{sc}(\mathcal{S}_{u,v})}
&\lesssim1.
\end{align*}

\item[(iii)] for~$0\leq k\leq9$, 
\begin{align*}
||(a^{\frac{1}{2}}\mathcal{D})^{k}\Tivarphi_2||_{L^2_{sc}(\mathcal{S}_{u,v})}&\lesssim1,\\
  \frac{1}{a^{\frac{1}{2}}}||(a^{\frac{1}{2}}\mathcal{D})^k\Psi_0||_{L^2_{sc}(\mathcal{S}_{u,v})}&\lesssim1, \\
||(a^{\frac{1}{2}}\mathcal{D})^{k}\Psi_{II}||_{L^2_{sc}(\mathcal{S}_{u,v})}&\lesssim
\bm\Psi[\Psi_0]+\underline{\bm\varphi}[\varphi_1]^2+\underline{\bm\varphi}[\varphi_1]+1.
\end{align*}
where~$\Psi_{II}=\{\TiPsi_1,\TiPsi_2,\TiPsi_3,\Psi_4\}$.
\end{itemize}

Actually, from the bootstrap assumption~$\bmGamma\leq\mathcal{O}$, one
can obtain additional estimates for all derivatives of~$\bm\varphi$
and~$\bm\Psi$ except those at the highest level. We now give estimates
which will be used to control the Gaussian curvature.

When~$0\leq k\leq10$, from Propositions~\ref{L2varphi0}
and~\ref{L2varphi12} we have that
\begin{align*}
\bm\varphi[\varphi_0]_k&\equiv \frac{1}{a^{\frac{1}{2}}}\left(\int_0^v
||a^{\frac{k-1}{2}}\mathcal{D}^k\varphi_0||^2_{L^2_{sc}(\mathcal{S}_{u,v})}
\right)^{\frac{1}{2}}
=\frac{1}{a}\left(\int_0^v
||a^{\frac{k}{2}}\mathcal{D}^k\varphi_0||^2_{L^2_{sc}(\mathcal{S}_{u,v})}
\right)^{\frac{1}{2}} \\
&\leq\frac{1}{a}a^{\frac{1}{2}}\mathcal{O}
=\frac{\mathcal{O}}{a^{\frac{1}{2}}}\leq1,
\end{align*}
and furthermore that
\begin{align*}
\underline{\bm\varphi}[\varphi_1]_k&\equiv\frac{1}{a^{\frac{1}{2}}}\left(\int_{u_{\infty}}^u\frac{a}{|u'|^2}
||a^{\frac{k-1}{2}}\mathcal{D}^k\varphi_1||^2_{L^2_{sc}(\mathcal{S}_{u,v})}
\right)^{\frac{1}{2}}
=\frac{1}{a}\left(\int_{u_{\infty}}^u\frac{a}{|u'|^2}
||a^{\frac{k}{2}}\mathcal{D}^k\varphi_1||^2_{L^2_{sc}(\mathcal{S}_{u,v})}
\right)^{\frac{1}{2}} \\
&\leq\frac{1}{a}a^{\frac{1}{2}}\mathcal{O}
=\frac{\mathcal{O}}{a^{\frac{1}{2}}}\leq1.
\end{align*}

When~$0\leq k\leq9$, from Proposition~\ref{L2Psi0} we have that
\begin{align*}
\bm\Psi[\Psi_0]_k=\frac{1}{a^{\frac{1}{2}}}\left(\int_0^v
||(a^{\frac{1}{2}}\mathcal{D})^k\Psi_0||^2_{L^2_{sc}(\mathcal{S}_{u,v})}
\right)^{\frac{1}{2}}
\leq\frac{1}{a^{\frac{1}{2}}}a^{\frac{1}{2}}=1.
\end{align*}

With the aid of the above three results, one can make use of
the~$L^2(\mathcal{S})$ estimates from the previous subsection, so that
when~$0\leq k\leq9$ one has that
\begin{align*}
\bm\Gamma(\sigma,\varphi_0,\varphi_1)_k\lesssim1.
\end{align*}
Moreover, making use of the
estimate~\eqref{T-weightMasslessStBianchi3} in
Proposition~\ref{L2PsiII} to estimate~$\Psi_2$, we can conclude with
the inequality
\begin{align*}
\bm\Gamma(\TiPsi_2)_k\leq1,
\end{align*}
which leads to
\begin{align*}
\bm\Psi[\TiPsi_2]_k\leq1, \quad \bm\Gamma(\Timu)_k\leq1.
\end{align*}

From Lemma~\ref{L2rhoAlt} we have that for~$1\leq k\leq9$
\begin{align*}
\bm\Gamma(\rho)_k\lesssim\frac{a}{|u|}
\end{align*}
and 
\begin{align*}
\bm\Gamma(\rho)_0\lesssim1.
\end{align*}

Collecting the results above we have following corollary:

\begin{corollary}
\label{PreforGaussCurv}
Under bootstrap assumption
$\bmGamma,\bm\varphi,\bm\Psi\leq\mathcal{O}$, for~$0\leq k\leq9$, we
have that 
\begin{align*}
  \bm\Gamma(\sigma,\Timu,\varphi_0,\varphi_1,\TiPsi_2)_k,
  \bm\Psi[\Psi_0]_k, \bm\Psi[\TiPsi_2]_k
  \lesssim1.
\end{align*}
When~$0\leq k\leq10$, we have that
\begin{align*}
\bm\varphi[\varphi_0]_k,\,
\underline{\bm\varphi}[\varphi_1]_k\leq1,
\end{align*}
and for~$\rho$ we have that 
\begin{align*}
\bm\Gamma(\rho)_{1\leq k\leq9}\lesssim\frac{a}{|u|}, \quad
\bm\Gamma(\rho)_0\lesssim1.
\end{align*}
\end{corollary}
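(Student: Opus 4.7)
The plan is to exploit the hierarchical structure already visible in Propositions~\ref{L2varphi0}--\ref{L2PsiII}: each sphere bound established there is a linear combination of integrated flux norms $\bm\Psi[\,\cdot\,]$, $\underline{\bm\varphi}[\,\cdot\,]$ plus~$1$, and these flux norms gain an extra factor of~$a^{-1/2}$ once one uses the bootstrap assumption $\bmGamma+\bm\varphi+\bm\Psi\le\mathcal{O}$ in tandem with the relation $\mathcal{O}\ll a^{1/2}$. The corollary is therefore an almost mechanical iteration once the substitutions are carried out in the correct causal order.

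First I would establish the auxiliary integrated bounds. By definition
\[
\bm\varphi[\varphi_0]_k
=\tfrac{1}{a}\Bigl(\int_0^v\|(a^{1/2}\mathcal{D})^k\varphi_0\|^2_{L^2_{sc}(\mathcal{S}_{u,v'})}\,\mathrm{d}v'\Bigr)^{1/2},
\]
and the bootstrap assumption gives $\|(a^{1/2}\mathcal{D})^k\varphi_0\|_{L^2_{sc}(\mathcal{S}_{u,v'})}\le a^{1/2}\mathcal{O}$ for $0\le k\le 10$, yielding $\bm\varphi[\varphi_0]_k\lesssim a^{-1/2}\mathcal{O}\lesssim 1$. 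The same argument, using the weight $a/|u'|^2$ in the $u$-integration along $\mathcal{N}'_v$, produces $\underline{\bm\varphi}[\varphi_1]_k\lesssim 1$. Integrating the sphere estimate of Proposition~\ref{L2Psi0} in $v$ over $[0,1]$ then gives $\bm\Psi[\Psi_0]_k\lesssim 1$ for $0\le k\le 9$.

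Next I would feed these $\lesssim 1$ bounds back into the sphere estimates to upgrade them. Proposition~\ref{L2sigma} gives $\bm\Gamma(\sigma)_k\lesssim\bm\Psi[\Psi_0]+1\lesssim 1$; Proposition~\ref{L2varphi0} gives $\bm\Gamma(\varphi_0)_k\lesssim\underline{\bm\varphi}[\varphi_1]+1\lesssim 1$; Proposition~\ref{L2varphi12} yields $\bm\Gamma(\Tivarphi_2)_k\lesssim 1$ and then $\bm\Gamma(\varphi_1)_k\lesssim\underline{\bm\varphi}[\Tivarphi_2]+1\lesssim 1$. Specialising the Bianchi estimate of Proposition~\ref{L2PsiII} to $\TiPsi_2$ and substituting the already-controlled quantities produces $\bm\Gamma(\TiPsi_2)_k\lesssim 1$, which by integration in $v$ (as for $\Psi_0$) gives $\bm\Psi[\TiPsi_2]_k\lesssim 1$; Proposition~\ref{L2mu} then yields $\bm\Gamma(\Timu)_k\lesssim \bm\Psi[\TiPsi_2]+\underline{\bm\varphi}[\varphi_1]+1\lesssim 1$. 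For $\rho$ I would invoke Remark~\ref{L2rhoAlt}: the Minkowskian datum $\rho|_{v=0}=-1/|u|$ has all angular (and $\mthorn$-commuted) derivatives vanishing on $\mathcal{N}'_\star$, so for $1\le k\le 9$ only the $v$-integral of the right-hand side of the commuted transport equation contributes, and inserting the bounds just obtained gives $\bm\Gamma(\rho)_k\lesssim a/|u|$; for $k=0$ the initial datum dominates and yields $\bm\Gamma(\rho)_0\lesssim 1$.

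The main obstacle I anticipate is purely organisational: one must ensure at each invocation that the derivative counts line up so that the quantities appearing on the right-hand side of a Proposition from Section~\ref{L2estimate} are controlled either by the bootstrap assumption at order $\le 10$ or by an estimate previously established in the chain. This is straightforward provided the substitutions are performed in the order $\bm\varphi[\varphi_0]\to\underline{\bm\varphi}[\varphi_1]\to\bm\Psi[\Psi_0]\to\bm\Gamma(\sigma,\varphi_0,\varphi_1,\Tivarphi_2)\to\bm\Gamma(\TiPsi_2)\to\bm\Psi[\TiPsi_2]\to\bm\Gamma(\Timu)\to\bm\Gamma(\rho)$, i.e.\ following the natural causal hierarchy inherited from the reduced-data set of Lemma~\ref{Lemma3}. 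No top-order ($k=10$) derivative of any Weyl scalar is ever required, which is precisely the reason the restriction $k\le 9$ appears in the statement.
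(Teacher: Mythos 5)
Your argument is correct and follows essentially the same route as the paper: the $a^{-1}$ weight in the definitions of $\bm\varphi[\varphi_0]_k$ and $\underline{\bm\varphi}[\varphi_1]_k$ combined with the bootstrap bound gives the gain $\mathcal{O}/a^{1/2}\le 1$, $\bm\Psi[\Psi_0]_k$ follows by integrating Proposition~\ref{L2Psi0} in $v$, and these are fed back through Propositions~\ref{L2sigma}, \ref{L2varphi0}, \ref{L2varphi12}, \ref{L2PsiII} (specialised to $\TiPsi_2$ via~\eqref{T-weightMasslessStBianchi3}), \ref{L2mu} and Remark~\ref{L2rhoAlt} in exactly the order the paper uses. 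The only quibble is your closing parenthetical: the estimate for $\mathcal{D}^k\TiPsi_2$ with $k\le 9$ does invoke the tenth-order flux norm of $\TiPsi_1$ along $\mathcal{N}_u$, which is available from the bootstrap norm $\Psi_{10}$; the restriction $k\le 9$ reflects the range of the sphere norms $\Psi_{i,2}$ rather than a complete absence of top-order Weyl derivatives.
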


\section{Elliptic estimates}
\label{EllipticEstimate}

In this section we estimate the highest (11th order) derivative of
connection coefficients.  These results are needed in the energy
estimates of the auxiliary fields~$\varphi_{0,1,2}$. The reason one
cannot directly use the transport estimate is due to the presence of
angular derivatives and Weyl curvature terms leading to even higher
derivatives of said quantities. To overcome this problem, based on the
analysis on topological 2-spheres~$\mathcal{S}$, as introduced in
Chapter 2 of~\cite{ChrKla93}, which provided an estimate of up to nine
derivatives of the Gaussian curvature on~$\mathcal{S}$, we can obtain
an estimate for the highest derivative of the connection coefficients
from the information of the divergence and curl of its' lower
derivatives. The required Gaussian curvature estimates can be obtained
from the Codazzi equation. Further discussion on this strategy can be
found in~\cite{Luk12}, \cite{An202209}.

In this article we can simplify the discussion by making use of the
T-weight formalism and relying on~\ref{T-weightElliptic} ---which
shows that the divergence has the same information as the
curl. Additional simplifications are obtained by constructing some new
quantities via divergences and the Weyl curvature so that the
transport equation of these new quantities no longer involve the Weyl
tensor.

The connection coefficients can be divided into three categories
depending on the specific structure of their equations:

\begin{itemize}
\item[(i)] The first category contains~$(\rho,\sigma)$,
  $(\mu,\lambda)$.  For these coefficients we make use one transport
  equation for~$\rho$ ($\mu$) which does not contain spherical
  derivative and the Weyl tensor to obtain estimates.  The leading
  term in this estimate is~$\sigma$ ($\lambda$) which be controlled by
  its divergence via the Codazzi equation and
  Proposition~\ref{EllipticT-weightSC}. Details are given in
  Propositions~\ref{11Derrhosigma} and~\ref{11Dermulambda}.

\item[(ii)] The second category contains~$\tau$ and~$\pi$. Their
  transport equations do not contain angular derivative but have
  components of the Weyl tensor. To avoid the eleventh derivative of
  the Weyl, we construct new quantities with the divergence and Weyl
  tensor for which we estimate up to the tenth derivative.
  Proposition~\ref{EllipticT-weightSC} then allows us to control the
  top derivatives of~$\tau$ and~$\pi$. This discussion is presented in
  Propositions~\ref{11Dertau} and~\ref{11Derpi}.

\item[(iii)] The third category consists of~$\ulomega$ which is a pure
  scalar with zero T-weight.  This property leads to a relatively
  complex construction. Via an auxiliary field, we construct the new
  quantity~$\tilde\ulomega$ to avoid the analysis of the eleventh
  derivative of the Weyl tensor. More details are given in
  Proposition~\ref{11Derulomega}.

\end{itemize}

\subsection{Preliminary elliptic estimates}

Making use of the~$L^2(\mathcal{S})$ estimates, we have the estimate:

\begin{lemma}
\begin{align*}
\sum_{i=0}^7||(a^{\frac{1}{2}}\mathcal{D})^iK||_{L_{sc}^{\infty}(\mathcal{S}_{u,v})}
+\sum_{i=0}^9||(a^{\frac{1}{2}}\mathcal{D})^iK||_{L_{sc}^{2}(\mathcal{S}_{u,v})}\lesssim1.
\end{align*}
\end{lemma}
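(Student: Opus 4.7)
The plan is to substitute the closed form
$$K = 2\mu\rho - \lambda\sigma - \bar\lambda\bar\sigma - \TiPsi_2 - \bar\TiPsi_2 + 6\varphi_1\bar\varphi_1$$
recorded at the end of Section~\ref{ESstGauge}, and to control each ingredient term-by-term using the $L^2(\mathcal{S})$ estimates already assembled in Section~\ref{L2estimate}. For the $L^2_{sc}$ bound with $0\leq i\leq 9$, I apply $(a^{1/2}\mathcal{D})^i$ to each product and use the Leibniz rule to generate a finite sum of terms of the form $(a^{1/2}\mathcal{D})^{i_1}f\cdot(a^{1/2}\mathcal{D})^{i_2}g$ with $i_1+i_2=i$ and $f,g\in\{\mu,\rho,\lambda,\sigma,\TiPsi_2,\varphi_1\}$ (or their complex conjugates). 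For each product I apply the scale-invariant H\"older inequality from Section~\ref{Preliminary}, placing the lower-derivative factor (say $i_2\leq\lfloor i/2\rfloor\leq 4$) in $L^\infty_{sc}(\mathcal{S})$ and the other in $L^2_{sc}(\mathcal{S})$. The $L^\infty_{sc}$ factor is then promoted to an $L^2_{sc}$ bound by the Sobolev embedding $||(a^{1/2}\mathcal{D})^j f||_{L^\infty_{sc}}\lesssim\sum_{\ell\leq 2}||(a^{1/2}\mathcal{D})^{j+\ell}f||_{L^2_{sc}}$; since $j+\ell\leq 6$, the required estimates are supplied by Corollary~\ref{PreforGaussCurv} (for $\sigma$, $\Timu$, $\varphi_1$, $\TiPsi_2$ and $\rho$) and by Proposition~\ref{L2lambda} (for $\lambda$). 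The resulting scale weights in $a$ and $|u|$ combine to $\lesssim 1$ under the standing assumptions $\mathcal{O}^{20}\leq a^{1/16}$ and $a\leq|u|$.

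The $L^\infty_{sc}$ bound with $0\leq i\leq 7$ is then obtained by applying the same Sobolev embedding to the scalar $(a^{1/2}\mathcal{D})^i K$, which yields
$$||(a^{1/2}\mathcal{D})^i K||_{L^\infty_{sc}(\mathcal{S}_{u,v})}\lesssim\sum_{\ell\leq 2}||(a^{1/2}\mathcal{D})^{i+\ell}K||_{L^2_{sc}(\mathcal{S}_{u,v})}.$$
The right-hand side is controlled by the preceding step since $i+\ell\leq 9$.

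The delicate point lies in the term $\mu\rho$. The bootstrap provides only the weaker bound $(a/|u|^2)||(a^{1/2}\mathcal{D})^j\mu||_{L^\infty_{sc}}\lesssim\mathcal{O}$, which reflects the slow Minkowski decay $|\mu|\sim 1/|u|$ rather than the signature-saturated rate $a/|u|^3$. To absorb this deficit I split $\mu=\Timu+1/u$: the renormalized piece $\Timu$ obeys the signature-saturating estimate $\bm\Gamma(\Timu)_k\lesssim 1$ of Corollary~\ref{PreforGaussCurv}, while the scalar $1/u$ is constant along each $\mathcal{S}_{u,v}$ and hence commutes with, and is annihilated by, every sphere derivative $\mathcal{D}$. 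The remaining contribution $(1/u)\rho$ is then controlled directly by the $\rho$ bounds of Corollary~\ref{PreforGaussCurv}, which give $\bm\Gamma(\rho)_0\lesssim 1$ at order zero (absorbing the Minkowski background value of $\rho$) and $\bm\Gamma(\rho)_k\lesssim a/|u|$ for $1\leq k\leq 9$; after multiplication by the $1/|u|$ from the H\"older inequality and combination with $a\leq|u|$, these yield the desired $\lesssim 1$ control. Once this splitting is in place, the book-keeping for the remaining terms $\lambda\sigma$, $\TiPsi_2$ and $\varphi_1\bar\varphi_1$ is mechanical and the lemma follows.
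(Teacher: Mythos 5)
Your proposal is correct and follows essentially the same route as the paper: substitute the Gauss equation for $K$, apply the Leibniz rule and the scale-invariant H\"older inequality with the estimates of Corollary~\ref{PreforGaussCurv} and Proposition~\ref{L2lambda}, handle the troublesome $\mu\rho$ term via the splitting $\mu=\Timu+1/u$, and obtain the $L^{\infty}_{sc}$ bound from the $L^2_{sc}$ bound by Sobolev embedding. The paper performs the $\mu$-splitting only at order $k=0$ (where it is the only place it is needed) but the content is identical to your treatment.
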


\begin{proof}
  The Gauss curvature satisfies the formula
\begin{align*}
K=2\mu\rho-\lambda\sigma-\bar\lambda\bar\sigma
+6\varphi_{1}\bar\varphi_{1}-\TiPsi_2-\bar\TiPsi_2.
\end{align*}
For~$1\leq k\leq9$ we have that 
\begin{align*}
||(a^{\frac{1}{2}}\mathcal{D})^kK||_{L_{sc}^{2}(\mathcal{S}_{u,v})}&\leq
||(a^{\frac{1}{2}}\mathcal{D})^k\Psi_2||_{L_{sc}^{2}(\mathcal{S}_{u,v})}
+\sum_{i=0}^k||(a^{\frac{1}{2}}\mathcal{D})^i\rho(a^{\frac{1}{2}}\mathcal{D})^{k-i}\mu||_{L_{sc}^{2}(\mathcal{S}_{u,v})} \\
&+\sum_{i=0}^k||(a^{\frac{1}{2}}\mathcal{D})^i\lambda(a^{\frac{1}{2}}\mathcal{D})^{k-i}\sigma||_{L_{sc}^{2}(\mathcal{S}_{u,v})}
+\sum_{i=0}^k||(a^{\frac{1}{2}}\mathcal{D})^{i}\varphi_{1}(a^{\frac{1}{2}}\mathcal{D})^{k-i}\varphi_{1}||_{L_{sc}^{2}(\mathcal{S}_{u,v})} \\
&\leq||(a^{\frac{1}{2}}\mathcal{D})^k\Psi_2||_{L_{sc}^{2}(\mathcal{S}_{u,v})}
+\sum_{i=0}^k\frac{1}{|u|}||(a^{\frac{1}{2}}\mathcal{D})^i\rho||\times||(a^{\frac{1}{2}}\mathcal{D})^{k-i}\mu||\\
&+\sum_{i=0}^k\frac{1}{|u|}||(a^{\frac{1}{2}}\mathcal{D})^i\lambda||\times||(a^{\frac{1}{2}}\mathcal{D})^{k-i}\sigma||
+\sum_{i=0}^k\frac{1}{|u|}||(a^{\frac{1}{2}}\mathcal{D})^{i}\varphi_{1}||\times||(a^{\frac{1}{2}}\mathcal{D})^{k-i}\varphi_{1}||.
\end{align*}
Then, applying the results in Corollary~\ref{PreforGaussCurv} we find
that
\begin{align*}
||(a^{\frac{1}{2}}\mathcal{D})^kK||_{L_{sc}^{2}(\mathcal{S}_{u,v})}\lesssim1.
\end{align*}

When~$k=0$, we have the troublesome term~$\rho\mu$. To estimate~$\mu$
we make use of the definition of~$\Timu$ to yield 
\begin{align*}
||\mu||_{L_{sc}^{2}(\mathcal{S}_{u,v})}\leq||\Timu||_{L_{sc}^{2}(\mathcal{S}_{u,v})}
+||\frac{1}{|u|}||_{L_{sc}^{2}(\mathcal{S}_{u,v})} 
\leq\frac{|u|}{a}+1.
\end{align*}
Hence, we have that 
\begin{align*}
||K||_{L_{sc}^{2}(\mathcal{S}_{u,v})}\leq\frac{1}{|u|}1\times\frac{|u|}{a}+1\lesssim1.
\end{align*}

Then we have when~$0\leq k\leq9$ that 
\begin{align*}
||(a^{\frac{1}{2}}\mathcal{D})^kK||_{L_{sc}^{2}(\mathcal{S}_{u,v})}\lesssim1.
\end{align*}
Finally, making use of the Sobolev embedding theorem we obtain the
desired result.
\end{proof}

\begin{remark}
{\em From the last lemma it follows that
\begin{align*}
&\sum_{i=0}^7|u|^2||(|u|^i\mathcal{D}^iK)||_{L^{\infty}(\mathcal{S}_{u,v})}\lesssim\frac{a}{|u|}\leq1, \\
&\sum_{i=0}^9|u|||(|u|^i\mathcal{D}^iK)||_{L^{2}(\mathcal{S}_{u,v})}\lesssim\frac{a}{|u|}\leq1.
\end{align*}
}
\end{remark}

With this result we have, in turn, that
\begin{align*}
|||u|^k\mathcal{D}^kf||^2_{L^2(\mathcal{S}_{u,v})}\lesssim\sum_{i\leq k-1}
\left(|||u|^{i+1}\mathcal{D}^j\mathscr{D}_f||^2_{L^2(\mathcal{S}_{u,v})}
+|||u|^j\mathcal{D}^if||^2_{L^2(\mathcal{S}_{u,v})}
\right) ,
\end{align*}
where~$\mathscr{D}_f$ is the divergence related scalar defined
in~\eqref{T-weightDiv}. In terms of scale-invariant norms we have:

\begin{proposition}
\label{EllipticT-weightSC}
Let~$f$ denote a quantity in Table~\ref{QuantityT-weight} with nonzero
T-weight. Then under the bootstrap assumption, for~$0\leq k\leq 11$,
one has that
\begin{align*}
||(a^{\frac{1}{2}}\mathcal{D})^kf||_{L^2_{sc}(\mathcal{S}_{u,v})}\lesssim\sum_{i\leq k-1}
\left(||(a^{\frac{1}{2}})^{i+1}\mathcal{D}^i\mathscr{D}_f||_{L_{sc}^2(\mathcal{S}_{u,v})}
+||(a^{\frac{1}{2}}\mathcal{D})^if||_{L_{sc}^2(\mathcal{S}_{u,v})}
\right).
\end{align*}
\end{proposition}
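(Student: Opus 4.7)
The plan is to reduce the claim to an inductive Hodge/Bochner estimate on the 2-sphere written in the T-weight language, and then simply rewrite the resulting inequality in scale-invariant form.

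First I would establish the base case and set up the induction on $k$. For $k=0$ there is nothing to prove. For the inductive step, I would exploit the observation from Section~\ref{T-weightElliptic} that, on the double null foliation where $\bar\rho=\rho$ and $\bar\mu=\mu$, the curl of the associated tensor $T(f)$ is, up to a factor of $\pm i$, given by its divergence. Consequently, for $f$ with nonzero T-weight, a single T-weight derivative $\nablasl T(f)$ is controlled in $L^2$ by $\mathscr{D}_f$ together with $f$ itself through the T-weight integration-by-parts identity of Lemma~\ref{IntegralbyPartT-weight}. Concretely, assuming $s(f)>0$ so $\mathscr{D}_f=\meth f$, I would write
\begin{align*}
\int_{\mathcal{S}}|\meth f|^{2}+\int_{\mathcal{S}}|\meth' f|^{2}=2\int_{\mathcal{S}}|\meth f|^{2}-s\int_{\mathcal{S}}K|f|^{2},
\end{align*}
using the identity $\int_{\mathcal{S}}|\meth f|^{2}=\int_{\mathcal{S}}|\meth' f|^{2}+\int_{\mathcal{S}}sK|f|^{2}$ derived at the end of Section~\ref{T-weightStokes}. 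This already produces the $k=1$ version of the desired estimate, provided the Gaussian curvature term is harmless; the preliminary lemma of this section gives $\|K\|_{L^{\infty}_{sc}(\mathcal{S}_{u,v})}\lesssim1$, so that the $K|f|^{2}$ term is absorbed into the lower-order contribution $\|f\|_{L^{2}_{sc}(\mathcal{S}_{u,v})}$ after putting in the appropriate $|u|$ and $a$ weights.

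Second, for the inductive step from $k-1$ to $k$, I would apply the $k=1$ estimate to each string $\mathcal{D}^{k-1}f$ appearing in the definition~\eqref{T-weightL2Norm01} of $|\mathcal{D}^{k}f|^{2}$. The divergence $\mathscr{D}(\mathcal{D}^{k-1}f)$ is, up to commutators, $\mathcal{D}^{k-1}\mathscr{D}_f$. The commutators are governed by~\eqref{T-weightMasslessCommutatorSt6}, i.e.\ $(\meth\meth'-\meth'\meth)g=s(g)Kg$, which moves a $\meth$ past a $\meth'$ at the cost of producing terms of the form $\mathcal{D}^{j}K\cdot\mathcal{D}^{k-2-j}f$ with $0\le j\le k-2$. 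These are handled by the Leibnitz rule together with the preliminary bound $\sum_{i\le 9}\|(a^{1/2}\mathcal{D})^{i}K\|_{L^{\infty}_{sc}(\mathcal{S}_{u,v})}+\|(a^{1/2}\mathcal{D})^{i}K\|_{L^{2}_{sc}(\mathcal{S}_{u,v})}\lesssim 1$ proved at the start of this section, together with the scale-invariant H\"older inequalities. Each such term gets absorbed into the $i\le k-1$ sum on the right-hand side of the stated inequality, where at worst we land on $\|(a^{1/2}\mathcal{D})^{i}f\|_{L^{2}_{sc}(\mathcal{S}_{u,v})}$ for some $i\le k-1$.

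Finally, I would carry out the conversion to scale-invariant norms, which is essentially bookkeeping: from the definition of the scale-invariant $L^{2}$ norm and the weight rule $s_{2}(\mathcal{D}g)=s_{2}(g)+1/2$, one has $\|(a^{1/2}\mathcal{D})^{k}f\|_{L^{2}_{sc}}=a^{k/2}|u|^{-k}\|\mathcal{D}^{k}f\|_{L^{2}}\cdot a^{-s_{2}(f)}|u|^{2s_{2}(f)}$, and identically $\|(a^{1/2})^{i+1}\mathcal{D}^{i}\mathscr{D}_f\|_{L^{2}_{sc}}$ carries the correct weight since $s_{2}(\mathscr{D}_f)=s_{2}(f)+1/2$. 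Thus the non-scale-invariant inequality quoted just above the statement transcribes directly into the scale-invariant one with matching weights and no extra constants beyond the $\lesssim$ already absorbing numerical factors. The case $s(f)<0$ is identical after swapping $\meth\leftrightarrow\meth'$.

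The main obstacle will be the careful control of the commutator remainder at higher orders: one needs to check that every $\mathcal{D}^{j}K$ factor that appears when one commutes $\meth$ and $\meth'$ past each other $k$ times can be placed either in $L^{\infty}_{sc}$ (when $j\le 7$) or in $L^{2}_{sc}$ (when $j\le 9$), and that the remaining factor of $f$-derivatives has low enough order to sit in the $\sum_{i\le k-1}$ on the right. Because $k\le 11$ and the preliminary lemma gives us $K$ estimates up to order $9$ in $L^{2}_{sc}$ and $7$ in $L^{\infty}_{sc}$, the scale-invariant H\"older inequalities reviewed in Section~\ref{Preliminary} leave room for exactly one top-order derivative of $f$ or $\mathscr{D}_f$ in each such product, which is precisely what the stated inequality allows.
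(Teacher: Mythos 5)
Your argument is correct and is essentially the standard Hodge/Bochner estimate on $\mathcal{S}$ that the paper itself does not write out but delegates to Section~6.1 of~\cite{An202209}: the curl being $\pm i$ times the divergence, the identity $\int_{\mathcal{S}}|\meth f|^{2}=\int_{\mathcal{S}}|\meth'f|^{2}+s\int_{\mathcal{S}}K|f|^{2}$, induction on $k$ with commutators controlled by the Gaussian-curvature bounds, and a final reweighting into scale-invariant norms. The only cosmetic point is that for $k=0$ the right-hand side sum is empty, so the statement is degenerate there (an indexing artefact of the paper, not of your proof); otherwise your derivative counting against the $K$-estimates ($L^{2}_{sc}$ up to order $9$, $L^{\infty}_{sc}$ up to order $7$) is exactly what closes the argument for $k\le 11$.
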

The proof can be found in Sec 6.1 in Paper~\cite{An202209}. When~$f$
is a pure scalar, we have the following:

\begin{proposition}
  \label{EllipticPureScalar}
  Let~$f$ denote a quantity in Table~\ref{QuantityT-weight} with zero
  T-weight. Then, under the bootstrap assumption,
  for~$0\leq k\leq 10$, one has that
\begin{align*}
||(a^{\frac{1}{2}}\mathcal{D})^{k+1}f||_{L^2_{sc}(\mathcal{S})}\lesssim
||(a^{\frac{1}{2}})^{k+1}\mathcal{D}^{k-1}(\bmDelta f)||_{L^2_{sc}(\mathcal{S})}
+\sum_{i=0}^k||(a^{\frac{1}{2}}\mathcal{D})^if||_{L^2_{sc}(\mathcal{S})},
\end{align*}
where~$\bmDelta f\equiv 2\meth\meth'f$.
\end{proposition}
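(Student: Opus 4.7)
The plan is to derive this estimate for the zero T-weight scalar $f$ by reducing to Proposition~\ref{EllipticT-weightSC} applied to $\meth f$ and $\meth' f$, which carry T-weights $-1$ and $+1$ respectively and to which the divergence-based elliptic estimate does apply. The first step is to pull a single spherical derivative onto $f$ so that what remains is a T-weighted quantity: by the definition of the scale-invariant norm~\eqref{T-weightL2Norm}--\eqref{T-weightL2Norm01}, each string making up $\mathcal{D}^{k+1}f$ can be factored so that its innermost operator acts on $f$ alone, yielding
\begin{align*}
||(a^{1/2}\mathcal{D})^{k+1} f||_{L^2_{sc}(\mathcal{S})} \lesssim a^{1/2}\bigl(||(a^{1/2}\mathcal{D})^k \meth f||_{L^2_{sc}(\mathcal{S})} + ||(a^{1/2}\mathcal{D})^k \meth' f||_{L^2_{sc}(\mathcal{S})}\bigr).
\end{align*}

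Next, I would apply Proposition~\ref{EllipticT-weightSC} to each of $\meth f$ and $\meth' f$. The essential simplification comes from the commutator identity~\eqref{T-weightMasslessCommutatorSt6}: since $f$ has zero T-weight, $(\meth\meth' - \meth'\meth)f = 0$, and consequently both divergence-related scalars collapse onto the Laplacian,
\begin{align*}
\mathscr{D}_{\meth' f} = \meth\meth' f = \tfrac{1}{2}\bmDelta f, \qquad \mathscr{D}_{\meth f} = \meth'\meth f = \tfrac{1}{2}\bmDelta f.
\end{align*}
Feeding these back into Proposition~\ref{EllipticT-weightSC} gives, for either choice,
\begin{align*}
a^{1/2}||(a^{1/2}\mathcal{D})^k \meth' f||_{L^2_{sc}(\mathcal{S})} \lesssim \sum_{i\leq k-1}||(a^{1/2})^{i+2}\mathcal{D}^i(\bmDelta f)||_{L^2_{sc}(\mathcal{S})} + a^{1/2}\sum_{i\leq k-1}||(a^{1/2}\mathcal{D})^i \meth' f||_{L^2_{sc}(\mathcal{S})},
\end{align*}
and the analogous inequality for $\meth f$.

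Finally, I would unpack the right-hand side. The top-order contribution $i = k-1$ in the first sum is exactly $||(a^{1/2})^{k+1}\mathcal{D}^{k-1}(\bmDelta f)||_{L^2_{sc}(\mathcal{S})}$, the Laplacian term stated in the proposition. For $i \leq k-2$, the string $\mathcal{D}^i(\bmDelta f) = 2\mathcal{D}^i\meth\meth' f$ is one of the strings appearing in $\mathcal{D}^{i+2} f$, so that $||(a^{1/2})^{i+2}\mathcal{D}^i(\bmDelta f)||_{L^2_{sc}(\mathcal{S})} \lesssim ||(a^{1/2}\mathcal{D})^{i+2} f||_{L^2_{sc}(\mathcal{S})}$; since $i+2 \leq k$, this is absorbed into $\sum_{j\leq k}||(a^{1/2}\mathcal{D})^j f||_{L^2_{sc}(\mathcal{S})}$. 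The second sum is handled identically after the trivial rewriting $a^{1/2}(a^{1/2}\mathcal{D})^i \meth' f \sim (a^{1/2}\mathcal{D})^{i+1} f$.

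The main obstacle is purely bookkeeping: tracking the $a^{1/2}$-weights through the reduction from $f$ to $\meth^{(\prime)} f$ and back, and matching the string conventions so that the top-order Laplacian term acquires the correct weight $(a^{1/2})^{k+1}$ after the initial extraction of $a^{1/2}$. No additional analytical input beyond Proposition~\ref{EllipticT-weightSC} is required; the reduction is made clean by the vanishing of the commutator on zero T-weight fields, which is precisely why the final estimate is phrased in terms of $\bmDelta f$ rather than a more complicated Hodge-type quantity.
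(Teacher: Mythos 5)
The paper states Proposition~\ref{EllipticPureScalar} without proof (only the tensorial analogue, Proposition~\ref{EllipticT-weightSC}, is justified, by reference to Sec.~6.1 of~\cite{An202209}), so there is no in-paper argument to compare against; judged on its own, your reduction is correct and is the standard way to obtain the scalar estimate. The key observations are all in place: the string decomposition of the norm gives $|\mathcal{D}^{k+1}f|^2=|\mathcal{D}^k\meth f|^2+|\mathcal{D}^k\meth'f|^2$ with matching signatures, the vanishing of the commutator~\eqref{T-weightMasslessCommutatorSt6} on zero T-weight fields in the double null gauge identifies both divergence scalars with $\tfrac12\bmDelta f$, and the lower-order Laplacian terms re-absorb into $\sum_{i\le k}\|(a^{1/2}\mathcal{D})^i f\|$ because $\mathcal{D}^i\meth\meth'f$ is a single string inside $\mathcal{D}^{i+2}f$. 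Two small points worth being explicit about: Proposition~\ref{EllipticT-weightSC} is phrased for quantities in Table~\ref{QuantityT-weight}, whereas you apply it to the derived fields $\meth f$ and $\meth'f$; this is legitimate because the underlying Hodge-type estimate holds for any T-weighted tensor on $\mathcal{S}$ with the Gauss curvature control already established, but you should say so rather than cite the proposition verbatim. Also, the case $k=0$ is degenerate (the sum $\sum_{i\le k-1}$ in Proposition~\ref{EllipticT-weightSC} is empty and the term $\mathcal{D}^{k-1}(\bmDelta f)$ is meaningless); this defect is inherited from the statements themselves, but your proof should note that the argument starts at $k\ge1$.
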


\subsection{Elliptic estimates for the highest (11th) derivative of
  the connection}

The norm of the eleventh derivative quantities of the connection
coefficients is defined as
\begin{align*}
\bm{\Gamma}_{11}(u,v)&\equiv
||a^5\mathcal{D}^{11}\sigma||_{L^2_{sc}(\mathcal{N}_u(0,v))}
+||a^5\mathcal{D}^{11}\rho||_{L^2_{sc}(\mathcal{N}_u(0,v))} \\
&+||a^5\mathcal{D}^{11}\tau||_{L^2_{sc}(\mathcal{N}'_v(u_{\infty},u))}
+||a^5\mathcal{D}^{11}\tau||_{L^2_{sc}(\mathcal{N}_u(0,v))}\\
&+\frac{a}{|u|}||a^5\mathcal{D}^{11}\pi||_{L^2_{sc}(\mathcal{N}_u(0,v))}
+||a^5\mathcal{D}^{11}\ulomega||_{L^2_{sc}(\mathcal{N}'_v(u_{\infty},u))}\\
&+\int_{u_{\infty}}^u\frac{a^{\frac{1}{2}}}{|u|^3}||a^5\mathcal{D}^{11}\lambda||_{L_{sc}^2(\mathcal{S}_{u,v})}
+\int_{u_{\infty}}^u\frac{a^2}{|u|^3}||a^5\mathcal{D}^{11}\mu||_{L_{sc}^2(\mathcal{S}_{u,v})}.
\end{align*}
In the following estimates, we make use of the~$L^2(\mathcal{S})$
estimates and the bootstrap assumptions
\begin{align*}
 \bm{\varphi}, \bm{\Psi}, \bm{\Gamma}_{11}\leq\mathcal{O}.
\end{align*}
We will show that auxiliary bootstrap assumption on~$\bm{\Gamma}_{11}$
can be improved to
\begin{align*}
\bm{\Gamma}_{11}\lesssim\bm\Psi^2+\bm\Psi+\bm\varphi^2+\bm\varphi+1.
\end{align*}

\begin{proposition}
  \label{11Derrhosigma}
  One has that 
\begin{align*}
||a^5\mathcal{D}^{11}\rho||_{L^2_{sc}(\mathcal{S}_{u,v})}&\lesssim
\bm{\varphi}[\varphi_0]
+\bm\Psi[\Psi_0]+\bm\Psi[\TiPsi_1]+1,\\
||a^5\mathcal{D}^{11}\rho||_{L^2_{sc}(\mathcal{N}_u(0,v))}&\lesssim
\bm{\varphi}[\varphi_0]
+\bm\Psi[\Psi_0]+\bm\Psi[\TiPsi_1]+1
\end{align*}
and
\begin{align*}
||a^5\mathcal{D}^{11}\sigma||_{L^2_{sc}(\mathcal{N}_u(0,v))}
\lesssim\bm{\varphi}[\varphi_0]
+\bm\Psi[\Psi_0]+\bm\Psi[\TiPsi_1]+1.
\end{align*}
\end{proposition}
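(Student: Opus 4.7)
The plan is to handle $\sigma$ and $\rho$ together. The pair falls into category~(i) of the classification: the transport equation $\mthorn\rho=\rho^2+\sigma\bar\sigma+3\varphi_0^2$ contains neither angular derivatives nor Weyl terms, so it supports a direct scale-invariant transport argument at top order, while $\mathcal{D}^{11}\sigma$ is recovered elliptically from $\mathcal{D}^{11}\rho$ through the Codazzi identity~\eqref{T-weightMasslessStructureStEq14}. The main obstacle will be that this creates a coupled system: the transport of $\rho$ feeds in $\sigma\mathcal{D}^{11}\bar\sigma$, and the elliptic reconstruction of $\sigma$ feeds back $\meth\rho$; so $\|a^5\mathcal{D}^{11}\rho\|$ and $\|a^5\mathcal{D}^{11}\sigma\|$ must be closed simultaneously using the smallness of the coupling prefactor.

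For the elliptic step, since $s(\sigma)=-2<0$ the divergence scalar of~\eqref{T-weightDiv} is $\mathscr{D}_\sigma=\meth'\sigma$, and~\eqref{T-weightMasslessStructureStEq14} gives $\meth'\sigma=\meth\rho-\tau\rho+\bar\tau\sigma+\TiPsi_1$. Applying Proposition~\ref{EllipticT-weightSC} with $k=11$ and inserting this identity into the top ($i=10$) divergence term yields
\[
\|a^5\mathcal{D}^{11}\sigma\|_{L^2_{sc}(\mathcal{S}_{u,v})}
\lesssim \|a^5\mathcal{D}^{11}\rho\|_{L^2_{sc}(\mathcal{S}_{u,v})}
+ \|a^5\mathcal{D}^{10}\TiPsi_1\|_{L^2_{sc}(\mathcal{S}_{u,v})} + R_\sigma,
\]
where the remainder $R_\sigma$ is a sum of products of at most ten derivatives of $\sigma$, $\rho$ and $\tau$. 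By Corollary~\ref{PreforGaussCurv} and Propositions~\ref{L2sigma} and~\ref{L2tau} these sub-leading pieces are controlled by $\bm\Psi[\Psi_0]+\bm\Psi[\TiPsi_1]+1$.

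For $\rho$, I would commute $\mathcal{D}^{11}$ through the transport equation using the commutator formula for $H_k$ at the end of Section~\ref{Preliminary}. Since $\rho=-1/|u|$ is Minkowskian on $\mathcal{N}'_{\star}$, every angular derivative of $\rho$ vanishes at $v=0$, so the scale-invariant Gr\"onwall inequality~\eqref{SCldirectiongronwall} reduces the sphere estimate to a $v'$-integral of the commuted right-hand side. The dangerous top pieces are $\sigma\mathcal{D}^{11}\bar\sigma$, $\varphi_0\mathcal{D}^{11}\varphi_0$ and $\rho\mathcal{D}^{11}\rho$; on each, applying the scale-invariant H\"older inequality and Cauchy-Schwarz in $v'$ trades the $v'$-integral for the $\mathcal{N}_u$-lightcone norm of the top factor, giving
\[
\|a^5\mathcal{D}^{11}\rho\|_{L^2_{sc}(\mathcal{S}_{u,v})}
\lesssim \tfrac{a^{1/2}\mathcal{O}}{|u|}\bigl(\|a^5\mathcal{D}^{11}\sigma\|_{L^2_{sc}(\mathcal{N}_u(0,v))} + a^{1/2}\bm\varphi[\varphi_0]\bigr) + R_\rho,
\]
with $R_\rho\lesssim\bm\Psi[\Psi_0]+\bm\Psi[\TiPsi_1]+1$ coming from lower-derivative pieces via Proposition~\ref{L2sigma}.

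The hard step is closing the coupling. Substituting the elliptic estimate into the transport inequality produces a self-feedback term of the form $(a^{1/2}\mathcal{O}/|u|)\,\|a^5\mathcal{D}^{11}\rho\|_{L^2_{sc}(\mathcal{N}_u)}$ on the right-hand side; in the existence region $|u|\geq a/4$ and under the bootstrap hierarchy $\mathcal{O}^{20}\leq a^{1/16}$, the prefactor is bounded by $\mathcal{O}/a^{1/2}\ll 1$. Squaring the sphere inequality and integrating in $v$ converts the left-hand side into $\|a^5\mathcal{D}^{11}\rho\|^2_{L^2_{sc}(\mathcal{N}_u(0,v))}$, allowing the feedback term to be absorbed by a further application of Gr\"onwall in the $v$-variable. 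This yields the stated $\mathcal{N}_u$ estimate for $\rho$; the sphere bound for $\rho$ follows immediately by reading off the inequality on any $v$-slice, and re-inserting it into the elliptic estimate and integrating in $v$ gives the corresponding lightcone bound for $\sigma$.
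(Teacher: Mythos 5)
Your proposal is correct and follows essentially the same route as the paper: a scale-invariant transport estimate for $\mthorn\rho=\rho^2+\sigma\bar\sigma+3\varphi_0^2$ at top order, the Codazzi identity~\eqref{T-weightMasslessStructureStEq14} plus Proposition~\ref{EllipticT-weightSC} to recover $\mathcal{D}^{11}\sigma$ from $\mathcal{D}^{11}\rho$, and a Gr\"onwall absorption to close the resulting coupled system. The only cosmetic difference is the order of substitution — the paper inserts the transport bound into the elliptic one and runs Gr\"onwall on the sphere norm of $\sigma$ in $v$, whereas you insert the elliptic bound into the transport one and absorb at the level of the lightcone norm of $\rho$ — which is an equivalent bookkeeping choice.
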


\begin{proof}
\smallskip
\noindent
\textbf{Step 1}: Estimate of~$\rho$.

We make use of structure equation for~$\rho$ ---namely, equation
\eqref{T-weightMasslessStructureStEq1}:
\begin{align*}
\mthorn\rho&=\rho^2+\sigma\bar\sigma+3\varphi_{0}^2.
\end{align*}
Commuting~$\meth^{11}$ with~$\mthorn$, one has that
\begin{align*}
\mthorn\meth^{11}\rho&=\Gamma(\rho,\sigma)\meth^{11}\rho+\sigma\meth^{11}\sigma+\varphi_0\meth^{11}\varphi_0 \\
&+\sum_{i_1+i_2+i_3+i_4=11,i_3+i_4<11}\meth^{i_1}\Gamma(\tau,\pi)^{i_2}
\meth^{i_3}[\Gamma(\tau,\pi,\rho,\sigma),\varphi_0]
\meth^{i_4}[\Gamma(\rho,\sigma),\varphi_0].
\end{align*}
Then, making use of the transport estimate in scale-invariant norm,
one has that
\begin{align*}
||a^5\meth^{11}\rho||_{L^2_{sc}(\mathcal{S}_{u,v})}&\lesssim
||a^5\meth^{11}\rho||_{L^2_{sc}(\mathcal{S}_{u,0})}+
\int_0^v||a^5\mthorn\meth^{11}\rho||_{L^2_{sc}(\mathcal{S}_{u,v'})}\\
&\leq\int_0^v||a^5\rho\meth^{11}\rho||_{L^2_{sc}(\mathcal{S}_{u,v'})} 
+\int_0^v||a^5\sigma\meth^{11}\sigma||_{L^2_{sc}(\mathcal{S}_{u,v'})} 
+\int_0^v||a^5\varphi_0\meth^{11}\varphi_0||_{L^2_{sc}(\mathcal{S}_{u,v'})} \\
&\sum_{i_1+i_2+i_3+i_4=11,i_3+i_4<11}\int_0^v||a^5\meth^{i_1}\Gamma(\tau,\pi)^{i_2}
\meth^{i_3}[\Gamma,\varphi_0]
\meth^{i_4}[\Gamma,\varphi_0]||_{L^2_{sc}(\mathcal{S}_{u,v'})} \\
&=I_1+...+I_4.
\end{align*}

For~$I_1$ we have that
\begin{align*}
I_1\leq\int_0^v\frac{1}{|u|}||\rho||_{L^{\infty}_{sc}(\mathcal{S}_{u,v'})}
||a^5\mathcal{D}^{11}\rho||_{L^2_{sc}(\mathcal{S}_{u,v})}
\leq\frac{\bmGamma(\rho)_{0,\infty}+a^{\frac{1}{2}}\bmGamma(\sigma)_{0,\infty}}{|u|}\int_0^v
||a^5\mathcal{D}^{11}\rho||_{L^2_{sc}(\mathcal{S}_{u,v})}.
\end{align*}
This term can be absorbed by the Gr\"onwall inequality.

For~$I_2$ we have that 
\begin{align*}
I_2&\leq\int_0^v\frac{1}{|u|}||\sigma||_{L^{\infty}_{sc}(\mathcal{S}_{u,v'})}
||a^5\mathcal{D}^{11}\sigma||_{L^2_{sc}(\mathcal{S}_{u,v})}
\leq\frac{a^{\frac{1}{2}}\bmGamma(\sigma)_{0,\infty}}{|u|}\int_0^v
||a^5\mathcal{D}^{11}\sigma||_{L^2_{sc}(\mathcal{S}_{u,v})} \\
&\leq\frac{a^{\frac{1}{2}}}{|u|}\int_0^v
||a^5\mathcal{D}^{11}\sigma||_{L^2_{sc}(\mathcal{S}_{u,v})}.
\end{align*}

For~$I_3$ we have
\begin{align*}
I_3&\leq\int_0^v\frac{1}{|u|}||\varphi_0||_{L^{\infty}_{sc}(\mathcal{S}_{u,v'})}
||a^5\mathcal{D}^{11}\varphi_0||_{L^2_{sc}(\mathcal{S}_{u,v})}
\leq\frac{a^{\frac{1}{2}}\bmGamma(\varphi_0)_{0,\infty}}{|u|}\int_0^v
||a^5\mathcal{D}^{11}\varphi_0||_{L^2_{sc}(\mathcal{S}_{u,v})} \\
&\leq\frac{a}{|u|}\bmGamma(\varphi_0)_{0,\infty}\bm{\varphi}[\varphi_0]
\leq\frac{a}{|u|}(\underline{\bm\varphi}[\varphi_1]+1)\bm{\varphi}[\varphi_0]
\leq\frac{a}{|u|}\bm{\varphi}[\varphi_0].
\end{align*}

For~$I_4$ we have
\begin{align*}
I_4&\leq\sum_{i_1+i_2+i_3+i_4=11}\int_0^v
\frac{(a^{\frac{1}{2}})^{i_2-1}}{|u|^{i_2+1}}||(a^{\frac{1}{2}}\mathcal{D})^{j_1}\Gamma||...
||(a^{\frac{1}{2}}\mathcal{D})^{j_{i_2}}\Gamma||\times
||(a^{\frac{1}{2}}\mathcal{D})^{i_3}[\Gamma,\varphi_0]||\times||(a^{\frac{1}{2}}\mathcal{D})^{i_4}[\Gamma,\varphi_0]|| \\
&\leq\frac{(a^{\frac{1}{2}})^{i_2-1}}{|u|^{i_2+1}}\mathcal{O}^{i_2}a^{\frac{1}{2}}\mathcal{O}a^{\frac{1}{2}}\mathcal{O}
=\frac{(a^{\frac{1}{2}})^{i_2+1}}{|u|^{i_2+1}}\mathcal{O}^{i_2+2}
\leq\frac{a}{|u|^2}\mathcal{O}^2\leq1.
\end{align*}

Collecting the results above we have that 
\begin{align*}
||a^5\meth^{11}\rho||_{L^2_{sc}(\mathcal{S}_{u,v})}\lesssim
\frac{a^{\frac{1}{2}}}{|u|}\int_0^v
||a^5\mathcal{D}^{11}\sigma||_{L^2_{sc}(\mathcal{S}_{u,v})}+
\frac{a}{|u|}\bm{\varphi}[\varphi_0].
\end{align*}
Following the same strategy one can estimate the rest of the strings
of derivatives in~$\mathcal{D}^{k_{i}}\Titau$ and obtain the same
result. That leads to
\begin{align}
\label{Ellipticrhomid}
||a^5\mathcal{D}^{11}\rho||_{L^2_{sc}(\mathcal{S}_{u,v})}\lesssim
\frac{a^{\frac{1}{2}}}{|u|}\int_0^v
||a^5\mathcal{D}^{11}\sigma||_{L^2_{sc}(\mathcal{S}_{u,v})}+
\frac{a}{|u|}\bm{\varphi}[\varphi_0].
\end{align}

\smallskip
\noindent
\textbf{Step 2}: Elliptic estimates for~$\sigma$. \\

For~$\sigma$ we apply the structure
equation~\eqref{T-weightMasslessStructureStEq14} ---namely
\begin{align*}
\meth\rho-\meth'\sigma&=\tau\rho-\bar\tau\sigma-\TiPsi_1.
\end{align*}
As the T-weight of~$\sigma$ is~$-2$, hence we have that
\begin{align*}
\mathscr{D}_{\sigma}=\meth'\sigma=\meth\rho-\tau\rho+\bar\tau\sigma+\TiPsi_1.
\end{align*}
Now, applying Proposition~\ref{EllipticT-weightSC}, we have that
\begin{align*}
||a^5\mathcal{D}^{11}\sigma||_{L^2_{sc}(\mathcal{S}_{u,v})}&\lesssim\sum_{i\leq10}
(\frac{1}{a^{\frac{1}{2}}}||(a^{\frac{1}{2}}\mathcal{D})^{i}\sigma||_{L^2_{sc}(\mathcal{S}_{u,v})}
+||(a^{\frac{1}{2}}\mathcal{D})^{i}\mathscr{D}_{\sigma}||_{L^2_{sc}(\mathcal{S}_{u,v})}) \\
&\lesssim||a^5\mathcal{D}^{11}\rho||_{L^2_{sc}(\mathcal{S}_{u,v})}+
\sum_{i=0}^{10}||(a^{\frac{1}{2}}\mathcal{D})^{i}\TiPsi_1||_{L^2_{sc}(\mathcal{S}_{u,v})} \\
&+\sum_{i=0}^{10}||(a^{\frac{1}{2}}\mathcal{D})^{i}\tau(a^{\frac{1}{2}}\mathcal{D})^{i}\rho||_{L^2_{sc}(\mathcal{S}_{u,v})} 
+\sum_{i=0}^{10}||(a^{\frac{1}{2}}\mathcal{D})^{i}\tau(a^{\frac{1}{2}}\mathcal{D})^{i}\sigma||_{L^2_{sc}(\mathcal{S}_{u,v})} \\
&+\bm\Psi[\Psi_0]+1 \\
&\lesssim||a^5\mathcal{D}^{11}\rho||_{L^2_{sc}(\mathcal{S}_{u,v})}+
\sum_{i=0}^{10}||(a^{\frac{1}{2}}\mathcal{D})^{i}\TiPsi_1||_{L^2_{sc}(\mathcal{S}_{u,v})}
+\bm\Psi[\Psi_0]+1 \\
&\lesssim\frac{a^{\frac{1}{2}}}{|u|}\int_0^v
||a^5\mathcal{D}^{11}\sigma||_{L^2_{sc}(\mathcal{S}_{u,v})}+
\frac{a}{|u|}\bm{\varphi}[\varphi_0]
+\sum_{i=0}^{10}||(a^{\frac{1}{2}}\mathcal{D})^{i}\TiPsi_1||_{L^2_{sc}(\mathcal{S}_{u,v})}
+\bm\Psi[\Psi_0]+1,
\end{align*}
where in the second step we have made use of 
\begin{align*}
\frac{1}{a^{\frac{1}{2}}}||(a^{\frac{1}{2}}\mathcal{D})^{i}\sigma||_{L^2_{sc}(\mathcal{S}_{u,v})}
\lesssim\bm\Psi[\Psi_0]+1 .
\end{align*}
for~$i\leq10$.

Then using Gr\"onwall's inequality we have that 
\begin{align}
\label{EllipticsigmaMid}
||a^5\mathcal{D}^{11}\sigma||_{L^2_{sc}(\mathcal{S}_{u,v})}&\lesssim
\frac{a}{|u|}\bm{\varphi}[\varphi_0]
+\sum_{i=0}^{10}||(a^{\frac{1}{2}}\mathcal{D})^{i}\TiPsi_1||_{L^2_{sc}(\mathcal{S}_{u,v})}
+\bm\Psi[\Psi_0]+1,
\end{align}
so that integrating along the outgoing lightcone we find that
\begin{align*}
||a^5\mathcal{D}^{11}\sigma||_{L^2_{sc}(\mathcal{N}_u(0,v))}
\lesssim\bm{\varphi}[\varphi_0]
+\bm\Psi[\Psi_0]+\bm\Psi[\TiPsi_1]+1.
\end{align*}
Substituting~\eqref{EllipticsigmaMid} back into~\eqref{Ellipticrhomid}
we see that
\begin{align*}
||a^5\mathcal{D}^{11}\rho||_{L^2_{sc}(\mathcal{S}_{u,v})}\lesssim
\bm{\varphi}[\varphi_0]
+\bm\Psi[\Psi_0]+\bm\Psi[\TiPsi_1]+1.
\end{align*}
Integrating along the outgoing lightcone we have that
\begin{align*}
||a^5\mathcal{D}^{11}\rho||_{L^2_{sc}(\mathcal{N}_u(0,v))}\lesssim
\bm{\varphi}[\varphi_0]
+\bm\Psi[\Psi_0]+\bm\Psi[\TiPsi_1]+1.
\end{align*}
\end{proof}

Next, we have that 
\begin{proposition}
\label{11Dertau}
\begin{align*}
||a^5\mathcal{D}^{11}\tau||_{L^2_{sc}(\mathcal{N}_u(0,v))}
\lesssim\bm\Psi[\Psi_0]+\bm\Psi[\TiPsi_2]+\underline{\bm\varphi}[\varphi_1]+1,
\end{align*}
\begin{align*}
||a^5\mathcal{D}^{11}\tau||_{L^2_{sc}(\mathcal{N}'_v(u_{\infty},u))}
\lesssim\bm\Psi[\Psi_0]+\underline{\bm\Psi}[\TiPsi_2]+\underline{\bm\varphi}[\varphi_1]+1.
\end{align*}
\end{proposition}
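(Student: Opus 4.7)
The plan is to apply the strategy announced for the second category of connection coefficients: introduce a renormalised quantity combining the divergence of $\tau$ with a component of the Weyl tensor so that its transport equation contains no top-order Weyl derivative, estimate up to ten derivatives of this renormalisation using the transport estimate, and then recover the eleventh derivative of $\tau$ itself from the elliptic estimate of Proposition~\ref{EllipticT-weightSC}. Since $s(\tau)=-1<0$, one has $\mathscr{D}_\tau=\meth'\tau$, so the natural renormalisation is
\begin{align*}
\tilde\tau\equiv\meth'\tau-\TiPsi_2,
\end{align*}
which has T-weight zero. The choice is motivated by the observation that applying $\meth'$ to the structure equation~\eqref{T-weightMasslessStructureStEq9} produces the top term $\meth'\TiPsi_1$, which is exactly the top term produced by the Bianchi identity~\eqref{T-weightMasslessStBianchi3} for $\mthorn\TiPsi_2$. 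Using the commutator~\eqref{T-weightMasslessCommutatorSt4} one therefore obtains
\begin{align*}
\mthorn\tilde\tau=6\varphi_1\meth'\varphi_0+\Gamma(\tau,\pi,\rho,\sigma)\cdot\Gamma(\rho,\sigma,\tau,\pi)+\Gamma(\lambda)\Psi_0+\Gamma\cdot\TiPsi+\Gamma\cdot\bmvarphi\cdot\bmvarphi,
\end{align*}
schematically, in which no first angular derivative of Weyl appears.

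Next, commute $\mathcal{D}^{10}$ past $\mthorn$ and feed the resulting equation into the scale-invariant transport estimate~\eqref{SCldirectiongronwall}. Each term on the right-hand side is then controlled by the $L^2(\mathcal{S})$ estimates from Section~\ref{L2estimate} together with the norms appearing in the statement: the $\lambda\Psi_0$-type terms are absorbed into $\bm\Psi[\Psi_0]$ (and the $\lambda$-factor via Proposition~\ref{L2lambda}), the $\Gamma\TiPsi$-type terms into $\bm\Psi[\TiPsi_2]$ together with the already-controlled $\bm\Psi[\TiPsi_2]_k\lesssim 1$ from Corollary~\ref{PreforGaussCurv}, and the $\varphi_1\meth'\varphi_0$ term into $\underline\bmvarphi[\varphi_1]$. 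Integrating in $v$ and noting that Minkowskian data forces $\tilde\tau|_{v=0}=0$ produces
\begin{align*}
\|a^5\mathcal{D}^{10}\tilde\tau\|_{L^2_{sc}(\mathcal{S}_{u,v})}\lesssim\bm\Psi[\Psi_0]+\bm\Psi[\TiPsi_2]+\underline\bmvarphi[\varphi_1]+1
\end{align*}
pointwise in $(u,v)\in\mathbb{D}$.

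The proof is then completed by combining this bound with the elliptic inequality
\begin{align*}
\|a^5\mathcal{D}^{11}\tau\|_{L^2_{sc}(\mathcal{S}_{u,v})}\lesssim\|a^{11/2}\mathcal{D}^{10}\meth'\tau\|_{L^2_{sc}(\mathcal{S}_{u,v})}+\sum_{i\leq10}\|(a^{1/2}\mathcal{D})^i\tau\|_{L^2_{sc}(\mathcal{S}_{u,v})}
\end{align*}
of Proposition~\ref{EllipticT-weightSC}, using the identity $\meth'\tau=\tilde\tau+\TiPsi_2$ to split the leading term. For the $\mathcal{N}_u(0,v)$ norm, integrate in $v$: the $\tilde\tau$-part is controlled using Fubini together with the estimate above, and the $\TiPsi_2$-part contributes $\bm\Psi[\TiPsi_2]$ directly. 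For the $\mathcal{N}'_v(u_\infty,u)$ norm, integrate in $u$ with weight $a/|u'|^2$: the $\TiPsi_2$-part contributes $\underline\bm\Psi[\TiPsi_2]$, and the pointwise $\tilde\tau$-bound together with the fact that $a/|u|\lesssim 1$ on $\mathbb{D}$ gives the $\mathcal{N}'_v$-control. The lower-order terms $\sum_{i\leq10}\|(a^{1/2}\mathcal{D})^i\tau\|$ are already handled by Proposition~\ref{L2tau}.

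The main obstacle is the bookkeeping in step one: verifying that after applying $\meth'$ to~\eqref{T-weightMasslessStructureStEq9} and subtracting the Bianchi identity~\eqref{T-weightMasslessStBianchi3}, the $\meth'\TiPsi_1$ contributions cancel exactly rather than merely schematically, and checking that all of the resulting nonlinear terms (in particular those arising from the commutator $[\mthorn,\meth']\tau$ of~\eqref{T-weightMasslessCommutatorSt4} and from expanding $\meth^{10}$ past $\mthorn$) fit into the RHS norms, and not into $\bm\Psi[\TiPsi_1]$ or $\underline\bm\Psi[\TiPsi_3]$, neither of which appears in the statement. A secondary subtlety is that the $\mathcal{N}'_v$-norm is obtained from a $\mthorn$-transport estimate rather than a $\mthorn'$-transport of $\tilde\tau$ (attempting the latter introduces $\meth\TiPsi_3$ and $\meth'\bar\TiPsi_3$ which are not available); the route through the pointwise bound exploits the smallness $a/|u|\leq 4$ on $\mathbb{D}$ in a crucial way.
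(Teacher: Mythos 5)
Your proposal follows essentially the same route as the paper's proof: the same renormalised quantity $\Titau\equiv\meth'\tau-\TiPsi_2$, the same observation that its $\mthorn$-transport equation is free of Weyl derivatives because the $\meth'\TiPsi_1$ terms from the structure equation~\eqref{T-weightMasslessStructureStEq9} and the Bianchi identity~\eqref{T-weightMasslessStBianchi3} cancel, the same tenth-order transport estimate, and the same recovery of $\mathcal{D}^{11}\tau$ via Proposition~\ref{EllipticT-weightSC} followed by integration along the two families of lightcones (your remark that the $\mathcal{N}'_v$-bound comes from the sphere-wise estimate rather than a $\mthorn'$-transport is exactly how the paper proceeds). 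The only quibble is a bookkeeping slip in the leading scalar-field term — after the cancellation the surviving top term is $\varphi_0\meth'\varphi_1$ (controlled by $\bm\varphi[\varphi_1]$), not $\varphi_1\meth'\varphi_0$ — which you already flag as the point requiring verification.
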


\begin{proof}
We define the field
\begin{align*}
\Titau\equiv\mathscr{D}_{\tau}-\TiPsi_2=\meth'\tau-\TiPsi_2.
\end{align*}
with zero T-weight and signature~$1$. The strategy of the proof is to
use of the estimates up to the tenth derivative of~$\Titau$ to obtain
the estimate of the eleventh derivative of~$\tau$ via
Proposition~\ref{EllipticT-weightSC}. Making use of the structure
equation of~$\tau$, equation~\eqref{T-weightMasslessStructureStEq9},
and the Bianchi identity for~$\mthorn\TiPsi_2$,
equation~\eqref{T-weightMasslessStBianchi3}, we find the following
equation for~$\Titau$:
\begin{align*}
\mthorn\Titau&=6\varphi_0\meth'\varphi_1+\tau\meth\bar\sigma+\bar\sigma\meth\tau+\sigma\meth'\pi+\pi\meth'\sigma
+\rho\meth'\bar\pi+\bar\pi\meth'\rho+\bar\tau\meth'\sigma+\sigma\meth'\bar\tau \\
&+2\rho\Titau+\lambda\Psi_0-\pi\TiPsi_1-\rho\TiPsi_2-3\mu\varphi_0^2-3\bar\sigma\varphi_1^2
+\pi\bar\pi\rho+\pi^2\sigma+\bar\pi\bar\sigma\tau-\bar\sigma\tau^2-\bar\pi\rho\bar\tau-\sigma\bar\tau^2.
\end{align*}
Remarkably, we see that this equation does not involve derivatives of
the Weyl tensor. Hence we can avoid the estimate of the eleventh
derivative of Weyl. We now first estimate the~$k=10$ derivative
of~$\Titau$. For this, we first
estimate~$\meth^k\Titau$. Commuting~$\meth^k$ with~$\mthorn$ one has
that
\begin{align*}
\mthorn\meth^k\Titau&=\varphi_{0}\meth^{11}\varphi_{1}+
\Gamma(\rho,\sigma)\meth^{11}\Gamma(\tau,\pi)
+\Gamma(\tau,\pi)\meth^{11}\Gamma(\rho,\sigma)
+\Gamma(\lambda,\pi,\rho)\meth^k\bm\Psi(\Psi_0,\TiPsi_1,\TiPsi_2) \\
&+\sum_{i_1+i_2+i_3+i_4=k,i_4\neq k}\meth^{i_1}\Gamma(\tau,\pi)^{i_2}\meth^{i_3}\varphi_{0}\meth^{i_4+1}\varphi_{1} 
+\sum_{i_1+i_2+i_3+i_4=k,i_4\neq k}\meth^{i_1}\Gamma(\tau,\pi)^{i_2}\meth^{i_3}\Gamma\meth^{i_4+1}\Gamma \\
&+\sum_{i_1+i_2+i_3+i_4=k,i_4\leq k-1}
\meth^{i_1}\Gamma(\tau,\pi)^{i_2}\meth^{i_3}\Gamma\meth^{i_4}\bm\Psi(\Psi_0,\TiPsi_1,\TiPsi_2) \\
&+\sum_{i_1+i_2+i_3+i_4+i_5=k}
\meth^{i_1}\Gamma(\tau,\pi)^{i_2}\meth^{i_3}\Gamma(\mu,\sigma)\meth^{i_4}\varphi\meth^{i_5}\varphi 
+\sum_{i_1+i_2+i_3+i_4+i_5=k}
\meth^{i_1}\Gamma(\tau,\pi)^{i_2}\meth^{i_3}\Gamma\meth^{i_4}\Gamma\meth^{i_5}\Gamma \\
&+\sum_{i_1+i_2+i_3+i_4=k}
\meth^{i_1}\Gamma(\tau,\pi)^{i_2}\meth^{i_3}\Gamma\meth^{i_4}\Titau ,
\end{align*}
where~$\Gamma$ contains~$\rho$, $\sigma$, $\pi$, $\tau$, $\varphi$
contain~$\varphi_{0}$ and~$\varphi_{1}$. Then, making use of the
transport estimate in scale-invariant norm, one finds that
\begin{align*}
&||(a^{\frac{1}{2}}\meth)^k\Titau||_{L^2_{sc}(\mathcal{S}_{u,v})}\lesssim
||(a^{\frac{1}{2}}\meth)^k\Titau||_{L^2_{sc}(\mathcal{S}_{u,0})}+
\int_0^v||\mthorn(a^{\frac{1}{2}}\meth)^k\Titau||_{L^2_{sc}(\mathcal{S}_{u,v'})} \\
&\quad\quad\quad\leq||(a^{\frac{1}{2}}\mathcal{D})^k\Titau||_{L^2_{sc}(\mathcal{S}_{u,0})}+
\int_0^va^{5}||\varphi_{0}\mathcal{D}^{11}\varphi_{1}||_{L^2_{sc}(\mathcal{S}_{u,v})}
 +\int_0^va^{5}||\Gamma\mathcal{D}^{11}\Gamma||_{L^2_{sc}(\mathcal{S}_{u,v})}\\
&\quad\quad\quad +\int_0^va^{5}||\Gamma\mathcal{D}^{10}\bm\Psi||_{L^2_{sc}(\mathcal{S}_{u,v})}
+\sum_{i_1+i_2+i_3+i_4=k,i_4\neq k}\int_0^v a^5
||\mathcal{D}^{i_1}\Gamma(\tau,\pi)^{i_2}\mathcal{D}^{i_3}\varphi_{0}
\mathcal{D}^{i_4+1}\varphi_{1}||_{L^2_{sc}(\mathcal{S}_{u,v})} \\
&\quad\quad\quad+\sum_{i_1+i_2+i_3+i_4=k,i_4\neq k}\int_0^v a^5
||\mathcal{D}^{i_1}\Gamma(\tau,\pi)^{i_2}\mathcal{D}^{i_3}\Gamma
\mathcal{D}^{i_4+1}\Gamma||_{L^2_{sc}(\mathcal{S}_{u,v})} \\
&\quad\quad\quad+\sum_{i_1+i_2+i_3+i_4=k,i_4\leq k-1}\int_0^v a^5
||\mathcal{D}^{i_1}\Gamma(\tau,\pi)^{i_2}\mathcal{D}^{i_3}\Gamma
\mathcal{D}^{i_4}\bm\Psi||_{L^2_{sc}(\mathcal{S}_{u,v})} \\
&\quad\quad\quad+\sum_{i_1+i_2+i_3+i_4+i_5=k}\int_0^v a^5
||\mathcal{D}^{i_1}\Gamma(\tau,\pi)^{i_2}\mathcal{D}^{i_3}\Gamma(\mu,\sigma)
\mathcal{D}^{i_4}\varphi\mathcal{D}^{i_5}\varphi||_{L^2_{sc}(\mathcal{S}_{u,v})} \\
&\quad\quad\quad+\sum_{i_1+i_2+i_3+i_4+i_5=k}\int_0^v a^5
||\mathcal{D}^{i_1}\Gamma(\tau,\pi)^{i_2}\mathcal{D}^{i_3}\Gamma
\mathcal{D}^{i_4}\Gamma\mathcal{D}^{i_5}\Gamma||_{L^2_{sc}(\mathcal{S}_{u,v})} \\
&\quad\quad\quad+\sum_{i_1+i_2+i_3+i_4=k}\int_0^v a^5
||\mathcal{D}^{i_1}\Gamma(\tau,\pi)^{i_2}\mathcal{D}^{i_3}\Gamma(\tau,\pi,\rho,\sigma)
\mathcal{D}^{i_4}\Titau||_{L^2_{sc}(\mathcal{S}_{u,v})} \\
&\quad\quad\quad\lesssim1+I_1+...+I_9.
\end{align*}

For~$I_1$ we have that 
\begin{align*}
I_1&\leq\int_0^v\frac{1}{|u|}||\varphi_{0}||\times||(a^{\frac{1}{2}})^{10}\mathcal{D}^{11}\varphi_{1}|| 
\leq\frac{1}{|u|}a^{\frac{1}{2}}\bmGamma(\varphi_0)_{0,\infty}
||(a^{\frac{1}{2}})^{10}\mathcal{D}^{11}\varphi_{1}||_{L^2_{sc}(\mathcal{N}_u(0,v))}\\
&\leq\frac{a^{\frac{1}{2}}}{|u|}\bmGamma(\varphi_0)_{0,\infty}\bm\varphi[\varphi_{1}]
\lesssim1.
\end{align*}
For~$I_2$ we have that 
\begin{align*}
I_2&\leq\int_0^v\frac{1}{|u|}||\Gamma||\times||(a^{\frac{1}{2}})^{10}\mathcal{D}^{11}\Gamma||
\leq\frac{a^{\frac{1}{2}}\mathcal{O}}{|u|}
\int_0^v||(a^{\frac{1}{2}})^{10}\mathcal{D}^{11}\tau||_{L^2_{sc}(\mathcal{S}_{u,v})} \\
&+\frac{a^{\frac{1}{2}}\mathcal{O}}{|u|}
\int_0^v||(a^{\frac{1}{2}})^{10}\mathcal{D}^{11}\pi||_{L^2_{sc}(\mathcal{S}_{u,v})} 
+\frac{\mathcal{O}}{|u|}\int_0^v||(a^{\frac{1}{2}})^{10}\mathcal{D}^{11}\Gamma(\rho,\sigma)||_{L^2_{sc}(\mathcal{S}_{u,v})} \\
&\lesssim\frac{a^{\frac{1}{2}}\mathcal{O}}{|u|}
\int_0^v||(a^{\frac{1}{2}})^{10}\mathcal{D}^{11}\tau||_{L^2_{sc}(\mathcal{S}_{u,v})}+1.
\end{align*}
This term can be handled using the Gr\"onwall inequality. For~$I_3$ we
have that 
\begin{align*}
I_3&\leq\int_0^v\frac{1}{|u|}||\Gamma(\lambda,...)||\times||(a^{\frac{1}{2}})^{10}\mathcal{D}^{10}\bm\Psi(\Psi_{0},...)|| \\
&\leq\frac{1}{|u|}\frac{|u|}{a^{\frac{1}{2}}}
||(a^{\frac{1}{2}}\mathcal{D})^{10}\Psi_{0}||_{L^2_{sc}(\mathcal{N}_u(0,v))} 
+\frac{1}{|u|}\mathcal{O}
||(a^{\frac{1}{2}}\mathcal{D})^{10}\{\TiPsi_{1},\TiPsi_{2}\}||_{L^2_{sc}(\mathcal{N}_u(0,v))} \\
&\leq\bm\Psi[\Psi_0]+1.
\end{align*}
For~$I_4$ we have
\begin{align*}
I_4&\leq\frac{(a^{\frac{1}{2}})^{i_2-1}}{|u|^{i_2+1}}||(a^{\frac{1}{2}}\mathcal{D})^{j_1}\Gamma||...
||(a^{\frac{1}{2}}\mathcal{D})^{j_{i_2}}\Gamma||\times
||(a^{\frac{1}{2}}\mathcal{D})^{i_3}\varphi_{0}||\times||(a^{\frac{1}{2}}\mathcal{D})^{i_4+1}\varphi_{1}||  \\
&\leq\frac{(a^{\frac{1}{2}})^{i_2-1}}{|u|^{i_2+1}}\mathcal{O}^{i_2}a^{\frac{1}{2}}\mathcal{O}\mathcal{O}
=\frac{a^{\frac{i_2}{2}}}{|u|^{i_2+1}}\mathcal{O}^{i_2+2}
\leq\frac{a^{\frac{1}{2}}}{|u|^2}\mathcal{O}^3\leq1.
\end{align*}
For~$I_5$ we have
\begin{align*}
I_5&\leq\frac{(a^{\frac{1}{2}})^{i_2-1}}{|u|^{i_2+1}}||(a^{\frac{1}{2}}\mathcal{D})^{j_1}\Gamma||...
||(a^{\frac{1}{2}}\mathcal{D})^{j_{i_2}}\Gamma||\times
||(a^{\frac{1}{2}}\mathcal{D})^{i_3}\Gamma(\sigma,...)||\times||(a^{\frac{1}{2}}\mathcal{D})^{i_4+1}\Gamma||  \\
&\leq\frac{(a^{\frac{1}{2}})^{i_2-1}}{|u|^{i_2+1}}\mathcal{O}^{i_2}a^{\frac{1}{2}}\mathcal{O}\mathcal{O}
=\frac{a^{\frac{i_2}{2}}}{|u|^{i_2+1}}\mathcal{O}^{i_2+2}
\leq\frac{a^{\frac{1}{2}}}{|u|^2}\mathcal{O}^3\leq1.
\end{align*}
For~$I_6$ we have
\begin{align*}
I_6&\leq\frac{(a^{\frac{1}{2}})^{i_2-1}}{|u|^{i_2+1}}||(a^{\frac{1}{2}}\mathcal{D})^{j_1}\Gamma||...
||(a^{\frac{1}{2}}\mathcal{D})^{j_{i_2}}\Gamma||\times
||(a^{\frac{1}{2}}\mathcal{D})^{i_3}\Gamma(\lambda,...)||\times||(a^{\frac{1}{2}}\mathcal{D})^{i_4}\bm\Psi(\Psi_0)||  \\
&\leq\frac{(a^{\frac{1}{2}})^{i_2-1}}{|u|^{i_2+1}}\mathcal{O}^{i_2}\frac{|u|}{a^{\frac{1}{2}}}\mathcal{O}a^{\frac{1}{2}}\mathcal{O}
=\frac{a^{\frac{i_2}{2}}}{a^{\frac{1}{2}}|u|^{i_2}}\mathcal{O}^{i_2+2}
\leq\frac{\mathcal{O}^2}{a^{\frac{1}{2}}}\leq1.
\end{align*}
For~$I_7$, the worst term
is~$\mu\meth^{i}\varphi_0\meth^{k-i}\varphi_0$. In this case we follow
the same strategy used for~$I_{51}$ in Proposition~\ref{L2PsiII} and
obtain that this term can be bounded by
\begin{align*}
I_{71}\leq\underline{\bm\varphi}[\varphi_1]+1.
\end{align*}
The rest of the terms in~$I_7$ are collectively denoted by
$I_{72}$. We find that
\begin{align*}
I_{72}&\leq a^5||\mathcal{D}^{j_1}\Gamma...\mathcal{D}^{j_{i_2}}\Gamma
\mathcal{D}^{i_3}\Gamma(\Timu,..)\mathcal{D}^{i_4}\varphi\mathcal{D}^{i_5}\varphi|| \\
&\leq\frac{(a^{\frac{1}{2}})^{i_2}}{|u|^{i_2+2}}||(a^{\frac{1}{2}}\mathcal{D})^{j_1}\Gamma||...
||(a^{\frac{1}{2}}\mathcal{D})^{j_{i_2}}\Gamma||\times
||(a^{\frac{1}{2}}\mathcal{D})^{i_3}\Gamma||\times||(a^{\frac{1}{2}}\mathcal{D})^{i_4}\varphi||
\times||(a^{\frac{1}{2}}\mathcal{D})^{i_5}\varphi||  \\
&\leq\frac{(a^{\frac{1}{2}})^{i_2}}{|u|^{i_2+2}}\mathcal{O}^{i_2}\frac{|u|}{a}\mathcal{O}
a^{\frac{1}{2}}\mathcal{O}a^{\frac{1}{2}}\mathcal{O}
=\frac{a^{\frac{i_2}{2}}}{|u|^{i_2+1}}\mathcal{O}^{i_2+3}
\leq\frac{1}{|u|}\mathcal{O}^3\leq1.
\end{align*}
For~$I_8$ we have that
\begin{align*}
I_8&\leq\frac{(a^{\frac{1}{2}})^{i_2}}{|u|^{i_2+2}}||(a^{\frac{1}{2}}\mathcal{D})^{j_1}\Gamma||...
||(a^{\frac{1}{2}}\mathcal{D})^{j_{i_2}}\Gamma||\times
||(a^{\frac{1}{2}}\mathcal{D})^{i_3}\Gamma||\times||(a^{\frac{1}{2}}\mathcal{D})^{i_4}\Gamma||
\times||(a^{\frac{1}{2}}\mathcal{D})^{i_5}\Gamma(\sigma,...)||  \\
&\leq\frac{(a^{\frac{1}{2}})^{i_2}}{|u|^{i_2+2}}\mathcal{O}^{i_2}\mathcal{O}\mathcal{O}a^{\frac{1}{2}}\mathcal{O}
=\frac{a^{\frac{i_2+1}{2}}}{|u|^{i_2+2}}\mathcal{O}^{i_2+3}
\leq\frac{a^{\frac{1}{2}}}{|u|^2}\mathcal{O}^3\leq1.
\end{align*}
Finally, for~$I_9$ we find that
\begin{align*}
I_9&\leq\frac{(a^{\frac{1}{2}})^{i_2}}{|u|^{i_2+1}}||(a^{\frac{1}{2}}\mathcal{D})^{j_1}\Gamma||...
||(a^{\frac{1}{2}}\mathcal{D})^{j_{i_2}}\Gamma||\times
||(a^{\frac{1}{2}}\mathcal{D})^{i_3}\Gamma(\sigma,...)||\times||(a^{\frac{1}{2}}\mathcal{D})^{i_4}\Titau||  \\
&\leq\frac{(a^{\frac{1}{2}})^{i_2}}{|u|^{i_2+1}}\mathcal{O}^{i_2}a^{\frac{1}{2}}\mathcal{O}\mathcal{O}
=\frac{a^{\frac{i_2+1}{2}}}{|u|^{i_2+1}}\mathcal{O}^{i_2+3}
\leq\frac{a^{\frac{1}{2}}}{|u|}\leq1.
\end{align*}
Collect the results above one concludes that
\begin{align*}
||(a^{\frac{1}{2}}\meth)^k\Titau||_{L^2_{sc}(\mathcal{S}_{u,v})}
\lesssim\bm\Psi[\Psi_0]+\underline{\bm\varphi}[\varphi_1]+1.
\end{align*}
Following the same strategy one can estimate the rest of the strings
in~$\mathcal{D}^{k_{i}}\Titau$ and obtain the same result. That leads
to
\begin{align*}
||(a^{\frac{1}{2}}\mathcal{D})^k\Titau||_{L^2_{sc}(\mathcal{S}_{u,v})}
\lesssim\bm\Psi[\Psi_0]+\underline{\bm\varphi}[\varphi_1]+1.
\end{align*}

Now, making use of the definition
$\Titau=\mathscr{D}_{\tau}-\TiPsi_2=\meth'\tau-\TiPsi_2$, we have that
\begin{align*}
||(a^{\frac{1}{2}}\mathcal{D})^k\mathscr{D}_\tau||_{L^2_{sc}(\mathcal{S}_{u,v})}\lesssim
||(a^{\frac{1}{2}}\mathcal{D})^k\TiPsi_2||_{L^2_{sc}(\mathcal{S}_{u,v})}
+\bm\Psi[\Psi_0]+\underline{\bm\varphi}[\varphi_1]+1.
\end{align*}
Applying Proposition~\ref{EllipticT-weightSC}, we have that  
\begin{align*}
||a^5\mathcal{D}^{11}\tau||_{L^2_{sc}(\mathcal{S}_{u,v})}&\lesssim\sum_{i\leq10}
(\frac{1}{a^{\frac{1}{2}}}||(a^{\frac{1}{2}}\mathcal{D})^{i}\tau||_{L^2_{sc}(\mathcal{S}_{u,v})}
+||(a^{\frac{1}{2}}\mathcal{D})^{i}\mathscr{D}_\tau||_{L^2_{sc}(\mathcal{S}_{u,v})}) \\
&\lesssim||(a^{\frac{1}{2}}\mathcal{D})^{10}\TiPsi_2||_{L^2_{sc}(\mathcal{S}_{u,v})}
+\bm\Psi[\Psi_0]+\underline{\bm\varphi}[\varphi_1]+1.
\end{align*}
We now transform the estimate on~$\mathcal{S}_{u,v}$ to one on the
light cone so that
\begin{align*}
||a^5\mathcal{D}^{11}\tau||_{L^2_{sc}(\mathcal{N}_u(0,v))}
\lesssim\bm\Psi[\Psi_0]+\bm\Psi[\TiPsi_2]+\underline{\bm\varphi}[\varphi_1]+1,
\end{align*}
and
\begin{align*}
||a^5\mathcal{D}^{11}\tau||_{L^2_{sc}(\mathcal{N}'_v(u_{\infty},u))}
\lesssim\bm\Psi[\Psi_0]+\underline{\bm\Psi}[\TiPsi_2]+\underline{\bm\varphi}[\varphi_1]+1.
\end{align*}
\end{proof}

A similar strategy is used to construct an estimate for the
coefficient~$\pi$.
\begin{proposition}
\label{11Derpi}
\begin{align*}
  \frac{a}{|u|}||a^5\mathcal{D}^{11}\pi||_{L^2_{sc}(\mathcal{N}_{u}(0,v))}
  \lesssim\bm\Psi[\TiPsi_2]+\bm\Psi[\Psi_0]+\underline{\bm\Psi}[\TiPsi_2]
  +\underline{\bm\varphi}[\varphi_1]+1.
\end{align*}
\end{proposition}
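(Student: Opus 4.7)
The plan parallels the argument for $\tau$ in Proposition~\ref{11Dertau}, but adapted to the ingoing direction $\mthorn'$ and using the Bianchi identity~\eqref{T-weightMasslessStBianchi6} in place of~\eqref{T-weightMasslessStBianchi3}. First I would introduce the auxiliary scalar
\[
\tilde\pi \equiv \mathscr{D}_\pi + \TiPsi_2 = \meth\pi + \TiPsi_2,
\]
which has zero T-weight and signature $s_2(\tilde\pi)=1$. Using the structure equation~\eqref{T-weightMasslessStructureStEq10} for $\mthorn'\pi$ together with the commutator~\eqref{T-weightMasslessCommutatorSt3} yields
\[
\mthorn'(\meth\pi) = \meth(\mthorn'\pi) + [\mthorn',\meth]\pi = -\meth\TiPsi_3 + (\text{no Weyl-derivative terms}),
\]
and the Bianchi identity~\eqref{T-weightMasslessStBianchi6} then trades $\meth\TiPsi_3$ for $\mthorn'\TiPsi_2$ up to undifferentiated Weyl components, a $\sigma\Psi_4$ piece, and the matter contribution $6\bar\varphi_1\meth\varphi_2$. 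The two $\mthorn'$-derivatives of $\TiPsi_2$ cancel in $\mthorn'\tilde\pi$, leaving
\[
\mthorn'\tilde\pi = 6\bar\varphi_1\meth\varphi_2 + (\text{terms free of Weyl derivatives}).
\]
The surviving matter derivative is handled by exploiting the reality of $\varphi_2$: complex-conjugating the definition $\Tivarphi_2=\meth'\varphi_2+\mu\bar\varphi_1$ gives $\meth\varphi_2=\bar\Tivarphi_2-\mu\varphi_1$, which brings the term under the existing $L^2(\mathcal{S})$ bounds on $\Tivarphi_2$ from Proposition~\ref{L2varphi12}.

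Next I would commute a generic string $\mathcal{D}^{10}$ (taking $\meth^{10}$ as representative, as in the proofs of Propositions~\ref{L2pi} and~\ref{11Dertau}) through $\mthorn'$ to obtain
\[
\mthorn'\mathcal{D}^{10}\tilde\pi + c\,\mu\,\mathcal{D}^{10}\tilde\pi = F,
\]
where $F$ contains no eleventh-order Weyl derivative, only commutator corrections and products of connection coefficients, matter fields, and Weyl components at order $\le 10$ (plus~$\Tivarphi_2$). Proposition~\ref{SCweightedgronwall} applied along the $u$-direction, with the weights dictated by $s_2(\tilde\pi)=1$ and $s_2(F)=2$, reduces the sphere estimate for $(a^{1/2}\mathcal{D})^{10}\tilde\pi$ to the initial data on $\mathcal{N}_\star$ and an integral of $F$ against the appropriate $|u|,a$-weights. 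Each term in $F$ is then estimated exactly as in Propositions~\ref{L2tau},~\ref{L2pi}, and~\ref{11Dertau}: the H\"older inequality in scale-invariant form distributes derivatives so that at most one factor carries an $L^2$-norm and the rest carry $L^\infty$-norms, after which the bootstrap assumption together with the already-established sphere bounds on $\Gamma$, $\bm\varphi$, and the renormalised Weyl components close the estimate. The norms that survive the bound are precisely $\bm\Psi[\TiPsi_2]$, $\bm\Psi[\Psi_0]$, $\underline{\bm\Psi}[\TiPsi_2]$ and $\underline{\bm\varphi}[\varphi_1]$, where $\underline{\bm\Psi}[\TiPsi_2]$ enters through the $\mthorn'\TiPsi_2$ substitution, $\bm\Psi[\Psi_0]$ via the $\sigma\Psi_4$-term handled as in Proposition~\ref{L2PsiII}, and $\underline{\bm\varphi}[\varphi_1]$ from the matter piece after the $\Tivarphi_2$ substitution.

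Finally, since $\mathscr{D}_\pi=\tilde\pi-\TiPsi_2$, combining the sphere bound on $(a^{1/2}\mathcal{D})^{10}\tilde\pi$ with the bound on $(a^{1/2}\mathcal{D})^{10}\TiPsi_2$ from Proposition~\ref{L2PsiII} controls $(a^{1/2}\mathcal{D})^{10}\mathscr{D}_\pi$. Proposition~\ref{EllipticT-weightSC} (with $f=\pi$ of nonzero T-weight) then yields a sphere estimate on $\|a^5\mathcal{D}^{11}\pi\|_{L^2_{sc}(\mathcal{S}_{u,v})}$ in the same class of norms, with an inverse power of $a/|u|$ reflecting the weight attached to $\pi$ in the definition of $\bm{\Gamma}_{11}$. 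Squaring, multiplying by $(a/|u|)^2$, and integrating in $v\in[0,1]$ converts this sphere bound into the claimed outgoing-lightcone estimate on $\mathcal{N}_u(0,v)$. The main technical obstacle is the algebraic verification that the choice $\tilde\pi=\meth\pi+\TiPsi_2$ (rather than $\meth\pi-\TiPsi_2$) produces the exact cancellation of the eleventh-order Weyl-derivative piece; this requires careful bookkeeping of signs coming jointly from the commutator~\eqref{T-weightMasslessCommutatorSt3} and the Bianchi identity~\eqref{T-weightMasslessStBianchi6}. A secondary difficulty lies in controlling the $\sigma\Psi_4$ contribution arising after this substitution, since the available $L^2$-sphere bound on $\Psi_4$ is weaker than that on $\TiPsi_2$ and one must verify that it does not force additional norms (e.g.~$\underline{\bm\Psi}[\Psi_4]$ or $\underline{\bm\Psi}[\TiPsi_3]$) into the final right-hand side.
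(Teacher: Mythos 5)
Your construction is the same as the paper's: the auxiliary scalar $\Tipi=\meth\pi+\TiPsi_2$ (with the correct sign), the use of the Bianchi identity~\eqref{T-weightMasslessStBianchi6} to cancel the $\mthorn'\TiPsi_2$ contribution, the treatment of the $\sigma\Psi_4$ term through the ingoing flux norms with extra decay, and the final passage through Proposition~\ref{EllipticT-weightSC} followed by integration along the outgoing cone. Two points in your accounting, however, do not match what the equation actually produces, and the second is a genuine gap. The minor one first: after the cancellation the surviving matter derivative is $-6\varphi_2\meth\bar\varphi_1$, not $6\bar\varphi_1\meth\varphi_2$ (the $\meth\varphi_2$ contributions coming from $\meth$ applied to the $-6\varphi_2\bar\varphi_1$ term of~\eqref{T-weightMasslessStructureStEq10} and from the Bianchi identity cancel each other). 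Its top-order commuted piece is $\varphi_2\meth^{11}\bar\varphi_1$, which is bounded directly by $\underline{\bm\varphi}[\varphi_1]$; your $\Tivarphi_2$ substitution targets a term that is not actually present, though the norm you end up charging is the correct one.

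More seriously, your claim that after commuting $\meth^{10}$ the source $F$ contains ``no eleventh-order derivatives, only products of connection coefficients, matter fields, and Weyl components at order $\leq 10$'' is false. The equation for $\mthorn'\Tipi$ contains $\meth$ of quadratic connection terms such as $(\bar\pi+\tau)\lambda$, $(\pi+\bar\tau)\mu$, together with the commutator contribution $\bar\lambda\meth'\pi$; after ten further derivatives these produce $\lambda\meth^{11}\tau$, $\mu\meth^{11}\bar\tau$, $\bar\lambda\meth^{11}\pi$, $(\bar\pi+\tau)\meth^{11}\lambda$ and $\bar\tau\meth^{11}\mu$ at top order. The $\meth^{11}\pi$ term must be absorbed by a Gr\"onwall argument (it is the quantity being estimated), the $\meth^{11}\mu$ and $\meth^{11}\lambda$ terms require the $\bm{\Gamma}_{11}$ bootstrap bounds, and the $\meth^{11}\tau$ term requires the already-established Proposition~\ref{11Dertau} --- which is precisely where the $\bm\Psi[\Psi_0]$ on the right-hand side of the statement comes from, not from the $\sigma\Psi_4$ term as you assert (that term is absorbed into the constant via the extra $|u|$-decay in the ingoing flux norm). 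Without handling these eleventh-order connection derivatives the estimate does not close, and your attribution of the final norms is unjustified even though the list itself happens to be correct.
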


\begin{proof}
  In order to estimate the eleventh derivative of~$\pi$ we introduce
  the following quantity:
\begin{align*}
\Tipi\equiv\mathscr{D}_{\pi}+\TiPsi_2=\meth\pi+\TiPsi_2.
\end{align*}
The T-weight of~$\Tipi$ is zero and its signature of is~$1$. Next, we
make use of the structure
equation~\eqref{T-weightMasslessStructureStEq10} and the Bianchi
identities~\eqref{T-weightMasslessStBianchi6} to obtain the following
equation for~$\Tipi$:
\begin{align*}
  \mthorn'\Tipi+2\mu\Tipi&=
  -6\varphi_2\meth\bar\varphi_1
    -(\bar\pi+\tau)\meth\lambda-\bar\tau\meth\mu-\lambda(\meth\bar\pi+\meth\tau)
    -\mu\meth\bar\tau-\bar\lambda\meth'\pi-\pi\meth'\bar\lambda\\
  &-\tau\TiPsi_3-\mu\TiPsi_2+\sigma\Psi_4
    -3\bar\lambda\bar\varphi_1^2-3\rho\varphi_2^2 .
\end{align*}
Observe that this equation does not involve derivatives of the Weyl
tensor ---hence, we can avoid estimating its eleventh derivative. Now,
we first estimate the~$k=10$ derivative of~$\Tipi$ by first looking at
$\meth^k\Tipi$. Commuting~$\meth^k$ with~$\mthorn'$ one has that
\begin{align*}
&\mthorn'\meth^k\Tipi+(k+2)\mu\meth^k\Tipi=\varphi_{2}\meth^{11}\varphi_{1}+
\Gamma(\mu,\lambda)\meth^{11}\tau+\lambda\meth^{11}\pi
+\Gamma(\tau,\pi)\meth^{11}\Gamma(\Timu,\lambda)\\
  &\quad\quad
    +\Gamma(\tau,\mu,\sigma)\meth^k\bm\Psi(\TiPsi_2,\TiPsi_3,\Psi_4)
    +\sum_{i=1}^{k}\meth^i\varphi_2\meth^{k+1-i}\varphi_1
+\sum_{i_1+i_2+i_3=k}\meth^{i_1}\Gamma(\lambda,\rho)\meth^{i_2}\bm{\varphi}(1,2)\meth^{i_3}\bm{\varphi} \\
&\quad\quad+\sum_{i=1}^{k}\meth^i\Gamma(\tau,\pi,\lambda,\mu)\meth^{k+1-i}\Gamma(\tau,\pi,\lambda,\mu)
+\sum_{i=0}^k\meth^i\Gamma(\Timu,\lambda)\meth^{k-i}\Tipi \\
&\quad\quad+\sum_{i=1}^{k}\meth^i\Gamma(\tau,\Timu,\sigma)\meth^{k-1}\bm{\Psi}.
\end{align*}
Denote the right-hand side of the last equation by~$F$. One can verify
that
\begin{align*}
&s_2(\Tipi)=1, \ \ s_2(\meth^k\Tipi)=\frac{k+2}{2}, \ \ s_2(F)=\frac{k+4}{2}, \\
&\lambda_0=-(k+2), \ \ \lambda_1=-\lambda_0-1=k+1, \\
& \lambda_1-2s_2(F)=-3, \ \ \lambda-2s_2(\meth^k\Tipi)=-1.
\end{align*}
Now, making use of Proposition~\ref{SCweightedgronwall}, one has that
\begin{align*}
\frac{a}{|u|}||(a^{\frac{1}{2}}\meth)^k\Tipi||_{L^2_{sc}(\mathcal{S}_{u,v})}\lesssim
\frac{a}{|u_{\infty}|}||(a^{\frac{1}{2}}\meth)^k\Tipi||_{L^2_{sc}(\mathcal{S}_{u_{\infty},v})}
+\int_{u_{\infty}}^u\frac{a^2}{|u'|^3}||a^{\frac{k}{2}}F||_{L^2_{sc}(\mathcal{S}_{u',v})}.
\end{align*}
For convenience define 
\begin{align*}
H\equiv\int_{u_{\infty}}^u\frac{a^2}{|u|^3}||a^{\frac{k}{2}}F||_{L^2_{sc}(\mathcal{S}_{u,v})}.
\end{align*}
Substituting into the expression of~$F$ one has that
\begin{align*}
H&\leq\int_{u_{\infty}}^u\frac{a^2}{|u'|^3}||a^5\varphi_2\mathcal{D}^{11}\varphi_1||_{L^2_{sc}(\mathcal{S}_{u',v})}
+\int_{u_{\infty}}^u\frac{a^2}{|u'|^3}||a^5\Gamma\mathcal{D}^{11}\Gamma||_{L^2_{sc}(\mathcal{S}_{u',v})} 
+\int_{u_{\infty}}^u\frac{a^2}{|u'|^3}||a^5\Gamma\mathcal{D}^{10}\bm{\Psi}||_{L^2_{sc}(\mathcal{S}_{u',v})}\\
&+\sum_{i=1}^k\int_{u_{\infty}}^u\frac{a^2}{|u'|^3}
||a^5\mathcal{D}^i\varphi_2\mathcal{D}^{k+1-i}\varphi_1||_{L^2_{sc}(\mathcal{S}_{u',v})}
+\sum_{i_1+i_2+i_3=k}\int_{u_{\infty}}^u\frac{a^2}{|u'|^3}
||a^5\mathcal{D}^{i_1}\Gamma(\lambda,\rho)\mathcal{D}^{i_2}\bm\varphi
\mathcal{D}^{i_3}\bm\varphi||_{L^2_{sc}(\mathcal{S}_{u',v})} \\
&+\sum_{i=1}^k\int_{u_{\infty}}^u\frac{a^2}{|u'|^3}
||a^5\mathcal{D}^i\Gamma(\mu,..)\mathcal{D}^{k+1-i}\Gamma||_{L^2_{sc}(\mathcal{S}_{u',v})}
+\sum_{i=0}^k\int_{u_{\infty}}^u\frac{a^2}{|u'|^3}
||a^5\mathcal{D}^i\Gamma(\Timu,..)\mathcal{D}^{k-i}\Tipi||_{L^2_{sc}(\mathcal{S}_{u',v})} \\
&+\sum_{i=1}^k\int_{u_{\infty}}^u\frac{a^2}{|u'|^3}
||a^5\mathcal{D}^i\Gamma(\Timu,..)\mathcal{D}^{k-i}\bm\Psi||_{L^2_{sc}(\mathcal{S}_{u',v})} 
=I_1+...+I_8.
\end{align*}

For the term~$I_1$ we have that 
\begin{align*}
I_1&\leq\int_{u_{\infty}}^u\frac{a^2}{|u'|^3}\frac{1}{|u'|}||\varphi_2||_{L^{\infty}_{sc}(\mathcal{S}_{u',v})}
||a^5\mathcal{D}^{11}\varphi_1||_{L^2_{sc}(\mathcal{S}_{u',v})} 
\leq\int_{u_{\infty}}^u\frac{a^2}{|u'|^4}\frac{|u'|}{a^{\frac{1}{2}}}\Gamma(\varphi_2)_{0,\infty}
||a^5\mathcal{D}^{11}\varphi_1||_{L^2_{sc}(\mathcal{S}_{u',v})} \\
&=\Gamma(\varphi_2)_{0,\infty}\int_{u_{\infty}}^u\frac{a^{\frac{3}{2}}}{|u|^3}
||a^5\mathcal{D}^{11}\varphi_1||_{L^2_{sc}(\mathcal{S}_{u',v})} 
\leq\Gamma(\varphi_2)_{0,\infty}\left(
\int_{u_{\infty}}^u\frac{a}{|u|^2}
||a^5\mathcal{D}^{11}\varphi_1||^2_{L^2_{sc}(\mathcal{S}_{u',v})}
\right)^{\frac{1}{2}}
\left(\int_{u_{\infty}}^u\frac{a^2}{|u|^4} \right)^{\frac{1}{2}}\\
&\leq\frac{a\Gamma(\varphi_2)_{0,\infty}}{|u|^{\frac{3}{2}}}
||a^5\mathcal{D}^{11}\varphi_1||_{L^2_{sc}(\mathcal{N}'_{v}(u_{\infty},u))} 
\leq\frac{a^{\frac{3}{2}}}{|u|^{\frac{3}{2}}}\Gamma(\varphi_2)_{0,\infty}\underline{\bm\varphi}[\varphi_1]
\leq1.
\end{align*}
For~$I_2$ we have that 
\begin{align*}
I_2&\leq\int_{u_{\infty}}^u\frac{a^2}{|u'|^3}\frac{1}{|u'|}||\Gamma||_{L^{\infty}_{sc}(\mathcal{S}_{u',v})}
||a^5\mathcal{D}^{11}\Gamma||_{L^2_{sc}(\mathcal{S}_{u',v})}\leq
\int_{u_{\infty}}^u\frac{a^2}{|u'|^4}||\mu||_{L^{\infty}_{sc}(\mathcal{S}_{u',v})}
||a^5\mathcal{D}^{11}\tau||_{L^2_{sc}(\mathcal{S}_{u',v})} \\
&+\int_{u_{\infty}}^u\frac{a^2}{|u'|^4}||\lambda||_{L^{\infty}_{sc}(\mathcal{S}_{u',v})}
||a^5\mathcal{D}^{11}\tau||_{L^2_{sc}(\mathcal{S}_{u',v})}
+\int_{u_{\infty}}^u\frac{a^2}{|u'|^4}||\lambda||_{L^{\infty}_{sc}(\mathcal{S}_{u',v})}
||a^5\mathcal{D}^{11}\pi||_{L^2_{sc}(\mathcal{S}_{u',v})}  \\
&+\int_{u_{\infty}}^u\frac{a^2}{|u'|^4}||\Gamma(\tau,\pi)||_{L^{\infty}_{sc}(\mathcal{S}_{u',v})}
||a^5\mathcal{D}^{11}\mu||_{L^2_{sc}(\mathcal{S}_{u',v})} 
+\int_{u_{\infty}}^u\frac{a^2}{|u'|^4}||\Gamma(\tau,\pi)||_{L^{\infty}_{sc}(\mathcal{S}_{u',v})}
||a^5\mathcal{D}^{11}\lambda||_{L^2_{sc}(\mathcal{S}_{u',v})} \\
&\leq\int_{u_{\infty}}^u\frac{a^2}{|u'|^4}\frac{|u'|^2}{a}\bmGamma(\mu)_{0,\infty}
||a^5\mathcal{D}^{11}\tau||_{L^2_{sc}(\mathcal{S}_{u',v})} 
+\int_{u_{\infty}}^u\frac{a^2}{|u'|^4}\frac{|u'|}{a^{\frac{1}{2}}}\bmGamma(\lambda)_{0,\infty}
||a^5\mathcal{D}^{11}\tau||_{L^2_{sc}(\mathcal{S}_{u',v})} \\
&+\int_{u_{\infty}}^u\frac{a^2}{|u'|^4}\frac{|u'|}{a^{\frac{1}{2}}}\bmGamma(\lambda)_{0,\infty}
||a^5\mathcal{D}^{11}\pi||_{L^2_{sc}(\mathcal{S}_{u',v})} 
+\int_{u_{\infty}}^u\frac{a^2}{|u'|^4}\bmGamma(\tau,\pi)_{0,\infty}
||a^5\mathcal{D}^{11}\mu||_{L^2_{sc}(\mathcal{S}_{u',v})} \\
&+\int_{u_{\infty}}^u\frac{a^2}{|u'|^4}\bmGamma(\tau,\pi)_{0,\infty}
||a^5\mathcal{D}^{11}\lambda||_{L^2_{sc}(\mathcal{S}_{u',v})} \\
&\leq\frac{a^{\frac{1}{2}}}{|u|^{\frac{1}{2}}}\bmGamma(\mu)_{0,\infty}\bmGamma_{11}(\tau)
+\frac{1}{|u|^{\frac{3}{2}}}\bmGamma(\lambda)_{0,\infty}\bmGamma_{11}(\tau)
+\int_{u_{\infty}}^u\frac{1}{|u'|^2}
||\frac{a}{|u'|}a^5\mathcal{D}^{11}\pi||_{L^2_{sc}(\mathcal{S}_{u',v})} \\
&+\frac{1}{a}\bmGamma(\tau,\pi)_{0,\infty}\bmGamma_{11}(\mu)
+\frac{1}{a^{\frac{1}{2}}}\bmGamma(\tau,\pi)_{0,\infty}\bmGamma_{11}(\lambda) \\
&\leq\int_{u_{\infty}}^u\frac{1}{|u'|^2}
||\frac{a}{|u'|}a^5\mathcal{D}^{11}\pi||_{L^2_{sc}(\mathcal{S}_{u',v})}
+\bmGamma_{11}(\tau)+1.
\end{align*}
The integral in the last line can be absorbed using the Gr\"onwall
inequality. For~$I_3$, we have that
\begin{align*}
I_3&\leq\int_{u_{\infty}}^u\frac{a^2}{|u'|^3}\frac{1}{|u'|}||\Gamma||_{L^{\infty}_{sc}(\mathcal{S}_{u',v})}
||(a^{\frac{1}{2}}\mathcal{D})^{10}\TiPsi||_{L^2_{sc}(\mathcal{S}_{u',v})}
\leq\int_{u_{\infty}}^u\frac{a^2}{|u'|^4}||\mu||_{L^{\infty}_{sc}(\mathcal{S}_{u',v})}
||(a^{\frac{1}{2}}\mathcal{D})^{10}\TiPsi_2||_{L^2_{sc}(\mathcal{S}_{u',v})} \\
&+\int_{u_{\infty}}^u\frac{a^2}{|u'|^4}||\tau||_{L^{\infty}_{sc}(\mathcal{S}_{u',v})}
||(a^{\frac{1}{2}}\mathcal{D})^{10}\TiPsi_3||_{L^2_{sc}(\mathcal{S}_{u',v})} 
+\int_{u_{\infty}}^u\frac{a^2}{|u'|^4}||\sigma||_{L^{\infty}_{sc}(\mathcal{S}_{u',v})}
||(a^{\frac{1}{2}}\mathcal{D})^{10}\Psi_4||_{L^2_{sc}(\mathcal{S}_{u',v})} \\
&\leq\int_{u_{\infty}}^u\frac{a}{|u'|^2}\bmGamma(\mu)_{0,\infty}
||(a^{\frac{1}{2}}\mathcal{D})^{10}\TiPsi_2||_{L^2_{sc}(\mathcal{S}_{u',v})} 
+\int_{u_{\infty}}^u\frac{a^2}{|u'|^4}\bmGamma(\tau)_{0,\infty}
||(a^{\frac{1}{2}}\mathcal{D})^{10}\TiPsi_3||_{L^2_{sc}(\mathcal{S}_{u',v})} \\
&+\int_{u_{\infty}}^u\frac{a^2}{|u'|^4}\bmGamma(\sigma)_{0,\infty}
||(a^{\frac{1}{2}}\mathcal{D})^{10}\TiPsi_4||_{L^2_{sc}(\mathcal{S}_{u',v})} \\
\leq&\frac{a^{\frac{1}{2}}}{|u|^{\frac{1}{2}}}\bmGamma(\mu)_{0,\infty}\underline{\bm\Psi}[\TiPsi_2]
+\frac{a^{\frac{3}{2}}}{|u|^{\frac{5}{2}}}\bmGamma(\tau)_{0,\infty}\underline{\bm\Psi}[\TiPsi_3]
+\frac{a}{|u|^{\frac{5}{2}}}\bmGamma(\sigma)_{0,\infty}\underline{\bm\Psi}[\Psi_4] 
\leq\underline{\bm\Psi}[\TiPsi_2]+1.
\end{align*}
For~$I_4$ we have that
\begin{align*}
I_4&\leq\sum_{i=1}^k\int_{u_{\infty}}^u\frac{a^2}{|u'|^3}\frac{1}{|u'|}\frac{1}{a^{\frac{1}{2}}}
||(a^{\frac{1}{2}}\mathcal{D})^{i}\varphi_2||\times||(a^{\frac{1}{2}}\mathcal{D})^{k+1-i}\varphi_1|| \\
&\leq\int_{u_{\infty}}^u\frac{a^{\frac{3}{2}}}{|u'|^4}\frac{|u'|}{a^{\frac{1}{2}}}\mathcal{O}\mathcal{O}
\leq\frac{a}{|u|^2}\mathcal{O}^2\leq1.
\end{align*}
For~$I_5$ we find that 
\begin{align*}
I_5&\leq\sum_{i_1+i_2+i_3=k}\int_{u_{\infty}}^u\frac{a^2}{|u'|^3}\frac{1}{|u'|^2}
||\Gamma(\lambda,\rho)||\times||\bm\varphi(1,2)||\times||\bm\varphi|| \\
&\leq\int_{u_{\infty}}^u\frac{a^2}{|u'|^5}(\frac{|u'|}{a^{\frac{1}{2}}}a^{\frac{1}{2}}\mathcal{O}a^{\frac{1}{2}}\mathcal{O}
+\frac{|u'|^2}{a}\mathcal{O}^3)\leq1.
\end{align*}
For~$I_6$ we have that 
\begin{align*}
I_6&\leq\sum_{i=1}^k\int_{u_{\infty}}^u\frac{a^2}{|u'|^3}\frac{1}{|u'|}\frac{1}{a^{\frac{1}{2}}}
||(a^{\frac{1}{2}}\mathcal{D})^{i}\Gamma(\Timu,\lambda,..)||\times
||(a^{\frac{1}{2}}\mathcal{D})^{k+1-i}\Gamma|| \\
&\leq\int_{u_{\infty}}^u\frac{a^{\frac{3}{2}}}{|u'|^4}\frac{|u'|}{a^{\frac{1}{2}}}\mathcal{O}^2
\leq\frac{a}{|u|^2}\mathcal{O}^2\leq1.
\end{align*}
For~$I_7$ we have that 
\begin{align*}
I_7&\leq\sum_{i=1}^k\int_{u_{\infty}}^u\frac{a^2}{|u'|^3}\frac{1}{|u'|}
||(a^{\frac{1}{2}}\mathcal{D})^{i}\Gamma(\Timu,\lambda)||\times
||(a^{\frac{1}{2}}\mathcal{D})^{k-i}\Tipi|| \\
&\leq\int_{u_{\infty}}^u\frac{a^2}{|u'|^4}\frac{|u'|}{a^{\frac{1}{2}}}\mathcal{O}\frac{|u'|}{a}\mathcal{O}
\leq\frac{a^{\frac{1}{2}}}{|u|}\mathcal{O}^2\leq1.
\end{align*}
For~$I_8$ we have that 
\begin{align*}
I_8\leq\sum_{i=1}^k\int_{u_{\infty}}^u\frac{a^2}{|u'|^3}\frac{1}{|u'|}||\Gamma||\times||\Psi||
\leq\int_{u_{\infty}}^u\frac{a^2}{|u'|^4}\frac{|u'|}{a}\mathcal{O}^2\leq1.
\end{align*}

Collecting the results above we can conclude that 
\begin{align*}
\frac{a}{|u|}||(a^{\frac{1}{2}}\meth)^k\Tipi||_{L^2_{sc}(\mathcal{S}_{u,v})}&\lesssim
\bmGamma_{11}(\tau)+\underline{\bm\Psi}[\TiPsi_2]+1 \\
&\lesssim\bm\Psi[\Psi_0]+\underline{\bm\Psi}[\TiPsi_2]+\underline{\bm\varphi}[\varphi_1]+1.
\end{align*}
This inequality leads to
\begin{align*}
\frac{a}{|u|}||(a^{\frac{1}{2}}\mathcal{D})^k\Tipi||_{L^2_{sc}(\mathcal{S}_{u,v})}
\lesssim\bm\Psi[\Psi_0]+\underline{\bm\Psi}[\TiPsi_2]+\underline{\bm\varphi}[\varphi_1]+1.
\end{align*}
By the definition of
$\Tipi=\mathscr{D}_{\pi}+\TiPsi_2=\meth\pi+\TiPsi_2$, we have that 
\begin{align*}
\frac{a}{|u|}||(a^{\frac{1}{2}}\mathcal{D})^k\mathscr{D}_{\pi}||_{L^2_{sc}(\mathcal{S}_{u,v})}
\lesssim\frac{a}{|u|}||a^5\mathcal{D}^{10}\TiPsi_2||_{L^2_{sc}(\mathcal{S}_{u,v})}
+\bm\Psi[\Psi_0]+\underline{\bm\Psi}[\TiPsi_2]+\underline{\bm\varphi}[\varphi_1]+1.
\end{align*}
Now applying Proposition~\ref{EllipticT-weightSC}, we see that 
\begin{align*}
\frac{a}{|u|}||a^5\mathcal{D}^{11}\pi||_{L^2_{sc}(\mathcal{S}_{u,v})}&\lesssim\sum_{i\leq10}
(\frac{1}{a^{\frac{1}{2}}}\frac{a}{|u|}||(a^{\frac{1}{2}}\mathcal{D})^{i}\pi||_{L^2_{sc}(\mathcal{S}_{u,v})}
+\frac{a}{|u|}||(a^{\frac{1}{2}}\mathcal{D})^{i}\mathscr{D}_\pi||_{L^2_{sc}(\mathcal{S}_{u,v})}) \\
&\lesssim\frac{a}{|u|}||a^5\mathcal{D}^{10}\TiPsi_2||_{L^2_{sc}(\mathcal{S}_{u,v})}
+\bm\Psi[\Psi_0]+\underline{\bm\Psi}[\TiPsi_2]+\underline{\bm\varphi}[\varphi_1]+1.
\end{align*}
Then, integrating along the outgoing lightcone we finally conclude
that
\begin{align*}
\frac{a}{|u|}||a^5\mathcal{D}^{11}\pi||_{L^2_{sc}(\mathcal{N}_{u}(0,v))}\lesssim
\bm\Psi[\TiPsi_2]+\bm\Psi[\Psi_0]+\underline{\bm\Psi}[\TiPsi_2]+\underline{\bm\varphi}[\varphi_1]+1.
\end{align*}

\end{proof}

\begin{proposition}
\label{11Derulomega}
\begin{align*}
||a^5\mathcal{D}^{11}\ulomega||_{L^2_{sc}(\mathcal{N}'_v(u_{\infty},u))}\lesssim
\underline{\bm\varphi}[\varphi_2]+\bm\varphi[\varphi_0]
+\underline{\bm\Psi}[\TiPsi_3]+1.
\end{align*}
\end{proposition}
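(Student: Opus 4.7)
Because $\ulomega$ has zero T-weight, the natural elliptic tool is Proposition~\ref{EllipticPureScalar}, which applied at $k=10$ gives
\begin{align*}
||a^5\mathcal{D}^{11}\ulomega||_{L^2_{sc}(\mathcal{S}_{u,v})}\lesssim ||a^5\mathcal{D}^9(\bmDelta\ulomega)||_{L^2_{sc}(\mathcal{S}_{u,v})}+\sum_{i=0}^{10}||(a^{1/2}\mathcal{D})^i\ulomega||_{L^2_{sc}(\mathcal{S}_{u,v})},
\end{align*}
and the sum is absorbed into lower order via Proposition~\ref{L2ulomega}. Integrating over $u$ with the ingoing weight $a/|u'|^2$ therefore reduces the proposition to controlling $||a^5\mathcal{D}^9(\bmDelta\ulomega)||_{L^2_{sc}(\mathcal{N}'_v(u_\infty,u))}$. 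In parallel with Propositions~\ref{11Dertau} and~\ref{11Derpi}, the strategy is to introduce a zero--T-weight auxiliary scalar $\tilde\ulomega$ that exposes the $\mthorn'$-structure of $\bmDelta\ulomega$ and eliminates the eleventh-order derivative of the Weyl tensor.

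\textbf{Construction of the auxiliary field.} Rearranging the structure equation~\eqref{T-weightMasslessStructureStEq13} yields $\meth\ulomega=\mthorn'\tau+\bar\TiPsi_3+\text{l.o.t.}$ Applying $\meth'$ and invoking the commutator~\eqref{T-weightMasslessCommutatorSt5} produces
\begin{align*}
\tfrac12\bmDelta\ulomega=\meth'\meth\ulomega=\mthorn'\meth'\tau+\meth'\bar\TiPsi_3+(\text{commutator and lower order}).
\end{align*}
The offending $\meth'\bar\TiPsi_3$ is removed by the conjugate of the Bianchi identity~\eqref{T-weightMasslessStBianchi6}, which gives $\meth'\bar\TiPsi_3=\mthorn'\bar\TiPsi_2+(\text{no Weyl derivatives})$. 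Setting
\begin{align*}
\tilde\ulomega\equiv\meth'\tau+\bar\TiPsi_2,
\end{align*}
a zero--T-weight scalar of signature $1$, one obtains the pointwise identity
\begin{align*}
\bmDelta\ulomega=2\mthorn'\tilde\ulomega+L,
\end{align*}
in which $L$ collects connection, matter, and at most first-order Weyl contributions, free of any derivative of the Weyl curvature.

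\textbf{Estimate and main obstacle.} To exploit this identity I would first estimate $\tilde\ulomega$ via its outgoing transport: computing $\mthorn\tilde\ulomega=\mthorn\meth'\tau+\mthorn\bar\TiPsi_2$ by means of~\eqref{T-weightMasslessStructureStEq9} and the conjugate of~\eqref{T-weightMasslessStBianchi3}, the leading angular derivatives $\meth'\TiPsi_1$ produced by $[\mthorn,\meth']\tau$ exactly cancel against those coming from $\mthorn\bar\TiPsi_2$, in the same spirit as the $\tilde\tau$ cancellation in Proposition~\ref{11Dertau}. The scale-invariant $\mthorn$-transport estimate, starting from Minkowskian data on $\mathcal{N}'_\star$, then yields an $L^2_{sc}(\mathcal{S}_{u,v})$ bound on $\mathcal{D}^{10}\tilde\ulomega$ controlled by $\bm\varphi[\varphi_0]+1$ and by the outgoing Weyl norms already available. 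Inserting this bound into the identity and integrating in $L^2_{sc}(\mathcal{N}'_v)$, the lightcone norms $\underline{\bm\Psi}[\TiPsi_3]$ and $\underline{\bm\varphi}[\varphi_2]$ enter through the $\bar\TiPsi_3$ and $\varphi_2$-type contributions absorbed in $L$, while $\bm\varphi[\varphi_0]$ enters through the outgoing transport of $\tilde\ulomega$. The main obstacle is the apparent circularity in the identity $\bmDelta\ulomega=2\mthorn'\tilde\ulomega+L$: a naive pointwise estimate of $\mthorn'\tilde\ulomega$ would reintroduce $\bmDelta\ulomega$ itself via~\eqref{T-weightMasslessStructureStEq13}. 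The resolution is to decouple the two directions\,---\,$\tilde\ulomega$ is controlled from the outgoing direction using $\mthorn$, while $\mathcal{D}^9\bmDelta\ulomega$ is estimated along $\mathcal{N}'_v$ using the identity and commutator arguments\,---\,so that the scale-invariant weights and the bootstrap assumption $\mathcal{O}\ll a\leq|u|$ let the top-order contributions be either absorbed into the LHS or bounded by the ingoing Weyl and matter norms already controlled.
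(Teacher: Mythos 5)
Your identity $\bmDelta\ulomega=2\mthorn'\tilde\ulomega+L$ with $\tilde\ulomega\equiv\meth'\tau+\bar\TiPsi_2$ is formally correct, but the argument built on it has two genuine gaps.

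First, the claimed cancellation in $\mthorn\tilde\ulomega$ fails. From $\mthorn\meth'\tau=\meth'\mthorn\tau+[\mthorn,\meth']\tau$ and equation~\eqref{T-weightMasslessStructureStEq9} one picks up $+\meth'\TiPsi_1$, while the conjugate of~\eqref{T-weightMasslessStBianchi3} gives $\mthorn\bar\TiPsi_2=\meth\bar\TiPsi_1+\ldots$. Since $\meth\bar\TiPsi_1=\overline{\meth'\TiPsi_1}$, the sum $\meth'\TiPsi_1+\meth\bar\TiPsi_1=2\,\mathrm{Re}(\meth'\TiPsi_1)$ does not vanish. The cancellation in Proposition~\ref{11Dertau} works only because $\Titau=\meth'\tau-\TiPsi_2$ pairs $\meth'\tau$ with $\TiPsi_2$ itself, not with its conjugate; with $\bar\TiPsi_2$ the transport equation for your $\tilde\ulomega$ retains a first angular derivative of the Weyl tensor, so bounding $\mathcal{D}^{10}\tilde\ulomega$ would require the uncontrolled eleventh derivative of $\TiPsi_1$.

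Second, and more structurally: even granting sphere control of $\mathcal{D}^{10}\tilde\ulomega$, your identity expresses $\bmDelta\ulomega$ through the \emph{transverse} derivative $\mthorn'\tilde\ulomega$. Tangential estimates of $\tilde\ulomega$ on $\mathcal{S}_{u,v}$ give no access to $\mthorn'\tilde\ulomega$ except by substituting its evolution equation, and $\mthorn'\meth'\tau$ brings back $\meth'\meth\ulomega$ via~\eqref{T-weightMasslessStructureStEq13}; the circularity you flag is therefore real and is not resolved by ``decoupling the directions''. The proof in the paper avoids exactly this by renormalising a \emph{tangential} derivative of $\ulomega$ rather than of $\tau$: it introduces a real auxiliary scalar $\ulomega^{\dag}$ with $\mthorn\ulomega^{\dag}=-i(\bar\TiPsi_2-\TiPsi_2)$ and sets $\tilde\ulomega\equiv\meth(\ulomega+i\ulomega^{\dag})-2\bar\TiPsi_3$. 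The role of $\ulomega^{\dag}$ is to convert the source $\meth\TiPsi_2+\meth\bar\TiPsi_2$ of $\mthorn\meth\ulomega$ into $2\meth\bar\TiPsi_2$, which is then absorbed by $\mthorn\bar\TiPsi_3$ through~\eqref{T-weightMasslessStBianchi5}, leaving a $\mthorn$-equation free of Weyl derivatives; and, crucially, $\bmDelta\ulomega$ is then recovered from the tangential derivative $\meth'\tilde\ulomega$ (plus $\meth'\bar\TiPsi_3$, controlled by $\underline{\bm\Psi}[\TiPsi_3]$, and the $\ulomega^{\dag}$-part, removed by the real/imaginary splitting lemma), so that sphere control of $\mathcal{D}^{10}\tilde\ulomega$ feeds directly into Proposition~\ref{EllipticPureScalar}. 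To rescue your scheme you would have to redesign the auxiliary field so that the Laplacian of $\ulomega$ appears as an angular, not an ingoing, derivative of it.
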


\begin{proof}
  We estimate~$\ulomega$ in three steps.

\smallskip
\noindent
\textbf{Step 1}: Construct a new scalar~$\tilde\ulomega$.

First we construct an auxiliary function~$\ulomega^{\dag}$ with zero
T-weight through the relation
\begin{align}
\label{EQulomegastar}
\mthorn\ulomega^{\dag}=-i(\bar\TiPsi_2-\TiPsi_2)
\end{align}
with trivial initial data on~$\mathcal{N}'_{\star}$. Here~$\ulomega^{\dag}$ is real.
First we apply~$\meth$ and use the commutator relation to find that 
\begin{align}
\label{PethUlomegadag}
\mthorn(i\meth\ulomega^{\dag})=\meth\bar\TiPsi_2-\meth\TiPsi_2+i\rho\meth\ulomega^{\dag}+i\sigma\meth'\ulomega^{\dag}
+(\bar\pi-\tau)(\bar\TiPsi_2-\TiPsi_2).
\end{align}
Then, making use of the structure equation for~$\ulomega$,
\eqref{T-weightMasslessStructureStEq11}, we have that 
\begin{align*}
\mthorn\ulomega=2\tau\bar\tau+2\tau\pi+2\bar\tau\bar\pi
+\TiPsi_2+\bar\TiPsi_2+6\varphi_{0}\varphi_{2}, 
\end{align*}
and then we have
\begin{align}
\label{PethUlomega}
\mthorn\meth\ulomega=&\meth\TiPsi_2+\meth\bar\TiPsi_2+\rho\meth\ulomega+\sigma\meth'\ulomega \nonumber\\
&+[\meth+(\bar\pi-\tau)](2\tau\bar\tau+2\tau\pi+2\bar\tau\bar\pi
+6\varphi_{0}\varphi_{2} ).
\end{align}
Together with~\eqref{PethUlomegadag} we see that 
\begin{align}
\label{PUlomega1}
\mthorn[\meth\ulomega+i\meth\ulomega^{\dag}]=&2\meth\bar\TiPsi_2+\rho(\meth\ulomega+i\meth\ulomega^{\dag})
+\sigma(\meth'\ulomega+i\meth'\ulomega^{\dag}) \nonumber \\
&+\Gamma\meth\Gamma+\Gamma^3+\bm{\varphi}\meth\bm{\varphi}+\Gamma\bm{\varphi}^2
+\Gamma\TiPsi_2,
\end{align}
where in the previous equation~$\Gamma$ contains~$\tau$ and~$\pi$,
while~$\bm{\varphi}$ contains~$\varphi_0$ and~$\varphi_2$. Then,
making use of the Bianchi identity~\eqref{T-weightMasslessStBianchi5} 
\begin{align*}
\mthorn\TiPsi_3-\meth'\TiPsi_2=&-6\bar\varphi_1\meth\bar\varphi_1-2\lambda\TiPsi_1
+3\pi\TiPsi_2+2\rho\TiPsi_3 \nonumber \\
&-3\bar\varphi_1^2\bar\pi+6\varphi_0\bar\varphi_1\mu
-6\varphi_0\varphi_1\lambda-6\varphi_1\bar\varphi_1\pi ,
\end{align*}
we have that 
\begin{align*}
\mthorn [\meth(\ulomega+i\ulomega^{\dag})-2\bar\TiPsi_3 ]=&\rho[\meth(\ulomega+i\ulomega^{\dag})-2\bar\TiPsi_3]
+\sigma\meth'(\ulomega+i\ulomega^{\dag}) \nonumber \\
&+\Gamma\meth\Gamma+\Gamma^3+\bm{\varphi}\meth\bm{\varphi}+\Gamma\bm{\varphi}^2
+\Gamma\TiPsi.
\end{align*}
In the previous equation~$\Gamma$ contains~$\tau$, $\pi$, $\mu$
and~$\lambda$; $\bm{\varphi}$ contains~$\varphi_0$, $\varphi_1$ and
$\varphi_2$; and~$\TiPsi$ contains~$\TiPsi_1$, $\TiPsi_2$
and~$\TiPsi_3$. Crucially, there is no derivative of curvature on the
right-hand side.

Now, we define 
\begin{align*}
\tilde\ulomega\equiv\meth(\ulomega+i\ulomega^{\dag})-2\bar\TiPsi_3,
\end{align*}
and find that 
\begin{align*}
\mthorn\tilde\ulomega&=\rho\tilde\ulomega-\sigma\tilde\ulomega^*+2\sigma\meth'\ulomega
+2\tau(\meth\pi+\meth\bar\tau)+2\bar\tau(\meth\bar\pi+\meth\tau)+2\pi\meth\tau+2\bar\pi\meth\bar\tau
+6\varphi_2\meth\varphi_0+6\varphi_0\meth\varphi_2 \\
&+12\varphi_1\meth'\varphi_1+12\bar\lambda\varphi_0\bar\varphi_1-12\mu\varphi_0\varphi_1+6\pi\varphi_1^2+6\bar\pi\varphi_0\varphi_2+
12\bar\pi\varphi_1\bar\varphi_1-6\tau\varphi_0\varphi_2
+4\bar\lambda\bar\TiPsi_1-4\bar\pi\bar\TiPsi_2 \\
&-2\rho\bar\TiPsi_3-2\sigma\TiPsi_3-2\tau\bar\TiPsi_2
+2\pi\bar\pi\tau-2\pi\tau^2+2\bar\pi^2\bar\tau-2\tau^2\bar\tau.
\end{align*}
The key observation is that this equation does not contain derivatives
of the Weyl curvature. Hence we can avoid estimating the eleventh
derivative of the Weyl tensor. The T-weight of~$\tilde\ulomega$
is~$-1$ and its signature is~$\frac{3}{2}$.

\smallskip
\noindent
\textbf{Step 2}: Estimate of the~$k=10$ derivative
of~$\tilde\ulomega$.

Commuting~$\meth^k$ with~$\mthorn$ one has that
\begin{align*}
\mthorn\meth^k\tilde\ulomega&=\bm{\varphi}\meth^{11}\bm{\varphi}+
\Gamma(\tau,\pi,\lambda,\rho,\sigma)\meth^{10}\TiPsi
+\Gamma(\tau,\pi)\meth^{11}\Gamma(\tau,\pi)
+\Gamma(\rho,\sigma)\meth^{10}\tilde\ulomega
+\sigma\meth^{11}\ulomega \\
&+\sum_{i_1+i_2+i_3+i_4=k,i_4<k}\meth^{i_1}\Gamma(\tau,\pi)^{i_2}\meth^{i_3}\bm{\varphi}\meth^{i_4+1}\bm{\varphi} \\
&+\sum_{i_1+i_2+i_3+i_4=k,i_4<k}\meth^{i_1}\Gamma(\tau,\pi)^{i_2}
\meth^{i_3}\Gamma(\tau,\pi,\sigma)\meth^{i_4+1}\Gamma(\tau,\pi,\sigma,\ulomega) \\
&+\sum_{i_1+i_2+i_3+i_4=k,i_4<k}
\meth^{i_1}\Gamma(\tau,\pi)^{i_2}\meth^{i_3}\Gamma(\tau,\pi,\lambda,\rho)\meth^{i_4}\TiPsi \\
&+\sum_{i_1+i_2+i_3+i_4+i_5=k}
\meth^{i_1}\Gamma(\tau,\pi)^{i_2}\meth^{i_3}\Gamma(\tau,\pi,\mu,\lambda)\meth^{i_4}\bm{\varphi}\meth^{i_5}\bm{\varphi}\\
&+\sum_{i_1+i_2+i_3+i_4+i_5=k}
\meth^{i_1}\Gamma(\tau,\pi)^{i_2}\meth^{i_3}\Gamma(\tau,\pi)\meth^{i_4}\Gamma(\tau,\pi)\meth^{i_5}\Gamma(\tau,\pi) \\
&+\sum_{i_1+i_2+i_3+i_4=k}
\meth^{i_1}\Gamma(\tau,\pi)^{i_2}\meth^{i_3}\Gamma(\tau,\pi,\rho,\sigma)\meth^{i_4}\tilde\ulomega.
\end{align*}
Now, we make use of the transport estimate in scale-invariant norm, to
find that
\begin{align*}
||(a^{\frac{1}{2}}\meth)^k\tilde\ulomega||_{L^2_{sc}(\mathcal{S}_{u,v})}&\lesssim
||(a^{\frac{1}{2}}\meth)^k\tilde\ulomega||_{L^2_{sc}(\mathcal{S}_{u,0})}+
\int_0^v||\mthorn(a^{\frac{1}{2}}\meth)^k\tilde\ulomega||_{L^2_{sc}(\mathcal{S}_{u,v'})} \\
&\leq||(a^{\frac{1}{2}}\mathcal{D})^k\tilde\ulomega||_{L^2_{sc}(\mathcal{S}_{u,0})}+
\int_0^va^{5}||\bm{\varphi}\mathcal{D}^{11}\bm{\varphi}||_{L^2_{sc}(\mathcal{S}_{u,v})}
 +\int_0^va^{5}||\Gamma\mathcal{D}^{11}\Gamma||_{L^2_{sc}(\mathcal{S}_{u,v})}\\
& +\int_0^va^{5}||\Gamma\mathcal{D}^{10}\TiPsi||_{L^2_{sc}(\mathcal{S}_{u,v})}
+\int_0^va^{5}||\Gamma\mathcal{D}^{10}\tilde\ulomega||_{L^2_{sc}(\mathcal{S}_{u,v})}
+\int_0^va^{5}||\sigma\mathcal{D}^{11}\ulomega||_{L^2_{sc}(\mathcal{S}_{u,v})} \\
&+\sum_{i_1+i_2+i_3+i_4=k,i_4<k}\int_0^v a^5
||\mathcal{D}^{i_1}\Gamma(\tau,\pi)^{i_2}\mathcal{D}^{i_3}\bm{\varphi}
\mathcal{D}^{i_4+1}\bm{\varphi}||_{L^2_{sc}(\mathcal{S}_{u,v})} \\
&+\sum_{i_1+i_2+i_3+i_4=k,i_4<k}\int_0^v a^5
||\mathcal{D}^{i_1}\Gamma(\tau,\pi)^{i_2}\mathcal{D}^{i_3}\Gamma
\mathcal{D}^{i_4+1}\Gamma||_{L^2_{sc}(\mathcal{S}_{u,v})} \\
&+\sum_{i_1+i_2+i_3+i_4=k,i_4<k}\int_0^v a^5
||\mathcal{D}^{i_1}\Gamma(\tau,\pi)^{i_2}\mathcal{D}^{i_3}\Gamma
\mathcal{D}^{i_4}\TiPsi||_{L^2_{sc}(\mathcal{S}_{u,v})} \\
&+\sum_{i_1+i_2+i_3+i_4+i_5=k}\int_0^v a^5
||\mathcal{D}^{i_1}\Gamma(\tau,\pi)^{i_2}\mathcal{D}^{i_3}\Gamma
\mathcal{D}^{i_4}\bm{\varphi}\mathcal{D}^{i_5}\bm{\varphi}||_{L^2_{sc}(\mathcal{S}_{u,v})} \\
&+\sum_{i_1+i_2+i_3+i_4+i_5=k}\int_0^v a^5
||\mathcal{D}^{i_1}\Gamma(\tau,\pi)^{i_2}\mathcal{D}^{i_3}\Gamma
\mathcal{D}^{i_4}\Gamma\mathcal{D}^{i_5}\Gamma||_{L^2_{sc}(\mathcal{S}_{u,v})} \\
&+\sum_{i_1+i_2+i_3+i_4=k}\int_0^v a^5
||\mathcal{D}^{i_1}\Gamma(\tau,\pi)^{i_2}\mathcal{D}^{i_3}\Gamma
\mathcal{D}^{i_4}\tilde\ulomega||_{L^2_{sc}(\mathcal{S}_{u,v})} \\
&\lesssim1+I_1+...+I_{11}.
\end{align*}

For the term~$I_1$ we have that 
\begin{align*}
I_1&\leq\int_0^va^{5}(||\varphi_0\mathcal{D}^{11}\varphi_2||_{L^2_{sc}(\mathcal{S}_{u,v})}
+||\varphi_2\mathcal{D}^{11}\varphi_0||_{L^2_{sc}(\mathcal{S}_{u,v})}
+||\varphi_1\mathcal{D}^{11}\varphi_1||_{L^2_{sc}(\mathcal{S}_{u,v})}) \\
\leq&\int_0^v\frac{1}{|u|}a^{5}||\varphi_0||_{L^{\infty}_{sc}(\mathcal{S}_{u,v})}
||\mathcal{D}^{11}\varphi_2||_{L^2_{sc}(\mathcal{S}_{u,v})}
+\int_0^v\frac{1}{|u|}a^{5}||\varphi_2||_{L^{\infty}_{sc}(\mathcal{S}_{u,v})}
||\mathcal{D}^{11}\varphi_0||_{L^2_{sc}(\mathcal{S}_{u,v})} \\
&+\int_0^v\frac{1}{|u|}a^{5}||\varphi_1||_{L^{\infty}_{sc}(\mathcal{S}_{u,v})}
||\mathcal{D}^{11}\varphi_1||_{L^2_{sc}(\mathcal{S}_{u,v})} \\
\leq&\frac{a^{\frac{1}{2}}}{|u|}\bmGamma(\varphi_0)
||a^{5}\mathcal{D}^{11}\varphi_2||_{L^2_{sc}(\mathcal{S}_{u,v})}
+\frac{\bmGamma(\varphi_2)}{a^{\frac{1}{2}}}\int_0^v
||a^5\mathcal{D}^{11}\varphi_0||_{L^2_{sc}(\mathcal{S}_{u,v})}\\
&+\frac{\bmGamma(\varphi_1)}{|u|}\int_0^v
||a^5\mathcal{D}^{11}\varphi_1||_{L^2_{sc}(\mathcal{S}_{u,v})} \\
&\leq\bmGamma(\varphi_0)\frac{a^{\frac{1}{2}}}{|u|}
||a^{5}\mathcal{D}^{11}\varphi_2||_{L^2_{sc}(\mathcal{S}_{u,v})}
+\frac{\bmGamma(\varphi_2)}{a^{\frac{1}{2}}}
||a^5\mathcal{D}^{11}\varphi_0||_{L^2_{sc}(\mathcal{N}_u(0,v))}
+\frac{\bmGamma(\varphi_1)}{|u|}
||a^5\mathcal{D}^{11}\varphi_1||_{L^2_{sc}(\mathcal{N}_u(0,v))} \\
&\lesssim
\bmGamma(\varphi_0)\frac{a^{\frac{1}{2}}}{|u|}||a^5\mathcal{D}^{11}\varphi_2||_{L^2_{sc}(\mathcal{S}_{u,v})}
+\bmGamma(\varphi_2)\bm\varphi[\varphi_0]
+\frac{\bmGamma(\varphi_1)}{|u|}\bm\varphi[\varphi_1] \\
&\leq
\frac{a^{\frac{1}{2}}}{|u|}||a^5\mathcal{D}^{11}\varphi_2||_{L^2_{sc}(\mathcal{S}_{u,v})}
+\bm\varphi[\varphi_0]
+\frac{\mathcal{O}^2}{a}.
\end{align*}
For~$I_2$ we have that
\begin{align*}
I_2 &\leq
\int_0^v\frac{a^{5}}{|u|}||\Gamma||_{L^{\infty}_{sc}(\mathcal{S}_{u,v})}
||\mathcal{D}^{11}\Gamma||_{L^2_{sc}(\mathcal{S}_{u,v})} 
\leq\frac{\mathcal{O}}{|u|}\int_0^v||a^5\mathcal{D}^{11}\Gamma||_{L^2_{sc}(\mathcal{S}_{u,v})} 
\leq\frac{\mathcal{O}}{|u|}||a^5\mathcal{D}^{11}\Gamma||_{L_{sc}^2(\mathcal{N}_u(0,v))} \\
&\leq\frac{\mathcal{O}}{a}\left(\frac{a}{|u|}||a^5\mathcal{D}^{11}\Gamma||_{L_{sc}^2(\mathcal{N}_u(0,v))}\right)
\leq\frac{\mathcal{O}^2}{a}\leq1,
\end{align*}
where in the previous expression~$\Gamma$ contains~$\tau$
and~$\pi$. For~$I_3$ we have that
\begin{align*}
I_3&\leq\int_0^v\frac{1}{|u|}||\Gamma||_{L^{\infty}_{sc}(\mathcal{S}_{u,v})}
||(a^{\frac{1}{2}}\mathcal{D})^{10}\TiPsi||_{L^2_{sc}(\mathcal{S}_{u,v})} 
\leq\int_0^v\frac{1}{|u|}\frac{|u|\mathcal{O}}{a^{\frac{1}{2}}}
||(a^{\frac{1}{2}}\mathcal{D})^{10}\TiPsi||_{L^2_{sc}(\mathcal{S}_{u,v})} \\
&\leq\frac{\mathcal{O}}{a^{\frac{1}{2}}}||(a^{\frac{1}{2}}\mathcal{D})^{10}\TiPsi||_{L_{sc}^2(\mathcal{N}_u(0,v))}
\leq1,
\end{align*}
where~$\Gamma$ contains~$\tau$, $\pi$, $\rho$, $\sigma$
and~$\lambda$. For~$I_4$ we have that
\begin{align*}
I_4\leq\int_0^v\frac{1}{|u|}||\Gamma||_{L^{\infty}_{sc}(\mathcal{S}_{u,v})}
||(a^{\frac{1}{2}}\mathcal{D})^{10}\tilde\ulomega||_{L^2_{sc}(\mathcal{S}_{u,v})} 
\leq\int_0^v\frac{a^{\frac{1}{2}}\mathcal{O}}{|u|}||(a^{\frac{1}{2}}\mathcal{D})^{10}\tilde\ulomega||_{L^2_{sc}(\mathcal{S}_{u,v})} ,
\end{align*}
which can be absorbed making use of the Gr\"onwall inequality. In the
last expression~$\bm{\Gamma}$ is meant to contain~$\rho$
and~$\sigma$. For~$I_5$ we have that
\begin{align*}
I_5\leq
\int_0^v\frac{1}{|u|}||\sigma||_{L^{\infty}_{sc}(\mathcal{S}_{u,v})}
||a^{5}\mathcal{D}^{11}\ulomega||_{L^2_{sc}(\mathcal{S}_{u,v})} 
\leq\frac{a^{\frac{1}{2}}\mathcal{O}}{|u|}||a^{5}\mathcal{D}^{11}\ulomega||_{L^2_{sc}(\mathcal{S}_{u,v})}.
\end{align*}
For~$I_6$ we have that 
\begin{align*}
I_6&\leq\frac{(a^{\frac{1}{2}})^{i_2-1}}{|u|^{i_2+1}}||(a^{\frac{1}{2}}\mathcal{D})^{j_1}\Gamma||...
||(a^{\frac{1}{2}}\mathcal{D})^{j_{i_2}}\Gamma||\times
||(a^{\frac{1}{2}}\mathcal{D})^{i_3}\bm\varphi||\times||(a^{\frac{1}{2}}\mathcal{D})^{i_4+1}\bm\varphi||  \\
&\leq\frac{(a^{\frac{1}{2}})^{i_2-1}}{|u|^{i_2+1}}\mathcal{O}^{i_2}a^{\frac{1}{2}}\mathcal{O}\frac{|u|}{a^{\frac{1}{2}}}\mathcal{O}
=\frac{a^{\frac{i_2}{2}}}{a^{\frac{1}{2}}|u|^{i_2}}\mathcal{O}^{i_2+2}
\leq\frac{1}{|u|}\mathcal{O}^{3}\leq1.
\end{align*}
For~$I_7$ we have that 
\begin{align*}
I_7&\leq\frac{(a^{\frac{1}{2}})^{i_2-1}}{|u|^{i_2+1}}||(a^{\frac{1}{2}}\mathcal{D})^{j_1}\Gamma||...
||(a^{\frac{1}{2}}\mathcal{D})^{j_{i_2}}\Gamma||\times
||(a^{\frac{1}{2}}\mathcal{D})^{i_3}\Gamma(\sigma,...)||\times||(a^{\frac{1}{2}}\mathcal{D})^{i_4+1}\Gamma||  \\
&\leq\frac{(a^{\frac{1}{2}})^{i_2-1}}{|u|^{i_2+1}}\mathcal{O}^{i_2}a^{\frac{1}{2}}\mathcal{O}\mathcal{O}
=\frac{a^{\frac{i_2}{2}}}{|u|^{i_2+1}}\mathcal{O}^{i_2+2}
\leq\frac{a}{|u|^2}\mathcal{O}^3\leq1.
\end{align*}
For~$I_8$ we have that 
\begin{align*}
I_8&\leq\frac{(a^{\frac{1}{2}})^{i_2-1}}{|u|^{i_2+1}}||(a^{\frac{1}{2}}\mathcal{D})^{j_1}\Gamma||...
||(a^{\frac{1}{2}}\mathcal{D})^{j_{i_2}}\Gamma||\times
||(a^{\frac{1}{2}}\mathcal{D})^{i_3}\Gamma(\lambda,...)||\times||(a^{\frac{1}{2}}\mathcal{D})^{i_4}\bm\Psi||  \\
&\leq\frac{(a^{\frac{1}{2}})^{i_2-1}}{|u|^{i_2+1}}\mathcal{O}^{i_2}\frac{|u|}{a^{\frac{1}{2}}}\mathcal{O}\mathcal{O}
=\frac{a^{\frac{i_2-2}{2}}}{|u|^{i_2}}\mathcal{O}^{i_2+2}
\leq\frac{1}{a|u|}\mathcal{O}^2\leq\frac{\mathcal{O}^2}{a^{\frac{3}{2}}}\leq1.
\end{align*}
For~$I_9$ the worst behaved terms
contain~$\mu\meth^i\varphi_0\meth^j\varphi_1$. It follows then that
\begin{align*}
I_9&\leq a^5||\mathcal{D}^{j_1}\Gamma...\mathcal{D}^{j_{i_2}}\Gamma
\mathcal{D}^{i_3}\Gamma(\mu,..)\mathcal{D}^{i_4}\varphi\mathcal{D}^{i_5}\varphi|| \\
&\leq\frac{(a^{\frac{1}{2}})^{i_2}}{|u|^{i_2+2}}||(a^{\frac{1}{2}}\mathcal{D})^{j_1}\Gamma||...
||(a^{\frac{1}{2}}\mathcal{D})^{j_{i_2}}\Gamma||\times
||(a^{\frac{1}{2}}\mathcal{D})^{i_3}\Gamma||\times||(a^{\frac{1}{2}}\mathcal{D})^{i_4}\varphi||
\times||(a^{\frac{1}{2}}\mathcal{D})^{i_5}\varphi||  \\
&\leq\frac{1}{a^{\frac{1}{2}}}\bmGamma(\varphi_0)\bmGamma(\varphi_1)+
\frac{(a^{\frac{1}{2}})^{i_2}}{|u|^{i_2+2}}\mathcal{O}^{i_2}
\frac{|u|}{a}a^{\frac{1}{2}}\mathcal{O}\frac{|u|}{a}\mathcal{O}
=\frac{1}{a^{\frac{1}{2}}}\bmGamma(\varphi_0)\bmGamma(\varphi_1)+\frac{a^{\frac{i_2}{2}}}{a^{\frac{3}{2}}|u|^{i_2}}\mathcal{O}^{i_2+3}
\leq1.
\end{align*}
For~$I_{10}$ we have that 
\begin{align*}
I_{10}&\leq\frac{(a^{\frac{1}{2}})^{i_2}}{|u|^{i_2+2}}||(a^{\frac{1}{2}}\mathcal{D})^{j_1}\Gamma||...
||(a^{\frac{1}{2}}\mathcal{D})^{j_{i_2}}\Gamma||\times
||(a^{\frac{1}{2}}\mathcal{D})^{i_3}\Gamma||\times||(a^{\frac{1}{2}}\mathcal{D})^{i_4}\Gamma||
\times||(a^{\frac{1}{2}}\mathcal{D})^{i_5}\Gamma||  \\
&\leq\frac{(a^{\frac{1}{2}})^{i_2}}{|u|^{i_2+2}}\mathcal{O}^{i_2}\mathcal{O}\mathcal{O}\mathcal{O}
=\frac{a^{\frac{i_2}{2}}}{|u|^{i_2+2}}\mathcal{O}^{i_2+3}
\leq\frac{1}{|u|^2}\mathcal{O}^3\leq1.
\end{align*}
Finally, for~$I_{11}$ we see that 
\begin{align*}
I_{11}&\leq\frac{(a^{\frac{1}{2}})^{i_2}}{|u|^{i_2+1}}||(a^{\frac{1}{2}}\mathcal{D})^{j_1}\Gamma||...
||(a^{\frac{1}{2}}\mathcal{D})^{j_{i_2}}\Gamma||\times
||(a^{\frac{1}{2}}\mathcal{D})^{i_3}\Gamma(\sigma,...)||\times||(a^{\frac{1}{2}}\mathcal{D})^{i_4}\tilde\ulomega||  \\
&\leq\frac{(a^{\frac{1}{2}})^{i_2}}{|u|^{i_2+1}}\mathcal{O}^{i_2}\mathcal{O}a^{\frac{1}{2}}\mathcal{O}
=\frac{a^{\frac{i_2+1}{2}}}{|u|^{i_2+1}}\mathcal{O}^{i_2+2}
\leq\frac{a^{\frac{1}{2}}}{|u|}\mathcal{O}^2\leq1.
\end{align*}

Combining the above estimates for~$I_1,\ldots, I_{11}$ we have that 
\begin{align*}
||(a^{\frac{1}{2}}\meth)^{10}\tilde\ulomega||_{L^2_{sc}(\mathcal{S}_{u,v})}
\lesssim\frac{a^{\frac{1}{2}}}{|u|}||a^5\mathcal{D}^{11}\varphi_2||_{L^2_{sc}(\mathcal{S}_{u,v})}
+\bm\varphi[\varphi_0]+1.
\end{align*}
Following the same strategy one can estimate the remaining strings of
derivatives in~$\mathcal{D}^{k_{i}}\tilde\ulomega$. This yields the
same result. Hence, one concludes that
\begin{align*}
||(a^{\frac{1}{2}}\mathcal{D})^{10}\tilde\ulomega||_{L^2_{sc}(\mathcal{S}_{u,v})}
\lesssim\frac{a^{\frac{1}{2}}}{|u|}||a^5\mathcal{D}^{11}\varphi_2||_{L^2_{sc}(\mathcal{S}_{u,v})}
+\bm\varphi[\varphi_0]+1.
\end{align*}

\smallskip
\noindent
\textbf{Step 3}: Make use of the elliptic estimate.

The estimate of the tenth derivative of~$\tilde\ulomega$ implies an
estimate for the ninth derivative of~$\bm\Delta\ulomega$, where
\begin{align*}
\bm\Delta\ulomega\equiv \nablasl_a\nablasl^a\ulomega=2\meth'\meth\ulomega.
\end{align*}
Making use of Proposition~\ref{EllipticPureScalar} we have that 
\begin{align*}
||a^5\mathcal{D}^{11}\ulomega||_{L^2_{sc}(\mathcal{S}_{u,v})}&\lesssim
||a^5\mathcal{D}^9(\meth'\meth\ulomega)||_{L^2_{sc}(\mathcal{S}_{u,v})}
+\sum_{i=0}^{10}\frac{1}{a^{\frac{1}{2}}}||(a^{\frac{1}{2}}\mathcal{D})^i\ulomega||_{L^2_{sc}(\mathcal{S}_{u,v})} \\
&\lesssim||a^5\mathcal{D}^9(\meth'\meth\ulomega+i\meth'\meth\ulomega^{\dag})||_{L^2_{sc}(\mathcal{S}_{u,v})} 
+1 \\
&\leq||(a^{\frac{1}{2}}\mathcal{D})^{10}\tilde\ulomega||_{L^2_{sc}(\mathcal{S}_{u,v})}+||a^5\mathcal{D}^{10}\TiPsi_3||_{L^2_{sc}(\mathcal{S}_{u,v})}+1 \\
&\lesssim\frac{a^{\frac{1}{2}}}{|u|}||a^5\mathcal{D}^{11}\varphi_2||_{L^2_{sc}(\mathcal{S}_{u,v})}
+\bm\varphi[\varphi_0]
+||(a^{\frac{1}{2}}\mathcal{D})^{10}\TiPsi_3||_{L^2_{sc}(\mathcal{S}_{u,v})}+1.
\end{align*}
Integrating along the ingoing lightcone we have that 
\begin{align*}
||a^5\mathcal{D}^{11}\ulomega||_{L^2_{sc}(\mathcal{N}'_v(u_{\infty},u))}&\lesssim
\underline{\bm\varphi}[\varphi_2]+\bm\varphi[\varphi_0]
+\underline{\bm\Psi}[\TiPsi_3]+1. 
\end{align*}
\end{proof}

\begin{proposition}
\label{11Dermulambda}
We have that 
\begin{align*}
\int_{u_{\infty}}^u\frac{a^2}{|u'|^3}||a^5\mathcal{D}^{11}\mu||_{L^2_{sc}(\mathcal{S}_{u,v})}
&\lesssim\bm\varphi[\varphi_0]+\underline{\bm\varphi}[\varphi_2]
+\underline{\bm\Psi}[\TiPsi_3]+1, \\
\int_{u_{\infty}}^u\frac{a^{\frac{3}{2}}}{|u'|^3}||a^5\mathcal{D}^{11}\lambda||_{L^2_{sc}(\mathcal{S}_{u,v})}
&\lesssim1,
\end{align*}
and that 
\begin{align*}
\int_{u_{\infty}}^u\frac{a^3}{|u'|^4}||a^5\mathcal{D}^{11}\Timu||^2_{L^2_{sc}(\mathcal{S}_{u,v})}
&\lesssim(\bm\varphi[\varphi_0]+\underline{\bm\varphi}[\varphi_2]
+\underline{\bm\Psi}[\TiPsi_3]+1)^2, \\
\int_{u_{\infty}}^u\frac{a^2}{|u'|^4}||a^5\mathcal{D}^{11}\lambda||^2_{L^2_{sc}(\mathcal{S}_{u,v})}&\lesssim1.
\end{align*}
\end{proposition}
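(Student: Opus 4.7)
The strategy mirrors Proposition~\ref{11Derrhosigma} for the $(\rho,\sigma)$ pair: $\Timu$ will be estimated through a transport equation and $\lambda$ through an elliptic (Codazzi-type) estimate, and the two bounds are coupled because $\lambda\bar\lambda$ appears in the transport equation for $\Timu$ while $\meth'\mu$ appears in the Codazzi relation for $\lambda$. We resolve the coupling via the weighted Gr\"onwall inequality of Proposition~\ref{SCweightedgronwall}.

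\medskip
\noindent\textbf{Step 1 (transport estimate for $\Timu$).} Starting from \eqref{T-weightMasslessStructureStEq7}, $\mthorn'\Timu+2\mu\Timu=\Timu^2-\ulomega\mu-\lambda\bar\lambda-3\varphi_{2}^{2}$, we commute $\meth^{11}$ through exactly as in Proposition~\ref{L2mu}. The dangerous top-order sources are $\mu\meth^{11}\ulomega$, $\lambda\meth^{11}\lambda$ and $\varphi_{2}\meth^{11}\varphi_{2}$, plus lower-order products controlled by the bootstrap. With $s_2(\Timu)=1$, the parameters $\lambda_0=-13$, $\lambda_1=12$ in Proposition~\ref{SCweightedgronwall} give a pointwise bound
$$\tfrac{a}{|u|}\|(a^{1/2}\mathcal{D})^{11}\Timu\|_{L^2_{sc}(\mathcal{S}_{u,v})}\lesssim 1+\int_{u_{\infty}}^{u}\tfrac{a^{2}}{|u'|^{3}}\bigl(\|a^{5}\mathcal{D}^{11}\ulomega\|+\tfrac{|u'|}{a^{1/2}}\|a^{5}\mathcal{D}^{11}\lambda\|+\|a^{5}\mathcal{D}^{11}\varphi_{2}\|\bigr)\,\mathrm{d}u'$$
modulo self-terms that Gr\"onwall absorbs. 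The $\ulomega$ and $\varphi_{2}$ sources are already controlled through Propositions~\ref{11Derulomega} and~\ref{L2varphi12} by $\bm\varphi[\varphi_{0}]+\underline{\bm\varphi}[\varphi_{2}]+\underline{\bm\Psi}[\TiPsi_{3}]+1$ (using Cauchy--Schwarz with the ingoing $\mathcal{N}'_{v}$ measure $a/|u'|^{2}$).

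\medskip
\noindent\textbf{Step 2 (elliptic estimate for $\lambda$).} The Codazzi-type identity \eqref{T-weightMasslessStructureStEq15} reads $\meth\lambda-\meth'\mu=\bar\tau\mu-\tau\lambda-\TiPsi_{3}$, so since $s(\lambda)=2$ we have $\mathscr{D}_\lambda=\meth\lambda=\meth'\mu+\bar\tau\mu-\tau\lambda-\TiPsi_{3}$. Applying Proposition~\ref{EllipticT-weightSC} and noting that $\mathcal{D}^{10}\meth'\mu=\mathcal{D}^{10}\meth'\Timu$ (the Minkowskian $-1/u$ piece is angularly constant), we obtain
$$\|a^{5}\mathcal{D}^{11}\lambda\|_{L^2_{sc}(\mathcal{S}_{u,v})}\lesssim \|a^{5}\mathcal{D}^{11}\Timu\|_{L^2_{sc}(\mathcal{S}_{u,v})}+\|(a^{1/2})^{10}\mathcal{D}^{10}\TiPsi_{3}\|_{L^2_{sc}(\mathcal{S}_{u,v})}+\text{(bootstrap-controlled lower order)}.$$

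\medskip
\noindent\textbf{Step 3 (closing the coupling).} Substituting the Step~2 bound into the RHS of Step~1 introduces the self-term $\int \tfrac{a^{3/2}}{|u'|^{2}}\|a^{5}\mathcal{D}^{11}\Timu\|$, which Gr\"onwall (after pulling out the appropriate weight and using $\int_{u_{\infty}}^{u}\frac{1}{|u'|^{2}}\lesssim 1/|u|$) absorbs. What remains is a pointwise bound
$$\tfrac{a}{|u|}\|(a^{1/2}\mathcal{D})^{11}\Timu\|_{L^2_{sc}(\mathcal{S}_{u,v})}\lesssim \bm\varphi[\varphi_{0}]+\underline{\bm\varphi}[\varphi_{2}]+\underline{\bm\Psi}[\TiPsi_{3}]+1\equiv X,$$
and feeding this back into Step~2 gives $\tfrac{a^{1/2}}{|u|}\|(a^{1/2}\mathcal{D})^{11}\lambda\|_{L^2_{sc}(\mathcal{S}_{u,v})}\lesssim 1$ after using the bootstrap assumption $\underline{\bm\Psi}[\TiPsi_{3}]\leq\mathcal O$ together with the smallness factor $a^{-1/2}\mathcal O\leq 1$, matching the behaviour already observed in Proposition~\ref{L2lambda}.

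\medskip
\noindent\textbf{Step 4 (integrated bounds).} Translating the pointwise bounds back to the stated weighted norms, one verifies
$$\int_{u_{\infty}}^{u}\tfrac{a^{3}}{|u'|^{4}}\|a^{5}\mathcal{D}^{11}\Timu\|^{2}\lesssim X^{2}\int_{u_{\infty}}^{u}\tfrac{1}{|u'|^{2}}\lesssim X^{2},\qquad \int_{u_{\infty}}^{u}\tfrac{a^{2}}{|u'|^{4}}\|a^{5}\mathcal{D}^{11}\lambda\|^{2}\lesssim \int_{u_{\infty}}^{u}\tfrac{1}{|u'|^{2}}\lesssim 1,$$
and the two $L^{1}$-in-$u'$ bounds follow from these by Cauchy--Schwarz against the trivially bounded measures $\int a/|u'|^{2}\lesssim 1$ and $\int 1/|u'|^{2}\lesssim 1/|u|$.

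\medskip
The main difficulty will be managing the coupling in Step~3: the Codazzi identity for $\lambda$ does not improve the angular regularity at the top level (it merely trades $\meth\lambda$ for $\meth'\mu$), so the Gr\"onwall closure is only possible because the small factor $a^{1/2}/|u|^{1/2}$ generated by the weights is integrable in $u'$. Tracking the scale-invariant weights carefully to confirm this smallness factor, and checking that the lower-order commutator remainders never produce an eleventh-derivative Weyl or curvature term outside $\underline{\bm\Psi}[\TiPsi_{3}]$, are the two points that require close attention.
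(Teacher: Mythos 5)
Your plan follows the paper's proof essentially verbatim: transport equation for $\Timu$ with $\lambda_0=-13$, Codazzi/elliptic estimate for $\lambda$ via $\mathscr{D}_\lambda=\meth'\mu+\bar\tau\mu-\tau\lambda-\TiPsi_3$, and a Gr\"onwall closure of the $\mu$--$\lambda$ coupling using the integrable kernel $a^{1/2}/|u'|^2$ coming from the scale-invariant weights. The sources you identify ($\mu\meth^{11}\ulomega$, $\lambda\meth^{11}\lambda$, $\varphi_2\meth^{11}\varphi_2$) and the inputs from Propositions~\ref{11Derulomega} and the energy estimates are exactly those used in the paper.

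One point in Steps 3--4 needs repair. You claim a \emph{pointwise-in-$u$} bound $\tfrac{a^{1/2}}{|u|}\|(a^{1/2}\mathcal{D})^{11}\lambda\|_{L^2_{sc}(\mathcal{S}_{u,v})}\lesssim1$ and then square-integrate it to get $\int\tfrac{a^2}{|u'|^4}\|a^5\mathcal{D}^{11}\lambda\|^2\lesssim\int|u'|^{-2}\lesssim1$. But the elliptic estimate for $\lambda$ puts $\|(a^{1/2}\mathcal{D})^{10}\TiPsi_3\|_{L^2_{sc}(\mathcal{S}_{u,v})}$ on the right-hand side, and the sphere bootstrap norms $\Psi_{i,2}$ only reach $i=9$: the tenth angular derivative of $\TiPsi_3$ is controlled solely through the lightcone norm $\underline{\bm\Psi}[\TiPsi_3]$ after integrating in $u'$ (this is precisely why the proposition asserts only integrated bounds for $\mu$ and $\lambda$, unlike the sphere bound for $\rho$). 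The fix is to keep all the $\lambda$ estimates integrated in $u'$ throughout --- bound $\int\tfrac{a^{3/2}}{|u'|^3}\|a^5\mathcal{D}^{11}\lambda\|$ and $\int\tfrac{a^2}{|u'|^4}\|a^5\mathcal{D}^{11}\lambda\|^2$ directly, estimating the $\TiPsi_3$ contributions by Cauchy--Schwarz against $\underline{\bm\Psi}[\TiPsi_3]$ --- which is what the paper does; the Gr\"onwall closure survives unchanged since the pointwise bound one actually obtains and needs is the one for $\tfrac{a}{|u|}\|a^5\mathcal{D}^{11}\Timu\|$, whose transport source involves only up to ten derivatives of $\ulomega$, $\lambda$, $\varphi_2$ in integrated form.
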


\begin{proof}

\smallskip
\noindent
\textbf{Step 1}: Estimate of~$\Timu$, $\mu$.

We make use of 
\begin{align*}
\mthorn'\Timu+2\mu\Timu=\Timu^2-\ulomega\mu-\lambda\bar\lambda-2\varphi_2^2
\end{align*}
and then commute~$\meth^k$ with~$\mthorn'$ for~$k=11$. We find that 
\begin{align*}
\mthorn'\meth^k\Timu+(k+2)\mu\meth^k\Timu=&\Gamma(\Timu,\lambda)\meth^{11}\Timu
+\Gamma(\Timu,\lambda)\meth^{11}\lambda+\mu\meth^{11}\ulomega
+\ulomega\meth^{11}\Timu+\varphi_2\meth^{11}\varphi_2\\
&+\sum_{i=1}^{k-1}\meth^i\Gamma(\Timu,\lambda,\ulomega)\meth^{k-i}\Gamma(\Timu,\lambda,\ulomega)
+\sum_{i=1}^{k-1}\meth^i\varphi_2\meth^{k-i}\varphi_2 .
\end{align*}
Denote the right-hand side of the previous expression by~$F$. It
follows then that
\begin{align*}
&s_2(\Timu)=1, \ \ s_2(\meth^k\Timu)=\frac{k+2}{2}, \ \ s_2(F)=\frac{k+4}{2}, \\
&\lambda_0=-(k+2), \ \ \lambda_1=-\lambda_0-1=k+1, \\
& \lambda_1-2s_2(F)=-3, \ \ \lambda_1-2s_2(\meth^k\Timu)=-1.
\end{align*}
Now, we make use of Proposition~\ref{SCweightedgronwall}, so as to
obtain 
\begin{align*}
\frac{a}{|u|}||(a^{\frac{1}{2}}\meth)^k\Timu||_{L^2_{sc}(\mathcal{S}_{u,v})}\lesssim
\frac{a}{|u_{\infty}|}||(a^{\frac{1}{2}}\meth)^k\Timu||_{L^2_{sc}(\mathcal{S}_{u_{\infty},v})}
+\int_{u_{\infty}}^u\frac{a^2}{|u'|^3}||a^{\frac{k}{2}}F||_{L^2_{sc}(\mathcal{S}_{u',v})}.
\end{align*}
Multiplying by~$a^{-\frac{1}{2}}$ one concludes that  
\begin{align*}
\frac{a}{|u|}||a^5\meth^{11}\Timu||_{L^2_{sc}(\mathcal{S}_{u,v})}&\lesssim
\frac{a}{|u_{\infty}|}||a^5\mathcal{D}^{11}\Timu||_{L^2_{sc}(\mathcal{S}_{u_{\infty},v})}
+\int_{u_{\infty}}^u\frac{a^2}{|u'|^3}||a^{5}F||_{L^2_{sc}(\mathcal{S}_{u',v})} \\
&\leq\frac{a}{|u_{\infty}|}||a^5\mathcal{D}^{11}\Timu||_{L^2_{sc}(\mathcal{S}_{u_{\infty},v})}
+\int_{u_{\infty}}^u\frac{a^2}{|u'|^3}||a^{5}\Gamma(\Timu,\lambda,\ulomega)
\mathcal{D}^{11}\Timu||_{L^2_{sc}(\mathcal{S}_{u',v})} \\
&+\int_{u_{\infty}}^u\frac{a^2}{|u'|^3}||a^{5}\Gamma(\Timu,\lambda)\mathcal{D}^{11}\lambda||_{L^2_{sc}(\mathcal{S}_{u',v})} \\
&+\int_{u_{\infty}}^u\frac{a^2}{|u'|^3}||a^{5}\mu\mathcal{D}^{11}\ulomega||_{L^2_{sc}(\mathcal{S}_{u',v})} 
+\int_{u_{\infty}}^u\frac{a^2}{|u'|^3}||a^{5}\varphi_2\mathcal{D}^{11}\varphi_2||_{L^2_{sc}(\mathcal{S}_{u',v})} \\
&+\sum_{i=1}^{k-1}\int_{u_{\infty}}^u\frac{a^2}{|u'|^3}
||a^{5}\mathcal{D}^i\Gamma(\Timu,\lambda,\ulomega)
\mathcal{D}^{k-i}\Gamma(\Timu,\lambda,\ulomega)||_{L^2_{sc}(\mathcal{S}_{u',v})} \\
&+\sum_{i=1}^{k-1}\int_{u_{\infty}}^u\frac{a^2}{|u'|^3}
||a^{5}\mathcal{D}^i\varphi_2
\mathcal{D}^{k-i}\varphi_2||_{L^2_{sc}(\mathcal{S}_{u',v})} 
\lesssim1+I_1+...+I_6.
\end{align*}

For the term~$I_1$ we have that 
\begin{align*}
I_1&\leq\int_{u_{\infty}}^u\frac{a^2}{|u'|^3}\frac{1}{|u'|}||\Gamma||_{L^{\infty}_{sc}(\mathcal{S}_{u',v})}
||a^5\mathcal{D}^{11}\Timu||_{L^2_{sc}(\mathcal{S}_{u',v})} \\
&\leq\int_{u_{\infty}}^u\frac{a^2}{|u'|^3}\frac{1}{|u'|}\frac{|u'|}{a^{\frac{1}{2}}}\mathcal{O}
||a^5\mathcal{D}^{11}\Timu||_{L^2_{sc}(\mathcal{S}_{u',v})} 
=\int_{u_{\infty}}^u\frac{a^{\frac{1}{2}}\mathcal{O}}{|u'|^2}\frac{a}{|u'|}||a^5\mathcal{D}^{11}\Timu||_{L^2_{sc}(\mathcal{S}_{u',v})} .
\end{align*}
This term can be absorbed using the Gr\"onwall inequality. For~$I_2$
we have that
\begin{align*}
I_2&\leq\int_{u_{\infty}}^u\frac{a^2}{|u'|^3}\frac{1}{|u'|}||\Gamma||_{L^{\infty}_{sc}(\mathcal{S}_{u',v})}
||a^5\mathcal{D}^{11}\lambda||_{L^2_{sc}(\mathcal{S}_{u',v})} \\
&\leq\int_{u_{\infty}}^u\frac{a^2}{|u'|^3}\frac{1}{|u'|}\frac{|u'|}{a^{\frac{1}{2}}}
||a^5\mathcal{D}^{11}\lambda||_{L^2_{sc}(\mathcal{S}_{u',v})} 
=\int_{u_{\infty}}^u\frac{a^{\frac{3}{2}}}{|u'|^3}
||a^5\mathcal{D}^{11}\lambda||_{L^2_{sc}(\mathcal{S}_{u',v})} .
\end{align*}
For~$I_3$ we have that 
\begin{align*}
I_3&\leq\int_{u_{\infty}}^u\frac{a^2}{|u'|^3}\frac{1}{|u'|}||\mu||_{L^{\infty}_{sc}(\mathcal{S}_{u',v})}
||a^5\mathcal{D}^{11}\ulomega||_{L^2_{sc}(\mathcal{S}_{u',v})}  \\
&\leq\int_{u_{\infty}}^u\frac{a^2}{|u'|^3}\frac{1}{|u'|}\frac{|u'|^2}{a}
||a^5\mathcal{D}^{11}\ulomega||_{L^2_{sc}(\mathcal{S}_{u',v})} 
=\int_{u_{\infty}}^u\frac{a}{|u'|^2}||a^5\mathcal{D}^{11}\ulomega||_{L^2_{sc}(\mathcal{S}_{u',v})} \\
&\leq||a^5\mathcal{D}^{11}\ulomega||_{L^2_{sc}(\mathcal{N}'_v(u_{\infty},u))}.
\end{align*}
For~$I_4$ we have that 
\begin{align*}
I_4&\leq\int_{u_{\infty}}^u\frac{a^2}{|u'|^3}\frac{1}{|u'|}||\varphi_2||_{L^{\infty}_{sc}(\mathcal{S}_{u',v})}
||a^5\mathcal{D}^{11}\varphi_2||_{L^2_{sc}(\mathcal{S}_{u',v})}  \\
&\leq\int_{u_{\infty}}^u\frac{a^2}{|u'|^4}\frac{|u'|}{a^{\frac{1}{2}}}\bmGamma(\varphi_2)
||a^5\mathcal{D}^{11}\varphi_2||_{L^2_{sc}(\mathcal{S}_{u',v})} 
=\bmGamma(\varphi_2)\int_{u_{\infty}}^u\frac{a}{|u'|^2}
\frac{a^{\frac{1}{2}}}{|u'|}||a^5\mathcal{D}^{11}\varphi_2||_{L^2_{sc}(\mathcal{S}_{u',v})} \\
&\leq\bmGamma(\varphi_2)\left(\int_{u_{\infty}}^u\frac{a^2}{|u'|^4}
||a^5\mathcal{D}^{11}\varphi_2||^2_{L^2_{sc}(\mathcal{S}_{u',v})} \right)^{\frac{1}{2}}
\left(\int_{u_{\infty}}^u\frac{a}{|u'|^2}\right)^{\frac{1}{2}} \\
&\leq\frac{\bmGamma(\varphi_2)a^{\frac{1}{2}}}{|u|^{\frac{1}{2}}}
||\frac{a^{\frac{1}{2}}}{|u'|}a^5\mathcal{D}^{11}\varphi_2||_{L^2_{sc}(\mathcal{N}'_v(u_{\infty},u))}
\lesssim\bmGamma(\varphi_2)\underline{\bm\varphi}[\varphi_2]
\lesssim\underline{\bm\varphi}[\varphi_2].
\end{align*}
For~$I_5$ we see that 
\begin{align*}
I_5&\leq\int_{u_{\infty}}^u\frac{a^{\frac{3}{2}}}{|u'|^3}\frac{1}{|u'|}||(a^{\frac{1}{2}}\mathcal{D})^i\Gamma||\times
||(a^{\frac{1}{2}}\mathcal{D})^{k-i}\Gamma|| \\
&\leq\int_{u_{\infty}}^u\frac{a^{\frac{3}{2}}}{|u'|^3}\frac{1}{|u'|}\frac{|u'|\mathcal{O}}{a^{\frac{1}{2}}}
\frac{|u'|\mathcal{O}}{a^{\frac{1}{2}}}
=\mathcal{O}^2\int_{u_{\infty}}^u\frac{a^{\frac{1}{2}}}{|u'|^2}\leq\frac{\mathcal{O}^2}{a^{\frac{1}{2}}}\leq1.
\end{align*}
For~$I_6$ we have that 
\begin{align*}
I_6&\leq\int_{u_{\infty}}^u\frac{a^{\frac{3}{2}}}{|u'|^3}\frac{1}{|u'|}||(a^{\frac{1}{2}}\mathcal{D})^i\varphi_2||\times
||(a^{\frac{1}{2}}\mathcal{D})^{k-i}\varphi_2|| \\
&\leq\int_{u_{\infty}}^u\frac{a^{\frac{3}{2}}}{|u'|^3}\frac{1}{|u'|}\frac{|u'|\mathcal{O}}{a^{\frac{1}{2}}}
\frac{|u'|\mathcal{O}}{a^{\frac{1}{2}}}
=\mathcal{O}^2\int_{u_{\infty}}^u\frac{a^{\frac{1}{2}}}{|u'|^2}\leq\frac{\mathcal{O}^2}{a^{\frac{1}{2}}}\leq1.
\end{align*}

Combining the estimates for~$I_1,\ldots, I_6$ we have that 
\begin{align*}
\frac{a}{|u|}||a^5\meth^{11}\Timu||_{L^2_{sc}(\mathcal{S}_{u,v})}\lesssim
\int_{u_{\infty}}^u\frac{a^{\frac{3}{2}}}{|u'|^3}
||a^5\mathcal{D}^{11}\lambda||_{L^2_{sc}(\mathcal{S}_{u',v})}
+||a^5\mathcal{D}^{11}\ulomega||_{L^2_{sc}(\mathcal{N}'_v(u_{\infty},u))}+\underline{\bm\varphi}[\varphi_2]+1.
\end{align*}
Following the same strategy one can estimate the remaining strings of
derivatives in~$\mathcal{D}^{k_{i}}\tilde\ulomega$ and obtain the same
result. This leads us to conclude that
\begin{align}
\label{11DerivmuMid}
\frac{a}{|u|}||a^5\mathcal{D}^{11}\Timu||_{L^2_{sc}(\mathcal{S}_{u,v})}\lesssim
\int_{u_{\infty}}^u\frac{a^{\frac{3}{2}}}{|u'|^3}
||a^5\mathcal{D}^{11}\lambda||_{L^2_{sc}(\mathcal{S}_{u',v})}+
||a^5\mathcal{D}^{11}\ulomega||_{L^2_{sc}(\mathcal{N}'_v(u_{\infty},u))}+\underline{\bm\varphi}[\varphi_2]+1.
\end{align}

\smallskip
\noindent
\textbf{Step 2}: Elliptic estimates of~$\lambda$.

To estimate~$\lambda$ we make use of the structure equation
\eqref{T-weightMasslessStructureStEq15} ---namely
\begin{align*}
\meth\lambda-\meth'\mu=\bar\tau\mu-\tau\lambda-\TiPsi_3.
\end{align*}
As the T-weight of~$\lambda$ is~$2$, it follows
that~$\mathscr{D}_{\lambda}=\meth\lambda$. Accordingly, the above
structure equation can be rewritten as
\begin{align*}
\mathscr{D}_{\lambda}=\meth'\mu+\bar\tau\mu-\tau\lambda-\TiPsi_3.
\end{align*}

Now, applying Proposition~\ref{EllipticT-weightSC}, we have that
\begin{align*}
||a^5\mathcal{D}^{11}\lambda||_{L^2_{sc}(\mathcal{S}_{u,v})}&\lesssim\sum_{i\leq10}
(\frac{1}{a^{\frac{1}{2}}}||(a^{\frac{1}{2}}\mathcal{D})^{i}\lambda||_{L^2_{sc}(\mathcal{S}_{u,v})}
+||(a^{\frac{1}{2}}\mathcal{D})^{i}\mathscr{D}_{\lambda}||_{L^2_{sc}(\mathcal{S}_{u,v})}) \\
&\lesssim||a^5\mathcal{D}^{11}\mu||_{L^2_{sc}(\mathcal{S}_{u,v})}+
\sum_{i=0}^{10}||(a^{\frac{1}{2}}\mathcal{D})^{i}\TiPsi_3||_{L^2_{sc}(\mathcal{S}_{u,v})} \\
&+\sum_{i=0}^{10}||(a^{\frac{1}{2}}\mathcal{D})^{i}\mu(a^{\frac{1}{2}}\mathcal{D})^{i}\tau||_{L^2_{sc}(\mathcal{S}_{u,v})} 
+\sum_{i=0}^{10}||(a^{\frac{1}{2}}\mathcal{D})^{i}\lambda(a^{\frac{1}{2}}\mathcal{D})^{i}\tau||_{L^2_{sc}(\mathcal{S}_{u,v})}+1.
\end{align*}
Hence, integrating along the ingoing lightcone we conclude that 
\begin{align*}
\int_{u_{\infty}}^u\frac{a^{\frac{3}{2}}}{|u'|^3}||a^5\mathcal{D}^{11}\lambda||_{L^2_{sc}(\mathcal{S}_{u,v})}&\lesssim
\int_{u_{\infty}}^u\frac{a^{\frac{3}{2}}}{|u'|^3}||a^5\mathcal{D}^{11}\mu||_{L^2_{sc}(\mathcal{S}_{u,v})}
+\sum_{i=0}^{10}\int_{u_{\infty}}^u\frac{a^{\frac{3}{2}}}{|u'|^3}
||(a^{\frac{1}{2}}\mathcal{D})^{i}\TiPsi_3||_{L^2_{sc}(\mathcal{S}_{u,v})} \\
&+\sum_{i=0}^{10}\int_{u_{\infty}}^u\frac{a^{\frac{3}{2}}}{|u'|^3}
||(a^{\frac{1}{2}}\mathcal{D})^{i}\mu(a^{\frac{1}{2}}\mathcal{D})^{i}\tau||_{L^2_{sc}(\mathcal{S}_{u,v})}  \\
&+\sum_{i=0}^{10}\int_{u_{\infty}}^u\frac{a^{\frac{3}{2}}}{|u'|^3}
||(a^{\frac{1}{2}}\mathcal{D})^{i}\lambda(a^{\frac{1}{2}}\mathcal{D})^{i}\tau||_{L^2_{sc}(\mathcal{S}_{u,v})}+1 \\
&=\int_{u_{\infty}}^u\frac{a^{\frac{3}{2}}}{|u'|^3}||a^5\mathcal{D}^{11}\mu||_{L^2_{sc}(\mathcal{S}_{u,v})}
+I_1+I_2+I_3+1.
\end{align*}

Now, for the second integral in the last inequality we have that 
\begin{align*}
I_1&\leq\sum_{i=0}^{10}\left(\int_{u_{\infty}}^u\frac{a}{|u'|^2}
||(a^{\frac{1}{2}}\mathcal{D})^{i}\TiPsi_3||^2_{L^2_{sc}(\mathcal{S}_{u,v})}  \right)^{\frac{1}{2}}
\left(\int_{u_{\infty}}^u\frac{a^2}{|u'|^4} \right)^{\frac{1}{2}} \\
&\leq\sum_{i=0}^{10}\frac{a}{|u|^{\frac{3}{2}}}||(a^{\frac{1}{2}}\mathcal{D})^{i}\TiPsi_3||_{L^2_{sc}(\mathcal{N}_v(u_{\infty},u))}
\leq1.
\end{align*}
For the third and fourth integrals we see that 
\begin{align*}
I_2+I_3\leq\int_{u_{\infty}}^u\frac{a^{\frac{3}{2}}}{|u'|^3}\frac{1}{|u'|}
\frac{|u'|^2}{a}\mathcal{O}\mathcal{O}
\leq\frac{a^{\frac{1}{2}}}{|u|}\mathcal{O}\leq1.
\end{align*}
Hence, we conclude that 
\begin{align}
\label{11DerivlambdaMid}
\int_{u_{\infty}}^u\frac{a^{\frac{3}{2}}}{|u'|^3}||a^5\mathcal{D}^{11}\lambda||_{L^2_{sc}(\mathcal{S}_{u,v})}&\lesssim
\int_{u_{\infty}}^u\frac{a^{\frac{3}{2}}}{|u'|^3}||a^5\mathcal{D}^{11}\mu||_{L^2_{sc}(\mathcal{S}_{u,v})}+1 .
\end{align}
Substituting~\eqref{11DerivmuMid} and apply Gr\"onwall's inequality we
have that
\begin{align}
\label{11DerivmuMid01}
\frac{a}{|u|}||a^5\mathcal{D}^{11}\Timu||_{L^2_{sc}(\mathcal{S}_{u,v})}\lesssim&
||a^5\mathcal{D}^{11}\ulomega||_{L^2_{sc}(\mathcal{N}'_v(u_{\infty},u))}+\underline{\bm\varphi}[\varphi_2]+1 \nonumber\\
\lesssim&\bm\varphi+\underline{\bm\varphi}
+\underline{\bm\Psi}[\TiPsi_3]+1,
\end{align}
so that integrating along the ingoing lightcone we end up with
\begin{align*}
\int_{u_{\infty}}^u\frac{a^2}{|u'|^3}||a^5\mathcal{D}^{11}\Timu||_{L^2_{sc}(\mathcal{S}_{u,v})}
\lesssim\bm\varphi+\underline{\bm\varphi}
+\underline{\bm\Psi}[\TiPsi_3]+1.
\end{align*}
Now, we observe that for spherical derivatives, the norm of~$\mu$
coincides with that of~$\Timu$. Hence, we have that  
\begin{align*}
\int_{u_{\infty}}^u\frac{a^2}{|u'|^3}||a^5\mathcal{D}^{11}\mu||_{L^2_{sc}(\mathcal{S}_{u,v})}
\lesssim\bm\varphi+\underline{\bm\varphi}
+\underline{\bm\Psi}[\TiPsi_3]+1.
\end{align*}
Accordingly, observing~\eqref{11DerivlambdaMid}, we see that 
\begin{align*}
\int_{u_{\infty}}^u\frac{a^{\frac{3}{2}}}{|u'|^3}||a^5\mathcal{D}^{11}\lambda||_{L^2_{sc}(\mathcal{S}_{u,v})}
\lesssim1.
\end{align*}
Moreover, from~\eqref{11DerivmuMid01} we have that 
\begin{align*}
\int_{u_{\infty}}^u\frac{a^3}{|u'|^4}||a^5\mathcal{D}^{11}\Timu||^2_{L^2_{sc}(\mathcal{S}_{u,v})}
\lesssim(\bm\varphi+\underline{\bm\varphi}
+\underline{\bm\Psi}[\TiPsi_3]+1)^2.
\end{align*}

Following the same strategy as in the previous paragraphs we finally
conclude that
\begin{align*}
\int_{u_{\infty}}^u\frac{a^2}{|u'|^4}||a^5\mathcal{D}^{11}\lambda||^2_{L^2_{sc}(\mathcal{S}_{u,v})}&\lesssim
\int_{u_{\infty}}^u\frac{a^2}{|u'|^4}||a^5\mathcal{D}^{11}\Timu||^2_{L^2_{sc}(\mathcal{S}_{u,v})}
+\int_{u_{\infty}}^u\frac{a^2}{|u'|^4}||(a^{\frac{1}{2}}\mathcal{D})^{10}\TiPsi_3||^2_{L^2_{sc}(\mathcal{S}_{u,v})}+1\\
&\lesssim\frac{1}{a}(\bm\varphi+\underline{\bm\varphi}
+\underline{\bm\Psi}[\TiPsi_3]+1)^2+\frac{a}{|u|^2}(\underline{\bm\Psi}[\TiPsi_3]^2+1) \\
&\leq1.
\end{align*}

\end{proof}

\subsection{Summary of the analysis of the top derivative of the
  connection}

In this section we have made use of the results in Section \ref{L2estimate} and the bootstrap assumption

\begin{align*}
\bm\Gamma_{11},\bm\varphi,\bm\Psi\leq\mathcal{O}, 
\end{align*}
to conclude with the estimates
\begin{align*}
||a^5\mathcal{D}^{11}\rho||_{L^2_{sc}(\mathcal{S}_{u,v})}&\lesssim\bm{\varphi}[\varphi_0]+\bm\Psi[\Psi_0]+\bm\Psi[\TiPsi_1]+1,\\
||a^5\mathcal{D}^{11}\rho||_{L^2_{sc}(\mathcal{N}_u(0,v))}&\lesssim\bm{\varphi}[\varphi_0]+\bm\Psi[\Psi_0]+\bm\Psi[\TiPsi_1]+1,\\
||a^5\mathcal{D}^{11}\sigma||_{L^2_{sc}(\mathcal{N}_u(0,v))}
&\lesssim\bm{\varphi}[\varphi_0]+\bm\Psi[\Psi_0]+\bm\Psi[\TiPsi_1]+1, \\
||a^5\mathcal{D}^{11}\tau||_{L^2_{sc}(\mathcal{N}_u(0,v))}
&\lesssim\bm\Psi[\Psi_0]+\bm\Psi[\TiPsi_2]+\underline{\bm\varphi}[\varphi_1]+1,\\
||a^5\mathcal{D}^{11}\tau||_{L^2_{sc}(\mathcal{N}'_v(u_{\infty},u))}
&\lesssim\bm\Psi[\Psi_0]+\underline{\bm\Psi}[\TiPsi_2]+\underline{\bm\varphi}[\varphi_1]+1, \\
\frac{a}{|u|}||a^5\mathcal{D}^{11}\pi||_{L^2_{sc}(\mathcal{N}_{u}(0,v))}&\lesssim
\bm\Psi[\TiPsi_2]+\bm\Psi[\Psi_0]+\underline{\bm\Psi}[\TiPsi_2]+\underline{\bm\varphi}[\varphi_1]+1,\\
||a^5\mathcal{D}^{11}\ulomega||_{L^2_{sc}(\mathcal{N}'_v(u_{\infty},u))}&\lesssim
\underline{\bm\varphi}[\varphi_2]+\bm\varphi[\varphi_0]
+\underline{\bm\Psi}[\TiPsi_3]+1, \\
\int_{u_{\infty}}^u\frac{a^2}{|u'|^3}||a^5\mathcal{D}^{11}\mu||_{L^2_{sc}(\mathcal{S}_{u,v})}
&\lesssim\bm\varphi[\varphi_0]+\underline{\bm\varphi}[\varphi_2]
+\underline{\bm\Psi}[\TiPsi_3]+1, \\
\int_{u_{\infty}}^u\frac{a^3}{|u'|^4}||a^5\mathcal{D}^{11}\Timu||^2_{L^2_{sc}(\mathcal{S}_{u,v})}
&\lesssim(\bm\varphi[\varphi_0]+\underline{\bm\varphi}[\varphi_2]
+\underline{\bm\Psi}[\TiPsi_3]+1)^2, \\
\int_{u_{\infty}}^u\frac{a^{\frac{3}{2}}}{|u'|^3}||a^5\mathcal{D}^{11}\lambda||_{L^2_{sc}(\mathcal{S}_{u,v})}
&\lesssim1, \\
\int_{u_{\infty}}^u\frac{a^2}{|u'|^4}||a^5\mathcal{D}^{11}\lambda||^2_{L^2_{sc}(\mathcal{S}_{u,v})}
&\lesssim1.
\end{align*}

So, in general we have that
\begin{align*}
  \bmGamma_{11}\lesssim\bm\Psi^2+\bm\Psi+\bm\varphi^2+\bm\varphi+1.
\end{align*}

\section{Energy estimates}
\label{EnergyEstimate}

In this section we discuss the construction of energy estimates for
the components of the Weyl tensor and for the auxiliary
fields~$\varphi_{0,1,2}$. Making use of the bootstrap
assumption~\eqref{Hypothesis} and the the results in previous sections
we will show that
\begin{align*}
\bm\Psi+\bm\varphi\lesssim\mathcal{I}_{0}+1.
\end{align*}

In Section~\ref{subsection:preliminary_energy} we introduce our main
tool in the analysis ---namely, the energy inequality. Due to the
central role played by the connection coefficient~$\mu$, which are
large, we follow the strategy in An's work~\cite{An2022} to manage the
terms appearing in the principal part with an explicit~$\mu$, and thus
obtain an improved energy inequality for the auxiliary fields and the
Weyl curvature. The latter will also be a key tool in the
analysis. For the pairs~$(\Psi_{i},\Psi_{i+1})$ where~$i=0,...,3$
and~$(\varphi_0,\varphi_1)$, this inequality can be used to manage the
terms involving~$\mu$. Now, for the pair~$(\varphi_1,\varphi_2)$, the
term~$\mu\varphi_1$ in~\eqref{EOMMasslessScalarStT-weight3} leads to a
significant difference in the energy inequality:
\begin{align*}
||\frac{a^{\frac{1}{2}}}{|u|}(a^{\frac{1}{2}})^{k-1}\meth^k\varphi_1||^2_{L^2_{sc}(\mathcal{N}_u(0,v))}
+||\frac{a^{\frac{1}{2}}}{|u'|}(a^{\frac{1}{2}})^{k-1}\meth^k\varphi_2||^2_{L^2_{sc}(\mathcal{N}'_v(u_{\infty},u))} 
\lesssim... .
\end{align*}
and~$\mu\varphi_0$ in~\eqref{EOMMasslessScalarStT-weight4} is a large
term. As a result, the estimate for~$(\varphi_1,\varphi_2)$ is
comparatively coarse, and so not good enough for the purposes of
studying the formation of trapped surfaces.

Another problem in the energy estimates comes from the Bianchi
equation~\eqref{T-weightMasslessStBianchi8}. According to the
bootstrap assumption, the terms~$\varphi_2\meth’\varphi_2$
and~$\varphi_2\bar\varphi_1\mu$ are large, and lead to a logarithmic
contribution in the integral over~$D_{u,v}$. This behaviour can also
obstruct the proof.

To address these problems we introduce an auxiliary field
\begin{align*}
\tilde\varphi_2\equiv\meth'\varphi_2+\mu\bar\varphi_1.
\end{align*}
The first advantage of this definition is that the structure of
equations of pair~$(\meth'\varphi_1, \tilde\varphi_2)$ have the same
\emph{good structure} of the other pairs and is also consistent with
the signature~$s_2$. For this auxiliary variable the energy inequality
leads to an estimate of exactly the same type as the other pairs. The
second advantage is that the leading terms of~$\meth'\varphi_2$
and~$\mu\bar\varphi_1$, which are large terms, but cancel exactly to
leading order because of our choice of initial data combined with the
fact that they appear in the field equations with the desired
coefficients (see remark~\ref{remark:etheth'decay} for more
detail). This implies that~$\tilde\varphi_2$ is in fact not a leading
order term and, hence, the estimate of~$(\Psi_3,\Psi_4)$ goes through
undamaged. The details of this heuristic discussion are provided in
Propositions~\ref{EnergyEstimatevarphi12} and~\ref{EnergyEstrestWeyl}.

\subsection{Preliminary energy estimates}
\label{subsection:preliminary_energy}

We begin by recalling integration by parts on causal diamonds. Making
use of~\eqref{DerivativeTwoSphere1} and~\eqref{DerivativeTwoSphere2}
one has the following:

\begin{lemma}
[\textbf{\em integral over causal diamonds of derivatives of
      a scalar}] Let~$f$ be a scalar with zero T-weight
 in the causal diamond~$\mathcal{D}_{u,v}$. One has then that
  \begin{align*}
  &\int_{\mathcal{D}_{u,v}}\Delta f=\int_{\mathcal{N}_u(0,v)}Q^{-1}f
  -\int_{\mathcal{N}_{u_{\infty}}(0,v)}Q^{-1}f
 +\int_{\mathcal{D}_{u,v}}(\ulomega-2\mu)f ,\\
 &\int_{\mathcal{D}_{u,v}}Df=\int_{\mathcal{N}'_v(u_{\infty},u)}f
 -\int_{\mathcal{N}'_0(u_{\infty},u)}f+\int_{\mathcal{D}_{u,v}}2\rho f .
\end{align*}
\end{lemma}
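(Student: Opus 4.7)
My plan is to deduce both identities from the two transport formulas \eqref{DerivativeTwoSphere1} and \eqref{DerivativeTwoSphere2} by writing the 4-volume element in the Stewart gauge in a form adapted to the double null foliation, and then applying the fundamental theorem of calculus in each of the two null directions.

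The first preliminary step is to establish the volume-element formula
\begin{align*}
\int_{\mathcal{D}_{u,v}}\phi\;=\;\int_{u_\infty}^{u}\!\!\int_{0}^{v}\!\int_{\mathcal{S}_{u',v'}}Q^{-1}\phi\;\mathrm{d}A\,\mathrm{d}v'\,\mathrm{d}u'
\end{align*}
valid for any scalar~$\phi$. This comes from computing~$\sqrt{-\det g}$ from the frame expressions in Gauge Choice~\ref{Assumption:Stewarts_Frame}: the coordinate metric has~$g_{vv}=0$, $g_{vA}=0$ and~$g_{uv}=-Q^{-1}$, so an expansion of the determinant along the~$v$ row yields~$\sqrt{-\det g}=Q^{-1}\sqrt{\det\sigma}$. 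I also need the identity~$\Delta Q=\ulomega Q$ (equation~\eqref{framecoefficient4}), which gives~$\Delta(Q^{-1})=-Q^{-1}\ulomega$, and the gauge relations~$\rho=\bar\rho$, $\mu=\bar\mu$ from Lemma~\ref{Lemma1}.

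For the first identity, I would apply the fundamental theorem of calculus in~$u'$ to the quantity~$\int_{0}^{v}\!\int_{\mathcal{S}_{u',v'}}Q^{-1}f\,\mathrm{d}A\,\mathrm{d}v'$, giving the boundary contributions~$\int_{\mathcal{N}_u(0,v)}Q^{-1}f-\int_{\mathcal{N}_{u_\infty}(0,v)}Q^{-1}f$ on the right-hand side. On the left-hand side the transport identity~\eqref{DerivativeTwoSphere2} with~$\phi=Q^{-1}f$ produces, after expanding~$\Delta(Q^{-1}f)=Q^{-1}\Delta f+(\Delta Q^{-1})f$ via the relation above,
\begin{align*}
\frac{\mathrm{d}}{\mathrm{d}u'}\int_{\mathcal{S}_{u',v'}}Q^{-1}f\,\mathrm{d}A
=\int_{\mathcal{S}_{u',v'}}Q^{-1}\bigl[\Delta f-\ulomega f+2\mu f\bigr]\mathrm{d}A,
\end{align*}
which after integration in~$u',v'$ produces~$\int_{\mathcal{D}_{u,v}}[\Delta f-(\ulomega-2\mu)f]$ by the volume-element formula. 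Rearranging yields the stated result. The symmetric argument for the second identity uses~\eqref{DerivativeTwoSphere1} applied directly to~$\phi=f$ and integration in~$v'$; here no~$Q$ factor appears in the boundary terms because the intrinsic Jacobian of the ingoing foliation along~$\bmn$ is already normalised by the gauge condition~$C^{\mathcal{A}}=0$ on~$\mathcal{N}'_\star$ (more precisely, because the~$Q^{-1}$ from the transport formula cancels the~$Q^{-1}$ from the 4-volume element).

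The main obstacle, and indeed the only non-routine input, is getting the volume-element factor~$Q^{-1}$ correctly and matching it to the~$Q^{-1}$ weights that already appear inside the transport identities. Once this bookkeeping is in place, the two identities follow by the same template: multiply~$f$ by the right power of~$Q$ so that~$\frac{\mathrm{d}}{\mathrm{d}u'}$ or~$\frac{\mathrm{d}}{\mathrm{d}v'}$ acts cleanly, absorb the~$\Delta Q^{-1}=-Q^{-1}\ulomega$ term into the~$(\ulomega-2\mu)f$ correction (first identity), and rearrange. No deeper PDE content is required.
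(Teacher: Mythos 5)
Your proposal is correct and follows exactly the route the paper intends: the paper states this lemma without an explicit proof, saying only that it follows from the transport identities \eqref{DerivativeTwoSphere1} and \eqref{DerivativeTwoSphere2}, and your argument — interpreting $\int_{\mathcal{D}_{u,v}}$ with the $Q^{-1}$-weighted volume element, applying \eqref{DerivativeTwoSphere2} to $\phi=Q^{-1}f$ together with $\Delta Q^{-1}=-Q^{-1}\ulomega$, and applying \eqref{DerivativeTwoSphere1} to $\phi=f$ — fills in precisely the intended details and yields both identities. The only cosmetic remark is that your aside about $C^{\mathcal{A}}=0$ is not the operative reason for the absence of $Q$ in the second identity's boundary terms; the cancellation of the $Q^{-1}$ factors, which you also state, is the correct explanation.
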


\begin{corollary}
\label{corollaryIntegration}
Let~$f=f_1f_2$ and assume~$s(f)=0$, then
\begin{align*}
  & \int_{\mathcal{D}_{u,v}}f_1\mthorn' f_2+\int_{\mathcal{D}_{u,v}}f_2\mthorn' f_1
  =\int_{\mathcal{N}_u(0,v)}Q^{-1}f_1f_2-\int_{\mathcal{N}_{u_{\infty}}(0,v)}Q^{-1}f_1f_2
  +\int_{\mathcal{D}_{u,v}}(\ulomega-2\mu)f_1f_2 ,\\
  & \int_{\mathcal{D}_{u,v}}f_1\mthorn f_2+\int_{\mathcal{D}_{u,v}}f_2
  \mthorn f_1=\int_{\mathcal{N}'_v(u_{\infty},u)}f_1f_2-\int_{\mathcal{N}'_0(u_{\infty},u)}f_1f_2
  +\int_{\mathcal{D}_{u,v}}2\rho f_1f_2 .
\end{align*}
\end{corollary}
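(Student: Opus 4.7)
The plan is to obtain both identities as direct consequences of the preceding lemma on integration of $\Delta f$ and $Df$ over the causal diamond $\mathcal{D}_{u,v}$, by combining it with the Leibniz rule for the T-weight operators $\mthorn$, $\mthorn'$ established in Section~\ref{T-weightFormalism}. The key observation is that if $f_1 f_2$ has zero T-weight, then $\mthorn(f_1 f_2)$ and $\mthorn'(f_1 f_2)$ coincide with the ordinary NP directional derivatives $D(f_1 f_2)$ and $\Delta(f_1 f_2)$, so the hypotheses of the preceding lemma apply with $f \equiv f_1 f_2$.

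To carry this out for the first identity, I would first note that since $s(f_1 f_2)=0$ we have $s(f_1)=-s(f_2)$. The Leibniz property of $\mthorn'$ then gives
\begin{align*}
\mthorn'(f_1 f_2) = f_1 \mthorn' f_2 + f_2 \mthorn' f_1,
\end{align*}
and by definition~\eqref{DefT-weightthorn'}, since the T-weight of $f_1 f_2$ is zero, the correction term $s(f_1 f_2)(\gamma-\bar\gamma)(f_1 f_2)$ vanishes, so $\mthorn'(f_1 f_2) = \Delta(f_1 f_2)$. Applying the preceding lemma to the zero-T-weight scalar $g \equiv f_1 f_2$ and substituting this expression for $\Delta g$ yields the first claimed identity immediately.

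The second identity follows from precisely the same argument, with $\mthorn'$ replaced by $\mthorn$ and $\Delta$ by $D$, and using the second formula of the preceding lemma together with the corresponding vanishing of the $s(f_1 f_2)(\epsilon-\bar\epsilon)$ correction in~\eqref{DefT-weightthorn}. There is no substantive obstacle in the proof; the only care required is verifying that the Leibniz rule applies consistently to a product whose factors individually carry nonzero T-weight, which is exactly the observation packaged into the hypothesis $s(f_1 f_2)=0$ and which ensures that the integrals on both sides are well-defined scalars on $\mathcal{S}_{u,v}$.
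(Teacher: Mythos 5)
Your argument is correct and is precisely the intended derivation: the paper states this as an immediate corollary of the preceding lemma, relying (implicitly) on exactly the two facts you isolate, namely that $\mthorn$, $\mthorn'$ reduce to $D$, $\Delta$ on the zero-T-weight product $f_1f_2$ and that they obey the Leibniz rule. Nothing further is needed.
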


\begin{proposition}
\label{PropEnergyEst}
Let~$\phi$ be either a T-weighted quantity or a derivative of
one. Moreover, let~$\lambda_1=\lambda_0-1$. Then
  \begin{align*}
&2\int_{\mathcal{D}_{u,v}}|u|^{2\lambda_1}\langle\phi,\mthorn'\phi+\lambda_0\mu\phi\rangle\\
=&\int_{\mathcal{N}_u(0,v)}Q^{-1}|u|^{2\lambda_1}|\phi|^2
-\int_{\mathcal{N}_{u_{\infty}}(0,v)}Q^{-1}|u_{\infty}|^{2\lambda_1}|\phi|^2
+\int_{\mathcal{D}_{u,v}}h|u'|^{2\lambda_1}|\phi|^2 ,
\end{align*}
where~$h$ obeys the bound 
\begin{align*}
|h|\lesssim\frac{\mathcal{O}}{|u|^2},
\end{align*}
where 
\begin{align*}
2\langle\phi,\mthorn'\phi+2\lambda_0\mu\phi\rangle=\bar\phi(\mthorn'\phi+\lambda_0\mu\phi)+
\phi(\mthorn'\bar\phi+\lambda_0\mu\bar\phi).
\end{align*}
\end{proposition}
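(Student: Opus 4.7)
The plan is to reduce the statement to a single application of the divergence-type identity in Corollary \ref{corollaryIntegration} applied to the zero-T-weight scalar $f = |u|^{2\lambda_1}|\phi|^2$. First I will rewrite the integrand using the Leibniz rule: since $\phi$ and $\bar\phi$ have opposite T-weights the product $|\phi|^2$ has T-weight zero, so $\mthorn'$ acts on it as the coordinate directional derivative $\Delta$, and
\begin{align*}
2\langle \phi, \mthorn'\phi + \lambda_0 \mu \phi\rangle
= \mthorn'(|\phi|^2) + 2\lambda_0 \mu |\phi|^2 .
\end{align*}

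Next I will compute $\mthorn' |u|^{2\lambda_1}$. In Stewart's gauge $\bmn = \bmpartial_u + C^{\mathcal{A}}\bmpartial_{\mathcal{A}}$, so $\Delta u = 1$; combined with $u < 0$ on $\mathbb{D}$ this gives $\mthorn'|u| = -1$ and hence $\mthorn' |u|^{2\lambda_1} = -2\lambda_1 |u|^{2\lambda_1-1}$. Consequently $\mthorn'(|u|^{2\lambda_1}|\phi|^2) = |u|^{2\lambda_1}\mthorn'(|\phi|^2) - 2\lambda_1 |u|^{2\lambda_1-1}|\phi|^2$, and applying the first identity of Corollary \ref{corollaryIntegration} to the zero-T-weight scalar $|u|^{2\lambda_1}|\phi|^2$ yields
\begin{align*}
\int_{\mathcal{D}_{u,v}}|u|^{2\lambda_1}\mthorn'(|\phi|^2)
&= \int_{\mathcal{N}_u(0,v)}Q^{-1}|u|^{2\lambda_1}|\phi|^2
- \int_{\mathcal{N}_{u_\infty}(0,v)}Q^{-1}|u_\infty|^{2\lambda_1}|\phi|^2\\
&\quad + \int_{\mathcal{D}_{u,v}}\Bigl[\,\ulomega - 2\mu + \tfrac{2\lambda_1}{|u|}\,\Bigr]|u|^{2\lambda_1}|\phi|^2 .
\end{align*}

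Adding $2\lambda_0 \int_{\mathcal{D}_{u,v}} \mu\, |u|^{2\lambda_1}|\phi|^2$ to both sides produces the desired boundary terms plus a bulk integrand with coefficient $\ulomega + 2(\lambda_0 - 1)\mu + 2\lambda_1/|u|$. The crucial cancellation comes from writing $\mu = -1/|u| + \tilde\mu$, under which the singular $1/|u|$ contributions collect into $[-2(\lambda_0-1) + 2\lambda_1]/|u|$; this vanishes exactly because of the hypothesis $\lambda_1 = \lambda_0 - 1$. This tuning is the whole point of the weight $\lambda_1$ and of the off-by-one relation in the statement.

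What remains is then $h = \ulomega + 2(\lambda_0 - 1)\tilde\mu$. The bound $|h|\lesssim \mathcal{O}/|u|^2$ follows from the bootstrap assumptions $\bm\Gamma \leq \mathcal{O}$ together with the signatures $s_2(\ulomega) = s_2(\tilde\mu) = 1$, which give $\|\ulomega\|_{L^{\infty}}, \|\tilde\mu\|_{L^{\infty}} \lesssim a\mathcal{O}/|u|^3 \lesssim \mathcal{O}/|u|^2$ since $|u|\geq a/4$ on $\mathbb{D}$. The extension from a T-weighted quantity $\phi$ to a derivative $\mathcal{D}^k\phi$ is automatic, since the latter is again of definite T-weight and the computation above used only the Leibniz rule together with $\mthorn' u = 1$. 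There is no substantive analytic difficulty; the main subtlety is simply the careful bookkeeping of signs arising from $u<0$ and the exact matching of the $-1/|u|$ piece of $\mu$ against the $\lambda_1$-weight.
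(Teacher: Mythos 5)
Your proof is correct and follows essentially the same route as the paper: the paper computes $\frac{\mathrm{d}}{\mathrm{d}u}\int_{\mathcal{S}_{u,v}}Q^{-1}|u|^{2\lambda_1}|\phi|^2$ directly via~\eqref{DerivativeTwoSphere2} and then integrates in $u$ and $v$, which is exactly the content of the causal-diamond lemma you invoke, and both arguments land on the same $h=\ulomega+2\lambda_1\tilde\mu$ with the $1/|u|$ terms cancelling precisely because $\lambda_1=\lambda_0-1$. The only quibble is cosmetic: the bootstrap norm for $\tilde\mu$ carries an extra $a/|u|$ weight, so the pointwise bound is $\|\tilde\mu\|_{L^\infty}\lesssim\mathcal{O}/|u|^2$ directly rather than $a\mathcal{O}/|u|^3$, but your final estimate $|h|\lesssim\mathcal{O}/|u|^2$ is unaffected.
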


\begin{proof}
  Let~$f\equiv Q^{-1}|u|^{2\lambda_1}|\phi|^2$. Making use
  of~\ref{DerivativeTwoSphere2} one then has that
\begin{align*}
&\frac{\mathrm{d}}{\mathrm{d}u}\int_{\mathcal{S}_{u,v}}
Q^{-1}|u|^{2\lambda_1}|\phi|^2 \\
&\quad=\int_{\mathcal{S}_{u,v}}2Q^{-1}|u|^{2\lambda_1}\langle\phi,\mthorn'\phi+\lambda_0\mu\phi\rangle+
\int_{\mathcal{S}_{u,v}}Q^{-1}|u|^{2\lambda_1}\bigg(2\mu(1-\lambda_0)-\ulomega-\frac{2\lambda_1}{|u|}\bigg),
\end{align*}
where we have made use of the fact that~$\Delta Q=\ulomega Q$. Now, let 
\begin{align*}
h\equiv-2\mu(1-\lambda_0)+\ulomega+\frac{2\lambda_1}{|u|}.
\end{align*}
It then follows that 
\begin{align*}
h=2\lambda_1(\mu-\frac{1}{u})+\ulomega.
\end{align*}
Making use of the bootstrap assumption one concludes
that~$|h|\lesssim\frac{\mathcal{O}}{|u|^2}$. Now, integrating with
respect to~$\mathrm{d}u\mathrm{d}v$ and applying the \emph{Fundamental
  Theorem of Calculus} with respect to~$u$ we obtain the desired
result.
\end{proof}

In the following, let
\begin{align*}
  (\Psi_I,\Psi_{II})\in
  \{(\Psi_0,\tilde\Psi_1),(\tilde\Psi_1,\tilde\Psi_2),
  (\tilde\Psi_2,\tilde\Psi_3),(\tilde\Psi_3,\Psi_4),
  (\varphi_{0},\varphi_{1}),(\varphi_{1},\varphi_{2})
  \}.
\end{align*}
Using this notation, the Bianchi identities and the equations for the
scalar field take the schematic form
\begin{align*}
&\mthorn'\Psi_I+\lambda_0\mu\Psi_I-\meth\Psi_{II}=P_0, \\
&\mthorn\Psi_{II}-\meth'\Psi_I=Q_0.
\end{align*}
Applying the derivative~$\meth$ and the commutator relations it
follows that
\begin{subequations}
\begin{align}
\mthorn'\meth^k\Psi_{I}+(\lambda_0+k)\mu\meth^k\Psi_{I}-\meth^{k+1}\Psi_{II}&=P_k, \label{kthBianchi1} \\
\mthorn\meth^k\Psi_{II}-\meth'\meth^k\Psi_{I}&=Q_k, \label{kthBianchi2} 
\end{align}
\end{subequations}
where
\begin{align*}
P_k=\meth^{k}P_0+
\sum_{i=0}^k\meth^i\Gamma(\Timu,\lambda)\meth^{k-i}\Psi_{I}
\end{align*}
and 
\begin{align*}
Q_k&=\sum_{i_1+i_2+i_3=k}\meth^{i_1}\Gamma(\pi,\tau)^{i_2}\meth^{i_3}\Psi_{I}
+\sum_{i_1+i_2+i_3=k}\meth^{i_1}\Gamma(\pi,\tau)^{i_2}\meth^{i_3}Q_0 \\
&+\sum_{i_1+i_2+i_3+i_4=k}\meth^{i_1}\Gamma(\tau,\pi)^{i_2}\meth^{i_3}\Gamma(\tau,\pi,\rho,\sigma)\meth^{i_4}\Psi_{II}
+\sum_{i_1+i_2=k-1}\meth^{i_1}K\meth^{i_2}\Psi_{I}.
\end{align*}

The signature of the various terms involved satisfy
\begin{align*}
s_2(\meth^k\Psi_I)=\frac{k}{2}+s_2(\Psi_I), \quad 
&s_2(\meth^k\Psi_2)=\frac{k+1}{2}+s_2(\Psi_I),\\
s_2(P_k)=\frac{k+2}{2}+s_2(\Psi_I), \quad 
&s_2(Q_k)=\frac{k+1}{2}+s_2(\Psi_I).
\end{align*}

Now applying the expressions in Proposition
\ref{PropEnergyEst} to~$\Psi_{I}$ one has that
  \begin{align*}
&2\int_{\mathcal{D}_{u,v}}|u|^{2\lambda_1}
\langle\meth^k\Psi_{I},\mthorn'\meth^k\Psi_{I}+(\lambda_0+k)\mu\meth^k\Psi_{I}\rangle\\
&\quad=\int_{\mathcal{N}_u(0,v)}Q^{-1}|u|^{2\lambda_1}|\meth^k\Psi_{I}|^2
-\int_{\mathcal{N}_{u_{\infty}}(0,v)}Q^{-1}|u_{\infty}|^{2\lambda_1}|\meth^k\Psi_{I}|^2
+\int_{\mathcal{D}_{u,v}}h|u'|^{2\lambda_1}|\meth^k\Psi_{I}|^2,
\end{align*}
where~$\lambda_1=\lambda_0+k-1$. Applying
Corollary~\ref{corollaryIntegration}
to~$f_1=\bar{f}_2=|u|^{2\lambda_1}\meth^k\Psi_{II}$ it follows that
\begin{align*}
&2\int_{\mathcal{D}_{u,v}}|u|^{2\lambda_1}\langle\meth^k\Psi_{II},\mthorn\meth^k\Psi_{II}\rangle \\
&\quad=\int_{\mathcal{N}'_v(u_{\infty},u)}|u'|^{2\lambda_1}|\meth^k\Psi_{II}|^2
-\int_{\mathcal{N}'_0(u_{\infty},u)}|u'|^{2\lambda_1}|\meth^k\Psi_{II}|^2
 +\int_{\mathcal{D}_{u,v}}2\rho|u'|^{2\lambda_1}|\meth^k\Psi_{II}|^2 ,
\end{align*}
so that combine the above two equalities one ends up with
\begin{align*}
&2\int_{\mathcal{D}_{u,v}}|u|^{2\lambda_1}
\langle\meth^k\Psi_{I},\mthorn'\meth^k\Psi_{I}+(\lambda_0+k)\mu\meth^k\Psi_{I}\rangle
+2\int_{\mathcal{D}_{u,v}}|u|^{2\lambda_1}\langle\meth^k\Psi_{II},\mthorn\meth^k\Psi_{II}\rangle \\
&\quad=\int_{\mathcal{N}_u(0,v)}Q^{-1}|u|^{2\lambda_1}|\meth^k\Psi_{I}|^2
-\int_{\mathcal{N}_{u_{\infty}}(0,v)}Q^{-1}|u_{\infty}|^{2\lambda_1}|\meth^k\Psi_{I}|^2 \\
&\quad+\int_{\mathcal{N}'_v(u_{\infty},u)}|u'|^{2\lambda_1}|\meth^k\Psi_{II}|^2
-\int_{\mathcal{N}'_0(u_{\infty},u)}|u'|^{2\lambda_1}|\meth^k\Psi_{II}|^2 \\
&\quad+\int_{\mathcal{D}_{u,v}}h|u'|^{2\lambda_1}|\meth^k\Psi_{I}|^2
 +\int_{\mathcal{D}_{u,v}}2\rho|u'|^{2\lambda_1}|\meth^k\Psi_{II}|^2 .
\end{align*}
Now, making use of equations~\eqref{kthBianchi1}, \eqref{kthBianchi2}
in the left-hand side of the previous expression and making use of
integration by parts (Lemma~\ref{IntegralbyPartT-weight}), we have
that
\begin{align*}
&\int_{\mathcal{N}_u(0,v)}Q^{-1}|u|^{2\lambda_1}|\meth^k\Psi_{I}|^2
+\int_{\mathcal{N}'_v(u_{\infty},u)}|u'|^{2\lambda_1}|\meth^k\Psi_{II}|^2 \\
&\quad=\int_{\mathcal{N}_{u_{\infty}}(0,v)}Q^{-1}|u_{\infty}|^{2\lambda_1}|\meth^k\Psi_{I}|^2
+\int_{\mathcal{N}'_0(u_{\infty},u)}|u'|^{2\lambda_1}|\meth^k\Psi_{II}|^2 \\
&\quad+2\int_{\mathcal{D}_{u,v}}|u|^{2\lambda_1}
\langle\meth^k\Psi_{I},P_k\rangle
+2\int_{\mathcal{D}_{u,v}}|u|^{2\lambda_1}\langle\meth^k\Psi_{II},Q_k\rangle \\
&\quad-\int_{\mathcal{D}_{u,v}}h|u'|^{2\lambda_1}|\meth^k\Psi_{I}|^2
 -\int_{\mathcal{D}_{u,v}}2\rho|u'|^{2\lambda_1}|\meth^k\Psi_{II}|^2 .
\end{align*}
From the bootstrap assumption we have that
\begin{align*}
|h|\leq\frac{\mathcal{O}}{|u|^2},\quad |\rho|\leq\frac{\mathcal{O}}{|u|}. 
\end{align*}
Letting 
\begin{align*}
f(u)&\equiv\int_{\mathcal{N}_u(0,v)}Q^{-1}|u|^{2\lambda_1}|\meth^k\Psi_{I}|^2, \quad
g(v)\equiv\int_{\mathcal{N}'_v(u_{\infty},u)}|u'|^{2\lambda_1}|\meth^k\Psi_{II}|^2, \\
X(u,v)&\equiv\int_{\mathcal{N}_{u_{\infty}}(0,v)}Q^{-1}|u_{\infty}|^{2\lambda_1}|\meth^k\Psi_{I}|^2
+\int_{\mathcal{N}'_0(u_{\infty},u)}|u'|^{2\lambda_1}|\meth^k\Psi_{II}|^2 \\
&+2\int_{\mathcal{D}_{u,v}}|u|^{2\lambda_1}
\langle\meth^k\Psi_{I},P_k\rangle
+2\int_{\mathcal{D}_{u,v}}|u|^{2\lambda_1}\langle\meth^k\Psi_{II},Q_k\rangle,
\end{align*}
one has that
\begin{align*}
  f(u)+g(v)\leq X(u,v)+\int_{u_\infty}^u
  \frac{1}{|u'|^2}f(u')\mathrm{d}u+\int_0^v\frac{1}{|u|}g(v')\mathrm{d}v.
\end{align*}
Hence, we find that 
\begin{align*}
  g(v)\leq X(u,v)+\int_{u_\infty}^u
  \frac{1}{|u'|^2}f(u')\mathrm{d}u+\int_0^v\frac{1}{|u|}g(v')\mathrm{d}v.
\end{align*}
Applying the Gr\"onwall-type estimate, we thus obtain that
\begin{align*}
g(v)\lesssim X(u,v)+\int_{u_\infty}^u\frac{1}{|u'|^2}f(u')\mathrm{d}u,
\end{align*}
and, moreover, that 
\begin{align*}
f(u)\lesssim X(u,v)+\int_{u_\infty}^u\frac{1}{|u'|^2}f(u')\mathrm{d}u.
\end{align*}
Applying the Gr\"onwall-type estimate again, one has
\begin{align*}
f(u)\lesssim X(u,v).
\end{align*}
In other words, one has that
\begin{align*}
&\int_{\mathcal{N}_u(0,v)}Q^{-1}|u|^{2\lambda_1}|\meth^k\Psi_{I}|^2
+\int_{\mathcal{N}'_v(u_{\infty},u)}|u'|^{2\lambda_1}|\meth^k\Psi_{II}|^2 \\
&\lesssim\int_{\mathcal{N}_{u_{\infty}}(0,v)}Q^{-1}|u_{\infty}|^{2\lambda_1}|\meth^k\Psi_{I}|^2
+\int_{\mathcal{N}'_0(u_{\infty},u)}|u'|^{2\lambda_1}|\meth^k\Psi_{II}|^2 \\
&\quad+2\int_{\mathcal{D}_{u,v}}|u|^{2\lambda_1}
\langle\meth^k\Psi_{I},P_k\rangle
+2\int_{\mathcal{D}_{u,v}}|u|^{2\lambda_1}\langle\meth^k\Psi_{II},Q_k\rangle \\
&\lesssim\int_{\mathcal{N}_{u_{\infty}}(0,v)}Q^{-1}|u_{\infty}|^{2\lambda_1}|\meth^k\Psi_{I}|^2
+\int_{\mathcal{N}'_0(u_{\infty},u)}|u'|^{2\lambda_1}|\meth^k\Psi_{II}|^2 \\
&\quad+2\int_{\mathcal{D}_{u,v}}|u|^{2\lambda_1}
|\meth^k\Psi_{I}||P_k|
+2\int_{\mathcal{D}_{u,v}}|u|^{2\lambda_1}|\meth^k\Psi_{II}||Q_k|.
\end{align*}
Now, recalling the estimate stating
that~$||Q,Q^{-1}||_{L^{\infty}(\mathcal{S}_{u,v})}$ are not far
from~$1$, we can neglect term~$Q^{-1}$ in the above inequality. Then,
multiplying by~$a^{-k-2s_2(\Psi_I)}$ on both sides and making use of
the definitions
 \begin{align*}
   ||\phi||_{L^2_{sc}(\mathcal{S}_{u,v})}&\equiv a^{-s_2(\phi)}|u|^{2s_2(\phi)}||\phi||_{L^2(\mathcal{S}_{u,v})},\\
   ||\phi||_{L^1_{sc}(\mathcal{S}_{u,v})}&\equiv a^{-s_2(\phi)}|u|^{2s_2(\phi)-1}||\phi||_{L^1(\mathcal{S}_{u,v})},
\end{align*}
we can rewrite the inequality as
\begin{align*}
&\int_0^v|u|^{2\lambda_1-4s_2(\Psi_{I})-2k}||\meth^k\Psi_{I}||^2_{L^2_{sc}(\mathcal{S}_{u,v'})}
+\int_{u_{\infty}}^ua|u'|^{2\lambda_1-4s_2(\Psi_{I})-2k-2}||\meth^k\Psi_{II}||^2_{L^2_{sc}(\mathcal{S}_{u',v})} \\
&\lesssim\int_0^v|u_{\infty}|^{2\lambda_1-4s_2(\Psi_{I})-2k}||\meth^k\Psi_{I}||^2_{L^2_{sc}(\mathcal{S}_{u_{\infty},v'})}
+\int_{u_{\infty}}^ua|u'|^{2\lambda_1-4s_2(\Psi_{I})-2k-2}||\meth^k\Psi_{II}||^2_{L^2_{sc}(\mathcal{S}_{u',0})} \\
&\quad+2\int_0^v\int_{u_{\infty}}^ua|u'|^{2\lambda_1-4s_2(\Psi_{I})-2k-1}
|||\meth^k\Psi_{I}||P_k|||_{L^1_{sc}(\mathcal{S}_{u',v'})} \\
&\quad+2\int_0^v\int_{u_{\infty}}^ua|u'|^{2\lambda_1-4s_2(\Psi_{I})-2k-1}
|||\meth^k\Psi_{II}||Q_k|||_{L^2_{sc}(\mathcal{S}_{u',v'})}.
\end{align*}
Now, applying the H\"older inequality in scale-invariant norm and
observing that~$\lambda_1=\lambda_0+k-1$, we have that
\begin{align*}
&\int_0^v|u|^{2\lambda_0-4s_2(\Psi_{I})-2}||\meth^k\Psi_{I}||^2_{L^2_{sc}(\mathcal{S}_{u,v'})}
+\int_{u_{\infty}}^ua|u'|^{2\lambda_0-4s_2(\Psi_{I})-4}||\meth^k\Psi_{II}||^2_{L^2_{sc}(\mathcal{S}_{u',v})} \\
\lesssim&\int_0^v|u_{\infty}|^{2\lambda_0-4s_2(\Psi_{I})-2}||\meth^k\Psi_{I}||^2_{L^2_{sc}(\mathcal{S}_{u_{\infty},v'})}
+\int_{u_{\infty}}^ua|u'|^{2\lambda_0-4s_2(\Psi_{I})-4}||\meth^k\Psi_{II}||^2_{L^2_{sc}(\mathcal{S}_{u',0})} \\
&+2\int_0^v\int_{u_{\infty}}^ua|u'|^{2\lambda_0-4s_2(\Psi_{I})-4}
||\meth^k\Psi_{I}||_{L^2_{sc}(\mathcal{S}_{u',v'})}
||P_k||_{L^2_{sc}(\mathcal{S}_{u',v'})} \\
&+2\int_0^v\int_{u_{\infty}}^ua|u'|^{2\lambda_0-4s_2(\Psi_{I})-4}
||\meth^k\Psi_{II}||_{L^2_{sc}(\mathcal{S}_{u',v'})}
||Q_k||_{L^2_{sc}(\mathcal{S}_{u',v'})}.
\end{align*}

For~$(\Psi_I,\Psi_{II})\in
\{(\Psi_0,\tilde\Psi_1),(\tilde\Psi_1,\tilde\Psi_2),(\tilde\Psi_2,\tilde\Psi_3),(\tilde\Psi_3,\Psi_4),(\varphi_{0},\varphi_{1})
\}$, the Bianchi identities and
the equations for the scalar field satisfy that~$\lambda_0=2s_2(\Psi_{I})+1$. Consequently,
the above inequality takes the form
\begin{align*}
&\int_0^v||\meth^k\Psi_{I}||^2_{L^2_{sc}(\mathcal{S}_{u,v'})}
+\int_{u_{\infty}}^u\frac{a}{|u'|^2}||\meth^k\Psi_{II}||^2_{L^2_{sc}(\mathcal{S}_{u',v})} \\
\lesssim&\int_0^v||\meth^k\Psi_{I}||^2_{L^2_{sc}(\mathcal{S}_{u_{\infty},v'})}
+\int_{u_{\infty}}^u\frac{a}{|u'|^2}||\meth^k\Psi_{II}||^2_{L^2_{sc}(\mathcal{S}_{u',0})} \\
&+2\int_0^v\int_{u_{\infty}}^u\frac{a}{|u'|^2}
||\meth^k\Psi_{I}||_{L^2_{sc}(\mathcal{S}_{u',v'})}
||P_k||_{L^2_{sc}(\mathcal{S}_{u',v'})} \\
&+2\int_0^v\int_{u_{\infty}}^u\frac{a}{|u'|^2}
||\meth^k\Psi_{II}||_{L^2_{sc}(\mathcal{S}_{u',v'})}
||Q_k||_{L^2_{sc}(\mathcal{S}_{u',v'})}.
\end{align*}
Finally, recalling the definitions of the norms on the lightcone
---namely,
\begin{align*}
||\phi||^2_{L^2_{sc}(\mathcal{N}_u(0,v))}&
\equiv\int_0^v||\phi||^2_{L^2_{sc}(\mathcal{S}_{u,v'})}\mathrm{d}v',\\
||\phi||^2_{L^2_{sc}(\mathcal{N}'_v(u_{\infty},u))}&
\equiv\int_{u_{\infty}}^u\frac{a}{|u'|^2}||\phi||^2_{L^2_{sc}(\mathcal{S}_{u',v})}\mathrm{d}u',
\end{align*}
we conclude that 
\begin{align*}
&||\meth^k\Psi_{I}||^2_{L^2_{sc}(\mathcal{N}_u(0,v))}
+||\meth^k\Psi_{II}||^2_{L^2_{sc}(\mathcal{N}'_v(u_{\infty},u))} \\
&\lesssim||\meth^k\Psi_{I}||^2_{L^2_{sc}(\mathcal{N}_{u_{\infty}}(0,v))}
+||\meth^k\Psi_{II}||^2_{L^2_{sc}(\mathcal{N}'_0(u_{\infty},u))} \\
&\quad+2\int_0^v\int_{u_{\infty}}^u\frac{a}{|u'|^2}
||\meth^k\Psi_{I}||_{L^2_{sc}(\mathcal{S}_{u',v'})}
||P_k||_{L^2_{sc}(\mathcal{S}_{u',v'})} \\
&\quad+2\int_0^v\int_{u_{\infty}}^u\frac{a}{|u'|^2}
||\meth^k\Psi_{II}||_{L^2_{sc}(\mathcal{S}_{u',v'})}
||Q_k||_{L^2_{sc}(\mathcal{S}_{u',v'})}.
\end{align*}

\subsection{Energy estimates for the auxiliary field}

\begin{proposition}
\label{EnergyEstimatevarphi01}
For~$0\leq k\leq11$,  one has that 
\begin{align*}
\frac{1}{a}||(a^{\frac{1}{2}})^{k-1}\mathcal{D}^k\varphi_0||^2_{L^2_{sc}(\mathcal{N}_u(0,v))}
+\frac{1}{a}||(a^{\frac{1}{2}})^{k-1}\mathcal{D}^k\varphi_1||^2_{L^2_{sc}(\mathcal{N}'_v(u_{\infty},u))} 
\leq\mathcal{I}^2_0+\frac{1}{a^{\frac{1}{4}}}.
\end{align*}
\end{proposition}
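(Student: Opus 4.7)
The plan is to instantiate the general energy scheme of Section~\ref{subsection:preliminary_energy} on the pair $(\Psi_I,\Psi_{II})=(\varphi_0,\bar\varphi_1)$. The scalar field equations~\eqref{EOMMasslessScalarStT-weight1}--\eqref{EOMMasslessScalarStT-weight2} put this pair into the Hodge-type form
\begin{align*}
\mthorn'\varphi_0+\mu\varphi_0-\meth\bar\varphi_1&=P_0,\\
\mthorn\bar\varphi_1-\meth'\varphi_0&=Q_0,
\end{align*}
with $P_0=\ulomega\varphi_0+\rho\varphi_2-\bar\tau\varphi_1-\tau\bar\varphi_1$ and $Q_0=(\pi-\bar\tau)\varphi_0+\bar\sigma\varphi_1+\rho\bar\varphi_1$. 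Here $\lambda_0=1$, $s_2(\varphi_0)=0$, and the compatibility relation $\lambda_0=2s_2(\Psi_I)+1$ holds, so the weighted inequality derived in Section~\ref{subsection:preliminary_energy} applies verbatim after commuting $\meth^k$ through and absorbing the commutator terms into $P_k,Q_k$ as in the general scheme.

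First I would apply the master inequality to obtain
\begin{align*}
&\|\meth^k\varphi_0\|^2_{L^2_{sc}(\mathcal{N}_u(0,v))}+\|\meth^k\bar\varphi_1\|^2_{L^2_{sc}(\mathcal{N}'_v(u_\infty,u))}\\
&\quad\lesssim\|\meth^k\varphi_0\|^2_{L^2_{sc}(\mathcal{N}_{u_\infty}(0,v))}+\|\meth^k\bar\varphi_1\|^2_{L^2_{sc}(\mathcal{N}'_0(u_\infty,u))}+\mathcal{E}_P+\mathcal{E}_Q,
\end{align*}
where $\mathcal{E}_P$, $\mathcal{E}_Q$ are the bulk integrals against $P_k,Q_k$. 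The Minkowskian data on $\mathcal{N}'_\star$ kills the $\bar\varphi_1$ boundary term, while the $\varphi_0$ boundary term, after multiplying through by $a^{k-1}/a$ to reach the norm in the statement, is controlled by $C\mathcal{I}_0^2$ through the definition of $\mathcal{I}_0$. Repeating the argument for every string $\mathcal{D}^{k_i}$ and summing yields the same inequality with $\mathcal{D}^k$ in place of $\meth^k$.

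The main work is bounding $\mathcal{E}_P,\mathcal{E}_Q$ by $a^{-1/4}$. Cauchy--Schwarz reduces each cross term to $\int_0^v\int_{u_\infty}^u\frac{a}{|u'|^2}\|P_k\|_{L^2_{sc}(\mathcal{S})}^2$ plus a copy of the left-hand side that is absorbed by Gr\"onwall. Every summand in $P_k,Q_k$ has the schematic structure $\mathcal{D}^{j_1}\Gamma(\tau,\pi)^{j_2}\cdot\mathcal{D}^{j_3}\Gamma\cdot\mathcal{D}^{j_4}\bm{\varphi}$ or $\mathcal{D}^{j_1}\Gamma^{j_2}\cdot\mathcal{D}^{j_3}\Gamma(\Timu,\lambda,\ldots)\cdot\mathcal{D}^{j_4}\varphi_{0,1,2}$. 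When no factor carries more than $10$ derivatives, H\"older in scale-invariant norms with the highest-derivative factor in $L^2$ and the rest in $L^\infty$, combined with the bootstrap $\bmGamma\le\mathcal{O}$ and the $L^2(\mathcal{S})$ estimates of Section~\ref{L2estimate}, gives a gain of the form $a^{j_2/2}|u|^{-j_2-\alpha}\mathcal{O}^{p}$ for some $\alpha\ge 1$; the largeness assumption $\mathcal{O}^{20}\le a^{1/16}$ together with $|u|\ge a$ then collapses all such contributions into $a^{-1/4}$.

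The main obstacle is the top-order case $k=11$, where $P_{11},Q_{11}$ contain factors such as $\meth^{11}\Gamma(\Timu,\lambda)\cdot\varphi_0$ and $\meth^{11}\Gamma(\tau,\pi,\rho,\sigma)\cdot\varphi_{0,1}$ which cannot be estimated pointwise on $\mathcal{S}$. Here the only available control is the elliptic bound of Section~\ref{EllipticEstimate}, giving $\int_{u_\infty}^u\frac{a^3}{|u'|^4}\|a^5\mathcal{D}^{11}(\Timu,\lambda)\|^2_{L^2_{sc}(\mathcal{S})}\lesssim(\bm\varphi+\underline{\bm\varphi}+\underline{\bm\Psi}[\TiPsi_3]+1)^2$ along $\mathcal{N}'_v$, together with the analogous $\mathcal{N}_u$-bounds for $\rho,\sigma,\tau,\pi$. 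Applying H\"older in $u'$ against $\|\varphi_0\|_{L^\infty_{sc}}\lesssim a^{1/2}/|u|$ and its companions, and using once more $\mathcal{O}^{20}\le a^{1/16}$, converts these terms into $a^{-1/4}$ (up to harmless $\mathcal{O}$-powers already controlled by the other bootstrap hypotheses on $\bm\Psi$ and $\bm\varphi$). Assembling the boundary term and the two bulk estimates, and multiplying through by $a^{k-1}/a$, yields exactly
\begin{align*}
\frac{1}{a}\|(a^{1/2})^{k-1}\mathcal{D}^k\varphi_0\|^2_{L^2_{sc}(\mathcal{N}_u(0,v))}+\frac{1}{a}\|(a^{1/2})^{k-1}\mathcal{D}^k\varphi_1\|^2_{L^2_{sc}(\mathcal{N}'_v(u_\infty,u))}\leq\mathcal{I}_0^2+a^{-1/4}.
\end{align*}
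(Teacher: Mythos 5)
Your proposal is correct and follows essentially the same route as the paper's proof: the pair $(\varphi_0,\varphi_1)$ is fed into the master energy inequality of Section~\ref{subsection:preliminary_energy} with $\lambda_0=1=2s_2(\varphi_0)+1$, the $\varphi_1$ boundary term vanishes by the Minkowskian data while the $\varphi_0$ boundary term is absorbed into $\mathcal{I}_0^2$, and the bulk integrals are handled by Cauchy--Schwarz, the scale-invariant H\"older inequalities, the $L^2(\mathcal{S})$ estimates, and — at $k=11$ — the top-order elliptic estimates for $\Timu$, $\lambda$, $\ulomega$, $\rho$, $\sigma$, $\tau$, $\pi$ from Section~\ref{EllipticEstimate}, with the final collapse to $a^{-1/4}$ coming from $\mathcal{O}^{20}\leq a^{1/16}$ and $|u|\geq a$. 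This matches the paper's decomposition of the error terms $M$ and $N$ step for step.
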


\begin{proof}
  When~$(\Psi_I,\Psi_{II})=(\varphi_{0},\varphi_{1})$, we make use of
  the equations~\eqref{EOMMasslessScalarStT-weight1}
  and~\eqref{EOMMasslessScalarStT-weight1}
\begin{align*}
\mthorn'\varphi_{0}-\meth\bar\varphi_{1}&=(\ulomega-\mu)\varphi_{0}
+\rho\varphi_{2}-\bar\tau\varphi_{1}-\tau\bar\varphi_{1},  \\
\mthorn\bar\varphi_{1}-\meth'\varphi_{0}&=(\pi-\bar\tau)\varphi_{0}
+\bar\sigma\varphi_{1}+\rho\bar\varphi_{1}.
\end{align*}
Now, let 
\begin{align*}
P_k&\equiv\Gamma(\ulomega,\lambda)\meth^k\varphi_0+\tau\meth^k\varphi_1+\rho\meth^k\varphi_2
+\varphi_0\meth^{k}\Gamma(\ulomega,\Timu)+\varphi_1\meth^k\tau+\varphi_2\meth^k\rho \\
&+\sum_{i=1}^{k-1}\meth^i\Gamma(\ulomega,\tau,\rho,\Timu,\lambda)\meth^{k-1}\bm\varphi(\varphi_{0,1,2}).
\end{align*}
\begin{align*}
Q_k&\equiv\Gamma(\tau,\pi)\meth^k\varphi_0+\varphi_0\meth^k\Gamma(\tau,\pi)
+\Gamma(\rho,\sigma)\meth^k\varphi_1+\varphi_1\meth^k\Gamma(\rho,\sigma) \\
&+\sum_{i_1+i_2+i_3+i_4=k,i_3+i_4<k}\meth^{i_1}\Gamma(\tau,\pi)^{i_2}\meth^{i_3}
\Gamma(\tau,\pi,\rho,\sigma)\meth^{i_4}(\varphi_{0},\varphi_{1})
+\sum_{i_1+i_2=k-1}\meth^{i_1}K\meth^{i_2}\varphi_0.
\end{align*}

Making use of the energy inequality we have that
\begin{align*}
&||\meth^k\varphi_0||^2_{L^2_{sc}(\mathcal{N}_u(0,v))}
+||\meth^k\varphi_1||^2_{L^2_{sc}(\mathcal{N}'_v(u_{\infty},u))} \\
&\lesssim||\meth^k\varphi_0||^2_{L^2_{sc}(\mathcal{N}_{u_{\infty}}(0,v))}
+||\meth^k\varphi_1||^2_{L^2_{sc}(\mathcal{N}'_0(u_{\infty},u))} \\
&\quad+2\int_0^v\int_{u_{\infty}}^u\frac{a}{|u'|^2}
||\meth^k\varphi_0||_{L^2_{sc}(\mathcal{S}_{u',v'})}
||P_k||_{L^2_{sc}(\mathcal{S}_{u',v'})} \\
&\quad+2\int_0^v\int_{u_{\infty}}^u\frac{a}{|u'|^2}
||\meth^k\varphi_1||_{L^2_{sc}(\mathcal{S}_{u',v'})}
||Q_k||_{L^2_{sc}(\mathcal{S}_{u',v'})}.
\end{align*}
so that multiplying both sides by~$a^{\frac{2k-2}{2}}$ on both hands
side we find that
\begin{align*}
&||(a^{\frac{1}{2}})^{k-1}\meth^k\varphi_0||^2_{L^2_{sc}(\mathcal{N}_u(0,v))}
+||(a^{\frac{1}{2}})^{k-1}\meth^k\varphi_1||^2_{L^2_{sc}(\mathcal{N}'_v(u_{\infty},u))} \\
&\lesssim||(a^{\frac{1}{2}})^{k-1}\meth^k\varphi_0||^2_{L^2_{sc}(\mathcal{N}_{u_{\infty}}(0,v))}
+||(a^{\frac{1}{2}})^{k-1}\meth^k\varphi_1||^2_{L^2_{sc}(\mathcal{N}'_0(u_{\infty},u))} \\
&\quad+2\int_0^v\int_{u_{\infty}}^u\frac{a}{|u'|^2}
||(a^{\frac{1}{2}})^{k-1}\meth^k\varphi_0||_{L^2_{sc}(\mathcal{S}_{u',v'})}
||(a^{\frac{1}{2}})^{k-1}P_k||_{L^2_{sc}(\mathcal{S}_{u',v'})} \\
&\quad+2\int_0^v\int_{u_{\infty}}^u\frac{a}{|u'|^2}
||(a^{\frac{1}{2}})^{k-1}\meth^k\varphi_1||_{L^2_{sc}(\mathcal{S}_{u',v'})}
||(a^{\frac{1}{2}})^{k-1}Q_k||_{L^2_{sc}(\mathcal{S}_{u',v'})} \\
&\leq||(a^{\frac{1}{2}})^{k-1}\meth^k\varphi_0||^2_{L^2_{sc}(\mathcal{N}_{u_{\infty}}(0,v))}
+||(a^{\frac{1}{2}})^{k-1}\meth^k\varphi_1||^2_{L^2_{sc}(\mathcal{N}'_0(u_{\infty},u))} 
+M+N,
\end{align*}
where
\begin{align*}
M&\equiv2\int_0^v\int_{u_{\infty}}^u\frac{a}{|u'|^2}
||(a^{\frac{1}{2}})^{k-1}\meth^k\varphi_0||_{L^2_{sc}(\mathcal{S}_{u',v'})}
||(a^{\frac{1}{2}})^{k-1}P_k||_{L^2_{sc}(\mathcal{S}_{u',v'})}, \\
N&\equiv2\int_0^v\int_{u_{\infty}}^u\frac{a}{|u'|^2}
||(a^{\frac{1}{2}})^{k-1}\meth^k\varphi_1||_{L^2_{sc}(\mathcal{S}_{u',v'})}
||(a^{\frac{1}{2}})^{k-1}Q_k||_{L^2_{sc}(\mathcal{S}_{u',v'})}. \\
\end{align*}
Finally, multiplying by~$a^{-1}$ we are led to
\begin{align}
\label{EstimateScalar01Mid}
&\frac{1}{a}||(a^{\frac{1}{2}})^{k-1}\meth^k\varphi_0||^2_{L^2_{sc}(\mathcal{N}_u(0,v))}
+\frac{1}{a}||(a^{\frac{1}{2}})^{k-1}\meth^k\varphi_1||^2_{L^2_{sc}(\mathcal{N}'_v(u_{\infty},u))} \nonumber \\
&\lesssim\frac{1}{a}||(a^{\frac{1}{2}})^{k-1}\mathcal{D}^k\varphi_0||^2_{L^2_{sc}(\mathcal{N}_{u_{\infty}}(0,v))}
+\frac{1}{a}||(a^{\frac{1}{2}})^{k-1}\mathcal{D}^k\varphi_1||^2_{L^2_{sc}(\mathcal{N}'_0(u_{\infty},u))} 
+\frac{1}{a}(M+N).
\end{align}
From here one the proof requires a number of intermediate steps.

\smallskip
\noindent
\textbf{Step 1}: Estimate~$M$.

We have that
\begin{align*}
M&\leq2\left(\int_0^v\int_{u_{\infty}}^u\frac{a}{|u'|^2}
||(a^{\frac{1}{2}})^{k-1}\meth^k\varphi_0||^2_{L^2_{sc}(\mathcal{S}_{u',v'})}\right)^{\frac{1}{2}}
\left(\int_0^v\int_{u_{\infty}}^u\frac{a}{|u'|^2}||(a^{\frac{1}{2}})^{k-1}P_k||^2_{L^2_{sc}(\mathcal{S}_{u',v'})}\right)^{\frac{1}{2}} \\
&=2\left(\int_{u_{\infty}}^u\frac{a}{|u'|^2}
||(a^{\frac{1}{2}})^{k-1}\meth^k\varphi_0||^2_{L^2_{sc}(\mathcal{N}_{u'}(0,v))}\right)^{\frac{1}{2}}J^{\frac{1}{2}}
\leq\frac{a^{\frac{1}{2}}}{|u|^{\frac{1}{2}}}a^{\frac{1}{2}}\bm{\varphi}[\varphi_0]J^{\frac{1}{2}},
\end{align*}
where 
\begin{align*}
  J\equiv\int_0^v\int_{u_{\infty}}^u
  \frac{a}{|u'|^2}||(a^{\frac{1}{2}})^{k-1}P_k||^2_{L^2_{sc}(\mathcal{S}_{u',v'})}.
\end{align*}
Now, substituting the definition of~$P_k$ into~$J$ we have that
\begin{align*}
J&\leq\int_0^v\int_{u_{\infty}}^u\frac{a}{|u'|^2}
||(a^{\frac{1}{2}})^{k-1}\Gamma(\ulomega,\lambda)\mathcal{D}^k\varphi_0||^2_{L^2_{sc}(\mathcal{S}_{u',v'})} 
+\int_0^v\int_{u_{\infty}}^u\frac{a}{|u'|^2}
||(a^{\frac{1}{2}})^{k-1}\tau\mathcal{D}^k\varphi_1||^2_{L^2_{sc}(\mathcal{S}_{u',v'})} \\
&+\int_0^v\int_{u_{\infty}}^u\frac{a}{|u'|^2}
||(a^{\frac{1}{2}})^{k-1}\rho\mathcal{D}^k\varphi_2||^2_{L^2_{sc}(\mathcal{S}_{u',v'})} 
+\int_0^v\int_{u_{\infty}}^u\frac{a}{|u'|^2}
||(a^{\frac{1}{2}})^{k-1}\varphi_0\mathcal{D}^k\Gamma(\ulomega,\Timu)||^2_{L^2_{sc}(\mathcal{S}_{u',v'})} \\
&+\int_0^v\int_{u_{\infty}}^u\frac{a}{|u'|^2}
||(a^{\frac{1}{2}})^{k-1}\varphi_1\mathcal{D}^k\tau||^2_{L^2_{sc}(\mathcal{S}_{u',v'})} 
+\int_0^v\int_{u_{\infty}}^u\frac{a}{|u'|^2}
||(a^{\frac{1}{2}})^{k-1}\varphi_2\mathcal{D}^k\rho||^2_{L^2_{sc}(\mathcal{S}_{u',v'})} \\
&+\sum_{i=1}^{k-1}\int_0^v\int_{u_{\infty}}^u\frac{a}{|u'|^2}
||(a^{\frac{1}{2}})^{k-1}\mathcal{D}^{i}\Gamma(\lambda,...)\mathcal{D}^{k-i}\bm{\varphi}||^2_{L^2_{sc}(\mathcal{S}_{u',v'})}
=I_1+...+I_7.
\end{align*}

For the term~$I_1$ we have that
\begin{align*}
I_1&\leq\int_0^v\int_{u_{\infty}}^u\frac{a}{|u'|^2}\frac{1}{|u'|^2}||\Gamma(\ulomega,\lambda)||^2_{L^{\infty}_{sc}(\mathcal{S}_{u',v'})} 
||(a^{\frac{1}{2}})^{k-1}\mathcal{D}^k\varphi_0||^2_{L^2_{sc}(\mathcal{S}_{u',v'})} \\
&\leq\int_{u_{\infty}}^u\frac{a}{|u'|^2}\frac{1}{|u'|^2}(\bmGamma(\ulomega)_{0,\infty}^2+\frac{|u'|^2}{a}\bmGamma(\lambda)_{0,\infty}^2)
||(a^{\frac{1}{2}})^{k-1}\mathcal{D}^k\varphi_0||^2_{L^2_{sc}(\mathcal{N}_{u'}(0,v))} \\
&\leq\int_{u_{\infty}}^u\frac{a}{|u'|^2}\bm{\varphi}[\varphi_0]^2
\leq\frac{a}{|u|}\bm{\varphi}[\varphi_0]^2.
\end{align*}
For~$I_2$ we have that 
\begin{align*}
I_2&\leq\int_0^v\int_{u_{\infty}}^u\frac{a}{|u'|^2}\frac{1}{|u'|^2}||\tau||^2_{L^{\infty}_{sc}(\mathcal{S}_{u',v'})} 
||(a^{\frac{1}{2}})^{k-1}\mathcal{D}^k\varphi_1||^2_{L^2_{sc}(\mathcal{S}_{u',v'})} \\
&\leq\int_0^v\frac{1}{|u|^2}\bmGamma(\tau)_{0,\infty}^2
||(a^{\frac{1}{2}})^{k-1}\mathcal{D}^k\varphi_1||^2_{L^2_{sc}(\mathcal{N}'_{v'}(u_{\infty},u))} \\
&\leq\int_0^v\frac{a}{|u|^2}\bmGamma(\tau)_{0,\infty}^2\underline{\bm{\varphi}}[\varphi_1]^2
\leq\frac{a}{|u|^2}\mathcal{O}^2(\bm{\Psi}[\TiPsi_1]+1)^2
\leq\frac{a\mathcal{O}^4}{|u|^2}.
\end{align*}
For~$I_3$ we have that
\begin{align*}
I_3&\leq\int_0^v\int_{u_{\infty}}^u\frac{a}{|u'|^2}\frac{1}{|u'|^2}||\rho||^2_{L^{\infty}_{sc}(\mathcal{S}_{u',v'})} 
||(a^{\frac{1}{2}})^{k-1}\mathcal{D}^k\varphi_2||^2_{L^2_{sc}(\mathcal{S}_{u',v'})} \\
&\leq\int_0^v\frac{1}{a}\bmGamma(\rho)_{0,\infty}^2
||\frac{a^{\frac{1}{2}}}{|u'|}(a^{\frac{1}{2}})^{k-1}\mathcal{D}^k\varphi_2||^2_{L^2_{sc}(\mathcal{N}'_{v'}(u_{\infty},u))} 
\leq\frac{1}{a}\underline{\bm{\varphi}}[\varphi_2]^2.
\end{align*}
For~$I_4$ we have
\begin{align*}
I_4&\leq\int_0^v\int_{u_{\infty}}^u\frac{a}{|u'|^2}\frac{1}{|u'|^2}||\varphi_0||^2_{L^{\infty}_{sc}(\mathcal{S}_{u',v'})} 
||(a^{\frac{1}{2}})^{k-1}\mathcal{D}^k\ulomega||^2_{L^2_{sc}(\mathcal{S}_{u',v'})} \\
&\quad+\int_0^v\int_{u_{\infty}}^u\frac{a}{|u'|^2}\frac{1}{|u'|^2}||\varphi_0||^2_{L^{\infty}_{sc}(\mathcal{S}_{u',v'})} 
||(a^{\frac{1}{2}})^{k-1}\mathcal{D}^k\Timu||^2_{L^2_{sc}(\mathcal{S}_{u',v'})} \\
&\leq\int_0^v\frac{a}{|u|^2}\bmGamma(\varphi_0)_{0,\infty}^2
||(a^{\frac{1}{2}})^{k-1}\mathcal{D}^k\ulomega||^2_{L^2_{sc}(\mathcal{N}_{v'}(u_{\infty},u))} \\
&\quad+\int_0^v\frac{1}{a}\bmGamma(\varphi_0)_{0,\infty}^2
||(a^{\frac{1}{2}})^{k-1}\mathcal{D}^k\Timu||^2_{L^2_{sc}(\mathcal{N}_{v'}(u_{\infty},u))} \\
&\leq\frac{1}{a}(\bm\varphi+\underline{\bm\varphi}
+\underline{\bm\Psi}[\TiPsi_3]+1)^2.
\end{align*}
For~$I_5$ one sees that 
\begin{align*}
I_5&\leq\int_0^v\int_{u_{\infty}}^u\frac{a}{|u'|^2}\frac{1}{|u'|^2}||\varphi_1||^2_{L^{\infty}_{sc}(\mathcal{S}_{u',v'})} 
||(a^{\frac{1}{2}})^{k-1}\mathcal{D}^k\tau||^2_{L^2_{sc}(\mathcal{S}_{u',v'})} \\
&\leq\int_0^v\frac{a}{|u|^2}\bmGamma(\varphi_1)_{0,\infty}^2
||(a^{\frac{1}{2}})^{k-1}\mathcal{D}^k\tau||^2_{L^2_{sc}(\mathcal{N}'_{v'}(u_{\infty},u))} \\
&\lesssim\frac{a}{|u|^2}(\bm\Psi[\Psi_0]+\underline{\bm\Psi}[\TiPsi_2]+\underline{\bm\varphi}[\varphi_1]+1)^2.
\end{align*}
For~$I_6$ we have that 
\begin{align*}
I_6&\leq\int_0^v\int_{u_{\infty}}^u\frac{a}{|u'|^2}\frac{1}{|u'|^2}||\varphi_2||^2_{L^{\infty}_{sc}(\mathcal{S}_{u',v'})} 
||(a^{\frac{1}{2}})^{k-1}\mathcal{D}^k\rho||^2_{L^2_{sc}(\mathcal{S}_{u',v'})} \\
&\leq\int_{u_{\infty}}^u\frac{a}{|u'|^2}\frac{1}{|u'|^2}\frac{|u'|^2}{a}\mathcal{O}^2
||(a^{\frac{1}{2}})^{k-1}\mathcal{D}^k\rho||^2_{L^2_{sc}(\mathcal{N}_{u'}(0,v))} 
\leq\frac{1}{|u|}\mathcal{O}^4.
\end{align*}
Finally, for~$I_7$ we have that
\begin{align*}
I_7&\leq\sum_{i=1}^{k-1}\int_0^v\int_{u_{\infty}}^u\frac{1}{|u'|^2}\frac{1}{|u'|^2}
||(a^{\frac{1}{2}}\mathcal{D})^{i}\Gamma(\lambda,...)||^2\times
||a^{\frac{1}{2}}\mathcal{D})^{k-i}\bm{\varphi}||^2 \\
&\leq\int_{u_{\infty}}^u\frac{1}{|u'|^4}\frac{|u'|^2}{a}\mathcal{O}^2a\mathcal{O}^2
\leq\frac{1}{|u|}\mathcal{O}^4.
\end{align*}

Combining the estimates for~$I_1,\ldots,I_7$ we conclude that
\begin{align*}
J&\lesssim\frac{a}{|u|}\bm{\varphi}[\varphi_0]^2+\frac{\mathcal{O}^4}{a}
\end{align*}
and then 
\begin{align*}
M\lesssim\frac{a^{\frac{3}{2}}}{|u|}\bm{\varphi}[\varphi_0]^2
+\mathcal{O}^2\bm{\varphi}[\varphi_0].
\end{align*}

\smallskip
\noindent
\textbf{Step 2}: Estimate~$N$.

We have
\begin{align*}
N&\leq2\left(\int_0^v\int_{u_{\infty}}^u\frac{a}{|u'|^2}
||(a^{\frac{1}{2}})^{k-1}\meth^k\varphi_1||^2_{L^2_{sc}(\mathcal{S}_{u',v'})}\right)^{\frac{1}{2}}
\left(\int_0^v\int_{u_{\infty}}^u\frac{a}{|u'|^2}||(a^{\frac{1}{2}})^{k-1}Q_k||^2_{L^2_{sc}(\mathcal{S}_{u',v'})}\right)^{\frac{1}{2}} \\
&=2\left(\int_0^v
||(a^{\frac{1}{2}})^{k-1}\meth^k\varphi_1||^2_{L^2_{sc}(\mathcal{N}'_{v'}(u_{\infty},u))}\right)^{\frac{1}{2}}H^{\frac{1}{2}}
\lesssim a^{\frac{1}{2}}\bm{\varphi}[\varphi_1]H^{\frac{1}{2}},
\end{align*}
where 
\begin{align*}
H\equiv\int_0^v\int_{u_{\infty}}^u\frac{a}{|u'|^2}||(a^{\frac{1}{2}})^{k-1}Q_k||^2_{L^2_{sc}(\mathcal{S}_{u',v'})}.
\end{align*}
Substituting the definition of~$Q_k$ into~$H$ we obtain
\begin{align*}
&H\leq\int_0^v\int_{u_{\infty}}^u\frac{a}{|u'|^2}
||(a^{\frac{1}{2}})^{k-1}\Gamma(\tau,\pi)\mathcal{D}^k\varphi_0||^2_{L^2_{sc}(\mathcal{S}_{u',v'})} 
+\int_0^v\int_{u_{\infty}}^u\frac{a}{|u'|^2}
||(a^{\frac{1}{2}})^{k-1}\Gamma(\rho,\sigma)\mathcal{D}^k\varphi_1||^2_{L^2_{sc}(\mathcal{S}_{u',v'})} \\
&\quad+\int_0^v\int_{u_{\infty}}^u\frac{a}{|u'|^2}
||(a^{\frac{1}{2}})^{k-1}\varphi_0\mathcal{D}^k\Gamma(\tau,\pi)||^2_{L^2_{sc}(\mathcal{S}_{u',v'})} 
+\int_0^v\int_{u_{\infty}}^u\frac{a}{|u'|^2}
||(a^{\frac{1}{2}})^{k-1}\varphi_1\mathcal{D}^k\Gamma(\rho,\sigma)||^2_{L^2_{sc}(\mathcal{S}_{u',v'})} \\
&\quad+\sum_{i_1+i_2+i_3+i_4=k,i_3+i_4<k}\int_0^v\int_{u_{\infty}}^u\frac{a}{|u'|^2}
||(a^{\frac{1}{2}})^{k-1}\mathcal{D}^{i_1}\Gamma(\pi,\tau)^{i_2}
\mathcal{D}^{i_3}\Gamma(\sigma...)\mathcal{D}^{i_4}\bm{\varphi}||^2_{L^2_{sc}(\mathcal{S}_{u',v'})} \\
&\quad+\sum_{i_1+i_2=k-1}\int_0^v\int_{u_{\infty}}^u\frac{a}{|u'|^2}
||(a^{\frac{1}{2}})^{k-1}\mathcal{D}^{i_1}K\mathcal{D}^{i_2}\varphi_0||^2_{L^2_{sc}(\mathcal{S}_{u',v'})} 
=I_1+...+I_6.
\end{align*}

For the term~$I_1$ we have that
\begin{align*}
I_1&\leq\int_0^v\int_{u_{\infty}}^u\frac{a}{|u'|^2}\frac{1}{|u'|^2}||\Gamma(\tau,\pi)||^2_{L^{\infty}_{sc}(\mathcal{S}_{u',v'})} 
||(a^{\frac{1}{2}})^{k-1}\mathcal{D}^k\varphi_0||^2_{L^2_{sc}(\mathcal{S}_{u',v'})} \\
&\leq\int_{u_{\infty}}^u\frac{a}{|u'|^2}\frac{1}{|u'|^2}\bmGamma(\tau,\pi)_{0,\infty}^2
||(a^{\frac{1}{2}})^{k-1}\mathcal{D}^k\varphi_0||^2_{L^2_{sc}(\mathcal{N}_{u'}(0,v))} \\
&\leq\int_{u_{\infty}}^u\frac{a^2}{|u'|^4}\bmGamma(\tau,\pi)_{0,\infty}^2\bm{\varphi}[\varphi_0]^2
\leq\frac{a^2}{|u|^3}(\bm{\Psi}[\TiPsi_1]+\underline{\bm{\Psi}}[\TiPsi_3]+1)^2\bm{\varphi}[\varphi_0]^2.
\end{align*}
For~$I_2$ we have that 
\begin{align*}
I_2&\leq\int_0^v\int_{u_{\infty}}^u\frac{a}{|u'|^2}\frac{1}{|u'|^2}||\Gamma(\rho,\sigma)||^2_{L^{\infty}_{sc}(\mathcal{S}_{u',v'})} 
||(a^{\frac{1}{2}})^{k-1}\mathcal{D}^k\varphi_1||^2_{L^2_{sc}(\mathcal{S}_{u',v'})} \\
&\leq\int_0^v\frac{1}{|u'|^2}(\bmGamma(\rho)_{0,\infty}^2+a\bmGamma(\sigma)_{0,\infty}^2)
||(a^{\frac{1}{2}})^{k-1}\mathcal{D}^k\varphi_1||^2_{L^2_{sc}(\mathcal{N}'_{v'}(u_{\infty},u))} \\
&\leq\int_0^v\frac{a}{|u'|^2}\bmGamma(\sigma)_{0,\infty}^2\underline{\bm{\varphi}}[\varphi_1]^2
\leq\frac{a}{|u|^2}\mathcal{O}^2(\bm{\Psi}[\Psi_0]+1)^2.
\end{align*}
For~$I_3$ and~$I_4$ we see that 
\begin{align*}
I_3+I_4&\leq\int_0^v\int_{u_{\infty}}^u\frac{a}{|u'|^2}\frac{1}{|u'|^2}||\varphi_{0,1}||^2_{L^{\infty}_{sc}(\mathcal{S}_{u',v'})} 
||(a^{\frac{1}{2}})^{k-1}\mathcal{D}^k\Gamma(\rho,\sigma)||^2_{L^2_{sc}(\mathcal{S}_{u',v'})} \\
&\leq\int_{u_{\infty}}^u\frac{a}{|u'|^4}a\bmGamma(\varphi_0,\varphi_1)_{0,\infty}^2
||(a^{\frac{1}{2}})^{k-1}\mathcal{D}^k\Gamma(\rho,\sigma)||^2_{L^2_{sc}(\mathcal{N}_{u'}(0,v))} \\
&\leq\frac{a^2}{|u|^3}(\bm{\varphi}[\varphi_0]+\bm\Psi[\Psi_0]+\bm\Psi[\TiPsi_1]+1)^2.
\end{align*}
For~$I_5$ we have that 
\begin{align*}
I_5&=\sum_{i_1+...+i_4=k,i_3+i_4<k}\int_{u_{\infty}}^u\frac{a}{|u'|^2}
||a^{\frac{k-1}{2}}\mathcal{D}^{j_1}\Gamma...\mathcal{D}^{j_{i_2}}\Gamma
\mathcal{D}^{i_3}\Gamma(\sigma,...)\mathcal{D}^{i_4}\bm{\varphi}||^2 \\
&\leq\int_{u_{\infty}}^u\frac{1}{|u'|^2}\left(\frac{(a^{\frac{1}{2}})^{i_2}}{|u|^{i_2+1}}||(a^{\frac{1}{2}}\mathcal{D})^{j_1}\Gamma||...
||(a^{\frac{1}{2}}\mathcal{D})^{j_{i_2}}\Gamma||\times
||(a^{\frac{1}{2}}\mathcal{D})^{i_3}\Gamma||\times||(a^{\frac{1}{2}}\mathcal{D})^{i_4}\bm{\varphi}|| \right)^2 \\
&\leq\int_{u_{\infty}}^u\frac{1}{|u'|^2}
(\frac{(a^{\frac{1}{2}})^{i_2}}{|u|^{i_2+1}}\mathcal{O}^{i_2}a^{\frac{1}{2}}\mathcal{O}a^{\frac{1}{2}}\mathcal{O})^2
=\int_{u_{\infty}}^u\frac{1}{|u'|^2}\frac{a^{i_2+2}}{|u'|^{2i_2+2}}\mathcal{O}^{2(i_2+2)}
\leq\int_{u_{\infty}}^u\frac{1}{|u'|^2}\leq\frac{1}{|u|}.
\end{align*}
For~$I_6$ we have that 
\begin{align*}
I_6&\leq\sum_{i_1+i_2=k-1}\int_{u_{\infty}}^u\frac{1}{|u'|^2}\frac{1}{|u'|^2}||(a^{\frac{1}{2}}\mathcal{D})^{i_2}K||^2\times
||(a^{\frac{1}{2}}\mathcal{D})^{i_2}\varphi_0||^2\\
&\leq\int_{u_{\infty}}^u\frac{1}{|u'|^4}a\mathcal{O}^2 
\leq\frac{1}{|u|}.
\end{align*}

Finally, combining the estimates for~$I_1,\ldots,I_6$ we conclude that 
\begin{align*}
H\lesssim\frac{a^2}{|u|^3}(\bm{\Psi}[\TiPsi_1]+\underline{\bm{\Psi}}[\TiPsi_3]+1)^2\bm{\varphi}[\varphi_0]^2
+\frac{1}{|u|}\mathcal{O}^2(\underline{\bm\Psi}+\bm\Psi+\bm\varphi+1)^2+\frac{1}{|u|},
\end{align*}
and that 
\begin{align*}
N\lesssim\frac{a^{\frac{1}{2}}}{|u|^{\frac{1}{2}}}\bm{\varphi}[\varphi_0]\bm{\varphi}[\varphi_1](\underline{\bm\Psi}+\bm\Psi+\bm\varphi+1).
\end{align*}

\smallskip
\noindent
\textbf{Step 3}: Summary.

From the analysis in Steps 1 and 2 we conclude that 
\begin{align*}
&\frac{1}{a}||(a^{\frac{1}{2}})^{k-1}\meth^k\varphi_0||^2_{L^2_{sc}(\mathcal{N}_u(0,v))}
+\frac{1}{a}||(a^{\frac{1}{2}})^{k-1}\meth^k\varphi_1||^2_{L^2_{sc}(\mathcal{N}'_v(u_{\infty},u))} \\
&\lesssim\frac{1}{a}||(a^{\frac{1}{2}})^{k-1}\mathcal{D}^k\varphi_0||^2_{L^2_{sc}(\mathcal{N}_{u_{\infty}}(0,v))}
+\frac{1}{a}||(a^{\frac{1}{2}})^{k-1}\mathcal{D}^k\varphi_1||^2_{L^2_{sc}(\mathcal{N}'_0(u_{\infty},u))} 
+\frac{1}{a}(M+N) \\
&\lesssim\mathcal{I}^2_0+
\frac{1}{a}(\bm{\varphi}[\varphi_0]\bm{\varphi}[\varphi_1](\underline{\bm\Psi}+\bm\Psi+\bm\varphi+1)
+\frac{a^{\frac{3}{2}}}{|u|}\bm{\varphi}[\varphi_0]^2
+\mathcal{O}^2\bm{\varphi}[\varphi_0]) \\
&\leq\mathcal{I}^2_0+\frac{1}{a^{\frac{1}{4}}}.
\end{align*}

Following the same strategy one can estimate the remaining strings
in~$\mathcal{D}^{k_{i}}\varphi_0$ and~$\mathcal{D}^{k_{i}}\varphi_1$
and obtain the same result. We thus obtain the estimate
\begin{align*}
\frac{1}{a}||(a^{\frac{1}{2}})^{k-1}\mathcal{D}^k\varphi_0||^2_{L^2_{sc}(\mathcal{N}_u(0,v))}
+\frac{1}{a}||(a^{\frac{1}{2}})^{k-1}\mathcal{D}^k\varphi_1||^2_{L^2_{sc}(\mathcal{N}'_v(u_{\infty},u))} 
\leq\mathcal{I}^2_0+\frac{1}{a^{\frac{1}{4}}}.
\end{align*}
\end{proof}

\begin{proposition}
\label{EnergyEstimatevarphi12}
For~$0\leq k\leq10$, one has that 
\begin{align*}
&\int_0^v||a^{\frac{k}{2}}\mathcal{D}^{k+1}\varphi_1||^2_{L^2_{sc}(\mathcal{S}_{u,v'})}
+\int_{u_{\infty}}^u\frac{a}{|u'|^2}||(a^{\frac{1}{2}}\mathcal{D})^k\Tivarphi_2||^2_{L^2_{sc}(\mathcal{S}_{u',v})} \\
&+\int_{u_{\infty}}^u\frac{a^2}{|u'|^4}||a^{\frac{k}{2}}\mathcal{D}^{k+1}\varphi_2||^2_{L^2_{sc}(\mathcal{S}_{u',v})} 
\lesssim\mathcal{I}_0^2+1.
\end{align*}
\end{proposition}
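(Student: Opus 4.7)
The plan is to apply the Hodge-pair energy machinery of Section~\ref{subsection:preliminary_energy} to the pair $(\Psi_I,\Psi_{II})=(\meth'\varphi_1,\Tivarphi_2)$, which satisfies the correct T-weight and signature balance: $s(\meth'\varphi_1)=0$, $s(\Tivarphi_2)=1$, $s_2(\meth'\varphi_1)=1$, and $s_2(\Tivarphi_2)=3/2$. First I would derive the coupled system by applying $\meth'$ to \eqref{EOMMasslessScalarStT-weight3}, using the commutator \eqref{T-weightMasslessCommutatorSt5} together with the identity $\meth'\meth\varphi_2=\meth\meth'\varphi_2=\meth\Tivarphi_2-\meth(\mu\bar\varphi_1)$ (via \eqref{T-weightMasslessCommutatorSt6} and the definition of $\Tivarphi_2$), and exploiting the constraint \eqref{EOMMasslessScalarStT-weight5} so that $\meth\bar\varphi_1=\meth'\varphi_1$. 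This produces
\begin{align*}
\mthorn'(\meth'\varphi_1)+3\mu\,(\meth'\varphi_1)-\meth\Tivarphi_2=P_0,
\end{align*}
with $P_0$ a sum of lower order products such as $\varphi_1\meth\lambda$, $\lambda\meth\varphi_1$, $\bar\lambda\meth'\bar\varphi_1$, $\bar\varphi_1\meth'\bar\lambda$, and $\bar\varphi_1\meth\mu$. The companion equation $\mthorn\Tivarphi_2-\meth'(\meth'\varphi_1)=Q_0$ was already obtained in the proof of Proposition~\ref{L2varphi12} and, crucially, has no derivative of the Weyl tensor on its right hand side. The coefficient $\lambda_0=3$ in the $\mthorn'$-equation matches the clean-balance condition $\lambda_0=2s_2(\Psi_I)+1$, so the preliminary energy estimate applies directly.

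Next I would apply $\meth^k$ and use the energy identity of Section~\ref{subsection:preliminary_energy} to deduce
\begin{align*}
&||\meth^k(\meth'\varphi_1)||^2_{L^2_{sc}(\mathcal{N}_u(0,v))}
+||\meth^k\Tivarphi_2||^2_{L^2_{sc}(\mathcal{N}'_v(u_{\infty},u))}\\
&\qquad\lesssim ||\meth^k(\meth'\varphi_1)||^2_{L^2_{sc}(\mathcal{N}_{u_{\infty}}(0,v))}
+||\meth^k\Tivarphi_2||^2_{L^2_{sc}(\mathcal{N}'_0(u_{\infty},u))}+M+N,
\end{align*}
where $M,N$ are the double integrals over $\mathcal{D}_{u,v}$ of the commuted sources $P_k,Q_k$ paired with $\meth^k\Psi_I,\meth^k\Psi_{II}$. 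The initial data on $\mathcal{N}_{u_{\infty}}$ reduces via Lemma~\ref{Lemma3} to expressions in $\sigma$ and $\varphi_0$ bounded by $\mathcal{I}_0$, while on $\mathcal{N}'_0$ the Minkowskian data makes $\Tivarphi_2$ vanish. The bulk integrals $M,N$ are then treated as in Propositions~\ref{EnergyEstimatevarphi01}: apply Cauchy--Schwarz to separate a flux factor from an $L^2(\mathcal{D}_{u,v})$-integral of $|P_k|^2$ or $|Q_k|^2$, and estimate the latter product-by-product using the $L^2(\mathcal{S})$ bounds of Section~\ref{L2estimate}, the bootstrap assumptions, and the top-order elliptic bounds of Section~\ref{EllipticEstimate} whenever $\mathcal{D}^{k+1}$ falls on a connection coefficient. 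The self-coupling terms $\rho\Tivarphi_2,\bar\sigma\bar\Tivarphi_2,\bar\varphi_1\Tivarphi_2$ in $Q_k$ are absorbed by Gr\"onwall, and the remaining contributions collapse to $\mathcal{I}_0^2+1$ modulo small factors such as $a^{-1/4}$.

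To recover the third summand I would invert the definitions $\meth'\varphi_2=\Tivarphi_2-\mu\bar\varphi_1$ and $\meth\varphi_2=\bar\Tivarphi_2-\mu\varphi_1$, so that any string $\mathcal{D}^{k+1}\varphi_2$ splits into $\mathcal{D}^k\Tivarphi_2$ (or its conjugate) plus $\mathcal{D}^k(\mu\bar\varphi_1)$. The first piece is controlled by the $\Tivarphi_2$-flux on $\mathcal{N}'_v$: since $|u|\geq a/4$ we have $a/|u'|^2\lesssim 1/|u|$, so the extra weight $a^2/|u'|^4$ in the target integrand is an $\mathcal{O}(1/|u|)$ multiplier of the standard $L^2_{sc}(\mathcal{N}'_v)$-norm of $\meth^k\Tivarphi_2$. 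The second piece is estimated by distributing $\mathcal{D}^k$ via Leibnitz and pairing $L^\infty_{sc}$-bounds on $\mu$ (from the bootstrap $||\mu||\lesssim |u|^2/a$) with $L^2_{sc}$-bounds on derivatives of $\varphi_1$ coming from Proposition~\ref{L2varphi12}; when top derivatives fall on $\mu$ the required bound is furnished by Proposition~\ref{11Dermulambda}.

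The main obstacle I anticipate is the careful algebra of the commuted source $Q_k$ obtained from the twenty-odd terms comprising $\mthorn\Tivarphi_2$ in the proof of Proposition~\ref{L2varphi12}: terms like $\varphi_0\meth'\mu$, $\varphi_2\meth'\rho$, and $-3\mu\varphi_0^2$-type products produced during commutation carry large $\mu$-factors, and only because the leading $\mu$-pieces in $\meth'\varphi_2$ and $-\mu\bar\varphi_1$ cancel by design in $\Tivarphi_2$ can the energy estimate retain the same scaling as for $(\varphi_0,\varphi_1)$. Verifying these cancellations term-by-term, and checking that what survives scales as required in $a,|u|,\mathcal{O}$ to close the bootstrap with an $\mathcal{I}_0^2+1$ right hand side, is where the bulk of the bookkeeping lies.
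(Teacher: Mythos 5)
Your proposal follows essentially the same route as the paper: the paper's proof also introduces the pair $(\tilde\varphi_1,\tilde\varphi_2)=(\meth'\varphi_1,\meth'\varphi_2+\mu\bar\varphi_1)$, derives the coupled system \eqref{tildevarphi1}--\eqref{tildevarphi2} by applying $\meth'$ to \eqref{EOMMasslessScalarStT-weight3}--\eqref{EOMMasslessScalarStT-weight4} with exactly the source terms you list, applies the preliminary energy estimate with $\lambda_0=3=2s_2(\tilde\varphi_1)+1$, bounds $M$ and $N$ term by term, and recovers the $\varphi_2$ flux by inverting $\meth'\varphi_2=\tilde\varphi_2-\mu\bar\varphi_1$ (together with the conjugate pair $(\meth\varphi_1,\meth\varphi_2+\mu\varphi_1)$) and distributing $\mathcal{D}^k$ over $\mu\bar\varphi_1$. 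The approach and all key structural observations (the $\mu$-cancellation built into $\tilde\varphi_2$, the absence of Weyl derivatives in the $\mthorn\tilde\varphi_2$ equation, the Gr\"onwall absorption of the $\rho\tilde\varphi_2$-type self-couplings) coincide with the paper's.
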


\begin{proof}
  In this proof we make use of the fields~$\tilde\varphi_1$
  and~$\tilde\varphi_2$ where
\begin{align*}
  \tilde\varphi_1\equiv\meth'\varphi_1, \qquad \tilde\varphi_2\equiv\meth'\varphi_2+\mu\bar\varphi_1.
\end{align*}
Starting from the equations~\eqref{EOMMasslessScalarStT-weight3}
and~\eqref{EOMMasslessScalarStT-weight4} ---i.e.
\begin{align*}
\mthorn'\varphi_{1}-\meth\varphi_{2}&=
-\mu\varphi_{1}-\bar\lambda\bar\varphi_{1}, \\
\mthorn\varphi_{2}-\meth'\varphi_{1}&=-\mu\varphi_{0}
+\rho\varphi_{2}+\pi\varphi_{1}+\bar\pi\bar\varphi_{1},  \\
\end{align*}
we apply~$\meth'$ and make use of the commutator relations so as to
obtain
\begin{align}
\mthorn'\tilde{\varphi}_1+3\mu\tilde{\varphi}_1-\meth\tilde\varphi_2&=
-\lambda\meth\varphi_1-\varphi_1\meth\lambda-3\bar\varphi_1\meth\mu
-\bar\lambda\meth'\bar\varphi_1+\bar\varphi_1\meth'\bar\lambda, \label{tildevarphi1}\\
\mthorn\tilde\varphi_2-\meth'\tilde\varphi_1&=\bar\varphi_1\meth\pi+\bar\pi\meth'\bar\varphi_1
+2\pi\meth'\varphi_1-\bar\tau\meth'\varphi_1
-\varphi_0\meth'\mu+\varphi_1\meth'\pi+\bar\varphi_1\meth'\bar\pi+\varphi_2\meth'\rho \nonumber\\
&+\bar\varphi_1\tilde\varphi_2+2\rho\tilde\varphi_2+\bar\sigma\bar{\tilde{\varphi}}_2
+\pi^2\varphi_1+2\pi\bar\pi\bar\varphi_1+\pi\rho\varphi_2+\lambda\sigma\bar\varphi_1 \nonumber \\
&-\pi\bar\tau\varphi_1-\bar\pi\bar\tau\bar\varphi_1-\rho\bar\tau\varphi_2, \label{tildevarphi2}
\end{align}
where we have made use
of~$\bar{\tilde\varphi}_1=\meth\bar\varphi_1=\meth'\varphi_1=\tilde\varphi_1$
---i.e.~$\tilde\varphi_1$ is real. We can also use the same strategy
for the pair~$(\meth\varphi_2,\meth\varphi_1)$ as long as we
apply~$\meth$ to~\eqref{EOMMasslessScalarStT-weight3}
and~\eqref{EOMMasslessScalarStT-weight4}. By definition,
$s_2(\Tivarphi_2)=3/2$, $s_2(\Tivarphi_1)=1$. We now apply the energy
estimate to the pair~$(\Tivarphi_1,\Tivarphi_2)$.
Choosing~$\lambda_0=3=2s_2(\Tivarphi_1)+1$, we have that
\begin{align*}
&\int_0^v||\meth^k\Tivarphi_1||^2_{L^2_{sc}(\mathcal{S}_{u,v'})}
+\int_{u_{\infty}}^u\frac{a}{|u'|^2}||\meth^k\Tivarphi_2||^2_{L^2_{sc}(\mathcal{S}_{u',v})} \\
&\lesssim\int_0^v||\meth^k\Tivarphi_1||^2_{L^2_{sc}(\mathcal{S}_{u_{\infty},v'})}
+\int_{u_{\infty}}^u\frac{a}{|u'|^2}||\meth^k\Tivarphi_2||^2_{L^2_{sc}(\mathcal{S}_{u',0})} \\
&\quad+2\int_0^v\int_{u_{\infty}}^u\frac{a}{|u'|^2}
||\meth^k\Tivarphi_1||_{L^2_{sc}(\mathcal{S}_{u',v'})}
||P_k||_{L^2_{sc}(\mathcal{S}_{u',v'})} \\
&\quad+2\int_0^v\int_{u_{\infty}}^u\frac{a}{|u'|^2}
||\meth^k\Tivarphi_2||_{L^2_{sc}(\mathcal{S}_{u',v'})}
||Q_k||_{L^2_{sc}(\mathcal{S}_{u',v'})}.
\end{align*}
where~$0\leq k\leq10$. Multiplying by~$a^{k}$ we arrive to
\begin{align*}
&\int_0^v||(a^{\frac{1}{2}}\meth)^k\Tivarphi_1||^2_{L^2_{sc}(\mathcal{S}_{u,v'})}
+\int_{u_{\infty}}^u\frac{a}{|u'|^2}||(a^{\frac{1}{2}}\meth)^k\Tivarphi_2||^2_{L^2_{sc}(\mathcal{S}_{u',v})} \\
&\lesssim\int_0^v||(a^{\frac{1}{2}}\meth)^k\Tivarphi_1||^2_{L^2_{sc}(\mathcal{S}_{u_{\infty},v'})}
+\int_{u_{\infty}}^u\frac{a}{|u'|^2}||(a^{\frac{1}{2}}\meth)^k\Tivarphi_2||^2_{L^2_{sc}(\mathcal{S}_{u',0})} \\
&\quad+2\int_0^v\int_{u_{\infty}}^u\frac{a}{|u'|^2}
||(a^{\frac{1}{2}}\meth)^k\Tivarphi_1||_{L^2_{sc}(\mathcal{S}_{u',v'})}
||a^{\frac{k}{2}}P_k||_{L^2_{sc}(\mathcal{S}_{u',v'})} \\
&\quad+2\int_0^v\int_{u_{\infty}}^u\frac{a}{|u'|^2}
||(a^{\frac{1}{2}}\meth)^k\Tivarphi_2||_{L^2_{sc}(\mathcal{S}_{u',v'})}
||a^{\frac{k}{2}}Q_k||_{L^2_{sc}(\mathcal{S}_{u',v'})} \\
&=\int_0^v||(a^{\frac{1}{2}}\meth)^k\Tivarphi_1||^2_{L^2_{sc}(\mathcal{S}_{u_{\infty},v'})}
+\int_{u_{\infty}}^u\frac{a}{|u'|^2}||(a^{\frac{1}{2}}\meth)^k\Tivarphi_2||^2_{L^2_{sc}(\mathcal{S}_{u',0})} \\
&\quad+M+N,
\end{align*}
where
\begin{align*}
M&\equiv2\int_0^v\int_{u_{\infty}}^u\frac{a}{|u'|^2}
||(a^{\frac{1}{2}}\meth)^k\Tivarphi_1||_{L^2_{sc}(\mathcal{S}_{u',v'})}
||a^{\frac{k}{2}}P_k||_{L^2_{sc}(\mathcal{S}_{u',v'})}, \\
N&\equiv2\int_0^v\int_{u_{\infty}}^u\frac{a}{|u'|^2}
||(a^{\frac{1}{2}}\meth)^k\Tivarphi_2||_{L^2_{sc}(\mathcal{S}_{u',v'})}
||a^{\frac{k}{2}}Q_k||_{L^2_{sc}(\mathcal{S}_{u',v'})} ,
\end{align*}
and we define
\begin{align*}
P_k=\lambda\meth^{k+1}\varphi_1+\varphi_1\meth^{k+1}\Gamma(\Timu,\lambda)
+\sum_{i=1}^k\meth^{i}\Gamma(\Timu,\lambda)\meth^{k+1-i}\varphi_1
+\sum_{i=1}^k\meth^{i}\varphi_1\meth^{k+1-i}\Gamma(\Timu,\lambda)
\end{align*}
\begin{align*}
Q_k&=\sum_{i_1+i_2+i_3+i_4=k}\meth^{i_1}\Gamma(\tau,\pi)^{i_2}\meth^{i_3}\Gamma(\tau,\pi)\meth^{i_4+1}\varphi_1 
+\sum_{i_1+i_2+i_3+i_4=k}\meth^{i_1}\Gamma(\tau,\pi)^{i_2}\meth^{i_3}
\bm\varphi\meth^{i_4+1}\Gamma(\pi,\mu,\rho) \\
&+\sum_{i_1+i_2+i_3+i_4=k}\meth^{i_1}\Gamma(\tau,\pi)^{i_2}
\meth^{i_3}(\tau,\pi,\rho,\sigma,\varphi_1)\meth^{i_4}\Tivarphi_2 \\
&+\sum_{i_1+i_2+i_3+i_4+i_5=k}\meth^{i_1}\Gamma(\tau,\pi)^{i_2}\meth^{i_3}\Gamma(\lambda,\pi)
\meth^{i_4}\Gamma(\sigma,\tau,\pi,\rho)\meth^{i_5}\bm\varphi
+\sum_{i_1+i_2=k-1}\meth^{i_1}K\meth^{i_2+1}\varphi_1.
\end{align*}
The rest of the proof consists of three steps.

\smallskip
\noindent
\textbf{Step 1}: Estimate~$M$.
We have that 
\begin{align*}
M\leq&\left(\int_0^v\int_{u_{\infty}}^u\frac{a}{|u'|^2}
||(a^{\frac{1}{2}}\meth)^k\Tivarphi_1||^2_{L^2_{sc}(\mathcal{S}_{u',v'})}\right)^{\frac{1}{2}}
\left(\int_0^v\int_{u_{\infty}}^u\frac{a}{|u'|^2}||a^{\frac{k}{2}}P_k||^2_{L^2_{sc}(\mathcal{S}_{u',v'})}\right)^{\frac{1}{2}} \\
\leq&\int_0^v\int_{u_{\infty}}^u\frac{a}{|u'|^2}
||(a^{\frac{1}{2}}\meth)^k\Tivarphi_1||^2_{L^2_{sc}(\mathcal{S}_{u',v'})}
+\int_0^v\int_{u_{\infty}}^u\frac{a}{|u'|^2}||a^{\frac{k}{2}}P_k||^2_{L^2_{sc}(\mathcal{S}_{u',v'})}.
\end{align*}
Denoting the second integral by~$J$ and substituting~$P_k$ we have
that
\begin{align*}
J&\leq\int_0^v\int_{u_{\infty}}^u\frac{a}{|u'|^2}
||a^{\frac{k}{2}}\lambda\mathcal{D}^{k+1}\varphi_1||^2_{L^2_{sc}(\mathcal{S}_{u',v'})}
+\int_0^v\int_{u_{\infty}}^u\frac{a}{|u'|^2}
||a^{\frac{k}{2}}\varphi_1\mathcal{D}^{k+1}\Gamma(\Timu,\lambda)||^2_{L^2_{sc}(\mathcal{S}_{u',v'})} \\
&+\sum_{i=1}^k\int_0^v\int_{u_{\infty}}^u\frac{a}{|u'|^2}
||a^{\frac{k}{2}}\mathcal{D}^i\Gamma(\Timu,\lambda)\mathcal{D}^{k+1-i}\varphi_1||^2_{L^2_{sc}(\mathcal{S}_{u',v'})}\\
&+\sum_{i=1}^k\int_0^v\int_{u_{\infty}}^u\frac{a}{|u'|^2}
||a^{\frac{k}{2}}\mathcal{D}^{i}\varphi_1\mathcal{D}^{k+1-i}\Gamma(\Timu,\lambda)||^2_{L^2_{sc}(\mathcal{S}_{u',v'})}
=I_1+...+I_4.
\end{align*}

For the term~$I_1$ we have that
\begin{align*}
I_1&\leq\int_0^v\int_{u_{\infty}}^u\frac{a}{|u'|^2}\frac{1}{|u'|^2}||\lambda||^2_{L^{\infty}_{sc}(\mathcal{S}_{u',v'})}
||a^{\frac{k}{2}}\mathcal{D}^{k+1}\varphi_1||^2_{L^2_{sc}(\mathcal{S}_{u',v'})} \\
&\lesssim\int_0^v\int_{u_{\infty}}^u\frac{a}{|u'|^2}\frac{1}{|u'|^2}\frac{|u'|^2}{a}
||a^{\frac{k}{2}}\mathcal{D}^{k+1}\varphi_1||^2_{L^2_{sc}(\mathcal{S}_{u',v'})} \\
&=\int_{u_{\infty}}^u\frac{1}{|u'|^2}||a^{\frac{k}{2}}\mathcal{D}^{k+1}\varphi_1||^2_{L^2_{sc}(\mathcal{N}_{u'}(0,v))}
\leq\frac{1}{|u|}\bm{\varphi}[\varphi_1]\leq1.
\end{align*}
For~$I_2$ we have that
\begin{align*}
I_2&\leq\int_0^v\int_{u_{\infty}}^u\frac{a}{|u'|^2}\frac{1}{|u'|^2}||\varphi_1||^2_{L^{\infty}_{sc}(\mathcal{S}_{u',v'})}
||a^{\frac{k}{2}}\mathcal{D}^{k+1}\Gamma(\Timu,\lambda)||^2_{L^2_{sc}(\mathcal{S}_{u',v'})} \\
&\lesssim\frac{\bmGamma(\varphi_1)^2}{a}\int_0^v\int_{u_{\infty}}^u\frac{a^2}{|u'|^4}
||a^{\frac{k}{2}}\mathcal{D}^{k+1}\Gamma(\Timu,\lambda)||^2_{L^2_{sc}(\mathcal{S}_{u',v'})}
\lesssim\frac{1}{a}\mathcal{O}^4\leq1.
\end{align*}
For~$I_3$ we see that
\begin{align*}
I_3&\leq\sum_{i=1}^k\int_0^v\int_{u_{\infty}}^u\frac{1}{|u'|^2}\frac{1}{|u'|^2}
||(a^{\frac{1}{2}}\mathcal{D})^i\Gamma(\Timu,\lambda)||^2\times
||(a^{\frac{1}{2}}\mathcal{D})^{k+1-i}\varphi_1||^2 \\
&\lesssim\int_0^v\int_{u_{\infty}}^u\frac{1}{|u'|^4}\frac{|u'|^2}{a}\mathcal{O}^2\leq1.
\end{align*}
Finally, $I_4$ leads to the same estimate as for~$I_3$. Combining the
estimates for~$I_1,\dots,I_4$ we have~$J\lesssim1$.

\smallskip
\noindent
\textbf{Step 2}: Estimating~$N$.

We have that 
\begin{align*}
M&\leq\left(\int_0^v\int_{u_{\infty}}^u\frac{a}{|u'|^2}
||(a^{\frac{1}{2}}\meth)^k\Tivarphi_2||^2_{L^2_{sc}(\mathcal{S}_{u',v'})}\right)^{\frac{1}{2}}
\left(\int_0^v\int_{u_{\infty}}^u\frac{a}{|u'|^2}||a^{\frac{k}{2}}Q_k||^2_{L^2_{sc}(\mathcal{S}_{u',v'})}\right)^{\frac{1}{2}} \\
&\quad\leq\int_0^v\int_{u_{\infty}}^u\frac{a}{|u'|^2}
||(a^{\frac{1}{2}}\meth)^k\Tivarphi_2||^2_{L^2_{sc}(\mathcal{S}_{u',v'})}
+\int_0^v\int_{u_{\infty}}^u\frac{a}{|u'|^2}||a^{\frac{k}{2}}Q_k||^2_{L^2_{sc}(\mathcal{S}_{u',v'})}.
\end{align*}
Denoting the second integral by~$H$ and substituting~$Q_k$ we have that
\begin{align*}
H&\leq\sum_{i_1+i_2+i_3+i_4=k}\int_0^v\int_{u_{\infty}}^u\frac{a}{|u'|^2}
||a^{\frac{k}{2}}\mathcal{D}^{i_1}\Gamma(\tau,\pi)^{i_2}\mathcal{D}^{i_3}\Gamma(\tau,\pi)
\mathcal{D}^{i_4+1}\varphi_1||^2_{L^2_{sc}(\mathcal{S}_{u',v'})} \\
&\quad+\sum_{i_1+i_2+i_3+i_4=k}\int_0^v\int_{u_{\infty}}^u\frac{a}{|u'|^2}
||a^{\frac{k}{2}}\mathcal{D}^{i_1}\Gamma(\tau,\pi)^{i_2}\mathcal{D}^{i_3}\bm\varphi
\mathcal{D}^{i_4+1}\bmGamma(\pi,\mu,\rho)||^2_{L^2_{sc}(\mathcal{S}_{u',v'})} \\
&\quad+\sum_{i_1+i_2+i_3+i_4=k}\int_0^v\int_{u_{\infty}}^u\frac{a}{|u'|^2}
||a^{\frac{k}{2}}\mathcal{D}^{i_1}\Gamma(\tau,\pi)^{i_2}\mathcal{D}^{i_3}(\sigma,...,\varphi_1)
\mathcal{D}^{i_4}\Tivarphi_2||^2_{L^2_{sc}(\mathcal{S}_{u',v'})} \\
&\quad+\sum_{i_1+i_2+i_3+i_4+i_5=k}\int_0^v\int_{u_{\infty}}^u\frac{a}{|u'|^2}
||a^{\frac{k}{2}}\mathcal{D}^{i_1}\Gamma(\tau,\pi)^{i_2}\mathcal{D}^{i_3}\Gamma(\lambda,\pi)
\mathcal{D}^{i_4}\Gamma(\sigma,...)\mathcal{D}^{i_5}\bm\varphi||^2_{L^2_{sc}(\mathcal{S}_{u',v'})} \\
&\quad+\sum_{i_1+i_2=k-1}\int_0^v\int_{u_{\infty}}^u\frac{a}{|u'|^2}
||a^{\frac{k}{2}}\mathcal{D}^{i_1}K\mathcal{D}^{i_2+1}\varphi_1||^2_{L^2_{sc}(\mathcal{S}_{u',v'})} 
=I_1+...+I_5.
\end{align*}

For~$I_1$ we have that 
\begin{align*}
I_1&\leq\int_0^v\int_{u_{\infty}}^u\frac{a}{|u'|^2}\frac{1}{|u'|^2}||\Gamma(\tau,\pi)||^2_{L^{\infty}_{sc}(\mathcal{S}_{u',v'})} 
||a^{\frac{k}{2}}\mathcal{D}^{k+1}\varphi_1||^2_{L^2_{sc}(\mathcal{S}_{u',v'})} 
+\int_0^v\int_{u_{\infty}}^u\frac{1}{|u'|^2}(\frac{a^{\frac{i_2}{2}}}{|u'|^{i_2+1}}\mathcal{O}^{i_2+2})^2 \\
&\leq\int_{u_{\infty}}^u\frac{a}{|u'|^4}\mathcal{O}^2
||a^{\frac{k}{2}}\mathcal{D}^{k+1}\varphi_1||^2_{L^2_{sc}(\mathcal{N}_{u'}(0,v))}+\int_0^v\int_{u_{\infty}}^u\frac{\mathcal{O}^4}{|u'|^4}
\leq1.
\end{align*}
For~$I_2$ we find that 
\begin{align*}
I_2&\leq\int_0^v\int_{u_{\infty}}^u\frac{a}{|u'|^2}\frac{1}{|u'|^2}||\varphi_0||^2_{L^{\infty}_{sc}(\mathcal{S}_{u',v'})} 
||a^{\frac{k}{2}}\mathcal{D}^{k+1}\mu||^2_{L^2_{sc}(\mathcal{S}_{u',v'})} \\
&\quad+\int_0^v\int_{u_{\infty}}^u\frac{a}{|u'|^2}\frac{1}{|u'|^2}||\varphi_1||^2_{L^{\infty}_{sc}(\mathcal{S}_{u',v'})} 
||a^{\frac{k}{2}}\mathcal{D}^{k+1}\pi||^2_{L^2_{sc}(\mathcal{S}_{u',v'})} \\
&\quad+\int_0^v\int_{u_{\infty}}^u\frac{a}{|u'|^2}\frac{1}{|u'|^2}||\varphi_2||^2_{L^{\infty}_{sc}(\mathcal{S}_{u',v'})} 
||a^{\frac{k}{2}}\mathcal{D}^{k+1}\rho||^2_{L^2_{sc}(\mathcal{S}_{u',v'})} \\
&\quad+\int_0^v\int_{u_{\infty}}^u\frac{1}{|u'|^2}\left(\frac{a^{\frac{i_2}{2}}}{|u'|^{i_2+1}}\mathcal{O}^{i_2}(
a^{\frac{1}{2}}\mathcal{O}\frac{|u|}{a}\mathcal{O}+\mathcal{O}^2+\frac{|u|}{a^{\frac{1}{2}}}\mathcal{O}^2) \right)^2 \\
&\lesssim\int_0^v\int_{u_{\infty}}^u\frac{a^2}{|u'|^4}\bmGamma(\varphi_0)^2
||a^{\frac{k}{2}}\mathcal{D}^{k+1}\mu||^2_{L^2_{sc}(\mathcal{S}_{u',v'})}
+\int_0^v\int_{u_{\infty}}^u\frac{a}{|u'|^4}\bmGamma(\varphi_1)^2
||a^{\frac{k}{2}}\mathcal{D}^{k+1}\pi||^2_{L^2_{sc}(\mathcal{S}_{u',v'})} \\
&\quad+\int_0^v\int_{u_{\infty}}^u\frac{a}{|u'|^4}\frac{|u'|^2}{a}\bmGamma(\varphi_2)^2
||a^{\frac{k}{2}}\mathcal{D}^{k+1}\rho||^2_{L^2_{sc}(\mathcal{S}_{u',v'})}+\frac{\mathcal{O}^4}{|u|} \\
&\leq\frac{1}{a}\mathcal{O}^4+\frac{1}{a|u|}\mathcal{O}^4+\frac{\mathcal{O}^4}{|u|}
\leq1.
\end{align*}
For~$I_3$ we have that 
\begin{align*}
I_3\leq\int_0^v\int_{u_{\infty}}^u\frac{a}{|u'|^2}\left( 
\frac{a^{\frac{i_2}{2}}}{|u'|^{i_2+1}}\mathcal{O}^{i_2}a^{\frac{1}{2}}\mathcal{O}^2
\right)^2
\leq\frac{a^2}{|u|^3}\mathcal{O}^4\leq1.
\end{align*}
For~$I_4$ we have that 
\begin{align*}
I_4\leq\int_0^v\int_{u_{\infty}}^u\frac{a}{|u'|^2}\left( 
\frac{a^{\frac{i_2}{2}}}{|u'|^{i_2+2}}\mathcal{O}^{i_2}a^{\frac{1}{2}}\mathcal{O}\frac{|u'|}{a^{\frac{1}{2}}}\mathcal{O}
\right)^2
\leq\frac{a}{|u|^3}\mathcal{O}^4\leq1.
\end{align*}
Finally, for~$I_5$ we find that
\begin{align*}
I_5\leq\int_0^v\int_{u_{\infty}}^u\frac{1}{|u'|^2}\frac{1}{|u'|^2}\mathcal{O}^2\leq1.
\end{align*}
Combine the above estimates for~$I_1,\ldots,I_5$ we have~$H\lesssim1$.
 
\smallskip
\noindent
\textbf{Step 3}: Summary for~$0\leq k\leq1$.

 Collecting the estimates for~$M$ and~$N$ we have that 
\begin{align*}
\int_0^v||(a^{\frac{1}{2}}\meth)^k\Tivarphi_1||^2_{L^2_{sc}(\mathcal{S}_{u,v'})}
+\int_{u_{\infty}}^u\frac{a}{|u'|^2}||(a^{\frac{1}{2}}\meth)^k\Tivarphi_2||^2_{L^2_{sc}(\mathcal{S}_{u',v})} 
\lesssim\mathcal{I}_0^2+1.
\end{align*}
Observing that the rest of the strings in~$\mathcal{D}^{k_1}$ give
rise to the same results, we conclude that 
\begin{align*}
\int_0^v||(a^{\frac{1}{2}}\mathcal{D})^k\Tivarphi_1||^2_{L^2_{sc}(\mathcal{S}_{u,v'})}
+\int_{u_{\infty}}^u\frac{a}{|u'|^2}||(a^{\frac{1}{2}}\mathcal{D})^k\Tivarphi_2||^2_{L^2_{sc}(\mathcal{S}_{u',v})} 
\lesssim\mathcal{I}_0^2+1.
\end{align*}
We can also consider the pair~$(\meth\varphi_1, \meth\varphi_2+\mu\varphi_1)$ and obtain that 
\begin{align*}
\int_0^v||(a^{\frac{1}{2}}\mathcal{D})^k\meth\varphi_1||^2_{L^2_{sc}(\mathcal{S}_{u,v'})}
+\int_{u_{\infty}}^u\frac{a}{|u'|^2}||(a^{\frac{1}{2}}\mathcal{D})^k(\meth\varphi_2+\mu\varphi_1)||^2_{L^2_{sc}(\mathcal{S}_{u',v})} 
\lesssim\mathcal{I}_0^2+1.
\end{align*}
Thus, for~$\varphi_1$, we have that
\begin{align*}
\int_0^v||a^{\frac{k}{2}}\mathcal{D}^{k+1}\varphi_1||^2_{L^2_{sc}(\mathcal{S}_{u,v'})}
\lesssim\mathcal{I}_0^2+1.
\end{align*}
Similarly, for~$\varphi_2$ we have that 
\begin{align*}
&\int_{u_{\infty}}^u\frac{a^2}{|u'|^4}||a^{\frac{k}{2}}\mathcal{D}^{k+1}\varphi_2||^2_{L^2_{sc}(\mathcal{S}_{u',v})} 
\leq\int_{u_{\infty}}^u\frac{a^2}{|u'|^4}||(a^{\frac{1}{2}}\mathcal{D})^k(\meth\varphi_2+\mu\varphi_1)||^2_{L^2_{sc}(\mathcal{S}_{u',v})} \\
&\quad+\int_{u_{\infty}}^u\frac{a^2}{|u'|^4}||(a^{\frac{1}{2}}\mathcal{D})^k(\meth'\varphi_2+\mu\bar\varphi_1)||^2_{L^2_{sc}(\mathcal{S}_{u',v})} 
+\int_{u_{\infty}}^u\frac{a^2}{|u'|^4}||(a^{\frac{1}{2}}\mathcal{D})^k(\mu\varphi_1)||^2_{L^2_{sc}(\mathcal{S}_{u',v})} \\
&\lesssim1+\int_{u_{\infty}}^u\frac{a^2}{|u'|^4}\frac{1}{|u'|^2}||\mu||^2_{L^{\infty}_{sc}(\mathcal{S}_{u',v})}
||(a^{\frac{1}{2}}\mathcal{D})^k\varphi_1||^2_{L^2_{sc}(\mathcal{S}_{u',v})} \\
&\quad+\sum_{i=1}^k\int_{u_{\infty}}^u\frac{a^2}{|u'|^4}\frac{1}{|u'|^2}
||(a^{\frac{1}{2}}\mathcal{D})^i\Timu||^2_{L^{\infty}_{sc}(\mathcal{S}_{u',v})}
||(a^{\frac{1}{2}}\mathcal{D})^{k-i}\varphi_1||^2_{L^2_{sc}(\mathcal{S}_{u',v})} \\
&\lesssim1+\int_{u_{\infty}}^u\frac{a^2}{|u'|^6}\frac{|u'|^4}{a^2}\mathcal{O}^2
+\int_{u_{\infty}}^u\frac{a^2}{|u'|^6}\frac{|u'|^2}{a^2}\mathcal{O}^2
\lesssim1.
\end{align*}
Hence, one can conclude that
\begin{align*}
&\int_0^v||a^{\frac{k}{2}}\mathcal{D}^{k+1}\varphi_1||^2_{L^2_{sc}(\mathcal{S}_{u,v'})}
+\int_{u_{\infty}}^u\frac{a}{|u'|^2}||(a^{\frac{1}{2}}\mathcal{D})^k\Tivarphi_2||^2_{L^2_{sc}(\mathcal{S}_{u',v})} \\
&\quad+\int_{u_{\infty}}^u\frac{a^2}{|u'|^4}||a^{\frac{k}{2}}\mathcal{D}^{k+1}\varphi_2||^2_{L^2_{sc}(\mathcal{S}_{u',v})} 
\lesssim\mathcal{I}_0^2+1.
\end{align*}
for~$0\leq k\leq10$.

The estimates of~$\varphi_1$ and~$\varphi_2$ on the light cone give
the same result. For the sake of conciseness we do not discuss the
details.
\end{proof}

\subsection{Energy estimate for the Weyl curvature}

Finally, in this section we obtain the energy estimates for the
components of the Weyl tensor.

\begin{proposition}
\label{EnergyEstimatePsi01}
For~$0\leq k\leq10$, one has that 
\begin{align*}
\frac{1}{a}||(a^{\frac{1}{2}}\mathcal{D})^k\Psi_0||^2_{L^2_{sc}(\mathcal{N}_u(0,v))}
+\frac{1}{a}||(a^{\frac{1}{2}}\mathcal{D})^k\TiPsi_1||^2_{L^2_{sc}(\mathcal{N}'_v(u_{\infty},u))} 
\lesssim\mathcal{I}_0^2+\frac{1}{a^{\frac{1}{4}}}.
\end{align*}
\end{proposition}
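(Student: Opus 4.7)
The proof follows the pattern established in Proposition~\ref{EnergyEstimatevarphi01} by specializing the general master energy inequality of Section~\ref{subsection:preliminary_energy} to the Bianchi pair $(\Psi_I,\Psi_{II})=(\Psi_0,\TiPsi_1)$. Since $s_2(\Psi_0)=0$, one has $\lambda_0=2s_2(\Psi_0)+1=1$, so rewriting the Bianchi identities \eqref{T-weightMasslessStBianchi1} and \eqref{T-weightMasslessStBianchi2} in the canonical form
\begin{align*}
\mthorn'\Psi_0+\mu\Psi_0-\meth\TiPsi_1&=P_0,\\
\mthorn\TiPsi_1-\meth'\Psi_0&=Q_0,
\end{align*}
the coefficient in front of $\mu\Psi_0$ is precisely the one absorbed by the energy identity. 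This is the structural reason the curvature estimate goes through despite the largeness of $\mu$.

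Commuting $\meth^k$ produces $P_k = \meth^k P_0 + \sum_{i=0}^{k}\meth^i\Gamma(\Timu,\lambda)\meth^{k-i}\Psi_0$ and $Q_k$ analogously, with the usual commutator and curvature-coupling tails; the master inequality of Section~\ref{subsection:preliminary_energy}, multiplied by $a^{k-1}$, yields
\begin{align*}
\frac{1}{a}\|(a^{1/2}\meth)^k\Psi_0\|^2_{L^2_{sc}(\mathcal{N}_u(0,v))}&+\frac{1}{a}\|(a^{1/2}\meth)^k\TiPsi_1\|^2_{L^2_{sc}(\mathcal{N}'_v(u_\infty,u))} \\
&\lesssim\frac{1}{a}\|(a^{1/2}\mathcal{D})^k\Psi_0\|^2_{L^2_{sc}(\mathcal{N}_{u_\infty}(0,v))}+M+N,
\end{align*}
where the $\mathcal{N}'_0$ initial term vanishes by Minkowskian data on $v=0$ and the $\mathcal{N}_{u_\infty}$ term is bounded by $\mathcal{I}_0^2$ by definition of $\mathcal{I}_0$.

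For the estimates of $M$ and $N$ I would proceed exactly as in Propositions~\ref{EnergyEstimatevarphi01}-\ref{EnergyEstimatevarphi12}: extract $\|\meth^k\Psi_I\|_{L^2_{sc}(\mathcal{N})}\lesssim a^{1/2}\bm\Psi[\Psi_0]$ (resp.\ $\bm\Psi[\TiPsi_1]$) by Cauchy-Schwarz, then bound each contribution to $\int(a/|u'|^2)\|a^{k/2}P_k\|^2$ (resp.\ $Q_k$) using the $L^2_{sc}(\mathcal{S})$ bounds of Section~\ref{L2estimate}, the top-order elliptic estimates of Section~\ref{EllipticEstimate}, and the $L^\infty_{sc}$ decay of the matter and connection coefficients. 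The purely curvature-matter pieces like $\sigma\meth^k\TiPsi_2$, $\tau\meth^k\TiPsi_1$, $\varphi_0^2\meth^k\bar\lambda$, $\varphi_1^2\meth^k\rho$ and $\Gamma(\Timu,\lambda,\ulomega)\meth^k\Psi_0$ each gain enough decay from $\|\sigma\|_{L^\infty_{sc}},\|\varphi_0\|_{L^\infty_{sc}}\lesssim a^{1/2}\mathcal{O}$ and the integration weight $a/|u'|^2$ to contribute at most $\mathcal{O}^2/a^{1/2}$; the Gauss-curvature commutator term $\meth^{k-1}K\cdot\Psi_0$ in $Q_k$ is controlled by the uniform bound on $(a^{1/2}\mathcal{D})^i K$ from Section~\ref{EllipticEstimate}.

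The main obstacle, and the reason the conclusion carries the residue $a^{-1/4}$ rather than a clean constant, is the top-order matter insertion $6\varphi_0\meth\varphi_1$ in $P_0$, whose commuted version $\varphi_0\meth^{k+1}\varphi_1$ in $P_k$ contains an $11$th-order derivative of the scalar field; the analogous obstruction $\varphi_0\meth^{k+1}\varphi_0$ appears in $Q_k$ from $-6\varphi_0\meth\varphi_0$. These are the only places where a top-order matter derivative enters, and they must be controlled via the dedicated scalar-field energy estimates of Propositions~\ref{EnergyEstimatevarphi01} and~\ref{EnergyEstimatevarphi12}. Applying Cauchy-Schwarz with $\|\varphi_0\|_{L^\infty_{sc}}\lesssim a^{1/2}\mathcal{O}$ and bounding the remaining factor by $a^{1/2}(\mathcal{I}_0+1)$ through those propositions produces exactly the $\mathcal{I}_0^2+a^{-1/4}$ shape after summing over the $2^k$ strings $\mathcal{D}^{k_i}$ via the norm equivalence of Section~\ref{T-weightFormalism}. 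By the bootstrap hierarchy $\mathcal{I}_0\ll\mathcal{O}\ll a^{1/16}$ this strictly improves the assumed bounds on $\bm\Psi[\Psi_0]$ and $\underline{\bm\Psi}[\TiPsi_1]$, as required for Step~2 of the bootstrap scheme.
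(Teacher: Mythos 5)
Your proposal is correct and follows essentially the same route as the paper: the same specialization of the master energy identity to the pair $(\Psi_0,\TiPsi_1)$ with $\lambda_0=2s_2(\Psi_0)+1$, the same identification of the top-order matter insertions $\varphi_0\meth^{k+1}\varphi_1$ in $P_k$ and $\varphi_0\meth^{k+1}\varphi_0$ in $Q_k$ as the terms responsible for the $a^{-1/4}$ residue, and the same control of them via the scalar-field energy estimates combined with the bootstrap hierarchy $\mathcal{O}^{20}\leq a^{1/16}$.
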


\begin{proof}
  For the pair~$(\Psi_0,\TiPsi_1)$, we make use of the Bianchi
  equations~\eqref{T-weightMasslessStBianchi2}
  and~\eqref{T-weightMasslessStBianchi1}
\begin{align*}
\mthorn'\Psi_0-\meth\TiPsi_1&=6\varphi_0\meth\varphi_1+(2\ulomega-\mu)\Psi_0
-5\tau\TiPsi_1+3\sigma\TiPsi_2
\nonumber \\
&+6\varphi_0\varphi_2\sigma-3\varphi_0^2\bar\lambda-3\varphi_1^2\rho
-6\varphi_1\bar\varphi_1\sigma-12\varphi_0\varphi_1\tau,  \\
\mthorn\TiPsi_1-\meth'\Psi_0=&-6\varphi_0\meth\varphi_0+(\pi-2\bar\tau)\Psi_0
+4\rho\TiPsi_1
 \nonumber\\
&+3\varphi_0^2(2\tau-\bar\pi)+6\varphi_0\varphi_1\rho-6\varphi_0\bar\varphi_1\sigma.
\end{align*}
Applying~$\meth^k$ and commuting with~$\mthorn$ and~$\mthorn'$ we have that
\begin{align*}
\mthorn'\meth^k\Psi_0+(k+1)\mu\meth^k\TiPsi_1-\meth^{k+1}\TiPsi_1&=P_k, \\
\mthorn\meth^k\TiPsi_1-\meth'\meth^k\Psi_0&=Q_k,
\end{align*}
where
\begin{align*}
P_k&=\varphi_0\meth^{k+1}\varphi_1+\ulomega\meth^k\Psi_0+\tau\meth^k\TiPsi_1+\sigma\meth^k\TiPsi_2
+\sum_{i=1}^k\meth^i\varphi_0\meth^{k+1-i}\varphi_1
+\sum_{i=1}^k\meth^i\Gamma(\ulomega,\Timu)\meth^{k-i}\Psi_0 \\
&\quad+\sum_{i=0}^k\meth^i\lambda\meth^{k-i}\Psi_0
+\sum_{i=1}^k\meth^i\Gamma(\tau,\sigma)\meth^{k-i}(\TiPsi_1,\TiPsi_2) 
+\sum_{i_1+i_2+i_3=k}\meth^{i_1}\Gamma(\rho,\sigma,\lambda,\tau)\meth^{i_2}\bm\varphi\meth^{i_3}\bm\varphi,
\end{align*}

\begin{align*}
Q_k&=\varphi_0\meth^{k+1}\varphi_0+\Gamma(\tau,\pi)\meth^k\Psi_0+\rho\meth^k\TiPsi_1
+\sum_{i_1+i_2+i_3+i_4=k,i_4<k}\meth^{i_1}\Gamma(\tau,\pi)^{i_2}\meth^{i_3}\varphi_0\meth^{i_4+1}\varphi_0 \\
&\quad+\sum_{i_1+i_2+i_3+i_4=k,i_3+i_4<k}\meth^{i_1}\Gamma(\tau,\pi)^{i_2}
\meth^{i_3}\Gamma(\tau,\pi,\rho,\sigma)\meth^{i_4}(\Psi_0,\TiPsi_1) \\
&\quad+\sum_{i_1+i_2+i_3+i_4+i_5=k}\meth^{i_1}\Gamma(\tau,\pi)^{i_2}
\meth^{i_3}\Gamma(\tau,\pi,\rho,\sigma)\meth^{i_4}\bm\varphi\meth^{i_5}\bm\varphi .
+\sum_{i_1+i_2=k-1}\meth^{i_1}K\meth^{i_2}\Psi_{0}.
\end{align*}
It follows then that
\begin{align*}
&||(a^{\frac{1}{2}}\meth)^k\Psi_0||^2_{L^2_{sc}(\mathcal{N}_u(0,v))}
+||(a^{\frac{1}{2}}\meth)^k\TiPsi_1||^2_{L^2_{sc}(\mathcal{N}'_v(u_{\infty},u))} \\
&\lesssim||(a^{\frac{1}{2}}\meth)^k\Psi_0||^2_{L^2_{sc}(\mathcal{N}_{u_{\infty}}(0,v))}
+||(a^{\frac{1}{2}}\meth)^k\TiPsi_1||^2_{L^2_{sc}(\mathcal{N}'_0(u_{\infty},u))} \\
&\quad+2\int_0^v\int_{u_{\infty}}^u\frac{a}{|u'|^2}
||(a^{\frac{1}{2}}\meth)^k\Psi_0||_{L^2_{sc}(\mathcal{S}_{u',v'})}
||(a^{\frac{1}{2}})^{k}P_k||_{L^2_{sc}(\mathcal{S}_{u',v'})} \\
&\quad+2\int_0^v\int_{u_{\infty}}^u\frac{a}{|u'|^2}
||(a^{\frac{1}{2}}\meth)^k\TiPsi_1||_{L^2_{sc}(\mathcal{S}_{u',v'})}
||(a^{\frac{1}{2}})^{k}Q_k||_{L^2_{sc}(\mathcal{S}_{u',v'})} \\
&\leq||(a^{\frac{1}{2}}\meth)^k\Psi_0||^2_{L^2_{sc}(\mathcal{N}_{u_{\infty}}(0,v))}
+||(a^{\frac{1}{2}}\meth)^k\TiPsi_1||^2_{L^2_{sc}(\mathcal{N}'_0(u_{\infty},u))} 
+M+N,
\end{align*}
where 
\begin{align*}
M&\equiv2\int_0^v\int_{u_{\infty}}^u\frac{a}{|u'|^2}
||(a^{\frac{1}{2}}\meth)^k\Psi_0||_{L^2_{sc}(\mathcal{S}_{u',v'})}
||(a^{\frac{1}{2}})^{k}P_k||_{L^2_{sc}(\mathcal{S}_{u',v'})} \\
&\leq2\left(\int_0^v\int_{u_{\infty}}^u\frac{a}{|u'|^2}
||(a^{\frac{1}{2}}\meth)^k\Psi_0||^2_{L^2_{sc}(\mathcal{S}_{u',v'})}\right)^{\frac{1}{2}}
\left(\int_0^v\int_{u_{\infty}}^u\frac{a}{|u'|^2}
||(a^{\frac{1}{2}})^{k}P_k||^2_{L^2_{sc}(\mathcal{S}_{u',v'})}\right)^{\frac{1}{2}} \\
&\leq\left(\int_{u_{\infty}}^u\frac{a}{|u'|^2}
||(a^{\frac{1}{2}}\meth)^k\Psi_0||^2_{L^2_{sc}(\mathcal{N}_{u'}(0,v))}\right)^{\frac{1}{2}}J^{\frac{1}{2}}
\left(\int_0^v\int_{u_{\infty}}^u\frac{a}{|u'|^2}
||(a^{\frac{1}{2}})^{k}P_k||^2_{L^2_{sc}(\mathcal{S}_{u',v'})}\right)^{\frac{1}{2}} \\
&\leq\frac{a}{|u|^{\frac{1}{2}}}\bm\Psi[\Psi_0]\left(\int_0^v\int_{u_{\infty}}^u\frac{a}{|u'|^2}
||(a^{\frac{1}{2}})^{k}P_k||^2_{L^2_{sc}(\mathcal{S}_{u',v'})}\right)^{\frac{1}{2}} ,
\end{align*}

\begin{align*}
N&\equiv2\int_0^v\int_{u_{\infty}}^u\frac{a}{|u'|^2}
||(a^{\frac{1}{2}}\meth)^k\TiPsi_1||_{L^2_{sc}(\mathcal{S}_{u',v'})}
||(a^{\frac{1}{2}})^{k}Q_k||_{L^2_{sc}(\mathcal{S}_{u',v'})} \\
&\leq2\left(\int_0^v\int_{u_{\infty}}^u\frac{a}{|u'|^2}
||(a^{\frac{1}{2}}\meth)^k\TiPsi_1||^2_{L^2_{sc}(\mathcal{S}_{u',v'})}\right)^{\frac{1}{2}}
\left(\int_0^v\int_{u_{\infty}}^u\frac{a}{|u'|^2}
||(a^{\frac{1}{2}})^{k}Q_k||^2_{L^2_{sc}(\mathcal{S}_{u',v'})}\right)^{\frac{1}{2}}\\
&\leq a^{\frac{1}{2}}\underline{\bm\Psi}[\TiPsi_1]\left(\int_0^v\int_{u_{\infty}}^u\frac{a}{|u'|^2}
||(a^{\frac{1}{2}})^{k}Q_k||^2_{L^2_{sc}(\mathcal{S}_{u',v'})}\right)^{\frac{1}{2}}.
\end{align*}
It is convenient to define
\begin{align*}
J\equiv\int_0^v\int_{u_{\infty}}^u\frac{a}{|u'|^2}
||(a^{\frac{1}{2}})^{k}P_k||^2_{L^2_{sc}(\mathcal{S}_{u',v'})}, \quad
H\equiv\int_0^v\int_{u_{\infty}}^u\frac{a}{|u'|^2}
||(a^{\frac{1}{2}})^{k}Q_k||^2_{L^2_{sc}(\mathcal{S}_{u',v'})}.
\end{align*}

The rest of the proof proceeds in three steps.

\smallskip
\noindent
\textbf{Step 1}: Estimate~$J$.

Substituting~$P_k$ we have that
\begin{align*}
J&\leq\int_0^v\int_{u_{\infty}}^u\frac{a}{|u'|^2}
||(a^{\frac{1}{2}})^{k}\varphi_0\mathcal{D}^{k+1}\varphi_1||^2_{L^2_{sc}(\mathcal{S}_{u',v'})}
+\int_0^v\int_{u_{\infty}}^u\frac{a}{|u'|^2}
||(a^{\frac{1}{2}})^{k}\ulomega\mathcal{D}^{k}\Psi_0||^2_{L^2_{sc}(\mathcal{S}_{u',v'})}\\
&\quad+\int_0^v\int_{u_{\infty}}^u\frac{a}{|u'|^2}
||(a^{\frac{1}{2}})^{k}\tau\mathcal{D}^{k}\TiPsi_1||^2_{L^2_{sc}(\mathcal{S}_{u',v'})}
+\int_0^v\int_{u_{\infty}}^u\frac{a}{|u'|^2}
||(a^{\frac{1}{2}})^{k}\sigma\mathcal{D}^{k}\TiPsi_2||^2_{L^2_{sc}(\mathcal{S}_{u',v'})} \\
&\quad+\sum_{i=1}^k\int_0^v\int_{u_{\infty}}^u\frac{a}{|u'|^2}
||(a^{\frac{1}{2}})^{k}\mathcal{D}^i\varphi_0\mathcal{D}^{k+1-i}\varphi_1||^2_{L^2_{sc}(\mathcal{S}_{u',v'})} \\
&\quad+\sum_{i=0}^k\int_0^v\int_{u_{\infty}}^u\frac{a}{|u'|^2}
||(a^{\frac{1}{2}})^{k}\mathcal{D}^i\Gamma(\lambda,...)\mathcal{D}^{k-i}\Psi||^2_{L^2_{sc}(\mathcal{S}_{u',v'})}\\
&\quad+\sum_{i_1+i_2+i_3=k}\int_0^v\int_{u_{\infty}}^u\frac{a}{|u'|^2}
||(a^{\frac{1}{2}})^{k}\mathcal{D}^{i_1}\Gamma(\lambda,...)\mathcal{D}^{i_2}\bm\varphi
\mathcal{D}^{i_3}\bm\varphi||^2_{L^2_{sc}(\mathcal{S}_{u',v'})}
=I_1+...+I_7.
\end{align*}
For the term~$I_1$ we have that
\begin{align*}
I_1&\leq\int_0^v\int_{u_{\infty}}^u\frac{a}{|u'|^2}\frac{1}{|u'|^2}||\varphi_0||^2_{L^{\infty}_{sc}(\mathcal{S}_{u',v'})}
||(a^{\frac{1}{2}})^{k}\mathcal{D}^{k+1}\varphi_1||^2_{L^2_{sc}(\mathcal{S}_{u',v'})} \\
&\leq\int_{u_{\infty}}^u\frac{a}{|u'|^4}\bmGamma(\varphi_0)_{0,\infty}^2
\int_0^v||(a^{\frac{1}{2}})^{k}\mathcal{D}^{k+1}\varphi_1||^2_{L^2_{sc}(\mathcal{S}_{u',v'})} \\
&\leq\int_{u_{\infty}}^u\frac{a}{|u'|^4}\bmGamma(\varphi_0)_{0,\infty}^2\bm\varphi[\varphi_1]^2
\leq\frac{a}{|u|^3}\bmGamma(\varphi_0)_{0,\infty}^2\bm\varphi[\varphi_1]^2.
\end{align*}
For~$I_2$ we have that
\begin{align*}
I_2&\leq\int_0^v\int_{u_{\infty}}^u\frac{a}{|u'|^2}\frac{1}{|u'|^2}||\ulomega||^2_{L^{\infty}_{sc}(\mathcal{S}_{u',v'})}
||(a^{\frac{1}{2}})^{k}\mathcal{D}^{k}\Psi_0||^2_{L^2_{sc}(\mathcal{S}_{u',v'})} \\
&\leq\int_{u_{\infty}}^u\frac{a}{|u'|^4}\bmGamma(\ulomega)_{0,\infty}^2
\int_0^v||(a^{\frac{1}{2}})^{k}\mathcal{D}^{k}\Psi_0||^2_{L^2_{sc}(\mathcal{S}_{u',v'})} \\
&\leq\int_{u_{\infty}}^u\frac{a^2}{|u'|^4}\bmGamma(\ulomega)_{0,\infty}^2\bm\Psi[\Psi_0]^2
\leq\frac{a^2}{|u|^3}\bmGamma(\ulomega)_{0,\infty}^2\bm\Psi[\Psi_0]^2.
\end{align*}

For~$I_3$ we see that 
\begin{align*}
I_3&\leq\int_0^v\int_{u_{\infty}}^u\frac{a}{|u'|^2}\frac{1}{|u'|^2}||\tau||^2_{L^{\infty}_{sc}(\mathcal{S}_{u',v'})}
||(a^{\frac{1}{2}})^{k}\mathcal{D}^{k}\TiPsi_1||^2_{L^2_{sc}(\mathcal{S}_{u',v'})} \\
&\leq\int_{u_{\infty}}^u\frac{a}{|u'|^4}\bmGamma(\tau)_{0,\infty}^2
\int_0^v||(a^{\frac{1}{2}})^{k}\mathcal{D}^{k}\TiPsi_1||^2_{L^2_{sc}(\mathcal{S}_{u',v'})} \\
&\leq\int_{u_{\infty}}^u\frac{a}{|u'|^4}\bmGamma(\tau)_{0,\infty}^2\bm\Psi[\TiPsi_1]^2
\leq\frac{a}{|u|^3}\bmGamma(\tau)_{0,\infty}^2\bm\Psi[\TiPsi_1]^2.
\end{align*}
For~$I_4$ we observe that 
\begin{align*}
I_4&\leq\int_0^v\int_{u_{\infty}}^u\frac{a}{|u'|^2}\frac{1}{|u'|^2}||\sigma||^2_{L^{\infty}_{sc}(\mathcal{S}_{u',v'})}
||(a^{\frac{1}{2}})^{k}\mathcal{D}^{k}\TiPsi_2||^2_{L^2_{sc}(\mathcal{S}_{u',v'})} \\
&\leq\int_{u_{\infty}}^u\frac{a^2}{|u'|^4}\bmGamma(\sigma)_{0,\infty}^2
\int_0^v||(a^{\frac{1}{2}})^{k}\mathcal{D}^{k}\TiPsi_2||^2_{L^2_{sc}(\mathcal{S}_{u',v'})} \\
&\leq\int_{u_{\infty}}^u\frac{a^2}{|u'|^4}\bmGamma(\sigma)_{0,\infty}^2\bm\Psi[\TiPsi_2]^2
\leq\frac{a^2}{|u|^3}\bmGamma(\sigma)_{0,\infty}^2\bm\Psi[\TiPsi_2]^2.
\end{align*}
For~$I_5$ we have that
\begin{align*}
I_5&\leq\int_0^v\int_{u_{\infty}}^u\frac{1}{|u'|^2}\frac{1}{|u'|^2}||(a^{\frac{1}{2}}\mathcal{D})^i\varphi_0||^2\times
||(a^{\frac{1}{2}}\mathcal{D})^{k+1-i}\varphi_1||^2 \\
&\leq\int_0^v\int_{u_{\infty}}^u\frac{1}{|u'|^4}a\bmGamma(\varphi_0)^2\bmGamma(\varphi_1)^2
\leq\frac{a}{|u|^3}\mathcal{O}^4.
\end{align*}
For~$I_6$ we have that
\begin{align*}
I_6&\leq\int_0^v\int_{u_{\infty}}^u\frac{a}{|u'|^2}\frac{1}{|u'|^2}||(a^{\frac{1}{2}}\mathcal{D})^i\Gamma(\lambda,...)||^2\times
||(a^{\frac{1}{2}}\mathcal{D})^{k-i}\Psi||^2 \\
&\leq\int_0^v\int_{u_{\infty}}^u\frac{a}{|u'|^4}\frac{|u'|^2}{a}\bmGamma(\lambda)^2a\bmGamma(\Psi_0)^2
\leq\int_0^v\int_{u_{\infty}}^u\frac{a}{|u'|^2}\leq\frac{a}{|u|}\leq1.
\end{align*}
Finally, for this term~$I_7$ we find that
\begin{align*}
I_7&\leq\int_0^v\int_{u_{\infty}}^u\frac{a}{|u'|^2}\frac{1}{|u'|^4}||(a^{\frac{1}{2}}\mathcal{D})^{i_1}\Gamma(\lambda,...)||^2\times
||(a^{\frac{1}{2}}\mathcal{D})^{i_2}\bm\varphi||^2\times||(a^{\frac{1}{2}}\mathcal{D})^{i_2}\bm\varphi||^2 \\
&\leq\int_0^v\int_{u_{\infty}}^u\frac{a}{|u'|^6}\frac{|u'|^2}{a}\mathcal{O}^2a\mathcal{O}^2a\mathcal{O}^2
\leq\frac{a^2}{|u|^3}\mathcal{O}^6.
\end{align*}
Collecting the estimates for~$I_1,\ldots, I_7$ we conclude that
\begin{align*}
J\leq\frac{1}{|u|}\mathcal{O}^4,
\end{align*}
so that, in fact, one has that 
\begin{align*}
M\lesssim\frac{a}{|u|}\bm\Psi[\Psi_0]\mathcal{O}^2.
\end{align*}

\smallskip
\noindent
\textbf{Step 2}: Estimate~$H$.

Substituting~$Q_k$ in the definition for~$H$ we have that
\begin{align*}
H&\leq\int_0^v\int_{u_{\infty}}^u\frac{a}{|u'|^2}
||(a^{\frac{1}{2}})^{k}\varphi_0\mathcal{D}^{k+1}\varphi_0||^2_{L^2_{sc}(\mathcal{S}_{u',v'})} 
+\int_0^v\int_{u_{\infty}}^u\frac{a}{|u'|^2}
||(a^{\frac{1}{2}})^{k}\Gamma(\tau,\pi)\mathcal{D}^{k}\Psi_0||^2_{L^2_{sc}(\mathcal{S}_{u',v'})} \\
&\quad+\int_0^v\int_{u_{\infty}}^u\frac{a}{|u'|^2}
||(a^{\frac{1}{2}})^{k}\rho\mathcal{D}^{k}\TiPsi_1||^2_{L^2_{sc}(\mathcal{S}_{u',v'})} \\
&\quad+\sum_{i_1+i_2+i_3+i_4=k,i_4<k}\int_0^v\int_{u_{\infty}}^u\frac{a}{|u'|^2}
||(a^{\frac{1}{2}})^{k}\mathcal{D}^{i_1}\Gamma^{i_2}\mathcal{D}^{i_3}\varphi_0
\mathcal{D}^{i_4+1}\varphi_0||^2_{L^2_{sc}(\mathcal{S}_{u',v'})} \\
&\quad+\sum_{i_1+i_2+i_3+i_4=k,i_3+i_4<k}\int_0^v\int_{u_{\infty}}^u\frac{a}{|u'|^2}
||(a^{\frac{1}{2}})^{k}\mathcal{D}^{i_1}\Gamma^{i_2}\mathcal{D}^{i_3}\Gamma(\sigma,...)
\mathcal{D}^{i_4}\Psi||^2_{L^2_{sc}(\mathcal{S}_{u',v'})} \\
&\quad+\sum_{i_1+i_2+i_3+i_4+i_5=k}\int_0^v\int_{u_{\infty}}^u\frac{a}{|u'|^2}
||(a^{\frac{1}{2}})^{k}\mathcal{D}^{i_1}\Gamma^{i_2}\mathcal{D}^{i_3}\Gamma(\sigma,...)
\mathcal{D}^{i_4}\varphi\mathcal{D}^{i_5}\varphi||^2_{L^2_{sc}(\mathcal{S}_{u',v'})} \\
&\quad+\sum_{i_1+i_2=k-1}\int_0^v\int_{u_{\infty}}^u\frac{a}{|u'|^2}
||(a^{\frac{1}{2}})^{k}\mathcal{D}^{i_1}K\mathcal{D}^{i_2}\Psi_0||^2_{L^2_{sc}(\mathcal{S}_{u',v'})} 
=I_1+...+I_7.
\end{align*}

For the term~$I_1$ we have that
\begin{align*}
I_1&\leq\int_0^v\int_{u_{\infty}}^u\frac{a}{|u'|^2}\frac{1}{|u'|^2}||\varphi_0||^2_{L^{\infty}_{sc}(\mathcal{S}_{u',v'})}
||(a^{\frac{1}{2}})^{k}\mathcal{D}^{k+1}\varphi_0||^2_{L^2_{sc}(\mathcal{S}_{u',v'})} \\
&\leq\int_{u_{\infty}}^u\frac{a^2}{|u'|^4}\bmGamma(\varphi_0)_{0,\infty}^2
\int_0^v||(a^{\frac{1}{2}})^{k}\mathcal{D}^{k+1}\varphi_0||^2_{L^2_{sc}(\mathcal{S}_{u',v'})} \\
&\leq\int_{u_{\infty}}^u\frac{a^3}{|u'|^4}\bmGamma(\varphi_0)_{0,\infty}^2\bm\varphi[\varphi_0]^2
\leq\frac{a^3}{|u|^3}\bmGamma(\varphi_0)_{0,\infty}^2\bm\varphi[\varphi_0]^2
\leq\bm\varphi[\varphi_0]^2.
\end{align*}
For~$I_2$ we see that
\begin{align*}
I_2&\leq\int_0^v\int_{u_{\infty}}^u\frac{a}{|u'|^2}\frac{1}{|u'|^2}||\Gamma(\tau,\pi)||^2_{L^{\infty}_{sc}(\mathcal{S}_{u',v'})}
||(a^{\frac{1}{2}})^{k}\mathcal{D}^{k}\Psi_0||^2_{L^2_{sc}(\mathcal{S}_{u',v'})} \\
&\leq\int_{u_{\infty}}^u\frac{a}{|u'|^4}\bmGamma(\tau,\pi)_{0,\infty}^2
\int_0^v||(a^{\frac{1}{2}})^{k}\mathcal{D}^{k}\Psi_0||^2_{L^2_{sc}(\mathcal{S}_{u',v'})} \\
&\leq\int_{u_{\infty}}^u\frac{a^2}{|u'|^4}\bmGamma(\tau,\pi)_{0,\infty}^2\bm\Psi[\Psi_0]^2
\leq\frac{a^2}{|u|^3}\bmGamma(\tau,\pi)_{0,\infty}^2\bm\Psi[\Psi_0]^2.
\end{align*}
For~$I_3$ we have that
\begin{align*}
I_3&\leq\int_0^v\int_{u_{\infty}}^u\frac{a}{|u'|^2}\frac{1}{|u'|^2}||\rho||^2_{L^{\infty}_{sc}(\mathcal{S}_{u',v'})}
||(a^{\frac{1}{2}})^{k}\mathcal{D}^{k}\Psi_1||^2_{L^2_{sc}(\mathcal{S}_{u',v'})} \\
&\leq\int_{u_{\infty}}^u\frac{a}{|u'|^4}\bmGamma(\rho)_{0,\infty}^2
\int_0^v||(a^{\frac{1}{2}})^{k}\mathcal{D}^{k}\Psi_1||^2_{L^2_{sc}(\mathcal{S}_{u',v'})} \\
&\leq\int_{u_{\infty}}^u\frac{a}{|u'|^4}\bmGamma(\rho)_{0,\infty}^2\bm\Psi[\Psi_1]^2
\leq\frac{a}{|u|^3}\bmGamma(\rho)_{0,\infty}^2\bm\Psi[\Psi_1]^2.
\end{align*}
For~$I_4$ we have
\begin{align*}
I_4&=\sum_{i_1+...+i_4=k,i_4<k}\int_{u_{\infty}}^u\frac{1}{|u'|^2}
||a^{\frac{k+1}{2}}\mathcal{D}^{j_1}\Gamma...\mathcal{D}^{j_{i_2}}\Gamma
\mathcal{D}^{i_3}\varphi_0\mathcal{D}^{i_4+1}\varphi_0||^2 \\
&\leq\int_{u_{\infty}}^u\frac{1}{|u'|^2}\left(\frac{(a^{\frac{1}{2}})^{i_2}}{|u|^{i_2+1}}||(a^{\frac{1}{2}}\mathcal{D})^{j_1}\Gamma||...
||(a^{\frac{1}{2}}\mathcal{D})^{j_{i_2}}\Gamma||\times
||(a^{\frac{1}{2}}\mathcal{D})^{i_3}\varphi_0||\times||(a^{\frac{1}{2}}\mathcal{D})^{i_4+1}\varphi_0|| \right)^2 \\
&\leq\int_{u_{\infty}}^u\frac{1}{|u'|^2}
(\frac{(a^{\frac{1}{2}})^{i_2}}{|u|^{i_2+1}}\mathcal{O}^{i_2}a^{\frac{1}{2}}\mathcal{O}a^{\frac{1}{2}}\mathcal{O})^2
=\int_{u_{\infty}}^u\frac{a^{i_2+2}}{|u'|^{2i_2+4}}\mathcal{O}^{2(i_2+2)}
\leq\int_{u_{\infty}}^u\frac{a^2}{|u'|^4}\mathcal{O}^4\leq\frac{a^2}{|u|^3}\mathcal{O}^4.
\end{align*}
For~$I_5$ we have that
\begin{align*}
I_5&\leq\int_{u_{\infty}}^u\frac{a}{|u'|^2}\left(\frac{(a^{\frac{1}{2}})^{i_2}}{|u|^{i_2+1}}||(a^{\frac{1}{2}}\mathcal{D})^{j_1}\Gamma||...
||(a^{\frac{1}{2}}\mathcal{D})^{j_{i_2}}\Gamma||\times
||(a^{\frac{1}{2}}\mathcal{D})^{i_3}\Gamma(\sigma,...)||\times||(a^{\frac{1}{2}}\mathcal{D})^{i_4}\Psi|| \right)^2 \\
&\leq\int_{u_{\infty}}^u\frac{a}{|u'|^2}\left(\frac{(a^{\frac{1}{2}})^{i_2}}{|u|^{i_2+1}}
\mathcal{O}^{i_2}a^{\frac{1}{2}}\mathcal{O}\mathcal{O} \right)^2
\leq\int_{u_{\infty}}^u\frac{a}{|u'|^2}\frac{a}{|u'|^2}\mathcal{O}^4
\leq\frac{a^2}{|u|^3}\mathcal{O}^4.
\end{align*}
For~$I_6$ we see that
\begin{align*}
I_6&\leq\int_{u_{\infty}}^u\frac{a}{|u'|^2}\left(\frac{(a^{\frac{1}{2}})^{i_2}}{|u|^{i_2+2}}||(a^{\frac{1}{2}}\mathcal{D})^{j_1}\Gamma||...
||(a^{\frac{1}{2}}\mathcal{D})^{j_{i_2}}\Gamma||\times
||(a^{\frac{1}{2}}\mathcal{D})^{i_3}\Gamma(\sigma,...)||\times||(a^{\frac{1}{2}}\mathcal{D})^{i_4}\bm\varphi||
\times||(a^{\frac{1}{2}}\mathcal{D})^{i_5}\bm\varphi|| \right)^2 \\
&\leq\int_{u_{\infty}}^u\frac{a}{|u'|^2}\left(\frac{(a^{\frac{1}{2}})^{i_2}}{|u|^{i_2+2}}\mathcal{O}^{i_2}a^{\frac{1}{2}}
\mathcal{O}a^{\frac{1}{2}}\mathcal{O}a^{\frac{1}{2}}\mathcal{O}\right)^2
=\int_{u_{\infty}}^u\frac{a}{|u'|^2}\frac{a^{i_2+3}}{|u'|^{i_2+4}}\mathcal{O}^{2i_2+6} \\
&\leq\int_{u_{\infty}}^u\frac{a^4}{|u'|^6}\mathcal{O}^6
\leq\frac{a^4}{|u|^5}\mathcal{O}^6.
\end{align*}
Finally, for~$I_7$ we have that
\begin{align*}
I_7&\leq\sum_{i_1+i_2=k-1}\int_{u_{\infty}}^u\frac{a}{|u'|^2}\frac{1}{|u'|^2}||(a^{\frac{1}{2}}\mathcal{D})^{i_2}K||^2
||(a^{\frac{1}{2}}\mathcal{D})^{i_2}\Psi_0||^2\\
&\leq\int_{u_{\infty}}^u\frac{a}{|u'|^4}a\mathcal{O}^2 
\leq\frac{a^2}{|u|^3}\mathcal{O}^2.
\end{align*}

Collecting the estimates for~$I_1,\ldots, I_7$ we obtain
\begin{align*}
H\lesssim\bm\varphi[\varphi_0]^2+\frac{1}{|u|}\mathcal{O}^6,
\end{align*}
so that 
\begin{align*}
N\leq a^{\frac{1}{2}}\underline{\bm\Psi}[\TiPsi_1](\bm\varphi[\varphi_0]+\frac{\mathcal{O}^3}{|u|^{\frac{1}{2}}}).
\end{align*}

\smallskip
\noindent
\textbf{Step 3}: Summary.

From the estimate
\begin{align*}
&||(a^{\frac{1}{2}}\meth)^k\Psi_0||^2_{L^2_{sc}(\mathcal{N}_u(0,v))}
+||(a^{\frac{1}{2}}\meth)^k\TiPsi_1||^2_{L^2_{sc}(\mathcal{N}'_v(u_{\infty},u))} \\
&\lesssim||(a^{\frac{1}{2}}\meth)^k\Psi_0||^2_{L^2_{sc}(\mathcal{N}_{u_{\infty}}(0,v))}
+||(a^{\frac{1}{2}}\meth)^k\TiPsi_1||^2_{L^2_{sc}(\mathcal{N}'_0(u_{\infty},u))} 
+M+N,
\end{align*}
multiplying by~$\frac{1}{a}$ we find that 
\begin{align*}
&\frac{1}{a}||(a^{\frac{1}{2}}\meth)^k\Psi_0||^2_{L^2_{sc}(\mathcal{N}_u(0,v))}
+\frac{1}{a}||(a^{\frac{1}{2}}\meth)^k\TiPsi_1||^2_{L^2_{sc}(\mathcal{N}'_v(u_{\infty},u))} 
\lesssim\mathcal{I}_0^2+\frac{1}{a}(M+N) \\
&\lesssim\mathcal{I}_0^2+\frac{1}{|u|}\bm\Psi[\Psi_0]\mathcal{O}^2+
\frac{1}{a^{\frac{1}{2}}}\underline{\bm\Psi}[\TiPsi_1](\bm\varphi[\varphi_0]+\frac{\mathcal{O}^3}{|u|^{\frac{1}{2}}}) \\
&\leq\mathcal{I}_0^2+\frac{1}{a^{\frac{1}{4}}}.
\end{align*}
Now, we observe that the rest of the strings
in~$\mathcal{D}^{k_i}\Psi_0$ and~$\mathcal{D}^{k_i}\TiPsi_1$ give rise
to the same estimate. Hence, one concludes that
\begin{align*}
\frac{1}{a}||(a^{\frac{1}{2}}\mathcal{D})^k\Psi_0||^2_{L^2_{sc}(\mathcal{N}_u(0,v))}
+\frac{1}{a}||(a^{\frac{1}{2}}\mathcal{D})^k\TiPsi_1||^2_{L^2_{sc}(\mathcal{N}'_v(u_{\infty},u))} 
\lesssim\mathcal{I}_0^2+\frac{1}{a^{\frac{1}{4}}}.
\end{align*}

\end{proof}

\begin{proposition}
\label{EnergyEstrestWeyl}
For~$0\leq k\leq10$, one has that 
\begin{align*}
||(a^{\frac{1}{2}}\mathcal{D})^k\{\TiPsi_1,\TiPsi_2,\TiPsi_3\}||^2_{L^2_{sc}(\mathcal{N}_u(0,v))}
+||(a^{\frac{1}{2}}\mathcal{D})^k\{\TiPsi_2,\TiPsi_3,\Psi_4\}||^2_{L^2_{sc}(\mathcal{N}'_v(u_{\infty},u))} 
\lesssim\mathcal{I}_0^2+1.
\end{align*} 
\end{proposition}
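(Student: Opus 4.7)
The plan is to apply the energy inequality framework developed in Section~\ref{subsection:preliminary_energy} to the three remaining Bianchi pairs $(\TiPsi_1,\TiPsi_2)$, $(\TiPsi_2,\TiPsi_3)$ and $(\TiPsi_3,\Psi_4)$ in succession, mimicking the structure of the proof of Proposition~\ref{EnergyEstimatePsi01}. For each pair $(\Psi_I,\Psi_{II})$, the relevant Bianchi identities~\eqref{T-weightMasslessStBianchi3}--\eqref{T-weightMasslessStBianchi8} have the schematic form $\mthorn'\Psi_I+\lambda_0\mu\Psi_I-\meth\Psi_{II}=P_0$, $\mthorn\Psi_{II}-\meth'\Psi_I=Q_0$, with $\lambda_0=2s_2(\Psi_I)+1$, so that the general energy estimate applies directly. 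Commuting $\meth^k$ through and collecting lower-order terms via the commutators~\eqref{T-weightMasslessCommutatorSt1}--\eqref{T-weightMasslessCommutatorSt4} will yield source terms $P_k$ and $Q_k$ whose $L^2_{sc}$ bounds must be controlled using the results of Sections~\ref{L2estimate} and~\ref{EllipticEstimate}, together with Propositions~\ref{EnergyEstimatevarphi01}--\ref{EnergyEstimatevarphi12}.

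The first two pairs will be essentially routine. For each term in $P_k$, $Q_k$, one applies the scale-invariant H\"older inequality, distributes the $a^k$ and $|u'|^{-2}$ weights, and estimates the worst factor in $L^2_{sc}$ against the rest in $L^{\infty}_{sc}$. Compared to the $(\Psi_0,\TiPsi_1)$ case, a new feature is that the Bianchi equation for $\mthorn\TiPsi_2$ contains $\sigma\TiPsi_3$ and the one for $\mthorn\TiPsi_3$ contains $\sigma\Psi_4$; these couple to the next-higher pair but can be absorbed by a Gr\"onwall step in the $L^2_{sc}(\mathcal{N}'_v)$ norm for $\TiPsi_3$ and $\Psi_4$, using the smallness of $a^{1/2}\bmGamma(\sigma)_{0,\infty}/|u|$. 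The other matter terms $\varphi\meth\varphi$, $\Gamma\varphi\varphi$ and $\Gamma\bm\Psi$ are managed exactly as in Proposition~\ref{EnergyEstimatePsi01} using $\bmGamma(\varphi_0),\bmGamma(\varphi_1)\lesssim 1$ and the bootstrap assumption on $\bm\Psi$.

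The main obstacle is the pair $(\TiPsi_3,\Psi_4)$, where the Bianchi identity~\eqref{T-weightMasslessStBianchi8} for $\mthorn'\TiPsi_3$ contains the two separately large terms $-6\varphi_2\meth'\varphi_2$ and $-6\varphi_2\bar\varphi_1\mu$. Treating them independently would produce a logarithmic divergence when integrated over $\mathcal{D}_{u,v}$, because $\bmGamma(\varphi_2)$ and $\mu$ both carry the full $|u|/a^{1/2}$ and $|u|^2/a$ factors. The key observation, as anticipated in Remark~\ref{remark:etheth'decay}, is that they combine into $-6\varphi_2\Tivarphi_2$, and the stronger bounds on $\Tivarphi_2$ provided by Proposition~\ref{EnergyEstimatevarphi12} (in particular the $L^2_{sc}(\mathcal{N}'_v)$ integrability of $(a^{1/2}\mathcal{D})^k\Tivarphi_2$) yield the needed extra decay. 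Similarly, the top-order derivative version of $\varphi_2\meth'\varphi_2$ appearing in $P_k$ after commutation will be regrouped as $\varphi_2\Tivarphi_2+\varphi_2\mu\bar\varphi_1$, with the second piece absorbed into an $\Gamma\varphi\varphi$ contribution estimated using the $L^{\infty}_{sc}$ bootstrap on $\mu\bar\varphi_1$ and the top $L^2_{sc}$ estimate on $\varphi_2$.

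After handling these rearrangements, one performs the usual Cauchy--Schwarz split of $M$ and $N$, divides by $a^{2s_2(\Psi_I)+1}$, and collects all contributions. Each resulting factor is either an initial data term bounded by $\mathcal{I}_0$, or is suppressed by a negative power of $a$ or $|u|$ large enough to give $O(a^{-1/4})$ at worst under the bootstrap~\eqref{Hypothesis} and $\mathcal{O}^{20}\leq a^{1/16}$; the cross term $(M+N)$ that still involves the unknown lightcone norm is absorbed by a Gr\"onwall step as in Section~\ref{subsection:preliminary_energy}. One then repeats the argument for every string $\mathcal{D}^{k_i}$, summing over the $2^k$ combinations, and concludes the stated estimate for all three pairs.
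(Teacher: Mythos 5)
Your proposal is correct and follows essentially the same route as the paper: the schematic energy inequality for the pairs $(\TiPsi_1,\TiPsi_2)$, $(\TiPsi_2,\TiPsi_3)$, $(\TiPsi_3,\Psi_4)$ with $\lambda_0=2s_2(\Psi_I)+1$, followed by the Cauchy--Schwarz splitting of $M$ and $N$, and crucially the regrouping of $-6\varphi_2\meth'\varphi_2-6\varphi_2\bar\varphi_1\mu$ into $-6\varphi_2\Tivarphi_2$ (the paper's term $I_4$ in its estimate of $H$), which is exactly the paper's device for avoiding the logarithmic divergence. The only cosmetic difference is that the paper controls the cross-couplings such as $\rho\Psi_4$ and $\Gamma\,\mathcal{D}^k\TiPsi_{2,3}$ directly from the bootstrap norms $\underline{\bm\Psi}\leq\mathcal{O}$ with the gain $\mathcal{O}^2/a\leq 1$, rather than by an extra Gr\"onwall absorption.
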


\begin{proof}
  We use the notation~$(\Psi_I,\Psi_{II})$ to denote any of the
  pairs~$\{(\tilde\Psi_1,\tilde\Psi_2),(\tilde\Psi_2,\tilde\Psi_3),(\tilde\Psi_3,\Psi_4)
  \}$. The Bianchi identities of~$(\Psi_I,\Psi_{II})$ ---i.e
  \eqref{T-weightMasslessStBianchi3},
  \eqref{T-weightMasslessStBianchi4},
  \eqref{T-weightMasslessStBianchi5},
  \eqref{T-weightMasslessStBianchi6},
  \eqref{T-weightMasslessStBianchi7},
  \eqref{T-weightMasslessStBianchi8}--- have the structure
\begin{align*}
\mthorn'\Psi_{I}+(2s_2(\Psi_{I})+1)\mu\Psi_I-\meth\Psi_{II}&=
\bm\varphi\meth\bm\varphi+\Gamma\Psi+\Gamma\bm\varphi\bm\varphi, \\
\mthorn\Psi_{II}-\meth'\Psi_{I}&=
\bm\varphi\meth\bm\varphi+\Gamma\Psi+\Gamma\bm\varphi\bm\varphi.
\end{align*}
Commuting~$\mthorn$ and~$\mthorn'$ with~$\meth^k$ we have that
\begin{align*}
\mthorn'\meth^k\Psi_{I}+(2s_2(\Psi_{I})+k+1)\mu\meth^k\Psi_I-\meth^{k+1}\Psi_{II}&=P_k,\\
\mthorn\meth^k\Psi_{II}-\meth'\meth^k\Psi_{I}&=Q_k,
\end{align*}
where
\begin{align*}
P_k\equiv \sum_{i=0}^k\meth^{i}\bm\varphi\meth^{k+1-i}\bm\varphi
+\sum_{i=0}^k\meth^{i}\Gamma\meth^{k-i}\Psi
+\sum_{i_1+i_2+i_3=k}\meth^{i_1}\Gamma\meth^{i_2}\bm\varphi\meth^{i_3}\bm\varphi
\end{align*}

\begin{align*}
Q_k&\equiv\sum_{i_1+i_2+i_3+i_4=k}\meth^{i_1}\Gamma(\tau,\pi)^{i_2}\meth^{i_3}\bm\varphi\meth^{i_4+1}\bm\varphi+
\sum_{i_1+i_2+i_3+i_4=k}\meth^{i_1}\Gamma(\tau,\pi)^{i_2}\meth^{i_3}\Gamma\meth^{i_4}\Psi \\
&+\sum_{i_1+i_2+i_3+i_4+i_5=k}\meth^{i_1}\Gamma(\tau,\pi)^{i_2}\meth^{i_3}\Gamma\meth^{i_4}\bm\varphi\meth^{i_5}\bm\varphi
+\sum_{i_1+i_2=k-1}\meth^{i_1}K\meth^{i_2}\Psi_{I}.
\end{align*}

It follows then that 
\begin{align*}
&||(a^{\frac{1}{2}}\meth)^k\Psi_{I}||^2_{L^2_{sc}(\mathcal{N}_u(0,v))}
+||(a^{\frac{1}{2}}\meth)^k\Psi_{II}||^2_{L^2_{sc}(\mathcal{N}'_v(u_{\infty},u))} \\
&\lesssim||(a^{\frac{1}{2}}\meth)^k\Psi_{I}||^2_{L^2_{sc}(\mathcal{N}_{u_{\infty}}(0,v))}
+||(a^{\frac{1}{2}}\meth)^k\Psi_{II}||^2_{L^2_{sc}(\mathcal{N}'_0(u_{\infty},u))} \\
&\quad+2\int_0^v\int_{u_{\infty}}^u\frac{a}{|u'|^2}
||(a^{\frac{1}{2}}\meth)^k\Psi_{I}||_{L^2_{sc}(\mathcal{S}_{u',v'})}
||a^{\frac{k}{2}}P_k||_{L^2_{sc}(\mathcal{S}_{u',v'})} \\
&\quad+2\int_0^v\int_{u_{\infty}}^u\frac{a}{|u'|^2}
||(a^{\frac{1}{2}}\meth)^k\Psi_{II}||_{L^2_{sc}(\mathcal{S}_{u',v'})}
||a^{\frac{k}{2}}Q_k||_{L^2_{sc}(\mathcal{S}_{u',v'})}\\
&\leq\mathcal{I}_0^2+M+N,
\end{align*}
where we have defined
\begin{align*}
M\equiv 2\int_0^v\int_{u_{\infty}}^u\frac{a}{|u'|^2}
||(a^{\frac{1}{2}}\meth)^k\Psi_{I}||_{L^2_{sc}(\mathcal{S}_{u',v'})}
||a^{\frac{k}{2}}P_k||_{L^2_{sc}(\mathcal{S}_{u',v'})}
\end{align*}
and
\begin{align*}
N\equiv 2\int_0^v\int_{u_{\infty}}^u\frac{a}{|u'|^2}
||(a^{\frac{1}{2}}\meth)^k\Psi_{I}||_{L^2_{sc}(\mathcal{S}_{u',v'})}
||a^{\frac{k}{2}}Q_k||_{L^2_{sc}(\mathcal{S}_{u',v'})}.
\end{align*}

As in previous proofs, the argument proceeds in three steps.

\smallskip
\noindent
\textbf{Step 1}: Estimate for~$M$.

One readily has that 
\begin{align*}
M&\leq\left(\int_0^v\int_{u_{\infty}}^u\frac{a}{|u'|^2}
||(a^{\frac{1}{2}}\meth)^k\Psi_{I}||^2_{L^2_{sc}(\mathcal{S}_{u',v'})} \right)^{\frac{1}{2}}
\left(\int_0^v\int_{u_{\infty}}^u\frac{a}{|u'|^2}
||a^{\frac{k}{2}}P_k||^2_{L^2_{sc}(\mathcal{S}_{u',v'})} \right)^{\frac{1}{2}} \\
&\leq\int_0^v\int_{u_{\infty}}^u\frac{a}{|u'|^2}
||(a^{\frac{1}{2}}\meth)^k\Psi_{I}||^2_{L^2_{sc}(\mathcal{S}_{u',v'})} 
+\int_0^v\int_{u_{\infty}}^u\frac{a}{|u'|^2}
||a^{\frac{k}{2}}P_k||^2_{L^2_{sc}(\mathcal{S}_{u',v'})} \\
&=\int_{u_{\infty}}^u\frac{a}{|u'|^2}
||(a^{\frac{1}{2}}\meth)^k\Psi_{I}||^2_{L^2_{sc}(\mathcal{N}_{u'}(0,v))}+H ,
\end{align*}
where 
\begin{align*}
H\equiv\int_0^v\int_{u_{\infty}}^u\frac{a}{|u'|^2}
||a^{\frac{k}{2}}P_k||^2_{L^2_{sc}(\mathcal{S}_{u',v'})}.
\end{align*}
Substituting the definition of~$P_k$ we have that 
\begin{align*}
H&\leq\sum_{i=0}^k\int_0^v\int_{u_{\infty}}^u\frac{a}{|u'|^2}
||a^{\frac{k}{2}}\mathcal{D}^i\bm\varphi\mathcal{D}^{k+1-i}\bm\varphi||^2_{L^2_{sc}(\mathcal{S}_{u',v'})}\\
&\quad+\sum_{i=0}^k\int_0^v\int_{u_{\infty}}^u\frac{a}{|u'|^2}
||a^{\frac{k}{2}}\mathcal{D}^i\Gamma\mathcal{D}^{k-i}\Psi||^2_{L^2_{sc}(\mathcal{S}_{u',v'})} \\
&\quad+\sum_{i_1+i_2+i_3=k}\int_0^v\int_{u_{\infty}}^u\frac{a}{|u'|^2}
||a^{\frac{k}{2}}\mathcal{D}^{i_1}\Gamma\mathcal{D}^{i_2}\bm\varphi
\mathcal{D}^{i_3}\bm\varphi||^2_{L^2_{sc}(\mathcal{S}_{u',v'})} \\
&\quad+\sum_{i_1+i_2=k}\int_0^v\int_{u_{\infty}}^u\frac{a}{|u'|^2}
||a^{\frac{k}{2}}\mathcal{D}^{i_1}\varphi_2\mathcal{D}^{i_2}\Tivarphi_2||^2_{L^2_{sc}(\mathcal{S}_{u',v'})} \\
&=I_1+...+I_4.
\end{align*}

\smallskip
\noindent
For the term~$I_1$ we have that 
\begin{align*}
I_1&\leq\int_0^v\int_{u_{\infty}}^u\frac{a}{|u'|^2}
||a^{\frac{k}{2}}\varphi_1\mathcal{D}^{k+1}\varphi_1||^2_{L^2_{sc}(\mathcal{S}_{u',v'})}
+\int_0^v\int_{u_{\infty}}^u\frac{a}{|u'|^2}
||a^{\frac{k}{2}}\varphi_1\mathcal{D}^{k+1}\varphi_2||^2_{L^2_{sc}(\mathcal{S}_{u',v'})} \\
&\quad+\sum_{i=1}^{k-1}\int_0^v\int_{u_{\infty}}^u\frac{a}{|u'|^2}
||a^{\frac{k}{2}}\mathcal{D}^i\bm\varphi\mathcal{D}^{k+1-i}\bm\varphi||^2_{L^2_{sc}(\mathcal{S}_{u',v'})}
=I_{11}+I_{12}+I_{13},
\end{align*}
so that for~$I_{11}$ we estimate
\begin{align*}
I_{11}&\leq\int_0^v\int_{u_{\infty}}^u\frac{a}{|u'|^2}\frac{1}{|u'|^2}||\varphi_1||^2_{L^{\infty}_{sc}(\mathcal{S}_{u',v'})}
||a^{\frac{k}{2}}\mathcal{D}^{k+1}\varphi_1||^2_{L^2_{sc}(\mathcal{S}_{u',v'})} \\
&\leq\int_{u_{\infty}}^u\frac{a}{|u'|^4}\bmGamma(\varphi_1)^2_{0,\infty}
\int_0^v||a^{\frac{k}{2}}\mathcal{D}^{k+1}\varphi_1||^2_{L^2_{sc}(\mathcal{S}_{u',v'})} \\
&\leq\frac{a}{|u|^3}\bmGamma(\varphi_1)^2_{0,\infty}\bm\varphi[\varphi_1]^2
\leq1,
\end{align*}
while for~$I_{12}$ one has that 
\begin{align*}
I_{12}&\leq\int_0^v\int_{u_{\infty}}^u\frac{a}{|u'|^2}\frac{1}{|u'|^2}||\varphi_1||^2_{L^{\infty}_{sc}(\mathcal{S}_{u',v'})}
||a^{\frac{k}{2}}\mathcal{D}^{k+1}\varphi_2||^2_{L^2_{sc}(\mathcal{S}_{u',v'})} \\
&\leq\bmGamma(\varphi_1)^2_{0,\infty}
\int_{u_{\infty}}^u\frac{a}{|u|^4}
||a^{\frac{k}{2}}\mathcal{D}^{k+1}\varphi_2||^2_{L^2_{sc}(\mathcal{S}_{u',v'})} \\
&\leq\bmGamma(\varphi_1)^2_{0,\infty}\frac{1}{a}\underline{\bm\varphi}[\varphi_2]^2
\leq1,
\end{align*}
and for~$I_{13}$ we have
\begin{align*}
I_{13}&\leq\sum_{i=1}^{k-1}\int_0^v\int_{u_{\infty}}^u\frac{1}{|u'|^2}\frac{1}{|u'|^2}
||(a^{\frac{1}{2}}\mathcal{D})^i\bm\varphi||^2_{L^{\infty}_{sc}(\mathcal{S}_{u',v'})}
||(a^{\frac{1}{2}}\mathcal{D})^{k+1-i}\bm\varphi||^2_{L^2_{sc}(\mathcal{S}_{u',v'})} \\
&\leq\int_0^v\int_{u_{\infty}}^u\frac{1}{|u'|^4}\frac{|u'|^2}{a}\mathcal{O}^2\mathcal{O}^2
\leq\frac{\mathcal{O}^4}{|u|}\leq1.
\end{align*}
It follows from the above that~$I_1\lesssim1$.

\smallskip
\noindent
Next, for the term~$I_2$ we have that 
\begin{align*}
I_2&\leq\int_0^v\int_{u_{\infty}}^u\frac{a}{|u'|^2}
||a^{\frac{k}{2}}\Gamma\mathcal{D}^{k}\Psi||^2_{L^2_{sc}(\mathcal{S}_{u',v'})}
+\sum_{i=1}^k\int_0^v\int_{u_{\infty}}^u\frac{a}{|u'|^2}
||a^{\frac{k}{2}}\mathcal{D}^i\Gamma\mathcal{D}^{k-i}\Psi||^2_{L^2_{sc}(\mathcal{S}_{u',v'})}\\
&=I_{21}+I_{22},
\end{align*}
so that for~$I_{21}$ we have the estimate
\begin{align*}
I_{21}&\leq\int_0^v\int_{u_{\infty}}^u\frac{a}{|u'|^2}\frac{1}{|u'|^2}
||\Gamma(\Timu,\ulomega)||^2_{L^{\infty}_{sc}(\mathcal{S}_{u',v'})}
||a^{\frac{k}{2}}\mathcal{D}^{k}\TiPsi_1||^2_{L^2_{sc}(\mathcal{S}_{u',v'})} \\
&\quad+\int_0^v\int_{u_{\infty}}^u\frac{a}{|u'|^2}\frac{1}{|u'|^2}
||\Gamma(\Timu,\sigma,\ulomega,\tau)||^2_{L^{\infty}_{sc}(\mathcal{S}_{u',v'})}
||a^{\frac{k}{2}}\mathcal{D}^{k}\Psi_{2,3,4}||^2_{L^2_{sc}(\mathcal{S}_{u',v'})} \\
&\leq\int_0^v\int_{u_{\infty}}^u\frac{a}{|u'|^4}\frac{|u'|^2}{a^2}\Gamma(\Timu)
\left(||a^{\frac{k}{2}}\mathcal{D}^{k}\TiPsi_1||^2_{L^2_{sc}(\mathcal{S}_{u',v'})} 
+||a^{\frac{k}{2}}\mathcal{D}^{k}\Psi_{2,3,4}||^2_{L^2_{sc}(\mathcal{S}_{u',v'})}\right) \\
&=\frac{1}{a^2}\Gamma(\Timu)\int_0^v\int_{u_{\infty}}^u\frac{a}{|u'|^2}
\left(||a^{\frac{k}{2}}\mathcal{D}^{k}\Psi_1||^2_{L^2_{sc}(\mathcal{S}_{u',v'})} 
+||a^{\frac{k}{2}}\mathcal{D}^{k}\Psi_{2,3,4}||^2_{L^2_{sc}(\mathcal{S}_{u',v'})}\right) \\
&\leq\frac{\Gamma(\Timu)}{a}(\underline{\bm\Psi}[\TiPsi_1]^2
+\frac{1}{a}\underline{\bm\Psi}[\TiPsi_{2,3},\Psi_4]^2)
\lesssim1.
\end{align*}
Moreover, for~$I_{22}$ we have that
\begin{align*}
I_{22}&\leq\sum_{i=1}^k\int_0^v\int_{u_{\infty}}^u\frac{a}{|u'|^2}\frac{1}{|u'|^2}
||(a^{\frac{1}{2}}\mathcal{D})^i\Gamma||^2\times
||(a^{\frac{1}{2}}\mathcal{D})^{k-i}\Psi||^2 \\
&\lesssim\int_0^v\int_{u_{\infty}}^u\frac{a}{|u'|^4}\frac{|u'|^2}{a^2}\mathcal{O}^2a\mathcal{O}^2
\leq\frac{\mathcal{O}^4}{|u|}\leq1.
\end{align*}
Accordingly, we have that~$I_2\lesssim1$.

\smallskip
\noindent
For the term~$I_3$ we have that 
\begin{align*}
I_3&\leq\sum_{i_1+i_2+i_3=k}\int_0^v\int_{u_{\infty}}^u\frac{a}{|u'|^2}\frac{1}{|u'|^4}
||(a^{\frac{1}{2}}\mathcal{D})^{i_1}\Gamma||^2\times||(a^{\frac{1}{2}}\mathcal{D})^{i_2}\bm\varphi||^2\times
||(a^{\frac{1}{2}}\mathcal{D})^{i_3}\bm\varphi||^2_{L^2_{sc}(\mathcal{S}_{u',v'})} \\
&\lesssim\int_0^v\int_{u_{\infty}}^u\frac{a}{|u'|^6}\frac{|u'|^2}{a}\mathcal{O}^2\frac{|u'|^2}{a}\mathcal{O}^2\mathcal{O}^2
=\int_0^v\int_{u_{\infty}}^u\frac{\mathcal{O}^6}{a|u'|^2}\leq\frac{1}{|u|}\leq1.
\end{align*}

\smallskip
\noindent
For~$I_4$ we find that 
\begin{align*}
I_4&\leq\sum_{i_1+i_2=k}\int_0^v\int_{u_{\infty}}^u\frac{a}{|u'|^2}\frac{1}{|u'|^2}
||(a^{\frac{1}{2}}\mathcal{D})^{i_1}\varphi_2||^2\times
||(a^{\frac{1}{2}}\mathcal{D})^{i_2}\Tivarphi_2||^2_{L^2_{sc}(\mathcal{S}_{u',v'})} \\
&\lesssim\int_0^v\int_{u_{\infty}}^u\frac{a}{|u'|^4}\frac{|u'|^2}{a}\mathcal{O}^2\mathcal{O}^2
\leq\frac{\mathcal{O}^4}{|u|}\leq1.
\end{align*}

Collecting the estimates for~$I_1,\ldots, I_4$ we have
that~$H\lesssim1$.

\smallskip
\noindent
\textbf{Step 2}: Estimate for~$N$

In this case one readily sees that 
\begin{align*}
N&\leq\left(\int_0^v\int_{u_{\infty}}^u\frac{a}{|u'|^2}
||(a^{\frac{1}{2}}\meth)^k\Psi_{II}||^2_{L^2_{sc}(\mathcal{S}_{u',v'})} \right)^{\frac{1}{2}}
\left(\int_0^v\int_{u_{\infty}}^u\frac{a}{|u'|^2}
||a^{\frac{k}{2}}Q_k||^2_{L^2_{sc}(\mathcal{S}_{u',v'})} \right)^{\frac{1}{2}} \\
&\leq\int_0^v\int_{u_{\infty}}^u\frac{a}{|u'|^2}
||(a^{\frac{1}{2}}\meth)^k\Psi_{II}||^2_{L^2_{sc}(\mathcal{S}_{u',v'})} 
+\int_0^v\int_{u_{\infty}}^u\frac{a}{|u'|^2}
||a^{\frac{k}{2}}Q_k||^2_{L^2_{sc}(\mathcal{S}_{u',v'})} \\
&=\int_0^v||(a^{\frac{1}{2}}\meth)^k\Psi_{I}||^2_{L^2_{sc}(\mathcal{N}_{v'}(u_{\infty},u))}+J,
\end{align*}
where
\begin{align*}
J\equiv\int_0^v\int_{u_{\infty}}^u\frac{a}{|u'|^2}
||a^{\frac{k}{2}}Q_k||^2_{L^2_{sc}(\mathcal{S}_{u',v'})}.
\end{align*}
Substituting the definition of~$Q_k$ we have that 
\begin{align*}
J&\equiv\sum_{i_1+i_2+i_3+i_4=k}\int_0^v\int_{u_{\infty}}^u\frac{a}{|u'|^2}
||a^{\frac{k}{2}}\mathcal{D}^{i_1}\Gamma(\tau,\pi)^{i_2}\mathcal{D}^{i_3}\bm\varphi
\mathcal{D}^{i_4+1}\bm\varphi||^2_{L^2_{sc}(\mathcal{S}_{u',v'})} \\
&+\sum_{i_1+i_2+i_3+i_4=k}\int_0^v\int_{u_{\infty}}^u\frac{a}{|u'|^2}
||a^{\frac{k}{2}}\mathcal{D}^{i_1}\Gamma(\tau,\pi)^{i_2}\mathcal{D}^{i_3}\Gamma
\mathcal{D}^{i_4}\Psi||^2_{L^2_{sc}(\mathcal{S}_{u',v'})} \\
&+\sum_{i_1+i_2+i_3+i_4+i_5=k}\int_0^v\int_{u_{\infty}}^u\frac{a}{|u'|^2}
||a^{\frac{k}{2}}\mathcal{D}^{i_1}\Gamma(\tau,\pi)^{i_2}\mathcal{D}^{i_3}\Gamma
\mathcal{D}^{i_4}\bm\varphi\mathcal{D}^{i_5}\bm\varphi||^2_{L^2_{sc}(\mathcal{S}_{u',v'})} \\
&+\sum_{i_1+i_2=k-1}\int_0^v\int_{u_{\infty}}^u\frac{a}{|u'|^2}
||a^{\frac{k}{2}}\mathcal{D}^{i_1}K\mathcal{D}^{i_2}\Psi||^2_{L^2_{sc}(\mathcal{S}_{u',v'})} 
=I_1+...+I_4.
\end{align*}
We now proceed to estimate each of the terms~$I_1,\ldots,I_4$. 

\smallskip
\noindent
For the term~$I_1$ we have that 
\begin{align*}
I_1&\leq\int_0^v\int_{u_{\infty}}^u\frac{a}{|u'|^2}
||a^{\frac{k}{2}}\varphi_1\mathcal{D}^{k+1}\varphi_0||^2_{L^2_{sc}(\mathcal{S}_{u',v'})}
+\int_0^v\int_{u_{\infty}}^u\frac{a}{|u'|^2}
||a^{\frac{k}{2}}\varphi_1\mathcal{D}^{k+1}\varphi_1||^2_{L^2_{sc}(\mathcal{S}_{u',v'})} \\
&+\int_0^v\int_{u_{\infty}}^u\frac{a}{|u'|^2}
||a^{\frac{k}{2}}\varphi_2\mathcal{D}^{k+1}\varphi_1||^2_{L^2_{sc}(\mathcal{S}_{u',v'})} \\
&+\sum_{i_1+i_2+i_3+i_4=k,i_4<k}\int_0^v\int_{u_{\infty}}^u\frac{a}{|u'|^2}
||a^{\frac{k}{2}}\mathcal{D}^{i_1}\Gamma(\tau,\pi)^{i_2}\mathcal{D}^{i_3}\bm\varphi
\mathcal{D}^{i_4+1}\bm\varphi||^2_{L^2_{sc}(\mathcal{S}_{u',v'})} \\
&=I_{11}+I_{12}+I_{13}+I_{14},
\end{align*}
where for~$I_{11}$ one sees that
\begin{align*}
I_{11}&\leq\int_0^v\int_{u_{\infty}}^u\frac{a}{|u'|^2}\frac{1}{|u'|^2}||\varphi_1||^2_{L^{\infty}_{sc}(\mathcal{S}_{u',v'})}
||a^{\frac{k}{2}}\mathcal{D}^{k+1}\varphi_0||^2_{L^2_{sc}(\mathcal{S}_{u',v'})} \\
&\leq\int_{u_{\infty}}^u\frac{a^2}{|u'|^4}\bmGamma(\varphi_1)^2_{0,\infty}
\frac{1}{a}\int_0^v||a^{\frac{k}{2}}\mathcal{D}^{k+1}\varphi_0||^2_{L^2_{sc}(\mathcal{S}_{u',v'})} \\
&\leq\frac{a^2}{|u|^3}\bmGamma(\varphi_1)^2_{0,\infty}\bm\varphi[\varphi_0]^2
\leq1.
\end{align*}
For~$I_{12}$ we have that 
\begin{align*}
I_{12}&\leq\int_0^v\int_{u_{\infty}}^u\frac{a}{|u'|^2}\frac{1}{|u'|^2}||\varphi_1||^2_{L^{\infty}_{sc}(\mathcal{S}_{u',v'})}
||a^{\frac{k}{2}}\mathcal{D}^{k+1}\varphi_1||^2_{L^2_{sc}(\mathcal{S}_{u',v'})} \\
&\leq\int_{u_{\infty}}^u\frac{a}{|u'|^4}\bmGamma(\varphi_1)^2_{0,\infty}
\int_0^v||a^{\frac{k}{2}}\mathcal{D}^{k+1}\varphi_0||^2_{L^2_{sc}(\mathcal{S}_{u',v'})} \\
&\leq\frac{a}{|u|^3}\bmGamma(\varphi_1)^2_{0,\infty}\bm\varphi[\varphi_1]^2
\leq1.
\end{align*}
For~$I_{13}$ we see that 
\begin{align*}
I_{13}&\leq\int_0^v\int_{u_{\infty}}^u\frac{a}{|u'|^2}\frac{1}{|u'|^2}||\varphi_2||^2_{L^{\infty}_{sc}(\mathcal{S}_{u',v'})}
||a^{\frac{k}{2}}\mathcal{D}^{k+1}\varphi_1||^2_{L^2_{sc}(\mathcal{S}_{u',v'})} \\
&\leq\int_{u_{\infty}}^u\frac{a}{|u'|^4}\frac{|u'|^2}{a}\bmGamma(\varphi_2)^2_{0,\infty}
\int_0^v||a^{\frac{k}{2}}\mathcal{D}^{k+1}\varphi_0||^2_{L^2_{sc}(\mathcal{S}_{u',v'})} \\
&\leq\frac{1}{|u|}\bmGamma(\varphi_2)^2_{0,\infty}\bm\varphi[\varphi_1]^2
\leq1,
\end{align*}
while for~$I_{14}$ we have that 
\begin{align*}
I_{14}\leq\int_0^v\int_{u_{\infty}}^u\frac{1}{|u'|^2}\left(\frac{a^{\frac{i_2}{2}}}{|u'|^{i_2+1}}\mathcal{O}^{i_2}
(a^{\frac{1}{2}}\mathcal{O}^2+\mathcal{O}^2+\frac{|u|}{a^{\frac{1}{2}}}\mathcal{O}^2) \right)^2 
\leq\frac{1}{|u|}\leq1.
\end{align*}

\smallskip
\noindent
For the term~$I_2$ we have that 
\begin{align*}
I_2&\leq\int_0^v\int_{u_{\infty}}^u\frac{a}{|u'|^2}
||a^{\frac{k}{2}}\lambda\mathcal{D}^{k}\Psi_0||^2_{L^2_{sc}(\mathcal{S}_{u',v'})}
+\int_0^v\int_{u_{\infty}}^u\frac{a}{|u'|^2}
||a^{\frac{k}{2}}\Gamma(\lambda,...)\mathcal{D}^{k}(\TiPsi_{1,2,3},\Psi_4)||^2_{L^2_{sc}(\mathcal{S}_{u',v'})}  \\
&\quad+\sum_{i_1+i_2+i_3+i_4=k,i_4<k}\int_0^v\int_{u_{\infty}}^u\frac{a}{|u'|^2}
||a^{\frac{k}{2}}\mathcal{D}^{i_1}\Gamma(\tau,\pi)^{i_2}\mathcal{D}^{i_3}\Gamma
\mathcal{D}^{i_4}\Psi||^2_{L^2_{sc}(\mathcal{S}_{u',v'})} \\
&=I_{21}+I_{22}+I_{23},
\end{align*}
where for~$I_{21}$ one sees  that 
\begin{align*}
I_{21}&\leq\int_0^v\int_{u_{\infty}}^u\frac{a}{|u'|^2}\frac{1}{|u'|^2}||\lambda||^2_{L^{\infty}_{sc}(\mathcal{S}_{u',v'})}
||a^{\frac{k}{2}}\mathcal{D}^{k}\Psi_0||^2_{L^2_{sc}(\mathcal{S}_{u',v'})} \\
&\lesssim\int_{u_{\infty}}^u\frac{a^2}{|u'|^4}\frac{|u'|^2}{a}\bmGamma(\lambda)_{0,\infty}
\frac{1}{a}\int_0^v||a^{\frac{k}{2}}\mathcal{D}^{k}\Psi_0||^2_{L^2_{sc}(\mathcal{S}_{u',v'})} \\
&\leq\int_{u_{\infty}}^u\frac{a}{|u'|^2}\bmGamma(\lambda)_{0,\infty}\bm\Psi[\Psi_0]
\lesssim\frac{a}{|u|}\leq1.
\end{align*}
Moreover, for~$I_{22}$ we have that 
\begin{align*}
  I_{22}\leq
  &\int_0^v\int_{u_{\infty}}^u\frac{a}{|u'|^2}\frac{1}{|u'|^2}
    \left(||\Gamma(\lambda...)||^2_{L^{\infty}_{sc}(\mathcal{S}_{u',v'})}
    ||a^{\frac{k}{2}}\mathcal{D}^{k}\TiPsi_{1,2,3}||^2_{L^2_{sc}(\mathcal{S}_{u',v'})} 
    +||\rho||^2_{L^{\infty}_{sc}(\mathcal{S}_{u',v'})}
    ||a^{\frac{k}{2}}\mathcal{D}^{k}\Psi_4||^2_{L^2_{sc}(\mathcal{S}_{u',v'})} \right) \\
  &\leq\int_{u_{\infty}}^u\frac{a}{|u'|^4}\frac{|u'|^2}{a}\bmGamma(\lambda)^2_{0,\infty}
        \int_0^v||a^{\frac{k}{2}}\mathcal{D}^{k}\TiPsi_{1,2,3}||^2_{L^2_{sc}(\mathcal{S}_{u',v'})}
        +\int_0^v\frac{\bmGamma(\rho)_{0,\infty}^2}{|u|^2}\int_{u_{\infty}}^u\frac{a}{|u'|^2}
        ||a^{\frac{k}{2}}\mathcal{D}^{k}\Psi_4||^2_{L^2_{sc}(\mathcal{S}_{u',v'})} \\
  &\leq\frac{1}{|u|}\bmGamma(\lambda)^2_{0,\infty}\bm\Psi[\TiPsi_{1,2,3}]^2
        +\frac{1}{|u|^2}\bmGamma(\rho)^2_{0,\infty}\underline{\bm\Psi}[\Psi_4]^2
        \lesssim1.
\end{align*}
For~$I_{23}$ we have that 
\begin{align*}
I_{23}\leq\int_0^v\int_{u_{\infty}}^u\frac{a}{|u'|^2}\left(\frac{a^{\frac{i_2}{2}}}{|u'|^{i_2+1}}\mathcal{O}^{i_2}
(a^{\frac{1}{2}}\bmGamma(\Psi_0)+\bmGamma(\Psi_{II}))\frac{|u|}{a^{\frac{1}{2}}}\bmGamma(\lambda)\right)^2 
\leq\frac{a}{|u|}\leq1,
\end{align*}
where we have made use of~$\bmGamma(\Psi_0,\lambda)\lesssim1$.

\smallskip
\noindent
For the term~$I_3$ one has that the worst term in~$\Gamma\bm\varphi^2$
is~$\mu\varphi_0^2$. One then has that 
\begin{align*}
I_{3}\leq\int_0^v\int_{u_{\infty}}^u\frac{a}{|u'|^2}\left(\frac{a^{\frac{i_2}{2}}}{|u'|^{i_2+2}}\mathcal{O}^{i_2}
\frac{|u'|^2}{a}\mathcal{O}a\mathcal{O}^2\right)^2 
\leq\frac{a}{|u|}\leq1,
\end{align*}
where we have made use of~$\bmGamma(\mu)\lesssim1$
and~$\bmGamma(\varphi_0)\lesssim\bm\varphi[\varphi_1]+1\lesssim1$.

\smallskip
\noindent
\textbf{$I_4$}: This integral can also be bounded by 1. 

Collecting the estimates for~$I_1,\ldots, I_4$ we conclude
that~$J\lesssim1$.

\smallskip
\noindent
\textbf{Summary.} From the estimates for~$M$ and~$N$ we have that 
\begin{align*}
||(a^{\frac{1}{2}}\meth)^k\Psi_{I}||^2_{L^2_{sc}(\mathcal{N}_u(0,v))}
+||(a^{\frac{1}{2}}\meth)^k\Psi_{II}||^2_{L^2_{sc}(\mathcal{N}'_v(u_{\infty},u))} 
\lesssim\mathcal{I}_0^2+1.
\end{align*} 
Now, the rest of the strings of derivatives in~$\mathcal{D}^{k_i}$
give rise to the same result and hence we have that 
\begin{align*}
||(a^{\frac{1}{2}}\mathcal{D})^k\Psi_{I}||^2_{L^2_{sc}(\mathcal{N}_u(0,v))}
+||(a^{\frac{1}{2}}\mathcal{D})^k\Psi_{II}||^2_{L^2_{sc}(\mathcal{N}'_v(u_{\infty},u))} 
\lesssim\mathcal{I}_0^2+1.
\end{align*} 

\end{proof}

Combining the estimates in last three sections we have improved the
bootstrap assumption
\begin{align*}
\bmGamma+\bm\Psi+\bmvarphi\leq\mathcal{O}
\end{align*}
by 
\begin{align*}
\bmGamma+\bm\Psi+\bmvarphi\lesssim(\mathcal{I}_0)^2+\mathcal{I}_0+1. 
\end{align*}
Hence, following the discussion in Section~\ref{BootstrapAssumption}
we obtain our main existence result. Namely, one has that

\begin{theorem}
  [\textbf{\em Existence result}]
  \label{Theorem:ExistenceDomain}
Given~$\mathcal{I}$, there exists a sufficiently large
$a_0=a_0(\mathcal{I})$, such that for~$0\leq a_0 \leq a$ and initial
data 
\begin{align*}
\mathcal{I}_0\equiv\sum_{j=0}^1\sum_{i=0}^{15}\frac{1}{a^{\frac{1}{2}}}
||\mthorn^j(|u_{\infty}|\mathcal{D})^i(\sigma,\varphi_0)||_{L^{2}(\mathcal{S}_{u_{\infty},v})}\leq\mathcal{I},
\end{align*}
along the outgoing initial null hypersurface~$u=u_{\infty}$ and
Minkowskian initial data along the ingoing initial null
hypersurface~$v=0$, then the Einstein-scalar field system has a
solution in the causal diamond
\begin{align*}
\mathbb{D}=\big\{(u,v)|u_{\infty}\leq u\leq -a/4, \ \ 0\leq v\leq 1\big\}.
\end{align*}
\end{theorem}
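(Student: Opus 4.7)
The plan is to deploy the continuity method (bootstrap argument) laid out in Section~\ref{BootstrapAssumption}, combining it with the hierarchy of estimates constructed in Sections~\ref{L2estimate}--\ref{EnergyEstimate}. First, I would fix a large constant $\mathcal{O}$ with $\mathcal{I}_0 \ll \mathcal{O}$ and $\mathcal{O}^{20} \leq a^{1/16}$, and introduce the set
\[
\mathbf{I} = \{u \mid u_\infty \leq u \leq -a/4,\ \bmGamma + \bm\Psi + \bmvarphi \leq \mathcal{O}\ \text{holds for all}\ 0 \leq v \leq 1\}.
\]
The goal is then to show $\mathbf{I} = [u_\infty, -a/4]$ by a topological argument: $\mathbf{I}$ is nonempty, closed, and open in this interval.

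For \emph{nonemptiness and closedness}, I would invoke the local existence theorem (Theorem~\ref{MainTheoremEFE} adapted to the Einstein-scalar system through Lemma~\ref{Lemma3}) together with the initial-data bound~\eqref{InitialData}. Given the freely specifiable initial data $(\sigma, \varphi_0)$ on $\mathcal{N}_\star$ and Minkowskian data on $\mathcal{N}'_\star$, the CIVP admits a unique smooth solution in a small neighbourhood of $\mathcal{S}_\star$, and continuity of the bootstrap norms in $u$ ensures that the set $\mathbf{I}$ is closed and contains an initial interval $[u_\infty, u_\infty + \varepsilon]$ for some $\varepsilon>0$.

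For \emph{openness}, which is the substantive analytic content, I would chain together the three blocks of estimates proven in the preceding sections. Assuming the bootstrap hypothesis $\bmGamma + \bm\Psi + \bmvarphi \leq \mathcal{O}$ on $\mathbf{I}$, the $L^2(\mathcal{S})$ estimates of Section~\ref{L2estimate} yield
\[
\bmGamma \lesssim \bm\Psi^2 + \bm\Psi + \bmvarphi^2 + \bmvarphi + \mathcal{I}_0 + 1
\]
up to 10 angular derivatives of the connection; the elliptic estimates of Section~\ref{EllipticEstimate} extend this control to the 11th derivative via the new auxiliary quantities $\Titau$, $\Tipi$ and $\tilde\ulomega$ that absorb the Weyl components and circumvent the loss of regularity; finally, the energy estimates of Section~\ref{EnergyEstimate} close the system by proving
\[
\bm\Psi + \bmvarphi \lesssim \mathcal{I}_0 + 1.
\]
Combining these three yields $\bmGamma + \bm\Psi + \bmvarphi \lesssim (\mathcal{I}_0)^2 + \mathcal{I}_0 + 1$, which, for $a$ sufficiently large depending on $\mathcal{I}$, strictly improves $\mathcal{O}$ and thereby allows extension of the solution in $u$, proving openness.

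The main obstacle is not any single step of this assembly, since the hard analytic work is already done in Sections~\ref{L2estimate}--\ref{EnergyEstimate}; rather, it is the careful bookkeeping required to verify that the smallness margins compound consistently. In particular, one must check that the scale-invariant constants appearing throughout the chain remain universal (independent of $a$), so that choosing $a_0 = a_0(\mathcal{I})$ large enough genuinely closes the bootstrap. The delicate points are: (i) the treatment of the pair $(\varphi_1, \varphi_2)$ via the auxiliary variable $\Tivarphi_2 = \meth'\varphi_2 + \mu\bar\varphi_1$, whose leading pieces cancel due to the choice of Minkowskian ingoing data and the coefficient structure of equations~\eqref{EOMMasslessScalarStT-weight3}--\eqref{EOMMasslessScalarStT-weight4}; (ii) the handling of the Weyl components where the renormalisations $\TiPsi_1, \TiPsi_2, \TiPsi_3$ are essential because the Einstein-scalar stress tensor is not tracefree (unlike Einstein-Maxwell); and (iii) the closing of the $\mu$-large terms in the energy inequalities, which requires the signature book-keeping to be perfectly aligned with the Hodge pairing. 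Once these are secured, the topological closure argument completes the proof and the solution exists throughout $\mathbb{D}$.
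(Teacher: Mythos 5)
Your proposal is correct and follows essentially the same route as the paper: the proof of the existence theorem is precisely the bootstrap/continuity argument of Section~\ref{BootstrapAssumption}, with closedness supplied by local existence and the initial-data bound~\eqref{InitialData}, and openness supplied by chaining the $L^2(\mathcal{S})$, elliptic and energy estimates of Sections~\ref{L2estimate}--\ref{EnergyEstimate} to improve the bound $\mathcal{O}$ to $(\mathcal{I}_0)^2+\mathcal{I}_0+1$. The delicate points you flag (the auxiliary variable $\Tivarphi_2$, the renormalised Weyl components, and the $\mu$-large terms) are exactly the ones the paper isolates.
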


\section{Trapped surface formation}
\label{TrappedSurface}

We are now in a position to provide the main argument for the
formation of trapped surfaces in the existence domain~$\mathscr{D}$.

We fix~$(x^2,x^3)\in\mathcal{S}$ and carry out a pointwise
analysis. From the estimate of~$\ulomega$,
\begin{align*}
||\ulomega||_{L^{\infty}(\mathcal{S}_{u,v})}\lesssim\frac{a}{|u|^3}
\end{align*}
and~$\mthorn'Q=\ulomega Q$, we have that 
\begin{align*}
|Q|\lesssim1+\int_{u_{\infty}}^u||\mthorn'Q||_{L^{\infty}(\mathcal{S}_{u,v})} ,
\end{align*}
so that
\begin{align*}
|Q|-1\lesssim\frac{\mathcal{O}}{|u|}.
\end{align*}
This means~$Q$ and~$Q^{-1}$ are both close to~$1$.

For~$P^{\mathcal{A}}$, we use the equation
\begin{align*}
\mthorn P^{\mathcal{A}}=\rho
P^{\mathcal{A}}+\sigma\bar{P}^{\mathcal{A}}
\end{align*}
and obtain that  
\begin{align*}
|P^{\mathcal{A}}(u,v)|\lesssim&|P^{\mathcal{A}}(u,0)|+\int_0^v|\mthorn P^{\mathcal{A}}| \\
\leq&|P^{\mathcal{A}}(u,0)|+\frac{a^{\frac{1}{2}}}{|u|}\mathcal{O}.
\end{align*}
This means that in the course of the evolution~$P^{\mathcal{A}}$
remains close to its initial value.

For~$C^{\mathcal{A}}$ we use the equation
\begin{align*}
  \mthorn
  C^{\mathcal{A}}=(\bar\tau+\pi)P^{\mathcal{A}}+(\tau+\bar\pi)\bar{P}^{\mathcal{A}},
\end{align*}
so that 
\begin{align*}
|C^{\mathcal{A}}|\lesssim\int_0^v|\mthorn C^{\mathcal{A}}|\lesssim\frac{a^{\frac{1}{2}}}{|u|^2}\mathcal{O}.
\end{align*}

Now, defining
\begin{align*}
F\equiv |u|^2Q^{-1}(\sigma\bar\sigma+3\varphi_0^2).
\end{align*}
and using~$\bmn=\partial_u+C^{\mathcal{A}}\partial_{\mathcal{A}}$, we
have
\begin{align*}
\partial_uF=\mthorn'F-C^{\mathcal{A}}\partial_{\mathcal{A}}F\equiv I_1+I_2.
\end{align*}
We now proceed to estimate the terms~$I_1$ and~$I_2$.

\smallskip
\noindent
\textbf{Estimate for~$I_1$.} For~$\sigma$, $\varphi_0$ and~$Q$ we
make use of their~$\mthorn'$ equations
\begin{align*}
\mthorn'\sigma&=\meth\tau-\mu\sigma-\bar\lambda\rho-\tau^2+\ulomega\sigma
-3\varphi_{1}^2, \\
\mthorn'\varphi_{0}&=\meth\bar\varphi_{1}+(\ulomega-\mu)\varphi_{0}
+\rho\varphi_{2}-\bar\tau\varphi_{1}-\tau\bar\varphi_{1}, \\
\mthorn'Q&=\ulomega Q,
\end{align*}
so that
\begin{align*}
\mthorn'F&=|u|^2Q^{-1}[-\frac{2\sigma\bar\sigma+6\varphi_0^2}{|u|}+\ulomega(\sigma\bar\sigma+3\varphi_0^2)
+\sigma\mthorn'\bar\sigma+\bar\sigma\mthorn'\sigma+6\varphi_0\mthorn'\varphi_0] \\
&=|u|^2Q^{-1}(-2\Timu\sigma\bar\sigma-6\Timu\varphi_0^2+6\rho\varphi_0\varphi_2
-3\sigma\bar\varphi_1^2-\lambda\rho\sigma-3\bar\sigma\varphi_1^2-\bar\lambda\rho\bar\sigma
-6\tau\varphi_0\bar\varphi_1 \\
&+\bar\sigma\tau^2-6\varphi_0\varphi_1\bar\tau-\sigma\bar\tau^2+9\ulomega\varphi_0^2
+3\ulomega\sigma\bar\sigma+\bar\sigma\meth\tau+6\varphi_0\meth'\varphi_1+\sigma\meth'\bar\tau) \\
&=|u|^2Q^{-1}M,
\end{align*}
where we have made use of~$\Timu=\mu+1/|u|$. Next, we make use of the
results in Section~\ref{L2estimate} and obtain
\begin{align*}
|M|\lesssim\frac{a}{|u|^4}.
\end{align*}
It follows from the latter that 
\begin{align*}
|I_1|\lesssim\frac{a}{|u|^2}.
\end{align*}

\smallskip
\noindent
\textbf{Estimate for~$I_2$.} In this case we make use of the
constraint~$\delta Q=(\tau-\bar\pi)Q$ and that on~$\mathcal{S}_{u,v}$
one has
\begin{align*}
|\partial_{\mathcal{A}}f|\leq||\mathcal{D}f||_{L^{\infty}(\mathcal{S}_{u,v})},
\end{align*}
so that 
\begin{align*}
|I_2|&\leq|u|^2|C^{\mathcal{A}}|\cdot
|\partial_{\mathcal{A}}(Q^{-1}(\sigma\bar\sigma+3\varphi_0^2))| \\
&\leq|u|^2|C^{\mathcal{A}}|\cdot(||\mathcal{D}Q^{-1}||_{L^{\infty}(\mathcal{S}_{u,v})}
||(\sigma\bar\sigma+3\varphi_0^2)||_{L^{\infty}(\mathcal{S}_{u,v})}
+|Q^{-1}|\cdot||\mathcal{D}(\sigma\bar\sigma+3\varphi_0^2)||_{L^{\infty}(\mathcal{S}_{u,v})} ) \\
&\leq|u|^2|C^{\mathcal{A}}|\cdot|Q^{-1}|\cdot
(||\tau,\pi||_{L^{\infty}(\mathcal{S}_{u,v})}||(\sigma\bar\sigma+3\varphi_0^2)||_{L^{\infty}(\mathcal{S}_{u,v})}
+||\mathcal{D}(\sigma\bar\sigma+3\varphi_0^2)||_{L^{\infty}(\mathcal{S}_{u,v})}) \\
&\lesssim|u|^2\frac{a^{\frac{1}{2}}}{|u|^2}\mathcal{O}(\frac{a^{\frac{3}{2}}}{|u|^4}+\frac{a}{|u|^3})
\leq\frac{a^{\frac{3}{2}}}{|u|^3}\mathcal{O}
\leq\frac{a^{\frac{3}{2}}}{|u|^2}.
\end{align*}

Combining the results above we obtain
\begin{align*}
  |\partial_uF|\leq\frac{a^{\frac{3}{2}}}{|u|^2}\ll\frac{a^{\frac{7}{4}}}{|u|^2},
\end{align*}
so that
\begin{align*}
-\frac{a^{\frac{7}{4}}}{|u|^2}\leq\partial_uF\leq\frac{a^{\frac{7}{4}}}{|u|^2} .
\end{align*}
The latter, in turn, leads to
\begin{align*}
|u|^2Q^{-1}(\sigma\bar\sigma+3\varphi_0^2)|_{u,v}-|u_{\infty}|^2(\sigma\bar\sigma+3\varphi_0^2)|_{u=u_{\infty},v}
\geq-\frac{a^{\frac{7}{4}}}{|u|}+\frac{a^{\frac{7}{4}}}{|u_{\infty}|}.
\end{align*}
Hence, for sufficiently large~$a$, we have
\begin{align*}
  & \int_0^1\left( |u|^2Q^{-1}(\sigma\bar\sigma+3\varphi_0^2)|\right)(u,v',x^2,x^3) \mathrm{d}v'\\
&\quad\geq\int_0^1\left( |u_{\infty}|^2(\sigma\bar\sigma+3\varphi_0^2)|\right)(u_{\infty},v',x^2,x^3) \mathrm{d}v'-\frac{a^{\frac{7}{4}}}{|u|} \\
&\geq a-\frac{a^{\frac{7}{4}}}{|u|}
\geq a-\frac{4a^{\frac{7}{4}}}{a}
\geq \frac{a}{2}.
\end{align*}
Consequently, when~$u=-\frac{a}{4}$ we have
\begin{align*}
&\quad\frac{a^2}{16}\int_0^1\left(Q^{-1}(\sigma\bar\sigma+3\varphi_0^2)|\right)(-\frac{a}{4},v',x^2,x^3) \mathrm{d}v'\geq \frac{a}{2} \\
&\Rightarrow 
\int_0^1\left(Q^{-1}(\sigma\bar\sigma+3\varphi_0^2)|\right)(-\frac{a}{4},v',x^2,x^3) \mathrm{d}v'\geq \frac{8}{a}.
\end{align*}

\smallskip
\noindent
\textbf{Behaviour of the expansion.} In this step we make use of the
structure equation for~$\rho$ ---namely,
\begin{align*}
\mthorn\rho=\rho^2+\sigma\bar\sigma+3\varphi_{0}^2.
\end{align*}
Recall that the outgoing expansion~$\theta_{\bml}$ is defined by
\begin{align*}
\theta_{\bml}\equiv\sigma^{ab}\nabla_a\l_b=-\rho-\bar\rho=-2\rho.
\end{align*}
Its initial value is Minkowskian, namely we have that 
\begin{align*}
\rho(u,0,x^2,x^3)=-\frac{1}{|u|}. 
\end{align*}

We now show that the outgoing expansion on~$\mathcal{N}_{u_{\infty}}$
is positive.  From~$\bml=Q\partial_v=\partial_v$, observing that~$Q=1$
on~$\mathcal{N}_{u_{\infty}}$, we have
\begin{align*}
\rho(u_{\infty},v,x^2,x^3)
&=\rho(u_{\infty},0,x^2,x^3)+\int_0^1\frac{\partial\rho}{\partial v} \mathrm{d}v 
=-\frac{1}{|u_{\infty}|}+\int_0^v\mthorn\rho \mathrm{d}v' \\
&= -\frac{1}{|u_{\infty}|}+\int_0^v(\sigma\bar\sigma+3\varphi_{0}^2)(-\frac{a}{4},v',x^2,x^3) \mathrm{d}v'  \\
  &\lesssim-\frac{1}{|u_{\infty}|}+\frac{a}{|u_{\infty}|^2}<0,
\end{align*}
where we have made use of the estimates for~$\sigma$, $\varphi_0$
and~$a<|u_{\infty}|$. It follows that the outgoing expansion on the
initial outgoing lightcone is positive.

Turning now the attention to~$\mathcal{N}_{-a/4}$ we have that
\begin{align*}
\rho(-\frac{a}{4},1,x^2,x^3)=&\rho(-\frac{a}{4},0,x^2,x^3)+\int_0^1\frac{\partial\rho}{\partial v} \mathrm{d}v 
=-\frac{4}{a}+\int_0^1Q^{-1}\mthorn\rho \mathrm{d}v' \\
\geq& -\frac{4}{a}+\int_0^1Q^{-1}(\sigma\bar\sigma+3\varphi_{0}^2)(-\frac{a}{4},v',x^2,x^3) \mathrm{d}v'  \\
\geq&-\frac{4}{a}+\frac{8}{a}=\frac{4}{a}>0.
\end{align*}
It follows then that the outgoing expansion on~$\mathcal{N}_{-a/4}$ is
negative.

Recall now that the ingoing expansion~$\theta_{\bmn}$ is defined by
\begin{align*}
\theta_{\bmn}\equiv\sigma^{ab}\nabla_an_b=\mu+\bar\mu=2\mu.
\end{align*}
From the estimate 
\begin{align*}
||\Timu||_{L^{\infty}(\mathcal{S}_{u,v})}=
||\mu+\frac{1}{|u|}||_{L^{\infty}(\mathcal{S}_{u,v})}\leq\frac{1}{|u|^2},
\end{align*}
we have
\begin{align*}
\mu(-\frac{a}{4},v,x^2,x^3)<0, \quad
\mu(u_{\infty},v,x^2,x^3)<0
\end{align*}
for any~$(x^2,x^3)$. This means that the ingoing expansion is always
negative.

In conclusion, for any point on~$\mathcal{S}_{-\frac{a}{4},1}$,
both~$\theta_{\bml}$ and~$\theta_{\bmn}$ are negative. Consequently
$\mathcal{S}_{-\frac{a}{4},1}$ is a trapped surface. Summarising, we
have proven the following:

\begin{theorem}
[\textbf{\em Trapped surface formation}]
Given~$\mathcal{I}$, there exists a sufficiently large
$a=a(\mathcal{I})$ such that if~$0\leq a_0 \leq a$, and the initial
data satisfies
\begin{align*}
\mathcal{I}_0\equiv\sum_{j=0}^1\sum_{i=0}^{15}\frac{1}{a^{\frac{1}{2}}}
||\mthorn^j(|u_{\infty}|\mathcal{D})^i(\sigma,\varphi_0)||_{L^{2}(\mathcal{S}_{u_{\infty},v})}\leq\mathcal{I}
\end{align*}
along the outgoing initial null hypersurface~$u=u_{\infty}$ and is
Minkowskian initial data along ingoing initial null
hypersurface~$v=0$, with
\begin{align*}
\int_0^1\left( |u_{\infty}|^2(\sigma\bar\sigma+3\varphi_0^2)|\right)(u_{\infty},v') \mathrm{d}v'\geq a,
\end{align*}
uniformly for any point on the outgoing initial null hypersurface
$u=u_{\infty}$, then
we have that~$\mathcal{S}_{-a/4,1}$ is a trapped surface.
\end{theorem}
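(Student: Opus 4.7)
The plan is to establish the trapped surface formation by proving that both expansions $\theta_{\bml} = -2\rho$ and $\theta_{\bmn} = 2\mu$ are negative pointwise on $\mathcal{S}_{-a/4,1}$. The ingoing expansion is easy: from the estimate $\|\Timu\|_{L^\infty(\mathcal{S}_{u,v})} \lesssim 1/|u|^2$ already obtained from the existence theorem, one has $\mu = -1/|u| + \Timu < 0$ everywhere in $\mathbb{D}$, so $\theta_{\bmn}<0$ on $\mathcal{S}_{-a/4,1}$ is immediate. The difficulty therefore lies entirely in showing that the outgoing expansion becomes negative at the top vertex, which will be done by integrating the transport equation $\mthorn\rho = \rho^2 + \sigma\bar\sigma + 3\varphi_0^2$ along $\mathcal{N}_{-a/4}$ from $v=0$ (where Minkowski data gives $\rho = -4/a$) up to $v=1$, and showing the accumulated positive contribution from $\sigma\bar\sigma + 3\varphi_0^2$ dominates the initial $-4/a$.

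The core quantitative step is to show that for any $(x^2,x^3)\in\mathcal{S}$,
\begin{equation*}
\int_0^1 \bigl(|u|^2 Q^{-1}(\sigma\bar\sigma+3\varphi_0^2)\bigr)(-a/4,v',x^2,x^3)\,\mathrm{d}v' \geq a/2,
\end{equation*}
starting from the initial-data hypothesis that the same integral at $u_\infty$ exceeds $a$. To this end I introduce $F \equiv |u|^2 Q^{-1}(\sigma\bar\sigma+3\varphi_0^2)$ and decompose $\partial_u F = \mthorn' F - C^{\mathcal{A}}\partial_{\mathcal{A}} F$. Using the ingoing equations $\mthorn'\sigma = \meth\tau-\mu\sigma-\bar\lambda\rho-\tau^2+\ulomega\sigma-3\varphi_1^2$, $\mthorn'\varphi_0 = \meth\bar\varphi_1+(\ulomega-\mu)\varphi_0+\rho\varphi_2-\bar\tau\varphi_1-\tau\bar\varphi_1$, and $\mthorn' Q = \ulomega Q$, the $-2\sigma\bar\sigma/|u|$ and $-6\varphi_0^2/|u|$ terms cancel precisely the derivative of the $|u|^2$ factor, leaving only terms involving $\Timu$ (not $\mu$!) multiplying the leading quantities, plus genuinely subleading contributions. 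Using the $L^\infty$ norms from the existence theorem to bound each such term pointwise will give $|\mthorn' F| \lesssim a/|u|^2$.

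For the transport term $C^{\mathcal{A}}\partial_{\mathcal{A}} F$ I first need the pointwise bounds $|Q-1|\lesssim \mathcal{O}/|u|$, $|C^{\mathcal{A}}|\lesssim a^{1/2}\mathcal{O}/|u|^2$ (integrating the frame equations in $v$ with the improved $\ulomega$, $\tau$, $\pi$ bounds), and then estimate $|\partial_{\mathcal{A}} F|$ on $\mathcal{S}_{u,v}$ by $\|\mathcal{D}F\|_{L^\infty(\mathcal{S}_{u,v})}$, which reduces to $L^\infty$ estimates on $\sigma$, $\varphi_0$ and one angular derivative thereof, combined with Sobolev embedding. This contributes $\lesssim a^{3/2}/|u|^2$, which is the dominant error. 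Integrating $|\partial_u F|\lesssim a^{7/4}/|u|^2$ from $u_\infty$ to $-a/4$ yields $|F(u,v,\cdot)-F(u_\infty,v,\cdot)|\lesssim a^{7/4}/|u|$, and since at $u=-a/4$ one has $a^{7/4}/|u|\leq 4 a^{3/4} \ll a/2$ for large $a$, the desired lower bound follows.

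The main obstacle will be the bookkeeping in the second step: ensuring that every term appearing in $\mthorn' F$ — in particular the cross terms such as $\bar\sigma\meth\tau$, $\varphi_0 \meth'\varphi_1$, $\lambda\rho\sigma$, and the $\ulomega(\sigma\bar\sigma+3\varphi_0^2)$ and $\Timu(\sigma\bar\sigma+3\varphi_0^2)$ contributions — really do decay at the claimed rate $a/|u|^4$ after multiplication by $|u|^2 Q^{-1}$. The crucial algebraic cancellation is that when $\mu$ hits $\sigma\bar\sigma+3\varphi_0^2$ via the Leibniz rule, only its subprincipal part $\Timu$ survives, precisely because the leading $-1/|u|$ piece of $\mu$ was engineered out by the factor $|u|^2$ in the definition of $F$. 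Once this is confirmed term by term, the conclusion follows by inserting the lower bound into $\mthorn\rho\geq \sigma\bar\sigma+3\varphi_0^2$ and integrating in $v$ along $\mathcal{N}_{-a/4}$ to obtain $\rho(-a/4,1,\cdot)\geq -4/a + 8/a = 4/a > 0$, hence $\theta_{\bml}<0$ on $\mathcal{S}_{-a/4,1}$.
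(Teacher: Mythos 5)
Your proposal is correct and follows essentially the same route as the paper: the same renormalised quantity $F=|u|^2Q^{-1}(\sigma\bar\sigma+3\varphi_0^2)$, the same decomposition $\partial_uF=\mthorn'F-C^{\mathcal{A}}\partial_{\mathcal{A}}F$ with the key cancellation reducing $\mu$ to $\Timu$, the same frame estimates for $Q$ and $C^{\mathcal{A}}$, the same integrated bound $|F(u,v)-F(u_\infty,v)|\lesssim a^{7/4}/|u|$, and the same final integration of $\mthorn\rho$ along $\mathcal{N}_{-a/4}$ together with the $\Timu$ bound for the ingoing expansion. No gaps.
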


\appendix

\section{The Einstein field equations in the NP formalism}

In this appendix we provide an overview of the main geometric objects
and equations used in the analysis in the main text.

\subsection{The NP field equations}

Given a NP frame~$\{l^a,\ n^a,\ m^a,\ \bar{m}^a \}$, we define the
complex spin connection coefficients in the usual manner as
\begin{align*}
&\kappa\equiv -m^al^b\nabla_bl_a,\ \ \rho\equiv
   -m^a\bar{m}^b\nabla_bl_a, \ \ \sigma\equiv-m^am^b\nabla_bl_a, \ \
   \tau\equiv -m^an^b\nabla_bl_a, \\
&\nu\equiv \bar{m}^an^b\nabla_bn_a, \ \ \mu\equiv
   \bar{m}^am^b\nabla_bn_a, \ \ \lambda\equiv \bar{m}^a\bar{m}^b\nabla_bn_a, \ \ \pi=\bar{m}^al^b\nabla_bn_a, \\
&\alpha\equiv
   \frac{1}{2}(l^a\bar{m}^b\nabla_bn_a-m^a\bar{m}^b\nabla_b\bar{m}_a),
   \ \ \beta\equiv \frac{1}{2}(\bar{m}^am^b\nabla_bm_a-n^am^b\nabla_bl_a),  \\
&\epsilon\equiv
   \frac{1}{2}(\bar{m}^al^b\nabla_bm_a-n^al^b\nabla_bl_a), \ \
   \gamma\equiv \frac{1}{2}(l^an^b\nabla_bn_a-m^an^b\nabla_b\bar{m}_a).
\end{align*}
We also write
\begin{align*}
  D\equiv l^a\nabla_a, \qquad  \Delta\equiv n^a\nabla_a, \qquad
  \delta\equiv m^a\nabla_a,  \qquad \bar{\delta}=\bar{m}^a\nabla_a.
\end{align*}
Then the commutators of the above directional covariant derivatives
satisfy
\begin{subequations}
\begin{align}
& (\Delta D - D\Delta) \psi = \big( (\gamma+\bar{\gamma}) D +
   (\epsilon+\bar{\epsilon})\Delta -(\bar{\tau} + \pi)\delta
   -(\tau+\bar{\pi})\bar{\delta}\big)\psi, \label{NPCommutator1}\\
& (\delta D - D \delta) \psi =\big( (\bar{\alpha}+\beta -\bar{\pi})D
   +\kappa\Delta -(\bar{\rho}+\epsilon-\bar{\epsilon})\delta
   -\sigma\bar{\delta}  \big)\psi, \label{NPCommutator2}\\
& (\delta \Delta -\Delta \delta)\psi = \big( -\bar{\nu} D +
   (\tau-\bar{\alpha}-\beta)\Delta + (\mu -\gamma +\bar{\gamma})\delta
   +\bar{\lambda} \bar{\delta}\big)\psi, \label{NPCommutator3}\\
& (\bar{\delta}\delta - \delta \bar{\delta})\psi = \big(
  (\bar{\mu}-\mu)D + (\bar{\rho}-\rho) \Delta +
  (\alpha-\bar{\beta})\delta -(\bar{\alpha}-\beta)\bar{\delta} \big)\psi.
  \label{NPCommutator4},
\end{align}
\end{subequations}
where~$\psi$ is any scalar field.   

We use the same notation in~\cite{CFEBook} to denote the components of
Weyl spinor~$\Psi_{ABCD}$ and tracefree Ricci spinor
$\Phi_{AA^{\prime}BB^{\prime}}$ ---namely \{$\Psi_0$, $\Psi_1$,
$\Psi_2$, $\Psi_3$, $\Psi_4$\} and \{$\Phi_{00}$, $\Phi_{01}$,
$\Phi_{02}$, $\Phi_{11}$, $\Phi_{12}$, $\Phi_{22}$, $\Lambda$\}. These
coefficients are defined as follows:
\begin{align*}
&\Psi_0\equiv C_{abcd}l^am^bl^cm^d, \ \ \Psi_1\equiv
   C_{abcd}l^an^bl^cm^d, \ \ \Psi_2\equiv \frac{1}{2}C_{abcd}l^an^b(l^cn^d-m^c\bar{m}^d) \\
&\Psi_3\equiv C_{abcd}n^al^bn^c\bar{m}^d, \ \ \Psi_4\equiv C_{abcd}n^a\bar{m}^bn^c\bar{m}^d, \\
&\Phi_{00}\equiv \frac{1}{2}R_{\{ab\}}l^al^b, \ \ \Phi_{01}\equiv
   \frac{1}{2}R_{\{ab\}}l^am^b, \ \ \Phi_{02}\equiv \frac{1}{2}R_{\{ab\}}m^am^b, \\
&\Phi_{11}\equiv \frac{1}{4}R_{\{ab\}}(l^an^b+m^a\bar{m}^b), \ \
   \Phi_{12}\equiv \frac{1}{2}R_{\{ab\}}n^am^b, \\
&\Phi_{22}\equiv \frac{1}{2}R_{\{ab\}}n^an^b, \ \ \Lambda\equiv \frac{R}{24},
\end{align*}
where~$R_{\{ab\}}=R_{ab}-\frac{1}{4}Rg_{ab}$. The NP structure
equations and the Bianchi identities can be found in
e.g.~\cite{HilValZha20}.

\subsection{The massless scalar equations}
\label{Appendix:MasslessScalarEquations}

Given a massless scalar field~$\varphi$ let
\begin{align*}
\varphi_0=D\varphi, \qquad  \varphi_2=\Delta\varphi,\qquad
\varphi_1=\delta\varphi,\qquad  \bar\varphi_1=\bar\delta\varphi. 
\end{align*}
Then the equation~$\nabla^a\nabla_a\varphi=0$ can be expressed in
terms of the NP frame. One finds that
\begin{align*}
&D\varphi_2+\Delta\varphi_0-\delta\bar\varphi_1-\bar\delta\varphi_1+\bar\varphi_1(\bar\alpha-\beta)
+\varphi_1(\alpha-\bar\beta) \\
&-\varphi_0(\gamma+\bar\gamma)+\varphi_2(\epsilon+\bar\epsilon)
+\varphi_0(\mu+\bar\mu)-\bar\varphi_1\bar\pi-\varphi_1\pi \\
&-\varphi_2(\rho+\bar\rho)+\varphi_1\bar\tau+\bar\varphi_1\tau=0.
\end{align*}
Applying the commutator
relations~\eqref{NPCommutator1}-\eqref{NPCommutator4}, one has that
\begin{subequations}
\begin{align}
\Delta\varphi_0-\bar\delta\varphi_1&=
(\gamma+\bar\gamma-\bar\mu)\varphi_0
+\rho\varphi_2+(\bar\beta-\alpha-\bar\tau)\varphi_1
-\tau\bar\varphi_1, \label{EOMMasslessScalarNP1} \\
\Delta\varphi_1-\delta\varphi_2&=
\bar\nu\varphi_0
+(\bar\alpha+\beta-\tau)\varphi_2
+(\gamma-\bar\gamma-\mu)\varphi_1
-\bar\lambda\bar\varphi_1, \label{EOMMasslessScalarNP2} \\
D\bar\varphi_1-\bar\delta\varphi_0&=
(\pi-\alpha-\bar\beta)\varphi_0
-\bar\kappa\varphi_2+\bar\sigma\varphi_1
+(\bar\epsilon-\epsilon+\rho)\bar\varphi_1, \label{EOMMasslessScalarNP3} \\
D\varphi_2-\bar\delta\varphi_1&=
-\bar\mu\varphi_0
+(\rho-\epsilon-\bar\epsilon)\varphi_2
+(\bar\beta-\alpha+\pi)\varphi_1
+\bar\pi\bar\varphi_1, \label{EOMMasslessScalarNP4} \\
\delta\bar\varphi_1-\bar\delta\varphi_1&=
(\mu-\bar\mu)\varphi_0
+(\rho-\bar\rho)\varphi_2
+(\bar\beta-\alpha)\varphi_1
+(\bar\alpha-\beta)\bar\varphi_1. \label{EOMMasslessScalarNP5} 
\end{align}
\end{subequations}
Now, from the Einstein-scalar field equation 
\begin{align*}
R_{ab}=6\nabla_a\varphi\nabla_b\varphi
\end{align*}
one obtains the relations 
\begin{align*}
&\Phi_{00}=3\varphi_0^2, \quad \Phi_{01}=3\varphi_0\varphi_1, \quad \Phi_{02}=3\varphi_1^2, \\
&\Phi_{11}=\frac{3}{2}(\varphi_0\varphi_2+\varphi_1\bar\varphi_1), \quad 
\Phi_{12}=3\varphi_2\varphi_1, \quad \Phi_{22}=3\varphi_2^2, \\
&\Lambda=\frac{1}{2}(-\varphi_0\varphi_2+\varphi_1\bar\varphi_1).
\end{align*}
Now, making use of the definitions of the renormalised components of the
Weyl tensor ---namely
\begin{align*}
\tilde\Psi_1\equiv&\Psi_1-\Phi_{01}=\Psi_1-3\varphi_0\varphi_1,  \\
\tilde\Psi_2\equiv&\Psi_2+2\Lambda=\Psi_2-\varphi_0\varphi_2+\varphi_1\bar\varphi_1,  \\
\tilde\Psi_3\equiv&\Psi_3-\Phi_{21}=\Psi_3-3\varphi_2\bar\varphi_1,
\end{align*}
and substituting into the structure equations and Bianchi identities
one can obtain the NP form of the Einstein-Massless scalar system.
The structure equations take the form
\begin{subequations}
\begin{align}
\Delta\epsilon-D\gamma= &
-\tilde\Psi_2-3\varphi_0\varphi_2 
+\epsilon( 2\gamma+\bar{\gamma})+ \gamma\bar{\epsilon}+\kappa\nu-\beta\pi-\alpha\bar{\pi}-\alpha\tau-\pi\tau-\beta\bar{\tau} \label{MasslessScalarstructureeq1},\\
\Delta\kappa-D\tau=&
-\tilde\Psi_1-6\varphi_0\varphi_1
+3 \gamma\kappa+ \bar{\gamma}\kappa-\bar{\pi}\rho-\pi\sigma-\epsilon\tau+\bar{\epsilon}\tau-\rho\tau-\sigma\bar{\tau} \label{MasslessScalarstructureeq2},\\
\Delta\pi-D\nu=&
-\tilde\Psi_3-6\varphi_2\bar\varphi_1
+3\epsilon\nu+\bar{\epsilon} \nu-\gamma\pi+\bar{\gamma}\pi-\mu\pi-\lambda\bar{\pi}-\lambda\tau-\mu\bar{\tau} \label{MasslessScalarstructureeq3},\\
\delta\gamma-\Delta\beta=&
3\varphi_2\varphi_1
-\bar{\alpha}\gamma-2\beta\gamma+\beta\bar{\gamma}+\alpha\bar{\lambda}+\beta\mu-\epsilon\bar{\nu}-\nu\sigma+\gamma\tau+\mu\tau \label{MasslessScalarstructureeq4},\\
\delta\epsilon-D\beta=&
-\tilde\Psi_{1}-3\varphi_0\varphi_1
+\bar{\alpha}\epsilon+\beta\bar{\epsilon}+\gamma\kappa+\kappa\mu-\epsilon\bar{\pi}-\beta\bar{\rho}-\alpha\sigma-\pi\sigma \label{MasslessScalarstructureeq5},\\
\delta\kappa-D\sigma=&
-\Psi_{0}+\bar{\alpha}\kappa+3\beta\kappa-\kappa\bar{\pi}-3\epsilon\sigma+\bar{\epsilon}\sigma-\rho\sigma-\bar{\rho}\sigma+\kappa\tau \label{MasslessScalarstructureeq6},\\
\delta\nu-\Delta\mu=&
3\varphi_2^2+\lambda\bar{\lambda}+\gamma\mu+\bar{\gamma}\mu+\mu^2-\bar{\alpha}\nu-3\beta\nu-\bar{\nu}\pi+\nu\tau \label{MasslessScalarstructureeq7},\\
\delta\pi-D\mu=&
-\tilde\Psi_2
+\epsilon\mu+\bar{\epsilon}\mu+\kappa\nu+\bar{\alpha}\pi-\beta\pi-\pi\bar{\pi}-\mu\bar{\rho}-\lambda\sigma \label{MasslessScalarstructureeq8},\\
\delta\tau-\Delta\sigma=&3\varphi_1^2
-\kappa\bar{\nu}+\bar{\lambda}\rho-3\gamma\sigma+\bar{\gamma}\sigma+\mu\sigma-\bar{\alpha}\tau+\beta\tau+\tau^2 \label{MasslessScalarstructureeq9},\\
\bar{\delta}\beta-\delta\alpha=&-3\bar\varphi_1\varphi_1+\tilde\Psi_2
-\alpha\bar{\alpha}+2\alpha\beta-\beta\bar{\beta}-\epsilon\mu+\epsilon\bar{\mu}-\gamma\rho-\mu\rho+\gamma\bar{\rho}+\lambda\sigma \label{MasslessScalarstructureeq10},\\
\bar{\delta}\gamma-\Delta\alpha=&\tilde\Psi_3+3\varphi_2\bar\varphi_1
-\bar{\beta}\gamma-\alpha\bar{\gamma}+\beta\lambda+\alpha\bar{\mu}-\epsilon\nu-\nu\rho+\lambda\tau+\gamma\bar{\tau} \label{MasslessScalarstructureeq11},\\
\bar{\delta}\epsilon-D\alpha=&-3\varphi_0\bar\varphi_1
+2\alpha\epsilon+\bar{\beta}\epsilon-\alpha\bar{\epsilon}+\gamma\bar{\kappa}+\kappa\lambda-\epsilon\pi-\alpha\rho-\pi\rho-\beta\bar{\sigma} \label{MasslessScalarstructureeq12},\\
\bar{\delta}\kappa-D\rho=&-3\varphi_0^2
+3\alpha\kappa+\bar{\beta}\kappa-\kappa\pi-\epsilon\rho-\bar{\epsilon}\rho-\rho^2-\sigma\bar{\sigma}+\bar{\kappa}\tau \label{MasslessScalarstructureeq13},\\
\bar{\delta}\mu-\delta\lambda=&\tilde\Psi_3
-\bar{\alpha}\lambda+3\beta\lambda-\alpha\mu-\bar{\beta}\mu-\mu\pi+\bar{\mu}\pi-\nu\rho+\nu\bar{\rho} \label{MasslessScalarstructureeq14},\\
\bar{\delta}\nu-\Delta\lambda=&\Psi_4
+3\gamma\lambda-\bar{\gamma}\lambda+\lambda\mu+\lambda\bar{\mu}-3\alpha\nu-\bar{\beta}\nu-\nu\pi+\nu\bar{\tau} \label{MasslessScalarstructureeq15},\\
\bar{\delta}\pi-D\lambda=&-3\bar\varphi_1^2
+3\epsilon\lambda-\bar{\epsilon}\lambda+\bar{\kappa}\nu-\alpha\pi+\bar{\beta}\pi-\pi^2-\lambda\rho-\mu\bar{\sigma} \label{MasslessScalarstructureeq16},\\
\bar{\delta}\sigma-\delta\rho=&\tilde\Psi_1
-\kappa\mu+\kappa\bar{\mu}-\bar{\alpha}\rho-\beta\rho+3\alpha\sigma-\bar{\beta}\sigma-\rho\tau-\bar{\rho}\tau \label{MasslessScalarstructureeq17},\\
\bar{\delta}\tau-\Delta\rho=&\tilde\Psi_2
-\kappa\nu-\gamma\rho-\bar{\gamma}\rho+\bar{\mu}\rho+\lambda\sigma+\alpha\tau-\bar{\beta}\tau+\tau\bar{\tau} \label{MasslessScalarstructureeq18}.
\end{align}
\end{subequations}

The Bianchi identities written in terms of the renormalised components
of the Weyl tensor take the following decoupled form:
\begin{subequations}
\begin{align}
\Delta\Psi_0-\delta\tilde\Psi_1=&6\varphi_0\delta\varphi_1
+(4\gamma-\mu)\Psi_0-(2\beta+4\tau)\tilde\Psi_1
+3\sigma\tilde\Psi_2 \nonumber\\
&+6\varphi_0\varphi_1\bar\alpha+6\varphi_0\varphi_2\sigma
-3\varphi_0^2\bar\lambda-3\varphi_1^2\bar\rho-6\bar\varphi_1\varphi_1\sigma
-6\varphi_0\varphi_1(\beta+2\tau), \label{BianchiMasslessScalarNP1} \\
D\tilde\Psi_1-\bar\delta\Psi_0=&-6\varphi_0\delta\varphi_0
+(\pi-4\alpha)\Psi_0+(2\epsilon+4\rho)\tilde\Psi_1
-3\kappa\tilde\Psi_2 \nonumber \\
&+3\varphi_0^2(2\bar\alpha+2\beta-\bar\pi)+6\varphi_1\bar\varphi_1\kappa
+3\varphi_1^2\bar\kappa+6\varphi_0\varphi_1(2\rho-\bar\rho)
+6\varphi_0\bar\varphi_1\sigma, \label{BianchiMasslessScalarNP2} \\
\Delta\tilde\Psi_1-\delta\tilde\Psi_2=&-6\varphi_1\bar\delta\varphi_1
+\nu\Psi_0+(2\gamma-2\mu)\tilde\Psi_1-3\tau\tilde\Psi_2+3\sigma\tilde\Psi_3 \nonumber \\
&+6\varphi_2\bar\varphi_1\sigma+6\varphi_1(\varphi_0\bar\mu-\varphi_0\mu+\bar\varphi_1\tau)
+3\varphi_1^2(2\alpha-2\bar\beta+\bar\tau) \nonumber \\
&-3\varphi_0^2\bar\nu-6\varphi_2\varphi_1\rho, \label{BianchiMasslessScalarNP3} \\
D\tilde\Psi_2-\bar\delta\tilde\Psi_1=&6\varphi_1\bar\delta\varphi_0
-\lambda\Psi_0+(2\pi-2\alpha)\tilde\Psi_1+3\rho\tilde\Psi_2-2\kappa\tilde\Psi_3 \nonumber \\
&3\varphi_0^2\bar\mu+6\varphi_0\varphi_1\pi+3\varphi_1^2\bar\sigma
-6\varphi_0\varphi_1(\alpha+\bar\beta)-6\varphi_2\bar\varphi_1\kappa, \label{BianchiMasslessScalarNP4} \\
\Delta\tilde\Psi_2-\delta\tilde\Psi_3=&6\bar\varphi_1\delta\varphi_2
+2\nu\tilde\Psi_1-3\mu\tilde\Psi_2+(2\beta-2\tau)\tilde\Psi_3+\sigma\Psi_4 \nonumber \\
&+6\varphi_2\bar\varphi_1(\bar\alpha+\beta-\tau)+6\varphi_0\varphi_1\nu
-3\bar\varphi_1^2\bar\lambda-3\varphi_2^2\bar\rho, \label{BianchiMasslessScalarNP5} \\
D\tilde\Psi_3-\bar\delta\tilde\Psi_2=&6\bar\varphi_1\delta\bar\varphi_1
-2\lambda\tilde\Psi_1+3\pi\tilde\Psi_2+(2\rho-2\epsilon)\tilde\Psi_3-\kappa\Psi_4 \nonumber \\
&+3\bar\varphi_1^2(2\bar\alpha-2\beta-\bar\pi)+3\varphi_2^2\bar\kappa+6\varphi_0\bar\varphi_1\mu
+6\varphi_2\bar\varphi_1(\rho-\bar\rho)\nonumber \\
&-6\varphi_0\varphi_1\lambda
+6\bar\varphi_1\varphi_1\pi,  \label{BianchiMasslessScalarNP6} \\
\Delta\tilde\Psi_3-\delta\Psi_4=&-6\varphi_2\bar\delta\varphi_2
+3\nu\tilde\Psi_2-(2\gamma+4\mu)\tilde\Psi_3+(4\beta-\tau)\Psi_4 \nonumber \\
&+6\varphi_2\varphi_1\lambda+6\varphi_2\bar\varphi_1(\bar\mu-2\mu)
+3\varphi_2^2(\bar\tau-2\alpha-2\bar\beta)-6\varphi_1\bar\varphi_1\nu
-3\bar\varphi_1^2\bar\nu, \label{BianchiMasslessScalarNP7} \\
D\Psi_4-\bar\delta\tilde\Psi_3=&6\varphi_2\bar\delta\bar\varphi_1
-3\lambda\tilde\Psi_2+(2\alpha+4\pi)\tilde\Psi_3+(\rho-4\epsilon)\Psi_4 \nonumber \\
&6\varphi_2\bar\varphi_1(\alpha-\bar\beta+2\pi)+6\varphi_1\bar\varphi_1\lambda+3\bar\varphi_1^2\mu
+3\varphi_2^2\bar\sigma-6\varphi_0\varphi_2\lambda. \label{BianchiMasslessScalarNP8}
\end{align}
\end{subequations}

\section{NP T-weight equations}
\label{NPT-weightEq}

Using the T-weighted operators the structure equations can be
rewritten as
\begin{subequations}
\begin{align}
\mthorn\rho-\meth'\kappa&=\rho^2+\sigma\bar\sigma+\omega\rho-\bar\kappa\tau
-(2\bar\vartheta-\pi)\kappa+\Phi_{00}, \label{T-weightStructureEq1}\\
\mthorn\sigma-\meth\kappa&=(\rho+\bar\rho)\sigma+\omega\sigma
-(\tau-\bar\pi+2\vartheta)\kappa+\Psi_0, \label{T-weightStructureEq2}\\
\mthorn\mu-\meth\pi&=\bar\rho\mu+\sigma\lambda+\pi\bar\pi
-\omega\mu-\nu\kappa+\Psi_2+2\Lambda, \label{T-weightStructureEq3}\\
\mthorn\lambda-\meth'\pi&=\rho\lambda+\bar\sigma\mu+\pi^2-\nu\bar\kappa
-\omega\lambda+\Phi_{20}, \label{T-weightStructureEq4}\\
\mthorn'\rho-\meth'\tau&=-\rho\bar\mu-\sigma\lambda-\bar\tau\tau+\ulomega\rho
-\nu\kappa-\Psi_2-2\Lambda, \label{T-weightStructureEq5}\\
\mthorn'\sigma-\meth\tau&=-\mu\sigma-\bar\lambda\rho-\tau^2+\ulomega\sigma
+\kappa\bar\nu-\Phi_{02}, \label{T-weightStructureEq6}\\
\mthorn'\mu-\meth\nu&=-\mu^2-\lambda\bar\lambda-\ulomega\mu+\bar\nu\pi
+(2\vartheta-\tau)\nu-\Phi_{22}, \label{T-weightStructureEq7}\\
\mthorn'\lambda-\meth'\nu&=-(\mu+\bar\mu)\lambda-\ulomega\lambda
+(2\bar\vartheta+\pi-\bar\tau)\nu-\Psi_4, \label{T-weightStructureEq8}\\
\mthorn\tau-\mthorn'\kappa&=(\tau+\bar\pi)\rho+(\bar\tau+\pi)\sigma
-2\ulomega\kappa+\Psi_1+\Phi_{01}, \label{T-weightStructureEq9}\\
\mthorn'\pi-\mthorn\nu&=-(\pi+\bar\tau)\mu-(\bar\pi+\tau)\lambda+2\omega\nu
-\Psi_3-\Phi_{21},  \label{T-weightStructureEq10}\\
\mthorn\ulomega-\mthorn'\omega&=(\tau+\bar\pi)\bar\vartheta+(\bar\tau+\pi)\vartheta-2\ulomega\omega
+\tau\pi+\bar\tau\bar\pi \nonumber\\ 
&\quad -\nu\kappa-\bar\nu\bar\kappa+\Psi_2+\bar\Psi_2+2\Phi_{11}-2\Lambda, \label{T-weightStructureEq11}\\
\mthorn\vartheta-\meth\omega&=(\bar\vartheta+\pi)\sigma+(\vartheta+\bar\pi)\bar\rho+(\bar\pi-\vartheta)\omega \nonumber\\
&\quad -(\mu+\ulomega)\kappa-\bar\kappa\bar\lambda+\Psi_1+\Phi_{01},  \label{T-weightStructureEq12}\\
\mthorn'\vartheta-\meth\ulomega&=-(\bar\tau+\bar\vartheta)\bar\lambda-(\tau+\vartheta)\mu-(\tau-\vartheta)\ulomega \nonumber \\
&\quad +(\bar\rho+\omega)\bar\nu+\sigma\nu-\bar\Psi_3-\Phi_{12}, \label{T-weightStructureEq13}\\
\meth\rho-\meth'\sigma&=\vartheta\rho-\bar\vartheta\sigma+(\rho-\bar\rho)\tau 
+(\mu-\bar\mu)\kappa-\Psi_1+\Phi_{01},  \label{T-weightStructureEq14}\\
\meth\lambda-\meth'\mu&=\bar\vartheta\mu-\vartheta\lambda+(\rho-\bar\rho)\nu
+(\mu-\bar\mu)\pi-\Psi_3+\Phi_{21}, \label{T-weightStructureEq15}
\end{align}
\end{subequations}
while for the Bianchi identities one has
\begin{subequations}
\begin{align}
\mthorn\Psi_1-\meth'\Psi_0-\mthorn\Phi_{01}+\meth\Phi_{00}=&(\pi-2\bar\vartheta)\Psi_0+(\omega+4\rho)\Psi_1
-(\omega+2\bar\rho)\Phi_{01}+(2\vartheta-\bar\pi)\Phi_{00} \nonumber\\
&-3\kappa\Psi_2-2\sigma\Phi_{10}+2\kappa\Phi_{11}+\bar\kappa\Phi_{02}, \label{T-weightBianchi1}\\
\mthorn'\Psi_0-\meth\Psi_1+\mthorn\Phi_{02}-\meth\Phi_{01}=&(2\ulomega-\mu)\Psi_0-(4\tau+\vartheta)\Psi_1
+\bar\rho\Phi_{02}+(2\pi-\vartheta)\Phi_{01} \nonumber \\
&+3\sigma\Psi_2+2\sigma\Phi_{11}-2\kappa\Phi_{12}-\bar\lambda\Phi_{00}, \label{T-weightBianchi2}\\
\mthorn\Psi_2-\meth'\Psi_1+\mthorn'\Phi_{00}-\meth'\Phi_{01}+2\mthorn\Lambda=&3\rho\Psi_2+(2\pi-\bar\vartheta)\Psi_1
+(2\ulomega-\bar\mu)\Phi_{00}-(2\bar\tau+\bar\vartheta)\Phi_{01} \nonumber\\
&-\lambda\Psi_0-2\kappa\Psi_3-2\tau\Phi_{10}+2\rho\Phi_{11}+\bar\sigma\Phi_{02}, \label{T-weightBianchi3}\\
\mthorn'\Psi_1-\meth\Psi_2-\mthorn'\Phi_{01}+\meth'\Phi_{02}-2\meth\Lambda=&(\ulomega-2\mu)\Psi_1-3\tau\Psi_2
+(2\bar\mu-\ulomega)\Phi_{01}+\bar\tau\Phi_{02} \nonumber \\
&+\nu\Psi_0+2\sigma\Psi_3+2\tau\Phi_{11}-2\rho\Phi_{12}-\bar\nu\Phi_{00}, \label{T-weightBianchi4}\\
\mthorn\Psi_3-\meth'\Psi_2-\mthorn\Phi_{21}+\meth\Phi_{20}-2\meth'\Lambda=&(2\rho-\omega)\Psi_3+3\pi\Psi_2
+(\omega-2\bar\rho)\Phi_{21}-\bar\pi\Phi_{20} \nonumber \\
&-2\lambda\Psi_1-\kappa\Psi_4-2\pi\Phi_{11}+2\mu\Phi_{10}+\bar\kappa\Phi_{22}, \label{T-weightBianchi5} \\
\mthorn'\Psi_2-\meth\Psi_3+\mthorn\Phi_{22}-\meth\Phi_{21}+2\mthorn'\Lambda=&-3\mu\Psi_2
+(\vartheta-2\tau)\Psi_3+(\bar\rho-2\omega)\Phi_{22}+(2\bar\pi+\vartheta)\Phi_{21} \nonumber \\
&-2\nu\Psi_1+\sigma\Psi_4+2\pi\Phi_{12}-2\mu\Phi_{11}-\bar\lambda\Phi_{20},  \label{T-weightBianchi6}\\
\mthorn\Psi_4-\meth'\Psi_3+\mthorn'\Phi_{20}-\meth'\Phi_{21}=&(\rho-2\omega)\Psi_4+(4\pi+\bar\vartheta)\Psi_3
+(\bar\vartheta-2\bar\tau)\Phi_{21} \nonumber \\
&-\bar\mu\Phi_{20}-3\lambda\Psi_2-2\lambda\Phi_{11}+2\nu\Phi_{10}+\bar\sigma\Phi_{22}, \label{T-weightBianchi7}\\
\mthorn'\Psi_3-\meth\Psi_4+\meth'\Phi_{22}-\mthorn'\Phi_{21}=&(2\vartheta-\tau)\Psi_4-(4\mu+\ulomega)\Psi_3
+(\bar\tau-2\bar\vartheta)\Phi_{22} \nonumber \\
&+(2\bar\mu+\ulomega)\Phi_{21}+3\nu\Psi_2+2\lambda\Phi_{12}-2\nu\Phi_{11}-\bar\nu\Phi_{20},  \label{T-weightBianchi8}\\
\mthorn\Phi_{11}-\meth\Phi_{10}-\meth'\Phi_{01}+\mthorn'\Phi_{00}+3\mthorn\Lambda=&(2\ulomega-\mu-\bar\mu)\Phi_{00}
+(\pi-\bar\vartheta-2\bar\tau)\Phi_{01}+(\bar\pi-\vartheta-2\tau)\Phi_{10} \nonumber \\
&+2(\rho+\bar\rho)\Phi_{11}+\bar\sigma\Phi_{02}+\sigma\Phi_{20}-\bar\kappa\Phi_{12}-\kappa\Phi_{21}, \label{T-weightBianchi9}\\
\mthorn\Phi_{12}-\meth\Phi_{11}-\meth'\Phi_{02}+\mthorn'\Phi_{01}+3\meth\Lambda=&(\pi-\bar\tau)\Phi_{02}
+(\bar\rho+2\rho-\omega)\Phi_{12}+2(\bar\pi-\tau)\Phi_{11} \nonumber \\
&+(\ulomega-2\bar\mu-\mu)\Phi_{01}+\bar\nu\Phi_{00}-\bar\lambda\Phi_{10}+\sigma\Phi_{21}-\kappa\Phi_{22}, \label{T-weightBianchi10}\\
\mthorn\Phi_{22}-\meth\Phi_{21}-\meth'\Phi_{12}+\mthorn'\Phi_{11}+3\mthorn'\Lambda=&(\bar\rho+\rho-2\omega)\Phi_{22}
+(\bar\vartheta+2\pi-\bar\tau)\Phi_{12}+(\vartheta+2\bar\pi-\tau)\Phi_{21} \nonumber \\
&-2(\mu+\bar\mu)\Phi_{11}+\nu\Phi_{01}+\bar\nu\Phi_{10}-\bar\lambda\Phi_{20}-\lambda\Phi_{02}. \label{T-weightBianchi11}
\end{align}
\end{subequations}

The commutator relations for any quantity~$f$ with T-weight~$s$ are
given by
\begin{subequations}
\begin{align}
(\mthorn\mthorn'-\mthorn'\mthorn)f=&s(\Psi_2-\bar\Psi_2+\bar\kappa\bar\nu-\kappa\nu+\pi\tau-\bar\pi\bar\tau)f \nonumber \\
&-\ulomega\mthorn f-\omega\mthorn'f+(\pi+\bar\tau)\meth f+(\bar\pi+\tau)\meth'f, \label{T-weightCommutator1}\\
(\mthorn\meth-\meth\mthorn)f=&s(-\Phi_{01}+\Psi_1+\bar\kappa\bar\lambda-\bar\pi\bar\rho-\kappa\mu+\pi\sigma)f \nonumber \\
&+(\bar\pi-\vartheta)\mthorn f-\kappa\mthorn'f+\bar\rho\meth f+\sigma\meth'f, \label{T-weightCommutator2}\\
(\mthorn'\meth-\meth\mthorn')f=&s(-\Phi_{12}+\bar\Psi_3+\nu\sigma-\mu\tau-\bar\nu\bar\rho+\bar\lambda\bar\tau)f \nonumber \\
&+\bar\nu\mthorn f+(\vartheta-\tau)\mthorn'f-\mu\meth f-\bar\lambda\meth'f, \label{T-weightCommutator3}\\
(\mthorn\meth'-\meth'\mthorn)f=&s(\Phi_{10}-\bar\Psi_1-\kappa\lambda+\pi\rho+\bar\kappa\bar\mu-\bar\pi\bar\sigma)f \nonumber \\
&+(\bar\vartheta-\pi)\mthorn f-\bar\kappa\mthorn'f+\bar\sigma\meth f+\rho\meth'f, \label{T-weightCommutator4}\\
(\mthorn'\meth'-\meth'\mthorn')f=&s(\Phi_{21}-\Psi_3-\bar\nu\bar\sigma+\bar\mu\bar\tau+\nu\rho-\lambda\tau)f \nonumber \\
&+\nu\mthorn f+(\bar\vartheta-\bar\tau)\mthorn'f-\lambda\meth f-\bar\mu\meth'f, \label{T-weightCommutator5}\\
(\meth\meth'-\meth'\meth)f=&sKf+(\mu-\bar\mu)\mthorn f+(\rho-\bar\rho)\mthorn'f, \label{T-weightCommutator6}
\end{align}
\end{subequations}
where~$K$ is the Gauss curvature
\begin{align*}
K\equiv \mu\rho+\bar\mu\bar\rho-\lambda\sigma-\bar\lambda\bar\sigma+2\Lambda+2\Phi_{11}-\Psi_2-\bar\Psi_2.
\end{align*}
Making use of the structure equations~\eqref{T-weightStructureEq14}
and~\eqref{T-weightStructureEq15}, the commutator
relations~\eqref{T-weightCommutator2}-\eqref{T-weightCommutator5} can
be rewritten as
\begin{subequations}
\begin{align}
(\mthorn\meth-\meth\mthorn)f=&s(-\meth\rho+\meth'\sigma+\vartheta\rho-\bar\vartheta\sigma+(\rho-\bar\rho)\tau-\bar\mu\kappa
+\bar\kappa\bar\lambda-\bar\pi\bar\rho+\pi\sigma)f \nonumber \\
&+(\bar\pi-\vartheta)\mthorn f-\kappa\mthorn'f+\bar\rho\meth f+\sigma\meth'f,  \label{T-weightCommutatorAlt2}\\
(\mthorn'\meth-\meth\mthorn')f=&s(-\meth'\bar\lambda+\meth\bar\mu+\vartheta\bar\mu-\bar\vartheta\bar\lambda+(\bar\mu-\mu)\bar\pi-\rho\bar\nu
+\nu\sigma-\mu\tau+\bar\lambda\bar\tau)f \nonumber \\
&+\bar\nu\mthorn f+(\vartheta-\tau)\mthorn'f-\mu\meth f-\bar\lambda\meth'f, \label{T-weightCommutatorAlt3}\\
(\mthorn\meth'-\meth'\mthorn)f=&s(\meth'\bar\rho-\meth\bar\sigma-\bar\vartheta\bar\rho+\vartheta\bar\sigma+(\rho-\bar\rho)\bar\tau+\mu\bar\kappa
-\kappa\lambda+\pi\rho-\bar\pi\bar\sigma)f \nonumber \\
&+(\bar\vartheta-\pi)\mthorn f-\bar\kappa\mthorn'f+\bar\sigma\meth f+\rho\meth'f, \label{T-weightCommutatorAlt4}\\
(\mthorn'\meth'-\meth'\mthorn')f=&s(\meth\lambda-\meth'\mu-\bar\vartheta\mu+\vartheta\lambda+\bar\rho\nu+(\bar\mu-\mu)\pi
-\bar\nu\bar\sigma+\bar\mu\bar\tau-\lambda\tau)f \nonumber \\
&+\nu\mthorn f+(\bar\vartheta-\bar\tau)\mthorn'f-\lambda\meth f-\bar\mu\meth'f. \label{T-weightCommutatorAlt5}
\end{align}
\end{subequations}

Observe that the T-weight of~$f$ determines whether there are
derivatives of~$\rho$, $\sigma$ and $\mu$, $\lambda$ on the right-hand
side of the commutators or not.

\subsection{The Einstein-scalar field system in the T-weight formalism}
\label{ScalarNPEq}

We conclude by providing the full form of the Einstein-scalar field
system in the T-weight formalism.

We various components of the auxiliary field~$\varphi_a$ encoding the
derivatives of~$\varphi$ are assigned T-weights as per the following
table:
\begin{table}[h!]
\begin{center}
\label{ScalarT-weight}
\begin{tabular}{|c|c|c|c|c|c|}
\hline
~ & $\varphi$ & $\varphi_{0}$ & $\varphi_{2}$ & $\varphi_{1}$ & $\bar\varphi_{1}$ \\
\hline
s& 0 & 0 & 0 & -1 & 1\\
\hline
\end{tabular}
\end{center}
\end{table}

Recalling that~$\varphi_a\equiv \nabla_a\varphi$, it follows then that
\begin{subequations}
\begin{align}
\mthorn'\varphi_{0}-\meth'\varphi_{1}&=(\ulomega-\bar\mu)\varphi_{0}
+\rho\varphi_{2}-\bar\tau\varphi_{1}-\tau\bar\varphi_{1}, \label{EOMMasslessScalarT-weight1} \\
\mthorn\varphi_{1}-\meth\varphi_{0}&=(\bar\pi-\vartheta)\varphi_{0}
-\kappa\varphi_{2}+\sigma\bar\varphi_{1}+\bar\rho\varphi_{1}, \label{EOMMasslessScalarT-weight2} \\
\mthorn'\varphi_{1}-\meth\varphi_{2}&=\bar\nu\varphi_{0}
+(\vartheta-\tau)\varphi_{2}-\mu\varphi_{1}-\bar\lambda\bar\varphi_{1}, \label{EOMMasslessScalarT-weight3} \\
\mthorn\varphi_{2}-\meth'\varphi_{1}&=-\bar\mu\varphi_{0}
+(\rho-\omega)\varphi_{2}+\pi\varphi_{1}+\bar\pi\bar\varphi_{1}, \label{EOMMasslessScalarT-weight4} \\
\meth\bar\varphi_{1}-\meth'\varphi_{1}&=
(\mu-\bar\mu)\varphi_{0}
+(\rho-\bar\rho)\varphi_{2}. \label{EOMMasslessScalarT-weight5} 
\end{align}
\end{subequations}

Now, recall that components of the definition of the renormalised Weyl
tensor are given by
\begin{subequations}
\begin{align}
\tilde\Psi_1\equiv&\Psi_1-\Phi_{01}=\Psi_1-3\varphi_{0}\varphi_{1},  \label{NewPsi1}\\
\tilde\Psi_2\equiv&\Psi_2+2\Lambda=\Psi_2-\varphi_{0}\varphi_{2}+\varphi_{1}\bar\varphi_{1},  \label{NewPsi2} \\
\tilde\Psi_3\equiv&\Psi_3-\Phi_{21}=\Psi_3-3\varphi_{2}\bar\varphi_{1}. \label{NewPsi3}
\end{align}
\end{subequations}
Substituting the latter into the structure equations and Bianchi
identities one obtains the final form for the coupled
Einstein-Massless scalar field system. In particular, the structure
equations take the form
\begin{subequations}
\begin{align}
\mthorn\rho-\meth'\kappa&=\rho^2+\sigma\bar\sigma+\omega\rho-\bar\kappa\tau
-(2\bar\vartheta-\pi)\kappa+3\varphi_{0}^2, \label{T-weightMasslessStructureEq1}\\
\mthorn\sigma-\meth\kappa&=(\rho+\bar\rho)\sigma+\omega\sigma
-(\tau-\bar\pi+2\vartheta)\kappa+\Psi_0, \label{T-weightMasslessStructureEq2}\\
\mthorn\mu-\meth\pi&=\bar\rho\mu+\sigma\lambda+\pi\bar\pi
-\omega\mu-\nu\kappa+\TiPsi_2, \label{T-weightMasslessStructureEq3}\\
\mthorn\lambda-\meth'\pi&=\rho\lambda+\bar\sigma\mu+\pi^2-\nu\bar\kappa
-\omega\lambda+3\bar\varphi_1^2, \label{T-weightMasslessStructureEq4}\\
\mthorn'\rho-\meth'\tau&=-\rho\bar\mu-\sigma\lambda-\bar\tau\tau+\ulomega\rho
-\nu\kappa-\TiPsi_2, \label{T-weightMasslessStructureEq5}\\
\mthorn'\sigma-\meth\tau&=-\mu\sigma-\bar\lambda\rho-\tau^2+\ulomega\sigma
+\kappa\bar\nu-3\varphi_{1}^2, \label{T-weightMasslessStructureEq6}\\
\mthorn'\mu-\meth\nu&=-\mu^2-\lambda\bar\lambda-\ulomega\mu+\bar\nu\pi
+(2\vartheta-\tau)\nu-3\varphi_{2}^2, \label{T-weightMasslessStructureEq7}\\
\mthorn'\lambda-\meth'\nu&=-(\mu+\bar\mu)\lambda-\ulomega\lambda
+(2\bar\vartheta+\pi-\bar\tau)\nu-\Psi_4, \label{T-weightMasslessStructureEq8}\\
\mthorn\tau-\mthorn'\kappa&=(\tau+\bar\pi)\rho+(\bar\tau+\pi)\sigma
-2\ulomega\kappa+\TiPsi_1+6\varphi_0\varphi_1, \label{T-weightMasslessStructureEq9}\\
\mthorn'\pi-\mthorn\nu&=-(\pi+\bar\tau)\mu-(\bar\pi+\tau)\lambda+2\omega\nu
-\TiPsi_3-6\varphi_{2}\bar\varphi_{1},  \label{T-weightMasslessStructureEq10}\\
\mthorn\ulomega-\mthorn'\omega&=(\tau+\bar\pi)\bar\vartheta+(\bar\tau+\pi)\vartheta-2\ulomega\omega
+\tau\pi+\bar\tau\bar\pi \nonumber\\ 
&\quad -\nu\kappa-\bar\nu\bar\kappa+\TiPsi_2+\bar\TiPsi_2+6\varphi_{0}\varphi_{2}, \label{T-weightMasslessStructureEq11}\\
\mthorn\vartheta-\meth\omega&=(\bar\vartheta+\pi)\sigma+(\vartheta+\bar\pi)\bar\rho+(\bar\pi-\vartheta)\omega \nonumber\\
&\quad -(\mu+\ulomega)\kappa-\bar\kappa\bar\lambda+\TiPsi_1+6\varphi_{0}\varphi_{1},  \label{T-weightMasslessStructureEq12}\\
\mthorn'\vartheta-\meth\ulomega&=-(\bar\tau+\bar\vartheta)\bar\lambda-(\tau+\vartheta)\mu-(\tau-\vartheta)\ulomega \nonumber \\
&\quad +(\bar\rho+\omega)\bar\nu+\sigma\nu-\bar\TiPsi_3-6\varphi_{2}\varphi_{1}, \label{T-weightMasslessStructureEq13}\\
\meth\rho-\meth'\sigma&=\vartheta\rho-\bar\vartheta\sigma+(\rho-\bar\rho)\tau 
+(\mu-\bar\mu)\kappa-\TiPsi_1,  \label{T-weightMasslessStructureEq14}\\
\meth\lambda-\meth'\mu&=\bar\vartheta\mu-\vartheta\lambda+(\rho-\bar\rho)\nu
+(\mu-\bar\mu)\pi-\TiPsi_3, \label{T-weightMasslessStructureEq15}
\end{align}
\end{subequations}
while the Bianchi identities are given by 
\begin{subequations}
\begin{align}
\mthorn\TiPsi_1-\meth'\Psi_0=&-6\varphi_0\meth\varphi_0+(\pi-2\bar\vartheta)\Psi_0
+(4\rho-\omega)\TiPsi_1-3\kappa\TiPsi_2
 \nonumber\\
&+3\varphi_0^2(2\vartheta-\bar\pi)+6\varphi_1\bar\varphi_1\kappa+3\varphi_1^2\bar\kappa
+6\varphi_0\varphi_1(2\rho-\bar\rho)-6\varphi_0\bar\varphi_1\sigma, \label{T-weightMasslessBianchi1}\\
\mthorn'\Psi_0-\meth\TiPsi_1=&6\varphi_0\meth\varphi_1+(2\ulomega-\mu)\Psi_0
-(4\tau+\vartheta)\TiPsi_1+3\sigma\TiPsi_2
\nonumber \\
&+6\varphi_0\varphi_2\sigma-3\varphi_0^2\bar\lambda-3\varphi_1^2\bar\rho
-6\varphi_1\bar\varphi_1\sigma-12\varphi_0\varphi_1\tau, \label{T-weightMasslessBianchi2}\\
\mthorn\TiPsi_2-\meth'\TiPsi_1=&6\varphi_1\meth’\varphi_0-\lambda\Psi_0
+(2\pi-\bar\vartheta)\TiPsi_1+3\rho\TiPsi_2-2\kappa\TiPsi_3
 \nonumber\\
&+3\varphi_0^2\bar\mu+6\varphi_0\varphi_1\pi+3\varphi_1^2\bar\sigma
-6\varphi_0\varphi_1\bar\vartheta-6\varphi_2\bar\varphi_1\kappa, \label{T-weightMasslessBianchi3}\\
\mthorn'\TiPsi_1-\meth\TiPsi_2=&-6\varphi_1\meth’\varphi_1+\nu\Psi_0
+(\ulomega-2\mu)\TiPsi_1-3\tau\TiPsi_2+2\sigma\TiPsi_3
\nonumber \\
&+6\varphi_2\bar\varphi_1\sigma+6\varphi_1(\varphi_0\bar\mu-\varphi_0\mu+\bar\varphi_1\tau)
+3\varphi_1^2\bar\tau-3\varphi_0^2\bar\nu-6\varphi_2\varphi_1\rho, \label{T-weightMasslessBianchi4}\\
\mthorn\TiPsi_3-\meth'\TiPsi_2=&-6\bar\varphi_1\meth\bar\varphi_1-2\lambda\TiPsi_1
+3\pi\TiPsi_2+(2\rho-\omega)\TiPsi_3-\kappa\Psi_4. \nonumber \\
&-3\bar\varphi_1^2\bar\pi+3\varphi_2^2\bar\kappa+6\varphi_0\bar\varphi_1\mu
+6\varphi_2\bar\varphi_1(\rho-\bar\rho)-6\varphi_0\varphi_1\lambda-6\varphi_1\bar\varphi_1\pi \label{T-weightMasslessBianchi5} \\
\mthorn'\TiPsi_2-\meth\TiPsi_3=&6\bar\varphi_1\meth\varphi_2+2\nu\TiPsi_1-3\mu\TiPsi_2
+(\vartheta-2\tau)\TiPsi_3+\sigma\Psi_4 \nonumber \\
&+6\varphi_2\bar\varphi_1(\vartheta-\tau)+6\varphi_0\varphi_1\nu
-3\bar\varphi_1^2\bar\lambda-3\varphi_2^2\bar\rho,  \label{T-weightMasslessBianchi6}\\
\mthorn\Psi_4-\meth'\TiPsi_3=&6\varphi_2\meth’\bar\varphi_1-3\lambda\TiPsi_2
+(4\pi+\bar\vartheta)\TiPsi_3+(\rho-2\omega)\Psi_4
 \nonumber \\
&+12\varphi_2\bar\varphi_1\pi+6\varphi_1\bar\varphi_1\lambda+3\bar\varphi_1^2\mu
+3\varphi_2^2\bar\sigma-6\varphi_0\varphi_2\lambda, \label{T-weightMasslessBianchi7}\\
\mthorn'\TiPsi_3-\meth\Psi_4=&-6\varphi_2\meth’\varphi_2+3\nu\TiPsi_2
-(4\mu+\ulomega)\TiPsi_3+(2\vartheta-\tau)\Psi_4
 \nonumber \\
&+6\varphi_2\varphi_1\lambda+6\varphi_2\bar\varphi_1(\bar\mu-2\mu)
+3\varphi_2^2(\bar\tau-2\bar\vartheta)-6\varphi_1\bar\varphi_1\nu-3\bar\varphi_1^2\bar\nu,  \label{T-weightMasslessBianchi8}
\end{align}
\end{subequations}
Finally, commutators take the form
\begin{subequations}
\begin{align}
(\mthorn\mthorn'-\mthorn'\mthorn)f=&s(\TiPsi_2-\bar\TiPsi_2+\bar\kappa\bar\nu-\kappa\nu+\pi\tau-\bar\pi\bar\tau)f \nonumber \\
&-\ulomega\mthorn f-\omega\mthorn'f+(\pi+\bar\tau)\meth f+(\bar\pi+\tau)\meth'f, \label{T-weightMasslessCommutator1}\\
(\mthorn\meth-\meth\mthorn)f=&s(-\meth\rho+\meth'\sigma+\vartheta\rho-\bar\vartheta\sigma+(\rho-\bar\rho)\tau-\bar\mu\kappa
+\bar\kappa\bar\lambda-\bar\pi\bar\rho+\pi\sigma)f \nonumber \\
&+(\bar\pi-\vartheta)\mthorn f-\kappa\mthorn'f+\bar\rho\meth f+\sigma\meth'f  \label{T-weightMasslessCommutator2}\\
(\mthorn'\meth-\meth\mthorn')f=&s(-\meth'\bar\lambda+\meth\bar\mu+\vartheta\bar\mu-\bar\vartheta\bar\lambda+(\bar\mu-\mu)\bar\pi-\rho\bar\nu
+\nu\sigma-\mu\tau+\bar\lambda\bar\tau)f \nonumber \\
&+\bar\nu\mthorn f+(\vartheta-\tau)\mthorn'f-\mu\meth f-\bar\lambda\meth'f, \label{T-weightMasslessCommutator3}\\
(\mthorn\meth'-\meth'\mthorn)f=&s(\meth'\bar\rho-\meth\bar\sigma-\bar\vartheta\bar\rho+\vartheta\bar\sigma+(\rho-\bar\rho)\bar\tau+\mu\bar\kappa
-\kappa\lambda+\pi\rho-\bar\pi\bar\sigma)f \nonumber \\
&+(\bar\vartheta-\pi)\mthorn f-\bar\kappa\mthorn'f+\bar\sigma\meth f+\rho\meth'f, \label{T-weightMasslessCommutator4}\\
(\mthorn'\meth'-\meth'\mthorn')f=&s(\meth\lambda-\meth'\mu-\bar\vartheta\mu+\vartheta\lambda+\bar\rho\nu+(\bar\mu-\mu)\pi
-\bar\nu\bar\sigma+\bar\mu\bar\tau-\lambda\tau)f \nonumber \\
&+\nu\mthorn f+(\bar\vartheta-\bar\tau)\mthorn'f-\lambda\meth f-\bar\mu\meth'f, \label{T-weightMasslessCommutator5} \\
(\meth\meth'-\meth'\meth)f=&sKf+(\mu-\bar\mu)\mthorn f+(\rho-\bar\rho)\mthorn'f. \label{T-weightMasslessCommutator6}
\end{align}
\end{subequations}
where~$K$ is the Gauss curvature
\begin{align*}
K=\mu\rho+\bar\mu\bar\rho-\lambda\sigma-\bar\lambda\bar\sigma-\TiPsi_2-\bar\TiPsi_2+6\varphi_1\bar\varphi_1.
\end{align*}

\section{Scaling argument}
\label{ScalingArgue}

Consider a conformal transformation from a
spacetime~$(\mathcal{M},\bmg)$ to another
spacetime~$(\mathcal{M}',\bmg')$ of the form
\begin{align}
\bmg'=\delta^2\bmg,
\label{ConformalRescaling}
\end{align}
where~$\delta$ is a constant ---i.e. a \emph{homotety}. This conformal
transformation leaves the scalar field invariant. Now, define a
coordinate system~$(u',v',x'^{\mathcal{A}})$ on~$\mathcal{M}'$ via the
relations
\begin{align*}
x'^{\mu}(p')\equiv\delta x^{\mu}(p)
\Rightarrow 
u'\equiv\delta u,\quad
v'\equiv\delta v,\quad
x'^{\mathcal{A}}\equiv\delta x^{\mathcal{A}},
\end{align*}
where~$p'$ is the image of~$p$ under the conformal
transformation. Under this transformation the domain
\begin{align*}
  \mathscr{D}=\bigg\{p\in\mathcal{M}\;|\;u_{\infty}
  \leq u\leq-\frac{a}{4},\;\;0\leq v\leq1\bigg\}
\end{align*}
is mapped  to 
\begin{align*}
\mathscr{D}'=\bigg\{p'\in\mathcal{M}'\;|\;\delta u_{\infty}\leq
u'\leq-\frac{\delta a}{4},\;\;0\leq v'\leq\delta \bigg\}.
\end{align*}

As the scalar field is not derived from the metric, we can choose it
to be invariant ---i.e. we set
\begin{align*}
\varphi(x')=\varphi(x).
\end{align*}
Moreover, we also have that 
\begin{align*}
Q'=Q, \qquad
C'^{\mathcal{A}}=C^{\mathcal{A}}, \qquad
P'^{\mathcal{A}}=P^{\mathcal{A}}.
\end{align*}
On the other hand, the rescaling~\eqref{ConformalRescaling} implies
for the NP frame that 
\begin{align*}
\bml'=\delta^{-1}\bml, \quad
\bmn'=\delta^{-1}\bmn, \quad
\bmm'=\delta^{-1}\bmm, \quad
\bar\bmm'=\delta^{-1}\bar{\bmm'}.
\end{align*}
In addition, we also have that~$\nabla'_a=\nabla_a$. Then from the
definition of the NP connection coefficients we have
that~$\Gamma'=\delta^{-1}\Gamma$. For the curvature we
have~$R'^a_{\phantom{'a}bcd}=R^a_{\phantom{a}bcd}$, so
that~$R'_{abcd}=\delta^2R_{abcd}$ which, in turn, leads
to~$\Psi'=\delta^{-2}\Psi$. Taking into account the rescaling
properties of the frame it follows that the components of the
derivative of the scalar field satisfy
\begin{align*}
\varphi'_0=\delta^{-1}\varphi_0, \qquad
\varphi'_1=\delta^{-1}\varphi_1, \qquad
\varphi'_2=\delta^{-1}\varphi_2.
\end{align*}

From the estimates in the main text and considering
that~$\frac{a\delta}{4}\leq|u'|\leq\delta|u_{\infty}|$, one can
readily see that

\smallskip
\noindent
\textbf{$\Gamma'$}:
\begin{align*}
|\rho'(x')|&=\delta^{-1}|\rho(x)|\leq\delta^{-1}\frac{1}{|u|}=\frac{1}{|u'|}, \quad
|\sigma'(x')|=\delta^{-1}|\sigma(x)|\leq\delta^{-1}\frac{a^{\frac{1}{2}}}{|u|}=\frac{a^{\frac{1}{2}}}{|u'|}, \\
|\mu'(x')|&=\delta^{-1}|\mu(x)|\leq\delta^{-1}\frac{1}{|u|}=\frac{1}{|u'|}, \quad
|\Timu'(x')=\mu'+\frac{1}{|u'|}|=\delta^{-1}|\Timu(x)|\leq\delta^{-1}\frac{1}{|u|^2}=\frac{\delta}{|u'|^2}, \\
|\lambda'(x')|&=\delta^{-1}|\lambda(x)|\leq\delta^{-1}\frac{a^{\frac{1}{2}}}{|u|^2}=\frac{\delta a^{\frac{1}{2}}}{|u'|^2},\quad
|\ulomega'(x')|=\delta^{-1}|\ulomega(x)|\leq\delta^{-1}\frac{a}{|u|^3}=\frac{\delta^2a}{|u'|^3}\leq\frac{\delta a^{\frac{1}{2}}}{|u'|^2}, \\
|\tau'(x')|&=\delta^{-1}|\tau(x)|\leq\delta^{-1}\frac{a^{\frac{1}{2}}}{|u|^2}=\frac{\delta a^{\frac{1}{2}}}{|u'|^2}, \quad
|\pi'(x')|=\delta^{-1}|\pi(x)|\leq\delta^{-1}\frac{a^{\frac{1}{2}}}{|u|^2}=\frac{\delta a^{\frac{1}{2}}}{|u'|^2}.
\end{align*}

\smallskip
\noindent
\textbf{$\Psi'$}:
\begin{align*}
|\Psi'_0(x')|&=\delta^{-2}|\Psi_0(x)|\leq\delta^{-2}\frac{a^{\frac{1}{2}}}{|u|}=\frac{\delta^{-1}a^{\frac{1}{2}}}{|u'|}, \quad
|\Psi'_1(x')|=\delta^{-2}|\Psi_1(x)|\leq\delta^{-2}\frac{a^{\frac{1}{2}}}{|u|^2}=\frac{a^{\frac{1}{2}}}{|u'|^2}, \\
|\Psi'_2(x')|&=\delta^{-2}|\Psi_2(x)|\leq\delta^{-2}\frac{a}{|u|^3}=\frac{\delta a}{|u'|^3}, \quad
|\Psi'_3(x')|=\delta^{-2}|\Psi_3(x)|\leq\delta^{-2}\frac{a^{\frac{3}{2}}}{|u|^4}
=\frac{\delta^{2}a^{\frac{3}{2}}}{|u'|^4}\leq\frac{\delta a^{\frac{1}{2}}}{|u'|^3}, \\
|\Psi'_4(x')|&=\delta^{-2}|\Psi_4(x)|\leq\delta^{-2}\frac{a^2}{|u|^5}=\frac{\delta^{3}a^{2}}{|u'|^5}.
\end{align*}

\smallskip
\noindent
\textbf{$\bmvarphi'$}:
\begin{align*}
|\varphi'(x')|&=|\varphi(x)|\leq\frac{a^{\frac{1}{2}}}{|u|}=\frac{\delta a^{\frac{1}{2}}}{|u'|},\quad
|\varphi'_0(x')|=\delta^{-1}|\varphi_0(x)|\leq\delta^{-1}\frac{a^{\frac{1}{2}}}{|u|}=\frac{a^{\frac{1}{2}}}{|u'|}, \\
|\varphi'_1(x')|&=\delta^{-1}|\varphi_1(x)|\leq\delta^{-1}\frac{a^{\frac{1}{2}}}{|u|^2}=\frac{\delta a^{\frac{1}{2}}}{|u'|^2}, \quad
|\varphi'_2(x')|=\delta^{-1}|\varphi_2(x)|\leq\delta^{-1}\frac{a^{\frac{1}{2}}}{|u|^2}=\frac{\delta a^{\frac{1}{2}}}{|u'|^2}.
\end{align*}




\begin{thebibliography}{10}

\bibitem{Abb16}
B. P. Abbott et al. (LIGO Scientific Collaboration and Virgo
Collaboration),
\newblock {\em Observation of Gravitational Waves from a Binary Black
  Hole Merger},
\newblock Phys. Rev. Lett. 116, 061102 (2016)

\bibitem{EHT2019}
  Event horizon telescope collaboration,
  \newblock
 \newblock Astrophys. J. Lett. 875, L1 (2019)

\bibitem{Wal84}
  R. M. Wald,
  \newblock General Relativity,
  \newblock The University of Chicago Press, 1984

\bibitem{Chr20}
  P. T. Chrusciel,
  \newblock Geometry of black holes
  \newblock Oxford University Press, 2020
  
\bibitem{Penrose1965}
R.~Penrose, 
\newblock {\em Gravitational collapse and space-time singularities}, 
\newblock Phys. Rev. Lett. 14 (1965), 57-59.

\bibitem{Chr91}
D.~Christodoulou, 
\newblock {\em The formation of black holes and singularities in spherically symmetric gravitational collapse}, 
\newblock Communications on Pure and Applied Mathematics 44, no. 3 (1991): 339-373.

\bibitem{Chr00}
D.~Christodoulou,
\newblock {\em The Formation of Black Holes in General Relativity},
\newblock Eur. Math. Soc. Z\"urich, 2000.

\bibitem{ChrKla93}
D.~Christodoulou, S.~Klainerman,
\newblock {\em The global nonlinear stability of the {Minkowski} space},
\newblock Princeton University Press, 1993.

\bibitem{KlaRod09}
S.~Klainerman and I.~Rodnianski, 
\newblock {\em On the Formation of Trapped Surfaces},
\newblock  Acta Math. 208 (2012), no. 2, 211–333.

\bibitem{Luk12}
J.~Luk.
\newblock On the local existence for the characteristic initial value problem
  in general relativity.
\newblock {\em Int. Math. Res. Not.}, 20:4625, 2012.

\bibitem{Ren90}
A.~D. Rendall.
\newblock Reduction of the characteristic initial value problem to the cauchy
  problem and its application to the einstein equations.
\newblock {\em Proc. Roy. Soc. Lond. A}, 427:221, 1990.

\bibitem{Hue95}
P.~H\"ubner
\newblock General relativistic scalar-field models and asymptotic flatness
\newblock {\em Class.Quantum Grav.}, 12 (1995) 791-808

\bibitem{HilValZha19}
J.~A. {Valiente Kroon}, D.~Hilditch,  and P.~Zhao.
\newblock Revisiting the characteristic initial value problem for the vacuum
  einstein field equations.
\newblock  {\em Gen.Rel.Grav.} 52 (2020)10,85. 

\bibitem{Ste91}
J.~Stewart.
\newblock {\em Advanced general relativity}.
\newblock Cambridge University Press, 1991.

\bibitem{HilValZha20}
J.~A. {Valiente Kroon}, D.~Hilditch,  and P.~Zhao.
\newblock Improved existence for the characteristic initial value problem with
  the conformal einstein field equations.
\newblock In {\em Gen.Rel.Grav.}  2 (2020) 9, 85.

\bibitem{HilValZha21}
P.~Zhao, J.~A. {Valiente Kroon}, and D.~Hilditch.
\newblock  The conformal Einstein field equations and the local extension 
of future null infinity.
\newblock  {\em J. Math. Phys.} 62, 102501 (2021).

\bibitem{Fri81b}
H.~Friedrich.
\newblock The asymptotic characteristic initial value problem for {Einstein}'s
  vacuum field equations as an initial value problem for a first-order
  quasilinear symmetric hyperbolic system.
\newblock {\em Proc. Roy. Soc. Lond. A}, 378:401, 1981.

\bibitem{LiZhu18}
J.~Li and X.-P. Zhu.
\newblock On the local extension of the future null infinity.
\newblock {\em J. Diff. Geom.}, 110:73, 2018.

\bibitem{CFEBook}
J.~A. {Valiente Kroon}.
\newblock {\em Conformal Methods in General Relativity}.
\newblock Cambridge University Press, 2016.

\bibitem{PenRin86}
R.~Penrose and W.~Rindler.
\newblock {\em Spinors and space-time. {V}olume 2. {S}pinor and twistor methods
  in space-time geometry}.
\newblock Cambridge University Press, 1986.

\bibitem{PenRin84}
R.~Penrose and W.~Rindler.
\newblock {\em Spinors and space-time. {V}olume 1. {T}wo-spinor calculus and
  relativistic fields}.
\newblock Cambridge University Press, 1984.

\bibitem{HawEll73}
S.~W. Hawking and G.~F.~R. Ellis.
\newblock {\em The large scale structure of space-time}.
\newblock Cambridge University Press, 1973.

\bibitem{PYu2011}
P.~Yu, 
\newblock {\em Dynamical formation of black holes due to the condensation of matter field},
\newblock arXiv: 1105.5898.

\bibitem{An2012}
X.~An, 
\newblock {\em Formation of trapped surfaces from past null infinity}, 
preprint (2012),  \newblock {\em arXiv:1207.5271}.

\bibitem{LukRod2013}
J.~Luk and I.~Rodnianski, 
\newblock {\em Nonlinear interaction of impulsive gravitational waves for the vacuum Einstein equations}, 
 preprint (2013),  \newblock {\em arXiv:1301.1072}.
 
\bibitem{AnLuk2017}
X.~An, J.~Luk, 
\newblock {\em Trapped surfaces in vacuum arising dynamically from mild incoming radiation},
\newblock {\em Advances in Theoretical and Mathematical Physics } 21 (2017), 1-120.

\bibitem{LiLiu2017}
J.~Li, J.~Liu, 
\newblock {\em Instability of spherical naked singularities of a scalar field under gravitational perturbations},
\newblock {\em arXiv: 1710.02422 }.

\bibitem{An2022}
X.~An, 
\newblock {\em A Scale-Critical Trapped Surface Formation Criterion: A New Proof via Signature for Decay Rates},
 \newblock {\em Ann. PDE} 8, 3 (2022), 89 pages.

\bibitem{An202209}
X.~An, N. ~Athanasiou,
\newblock {\em A scale-critical trapped surface formation criterion for the Einstein-Maxwell system},
 \newblock {\em arXiv:2005.12699}

\bibitem{xAct}
J.~M. Mart\'{\i}n-Garc\'{\i}a,
\newblock http://www.xact.es, 2014.
 
\end{thebibliography}
\end{document}